\theoremstyle{plain}
\numberwithin{equation}{section}
\newtheorem{theorem}{Theorem}[subsection]
\newtheorem{corollary}[theorem]{Corollary}
\newtheorem{lemma}[theorem]{Lemma}
\newtheorem{proposition}[theorem]{Proposition}
\newtheorem{remark}{Remark}[subsection]
\newtheorem{definition}{Definition}[subsection]
\newtheorem{example}{Example}[subsection]
\newtheorem{conjecture}{Conjecture}[subsection]
\newtheorem{construction}{Construction}[subsection]
\def \GTNN {{Gr^{\mbox{\tiny TNN}} (k,n)}}
\def \GTP {{Gr^{\mbox{\tiny TP}} (k,n)}}
\newcommand\mycom[2]{\genfrac{}{}{0pt}{}{#1}{#2}}
\def \DKP {{\mathcal D}_{\textup{\scriptsize KP},\Gamma}}
\def \DVG {{\mathcal D}_{\textup{\scriptsize vac},\Gamma}}
\def \DS {{\mathcal D}_{\textup{\scriptsize S},\Gamma_0}}
\def \DVN {{\mathcal D}_{\textup{\scriptsize vac},{\mathcal N}^{\prime}}}
\def \DDN {{\mathcal D}_{\textup{\scriptsize dr},{\mathcal N}^{\prime}}}
\def \S {{\mathcal S}_{\mathcal M}^{\mbox{\tiny TNN}}}
\def \gvac {\gamma_{\textup{\scriptsize vac},V_l}}
\def \gdr {\gamma_{\textup{\scriptsize dr},V_l}}
\def \Pdr {P^{\textup{\scriptsize (dr)}}}
\def \i {\mbox{\scriptsize{i}}}
\def \w {\mbox{\scriptsize{w}}}
\title{KP theory, plabic networks in the disk 
and rational degenerations of $\mathtt M$--curves}
\author{Simonetta Abenda}
\address{Dipartimento di Matematica, Universit\`a di Bologna, P.zza di Porta San Donato 5, I-40126 Bologna BO, ITALY
}
\email{simonetta.abenda@unibo.it
}
\author{Petr G. Grinevich}
\address{L.D.Landau Institute for Theoretical Physics,
pr. Ak Semenova 1a, Chernogolovka, 142432, Russia,
{\footnotesize pgg@landau.ac.ru}\\
Lomonosov Moscow State University,
Faculty of Mechanics and Mathematics, 
Russia, 119991, Moscow, GSP-1, 1 Leninskiye Gory, Main Building,\\
Moscow Institute of Physics and Technology, 
9 Institutskiy per., Dolgoprudny,
Moscow Region, 141700, Russia.}
\thanks{
This research has been partially supported by GNFM-INDAM and RFO University of Bologna, by the Russian Foundation for Basic Research, grant 17-01-00366, 
by the program ``Fundamental problems of nonlinear dynamics'', Presidium of RAS. Partially this research was fulfilled during the visit of the second author (P.G.) to IHES, 
Université Paris-Saclay, France in November 2017.}
\begin{document}

\begin{abstract}
{In this paper we extend and complete the program started in \cite{AG1,AG3} of connecting totally non--negative Grassmannians to the reality problem in KP finite--gap theory via the assignment of real regular divisors on rational degenerations of $\mathtt M$--curves for the class of real regular multi--line soliton solutions of Kadomtsev-Petviashvili II (KP) equation whose asymptotic behavior in space-time has been combinatorially characterized in a series of papers by S. Chakravarthy, Y. Kodama and L. Williams \cite{CK, KW1,KW2}. 

At this aim, we use the planar bicolored trivalent networks in the disk which were introduced by A. Postnikov \cite{Pos} to parametrize positroid cells in totally nonnegative Grassmannians $\GTNN$. In our construction the boundary of the disk corresponds to the rational curve associated to the soliton data in the direct spectral problem, and the bicolored graph is the dual of a reducible curve $\Gamma$ which is the rational degeneration of a regular $\mathtt M$--curve whose genus $g$ equals the number of faces of the network diminished by one. 

We then assign systems of edge vectors to such networks. The system of relations satisfied by these vectors has maximal rank and may be reformulated in the form of edge signatures as proposed by T. Lam \cite{Lam2}. 
Adapting remarkable results by A. Postnikov \cite{Pos} and K. Talaska \cite{Tal2} to our setting, we prove that the components of the edge vectors are rational in the edge weights with subtraction free denominators and provide their explicit expressions in terms of conservative and edge flows. 

The edge vectors rule the value of the KP wave function at the double points of $\Gamma$, whereas the signatures at the vertices rule the position of the divisor points in the ovals. In particular,
we provide a combinatorial proof that the degree $g$ divisor satisfies the conditions settled in B. Dubrovin and S. Natanzon \cite{DN} for real finite--gap solutions, {\sl i.e.} there is exactly one divisor point in each finite oval and no divisor point in the oval containing the essential singularity of the wave function. The divisor points may be explicitly computed using the linear relations satisfied by the wave function at the internal vertices of the chosen network.

Finally we explain the role of moves and reductions in the transformation of both the curve and the divisor for given soliton data, and we apply our construction to some examples.} 

\medskip \noindent {\sc{2010 MSC.}} 37K40; 37K20, 14H50, 14H70.

 \noindent {\sc{Keywords.}} Total positivity, KP hierarchy, real solitons, M-curves, Le--diagrams, planar bicolored networks in the disk, Baker--Akhiezer function.
\end{abstract}
\maketitle

\tableofcontents
\section{Introduction}

Totally non--negative Grassmannians $\GTNN$ historically first appeared as a special case of the generalization to reductive Lie groups by Lusztig  \cite{Lus1,Lus2} of the classical notion of total positivity \cite{GK,GK2,Sch,Kar}. As for  classical total positivity, $\GTNN$ naturally arise in relevant problems in different areas of mathematics and physics. The combinatorial objects introduced by Postnikov \cite{Pos}, see also \cite{Rie}, to characterize $\GTNN$ have been linked to cluster algebras in \cite{Sc,OPS}. In particular the plabic (planar bicolored) graphs introduced in \cite{Pos} have appeared in many contexts, such as the topological classification of real forms for isolated singularities of plane curves \cite{FPS}, they are on--shell diagrams (twistor diagrams) in scattering amplitudes in $N=4$ supersymmetric Yang--Mills theory \cite{AGP1,AGP2,ADM} and have a statistical mechanical interpretation as dimer models in the disk \cite{Lam1}. Totally non-negative Grassmannians naturally appear in many other areas, including the theory of Josephson junctions \cite{BG}, statistical mechanical models such as the asymmetric exclusion process \cite{CW}.  
In particular, the deep connection of the combinatorial structure of $\GTNN$ with KP real soliton theory was unveiled in a series of papers by Chakravarthy, Kodama and Williams (see \cite{CK,KW1,KW2} and references therein). In \cite{KW1}
it was proven that multi-line soliton solutions of the Kadomtsev-Petviashvili 2 (KP) equation are real and regular in space--time if and only if their soliton data correspond to points in the irreducible part of totally non--negative Grassmannians, whereas the combinatorial structure of the latter was used in \cite{CK,KW2} to classify the asymptotic behavior in space-time of such solutions.

In \cite{AG1,AG3} we started to investigate a connection of different nature between this family of KP solutions and total positivity in the framework of the finite-gap approach, using the fact that any such solution may also be interpreted as a potential in a degenerate spectral problem for the KP hierarchy.

Before continuing, let us briefly recall that the finite-gap approach to soliton systems was first suggested by Novikov \cite{Nov} for the Korteveg-de Vries equation, and extended to the 2+1 KP equation by Krichever in \cite{Kr1,Kr2}, where it was shown that finite-gap KP solutions correspond to non special divisors on arbitrary algebraic curves. Dubrovin and Natanzon \cite{DN} then proved that real regular KP finite gap solutions correspond to divisors on smooth $\mathtt M$--curves satisfying natural reality and regularity conditions. In \cite{Kr4} Krichever developed, in particular, the direct scattering transform for the real regular parabolic operators associated with KP and proved that the corresponding spectral curves are always $M$-curves. In \cite{Kr3, KV} finite gap theory was extended to reducible curves in the case of degenerate solutions. Applications of singular curves to the finite-gap integration are reviewed in \cite{Taim}.

In our setting the degenerate solutions are the real regular multiline KP solitons studied in \cite{BPPP,CK,KW1,KW2}: the real regular KP soliton data correspond to a well defined reduction of the Sato Grassmannian \cite{S}, and they are parametrized by pairs $(\mathcal K, [A])$, {\sl i.e.} $n$ ordered phases $\mathcal K =\{ \kappa_1<\kappa_2 <\cdots <\kappa_n\}$ and a point in an irreducible positroid cell $[A]\in \S \subset Gr^{\mbox{\tiny TNN}}(k,n)$. We recall that the irreducible part of $Gr^{\mbox{\tiny TNN}}(k,n)$ is the natural setting for the minimal parametrization of such solitons \cite{CK,KW2}.

Following \cite{Mal}, to the soliton data $(\mathcal K, [A])$ there is associated a rational spectral curve $\Gamma_0$ (Sato component), with a marked point $P_0$ (essential singularity of the wave function), and $k$ simple real poles $\DS =\{ \gamma_{S,r},\  r\in [k] \}$, such that $\gamma_{S,r}\in [\kappa_1,\kappa_n]$ (Sato divisor). However, due to a mismatch between the dimension of $Gr^{\mbox{\tiny TNN}}(k,n)$ and that of the variety of Sato divisors, generically the Sato divisor is not sufficient to determine the corresponding KP solution. 

In \cite{AG1, AG3} we proposed a completion of the Sato algebraic--geometric data based on the degenerate finite gap theory of \cite{Kr3} and constructed divisors on reducible curves for the real regular multiline KP solitons. In our setting, the data $(\Gamma,P_0,\mathcal D)$, where $\Gamma$ is a reducible curve with a marked point $P_0$, and $\mathcal D\subset\Gamma$ is a divisor, correspond to the soliton data $(\mathcal K, [A])$ if
\begin{enumerate}
\item $\Gamma$ contains $\Gamma_0$ as a rational component and $\DS$ coincides with the restriction of $\mathcal D$ to $\Gamma_0$. We assume here that different rational components of $\Gamma$ are connected at double points;  
\item The data $(\Gamma,P_0,\mathcal D)$ uniquely define the wave function $\hat\psi$ as a meromorphic function on $\Gamma\backslash P_0$ with divisor $\mathcal D$, having an essential singularity at $P_0$. Moreover, at double points the values of the wave function coincide on both components for all times.
\end{enumerate}
In degenerate cases, the construction of the components of the curve and of the divisor is obviously not unique and,
as pointed out by S.P. Novikov, an untrivial question is whether \textbf{real regular} soliton solutions can be obtained as rational degenerations of \textbf{real regular} finite-gap solutions. In the case of the real regular KP multisolitons this imposes the following additional requirements:
\begin{enumerate}
\item $\Gamma$ is the rational degeneration of an $\mathtt M$--curve;
\item The divisor is contained in the union of the ovals.
\end{enumerate}

In \cite{AG1} we provided an optimal answer to the above problem for the real regular soliton data in the totally positive part of the Grassmannian, $\GTP$. We proved that $\Gamma_0$ is a component of a reducible curve $\Gamma(\xi)$ arising as a rational degeneration of some smooth $\mathtt M$--curve of genus equal to the dimension of the positive Grassmannian, $k(n-k)$. We also proved that this class of real regular KP multisoliton solutions may be obtained from real regular finite-gap KP solutions, since soliton data in $\GTP$ can be parametrized by real regular divisors on $\Gamma(\xi)$, {\sl i.e.} one divisor point in each oval of $\Gamma(\xi)$ but the one containing the essential singularity of the wave function.  In \cite{AG1}, we used classical total positivity for the algebraic part of the construction and computed explicitly the divisor positions in the ovals at leading order in $\xi$.

In \cite{AG3} we extended the construction of \cite{AG1} to the whole totally non-negative Grassmannian $\GTNN$. Moreover, we made explicit the relation between the degenerate spectral problem associated to such family of solutions and the stratification of $\GTNN$, by proving that the Le-graph associated to the soliton data $[A]$ is the dual graph of the corresponding reducible spectral curve, and that the linear relations at the vertices of the Le-network uniquely identify the divisor satisfying the reality and regularity conditions established in \cite{DN}. Again our approach was constructive and in \cite{AG2} we applied it to obtain real regular finite gap solutions parametrized by real regular non special divisors on a genus 4 $\mathtt M$--curve obtained from the desingularization of spectral problem for the soliton solutions in $Gr^{\mbox{\tiny TP}}(2,4)$.
	
In \cite{AG3}, we used the canonical acyclic orientation on the Le-graph to prove
that any real regular KP multi-line soliton solution can be obtained as a degeneration of a real regular finite-gap solution of the same equation. Therefore the question of the invariance of the construction with respect to changes of orientation and to the several gauge freedoms in the construction was open. Moreover, any positroid cell in a totally non-negative Grassmannian is represented by an equivalence class of plabic graphs in the disc with respect to a well-defined set of moves and reductions \cite{Pos}. Therefore another set of natural questions left open in \cite{AG3} was how to associate a curve and a divisor to each plabic network in the disk and to explain the transformation of such algebraic geometric data with respect to Postnikov moves and reductions. 

In this paper we answer positively all such questions under some genericity assumptions and we implement an algebraic construction of systems of edge vectors on plabic networks which we think is of more general interest than the present application to KP finite gap theory. Indeed, our construction gives a sufficient set of rules to glue any given positroid cell $\S$ out of the little positive Grassmannians $Gr^{TP}(1,3)$ and $Gr^{TP}(2,3)$, represented by the trivalent white and black vertices of its graph, in the framework of KP theory. This gluing problem has been originally set up in theoretical physics \cite{AGP1,AGP2} and 
is equivalent to subdividing polytopes into smaller polytopes (see \cite{Pos2} and references therein). In particular,
our approach provides sufficient conditions to the formulation in Lam \cite{Lam2} of the problem of assigning signatures to edges of plabic graphs to characterize the totally non-negative part in the space of relations, and looks compatible with the binary codification of total non--negativity in \cite{ATT}.

Below we outline the construction and the main results of this paper.

\smallskip

\paragraph{\textbf{Main results}}
Let the soliton data $(\mathcal K, [A])$ be fixed, with $[A]\in \S \subset Gr^{\mbox{\tiny TNN}}(k,n)$ and $\S$ an irreducible positroid cell of dimension $|D|$, and let $\mathcal G$ be a planar bicolored directed trivalent perfect (PBDTP) graph in the disk representing $\S$ (Definition \ref{def:graph}). In our setting boundary vertices are all univalent, internal sources or sinks are not allowed, internal vertices may be either bivalent or trivalent and $\mathcal G$ may be either reducible or irreducible in Postnikov sense \cite{Pos}. $\mathcal G$ has $g+1$ faces where $g=|D|$ if the graph is reduced, otherwise $g>|D|$.

The construction of the curve $\Gamma$ (Section \ref{sec:gamma}) is analogous to that in \cite{AG3} where we treated the case of Le--graphs. $\mathcal G$ is the dual graph of a reducible curve $\Gamma$ which is the connected union of rational components: the boundary of the disk and all internal vertices of $\mathcal G$ are copies of $\mathbb{CP}^1$, the edges represent the double points where two such components are glued to each other and the faces are the ovals of the real part of the resulting reducible curve. 
We identify the boundary of the disk with the Sato component $\Gamma_0$, and the $n$ boundary vertices $b_1,\dots, b_n$ counted clockwise correspond to the ordered marked points, $\mathcal K = \{ \kappa_1 < \kappa_2 < \cdots < \kappa_n \}$. It is easy to check that  $\Gamma$ is a rational reduction of a smooth genus $g$ $\mathtt M$--curve.
 
Then, we fix an orientation $\mathcal O$ on $\mathcal G$ and assign positive weights to the edges so that the resulting oriented network $(\mathcal N, \mathcal O)$ represents $[A]$. $\mathcal O$ induces natural coordinates on each component of $\Gamma$ associated to a vertex (see Definition \ref{def:loccoor}). On $\mathcal N$ we also fix a reference direction $\mathfrak l$ (gauge ray direction, see Definition \ref{def:gauge_ray}) to measure the winding and count the number of boundary sources encountered along a walk starting at an internal edge and reaching the boundary of the disk. 

In Section \ref{sec:def_edge_vectors}, for any given edge $e$ in $(\mathcal N,\mathcal O,\mathfrak l)$, we consider all directed walks from $e$ to the boundary and to each such walk we assign three numbers: weight, winding and number of intersections with gauge rays starting at boundary sources. Then the $j$--th component of the edge vector $E_e$ is just the (finite or infinite) sum of such signed contributions over all directed walks from $e$ to the boundary vertex $b_j$. Adapting remarkable results in \cite{Pos,Tal2} to our setting, in Theorem \ref{theo:null} we prove that all components of such edge vectors are rational expressions in the edge weights with subtraction free denominators and we provide their explicit expressions in terms of the conservative and the edge flows defined in Section \ref{sec:flows}. Moreover, we prove that the linear system at internal vertices satisfied by edge vectors has maximal rank for any given boundary conditions at the boundary sinks (Theorem \ref{theo:consist}). The linear system has a purely geometrical interpretation: on each $(\mathcal G,\mathcal O,\mathfrak l)$ it induces a unique assignment of signatures to pair of half edges at internal vertices (Section \ref{sec:lam1}). We also provide explicit formulas for the dependence of the edge vectors on the orientation and the weight, vertex and ray direction gauge freedoms of planar networks (Sections \ref{sec:gauge_ray}, \ref{sec:orient} and \ref{sec:different_gauge}). Finally we explain the dependence of edge vectors on Postnikov's moves and reductions (Section \ref{sec:moves_reduc}). 

If $\mathcal G$ is reducible ($g>\mbox{dim} (\S)$), null edge vectors may appear in the solution to the linear system even if there do exist paths starting at the given edge and reaching the boundary (Section \ref{sec:null_vectors}). In this case the Postnikov map is surjective, but not injective, therefore there is an extra freedom in the assignment of the edge weights, which we call the unreduced graph gauge freedom (Remark \ref{rem:gauge_freedom}). We conjecture that, using such extra gauge freedom, it is possible to choose weights on reducible graphs so that all edge vectors are not null (Conjecture \ref{conj:null}). 
Since both the meromorphic extension of the wave function and the construction of the divisor require extra care in the case of null edge vectors (see Section \ref{sec:constr_null}), we postpone technical details concerning this case to a future publication.  

\smallskip  

For the construction of the wave function and the divisor, we assume that the network $\mathcal N$ of graph $\mathcal G$ representing $[A]$ does not possess null vectors. 
We then construct a dressed edge wave function $\Psi (\vec t)$ on the network using the edge vectors (Construction \ref{def:vvw_gen}) and we assign a real number $\gamma_V$ to each trivalent white vertex $V$ (Definition \ref{def:vac_div_gen}). The normalized edge wave function $\hat \Psi(\vec t) = \frac{\Psi (\vec t)}{\Psi (\vec t_0)}$ coincides with Sato wave function at the edges at the boundary of the disk and is independent on the orientation and on the ray direction, weight and vertex gauges of the network (Proposition \ref{prop:change_orient}). Therefore it is fully justified to choose $\hat \Psi$ to rule the value of the KP wave function at the double points of $\Gamma$.

In Section \ref{sec:inv} we use the normalized dressed edge wave function to rule the value of the wave function at the double points and then extend it to $\Gamma$ as follows. Let $\Gamma_V$ be the component in $\Gamma$ corresponding to the internal vertex $V$.
Since $\hat \Psi$ takes the same value at all edges at any given bivalent or black trivalent vertex $V$, on $\Gamma_V$ we extend it to a regular function $\hat \psi(P,\vec t)$ independent on the spectral parameter $P$. On the contrary, $\hat \Psi (\vec t)$ generically takes distinct values at the edges at a trivalent white vertex $V$; then we extend it to a degree one meromorphic function  $\hat \psi(P,\vec t)$, $P\in\Gamma_V$ such that its pole $P_V$ has coordinate $\gamma_V$. In particular, the condition that the wave function takes the same value at each fixed pair of double points for all times is automatically satisfied.
 
The KP divisor on $\Gamma$ is then $\DKP =\DS \cup \{ P_V \, , \, V \mbox{ white trivalent vertex in } \mathcal G \, \}$, where $\DS$ is the Sato divisor. Since the number of trivalent white vertices of a PBDTP graph is $g-k$, $\DKP$ has degree $g$ and by construction it is contained in the union of the ovals of $\Gamma$. $\hat \psi(P,\vec t)$ is the unique meromorphic function on $\Gamma\backslash \{ P_0\}$ and divisor $\DKP$, that is $(\hat \psi(\cdot,\vec t)) + \DKP \ge 0$, for all $\vec t$, therefore $\hat \psi$ is the wave function on $\Gamma$ for the soliton data $(\mathcal K, [A])$ (Theorem \ref{lemma:KPeffvac}).

By construction $\DKP$ is invariant with respect to changes of orientation and of the choice of gauge ray, weight and vertex gauges (Theorem \ref{theo:inv}). Therefore, if $\mathcal G$ is reduced so that $g=\mbox{dim} (\S)$, as a byproduct, we get a local parametrization of positroid cells in terms of non--special divisors. We remark that the map from the weights, parametrizing the cell, to the divisor looses maximal rank and injectivity along certain subvarieties of the positroid cell where the divisor becomes special (see Section \ref{sec:global} for the case $Gr^{\mbox{\tiny TP}} (1,3)$). In the case of reducible graphs ($g>\mbox{dim} (\S)$), the divisor depends on the extra freedom in the assignment of the edge weights (see Remark \ref{rem:div_unred} for an example). 

In Section \ref{sec:position}, we combinatorially detect the oval to which each divisor point $P_V$ belongs to (Theorem \ref{theo:pos_div} and Corollary \ref{cor:eps_tot}) and prove that each finite oval contains exactly one divisor point (Theorem \ref{prop:comb_oval}), {\sl i.e.} $\DKP$ satisfies the reality and regularity conditions established in \cite{DN}. At this aim, we introduce a set of indices
(Definition \ref{def:index_pair}) and characterize their properties (Lemma \ref{lemma:count_eps}). 

In Section \ref{sec:lam1}, we restrict ourselves to the case of reduced networks and we use such indices to formulate the linear relations at internal vertices eliminating the reference to the orientation and the gauge ray direction  (Definition \ref{def:sign_rela}) and we introduce admissible signatures for half edge vectors at vertices (Definition \ref{def:vertex_sign}). Then we reformulate the principal results (Theorems~\ref{theo:consist}, \ref{theo:pos_div} and \ref{prop:comb_oval} and Corollary \ref{cor:DKP}) in invariant form, {\sl i.e.} without explicit reference to the orientation and the gauge ray direction (Theorem \ref{theo:sig_rela}). 
In particular, (\ref{eq:form_3}) and  (\ref{eq:form_4}) make explicit the relation between the geometrical formulation of signatures on $\mathcal G$, which represents the positroid cell $\S$ to which $[A]$ belongs to, and its discrete differential counterpart (Definition \ref{def:sign_rela}), which encodes the possible positions of the divisor. As a consequence, we obtain a direct
relation between the total non--negativity property encoded in the geometrical setting and the reality and regularity condition on the divisor. The choice of a perfect orientation and of a gauge ray direction induces a well defined signature at the vertices of $\mathcal G$ which encordes a finite set of possible positions of real and regular divisors. Among these possible solutions, only one is picked up by fixing the soliton data $(\mathcal K, [A])$ and the normalization time $\vec t_0$.
Vice versa, if we fix the soliton data $(\mathcal K, [A])$ and the normalization time $\vec t_0$, changes of orientation and of gauge ray directions act on signatures at vertices as the addition of a discrete exact differential form which leaves invariant the position of the divisor. 

In Section \ref{sec:moves_reduc} we give the explicit transformation rules of the curve, the edge vectors and the divisor with respect to Postnikov moves and reductions. 

In the last two Sections, we present some examples. In Section \ref{sec:example} we apply our construction to soliton data in ${\mathcal S}_{34}^{\mbox{\tiny TNN}}$, the 3--dimensional positroid cell in $Gr^{\mbox{\tiny TNN}} (2,4)$ corresponding to the matroid
${\mathcal M} = \{ \ 12 \ , \ 13 \ , \ 14 \ ,\ 23 \ , \ 24\ \}$.
We construct both the reducible rational curve and its desingularization to a genus $3$ $\mathtt M$--curve and the KP divisor for generic soliton data $\mathcal K =\{ \kappa_1<\kappa_2<\kappa_3<\kappa_4\}$ and $[A]\in {\mathcal S}_{34}^{\mbox{\tiny TNN}}$. We then apply a parallel edge unreduction and a flip move and compute the divisor on the transformed curve.
We also show the effect of the square move on the divisor for soliton data $(\mathcal K,[A])$ with $[A] \in Gr^{\mbox{\tiny TP}} (2,4)$ in Section \ref{sec:ex_Gr24top}.

\smallskip

\paragraph{\textbf{Remarks and open questions}}
Our construction may be considered as a tropicalization of the spectral problem (smooth $\mathtt M$--curves and divisors) associated to real regular finite--gap KP solutions (potentials) in the rational degeneration of such curves. The tropical limit studied in \cite{KW2} (see also \cite{DMH} for a special case) has a different nature: reconstruct the soliton data from the asymptotic contour plots. In our setting, that would be equivalent to tropicalize the reducible rational spectral problem connecting the asymptotic behavior of the potential (KP solution) to the asymptotic behavior in $\vec t$ of the zero divisor of the KP wave function (see \cite{A2} for some preliminary results concerning soliton data in $Gr^{\mbox{\tiny TP}}(2,4)$). Relations between integrability and cluster algebras were demonstrated in \cite{FG,KG}, and the cluster algebras were essentially motivated by total positivity \cite{FZ1,FZ2}. In \cite{KW2} cluster algebras have appeared in connection with KP solitons asymptotic behavior. We expect that they should also appear in our construction in connection with the tropicalization of the zero divisor.

Let us remark that for a fixed reducible curve the Jacobian may contain more than one connected component associated to real regular solutions. Therefore, in contrast with the smooth case, different connected components may correspond to different Grassmannians. Some of these components may correspond not to full positroid cells, but to special subvarieties. For generic curves the problem of describing these subvarieties is completely open. For a rational degeneration of genus $(n-1)$ hyperelliptic $\mathtt M$-curves this problem was studied in \cite{A1} and it was shown that the corresponding soliton data in $Gr^{\mbox{\tiny TP}}(k,n)$ formed $(n-1)$--dimensional varieties known in literature \cite{BK} to be related to the finite Toda system. The same KP soliton family has been recently re-obtained in \cite{Nak} in the framework of the Sato Grassmannian, whereas the spectral data for the finite Toda was studied earlier in \cite{KV}.

Our construction of the divisor provides a local parametrization of the positroid cell to which the soliton data belong, depending on the normalization time. Indeed, when a pair of divisor points comes simultaneously to a double point, the parametrization becomes singular and requires a resolution of singularities. We plan to discuss how to resolve these singularities in a future work.  

In \cite{AG2} we studied in details the transition from multiline soliton solutions to finite-gap solutions associated to almost degenerate $\mathtt M$-curves in the first non-trivial case, and in \cite{AG3} we provided a generic construction. We expect that the coordinates on the moduli space, compatible with $\mathtt M$-structure, introduced in \cite{Kr5}, may be useful in this study. 

We have noticed an analogy between the momentum--helicity conservation relations in the trivalent planar networks in the approach of \cite{AGP1,AGP2}, and the relations satisfied by the vacuum and dressed edge wave functions in our approach. It is unclear to us whether our approach for KP may be interpreted as a scalar analog of a field theoretic model. In the 
on--shell diagram approach, internal trivalent white and black vertices represent little Grassmannians $Gr(1,3)$ and $Gr(2,3)$, whereas edges correspond to gluings. 
In his mathematical description of the gluing phenomenon,
Lam \cite{Lam2} introduces a relation space analogous to the linear relations satisfied by the edge vectors and the edge wave function in our setting. In his framework, it is essential to choose proper signatures of edges to obtain totally non-negative Grassmannians. In our text we provide rules for signs at edges in terms of local winding and intersection numbers defined using the gauge ray directions. In Section \ref{sec:lam1} we have re-expressed these conditions in invariant form as conservation laws of half-edge quantities, so that our construction gives sufficient conditions for total non-negativity.
It is an open problem whether all admissible signatures at internal vertices 
correspond to a choice of gauge ray direction and vertex gauge freedom, {\sl i.e.} a reference direction with respect to which measure winding of pair of edges and $k$ simple curves from the boundary sources having zero pairwise intersections inside the disk. If the conjecture would be true, the choice of signatures at boundary vertices would single out the total non--negativity property. Another open problem is whether all real and regular divisor positions in the ovals obtained solving equations (\ref{eq:form_3}) and (\ref{eq:form_4}) are realizable as we vary the soliton data in $\S$ and the normalizing time $\vec t_0$. The latter problem is naturally connected to the classification of realizable asymptotic soliton graphs studied in \cite{KW2}.

For non-reduced networks the solution of the linear system may contain null-vectors at internal edges. An open question is whether the null-vectors can be eliminated by using the extra freedom in the assignment of the weights. In Section~\ref{sec:constr_null} we have outlined the modified construction of the curve and the divisor in presence of null-vectors, but it requires a more serious study.

In \cite{Pos2} connections between planar bicolored graphs and more general Grassmannian graphs with geometry of polyhedral subdivisions was discussed, see also \cite{PSW}. An open question is whether and how our construction is connected to this relation.

\smallskip

\paragraph{\textbf{Plan of the paper}:} We did our best to make the paper self--contained. In Section \ref{sec:soliton_theory}, we briefly present some results of KP soliton theory necessary in the rest of the paper. Section \ref{sec:3} contains the main construction and the statements of the principal theorems.
In Section \ref{sec:vectors} we construct and characterize systems of edge vectors on any PBDTP network representing a given point in an irreducible positorid cell $\S$. Sections \ref{sec:anycurve} and \ref{sec:comb} contain the proofs of the main theorems together with the explicit construction of the wave function and of the real regular divisor on the reducible regular curve. In particular, in Section~\ref{sec:lam1} we reformulate our main results in terms of signatures of pair of half edges at vertices.
In Section \ref{sec:moves_reduc} we explain how edge vectors depend on Postnikov moves and reductions and characterize the dependence of the divisors on moves and reductions. Sections \ref{sec:example} and \ref{sec:ex_Gr24top} contain several examples and applications from the previous sections.

{\bf Notations:} We use the following notations throughout the paper:
\begin{enumerate}
\item $k$ and $n$ are positive integers such that $k<n$;
\item  For $s\in {\mathbb N}$ let $[s] =\{ 1,2,\dots, s\}$; if $s,j \in {\mathbb N}$, $s<j$, then
$[s,j] =\{ s, s+1, s+2,\dots, j-1,j\}$;
\item  ${\vec t} = (t_1,t_2,t_3,\dots)$ is the infinite vector of real KP times where $t_1=x$, $t_2=y$, $t_3=t$, and we assume that only a finite number of components are different from zero;
\item We denote $\theta(\zeta,\vec t)= \sum\limits_{s=1}^{\infty} \zeta^s t_s$, due to the previous remark  $\theta(\zeta,\vec t)$ is well-defined for any complex $\zeta$;
\item We denote the real KP phases 
$\kappa_1< \kappa_2 < \cdots < \kappa_n$ and
$\theta_j \equiv \theta (\kappa_j, \vec t)$.
\end{enumerate}

\section{KP multi-line solitons in the Sato Grassmannian and in finite-gap theory}\label{sec:soliton_theory}

Kadomtsev-Petviashvili-II (KP) equation is one of most famous integrable equations, and it is a member of an integrable hierarchy (see \cite{D,DKN,H,MJD,S} for more details).  

The multiline soliton solutions are a special class of solutions to the KP equation \cite{KP}
\begin{equation}\label{eq:KP}
(-4u_t+6uu_x+u_{xxx})_x+3u_{yy}=0,
\end{equation}
and are realized starting from the soliton data $({\mathcal K}, [A])$, where 
${\mathcal K}$ is a set of real ordered phases $\kappa_1<\cdots<\kappa_n$, $[A]$ denotes
a point in the finite dimensional real Grassmannian $Gr (k,n)$ represented by a $k\times n$ real matrix  $A =( A^i_j )$ ($i\in [k], j\in [n]$), of maximal rank $k$.
Following \cite{Mat}, see also \cite{FN}, to such data we associate $k$ linear independent solutions
$f^{(i)}(\vec t) = \sum_{j=1}^n A^i_j e^{\theta_j}$, $i\in [k]$, to the heat hierarchy
$\partial_{t_l} f = \partial_x^l f$, $l=2,3,\dots$.
Then
\begin{equation}\label{eq:KPsol}
u( {\vec t} ) = 2\partial_{x}^2 \log(\tau ( {\vec t}))
\end{equation}
is a multiline soliton solution to (\ref{eq:KP}) with
\[
\tau (\vec t) = Wr_{t_1} (f^{(1)},\dots, f^{(k)})= \sum\limits_{I} \Delta_I (A)\prod_{\mycom{i_1<i_2}{ i_1,i_2 \in I}} (\kappa_{i_2}-\kappa_{i_1} ) e^{ \sum\limits_{i\in I} \theta_i },
\]
where the sum is other all $k$--element ordered subsets $I$ in $[n]$, {\it i.e.} $I=\{ 1\le i_1<i_2 < \cdots < i_k \le n\}$ and $\Delta_I (A)$ are the maximal minors of the matrix $A$ with respect to the columns $I$, {\it i.e.} the Pl\"ucker coordinates for the corresponding point in the finite dimensional Grassmannian $Gr (k,n)$.

$u( {\vec t} ) = 2\partial_{x}^2 \log(\tau ( {\vec t}))$ 
is a real regular multi--line soliton solution to the KP equation (\ref{eq:KP}) bounded for all real $x,y,t$ if and only if $\Delta_I (A) \ge 0$, for all $I$ \cite{KW2}. We remark that the weaker statement that the solution of the KP hierarchy is bounded for all real times if and only if all Pl\"ucker coordinates are non-negative was earlier proven in \cite{Mal}.

Before continuing, let us recall some useful definitions.

\begin{definition}\textbf{Totally non-negative Grassmannian \cite{Pos}.}
Let $Mat^{\mbox{\tiny TNN}}_{k,n}$ denote the set of real $k\times n$ matrices of maximal rank $k$ with non--negative maximal minors $\Delta_I (A)$. Let $GL_k^+$ be the group of $k\times k$ matrices with positive determinants. We define a totally non-negative Grassmannian as 
\[
\GTNN = GL_k^+ \backslash Mat^{\mbox{\tiny TNN}}_{k,n}.
\]
\end{definition}

\begin{remark}
The left multiplication by an element in  $GL_k^+$ does not affect the signs of the minors and values of their ratios. Since left multiplication by $GL_k^+$ matrices preserves the KP real regular multisoliton solution $u({\vec t})$ in (\ref{eq:KPsol}), the soliton data is $[A]$ -- the equivalence class of $A$,  {\sl i.e.} a point in the totally non--negative Grassmannian 
\end{remark}

In the theory of totally non-negative Grassmannians an important role is played by the positroid stratfication. Each cell in this stratification is defined as the intersection of a Gelfand-Serganova stratum \cite{GS,GGMS} with the totally non-negative part of the Grassmannian. More precisely:
\begin{definition}\textbf{Positroid stratification \cite{Pos}.} Let $\mathcal M$ be a matroid i.e. a collection of $k$-element ordered subsets $I$ in $[n]$, satisfying the exchange axiom (see, for example \cite{GS,GGMS}). Then the positroid cell $\S$ is defined as
$$
\S=\{[A]\in \GTNN\ | \ \Delta_{I}(A) >0 \ \mbox{if}\ I\in{\mathcal M} \ \mbox{and} \  \Delta_{I}(A) = 0 \ \mbox{if} \ I\not\in{\mathcal M}  \}
$$
\end{definition}
Combinatorial classification of all non-empty positroid cells and their rational parametrizations were obtained in \cite{Pos}, \cite{Tal2}. We anticipate that in our construction we shall use the classification of positroid cells via directed planar networks in the disk (see Section~\ref{sec:gamma}).

Any given soliton solution is associated to an infinite set of soliton data $({\mathcal K}, [A])$. However there exists an unique \textbf{minimal} pair $(k,n)$ such that the soliton solution can be realized with $n$ phases $\kappa_1<\cdots<\kappa_n$, $[A]\in \GTNN$ but not with $n-1$ phases and $[A^\prime]\in Gr^{\mbox{\tiny TNN}} (k^{\prime},n^{\prime})$ and either $(k^{\prime}, n^{\prime}) =(k, n-1)$ or $(k^{\prime}, n^{\prime}) =(k-1, n-1)$.
In the following, to avoid excessive technicalities we consider only regular and irreducible soliton data.

\begin{definition}\label{def:regsol}{\bf Regular and irreducible soliton data} \cite{CK}
We call $({\mathcal K}, [A])$ regular soliton data if ${\mathcal K} = \{ \kappa_1 < \cdots < \kappa_n \}$ and $[A]\in \GTNN$, that is if the KP soliton solution as in (\ref{eq:KPsol}) is real regular and bounded for all $(x,y,t)\in \mathbb{R}^3$.

Moreover we call the regular soliton data $({\mathcal K}, [A])$ irreducible  if $[A]$ is a point in the irreducible part of the real Grassmannian, {\sl i.e.} if the reduced row echelon matrix $A$ has the following properties:
\begin{enumerate}
\item\label{it:col} Each column of $A$ contains at least a non--zero element;
\item\label{it:row} Each row of $A$ contains at least one nonzero element in addition to the pivot.
\end{enumerate}
If either (\ref{it:col}) or (\ref{it:row}) doesn't occur, we call the soliton data $({\mathcal K}, [A])$ reducible.
\end{definition}

The class of solutions associated to irreducible regular soliton data has remarkable asymptotic properties both in the $(x,y)$ plane at fixed time $t$ and in the tropical limit ($t\to \pm \infty)$, which have been successfully related to the combinatorial classification of the irreducible part $\GTNN$ for generic choice of the phases ${\mathcal K}$ in a series of papers (see \cite{BPPP,CK,DMH,KW1,KW2} and references therein).

According to Sato theory \cite{S}, the wave function associated to regular soliton data $({\mathcal K},[A])$, can be obtained from the dressing (inverse gauge) transformation of the vacuum (zero--potential) eigenfunction $\displaystyle \phi^{(0)} (\zeta, \vec t) =\exp ( \theta(\zeta, {\vec t}))$, which solves
$\partial_x \phi^{(0)} (\zeta, \vec t)=\zeta \phi^{(0)} (\zeta, \vec t)$, 
$\partial_{t_l}\phi^{(0)} (\zeta, \vec t) = \zeta^l \phi^{(0)} (\zeta, \vec t)$, $l\ge 2$,
via the dressing ({\it i.e.} gauge) operator $W = 1 -{\mathfrak w}_1({\vec t})\partial_x^{-1} -\cdots - {\mathfrak w}_k({\vec t})\partial_x^{-k}$,
where ${\mathfrak w}_1({\vec t}),\dots,{\mathfrak w}_k({\vec t})$ are the
solutions to the following linear system of
equations
$\partial_x^k f^{(i)} = {\mathfrak w}_1 \partial_x^{k-1} f^{(i)}+\cdots + {\mathfrak w}_k f^{(i)}$, $i\in [k]$.
Then,
\[
L= W \partial_x W^{-1} = \partial_x + \frac{u(\vec t)}{2}\partial_x^{-1} +\cdots,  \ \
u(\vec t) = 2\partial_x {\mathfrak w}_1 (\vec t), \ \   
\psi^{(0)} (\zeta; \vec t)= W\phi^{(0)} (\zeta; \vec t),
\]
respectively are the KP-Lax operator, the KP--potential (KP solution) and the KP-eigenfunction, {\sl i.e.}
\begin{equation}\label{eq:dress_hier}
L \psi^{(0)} (\zeta; \vec t) =\zeta \psi^{(0)} (\zeta; \vec t), 	\quad\quad
\partial_{t_l}\psi^{(0)} (\zeta; \vec t)= B_l \psi^{(0)} (\zeta; \vec t), \ \ l\ge 2,
\end{equation}
where $B_l = (W \partial_x^l W^{-1} )_+ =(L^l)_+ $ (here and in the following the symbol $(\cdot )_+$ denotes the differential part of the operator).

Let
\begin{equation}\label{eq:D}
{\mathfrak D}^{(k)} = W\partial_x^k = \partial_x^k - {\mathfrak w}_1 (\vec t)\partial_x^{k-1} -\cdots - {\mathfrak w}_k(\vec t)
\end{equation}
The KP-eigenfunction associated to this class of solutions may be equivalently expressed as
\begin{equation}\label{eq:Satowf} 
{\mathfrak D}^{(k)}\phi^{(0)} (\zeta; \vec t)  = W\partial_x^k \phi^{(0)} (\zeta; \vec t)
= \left(\zeta^k -{\mathfrak w}_1 (\vec t)\zeta^{k-1} -\cdots - {\mathfrak w}_k(\vec t)\right)
\phi^{(0)} (\zeta; \vec t) = \zeta^k \psi^{(0)} (\zeta; \vec t).
\end{equation}

\begin{definition}\label{def:Satodiv}{\bf Sato divisor coordinates}
Let the regular soliton data be $({\mathcal K}, [A])$, ${\mathcal K} = \{ \kappa_1 < \cdots < \kappa_n \}$, $[A]\in \GTNN$. We call Sato divisor coordinates at time $\vec t$, the set of roots $\zeta_j (\vec t)$, $j\in [k]$, of the characteristic equation associated to the Dressing transformation
\begin{equation}\label{eq:Dressing_roots}
\zeta_j^k(\vec t) - {\mathfrak w}_1 (\vec t)\zeta_j^{k-1}(\vec t)-\cdots  - {\mathfrak w}_{k-1} (\vec t)\zeta_j(\vec t)- {\mathfrak w}_k(\vec t) = 0. 
\end{equation}
\end{definition}

In \cite{Mal} it is proven the following proposition
\begin{proposition}\textbf{Reality and simplicity of the KP soliton divisor}\label{prop:malanyuk} \cite{Mal}.
Let the regular soliton data be $({\mathcal K}, [A])$, ${\mathcal K} = \{ \kappa_1 < \cdots < \kappa_n \}$, $[A]\in \GTNN$. Then for all $\vec t$, $\zeta_j^k(\vec t)$ are real and satisfy $\zeta_j(\vec t)\in [\kappa_1,\kappa_n]$, $j\in [k]$. 
Moreover for almost every $\vec t$ the roots of (\ref{eq:Dressing_roots}) are simple.
\end{proposition}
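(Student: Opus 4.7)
The starting point is to write $P(\zeta,\vec t)$ explicitly as a ratio. Since $\mathfrak{D}^{(k)}$ is the unique monic order-$k$ differential operator in $x=t_1$ annihilating $f^{(1)},\ldots,f^{(k)}$, it admits the Wronskian representation $\mathfrak{D}^{(k)} g = Wr_{t_1}(f^{(1)},\ldots,f^{(k)},g)/\tau(\vec t)$. Applying this to $g = e^{\theta(\zeta,\vec t)}$, which is a simultaneous eigenfunction of every $\partial_{t_l}$, and expanding the resulting $(k{+}1)\times(k{+}1)$ determinant by Cauchy--Binet using $f^{(i)} = \sum_j A^i_j e^{\theta_j}$, one obtains
\[
P(\zeta,\vec t)\,\tau(\vec t) \;=\; N(\zeta,\vec t) \;:=\; \sum_{I\in\binom{[n]}{k}} \Delta_I(A)\, V_I\, e^{\sum_{i\in I}\theta_i} \prod_{i\in I}(\zeta - \kappa_i),
\]
where $V_I := \prod_{\{a<b\}\subset I}(\kappa_b-\kappa_a)>0$. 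The Sato divisor coordinates $\zeta_j(\vec t)$ are then the roots of $N(\cdot,\vec t)$, a polynomial of degree exactly $k$ in $\zeta$. A useful reformulation is the identity $N(\zeta,\vec t) = \zeta^k\,\tau(\vec t-[\zeta^{-1}])$, which for $|\zeta|>\max_i|\kappa_i|$ agrees with the convergent series $\sum_I\Delta_I(A)\,V_I\,e^{\sum_{i\in I}\theta_i}\prod_{i\in I}(1-\kappa_i/\zeta)$, tying the analysis to positivity of $\tau$.

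\textbf{Reality and location.} Since $\mathrm{rank}(A)=k$, at least one $\Delta_I(A)>0$, and every summand in $\tau(\vec t)$ is non-negative, so $\tau(\vec t)>0$ for all real $\vec t$; hence the real roots of $P$ and of $N$ coincide and $N$ has positive leading coefficient. For real $\zeta>\kappa_n$ every factor $(\zeta-\kappa_i)>0$, so $N(\zeta,\vec t)>0$; for real $\zeta<\kappa_1$ each such factor is negative and every product has sign $(-1)^k$, so $\mathrm{sgn}\,N(\zeta,\vec t)=(-1)^k$. All sign changes of $N$ must therefore lie in $(\kappa_1,\kappa_n)$. The delicate step is to show that exactly $k$ sign changes occur there, so that all $k$ roots are real and contained in $[\kappa_1,\kappa_n]$. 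I would argue this via the variation-diminishing property of totally non-negative matrices (in the Schoenberg--Karlin--Gantmacher--Krein tradition), using the Pl\"ucker relations satisfied by the $\Delta_I(A)$ to transform the a priori mixed-sign expressions
\[
N(\kappa_m,\vec t) = \sum_{I\not\ni m}\Delta_I(A)\,V_I\,e^{\sum_{i\in I}\theta_i}\prod_{i\in I}(\kappa_m-\kappa_i)
\]
into quantities with a definite oscillation pattern in $m$, thereby forcing $k$ sign changes of $N$ inside $(\kappa_1,\kappa_n)$ and hence $k$ real roots in $[\kappa_1,\kappa_n]$.

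\textbf{Simplicity and main obstacle.} The discriminant $\mathrm{disc}_\zeta P(\zeta,\vec t)$ is a real-analytic (even entire) function of $\vec t$; to conclude it is not identically zero, I would exhibit a single $\vec t_*$ at which the roots are distinct. Sending $t_1\to+\infty$ with the remaining times fixed, $N(\zeta,\vec t)$ is dominated by the term indexed by the $I$ maximizing $\sum_{i\in I}\kappa_i$ subject to $\Delta_I(A)>0$, so after normalization $P(\zeta,\vec t)\to\prod_{i\in I}(\zeta-\kappa_i)$, whose $k$ roots are real and distinct. The discriminant locus is thus a proper real-analytic subvariety of $\vec t$-space, hence of Lebesgue measure zero, and the roots are simple for almost every $\vec t$. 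The hard part of the argument is the sign-counting in the previous paragraph: the positivity of $\tau$ and the signs of $N$ at $\zeta=\pm\infty$ follow at once from non-negativity of the $\Delta_I(A)$, but at the interior values $\zeta=\kappa_m$ the individual terms in $N(\kappa_m,\vec t)$ have mixed signs, and only the Pl\"ucker relations (equivalently, the full totally non-negative structure encoded in the positroid to which $[A]$ belongs) regroup them into a form with the required oscillation. This is the essential ingredient behind the statement proved in \cite{Mal}.
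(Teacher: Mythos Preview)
The paper does not supply its own proof of this proposition; it simply quotes the result from \cite{Mal}. So there is no ``paper's proof'' to compare against, and your proposal must be judged on its own merits.

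Your Wronskian/Cauchy--Binet derivation of the formula
\[
N(\zeta,\vec t)=\sum_{I\in\binom{[n]}{k}}\Delta_I(A)\,V_I\,e^{\sum_{i\in I}\theta_i}\prod_{i\in I}(\zeta-\kappa_i)
\]
is correct, as are the sign computations for $\zeta>\kappa_n$ and $\zeta<\kappa_1$. Your simplicity argument is also essentially complete: the coefficients $\mathfrak w_j(\vec t)$ are real-analytic, the discriminant is therefore real-analytic, and the asymptotic $t_1\to+\infty$ produces $\prod_{i\in I_{\max}}(\zeta-\kappa_i)$ (the maximum-weight base $I_{\max}$ is unique by the matroid greedy property since the $\kappa_i$ are distinct), whose discriminant is nonzero. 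That part stands.

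The genuine gap is exactly where you flag it: the interlacing/sign-counting needed to force all $k$ roots into $[\kappa_1,\kappa_n]$. You invoke ``variation-diminishing'' and ``Pl\"ucker relations'' but do not carry out the argument, and this step is not a formality. Non-negativity of the coefficients $c_I=\Delta_I(A)V_I e^{\sum\theta_i}$ alone is \emph{insufficient}: for $n=4$, $k=2$, $\kappa=(0,1,2,3)$ and $c_{12}=c_{34}=1$, $c_{13}=c_{14}=c_{23}=c_{24}=0$, one computes $N(\zeta)=2\zeta^2-6\zeta+6$ with discriminant $-12<0$, so the roots are complex. What rules this out is precisely the Pl\"ucker relation $\Delta_{12}\Delta_{34}+\Delta_{14}\Delta_{23}=\Delta_{13}\Delta_{24}$, which forbids $\Delta_{12},\Delta_{34}>0$ with $\Delta_{13}\Delta_{24}=0$ in $\GTNN$. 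Likewise, the Miwa-shift identity $N(\zeta,\vec t)=\zeta^k\tau(\vec t-[\zeta^{-1}])$ does not help directly, because for $\zeta\in(\kappa_1,\kappa_n)$ the shifted ``times'' produce factors $1-\kappa_j/\zeta$ of both signs, so the positivity of $\tau$ on the real time-domain does not transfer. In short, your outline is on the right track, but the reality/location claim remains unproven until you actually exhibit the oscillation of $N(\kappa_m,\vec t)$ in $m$ using the positroid structure---which is the substantive content of Malanyuk's result.
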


The following definition is then fully justified.

\begin{definition}\label{def:Sato_data}\textbf{Sato algebraic--geometric data} Let $({\mathcal K}, [A])$ be given regular soliton data with $[A]$ belonging to a $|D|$ dimensional positroid cell in $\GTNN$.  Let $\Gamma_0$ be a copy of $\mathbb{CP}^1$ with marked points $P_0$, local coordinate $\zeta$ such that $\zeta^{-1} (P_0)=0$ and $\zeta(\kappa_1)<\zeta(\kappa_2)<\cdots<\zeta(\kappa_n)$. Let $\vec t_0$ be real and such that the real roots $\zeta_j (\vec t_0)$ in (\ref{eq:Dressing_roots}) are simple.

Then to the data $({\mathcal K}, [A], \Gamma_0\backslash \{ P_0	\} ,\vec t_0)$ we associate the \textbf{Sato divisor} $\DS=\DS(\vec t_0)$
\begin{equation}\label{eq:Satodiv}
\DS=\{ \gamma_{S,j} \in \Gamma_0 \, : \, \zeta(\gamma_{S,j})= \zeta_j (\vec t_0), \quad j\in [k]\},
\end{equation}
and the normalized \textbf{Sato wave function}
\begin{equation}\label{eq:SatoDN}
{\hat \psi } (P, \vec t) = \frac{{\mathfrak D}\phi^{(0)} (P; \vec t)}{{\mathfrak D}\phi^{(0)} (P; \vec t_0)} = \frac{\psi^{(0)} (P; \vec t)}{\psi^{(0)} (P; \vec t_0)}, \quad\quad \forall P\in \Gamma_0\backslash \{ P_0\},
\end{equation}
with ${\mathfrak D}\phi^{(0)} (\zeta; \vec t)$ as in (\ref{eq:Satowf}). 

By definition $({\hat \psi}_0 (P,\vec t)) + \DS \ge 0$, for all $\vec t$.
\end{definition}

In the following, we use the same symbol for the points in $\Gamma_0$ and their local coordinates to simplify notations. In particular, we use the symbol $\gamma_{S,j}$ both for the Sato divisor points and Sato divisor coordinates.

\begin{remark}\label{rem:fundam}\textbf{Incompleteness of Sato algebraic--geometric data} 
Let $1\le k<n$ and let $\vec t_0$ be fixed. Given the phases $\kappa_1<\cdots <\kappa_n$ and the spectral data $( \Gamma_0\backslash \{ P_0	\} , \DS) $, where $\DS=\DS(\vec t_0) $ is a $k$ point divisor satisfying Proposition \ref{prop:malanyuk}, it is, in general, impossible to identify uniquely the point $[A]\in \GTNN$ corresponding to such spectral data. Indeed, assume that $[A]$ belongs to an irreducible positroid cell of dimension $|D|$. Then the degree of $\DS$ equals $k$, but $\max\{k, n-k\} \le |D| \le k(n-k)$.
\end{remark}

In our construction we propose a completion of the Sato algebraic--geometric data based on singular finite--gap theory on reducible algebraic curves \cite{Kr3, AG1,AG3} and we use the representation of totally non--negative Grassmannians via directed planar networks \cite{Pos} to preserve the reality and regularity of the KP divisor in the solitonic limit.

Indeed, soliton KP solutions can be obtained from regular finite--gap solutions of (\ref{eq:KP}) by proper degenerations of the spectral curve \cite{Kr2}, \cite{DKN}.  The spectral data for KP finite--gap solutions are introduced and described  in \cite{Kr1,Kr2}. The spectral data for this
construction are: a finite genus $g$ compact Riemann surface $\Gamma$ with a marked point $P_0$, a local parameter $1/\zeta$ near $P_0$ and a non-special 
divisor $\mathcal D=\gamma_1+\ldots+\gamma_g$ of degree $g$ in $\Gamma$.

The Baker-Akhiezer function $\hat\psi (P, \vec t)$, $P\in\Gamma$, is defined by the following analytic properties:
\begin{enumerate}
\item For any fixed $\vec t$ the function $\hat\psi (P, \vec t)$ is meromorphic in $P$ on $\Gamma\backslash P_0$;
\item On  $\Gamma\backslash P_0$ the function $\hat\psi (P, \vec t)$ is regular outside the divisor points $\gamma_j$ and has at most first order poles 
at the divisor points. Equivalently, if we consider the line bundle $\mathcal L(\mathcal D)$  associated to $\mathcal D$, then
for each fixed $\vec t$ the function $\hat\psi (P, \vec t)$ is a holomorphic section of $\mathcal L(\mathcal D)$ outside $P_0$.
\item $\hat\psi (P, \vec t)$ has an essential singularity at the point $P_0$ with the following asymptotic:
\[
{\hat \psi} (\zeta, \vec t) = e^{ \zeta x +\zeta^2 y +\zeta^3 t +\cdots} \left( 1 - \chi_1({\vec t})\zeta^{-1} - \cdots
-\chi_k({\vec t})\zeta^{-k}  - \cdots\right). 
\]
\end{enumerate}
For generic data these properties define an unique function, which is a common eigenfunction to all KP hierarchy auxiliary linear operators 
$-\partial_{t_j} + B_j$, where $B_j =(L^j)_+$, and the Lax operator $L=\partial_x+\frac{u(\vec t)}{2}\partial_x^{-1}+ u_2(\vec t)\partial_x^{-2}+\ldots.$
Therefore all these operators commute and the potential $u(\vec t)$ satisfies the KP hierarchy. In particular, the KP equation arises in the 
Zakharov-Shabat commutation representation \cite{ZS} as the compatibility for the second and the third operator:
$[ -\partial_y + B_2, -\partial_t +B_3] =0$, with $B_2 \equiv (L^2)_+ = \partial_x^2 + u$, $B_3 = (L^3)_+ = \partial_x^3 +\frac{3}{4} (u\partial_x +\partial_x u) + \tilde u$
and $\partial_x\tilde u =\frac{3}{4} \partial_y u$.
The Its-Matveev formula represents the KP hierarchy solution $u(\vec t)$ in terms of the Riemann theta-functions associated with $\Gamma$ (see, for example, \cite{Dub}). 

In \cite{DN} there were established the necessary and sufficient conditions on spectral data to generate real regular KP hierarchy solutions for all real $\vec t$, under the assumption that $\Gamma$ is smooth and has genus $g$: 
\begin{enumerate}
\item $\Gamma$ possesses an antiholomorphic involution ${\sigma}:\Gamma\rightarrow\Gamma$, ${\sigma}^2=\mbox{id}$, which has the maximal possible number of fixed components (real ovals), $g+1$, therefore $(\Gamma,\sigma)$ is an $\mathtt M$-curve \cite{Har,Nat,Vi}.
\item $P_0$ lies in one of the ovals, and each other oval contains exactly one divisor point. The oval containing $P_0$ is called ``infinite'' and all 
other ovals are called ``finite''.
\end{enumerate}

The set of real ovals divides $\Gamma$ into two connected components. Each of these components is homeomorphic to a sphere with $g+1$ holes. 
The sufficient condition of the Theorem in \cite{DN} still holds true if the spectral curve $\Gamma$ degenerates in such a way that the divisor remains in the finite ovals 
at a finite distance from the essential singularity \cite{DN}. Of course, this condition is not necessary for degenerate curves. 
Moreover, the algebraic-geometric data for a given soliton data $({\mathcal K},[A])$ are not unique since we can construct 
infinitely many reducible curves generating the same soliton solutions. 

\smallskip

In \cite{AG1}, for any soliton data in $\GTP$ and any fixed value of the  parameter $\xi\gg 1$, we have constructed a curve $\Gamma_{\xi}$, which is the rational degeneration of 
a smooth $\mathtt M$--curve of minimal genus $k(n-k)$ and a degree $k(n-k)$ divisor satisfying the reality conditions of Dubrovin and Natanzon's theorem. 
In \cite{AG3} we have then extended this construction to any soliton data in $\GTNN$ by modeling the spectral curve $\Gamma$ on Postnikov Le--graph so that components, marked points and ovals of the curve correspond to vertices, edges and ovals in the graph. In particular for any given positroid cell, such $\Gamma$ is a rational degeneration of an $\mathtt M$--curve of minimal genus equal to its dimension.

In the following Sections, we generalize the construction in \cite{AG3} to the trivalent plabic (= planar bicolored) networks in the disk in Postnikov equivalence class for $[A]$ and prove the invariance of the KP divisor.
Following \cite{AG3}, we define the desired properties of Baker-Akhiezer functions on reducible curves associated with a given soliton data. 

\begin{definition}
\label{def:rrss}
\textbf{Real regular algebraic-geometric data associated with a given soliton solution.}
Let the soliton data $({\mathcal K},[A])$ be fixed, where ${\mathcal K}$ is a 
collection of real phases $\kappa_1<\kappa_2<\ldots <\kappa_n$, $[A]\in \GTNN$. Let $|D|$ be the dimension of the positroid cell to which $[A]$ belongs. Let $(\Gamma_0, P_0, \DS)$ be the Sato algebraic--geometric data for $({\mathcal K},[A])$ as in Definition \ref{def:Sato_data} for a given $\vec t_0$.

Let $\Gamma$ be a reducible connected curve with a marked point $P_0$, a local parameter $1/\zeta$ near $P_0$ such that 
\begin{enumerate}
\item $\Gamma_0$ is the irreducible component of $\Gamma$ containing $P_0$;
\item $\Gamma$  may be obtained from a rational degeneration of a smooth ${\mathtt M}$-curve of genus $g$, with $g\ge |D|$ and the antiholomorphic involution preserves the maximum number of the ovals in the limit, so that $\Gamma$ possesses $g+1$ real ovals. 
\end{enumerate}

Assume that ${\mathcal D}$ is a degree $g$ non-special divisor on $\Gamma\backslash P_0$, and that $\hat\psi$ is the normalized Baker-Ahkiezer function associated to such data, i.e. for any $\vec t$ its pole divisor is contained in ${\mathcal D}$: $({\hat \psi} (P , \vec t))+{\mathcal D} \ge 0$ on $\Gamma\backslash P_0$, where $(f)$ denotes the divisor of $f$.

We say that  \textbf{the algebraic-geometric data $(\Gamma, P_0,{\mathcal D})$ are associated to the soliton data $({\mathcal K},[A])$,} if the restriction of ${\mathcal D}$ to $\Gamma_0$ coincides with the Sato divisor $\DS$ and
the restriction of $\hat\psi$ to $\Gamma_0$ coincides with the Sato normalized dressed wave function for the soliton data $({\mathcal K},[A])$. 
 
We say that the \textbf{divisor 
${\mathcal D}$ satisfies the reality and regularity conditions} if $P_0$ belongs to one of the fixed ovals and the boundary of each other oval contains exactly one divisor point. 
\end{definition}

\section{Algebraic-geometric approach for irreducible KP soliton data in $\GTNN$}\label{sec:3}

In the following we fix the regular irreducible soliton data $({\mathcal K}, [A])$ and we present a \textbf{direct} construction of algebraic geometric data associated to points in irreducible positroid cells of $\GTNN$.
$\Gamma_0$ is the rational curve associated to Sato dressing and is equipped with a finite number of marked points: the ordered real phases ${\mathcal K} = \{ \kappa_1<\cdots<\kappa_n\}$, the essential singularity 
$P_0$ of the wave function and the Sato divisor $\DS$ as in Definition \ref{def:Sato_data}. The normalized wave function $\hat \psi$ on $\Gamma_0 \backslash \{ P_0\}$ is the normalized Sato wave function (\ref{eq:SatoDN}).
In the present paper, we do the following 

\textbf{Main construction} {\sl Assume we are given a real regular bounded multiline KP soliton solution generated by the following soliton data:
\begin{enumerate}
\item A set of $n$ real ordered phases ${\mathcal K} =\{ \kappa_1<\kappa_2<\dots<\kappa_n\}$;
\item A point $[A]\in \S \subset \GTNN$, where $\S$ is an irreducible positroid cell of dimension $|D|$. 
\end{enumerate}
Let ${\mathcal N}$ be a connected planar bicolored trivalent perfectly orientable network in the disk in Postnikov equivalence class representing $[A]$ and let ${\mathcal G}$ be the graph of ${\mathcal N}$. If the network is reduced, there are no extra conditions, otherwise we assume the data to be generic.

Then, we associate the following algebraic-geometric objects to each triple $({\mathcal K}, [A]; {\mathcal N})$:
\begin{enumerate}
\item A reducible curve $\Gamma=\Gamma(\mathcal G)$ which is the rational degeneration of a smooth $\mathtt M$--curve of genus $g\ge |D|$, where $g+1$ is the number of faces of $\mathcal G$. In our approach, the curve $\Gamma_0$ is one of the irreducible components of $\Gamma$. The marked point $P_0$ belongs to the intersection of $\Gamma_0$ with an oval (infinite oval); 
\item An unique real and regular degree $g$ non--special KP divisor $\DKP({\mathcal K}, [A])$ such that any finite oval contains exactly one divisor point and $\DKP ({\mathcal K}, [A])\cap \Gamma_0$ coincides with Sato divisor;
\item An unique KP wave--function $\hat \psi$ as in Definition \ref{def:rrss} such that
\begin{enumerate}
\item Its restriction to $\Gamma_0\backslash \{P_0\}$ coincides with the normalized Sato wave function (\ref{eq:SatoDN});
\item Its pole divisor has degree $\mathfrak d \le g$ and is contained in $\DKP ({\mathcal K}, [A])$.
\end{enumerate}
\end{enumerate}
}

In particular, if $\mathcal G= \mathcal G_T$ is the trivalent bicolored Le--graph \cite{Pos}, then $\Gamma(\mathcal G_T)$ is the rational degeneration of on $\mathtt M$--curve of minimal genus $|D|$, it has exactly $|D|+1$ ovals, and ${\mathfrak d}=g=|D|$ \cite{AG3}.

\subsection{The reducible rational curve $\Gamma=\Gamma(\mathcal G)$}
\label{sec:gamma}

The construction of $\Gamma(\mathcal G)$ is a straightforward modification of a special case in the classical construction of nodal curves by dual graphs \cite{ACG}.

Following \cite{Pos} we consider the following class of graphs ${\mathcal G}$:
Following \cite{Pos} we consider the following class of graphs ${\mathcal G}$:
\begin{definition}\label{def:graph} \textbf{Planar bicolored directed trivalent perfect graphs in the disk (PBDTP graphs).} A graph ${\mathcal G}$ is called PBDTP if:
\begin{enumerate}
\item  ${\mathcal G}$ is planar, directed and lies inside a disk. Moreover ${\mathcal G}$ is connected, i.e. it does not possess components isolated from the boundary.
\item It has finitely many vertices and edges;
\item It has $n$ boundary vertices on the boundary of the disk labeled $b_1,\cdots,b_n$ clockwise. Each boundary vertex has degree 1. We call a boundary vertex $b_i$ a source (respectively sink) if its edge is outgoing (respectively incoming);
\item The remaining vertices are called internal and are located strictly inside the disk. They are either bivalent or trivalent; 
\item ${\mathcal G}$ is a perfect graph, that is each internal vertex in  ${\mathcal G}$ is incident to exactly one incoming edge or to one outgoing edge;
\item Each vertex is colored black or white. If a trivalent vertex has only one incoming edge, it is colored white, otherwise, it is colored black. Bivalent vertices are assigned either white or black color;
\end{enumerate}
Moreover, to simplify the overall construction we further assume that the boundary vertices $b_j$, $j\in [n]$ lie on a common interval in the boundary of the disk and that each boundary vertex $b_{i}$ is joined by its edge to an internal bivalent white vertex which we denote $V_{i}$, $i\in [n]$. 
\end{definition}

\begin{remark}
The assumption that the boundary vertices $b_j$, $j\in [n]$ lie on a common interval in the boundary of the disk considerably simplifies the use of gauge ray directions to assign winding numbers to walks starting at internal edges and to count the number of boundary source points passed by such walks.  
Instead the requirement that each boundary vertex $b_{i}$ is joined by its edge to an internal bivalent white vertex is completely unnecessary, but useful to prove that the KP divisor does not depend on the orientation of the network used in the construction. 
Indeed the normalized dressed wave function is constant with respect to the spectral parameter on each component of $\Gamma$ corresponding to a bivalent vertex. Therefore, using Move (M3) in \cite{Pos}, we may eliminate all bivalent vertices and the corresponding components in $\Gamma$ without affecting the KP divisor and the properties of the wave function. 
\end{remark}
In Figure \ref{fig:net_curve} [left] we present an example of a PBDTP graph satisfying Definition~\ref{def:graph} and representing a 10-dimensional positroid cell in $Gr^{\mbox{\tiny TNN}}(4,9)$. 

The class of perfect orientations of the PBDTP graph ${\mathcal G}$ are those which are compatible with the coloring of the vertices. The graph is of type $(k,n)$ if it has $n$ boundary vertices and $k$ of them are boundary sources. Any choice of perfect orientation preserves the type of ${\mathcal G}$. To any perfect orientation $\mathcal O$ of ${\mathcal G}$ we assign the base $I_{\mathcal O}\subset [n]$ of the $k$-element source set for $\mathcal O$. Following \cite{Pos} the matroid of ${\mathcal G}$ is the set of $k$-subsets  $I_{\mathcal O}$ for all perfect orientations:
$$
\mathcal M_{\mathcal G}:=\{I_{\mathcal O}|{\mathcal O}\ \mbox{is a perfect orientation of}\ \mathcal G \}
$$
In \cite{Pos} it is proven that $\mathcal M_{\mathcal G}$ is a totally non-negative matroid $\mathcal S^{\mbox{\tiny TNN}}_{\mathcal M_{\mathcal G}}\subset \GTNN$. The following statements are straightforward adaptations of more general statements of \cite{Pos} to the case of  PBDTP graphs:
\begin{theorem}
A PBDTP graph $\mathcal G$ can be transformed into a PBDTP graph $\mathcal G'$ via a finite sequence of Postnikov moves and reductions if and only if $\mathcal M_{\mathcal G}=\mathcal M_{\mathcal G'}$.
\end{theorem}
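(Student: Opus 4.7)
The plan is to deduce this statement by reducing it to Postnikov's general classification theorem for plabic graphs representing positroid cells, and then showing that the restriction to the PBDTP subclass introduces no genuine obstruction. The argument splits naturally into the two implications.

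For the forward direction (moves and reductions preserve the matroid), I would proceed move-by-move. The square move (M1) preserves $\mathcal M_{\mathcal G}$ because it corresponds to a known bijective transformation of the edge weights leaving the boundary measurement map invariant up to reparametrization; for any perfect orientation $\mathcal O$ of the left-hand side one exhibits a perfect orientation $\mathcal O'$ of the right-hand side with the same source set, and vice versa. Unicolored edge contractions/expansions (M2) and bivalent vertex insertions/removals (M3) produce perfect orientations that can be matched with the original in a one-to-one manner, preserving boundary sources. Parallel edge reduction (R1), dipole removal (R2), and leaf reduction (R3) remove subgraphs that contribute trivially to the boundary measurement map. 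In each case the check reduces to a direct computation that has been carried out in \cite{Pos} for general plabic graphs and is inherited here because PBDTP graphs are plabic.

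For the reverse direction, the strategy is to bring each of $\mathcal G$ and $\mathcal G'$ to a canonical form within the PBDTP class via moves, so that two PBDTP graphs with identical matroid end up at the same canonical representative. A natural choice is the bicolored trivalent Le--graph, which is the distinguished reduced PBDTP representative of a given positroid cell. By Postnikov's theorem, any reduced plabic graph representing $\mathcal M_{\mathcal G}$ can be transformed into the Le--graph by a finite sequence of moves and reductions; the key step is to verify that this sequence can be realized \emph{entirely within the PBDTP class}. After each application of (M1)--(M2) one freely inserts bivalent vertices with (M3) along any edge that has lost its bivalent endpoint at the boundary, restoring the condition that each boundary vertex $b_i$ is joined to an internal bivalent white vertex $V_i$, and splits any accidental higher-valency vertex by applying (M2) in reverse together with (M3). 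Concatenating the sequence $\mathcal G \leadsto \mathcal G_T$ with the reverse sequence $\mathcal G_T \leadsto \mathcal G'$ then produces the desired transformation.

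The main obstacle I expect is handling the non-reduced case carefully, since non-reduced PBDTP graphs representing the same matroid can have distinct face numbers $g+1$, and Postnikov's classification gives the cleanest statement in the reduced setting. The plan there is to first apply the reductions (R1)--(R3) to both graphs until they are reduced, obtaining (possibly different) reduced PBDTP representatives of the same matroid, and then to invoke the reduced case. One must then verify that the two sequences of inverse reductions used to return to $\mathcal G$ and $\mathcal G'$ can be consistently interleaved; equivalently, that any non-reduced PBDTP enlargement of a reduced PBDTP graph can be obtained from it by a finite sequence of inverse moves and inverse reductions without leaving the PBDTP class. This last verification is implicit in \cite{Pos} for general plabic graphs but needs an explicit check in the trivalent perfect setting, and is the only place where genuine care beyond citation is required.
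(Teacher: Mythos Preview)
The paper does not actually prove this theorem: it is stated immediately after the sentence ``The following statements are straightforward adaptations of more general statements of \cite{Pos} to the case of PBDTP graphs,'' and no argument is given. Your proposal is therefore not competing with any proof in the paper; rather, you have written out a plausible sketch of what the authors presumably had in mind when they declared the adaptation straightforward.

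Your outline is reasonable and identifies the right ingredients: the forward direction is the routine verification (done in \cite{Pos}) that each move and reduction preserves the boundary measurement map and hence the matroid, and the reverse direction is the reduction to a canonical representative (the Le--graph) via Postnikov's classification. You are also right that the only point requiring genuine care is ensuring that the sequence of moves can be kept inside the PBDTP subclass, which amounts to freely inserting or removing bivalent vertices via (M3) and splitting higher-valency vertices via (M2) as needed after each step. Since the paper treats this as a black-box citation, your level of detail already exceeds what the paper provides; the only caveat is that your discussion of the non-reduced case (interleaving inverse reductions) is more delicate than the paper acknowledges, and a fully rigorous treatment would need to spell out that step more carefully than either you or the paper does.
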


A graph  $\mathcal G$ is reduced if there is no other graph in its move reduction equivalence class which can be obtained from $\mathcal G$ applying a sequence of transformations containing at least one reduction.

\begin{theorem}
Each Le-graph may be transformed into a PBDTP Le-graph, is reduced, and each positroid cell in the totally non-negative Grassmannian is represented by a Le-graph. 

If $\mathcal G$ is a reduced  PBDTP graph, then the dimension of  $\mathcal S^{\mbox{\tiny TNN}}_{\mathcal M_{\mathcal G}}$ is equal to the number of faces of $\mathcal G$ minus 1.
\end{theorem}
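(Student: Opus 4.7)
The plan is to reduce the claim to Postnikov's theorem \cite{Pos} for general reduced plabic graphs, which asserts that for any such graph $\mathcal G$ representing a positroid cell $\mathcal S$ one has $\dim\mathcal S = F(\mathcal G)-1$, where $F(\mathcal G)$ is the number of faces, and then to track the face count under the extra normalizations imposed by Definition \ref{def:graph}.

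First I would observe that a PBDTP graph is a special instance of a Postnikov plabic graph: the only additional features are (i) internal vertices have valence two or three, (ii) the boundary vertices are aligned on a single arc of the boundary circle, and (iii) each boundary vertex is attached to a bivalent white vertex before entering the bulk. Feature (i) can be achieved from any plabic graph by repeatedly splitting an internal vertex of valence $d\ge 4$ into a tree of $d-2$ trivalent vertices of the same color, while feature (iii) is obtained by subdividing the edge at each boundary vertex with a bivalent white vertex; feature (ii) is a purely topological normalization of the disk boundary. All of these are sequences of Postnikov moves (the splits are iterated instances of (M1)), so they preserve both the move-reduction equivalence class and the represented positroid cell $\mathcal S_{\mathcal M_{\mathcal G}}^{\mbox{\tiny TNN}}$.

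Next, a quick Euler-characteristic check shows that every such operation also preserves the face count: splitting a single $d$-valent vertex into a tree of $d-2$ trivalent ones adds $d-3$ vertices and $d-3$ edges, and subdividing one edge adds one vertex and one edge, so in both cases $\Delta V=\Delta E$ and hence $\Delta F = 0$ by $V-E+F=1$. Hence, starting from an arbitrary reduced plabic graph $\mathcal G'$ representing $\mathcal S$ and normalizing it to a reduced PBDTP graph $\mathcal G$ we obtain $F(\mathcal G)=F(\mathcal G')$, and combining with Postnikov's dimension formula yields
\[
\dim\mathcal S_{\mathcal M_{\mathcal G}}^{\mbox{\tiny TNN}} \; = \; F(\mathcal G')-1 \; = \; F(\mathcal G)-1,
\]
which is the claim.

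The main obstacle is to ensure that reducedness of $\mathcal G$ in the PBDTP sense is compatible with reducedness in the Postnikov sense, since the bivalent white vertices that Definition \ref{def:graph} mandates at every boundary attachment are exactly the kind of feature that Postnikov's bivalent-contraction move would remove. One therefore must verify that the allowed list of reductions, when restricted to the PBDTP class, consists precisely of those Postnikov reductions which do not touch the mandatory bivalent vertices at the boundary; equivalently, one treats each mandated bivalent white vertex as a rigid part of the normal form and checks that no genuine simplification of the interior is blocked by this restriction. This is a routine case analysis on Postnikov's reduction list and is the only point at which the argument departs from a literal quotation of \cite{Pos}.
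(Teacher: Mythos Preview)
Your approach is essentially the same as the paper's, which does not give a proof at all: the theorem is stated as a ``straightforward adaptation of more general statements of \cite{Pos}'' to the PBDTP setting, so your reduction to Postnikov's dimension formula together with the Euler-characteristic check that the normalizations preserve the face count is exactly the intended justification, just spelled out in more detail than the paper bothers with.

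One small correction: the vertex splits that turn a $d$-valent vertex into a tree of trivalent same-colored vertices are instances of (M2), the unicolored edge contraction/uncontraction, not (M1), which in this paper's numbering is the square move; similarly, inserting the bivalent white vertex at each boundary edge is (M3). This does not affect your argument, since both moves still preserve the positroid cell and the face count, but the labeling should be fixed.
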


The PBDTP graph in Figure \ref{fig:net_curve} [left] is a PBDTP Le-graph.

\begin{figure}%[H]
  \centering
  {\includegraphics[height=0.23\textwidth]{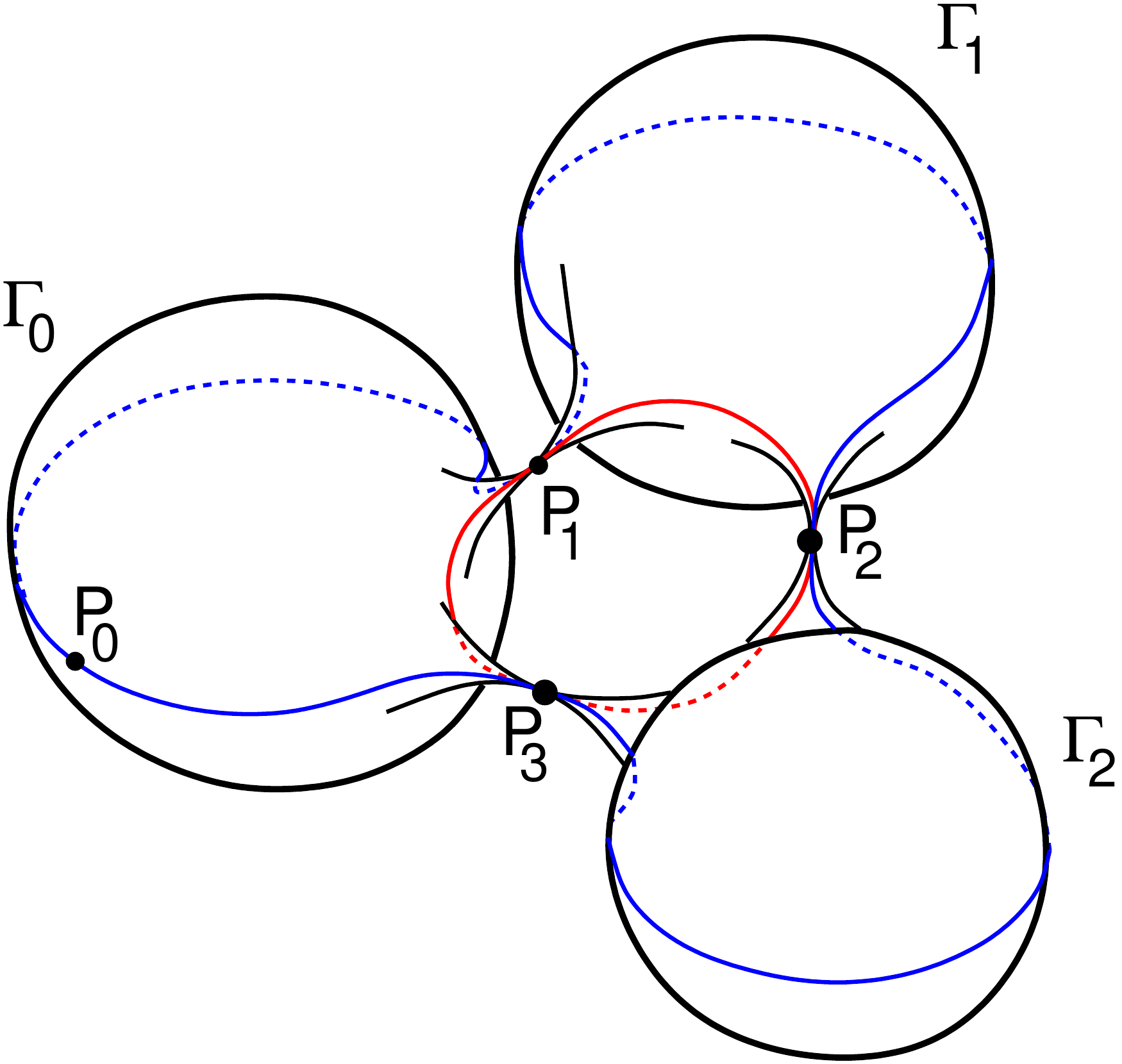}}\hspace{1cm}
  {\includegraphics[height=0.23\textwidth]{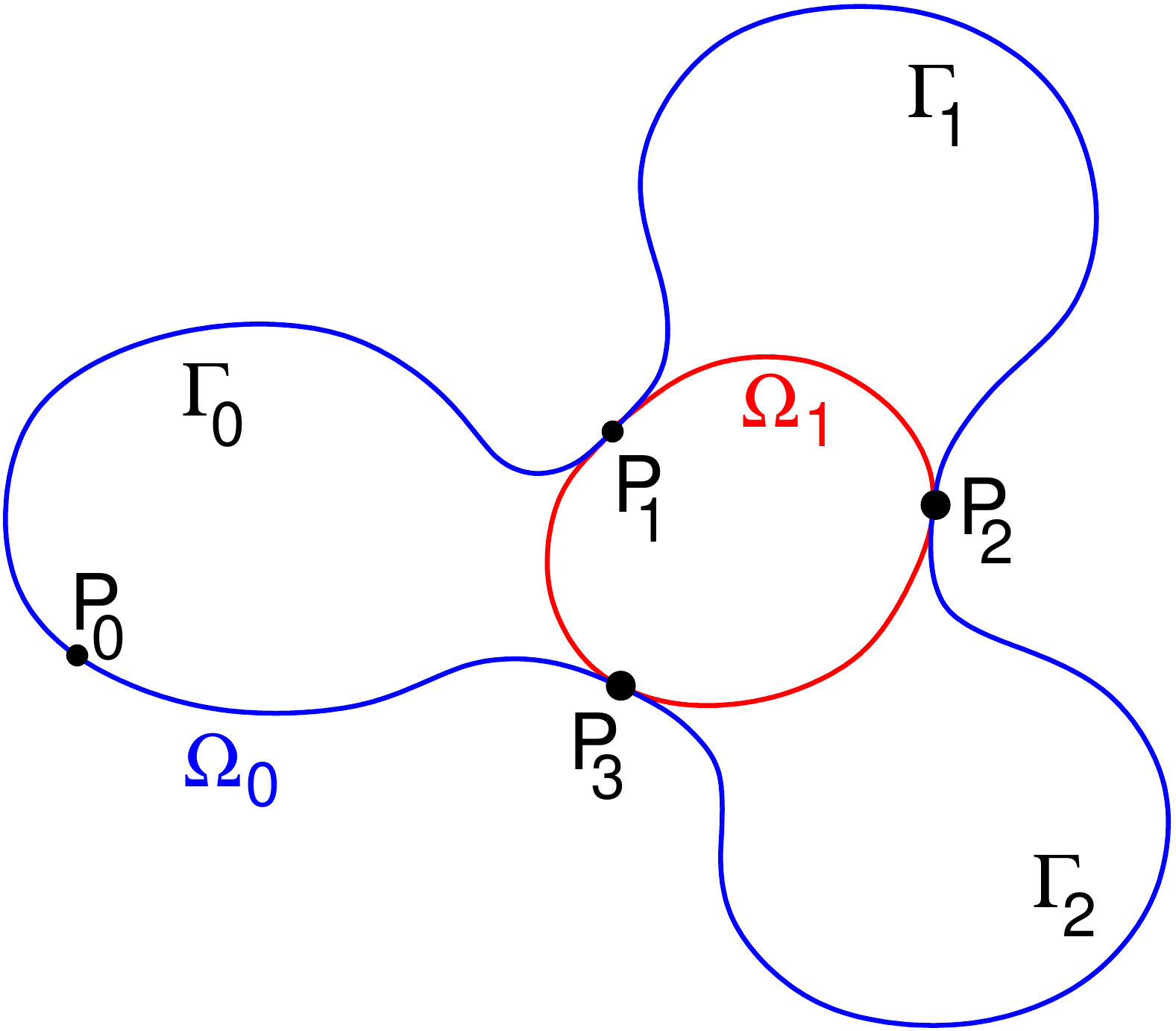}}\hspace{1cm}
  {\includegraphics[height=0.23\textwidth]{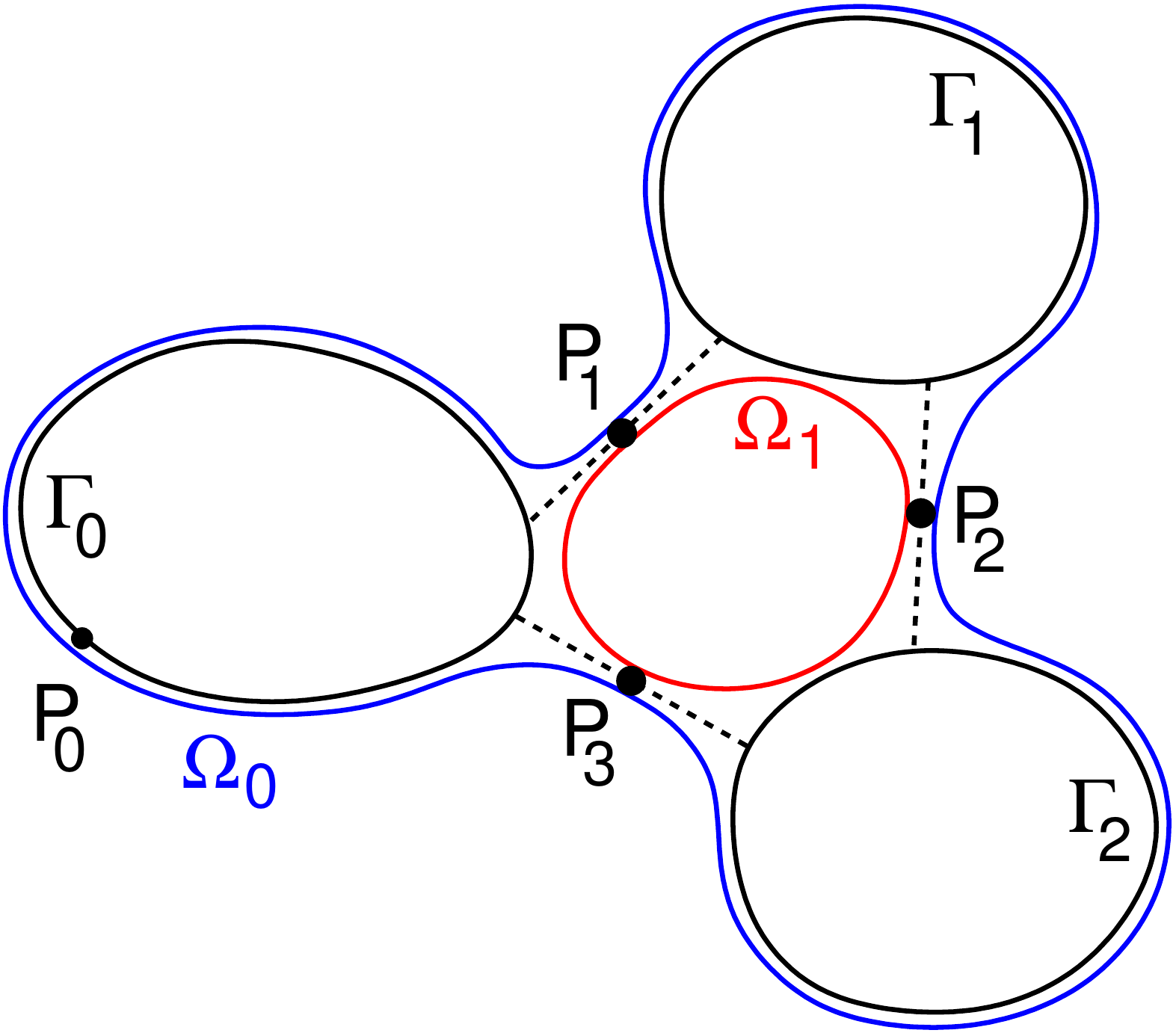}}
  \caption{\label{fig:curve_model}\small{\sl The model of reducible rational curve. On the left we have three Riemann spheres $\Gamma_0$, $\Gamma_1$ and $\Gamma_2$ glued at the points $P_1$, $P_2$, $P_3$. The real part of the curve is drawn in blue and red. In the middle we draw just the real part of the curve and evidence the two real ovals $\Omega_0$ (drawn blue) and $\Omega_1$ (drawn red). On the right we present the representation of the real topological model of the curve used throughout the paper: The real parts of each rational component is a circle and the double points are represented by dashed segments.}}
\end{figure}

\begin{remark}\label{rem:labedges}\textbf{Labeling of edges at vertices}
Let ${\mathcal G}$ be a PBDTP graph. We number the edges at an internal $V$ anticlockwise in increasing order with the following rule: the unique edge starting at a black vertex is numbered 1 and the unique edge ending at a white vertex is numbered 3 (see also Figure \ref{fig:markedpoints}). 
\end{remark}

We construct the curve $\Gamma=\Gamma({\mathcal G})$ gluing a finite number of copies of $\mathbb{CP}^1$, each corresponding to an internal vertex in ${\mathcal G}$, and one copy of $\mathbb{CP}^1=\Gamma_0$, corresponding to the boundary of the disk at pairs of points corresponding to the edges of ${\mathcal G}$. 
On each component, we fix a local affine coordinate $\zeta$ (see Definition~\ref{def:loccoor}) so that  the coordinates at each pair of glued points are real. The points with real $\zeta$ form the real part of the given component. We represent the real part of $\Gamma$ as the union of the ovals (circles) corresponding to the faces of ${\mathcal G}$.
For the case in which $\mathcal G$ is the Le--network see \cite{AG3}.

We use the following representation for real rational curves (see Figure~\ref{fig:curve_model} and \cite{AG1, AG3}): we only draw the real part of the curve, i.e. we replace each  $\Gamma_j=\mathbb{CP}^1$ by a circle. Then we schematically represent the real part of the curve by drawing these circles separately and connecting the glued points by dashed segments. The planarity of the graph implies that $\Gamma$ is a reducible rational $\mathtt M$--curve. 

\begin{figure}%[H]
  \centering
  \includegraphics[width=0.7\textwidth]{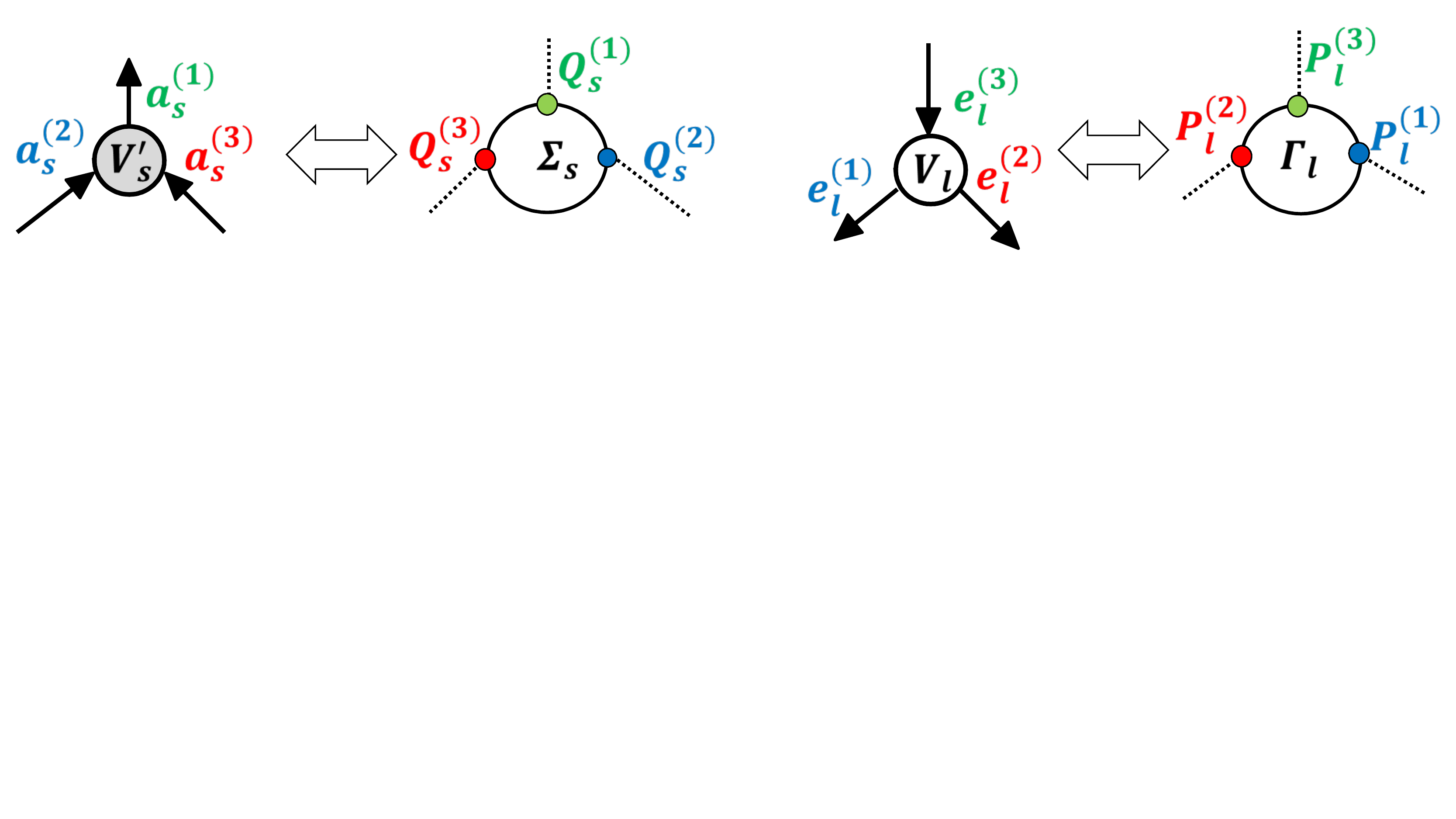}
  \vspace{-4 truecm}
  \caption{\small{\sl The rule for the marked points on the copies $\Sigma_{j}$ and $\Gamma_{l}$ corresponding to the edges of trivalent black and white vertices under the assumption that the boundary vertices lie on a horizontal line.}}
	\label{fig:markedpoints}
\end{figure}

\begin{construction}\label{def:gamma}\textbf{The curve $\Gamma(\mathcal G)$.}
Let ${\mathcal K} =\{\kappa_1 < \cdots < \kappa_n\}$ and let ${\mathcal S}^{\mbox {\tiny TNN}}_{{\mathcal M}}\subset \GTNN$ be a fixed irreducible positroid cell of dimension $|D|$. Let ${\mathcal G}$ be a PBDTP graph representing $\S$ with $g+1$ faces, $g\ge|D|$. Then the curve $\Gamma = \Gamma ({\mathcal G})$ is associated to ${\mathcal G}$ using the correspondence in Table \ref{table:LeG}, after reflecting the graph w.r.t. a line orthogonal to the one containing the boundary vertices (we reflect the graph to have the natural increasing order of the marked points $\kappa_j$ on $\Gamma_0\subset\Gamma(\mathcal G)$).
\begin{table}%[H]
\caption{The graph ${\mathcal G}$  vs the reducible rational curve $\Gamma$} 
\centering
\begin{tabular}{|c|c|}
\hline\hline
$\mathcal G$ & $\Gamma$ \\[0.5ex]
\hline
Boundary of disk & Copy of $\mathbb{CP}^1$ denoted $\Gamma_0$ \\
Boundary vertex $b_l$             & Marked point $\kappa_l$ on  $\Gamma_0$\\
Black vertex   $V^{\prime}_{s}$   & Copy of $\mathbb{CP}^1$ denoted $\Sigma_{s}$\\
White vertex   $V_{l}$            & Copy of $\mathbb{CP}^1$ denoted $\Gamma_{l}$\\
Internal Edge                     & Double point\\
Face                              & Oval\\ [1ex]
\hline
\end{tabular}
\label{table:LeG}
\end{table}
More precisely:
\begin{enumerate}
\item We denote $\Gamma_0$ the copy of $\mathbb{CP}^1$ corresponding to the boundary of the disk and mark on it the points $\kappa_1<\cdots <\kappa_n$ corresponding to the boundary vertices $b_1,\dots, b_n$ on $\mathcal G$. We assume that $P_0=\infty$; 
\item A copy of $\mathbb{CP}^1$ corresponds to any internal vertex of $\mathcal G$. We use the symbol $\Gamma_l$ (respectively $\Sigma_s$) for the copy of $\mathbb{CP}^1$ corresponding to the white vertex $V_l$ (respectively the black vertex $V^{\prime}_s$);
\item On each copy of $\mathbb{CP}^1$ corresponding to an internal vertex $V$, we mark as many points as edges at $V$. In Remark~\ref{rem:labedges} we number the edges at $V$ anticlockwise in increasing order, so that, on the corresponding copy of $\mathbb{CP}^1$, the marked points are numbered clockwise because of the mirror rule (see Figure \ref{fig:markedpoints});
\item On each copy $\Gamma_{i}$ corresponding to a bivalent white vertex $V_{i}$ joined to the boundary vertex $b_{i}$, $i\in [n]$, we mark a third point  $P^{(3)}_i$ (Darboux point); 
\item Gluing rules between copies of $\mathbb{CP}^1$ are ruled by edges: we glue copies of $\mathbb{CP}^1$ in pairs at the marked points corresponding to the end points of the edge;
\item The faces of $\mathcal G$ correspond to the ovals of $\Gamma$.  
\end{enumerate}
\end{construction}

In Figure \ref{fig:net_curve} we present an example of curve corresponding to a network representing an irreducible positroid cell
in $Gr^{\mbox{\tiny TNN}} (4,9)$. Other examples are studied  in Sections \ref{sec:example} and \ref{sec:ex_Gr24top}.

\begin{figure}%[H]
  \centering
  {\includegraphics[width=0.47\textwidth]{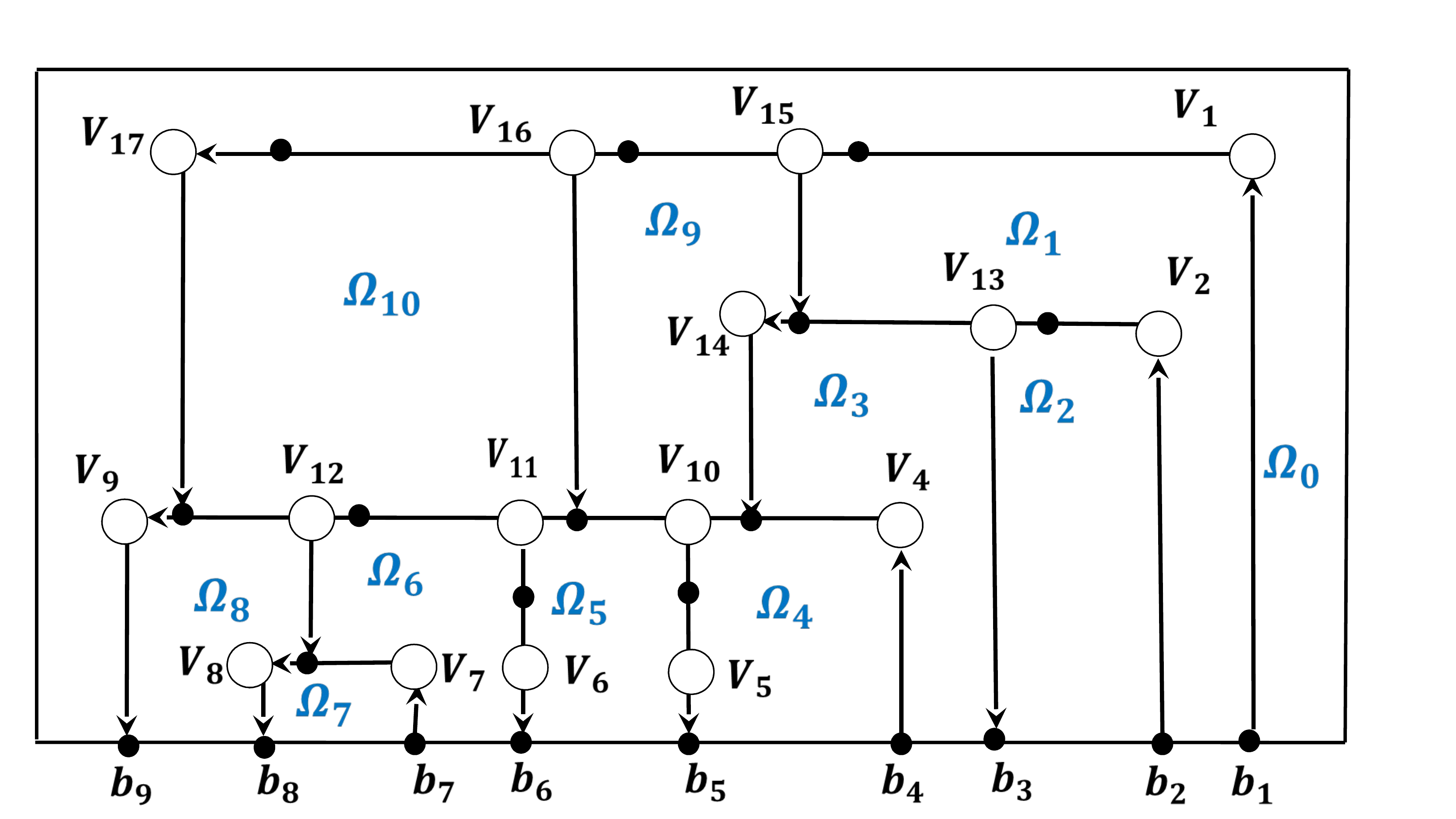}}
  \hfill
	{\includegraphics[width=0.47\textwidth]{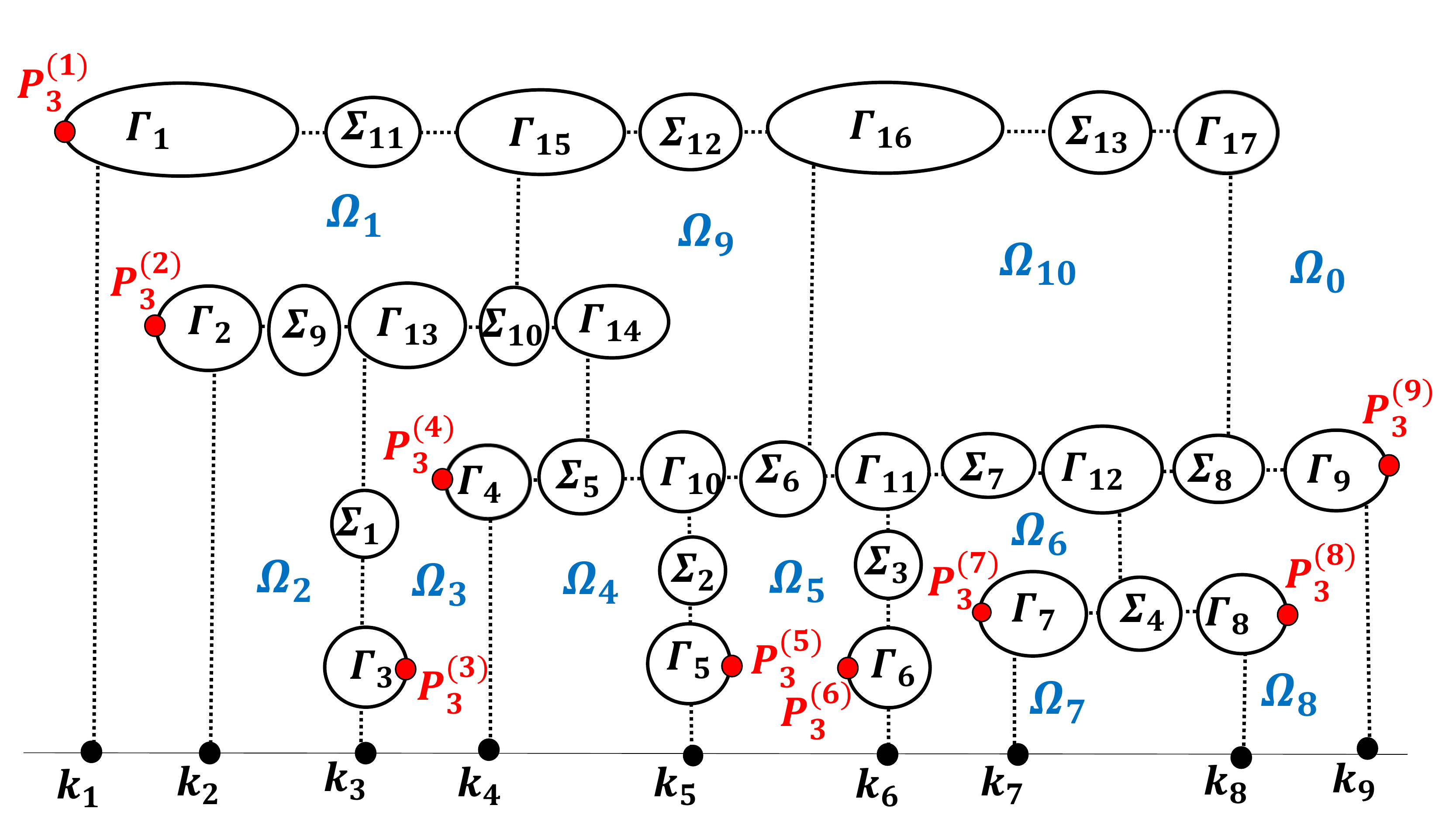}}
  \caption{\small{\sl The correspondence between the graph $\mathcal G$ and the curve $\Gamma (\mathcal G)$ for an irreducible positroid cell in $Gr^{\mbox{\tiny TNN}} (4,9)$. We mark the Darboux points $P^{(3)}_j$, $j\in [9]$.}}
	\label{fig:net_curve}
\end{figure}

\begin{remark} \textbf{Universality of the reducible rational curve $\Gamma(\mathcal G)$.}
If $\mathcal G$ is a trivalent graph representing $\mathcal S$, the construction of $\Gamma=\Gamma(\mathcal G)$ does \textbf{not} require the introduction of parameters. Therefore it provides  an \textbf{universal} curve 
$\Gamma=\Gamma({\mathcal S};\mathcal G)$ for the whole positroid cell ${\mathcal S}$. In Section \ref{sec:anycurve}, we show that the points of ${\mathcal S}$ are parametrized (in the sense of birational equivalence) by the divisor positions at the finite ovals. 

The number of copies of $\mathbb{CP}^1$ used to construct $\Gamma(\mathcal G)$ is \textbf{excessive} in the sense that the number of ovals and the KP divisor is invariant if we eliminate 
all copies of $\mathbb{CP}^1$ corresponding to bivalent vertices (see Section \ref{sec:moves_reduc} and \cite{AG3}). 
\end{remark}

\begin{remark}\label{rem:gauge_vertices}\textbf{Vertex gauge freedom of the graph} The curve $\Gamma$ is the same if we move vertices in $\mathcal G$ without changing their relative positions in the graph. Such transformation acts on edges via rotations, translations and contractions/dilations of their lenghts. We prove that such transformations do not affect the KP divisor.
\end{remark}

The curve $\Gamma(\mathcal G)$ is a partial normalization \cite{ACG} of a connected reducible nodal plane curve with $g+1$ ovals 
and is a rational degeneration of a genus $g$ smooth $\mathtt M$--curve. 

\begin{proposition}\textbf{$\Gamma(\mathcal G)$  is the rational degeneration of a smooth $\mathtt M$-curve of genus $g$.}
Let ${\mathcal K} = \{ \kappa_1 < \cdots < \kappa_n\}$ and ${\mathcal S}^{\mbox {\tiny TNN}}_{{\mathcal M}}$ be an irreducible positroid cell in $\GTNN$ corresponding to the matroid ${\mathcal M}$. Let $\Gamma=\Gamma(\mathcal G)$ be as in Construction \ref{def:gamma}. Then
\begin{enumerate}
\item $\Gamma$ possesses $g+1$ ovals which we label $\Omega_s$, $s\in [0,g]$; 
\item $\Gamma$ is the rational degeneration of a regular $\mathtt M$--curve of genus $g$.
\end{enumerate}
\end{proposition}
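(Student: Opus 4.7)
The plan is to establish both claims by combining an Euler-characteristic count of the graph $\mathcal G$ embedded in the disk with the classical dual-graph dictionary for nodal rational curves, and then to produce the required smooth $\mathtt M$-curve by a real one-parameter deformation which resolves all nodes simultaneously.

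First I would carry out the combinatorial count. Denote by $V_{\mathrm{int}}$ the number of internal vertices and by $E$ the number of edges of $\mathcal G$, so that $F=g+1$ is its number of faces by hypothesis. Viewing the closed disk as a CW complex whose $0$-cells are the $n+V_{\mathrm{int}}$ vertices of $\mathcal G$, whose $1$-cells are the $E$ edges of $\mathcal G$ together with the $n$ boundary arcs into which the boundary vertices $b_1,\dots,b_n$ divide the boundary circle, and whose $2$-cells are the $F$ faces of $\mathcal G$, Euler's formula yields
\begin{equation*}
(n+V_{\mathrm{int}})-(n+E)+F \;=\; V_{\mathrm{int}}-E+F \;=\; 1.
\end{equation*}
By Construction \ref{def:gamma} the curve $\Gamma$ has $V_{\mathrm{int}}+1$ rational irreducible components and exactly $E$ nodes (one per edge of $\mathcal G$, including the boundary edges $b_iV_i$), so its arithmetic genus is
\begin{equation*}
p_a(\Gamma)\;=\;E-(V_{\mathrm{int}}+1)+1\;=\;E-V_{\mathrm{int}}\;=\;F-1\;=\;g.
\end{equation*}

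To establish (1) I would use the planarity of $\mathcal G$ to set up an explicit bijection between its faces and the ovals of $\Gamma$. The marked points on the real circle of each component $\Gamma_V$ or $\Sigma_V$ are placed in the cyclic order mirroring the cyclic ordering of the edges at the corresponding internal vertex $V$, so the arcs cut out on that circle are in bijection with the local angular sectors of $\mathcal G$ at $V$. At each node, the gluing prescription identifies two such arcs belonging to the two faces meeting across the corresponding edge. Tracing the boundary of a face $f$ of $\mathcal G$ then assembles, at each internal vertex $V\in\partial f$, the unique arc facing into the sector of $f$, producing a single closed loop in the real part of $\Gamma$; this loop is by definition the oval $\Omega_f$ associated to $f$. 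Since every arc on every component is accounted for exactly once, the assignment $f\mapsto\Omega_f$ is a bijection, giving exactly $F=g+1$ ovals as tabulated in Table \ref{table:LeG}.

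Finally, to exhibit $\Gamma$ as a rational degeneration of a smooth $\mathtt M$-curve of genus $g$, I would open the $E$ nodes in a real analytic one-parameter family $\Gamma_\epsilon$, following the strategy of \cite{AG1,AG3}. Each real node is a transversal crossing of two real branches with local equation $xy=0$ and admits a one-dimensional real versal smoothing space parametrized by $xy=\epsilon$; the global real versal deformation is the product of the local ones, so a generic member is a smooth real curve of geometric genus $p_a(\Gamma)=g$. At each node the four incident real half-branches split into two pairs according to which of the two adjacent faces of $\mathcal G$ (equivalently, which of the two adjacent ovals) they belong to, and exactly one of the two real smoothings $\epsilon>0$, $\epsilon<0$ merges the two half-branches within each pair, leaving the two adjacent ovals disjoint. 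Making this coherent choice at every node yields a real smoothing whose real locus consists of $g+1$ disjoint smooth circles, attaining Harnack's bound and exhibiting $\Gamma_\epsilon$ as an $\mathtt M$-curve. The main obstacle here is precisely the coherent global choice of real smoothing parameters; since each choice is local and dictated by the planar face labelling of $\mathcal G$, the argument reduces to the node-by-node analysis already carried out in \cite{AG3} for the Le-graph case, and the extension to a general PBDTP graph introduces no new non-local issue.
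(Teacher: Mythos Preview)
Your proof is correct and follows essentially the same approach as the paper: both compute the arithmetic genus via an Euler count on the dual graph (the paper additionally extracts the vertex-type identities $t_W=g-k$, $t_B=g-n+k$, which are used later in the paper), and both obtain the smooth $\mathtt M$-curve by a real perturbation of the nodes that preserves the $g+1$ ovals, deferring the details to \cite{AG3}. The only cosmetic difference is that the paper phrases the genus count via a plane-curve model with $d$ lines and $N_d$ perturbed intersections, whereas you invoke the nodal-curve formula $p_a=\#\text{nodes}-\#\text{components}+1$ directly; these are the same computation.
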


\begin{proof}
The proof follows along similar lines as in \cite{AG3}, where we prove the analogous statement in the case of the Le--graph.
Let $t_W, t_B, d_W$ and $d_B$ respectively be the number of trivalent white, trivalent black, bivalent white and bivalent black internal vertices of ${\mathcal G}$. Let $n_I$ be the number of internal edges ({\sl i.e.} edges not connected to a boundary vertex) of ${\mathcal G}$. By Euler formula we have $g = n_I +n -(t_W + t_B+d_W+d_B)$.
The number of edges and the number of vertices, moreover satisfies the following identities
$3(t_W+t_B)+2(d_W+d_B) = 2n_I +n$, $2t_B+ t_W+d_W+d_B =n_I+k$.
Therefore 
\begin{equation}\label{eq:vertex_type}
t_W = g-k, \qquad t_B = g-n+k, \qquad d_W+d_B= n_I+ 2n - 3g.
\end{equation}
By definition, $\Gamma$ is represented by $d= 1+t_W + t_B+d_W+d_B =n_I+n-g+1$ complex projective lines which intersect generically in $d(d-1)/2$ double points. The regular curve is obtained keeping  the number of ovals fixed while perturbing the double points corresponding to the edges in $\mathcal G$ creating regular gaps (see \cite{AG3} for explicit formulas for the perturbations). The total number of desingularized double points, $N_d$ equals the total number of edges in $\mathcal G$: $N_d =n_I+n$. 
Then the genus of the smooth curve is given by the following formula $N_d - d+1 = g $.
\end{proof}

The regular ${\mathtt M}$-curve is the normalization of the nodal plane curve whose explicit construction follows along similar lines as in \cite{AG3}, see also \cite{Kr4}. In Figure \ref{fig:mcurveex} we show such continuous deformation for the example in Figure \ref{fig:net_curve} after the elimination of all copies of $\mathbb{CP}^1$ corresponding to bivalent vertices in the network. In this case $d=14$ ($\Gamma$ is representable by the union of two quadrics and 10 lines). Under genericity assumptions, the reducible rational $\Gamma$ has $\Delta=89$ singular points before partial normalization. The perturbed regular curve is obtained by perturbing $N_d=21$ ordinary intersection points (for each of them $\delta=1$), corresponding to the double points in the topological representation in Figure~\ref{fig:mcurveex}[left]. $\Delta-N_d = 68$ points remain intersections after this deformation and are resolved during normalization. Finally, the normalized perturbed regular curve has then genus $g=\frac{(d-1)(d-2)}{2}-\Delta+N_d=10$.

\begin{figure}%[H]
  \centering
  {\includegraphics[width=0.53\textwidth]{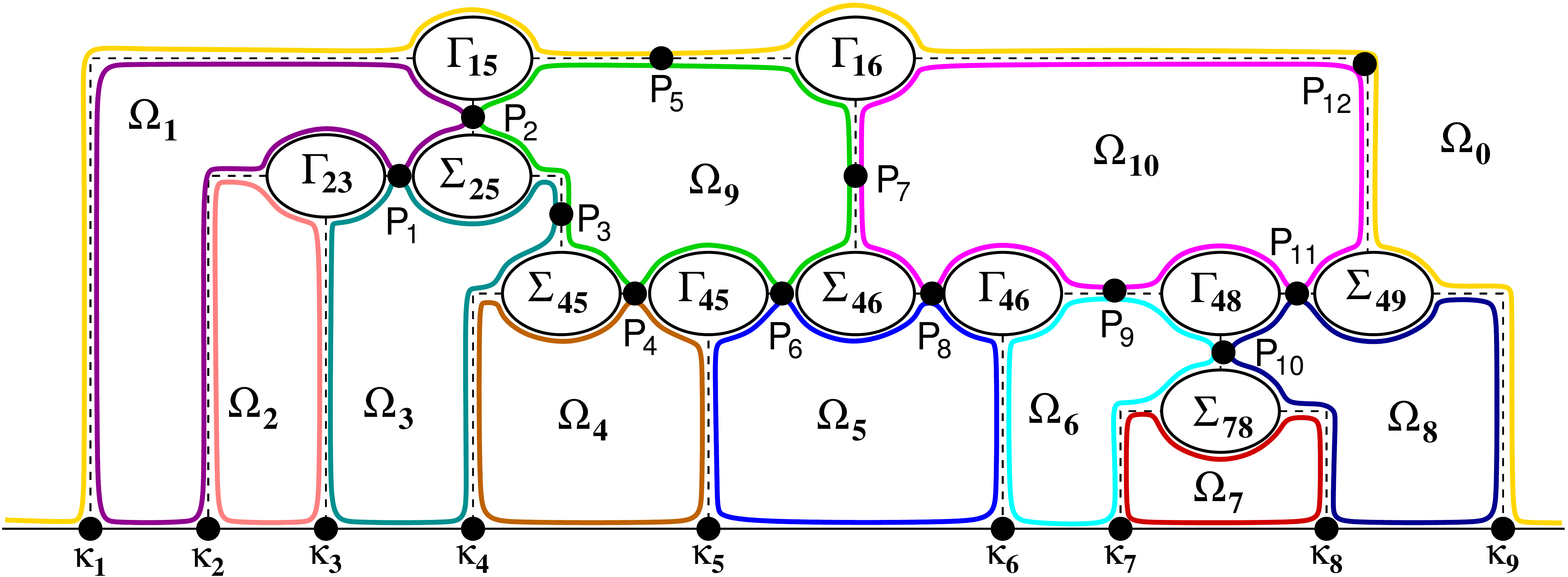}}
  \hfill
  {\includegraphics[width=0.46\textwidth]{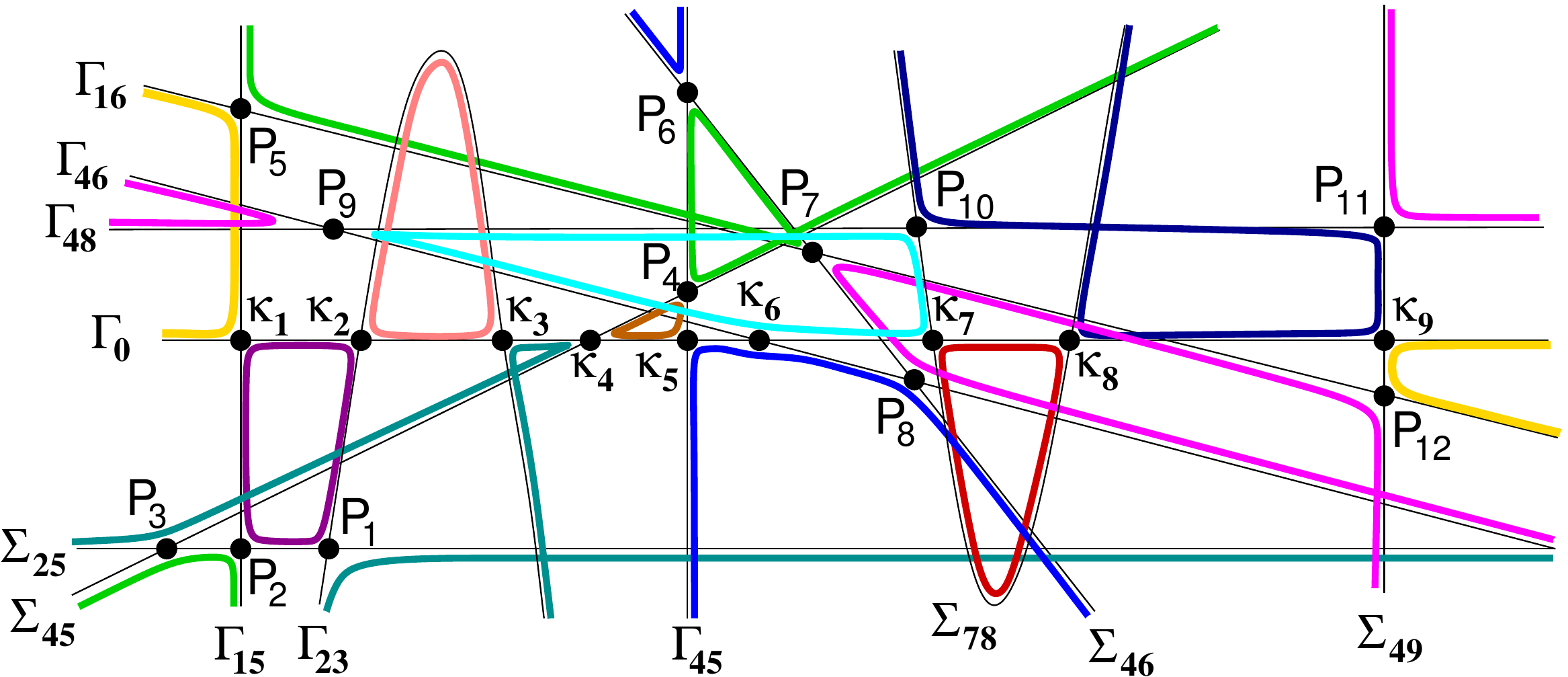}}
  \caption{\small{\sl The topological model of the oval structure of $\Gamma$ [left] is the partial normalization of the reducible plane 
nodal curve [right].  The nodal points, $P_s$, $s\in [12]$, and $\kappa_j$, $j\in [9]$, surviving the partial normalization are marked on the nodal curve [right] and are represented by dashed segments as usual on topological model [left].}\label{fig:mcurveex} }
\end{figure}

\subsection{The KP divisor on $\Gamma(\mathcal G)$ for the soliton data $({\mathcal K}, [A])$}
\label{sec:KPdiv}

Throughout this Section we fix both the soliton stratum $({\mathcal K}, \S)$, with ${\mathcal K} =\{ \kappa_1 < \cdots <\kappa_n\}$ and $\S \subset \GTNN$ an irreducible positroid cell of dimension $|D|$, and the PBDTP graph ${\mathcal G}$ in the disk representing $\S$ in Postnikov equivalence class. Let $g+1$ be the number of faces of $\mathcal G$ with $g\ge |D|$, and let $\Gamma=\Gamma(\mathcal G)$ be the curve corresponding to $\mathcal G$ as in Construction \ref{def:gamma}. 
In this Section we state the main results of our paper: 
\begin{enumerate}
\item \textbf{Construction of the KP divisor $\DKP$ on a given curve $\Gamma$ for soliton data $(\mathcal K, [A])$:}
Given the data $({\mathcal K}, \S;{\mathcal G})$
\begin{enumerate}
\item If ${\mathcal G}$ is reduced i.e. move-equivalent to the Le-graph representing $\S$, then for every $[A] \in \S$ we prove that the curve $\Gamma=\Gamma(\mathcal G)$ can be used as the spectral curve for the soliton data $(\mathcal K, [A])$ by constructing an unique degree $g$ effective real and regular KP divisor $\DKP$ and an unique real and regular KP wave function $\hat \psi(P,\vec t)$ on $\Gamma$.
\item In Section \ref{sec:vectors} we associate to any network an unique system of edge vectors satisfying appropriate boundary conditions. Then, if ${\mathcal G}$ is reducible via Postnikov moves and reductions, and there exists a network $\mathcal N$ of graph ${\mathcal G}$ representing $[A] \in \S$ such that all edge vectors on $\mathcal N$ are not-null, again the curve $\Gamma=\Gamma(\mathcal G)$ can be used as the spectral curve for the soliton data $(\mathcal K, [A])$ by constructing an unique degree $g$ effective real and regular KP divisor $\DKP$ and an unique real and regular KP wave function $\hat \psi(P,\vec t)$ on $\Gamma$.
\item If the network $\mathcal N$ of reducible graph ${\mathcal G}$ representing $[A] \in \S$ possesses null-vectors, we  distinguish two cases in Definition \ref{def:null_type_1_2}. Then, if the edges carrying null vectors are all of type 1, the above construction is still valid with minor modifications, otherwise it is necessary to modify conveniently both the graph and the curve in such a way that $\hat \psi$ be meromorphic on it. We present an example for this case in Section \ref{sec:constr_null} and plan to discuss thoroughly the problem of networks admitting null vectors in a future publication.
\end{enumerate}
\item \textbf{Invariance of $\DKP$} The construction of $\DKP$ is carried out using a 
directed network ${\mathcal N}$ representing $[A]$ of graph ${\mathcal G}$, fixing a gauge ray direction and marking Darboux points on $\Gamma$. We prove that $\DKP$ does not depend neither on the weight gauge, the vertex gauge, on the gauge ray direction and the orientation of the network nor on the position of the Darboux points. In particular, if $\mathcal G$ is a reduced graph move--equivalent to the Le--graph, we get a parametrization of $\S$ via KP divisors on $\Gamma(\mathcal G)$.
\item \textbf{Transformation laws between curves and divisors induced by Postnikov moves and reductions on networks:} Postnikov \cite{Pos} introduces moves and reductions which transform networks preserving the boundary measurement, thus classifying networks representing a given point $[A] \in \S$. In our construction, any such transformation induces a change of the spectral curve and of the KP divisor. In Section \ref{sec:moves_reduc}, we provide the explicit transformation of the divisor under the action of such moves and reductions.
\end{enumerate} 

For any given $[A]\in \S$, $\mathcal N$ denotes a network representing $[A]$ with graph $\mathcal G$ and edge weights $w_e$ \cite{Pos}. There is a fundamental difference in the gauge freedom of assigning weights depending whether or not the graph $\mathcal G$ is reduced.

\begin{remark}\label{rem:gauge_weight}\textbf{The weight gauge freedom \cite{Pos}.} Given a point $[A]\in\S$
and a planar directed graph ${\mathcal G}$ in the disk representing $\S$, then $[A]$
 is represented by infinitely many gauge equivalent systems of weights $w_e$ on the edges $e$ of ${\mathcal G}$. Indeed, if a positive number $t_V$ is assigned to each internal vertex $V$, whereas $t_{b_i}=1$ for each boundary vertex $b_i$, then the transformation on each directed edge $e=(U,V)$
\begin{equation}
\label{eq:gauge}
w_e\rightarrow w_e t_U \left(t_V\right)^{-1},
\end{equation}
transforms the given directed network into an equivalent one representing $[A]$. Using the weight gauge freedom, in the following we assume that edges at boundary vertices carry unit weights.
\end{remark}

\begin{remark}\label{rem:gauge_freedom}\textbf{The unreduced graph gauge freedom.} As it was pointed out in \cite{Pos}, in case of unreduced graphs there is no one-to-one correspondence between the orbits of the gauge weight action (\ref{eq:gauge}) for a fixed directed graph and the points in the corresponding positroid cell. Since we do not consider graphs with components isolated from the boundary, this extra gauge freedom arises if we apply the creation of parallel edges and leafs (see Section~\ref{sec:moves_reduc}). In contrast with gauge transformations of the weights (\ref{eq:gauge}) the unreduced graph gauge freedom affects the KP divisor.
\end{remark}

Throughout the paper, we assign affine coordinates to each component of $\Gamma({\mathcal G})$ using the orientation of the graph $\mathcal G$, and we use the same symbol $\zeta$  for any such affine coordinate to simplify notations.

\begin{figure}%[H]
  \centering
  {\includegraphics[width=0.60\textwidth]{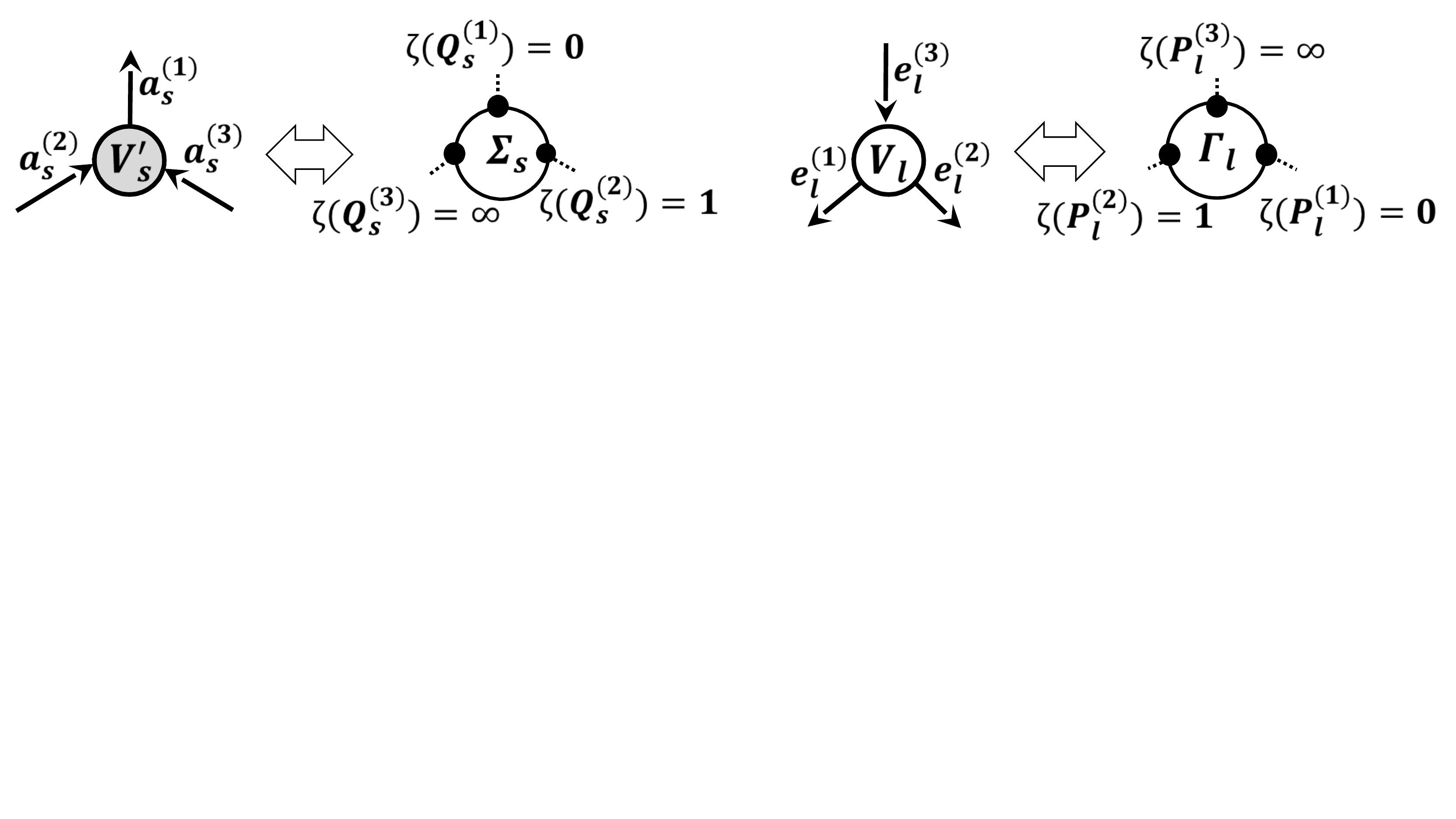}}
	\vspace{-3.8 truecm}
  \caption{\small{\sl Local coordinates on the components $\Gamma_{l}$ and $\Sigma_{s}$ (reflection w.r.t. the vertical line): the marked point $P^{(m)}_{l}\in \Gamma_{l}$ corresponds to the edge $e^{(m)}_{l}$ at the white vertex $V_{l}$ and the marked point $Q^{(m)}_{s}\in \Sigma_{r}$ corresponds to the edge $a^{(m)}_{s}$ at the black vertex $V_{s}^{\prime}$.}}
	\label{fig:lcoord}
\end{figure}

\begin{definition}\label{def:loccoor}{\bf Affine coordinates on $\Gamma(\mathcal G)$}
On each copy of $\mathbb{ CP}^{1}$ the local coordinate $\zeta$ is uniquely identified by the following properties: 
\begin{enumerate}
\item On $\Gamma_0$, $\zeta^{-1} (P_0)=0$ and $\zeta(\kappa_1)< \cdots < \zeta(\kappa_n)$. To abridge notations, we identify the $\zeta$--coordinate with the marked points $\kappa_j=\zeta(\kappa_j)$, $j\in [n]$;
\item On the component $\Gamma_{l}$ corresponding to the internal white vertex $V_{l}$:
\[
\zeta (P^{(1)}_{l}) =0, \;\;\zeta (P^{(2)}_{l}) =1, \;\;\zeta (P^{(3)}_{l}) =\infty,
\]
while on the component $\Sigma_{s}$ corresponding to the internal black vertex $V^{\prime}_{s}$:
\[
\zeta (Q^{(1)}_{s}) =0, \;\;\zeta (Q^{(2)}_s) =1, \;\;\zeta (Q^{(3)}_{s}) =\infty.
\] 
\end{enumerate}
\end{definition}

We illustrate Definition \ref{def:loccoor} in Figure~\ref{fig:lcoord} (see also Figure~\ref{fig:markedpoints}).

In view of Definition \ref{def:rrss}, the desired properties of the KP divisor and of the KP wave function on $\Gamma(\mathcal G)$ for given soliton data $(\mathcal K, [A])$ are the following.

\begin{definition}\label{def:real_KP_div}\textbf{Real regular KP divisor compatible with $[A] \in {\mathcal S}_{\mathcal M}^{\mbox{\tiny TNN}}$.}  
Let $\Omega_0$ be the infinite oval containing the marked point $P_0\in\Gamma_0$ and let $\Omega_s$, $s\in[g]$ be the finite ovals of $\Gamma$. Let $\DS=\DS (\vec t_0)$ be the Sato divisor for the soliton data $(\mathcal K, [A])$.
We call a degree $g$ divisor $\DKP\in\Gamma\backslash \{ P_0\}$ a real and 
regular KP divisor compatible with $(\mathcal K, [A])$ if:
\begin{enumerate}
\item $\DKP\cap \Gamma_0 = \DS$; 
\item There is exactly one divisor point on each component of $\Gamma$ corresponding to a trivalent white vertex; 
\item\label{item:defodd_KP} In any finite oval $\Omega_s$, $s\in [g]$, there is exactly one divisor point;
\item\label{item:defeven_KP} In the infinite oval $\Omega_0$, there is no divisor point.
\end{enumerate}
\end{definition}

\begin{definition}\label{def:KPwave}{\bf A real regular KP wave function on $\Gamma$ corresponding to $\DKP$:}  
Let $\DKP$ be a degree $g$ real regular divisor on $\Gamma$ satisfying Definition~\ref{def:real_KP_div}.
A function ${\hat \psi} (P, \vec t)$, where $P\in\Gamma\backslash \{ P_0\}$ and $\vec t$ are the KP times, is called a real and regular KP wave function on $\Gamma$ corresponding to $\DKP$ if:
\begin{enumerate}
\item $\hat \psi (P, \vec t_0)=1$ at all points $P\in \Gamma\backslash \{P_0\} $;
\item The restriction of $\hat \psi$ to $\Gamma_0\backslash \{P_0\}$ coincides with the normalized Sato wave function defined in (\ref{eq:SatoDN}): ${\hat \psi } (P, \vec t)  = \frac{\psi^{(0)} (P; \vec t)}{\psi^{(0)} (P; \vec t_0)}$; 
\item For all $P\in\Gamma\backslash\Gamma_0$ the function ${\hat \psi} (P, \vec t)$ satisfies all equations (\ref{eq:dress_hier}) of the dressed hierarchy;
\item If both $\vec t$ and $\zeta(P)$ are real, then ${\hat \psi} (\zeta(P), \vec t)$ is real. Here $\zeta(P)$ is the local affine coordinate on the corresponding component of $\Gamma$ as in Definition~\ref{def:loccoor};
\item\label{it:comp} $\hat \psi$ takes equal values at pairs of glued points $P,Q\in \Gamma$, for all $\vec t$:  $\hat \psi(P, \vec t) = \hat \psi(Q, \vec t)$;
\item For each fixed $\vec t$ the function $\hat \psi(P, \vec t)$ is meromorphic of degree $\le g$ in $P$ on $\Gamma\backslash \{ P_0\}$: for any fixed $\vec t$ we have $({\hat \psi} (P, \vec t))+\DKP\ge 0$ on $\Gamma\backslash P_0$, where $(f)$ denotes the divisor 
of $f$. Equivalently, for any fixed $\vec t$ on $\Gamma\backslash \{ P_0\}$ the function $\hat \psi(\zeta, \vec t)$ is regular outside the points of 
$\DKP$ and at each of these points either it has a first order pole or is regular;
\item For each $P\in\Gamma\backslash \{ P_0\}$ outside  $\DKP$  the function $\hat \psi(P, \vec t)$ is regular in $\vec t$ for all times.
\end{enumerate}
\end{definition}

\begin{theorem}\label{theo:exist}\textbf{Existence and uniqueness of a real and regular KP divisor and KP wave function 
on $\Gamma$.}
Let the phases ${\mathcal K}$, the irreducible positroid cell $\S\subset \GTNN$, the PBDTP graph $\mathcal G$ representing $\S$ be fixed. Let $\Gamma=\Gamma(\mathcal G)$ with marked point $P_0\in \Gamma_0$ be as in Construction \ref{def:gamma}. 

If $\mathcal G$ is reduced and equivalent to the Le--network via a finite sequence of moves of type (M1), (M3) and flip moves (M2), there are no extra conditions and, for any $[A]\in \S$, let $\mathcal N$ of graph $\mathcal G$ be a network representing $[A]$. 
Otherwise if $\mathcal G$ is reducible, let $[A]\in \S$ such that there exists a network $\mathcal N$ of graph $\mathcal G$ representing $[A]$ not possessing null edge vectors.

Then, there exists an initial time $\vec t_0$ such that, to the following data $({\mathcal K}, [A]; \Gamma, P_0; \mathcal N; \vec t_0 )$, we associate
\begin{enumerate}
\item An \textbf{unique} real regular degree $g$ KP divisor $\DKP$ as in Definition \ref{def:real_KP_div};
\item An \textbf{unique} real regular KP wave function $\hat \psi(P, \vec t)$ corresponding to this divisor satisfying Definition~\ref{def:KPwave}.
\end{enumerate}
Moreover, $\DKP$ and $\hat \psi(P, \vec t)$ are both invariant with respect to changes of the vertex, the weight and the ray direction gauges and of the orientation of the graph $\mathcal G$.
\end{theorem}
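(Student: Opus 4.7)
The plan is to assemble all the pieces developed in Sections \ref{sec:vectors}--\ref{sec:lam1}: edge vectors give a wave function on the network, the wave function extends component by component to $\Gamma$, and the extension produces an effective divisor of the right degree in the right ovals.

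First I would fix a perfect orientation $\mathcal O$ of $\mathcal G$, a gauge ray direction $\mathfrak l$ and a system of positive edge weights representing $[A]$. Applying Theorem \ref{theo:null} I obtain, for every edge $e$, an edge vector $E_e$ whose components are rational subtraction-free denominator expressions in the weights. By hypothesis no $E_e$ is null, so Theorem \ref{theo:consist} asserts that the linear system at internal vertices has maximal rank for any choice of boundary data at the sinks. Feeding the Sato wave function at the boundary edges into this system, Construction \ref{def:vvw_gen} yields a dressed edge wave function $\Psi(\vec t)$; I pick $\vec t_0$ so that $\Psi(\vec t_0)$ is nonzero on every edge (possible since the components of each $E_e$ are not identically zero in $\vec t$ and the vanishing loci are real-analytic) and define $\hat\Psi=\Psi/\Psi(\vec t_0)$. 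This $\hat\Psi$ automatically equals the normalized Sato wave function on the edges attached to the boundary.

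Next I extend $\hat\Psi$ to each component of $\Gamma$. On $\Gamma_0$ it is Sato's function by construction. At every bivalent vertex and every trivalent black vertex the three edge values of $\hat\Psi$ coincide (from the perfectness condition and the relation satisfied at black vertices), so the natural extension to $\Gamma_V$ is the constant function in $P$: no poles, compatibility at all double points of $\Gamma_V$ is automatic. At a trivalent white vertex $V$, the three edge values are generically distinct and the relation among them forces the unique meromorphic extension of degree one on $\Gamma_V \simeq \mathbb{CP}^1$, whose pole has local affine coordinate $\gamma_V$ given by Definition \ref{def:vac_div_gen}. The compatibility at double points of $\Gamma_V$ then holds by construction for all $\vec t$, which is property (\ref{it:comp}) in Definition \ref{def:KPwave}. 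The remaining items (reality, hierarchy equations on $\Gamma\setminus\Gamma_0$, regularity in $\vec t$ outside the divisor) follow because $\hat\Psi$ is real, rational in real weights and real exponentials $e^{\theta_j}$, and solves (\ref{eq:dress_hier}) on $\Gamma_0$ while being $P$-independent or degree-one on the other components.

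For the divisor itself I set $\DKP=\DS\cup\{P_V\colon V \text{ white trivalent}\}$. The degree is $k+t_W=k+(g-k)=g$ by (\ref{eq:vertex_type}). Reality is immediate since each $\gamma_V$ is a ratio of real edge-vector components. The nontrivial step, which I expect to be the main obstacle, is showing that exactly one point of $\DKP$ lies in each finite oval and none lies in $\Omega_0$: this is where the combinatorics of the signatures and of the indices of Section \ref{sec:position} must be invoked. Concretely I would apply Theorem \ref{theo:pos_div} and Corollary \ref{cor:eps_tot} to locate each $P_V$ in one of the three ovals bordering $\Gamma_V$, and Theorem \ref{prop:comb_oval} to conclude the one-to-one matching between the $g$ divisor points and the $g$ finite ovals. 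The subtlety is that this count must be carried out oval by oval using the signatures induced on half-edges by $(\mathcal O,\mathfrak l)$, checking that positivity of the weights forces the right sign pattern along the boundary of every face.

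Finally, uniqueness of $\hat\psi$ given $\DKP$ follows from the standard dimension count for meromorphic functions of degree $g$ on a rational degeneration of a genus $g$ curve with the prescribed behaviour at $P_0$, together with the matching of Sato data on $\Gamma_0$; uniqueness of $\DKP$ follows because the $\gamma_V$ are determined by the edge-vector ratios, which are uniquely fixed by Theorem \ref{theo:consist}. Gauge and orientation invariance of $\hat\psi$ (and therefore of $\DKP$) is then Proposition \ref{prop:change_orient} combined with the explicit transformation formulas for edge vectors recorded in Sections \ref{sec:gauge_ray}, \ref{sec:orient} and \ref{sec:different_gauge}, which show that all changes rescale $\Psi(\vec t)$ edge-wise by factors that cancel in $\hat\Psi=\Psi/\Psi(\vec t_0)$ and hence leave the divisor points $\gamma_V$ invariant.
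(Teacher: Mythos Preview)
Your proposal is correct and follows essentially the same route as the paper: construct the edge vectors and dressed edge wave function on $(\mathcal N,\mathcal O,\mathfrak l)$, normalize at a suitable $\vec t_0$, extend component by component to $\Gamma$ (constant on black/bivalent components, degree one on trivalent white components with pole at $\gamma_V$), assemble $\DKP$ from $\DS$ and the $P_V$'s, and then invoke the combinatorial oval count (Theorems \ref{theo:pos_div} and \ref{prop:comb_oval}) together with the invariance results (Corollaries \ref{cor:indep_gauge}, \ref{cor:indep_orient}, Theorem \ref{theo:inv}, Proposition \ref{prop:gauge}). The only point worth sharpening is the orientation-invariance of the divisor: the paper does not argue this purely via cancellation in $\hat\Psi$, but shows in Theorem \ref{theo:inv} that the induced change of local coordinate on each $\Gamma_l$ matches exactly the transformation of $\gamma_V$ given by Corollary \ref{cor:indep_orient}, so that the divisor \emph{point} (not just the normalized wave function) is fixed.
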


In Section \ref{sec:anycurve} we explicitly construct both the divisor and the wave function for the soliton data $(\mathcal K, [A])$ and we prove that the divisor is contained in the union of the ovals. The construction is carried out explicitly fixing the orientation of the network and the ray direction, the vertex and the weight gauges. In Section \ref{sec:inv} we prove that the KP divisor is independent both on the chosen orientation and such gauges. In Section \ref{sec:comb}, we combinatorially complete the proof of Theorem \ref{theo:exist} by showing that there is exactly one KP divisor point in each finite oval. 

In \cite{AG3} we have constructed a KP divisor and a KP wave function in the special case where $\mathcal G$ is the Le--graph acyclically oriented w.r.t. the lexicographically minimal base of $\mathcal M$. In that case, we use a recursive procedure to define a system of edge vectors on the Le--network which rules the value of both the vacuum and the dressed wave functions at the double points of $\Gamma$. 
Here we extend such construction to any directed network ${\mathcal N}$ representing $[A]$. In Section \ref{sec:vectors} we prove that, for any given gauge ray direction associated to the oriented network, we obtain an \textbf{unique} system of edge vectors satisfying appropriate boundary conditions. In such Section, we give explicit expressions for their components and
explain their dependence on changes of orientation and of ray direction, weight and vertex gauges. 

If all edge vectors on the PBDTP network $\mathcal N$ are not null, then the \textbf{normalized} dressed edge wave function is well defined at the double points of $\Gamma$ (Section \ref{sec:vertex_wavefn_general_case}).  
Then, using the linear relations at the vertices, we assign a dressed network divisor number to any trivalent white vertex $V_l$ in Definition \ref{def:vac_div_gen}. 
The dressed network divisor number is the local coordinate of the KP divisor point $\Pdr_l \in \Gamma_l$ and the position of $\Pdr_l$ is independent on the orientation of the network, on the gauge ray direction, on the weight gauge and on the vertex gauge (Section \ref{sec:inv}). This set of divisor points has degree $g-k$ equal to the number of trivalent white vertices in the PBDTP graph $\mathcal G$.

$\DKP$ is defined as the sum of the degree $k$ Sato divisor $\DS$ 
and of the degree $g-k$ divisor contained in the union of the components $\Gamma_l$.
By definition, $\DKP$ is contained in the union of the ovals of $\Gamma$ and the normalized wave function may be meromorphically extended so that, for any fixed $\vec t$, we have $({\hat \psi} (P, \vec t))+\DKP\ge 0$ on $\Gamma\backslash P_0$.

\smallskip

Finally, in Section \ref{sec:moves_reduc} we explain how edge vectors are transformed under the action of Postnikov moves and reductions. As a consequence of Theorems \ref{theo:exist} and \ref{thm:null_acyclic} and Proposition \ref{prop:Le_net} we get the first statement in the following Corollary. The second part follows from the explicit characterization of the effect of moves and reductions on the wave function and the divisor carried in Section \ref{sec:moves_reduc}.

\begin{corollary}\label{cor:param}
Under the hypotheses of Theorem \ref{theo:exist}:
\begin{enumerate}
\item \textbf{Parametrization of $\S$ via KP divisors:}
If the PBDTP graph $\mathcal G$ representing $\S$ is reduced and equivalent to the Le--graph via a finite sequence of moves (M1), (M3) and flip moves (M2), then for any fixed
$\vec t_0$, there is a local one-to-one correspondence between KP divisors on $\Gamma (\mathcal G)$ and points $[A]\in \S$. 
\item \textbf{Discrete transformation between curves and divisors induced by moves and reductions:}
Let ${\mathcal G}$ and ${\mathcal G}^{\prime}$ be two PBDTP graphs equivalent by a finite sequence of moves and reductions for which Theorem \ref{theo:exist} holds true for the same $\vec t_0$, then there is an explicit transformation of the KP divisor on
$\Gamma(\mathcal G)$ to the KP divisor on $\Gamma({\mathcal G}^{\prime})$. 
\end{enumerate}
\end{corollary}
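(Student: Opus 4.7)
The plan is to derive the Corollary directly from Theorem \ref{theo:exist}, combined with the fact that reduced PBDTP graphs in the move--equivalence class of the Le--graph admit networks without null edge vectors (Theorem \ref{thm:null_acyclic} and Proposition \ref{prop:Le_net}), and with the explicit local transformation rules for Postnikov moves and reductions to be worked out in Section \ref{sec:moves_reduc}.

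For the first statement, I would first observe that when $\mathcal G$ is reduced and equivalent to the Le--graph via moves (M1), (M3) and flips (M2), Theorem \ref{thm:null_acyclic} together with Proposition \ref{prop:Le_net} guarantees that for any $[A]\in\S$ one can find a network $\mathcal N$ with graph $\mathcal G$ representing $[A]$ whose edge vectors are all nonzero. Hence Theorem \ref{theo:exist} applies and produces a well-defined map sending $[A]\in\S$ to the corresponding real regular divisor $\DKP([A])$ on $\Gamma(\mathcal G)$. Since $\mathcal G$ is reduced, $g=|D|=\dim\S$, and the target space (Sato divisor on $\Gamma_0$ prescribed, one divisor point on each of the $g-k$ components associated to trivalent white vertices) is a real manifold of the same dimension $g$. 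By Theorem \ref{theo:null} the divisor coordinates depend rationally on the edge weights, so the map is analytic. Local bijectivity then follows from the implicit function theorem once one checks that the Jacobian is nondegenerate at a generic $[A]$; this is achieved by constructing a local inverse from the dressing data, which are determined by the Sato divisor together with the values of the normalized dressed wave function at the double points of $\Gamma$, and the latter are encoded in the positions of the divisor points on the components $\Gamma_l$ via the linear relations at trivalent white vertices.

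For the second statement, I would exploit the locality of Postnikov moves and reductions. Each elementary transformation affects only a small number of faces and internal vertices of $\mathcal G$ and therefore induces a purely local modification of $\Gamma(\mathcal G)$, replacing a few copies of $\mathbb{CP}^1$ by the copies prescribed by the new graph. By the invariance of $\DKP$ with respect to weight and vertex gauges (Theorem \ref{theo:exist}), the divisor points associated to vertices outside the region affected by the move remain unchanged under the canonical identification between the unaffected parts of $\Gamma(\mathcal G)$ and $\Gamma(\mathcal G')$. For each of the elementary transformations -- parallel edge unreduction, flip move, middle vertex removal, dipole reduction, leaf reduction -- I would apply Theorem \ref{theo:null} locally to recompute the edge vectors on the new network and read off the new divisor coordinates from the linear relations at the new trivalent white vertices. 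Composing these elementary rules along any finite sequence of moves and reductions connecting $\mathcal G$ to $\mathcal G'$ produces the desired explicit transformation law.

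The main obstacle is verifying local injectivity in part (1). Although dimensions match, the map $[A]\mapsto\DKP([A])$ degenerates along the subvariety of $\S$ where the divisor becomes special, namely where two divisor points collide at a double point of $\Gamma$. The explicit analysis of $Gr^{\mbox{\tiny TP}}(1,3)$ in Section \ref{sec:global} already shows that injectivity fails globally, so the qualifier ``local'' cannot be dropped. The technical fix is to choose the normalization time $\vec t_0$ so that no collision occurs at the fixed $[A]$ and then restrict to a sufficiently small open neighborhood, on which the Jacobian remains nondegenerate; a systematic resolution of the exceptional collision locus is deferred to future work, in line with the remark in the introduction about singular parametrizations.
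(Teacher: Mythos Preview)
Your proposal is correct and follows essentially the same approach as the paper: the paper derives part (1) directly from Theorems \ref{theo:exist}, \ref{thm:null_acyclic} and Proposition \ref{prop:Le_net}, and part (2) from the explicit move and reduction rules developed in Section \ref{sec:moves_reduc}. You supply more detail than the paper does on the local bijectivity argument (dimension count, rational dependence via Theorem \ref{theo:null}, implicit function theorem), which the paper leaves implicit; your identification of the collision locus as the obstruction to global injectivity is also consistent with the paper's remarks and the $Gr^{\mbox{\tiny TP}}(1,3)$ example.
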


\begin{remark}\textbf{Global parametrization ia KP divisor or positroid cells} We claim that if, for a given graph $\mathcal G$ representing a given positroid cell $\S$, and any point $[A]\in S$ there is a network $\mathcal N$ of graph $\mathcal G$ not possessing null vectors, it is possible to show that the KP divisors provide a global parametrization of $\S$ after applying some blow-ups in all cases where some of the divisor points occur at double points. We claim that this occurs also when the network only carries null edge vectors of type 1, whereas in the case of null edge vectors of type 2 it is necessary to modify both the network and the divisor. We plan to discuss thoroughly this issue in a future publication. For some examples see Sections \ref{sec:global} and \ref{sec:constr_null}, and also Figures \ref{fig:Gr24_top_D2_null} and \ref{fig:Gr24_top_D3_null}.
\end{remark}

\smallskip

In contrast to \cite{AG1,AG3}, the proof of Theorem~\ref{theo:exist} does not require the technical step of constructing and characterizing a vacuum divisor on $\Gamma$. Nevertheless, as a byproduct of the construction carried in the following Sections, we also get a vacuum divisor with properties similar to those established in \cite{AG3} in the case of the Le-network. We remark that the vacuum and the KP divisors have different properties: the vacuum divisor depends both on the base $I$ which rules the orientation of the network and on the choice of the position of the Darboux points, whereas the KP divisor is invariant with respect to both.

\begin{theorem}\label{theo:vac_div}\textbf{Characterization of the vacuum divisor $\DVG$.} 
Let ${\mathcal K}$, $[A]\in \S\subset \GTNN$, $\mathcal G$, $\mathcal N$, $\vec t_0$, $\Gamma=\Gamma(\mathcal G)$ with marked point $P_0\in \Gamma_0$, be as in Theorem \ref{theo:exist}. 

Let $I$ be the base in $\mathcal M$ associated to the orientation of $\mathcal G$ and $\mathcal N^{\prime}$ be the modified network as in Construction \ref{con:Nprime} with Darboux points $P^{(3)}_j\in \Gamma_j$, $j\in [n]$, where $P^{(3)}_j$ is a Darboux source point if $j\in I$.

Then, there exists an unique real and regular degree $g$ vacuum divisor $\DVG$ associated to the data $({\mathcal K}, [A]; I, \Gamma, P_0, P^{(3)}_{j}, j\in I ; {\mathcal N}; \vec t_0 )$ with the following properties:
\begin{enumerate}
\item $\DVG$ is contained in the union of all the ovals of $\Gamma$; 
\item There is exactly one divisor point on each component of $\Gamma$ corresponding to a trivalent white vertex or a bivalent white vertex $\Gamma_j$, $j\in I$, containing a Darboux source point; 
\item\label{item:defodd} In any finite oval $\Omega_s$, $s\in [g]$, the total number of vacuum divisor poles plus the number of Darboux source points is $1 \,\, (\!\!\!\!\mod 2)$;
\item\label{item:defeven} In the infinite oval $\Omega_0$, the total number of vacuum divisor poles plus the number of Darboux source points plus $k$ is $0 \,\, (\!\!\!\!\mod 2)$.
\end{enumerate}
\end{theorem}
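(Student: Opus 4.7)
The plan is to mirror the construction of $\DKP$ developed in the rest of the paper but applied to the \emph{undressed} (vacuum) wave function $\phi^{(0)}(\zeta,\vec t)=e^{\theta(\zeta,\vec t)}$ evaluated along $\mathcal N'$, the modified network with Darboux points inserted at the bivalent vertices $V_j$, $j\in I$. The role played in the KP case by the edge vectors constructed in Section~\ref{sec:vectors} is now played by edge vectors whose components are the values of $\phi^{(0)}$ at the phases $\kappa_1,\dots,\kappa_n$ weighted by sums over directed walks to the boundary; these are rational in the edge weights with subtraction-free denominators (by the adaptation of \cite{Pos,Tal2} established in Theorem~\ref{theo:null}), so everything needed for existence, uniqueness, and reality of the vacuum edge wave function is already in place.

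Concretely, first I would pull back the dressing-free analogue of the edge vectors to each component $\Gamma_V$ of $\Gamma$ and define the vacuum edge wave function $\Phi(\vec t)$ at each double point as the value of $\phi^{(0)}$ dictated by the corresponding edge vector; at each trivalent or bivalent black vertex and at bivalent white vertices not carrying a Darboux point the linear relation forces a constant extension to a regular function on $\Gamma_V$, while at each trivalent white vertex $V_l$ and each bivalent white vertex $V_j$ with $j\in I$ the three incident values of $\Phi$ generically differ, so the unique degree-one meromorphic extension to $\Gamma_V$ has one pole, whose affine coordinate $\gvac$ gives the vacuum divisor point. This produces a divisor of total degree exactly $t_W + |I| = (g-k)+k=g$ by the vertex counts in~\eqref{eq:vertex_type}, and the subtraction-free form of the denominators together with Proposition~\ref{prop:malanyuk} gives reality and guarantees that each $\gvac$ lies in the real part of its component, hence on the boundary of some oval, for almost every $\vec t$. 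Existence and uniqueness for generic $\vec t_0$ follow since the linear system at the internal vertices has maximal rank (Theorem~\ref{theo:consist}).

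The hard part is items \eqref{item:defodd} and \eqref{item:defeven}, the parity statement on the number of divisor plus Darboux points per oval. The approach is combinatorial: walk around the boundary of a given oval $\Omega_s$ from any starting double point back to itself, and track the sign of the vacuum wave function along the walk. At each transition between two components across a double point the sign of $\Phi$ is preserved, while inside each component $\Gamma_V$ with a pole at $\gvac$ the sign flips exactly when the real walk crosses $\gvac$, and similarly it flips when the walk passes a Darboux point $P^{(3)}_j$ with $j\in I$ (because on the boundary component $\Gamma_0$ the sign of $\phi^{(0)}$ flips an odd number of times precisely when an odd number of marked $\kappa_j$ with $j\in I$ lie on the traversed arc). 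Since the walk returns to its starting point with the original sign, the total number of sign flips around $\partial\Omega_s$ must be even, which forces the count (vacuum poles in $\Omega_s$) + (Darboux source points on $\partial\Omega_s$) $\equiv 0 \pmod 2$ for finite ovals once one accounts correctly for the infinite oval's contribution. The $+k$ correction in item~\eqref{item:defeven} arises because the vacuum wave function has the asymptotic $\zeta^k$ behavior on $\Gamma_0$ near $P_0$ (cf.\ \eqref{eq:Satowf} and Proposition~\ref{prop:malanyuk}), contributing $k$ extra sign flips in the infinite oval.

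Finally, to exclude the possibility of zero or two divisor points in a single oval (rather than just parity constraints), I would invoke the total degree count $g$ together with the fact that there are exactly $g+1$ ovals: any excess of divisor points in one oval would force another oval to violate the parity condition just established, contradicting the combinatorial identity. This Euler-characteristic style argument, entirely analogous to the one carried out for the Le-network in \cite{AG3} and to the KP-divisor case treated in Theorem~\ref{prop:comb_oval}, upgrades the mod-2 statement to the sharp counting claim, completing the characterization of $\DVG$.
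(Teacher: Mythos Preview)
Your construction in the first two paragraphs matches the paper's: the vacuum edge wave function and its meromorphic extension are exactly Construction~\ref{con:vac_gen}, and the degree count $t_W+k=(g-k)+k=g$ is correct. The parity argument in your third paragraph, however, contains several concrete errors. First, the sign of the unnormalized $\Phi$ is \emph{not} preserved across a double point: the linear relations (\ref{eq:lin_Phi1})--(\ref{eq:lin_Phi3}) carry factors $(-1)^{\mbox{int}(e)+\mbox{wind}(e,f)}$, so the sign may flip at each vertex transition; if instead you track the normalized $\hat\phi$, it equals $1$ identically at $\vec t_0$ and there is nothing to count. Second, $\phi^{(0)}(\zeta,\vec t)=e^{\theta(\zeta,\vec t)}>0$ for all real $\zeta$, so the vacuum wave function never changes sign on $\Gamma_0$ and your claim about sign flips at Darboux points is false. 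Third, your $+k$ explanation invokes \eqref{eq:Satowf}, which concerns the \emph{dressed} wave function; the vacuum $\phi^{(0)}$ has no $\zeta^k$ factor. Most seriously, your argument as stated yields even parity for finite ovals, while the theorem asserts odd; the phrase ``once one accounts correctly for the infinite oval's contribution'' hides precisely the content that needs proof.

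The paper's proof (Section~\ref{sec:comb}, culminating in \eqref{eq:odd_N_2}) is the rigorous version of sign-tracking: it decomposes the sign behavior at each pair of edges into four indices $\epsilon_{cd},\epsilon_{cs},\epsilon_{int},\epsilon_{wind}$ (Definition~\ref{def:index_pair}) and proves in Lemma~\ref{lemma:count_face} that, summed around a finite oval, the winding-plus-change-of-direction contribution is $1\bmod 2$ (this is where the topology of a simple closed curve enters and supplies the odd parity you are missing), the intersection contribution cancels the Darboux-source count modulo $2$, and the change-of-sign contribution is $\rho_s=0$ in the vacuum case since $e^\theta>0$ on $\Gamma_0$. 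For the infinite oval the winding sum is $0\bmod 2$, and the $k$ appears in the intersection identity (second line of \eqref{eq:eps_face_0}) because there are $k$ gauge rays, not because of any asymptotic of $\phi^{(0)}$. Combining these via Lemma~\ref{lemma:count_eps} and Theorem~\ref{theo:pos_div} gives \eqref{eq:odd_N_2}. Finally, your last paragraph is unnecessary: the theorem asserts only parities, not that each finite oval contains exactly one vacuum divisor point.
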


\section{Systems of vectors on PBDTP networks}\label{sec:vectors}

In this Section we construct systems of vectors on the edges $e$ of any given PBDTP network ${\mathcal N}$ in the disk (in the following we call it just network), representing 
a given point $[A]\in{\mathcal S}^{\mbox{\tiny{TNN}}}_{\mathcal M}\subset \GTNN$ in Postnikov classification \cite{Pos}. $\mathcal G$ denotes the graph of $\mathcal N$. 
We associate vectors to edges because the latter correspond to the double points of $\Gamma(\mathcal G)$. Using such vectors, in Section \ref{sec:anycurve} we construct a normalized dressed edge wave function $\hat \Psi_{{\mathcal N}} (\vec t)$ and extend it to a meromorphic function on $\Gamma(\mathcal G)\backslash \{P_0\}$ in Section \ref{sec:inv}. Therefore, in our construction, requirement (\ref{it:comp}) in Definition \ref{def:KPwave} is automatically satisfied. 

We assign the $j$-th vector of the canonical basis to any boundary vertex $b_j$ and define the $j$-th component of the vector $E_e$ at the edge $e\in \mathcal N$ as the sum of the contributions of all directed paths from $e$ to $b_j$. The absolute value of the contribution of one such directed path is equal to the product of the weights on this path. 

\begin{figure}
  \centering{\includegraphics[width=0.65\textwidth]{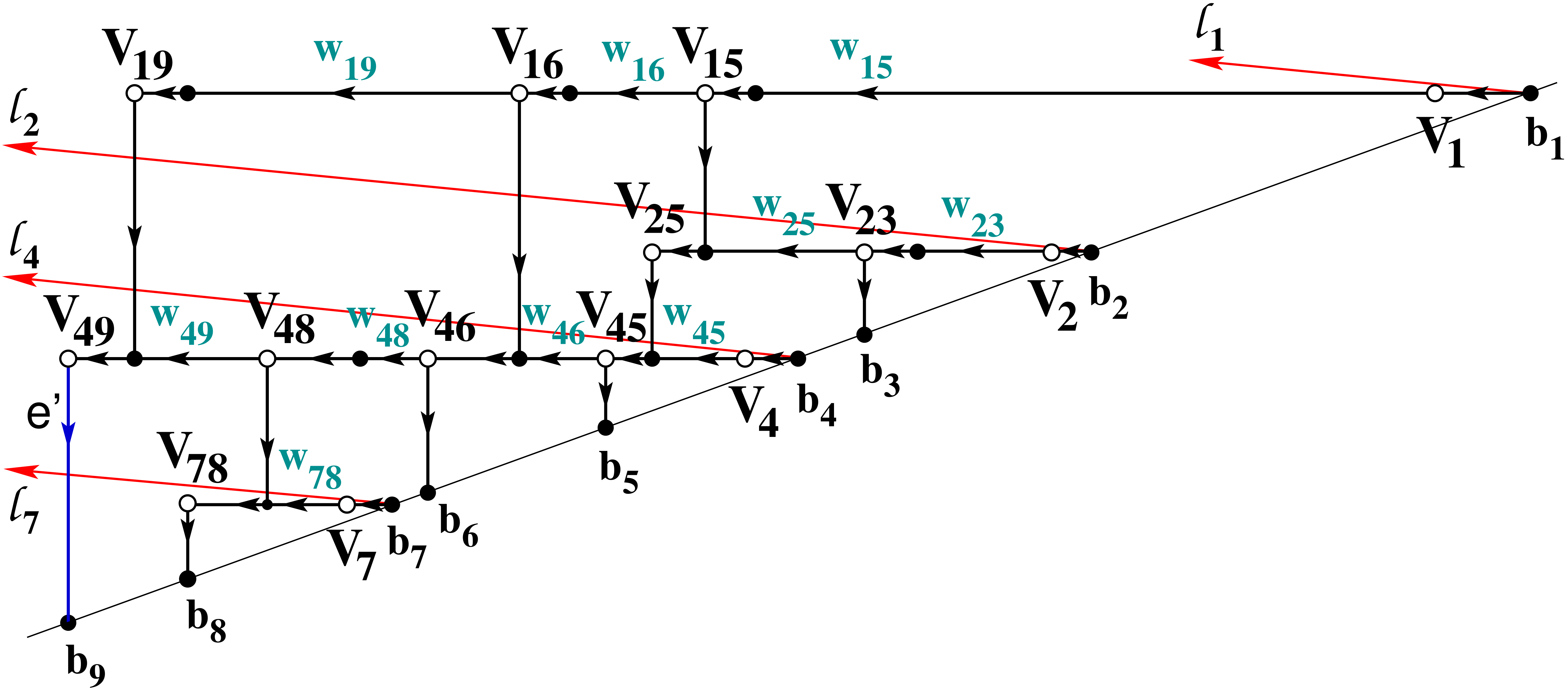}
      \caption{\small{\sl The rays starting at the boundary sources for a given orientation of the 
          network uniquely fix the edge vectors.
					}}\label{fig:Rules0}}
\end{figure}

To rule out the sign of a given path contribution, we introduce an additional structure on $({\mathcal N}, \mathcal O)$: the gauge ray direction $\mathfrak{l}$. 

\begin{definition}\label{def:gauge_ray}\textbf{The gauge ray direction $\mathfrak{l}$.}
We choose an oriented direction ${\mathfrak l}$ with the following properties:
\begin{enumerate}
\item The ray with the direction ${\mathfrak l}$ starting at a boundary vertex points inside the disk; 
\item No internal edge is parallel to this direction;
\item All rays starting at boundary vertices do not contain internal vertices.
\end{enumerate}
\end{definition}

We remark that the first propery may always be satisfied since all boundary vertices lie at a common straight interval in the boundary of $\mathcal N$.
\begin{remark}
In \cite{GSV}, a gauge ray direction was introduced to compute the winding number of a path joining boundary vertices. Here we use it also to generalize the notion of number of boundary source points crossed by a path when the path starts at an internal edge $e$. 
\end{remark}

The sign of the contribution of one such directed path depends on the generalized winding number and the generalized intersection number with gauge rays starting at the boundary sources. 

We then prove that the components of the edge vectors are explicit rational expressions in the edge weights with subtraction free denominators (Section \ref{sec:rational}). The proof of the latter statement follows adapting the calculation of the boundary measurement map in \cite{Pos} and \cite{Tal2}. The edge vectors are also the unique solution of a linear system of relations on $({\mathcal N},\mathcal O, \mathfrak l)$ (Section \ref{sec:linear}).

We remark that null edge vectors are possible even if there exist paths starting at the given internal edge and reaching boundary sinks. However, we prove that null edge vectors are forbidden in all networks admitting an acyclic orientation 
(Section \ref{sec:null_vectors}). Since unreduced graphs have the extra gauge freedom introduced in Remark \ref{rem:gauge_freedom}, we also conjecture that such freedom may be used to avoid null edge vectors also
in the case of reducible networks not admitting an acyclic orientation. 

If the orientation $\mathcal O$ is fixed and the direction $\mathfrak l$ changes, then the vector assigned to the edge $e$ may only change sign (Section \ref{sec:gauge_ray}). 
This property ensures that we uniquely construct a KP divisor on $\Gamma$ for each orientation and that such divisor does not depend on the gauge ray direction (Corollary \ref{cor:indep_gauge}).

If we change the orientation $\mathcal O$ and keep the direction $\mathfrak l$ invariant, each new vector is a linear combination of the old vector and of the rows of a chosen representative matrix of $[A]$ (Section \ref{sec:orient}).
This property implies that the KP divisor on $\Gamma$ does not depend on the orientation of the network (Corollary \ref{cor:indep_orient} and Theorem \ref{theo:inv}). 

In Section \ref{sec:different_gauge} we discuss the effect of both the weight and the vertex gauges introduced in Remarks \ref{rem:gauge_vertices} and \ref{rem:gauge_weight}. In both cases these gauges affect the edge vectors only locally and do not affect the KP divisor on $\Gamma$.

\subsection{Construction of systems of edge vectors and their basic properties}\label{sec:def_edge_vectors}
Let $b_{i_r}$, $r\in[k]$, $b_{j_l}$, $l\in[n-k]$, respectively  be the set of boundary sources and boundary sinks 
associated to the given orientation. Then draw the rays ${\mathfrak l}_{i_r}$, $r\in[k]$, starting at $b_{i_r}$ associated with the pivot columns of the given orientation. In Figure \ref{fig:Rules0} we show an example.

\begin{figure}
  \centering{\includegraphics[width=0.49\textwidth]{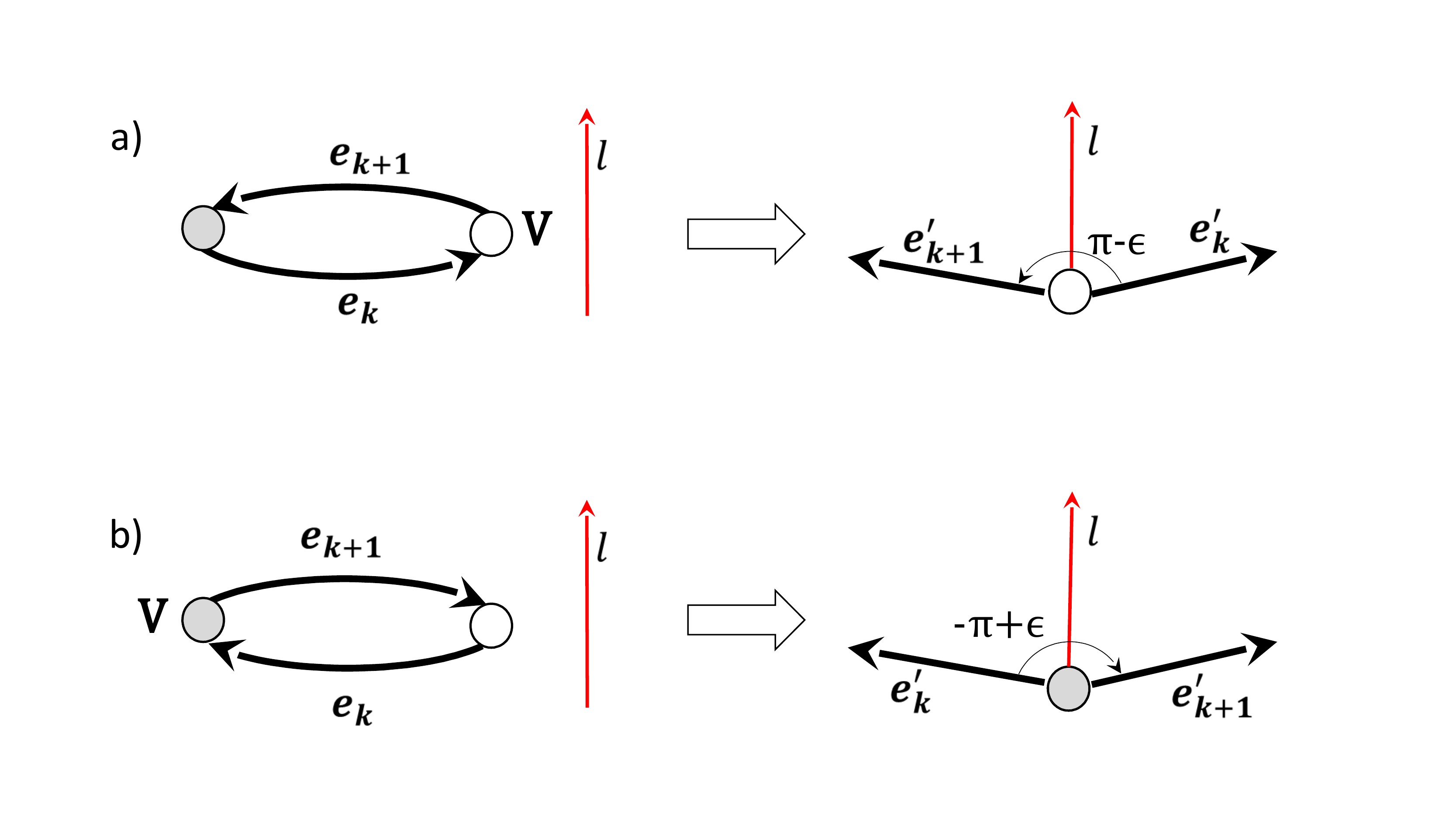}
	\includegraphics[width=0.49\textwidth]{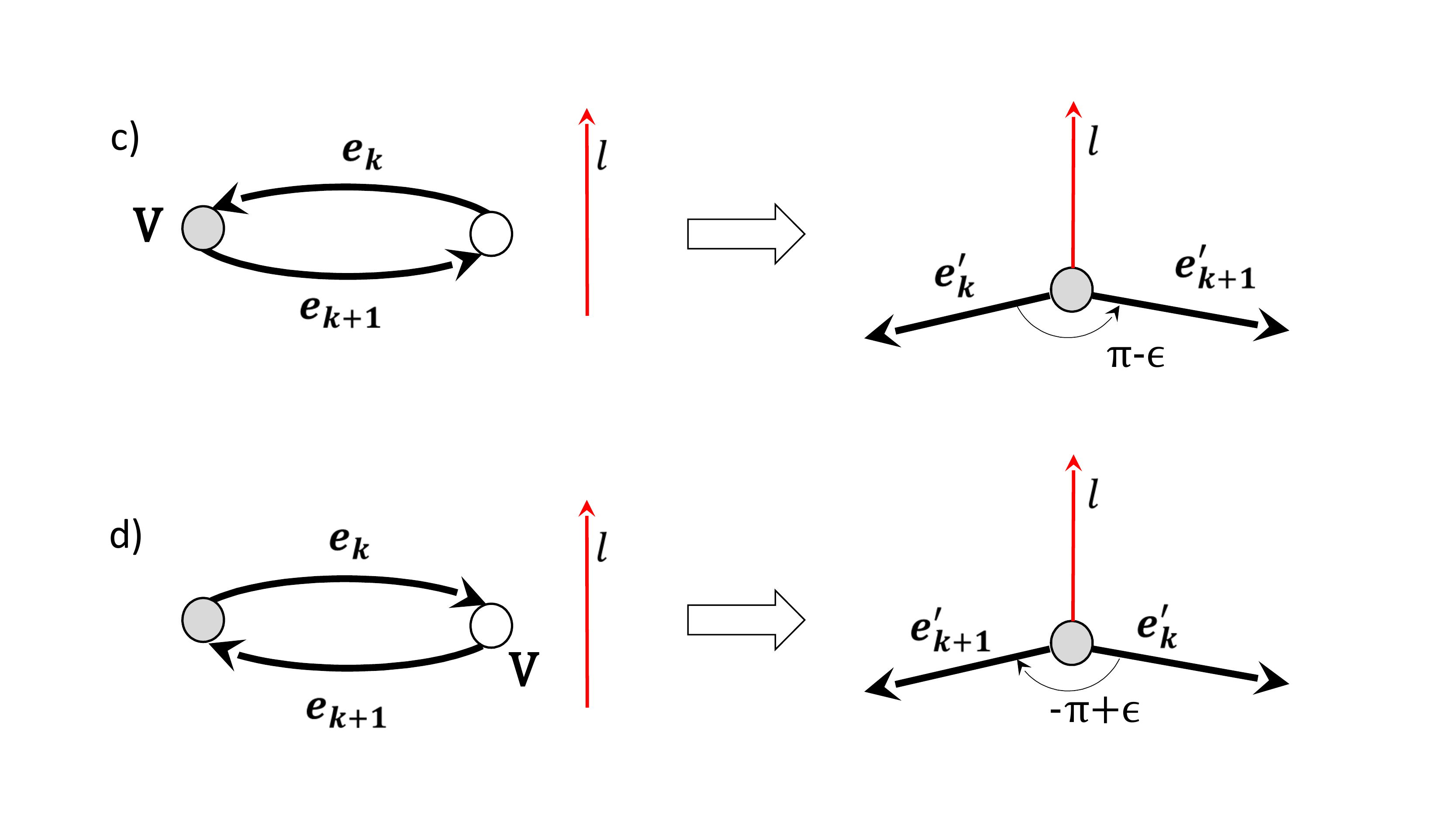}
      \caption{\small{\sl If the ordered pair $(e_k, e_{k+1})$ is antiparallel at $V$, we slightly rotate the two edge vectors at $V$ to compute $s(e_k,e_{k+1})$. Using (\ref{eq:s_antipar}) and (\ref{eq:def_wind}), we get $a)$: $s(e_k,e_{k+1})=1= \mbox{wind}(e_k,e_{k+1})$; $b)$: $s(e_k,e_{k+1})=-1= \mbox{wind}(e_k,e_{k+1})$; $c)$: $s(e_k,e_{k+1})=1$,  $\mbox{wind}(e_k,e_{k+1})=0$; $d)$: $s(e_k,e_{k+1})=-1$,  $\mbox{wind}(e_k,e_{k+1})=0$.
					}}\label{fig:antipar}}
\end{figure}

\begin{definition}\label{def:winding_pair}\textbf{The local winding number at an ordered pair of oriented edges}
For a generic ordered pair $(e_k,e_{k+1})$ of oriented edges, define
\begin{equation}\label{eq:def_s}
s(e_k,e_{k+1}) = \left\{
\begin{array}{ll}
+1 & \mbox{ if the ordered pair is positively oriented }  \\
0  & \mbox{ if } e_k \mbox{ and } e_{k+1} \mbox{ are parallel }\\
-1 & \mbox{ if the ordered pair is negatively oriented }
\end{array}
\right.
\end{equation}
In the non generic case of ordered antiparallel edges, we slightly rotate the pair $(e_k, e_{k+1})$ to $(e_k^{\prime}, e_{k+1}^{\prime})$ as in Figure \ref{fig:antipar} and define
\begin{equation}\label{eq:s_antipar}
s(e_k,e_{k+1}) = \lim_{\epsilon\to 0^+} s(e_k^{\prime},e_{k+1}^{\prime}).
\end{equation}
Then the winding number of the ordered pair $(e_k,e_{k+1})$ with respect to the gauge ray direction $\mathfrak{l}$ is
\begin{equation}\label{eq:def_wind}
\mbox{wind}(e_k,e_{k+1}) = \left\{
\begin{array}{ll}
+1 & \mbox{ if } s(e_k,e_{k+1}) = s(e_k,\mathfrak{l}) = s(\mathfrak{l},e_{k+1}) = 1\\
-1 & \mbox{ if } s(e_k,e_{k+1}) = s(e_k,\mathfrak{l}) = s(\mathfrak{l},e_{k+1}) = -1\\
0  & \mbox{otherwise}.
\end{array}
\right.
\end{equation}
\end{definition}
We illustrate the rule in Figure \ref{fig:winding}.

\begin{figure}%[H]
  \centering{
	\includegraphics[width=0.7\textwidth]{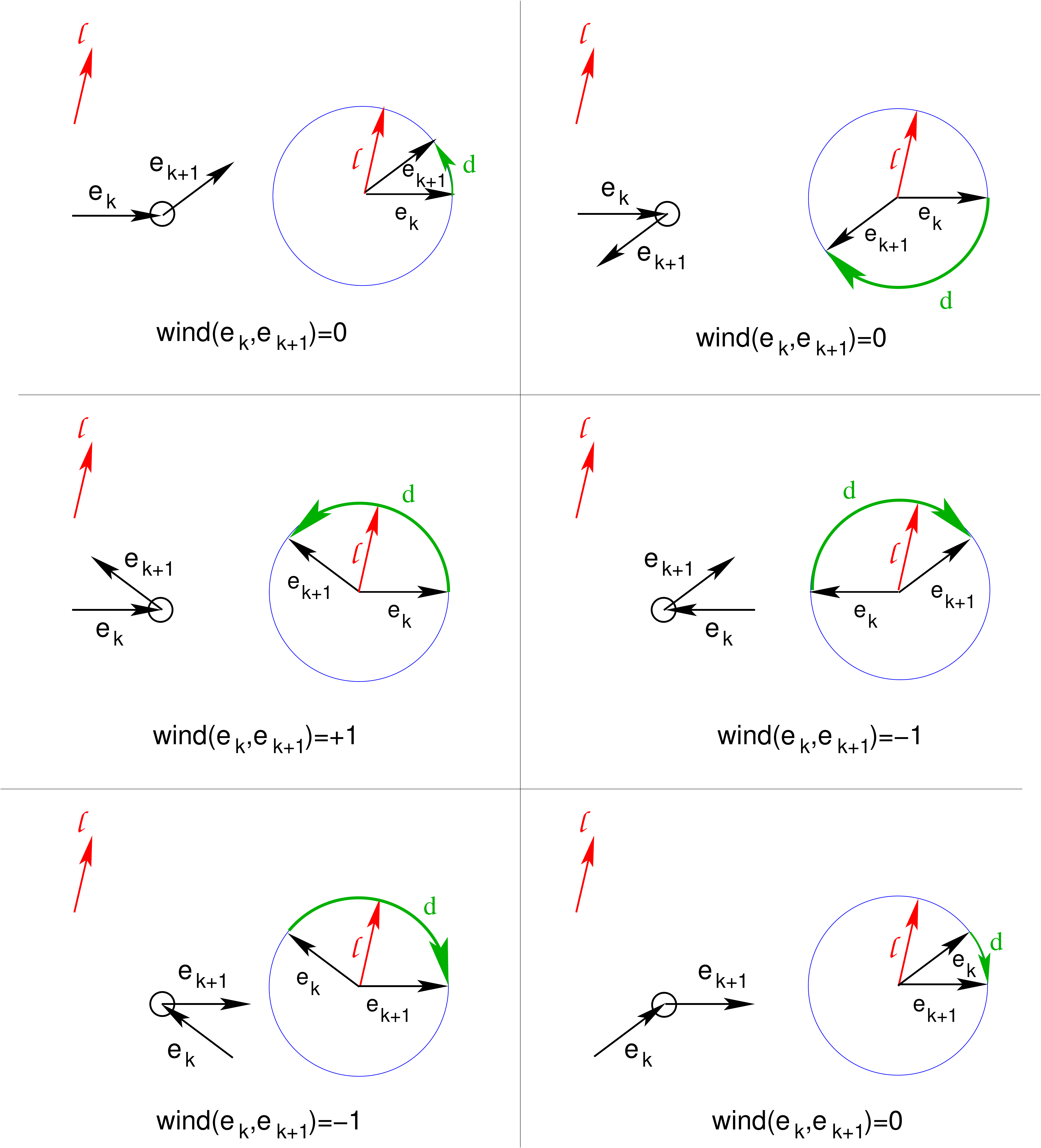}
  \caption{\label{fig:winding}\small{\sl The local rule to compute the winding number.}}
}
\end{figure}

Let us now consider a directed path ${\mathcal P}=\{e=e_1,e_2,\cdots, e_m\}$ starting at a vertex $V_1$ (either a boundary source or internal vertex) and ending at a 
boundary sink $b_j$, where $e_1=(V_1,V_2)$, $e_2=(V_2,V_3)$, \ldots, $e_m=(V_m,b_j)$. At each edge the orientation of the path coincides with the orientation of this edge in the graph.

We assign three numbers to ${\mathcal P}$:
\begin{enumerate}
\item The \textbf{weight $w({\mathcal P})$} is simply the product of the weights $w_l$ of all edges $e_l$ in ${\mathcal P}$, $w({\mathcal P})=\prod_{l=1}^m w_l$. If we pass the 
same edge $e$ of weight $w_e$ $r$ times, the weight is counted as $w_e^r$;
\item The \textbf{generalized winding number} $\mbox{wind}({\mathcal P})$ is the sum of the local winding numbers at each ordered pair of its edges
\[
\mbox{wind}(\mathcal P) = \sum_{k=1}^{m-1} \mbox{wind}(e_k,e_{k+1}),
\]
with $\mbox{wind}(e_k,e_{k+1})$ as in Definition \ref{def:winding_pair}; 
\item $\mbox{int}(\mathcal P)$ is the \textbf{number  of intersections} between the path and the rays ${\mathfrak l}_{i_r}$, $r\in[k]$: $\mbox{int}(\mathcal P) = \sum\limits_{s=1}^m \mbox{int}(e_s)$, where $\mbox{int}(e_s)$ is the number of intersections of gauge rays ${\mathfrak l}_{i_r}$ with $e_s$.
\end{enumerate}
The generalized winding of the path $\mathcal P$ depends on the gauge ray direction $\mathfrak{l}$ since it
counts how many times the tangent vector to the 
path is parallel and has the same orientation as ${\mathfrak l}$.

\begin{definition}\label{def:edge_vector}\textbf{The edge vector $E_e$.}
For any edge $e$, let us consider all possible directed paths ${\mathcal P}:e\rightarrow b_{j}$, 
in $({\mathcal N},{\mathcal O},{\mathfrak l})$ such that the first edge is $e$ and the end point is the boundary vertex $b_{j}$, $j\in[n]$.
Then the $j$-th component of $E_{e}$ is defined as:
\begin{equation}\label{eq:sum}
\left(E_{e}\right)_{j} = \sum\limits_{{\mathcal P}:e\rightarrow b_{j}} (-1)^{\mbox{wind}({\mathcal P})+ \mbox{int}({\mathcal P})} 
w({\mathcal P}).
\end{equation}
If there is no path from $e$ to $b_{j}$, the $j$--th component of $E_e$ is assigned to be zero. By definition, at the edge $e$ at the boundary sink $b_j$, the edge vector $E_{e}$ is
\begin{equation}\label{eq:vec_bou_sink}
\left(E_{e}\right)_{k} =  (-1)^{\mbox{int}(e)} w(e) \delta_{jk}.
\end{equation}
\end{definition}

In particular, for any $e$, all components of $E_e$ corresponding to the boundary 
sources in the given orientation are equal to zero. 
If $e$ is an edge belonging to the connected component of an isolated boundary sink $b_j$, 
then $E_e$ is proportional to the $j$--th vector of the canonical basis, while if $e$ is an edge belonging to the connected component of an isolated 
boundary source, then $E_e$ is the null vector.

If the number of paths starting at $e$ and ending at $b_j$ is finite for a given edge $e$ and destination $b_j$, the component $\left(E_{e}\right)_{j}$  
in (\ref{eq:sum}) is a polynomial in the edge weights.

If the number of paths starting at $e$ and ending at $b_j$ is infinite and the weights are sufficiently small, it is easy to check that the right hand side in (\ref{eq:sum}) converges. In Section \ref{sec:rational} we adapt the summation procedures of \cite{Pos} and \cite{Tal2} to prove that the edge vector components are rational expressions with subtraction-free denominators and provide explicit expressions (Theorem \ref{theo:null}). 

\subsection{Edge-loop erased walks, conservative and edge flows}\label{sec:flows}

Our next aim is to study the structure of the expressions representing the components of the edge vectors.

First, following \cite{Fom}, see also \cite{Law}, we adapt the notion of loop-erased walk to our situation, since our walks start at an edge, not at a vertex. 
\begin{definition}
\label{def:loop-erased-walk}
\textbf{Edge loop-erased walks.}  Let ${\mathcal P}$ be a walk (directed path) given by
$$
V_e \stackrel{e}{\rightarrow} V_1 \stackrel{e_1}{\rightarrow} V_2 \rightarrow \ldots \rightarrow b_j,
$$
where $V_e$ is the initial vertex of the edge $e$. The edge loop-erased part of  ${\mathcal P}$, denoted $LE({\mathcal P})$, is defined recursively as 
follows. If ${\mathcal P}$ does not pass any edge twice (i.e., all edges $e_i$ are distinct), then $LE({\mathcal P})={\mathcal P}$.
Otherwise, set $LE({\mathcal P})=LE({\mathcal P}_0)$, where ${\mathcal P}_0$ is obtained from ${\mathcal P}$, by removing the first
edge loop it makes; more precisely, find all pairs $l,s$ with $s>l$ and $e_l = e_s$, choose the one with the smallest value of $l$ and $s$ and remove the cycle
$$
V_l \stackrel{e_l}{\rightarrow} V_{l+1} \stackrel{e_{l+1}}\rightarrow V_{l+2} \rightarrow \ldots \stackrel{e_{s-1}}\rightarrow V_{s} ,
$$
from ${\mathcal P}$.
\end{definition}

\begin{remark}
An edge loop-erased walk can pass twice through the first vertex $V_e$, but it cannot pass twice any other vertex due to  trivalency.
For example, the directed path $1,2,3,4,5,6,7,8,12$ at Figure~\ref{fig:loop_erased} is edge loop-erased but it passes twice through the starting 
vertex $V_1$.
In general, the edge loop-erased walk does not coincide with the loop-erased walk defined in \cite{Fom, Law}. For instance, the directed path
$1,2,3,4,5,6,7,8,9,4,11$ has edge loop-erased walk $1,2,3,4,11$ and the loop-erased walk $7,8,9,4,11$. 

However, if $e$ starts at a boundary source, these two definitions coincide.  

In our text we never use loop-erased walks in the sense of \cite{Fom} and we use the notation $LE({\mathcal P})$ in the sense of 
Definition~\ref{def:loop-erased-walk}. 
\end{remark}

\begin{figure}%[H]
  \centering{
	\includegraphics[width=0.3\textwidth]{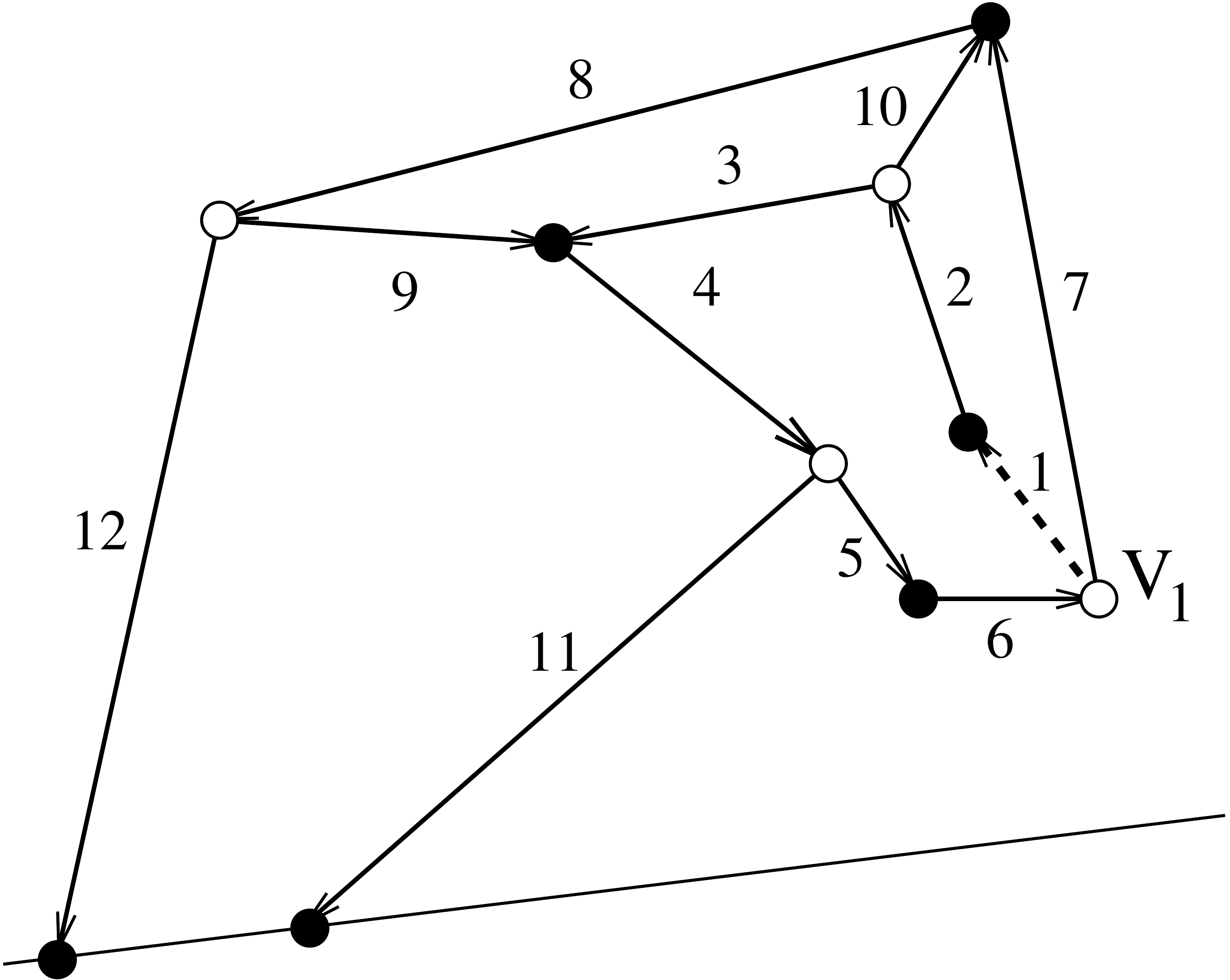}
	\vspace{-.7 truecm}
  \caption{\label{fig:loop_erased}\small{\sl  }}
}
\end{figure}

With this procedure, to each path starting at $e$ and ending at $b_j$ we associate an unique edge loop-erased walk $LE({\mathcal P})$, where the latter path is either acyclic or possesses one simple cycle passing through the initial vertex.
Then we formally reshuffle the summation over 
infinitely many paths starting at $e$ and ending at $b_j$ to a summation over a finite number of equivalent classes  $[LE({\mathcal P}_s)]$, each one consisting of all paths sharing the 
same edge loop-erased walk, $LE({\mathcal P}_s)$, $s=1,\ldots, S$. Let us remark that 
$ \mbox{int}({\mathcal P})-\mbox{int}(LE({\mathcal P}_s))=0 \,\,(\!\!\!\!\mod 2) $ for any ${\mathcal P}\in [LE({\mathcal P}_s)]$, and, moreover, 
$ \mbox{wind}({\mathcal P})-\mbox{wind}(LE({\mathcal P}_s))$ has the same parity as the number of simple cycles of ${\mathcal P}$ minus the number of 
simple cycles of $LE({\mathcal P}_s)$.
With this in mind, we rexpress (\ref{eq:sum}) as follows
\begin{equation}\label{eq:sum2}
\left(E_{e}\right)_{j} = \sum\limits_{s=1}^S (-1)^{\mbox{wind}(LE({\mathcal P}_s))+ \mbox{int}(LE({\mathcal P}_s))}\left[ 
\mathop{\sum\limits_{{\mathcal P}:e\rightarrow b_{j}}}_{{\mathcal P}\in [LE({\mathcal P}_s)] } (-1)^{\mbox{wind}({\mathcal P})-\mbox{wind}(LE({\mathcal P}_s))  } 
w({\mathcal P}) \right].
\end{equation}

We remark that the winding number along each simple closed loop introduces a $-$ sign in agreement with \cite{Pos}. Therefore the summation over paths may be interpreted as a discretization of path integration in some spinor theory. In typical spinor theories the change of phase during the rotation of the spinor corresponds to standard measure on the group $U(1)$ and requires the use of complex numbers. The introduction of the gauge direction forces the use of $\delta$--type measures instead of the standard measure on $U(1)$, and it permits to work with real numbers only.

Next we adapt the definitions of flows and conservative flows in \cite{Tal2} to our case.

\begin{definition}\label{def:cons_flow}\textbf{Conservative flow \cite{Tal2}}. A collection $C$ of distinct edges in a PBDTP graph $\mathcal G$ is called a conservative flow if 
\begin{enumerate}
\item For each interior vertex $V_d$ in $\mathcal G$ the number of edges of $C$ that arrive at $V_d$ is equal to the number of edges of $C$ that leave from $V_d$;
\item $C$ does not contain edges incident to the boundary.
\end{enumerate}
We denote the set of all conservative flows $C$ in $\mathcal G$  by ${\mathcal C}(\mathcal G)$. In particular, ${\mathcal C}(\mathcal G)$ contains the trivial flow with no edges to which we assign weight 1.
\end{definition}

\begin{definition}\label{def:edge_flow}\textbf{Edge flow at $e$}. A collection $F_e$ of distinct edges in a PBDTP graph $\mathcal G$ is called edge flow starting at the edge $e=e_1=(V_1,V_2)$ if 
\begin{enumerate}
\item $e\in F_e$;
\item For each interior vertex $V_d\ne V_1$ in $\mathcal G$ the number of edges of $F_e$ that arrive at $V_d$ is equal to the number of edges of $F_e$ that leave from $V_d$;
\item At $V_1$ the number of edges of $F_e$ that arrive at $V_1$ is equal to the number of edges of $F_e$ that leave from $V_1$ minus 1.
\end{enumerate}
We denote the set of all edge flows $F$ starting at an edge $e$ and ending at a boundary sink $b_j$ in $\mathcal G$ by ${\mathcal F}_{e,b_j}(\mathcal G)$.
\end{definition}

The conservative flows are collections of non-intersecting simple loops in the directed graph $\mathcal G$. We assign \textbf{no winding numbers} to conservative flows, exactly as in \cite{Tal2}. Due to the fact that the number of intersection of gauge rays with each connected component of a conservative flow is even, the assignment of intersection numbers to conservative flows is irrelevant. 

In our setting an edge flow  $F_{e,b_j}$ in ${\mathcal F}_{e,b_j}(\mathcal G)$ is either an edge loop-erased walk $P_{e,b_j}$ starting at the edge $e$ and ending at the boundary sink $b_j$ or the union of $P_{e,b_j}$ with a conservative flow with no common edges with $P_{e,b_j}$. In particular, our definition of edge flow coincides with the definition of flow in \cite{Tal2} if $e$ starts at a boundary source except for the winding and intersection numbers.
\begin{definition}\label{def:numb_flows}
\begin{enumerate}
\item We assign one number to each $C\in {\mathcal C}(\mathcal G)$: the \textbf{weight $w(C)$} is the product of the weights of all edges in $C$.
\item Let  $F_{e,b_j}\in{\mathcal F}_{e,b_j}(\mathcal G)$ be  the union of the edge loop-erased walk $P_{e,b_j}$ with a conservative flow with no common edges with $P_{e,b_j}$ (this conservative flow may be the trivial one). We assign three numbers to $F_{e,b_j}$:
\begin{enumerate} 
\item The \textbf{weight $w(F_e)$} is the product of the weights of all edges in $F_e$.
\item The \textbf{winding number} $\mbox{wind}(F_{e,b_j})$:
\begin{equation}
\label{eq:wind_flow}
\mbox{wind}(F_{e,b_j}) = \mbox{wind}(P_{e,b_j});
\end{equation}
\item The \textbf{intersection number} $\mbox{int}(F_{e,b_j})$:
\begin{equation}
\label{eq:int_flow}
\mbox{int}(F_{e,b_j}) = \mbox{int}(P_{e,b_j}).
\end{equation}
\end{enumerate}
\end{enumerate}
\end{definition}

\subsection{Rational representation of the components of $E_e$}\label{sec:rational}

A deep result of \cite{Pos}, see also \cite{Tal2}, is that each infinite summation in the square bracket of (\ref{eq:sum2}) is a subtraction-free rational expression when $e$ is the edge at a boundary source.
In this Section, adapting Theorem~3.2 in \cite{Tal2} to our purposes, we show that the components of $E_e$ defined in (\ref{eq:sum}) are rational expression in the weights with subtraction-free denominator and we provide an explicit expression for them. We remark that, contrary to the case in which the initial edge starts at a boundary source, if $e$ is an internal edge, the $j$--th component of $E_e$ may be null even if there exist paths starting at $e$ and ending at $b_j$ (see Figure \ref{fig:zero-vector}). 

\begin{theorem}\label{theo:null}\textbf{Rational representation for the components of vectors $E_e$}
Let $({\mathcal N},\mathcal O, \mathfrak l)$ be a PBDTP network representing a point $[A]\in \S \subset \GTNN$ with orientation $\mathcal O$ associated to the base $I =\{ 1\le i_1< i_2 < \cdots < i_k\le n\}$ in the matroid $\mathcal M$ and gauge ray direction $\mathfrak{l}$. Let us assign the $j$-th vectors of the canonical basis to the boundary sinks $b_j$, $j\in \bar I$.
Let the edge $e\in {\mathcal N}$ be such that there is a path starting at $e$ and ending at the boundary sink $b_j$. Then the $j$--th component of the edge vector at $e$, $\left(E_{e}\right)_{j}$, defined in (\ref{eq:sum2}) is a rational expression in the weights on the network with subtraction-free denominator: 
\begin{equation}
\label{eq:tal_formula}
\left(E_{e}\right)_{j}= \frac{\displaystyle\sum\limits_{F\in {\mathcal F}_{e,b_j}(\mathcal G)} \big(-1\big)^{\mbox{wind}(F)+\mbox{int}(F)}\ w(F)}{\sum\limits_{C\in {\mathcal C}(\mathcal G)} \ w(C)},
\end{equation}
where notations are as in Definitions~\ref{def:cons_flow},~\ref{def:edge_flow} and~\ref{def:numb_flows}.  
\end{theorem}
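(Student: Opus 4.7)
The plan is to adapt the loop-erasure + generating function argument of Postnikov and Talaska to walks starting at an internal edge. Starting from the reshuffled expression (\ref{eq:sum2}), I would first observe that for any class $[LE(\mathcal P_s)]$ the inner sum is infinite precisely because one can decorate $LE(\mathcal P_s)$ with arbitrary collections of simple cycles $\sigma_1,\dots,\sigma_r$ of $\mathcal G$ which are pairwise edge-disjoint and edge-disjoint from $LE(\mathcal P_s)$ (they may share vertices with $LE(\mathcal P_s)$ only at $V_e$ or at internal trivalent vertices, consistently with the orientation $\mathcal O$). Reconstructing $\mathcal P$ from $LE(\mathcal P_s)$ together with this cycle decoration is the inverse of the loop-erasure procedure of Definition~\ref{def:loop-erased-walk}, so this gives a bijection.

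The second step is to compute the sign $(-1)^{\mathrm{wind}(\mathcal P)-\mathrm{wind}(LE(\mathcal P_s))}(-1)^{\mathrm{int}(\mathcal P)-\mathrm{int}(LE(\mathcal P_s))}$ introduced by attaching a single simple cycle $\sigma$. Because $\mathcal G$ is planar and $\sigma$ bounds a disk, one checks directly from Definition~\ref{def:winding_pair} and (\ref{eq:def_wind}) that $\mathrm{wind}(\sigma)$ is odd, while $\mathrm{int}(\sigma)$ is even (any gauge ray either misses the interior of $\sigma$ entirely or enters and exits it the same number of times). Hence attaching $\sigma$ contributes exactly a factor $-w(\sigma)$. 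Consequently, the inner bracket in (\ref{eq:sum2}) equals
\begin{equation*}
\sum_{\{\sigma_1,\dots,\sigma_r\}\ \text{edge-disjoint from } LE(\mathcal P_s)} (-1)^{r} w(\sigma_1)\cdots w(\sigma_r).
\end{equation*}

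The third step is the generating-function identity that converts these sums into the ratio in (\ref{eq:tal_formula}). Writing $Z(\mathcal G)=\sum_{C\in\mathcal C(\mathcal G)} w(C)$ and, for any subgraph $H$, $Z(\mathcal G\setminus H)$ for the analogous sum over conservative flows in $\mathcal G$ edge-disjoint from $H$, the standard inclusion--exclusion (used by Talaska~\cite{Tal2}) gives
\begin{equation*}
\sum_{\{\sigma_1,\dots,\sigma_r\}\ \text{edge-disjoint from } H} (-1)^{r} \prod_i w(\sigma_i) \;=\; \frac{Z(\mathcal G\setminus H)}{Z(\mathcal G)},
\end{equation*}
because the alternating sum over collections of edge-disjoint simple cycles is the multiplicative inverse of the partition function over conservative flows. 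Applied to $H=LE(\mathcal P_s)$ and substituted into (\ref{eq:sum2}), this yields
\begin{equation*}
(E_e)_j \;=\; \frac{\sum_{s} (-1)^{\mathrm{wind}(LE(\mathcal P_s))+\mathrm{int}(LE(\mathcal P_s))} w(LE(\mathcal P_s))\, Z(\mathcal G\setminus LE(\mathcal P_s))}{Z(\mathcal G)},
\end{equation*}
and the numerator is precisely the sum over edge flows $F\in\mathcal F_{e,b_j}(\mathcal G)$ weighted as in (\ref{eq:wind_flow})--(\ref{eq:int_flow}), since every such $F$ decomposes uniquely as an edge loop-erased walk $LE(\mathcal P_s)$ plus an edge-disjoint conservative flow.

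The main obstacle I expect is the sign analysis for attached cycles, and especially for those sharing the initial vertex $V_e$: the loop-erased walk may pass through $V_e$ twice, so some of the simple cycles $\sigma_i$ can be glued at $V_e$ rather than at a trivalent interior vertex. One must check that (\ref{eq:def_wind}) together with the perfect-orientation and trivalency conditions of Definition~\ref{def:graph} still force $\mathrm{wind}(\sigma)$ to be odd in this degenerate configuration; the convention (\ref{eq:s_antipar}) for antiparallel pairs (Figure~\ref{fig:antipar}) is precisely what makes this robust. A secondary subtle point is the convergence of the original infinite sum (\ref{eq:sum2}) for sufficiently small weights, which must be justified before the formal reshuffling, but is immediate from the exponential decay of $w(\mathcal P)$ in the length of~$\mathcal P$.
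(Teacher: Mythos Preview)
Your argument has two genuine gaps that prevent it from going through as written.

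First, the bijection in step~1 is not correct. Walks $\mathcal P$ with a given edge loop-erasure $P_0$ are \emph{not} in bijection with collections of pairwise edge-disjoint simple cycles that are also edge-disjoint from $P_0$. A walk can traverse the same cycle several times, and more importantly the loops removed in Definition~\ref{def:loop-erased-walk} are in general \emph{not} edge-disjoint from $P_0$: if $P_0$ passes through a trivalent white vertex $V$ (unique incoming edge $a$, two outgoing), any closed detour at $V$ must re-enter $V$ through $a$, so the erased loop contains $a\in P_0$. In fact, in a PBDTP graph any simple cycle sharing a vertex with $P_0$ necessarily shares an edge with it, so the ``decorations'' you describe are actually vertex-disjoint from $P_0$ and cannot be attached to $P_0$ to produce a walk at all.

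Second, the identity in step~3 is false. Take $\mathcal G$ with a single simple cycle $\sigma$ of weight $w$ and $H=\emptyset$: your left-hand side is $1-w$, while $Z(\mathcal G\setminus H)/Z(\mathcal G)=(1+w)/(1+w)=1$. The parenthetical claim that the alternating sum over edge-disjoint cycle collections equals $1/Z(\mathcal G)$ fails for the same example, since $(1-w)(1+w)=1-w^{2}\neq 1$. Talaska does not use such an identity.

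The paper instead proves (\ref{eq:tal_formula}) by multiplying $(E_e)_j$ by $Z(\mathcal G)$ and constructing a sign-reversing involution $\varphi$ on pairs $(C,\mathcal P)$, where $\mathcal P$ is any walk from $e$ to $b_j$ and $C$ is a conservative flow. The pairs fixed by $\varphi$ are exactly those with $\mathcal P$ edge loop-erased and $C$ edge-disjoint from $\mathcal P$, i.e.\ the edge flows; all other pairs cancel in signed pairs (one moves the first loop of $\mathcal P$, or the first component of $C$ meeting $\mathcal P$, from one side to the other). This is the content of (\ref{eq:tal_formula_2}) and bypasses the need to evaluate the inner bracket of (\ref{eq:sum2}) directly. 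Your step~2 observation that each simple cycle has odd winding and even intersection number is correct, and it is precisely what makes $\varphi$ sign-reversing.
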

\begin{proof}
The proof is a straightforward adaptation of the proof in \cite{Tal2} for the computation of the Pl\"ucker coordinates. 
If the graph is acyclic, (\ref{eq:tal_formula}) coincides with (\ref{eq:sum}) because the denominator is one and edge flows $F_{e,b_j}$ are in one-to-one correspondence with directed paths connecting $e$ to $b_j$. Otherwise, in view of  (\ref{eq:sum}), (\ref{eq:tal_formula}) can be written as:
\begin{equation}
\label{eq:tal_formula_2}
 \sum\limits_{{\mathcal P}:e\rightarrow b_{j}}  \sum\limits_{C\in {\mathcal C}(\mathcal G)}  (-1)^{\mbox{wind}({\mathcal P})+ \mbox{int}({\mathcal P})}  w({\mathcal P})  w(C)  = \sum\limits_{F\in {\mathcal F}_{e,b_j}(\mathcal G)} \big(-1\big)^{\mbox{wind}(F)+\mbox{int}(F)}\ w(F),
\end{equation}
where in the left-hand side the first sum is over all directed paths from $e$ to $b_j$. In the left-hand side we have two types of terms (see also (\ref{eq:sum2})):
\begin{enumerate}
\item $\mathcal P$ is an edge loop-erased walk and $C$ is a conservative flow with no common edges with $\mathcal P$. By (\ref{eq:wind_flow}), the summation over this group coincides with the right-hand side of (\ref{eq:tal_formula_2});
\item $\mathcal P$ is not edge loop-erased or it is loop-erased, but has a common edge with  $C$.
\end{enumerate}
Following \cite{Tal2}, we prove that the summation over the second group gives zero by introducing a sign-reversing involution $\varphi$ on the set of pairs $(C,P)$. We first assign two numbers to each pair $(C,P)$ as follows:
\begin{enumerate}
\item Let $P=(e_1,\ldots,e_m)$. If $P$ is edge loop-erased, set $\bar s=+\infty$; otherwise, let $L_1=(e_l,e_{l+1},\ldots,e_{s_1})$ be the first loop erased according to Definition~\ref{def:loop-erased-walk} and set $\bar s=s$;
\item If $C$ does not intersect $P$, set $\bar t=+\infty$. Otherwise, set $\bar t$ the smallest $t$ such that $e_t\in P$ and $e_t\in C$. Denote the component of $C$ containing $e_{\bar t}$ by $L_2=(l_1,\ldots,l_p)$ with $l_1=e_{\bar t}$. 
\end{enumerate}
A pair $(C,P)$ belongs to the second group if and only if at least one of the numbers $\bar s$, $\bar t$ is finite. Moreover, in this case, $\bar s\ne \bar t$  due to the perfect orientation of the network. We then define $(C^*,P^*)=\varphi(C,P)$ as follows:
\begin{enumerate}
\item  If $\bar s < \bar t$, then $P$ completes its first cycle $L_1$ before intersecting any cycle in $C$. In this case $L_1\cap C=\emptyset$, and we remove $L_1$ from $P$ and add it to $C$. Then $P^*=(e_1,\ldots,e_{l-1},e_s,\ldots,e_m)$ and $C^*=C\cup L_1$;
\item If $\bar t < \bar s$, then $P$ intersects  $L_2$ before completing its first cycle. Then we remove $L_2$ from $C$ and add it to $P$: $C^*=C\backslash L_2$, $P^*=(e_1,\ldots,e_{\bar t-1},l_1=e_{\bar t},l_2,\ldots,l_p,e_{\bar t+1},\ldots,e_m)$. 
\end{enumerate}
From the construction of $\varphi$ it follows immediately that $(C^*,P^*)$ belongs to the second group, $\varphi^2=\mbox{id}$, and $\varphi$ is sign-reversing since 
$w(C^*) w(P^*) = w(C) w(P)$, $\mbox{wind}(P) +\mbox{wind}(P^*) =1 \,\,
(\!\!\!\!\mod 2)$ and $\mbox{int}(P) +\mbox{int}(P^*) =0 \,\,
(\!\!\!\!\mod 2)$. 
\end{proof}

\begin{corollary}
\label{cor:bound_source}\textbf{Edge vectors at boundary sources.} Under the hypotheses of Theorem \ref{theo:null}, let $e$ be the edge starting at the boundary source $b_{i_r}$. Then the number $\mbox{wind}(F)+\mbox{int}(F)$ has the same parity for all edge flows $F$ from $b_{i_r}$ to $b_j$ and it is equal to the number $N_{rj}$ of boundary sources between $i_r$ and $j$ in the orientation $\mathcal O$,
\begin{equation}\label{eq:index_source}
N_{rj} = \# \left\{ i_s \in I\ , \ i_s \in \big] \min \{i_r, j\}, \max \{ i_r , j \} \big[ \ \right\}.
\end{equation}
Therefore, for such edges (\ref{eq:tal_formula}) simplifies to
\begin{equation}
\label{eq:tal_formula_source}
\left(E_{e}\right)_{j}= \big(-1\big)^{N_{rj}}\ \frac{\sum_{F\in {\mathcal F}_{e,b_j}(\mathcal G)} \ w(F)}{\sum_{C\in {\mathcal C}(\mathcal G)} \ w(C)} =A^r_{j},
\end{equation}
where $A^r_j$ is the the entry of the reduced row echelon matrix $A$ with respect to the base $I=\{1\le i_1 < i_2 < \cdots < i_k \le n\}$.
\end{corollary}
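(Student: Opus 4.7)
\smallskip

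\paragraph{\textbf{Proof proposal for Corollary \ref{cor:bound_source}}}

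The plan is to reduce the claim to a parity invariance statement for winding plus intersection along directed paths, and then match the resulting subtraction-free expression with the classical Postnikov boundary measurement identification with the reduced row echelon entries.

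\textbf{Step 1 (Reduction to simple paths).} Since $e$ is the unique edge at the univalent boundary source $b_{i_r}$, perfect orientation and trivalency imply that no directed walk with first edge $e$ can revisit $b_{i_r}$, and that the corresponding edge loop-erased walks to $b_j$ are simple directed paths. Consequently, every edge flow $F\in{\mathcal F}_{e,b_j}({\mathcal G})$ decomposes uniquely as $F=P\sqcup C$ where $P$ is a simple directed path from $b_{i_r}$ to $b_j$ disjoint from the conservative flow $C$. By Definition~\ref{def:numb_flows}, $\mbox{wind}(F)+\mbox{int}(F)=\mbox{wind}(P)+\mbox{int}(P)$, so the claim reduces to analyzing the parity of $\mbox{wind}(P)+\mbox{int}(P)$ over such paths.

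\textbf{Step 2 (Parity invariance).} The central technical point is that $\mbox{wind}(P)+\mbox{int}(P) \bmod 2$ depends only on the endpoints, not on $P$. Given two simple directed paths $P_1,P_2$ from $b_{i_r}$ to $b_j$, one can interpolate between them by a sequence of elementary exchanges in which a subpath between two common vertices is replaced by an alternative subpath. For each such exchange, the change $\Delta\mbox{wind}$ at the relevant pairs of consecutive edges at the common vertices together with the winding of the two alternative subpaths is to be compared with the change $\Delta\mbox{int}$ in the count of crossings with the gauge rays $\mathfrak{l}_{i_s}$. Here planarity of $\mathcal G$ plus the fact that every gauge ray emanates from a boundary source lying on $\partial D$ — hence outside any Jordan region bounded by a closed loop entirely interior to $D$ — are the key ingredients, as they force a Jordan-curve cancellation between these two quantities modulo $2$. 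The upshot is $\Delta\mbox{wind}+\Delta\mbox{int}\equiv 0\pmod 2$ for each elementary exchange, hence for arbitrary $P_1,P_2$.

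\textbf{Step 3 (Identification of the parity with $N_{rj}$).} Having established invariance, it suffices to evaluate $\mbox{wind}(P_0)+\mbox{int}(P_0)\bmod 2$ on a convenient representative $P_0$. I would take $P_0$ to be a path that ``hugs'' the boundary arc between $b_{i_r}$ and $b_j$ on which the boundary vertices lie. Since $\mathfrak{l}$ points into the disk at every $b_{i_s}$, such a path crosses exactly the rays $\mathfrak{l}_{i_s}$ for the boundary sources $b_{i_s}$ strictly between $b_{i_r}$ and $b_j$, giving $\mbox{int}(P_0)\equiv N_{rj}\pmod 2$, while its winding contribution is even by the boundary geometry. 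This identifies the common parity as $N_{rj}$ and reduces (\ref{eq:tal_formula}) to (\ref{eq:tal_formula_source}).

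\textbf{Step 4 (Match with $A^r_j$).} The subtraction-free ratio $\sum_{F}w(F)/\sum_{C}w(C)$ in (\ref{eq:tal_formula_source}) is precisely the Postnikov boundary measurement from $b_{i_r}$ to $b_j$ in Talaska's flow form \cite{Tal2}, and Postnikov's theorem \cite{Pos} identifies this boundary measurement with $(-1)^{N_{rj}}A^r_j$, the corresponding entry of the reduced row echelon representative of $[A]$ relative to the base $I$. The sign factor cancels the $(-1)^{N_{rj}}$ in (\ref{eq:tal_formula_source}), yielding $(E_e)_j=A^r_j$. The main obstacle is Step 2: while the Jordan-curve argument gives the right qualitative picture, one must rigorously bookkeep the local winding contributions at the vertices where the two alternative subpaths meet the common path, so that they combine with the intersection count of the interior loop into an even total.
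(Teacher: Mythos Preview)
Your Steps 1 and 4 match the paper. The substantive differences are in Steps 2 and 3.

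In Step 2 the paper does not interpolate by elementary exchanges. Instead it separates the two parities and handles each with a one-line topological argument. For the winding: close each simple path $P$ by adjoining a formal edge $e_{j,i_r}$ from $b_j$ back to $b_{i_r}$; the result is a simple closed curve, whose winding with respect to $\mathfrak l$ is $1\pmod 2$, so $\mbox{wind}(P)\equiv 1-\mbox{wind}(f,e_{j,i_r})-\mbox{wind}(e_{j,i_r},e)$ depends only on the first and last edges, which are common to all such $P$. For the intersection number: the symmetric difference of two simple paths with the same endpoints is a disjoint union of simple closed loops, and each gauge ray crosses each such loop an even number of times. Thus $\mbox{wind}(P)$ and $\mbox{int}(P)$ are each invariant mod $2$, not just their sum. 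This bypasses the bookkeeping you flag as the main obstacle.

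Step 3 has a genuine gap: a path ``hugging the boundary arc'' need not exist in the given network, so you cannot simply evaluate on it. The paper instead argues directly for an arbitrary simple path $P$ from $b_{i_r}$ to $b_j$. Since $P$ is acyclic, closing it along the boundary arc gives a Jordan curve; a gauge ray $\mathfrak l_{i_s}$ with $i_s\notin[i_r,j]$ starts outside this region and escapes to infinity, hence crosses $P$ evenly, while a ray with $i_s$ strictly between $i_r$ and $j$ starts inside and crosses $P$ oddly. The ray $\mathfrak l_{i_r}$ itself may cross $P$ with either parity, but the paper observes that this parity coincides with that of $\mbox{wind}(P)$ (the self-ray crossing and the winding relative to $\mathfrak l$ are coupled), so the two cancel and $\mbox{wind}(P)+\mbox{int}(P)\equiv N_{rj}\pmod 2$. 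You should replace your Step 3 by this direct argument about an arbitrary $P$.
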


\begin{proof}
First of all, in this case, each edge flow $F$ from $i_r$ to $j$ is either an acyclic edge loop--erased walk $\mathcal P$ or the union of $\mathcal P$ with a conservative flow $C$ with no common edges with $\mathcal P$. Therefore to prove that the number $\mbox{wind}(F)+\mbox{int}(F)$ has the same parity for all $F$ is equivalent to prove that $\mbox{wind}(\mathcal P)+\mbox{int}(\mathcal P)$ has the same parity for all edge loop--erased walks from $b_{i_r}$ to $b_j$. 
Any two such loop erased walks, ${\mathcal P}$ and ${\tilde {\mathcal P}}$, share at least the initial and the final edges and there exists $s\in \mathbb{N}$ and indices
\[
1\le l_1< r_1\le l_2 < r_2\le l_3 < \dots < l_s < r_s \le f,
\]
such that
\[
\begin{array}{c}
{\mathcal P} = (e, e_1,e_2, \dots, e_{l_1}, e_{l_1+1}, \dots, e_{r_1-1}, e_{r_1},e_{r_1+1}\dots, e_{l_s}, e_{l_s+1}, \dots, e_{r_s-1}, e_{r_s}, \dots , e_f),\\
{\tilde {\mathcal P}} = (e, e_1,\dots, e_{l_1}, {\tilde e}^{(1)}_1,\dots, {\tilde e}^{(1)}_{f_1}, e_{r_1}, \dots, e_{l_s}, {\tilde e}^{(s)}_1,\dots, {\tilde e}^{(s)}_{f_s}, e_{r_s}, \dots ,e_f).
\end{array}
\]
Indeed, due to acyclicity of both ${\mathcal P}$ and ${\tilde {\mathcal P}}$, if we add an edge $e_{j,i_r}$ from $b_j$ to $b_{i_r}$, we obtain a pair of simple cycles with the same orientation. Moreover, for each simple closed path the winding is equal to $1$ modulo $2$, therefore 
$$
\mbox{wind} (P) = \mbox{wind} (\tilde P)  = 1 - \mbox{wind}(f,e_{j,i_r}) - \mbox{wind}(e_{j,i_r},e) \quad (\!\!\!\!\!\!\mod 2),
$$
and, for any $p\in [s]$,
\[
\sum_{h=l_p+1}^{r_{p}-1} \mbox{int} (e_h) -\sum_{h=1}^{f_p} \mbox{int} ({\tilde e}^{(p)}_h) = 0 \quad (\!\!\!\!\!\!\mod 2),
\]
we easily conclude that
\[
\mbox{int} ({\mathcal P}) =\mbox{int} ({\tilde {\mathcal P}}), \quad (\!\!\!\!\!\!\mod 2).
\]
Since the right hand side in (\ref{eq:tal_formula_source}) coincides, up to the sign $(-1)^{N_{rj}}$, with the formula in \cite{Tal2}, the absolute value of the edge vector entry satisfies $|\left(E_{e}\right)_{j}|= |A^r_j|$. 

To complete the proof we need to show that $N_{rj}$ is the number of boundary sources in the interval $\big] \min \{ i_r, j\}, \max \{ i_r, j\}
\big[$. Without loss of generality, we may assume that $i_r<j$.
Since $\mathcal P$ is acyclic, all pivot rays ${\mathfrak l}_{i_l}$, $i_l\in [i_r -1] \cup [j, n]$ intersect $\mathcal P$ an even number of times, whereas all pivot rays ${\mathfrak l}_{i_l}$, $i_l\in [i_r +1, j]$ intersect $\mathcal P$ an odd number of times, while 
${\mathfrak l}_{i_r}$ intersects $\mathcal P$ either an even or an odd number of times. In the first case the 
winding number of the path is even, while in the second case it is odd (see Figure \ref{fig:pivot} [left]) and we get 
(\ref{eq:index_source}). 
\end{proof}

\subsection{The linear system on $({\mathcal N},\mathcal O,\mathfrak l)$}\label{sec:linear}

The edge vectors satisfy linear relations at the vertices of ${\mathcal N}$. In Theorem \ref{theo:consist} we prove that this set of linear relations provides an unique system of edge vectors on $({\mathcal N},\mathcal O,\mathfrak l)$ for any chosen set of independent vectors at the boundary sinks. Therefore the edge vectors computed in Theorem \ref{theo:null} may be also interpreted as the unique solution to this linear system of equations when we assign the $j$--th vector of the canonical basis to the $j$--th boundary vertex, $j\in [n]$.

\begin{lemma}
\label{lem:relations}
The edge vectors $E_e$ on $({\mathcal N},\mathcal O,\mathfrak l)$ satisfy the following linear equation at each vertex:  
\begin{enumerate}
\item  At each bivalent vertex with incoming edge $e$ and outgoing edge $f$:
\begin{equation}\label{eq:lineq_biv}
E_e =  (-1)^{\mbox{int}(e)+\mbox{wind}(e,f)} w_e E_f;
\end{equation}
\item At each trivalent black vertex with incoming edges $e_2$, $e_3$ and outgoing edge $e_1$ we have two relations:
\begin{equation}\label{eq:lineq_black}
E_2 =  (-1)^{\mbox{int}(e_2)+\mbox{wind}(e_2, e_1)}\ w_2 E_1,\quad\quad
E_3 =  (-1)^{\mbox{int}(e_3)+\mbox{wind}(e_3, e_1)}\  w_3 E_1;
\end{equation}
\item At each trivalent white vertex with incoming edge $e_3$ and outgoing edges $e_1$, $e_2$:
\begin{equation}\label{eq:lineq_white}
E_3 =  (-1)^{\mbox{int}(e_3)+\mbox{wind}(e_3, e_1)}\ w_3 E_1 + (-1)^{\mbox{int}(e_3)+\mbox{wind}(e_3, e_2)}\ w_3 E_2,
\end{equation}
\end{enumerate}
where $E_k$ denotes the vector associated to the edge $e_k$.
\end{lemma}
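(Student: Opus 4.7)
My plan is to prove all three identities by direct bookkeeping on the sum (\ref{eq:sum}) defining $E_e$. The key point is that the first edge of any walk starting at $e$ is essentially determined by the combinatorics at the initial vertex: at a bivalent vertex with incoming edge $e$ the only continuation is the outgoing edge $f$; at a trivalent black vertex with incoming edges $e_2,e_3$ the only outgoing edge is $e_1$, so any walk from $e_2$ or $e_3$ must continue along $e_1$; at a trivalent white vertex with incoming edge $e_3$ and outgoing edges $e_1,e_2$, walks from $e_3$ partition into two disjoint classes according to whether their second edge is $e_1$ or $e_2$.

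In each case a walk $\mathcal P$ starting at an edge $e$ (resp.\ $e_k$) is uniquely written as the concatenation $(e,\tilde{\mathcal P})$, where $\tilde{\mathcal P}$ is a walk starting at the next edge. Under concatenation the path statistics are additive: $w(\mathcal P)=w_e\,w(\tilde{\mathcal P})$, $\mbox{int}(\mathcal P)=\mbox{int}(e)+\mbox{int}(\tilde{\mathcal P})$ and $\mbox{wind}(\mathcal P)=\mbox{wind}(e,f)+\mbox{wind}(\tilde{\mathcal P})$ at the bivalent vertex, with the obvious analogues at the trivalent vertices. Substituting these identities into Definition~\ref{def:edge_vector} and factoring the common sign and weight out of the sum yields (\ref{eq:lineq_biv}) in the bivalent case and (\ref{eq:lineq_black}) in the trivalent black case. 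In the trivalent white case the partition produces two such factorised contributions, one with coefficient $(-1)^{\mbox{int}(e_3)+\mbox{wind}(e_3,e_1)}w_3$ multiplying $E_1$ and one with coefficient $(-1)^{\mbox{int}(e_3)+\mbox{wind}(e_3,e_2)}w_3$ multiplying $E_2$, summing to (\ref{eq:lineq_white}). Boundary sinks require no modification, since there the concatenation terminates with the single-edge walk prescribed by (\ref{eq:vec_bou_sink}).

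The one subtle point is that when $\mathcal G$ is cyclic the sum in (\ref{eq:sum}) contains infinitely many terms, and the concatenation bijection must be applied at the level of formal series. To make this rigorous I would work with the edge-flow representation of Theorem~\ref{theo:null}: prepending or deleting the first edge defines a bijection between ${\mathcal F}_{f,b_j}(\mathcal G)$ and ${\mathcal F}_{e,b_j}(\mathcal G)$ at bivalent and trivalent black vertices, and between ${\mathcal F}_{e_3,b_j}(\mathcal G)$ and ${\mathcal F}_{e_1,b_j}(\mathcal G)\sqcup{\mathcal F}_{e_2,b_j}(\mathcal G)$ at a trivalent white vertex, since the flow-balance conditions at the old initial vertex and at the new one exchange exactly under such insertion/deletion. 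This bijection preserves the conservative-flow denominator of (\ref{eq:tal_formula}) and scales the numerator by the sign and weight factor prescribed by the lemma, so the identities (\ref{eq:lineq_biv})--(\ref{eq:lineq_white}) hold as rational identities in the edge weights. Equivalently, for sufficiently small positive weights the series converges absolutely, the term-by-term rearrangement is legitimate, and the identity extends to all positive weights by rationality.
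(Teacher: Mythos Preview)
Your argument via the concatenation bijection on directed paths is precisely what the paper intends by its one-line proof (``follows directly from the definition of edge vector components as summations over all paths''), and your small-weights-plus-rationality justification is a clean way to make the rearrangement rigorous.

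One small caveat on the edge-flow variant: at a trivalent white vertex the map you describe is not literally a bijection between $\mathcal{F}_{e_3,b_j}$ and $\mathcal{F}_{e_1,b_j}\sqcup\mathcal{F}_{e_2,b_j}$. An edge flow $F\in\mathcal{F}_{e_1,b_j}$ may already contain both $e_2$ and $e_3$ (so that at $V$ one has in-degree~$1$ and out-degree~$2$, compatible with the defect condition at the initial vertex of $e_1$); for such $F$ one cannot prepend $e_3$, and the same edge set also lies in $\mathcal{F}_{e_2,b_j}$. These extra pairs do cancel in the signed sum (the cycle $e_1,\ldots,e_3$ contributes odd winding and even intersection number), so the flow-level identity still holds, but it is not the bare bijection you assert. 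Since your convergence argument already does the job, this does not affect the correctness of your proof.
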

This statement follows directly from the definition of edge vector components as summations over all paths starting from this edge.

In Figure \ref{fig:rules23} we illustrate these relations at trivalent vertices assuming that the incoming edges do not 
intersect the gauge boundary rays. For instance, if $\mathfrak l$ belongs to the sector $S_1$, at the white vertex $E_3 = w_3 ( E_2 -E_1)$, where $E_j$ denotes the vector associated to the edge $e_j$, $j\in [3]$, whereas at the black vertex $E_3= w_3 E_1$ and $E_2 =-w_2 E_1$.

\begin{figure}%[H]
  \centering
	{\includegraphics[width=0.37\textwidth]{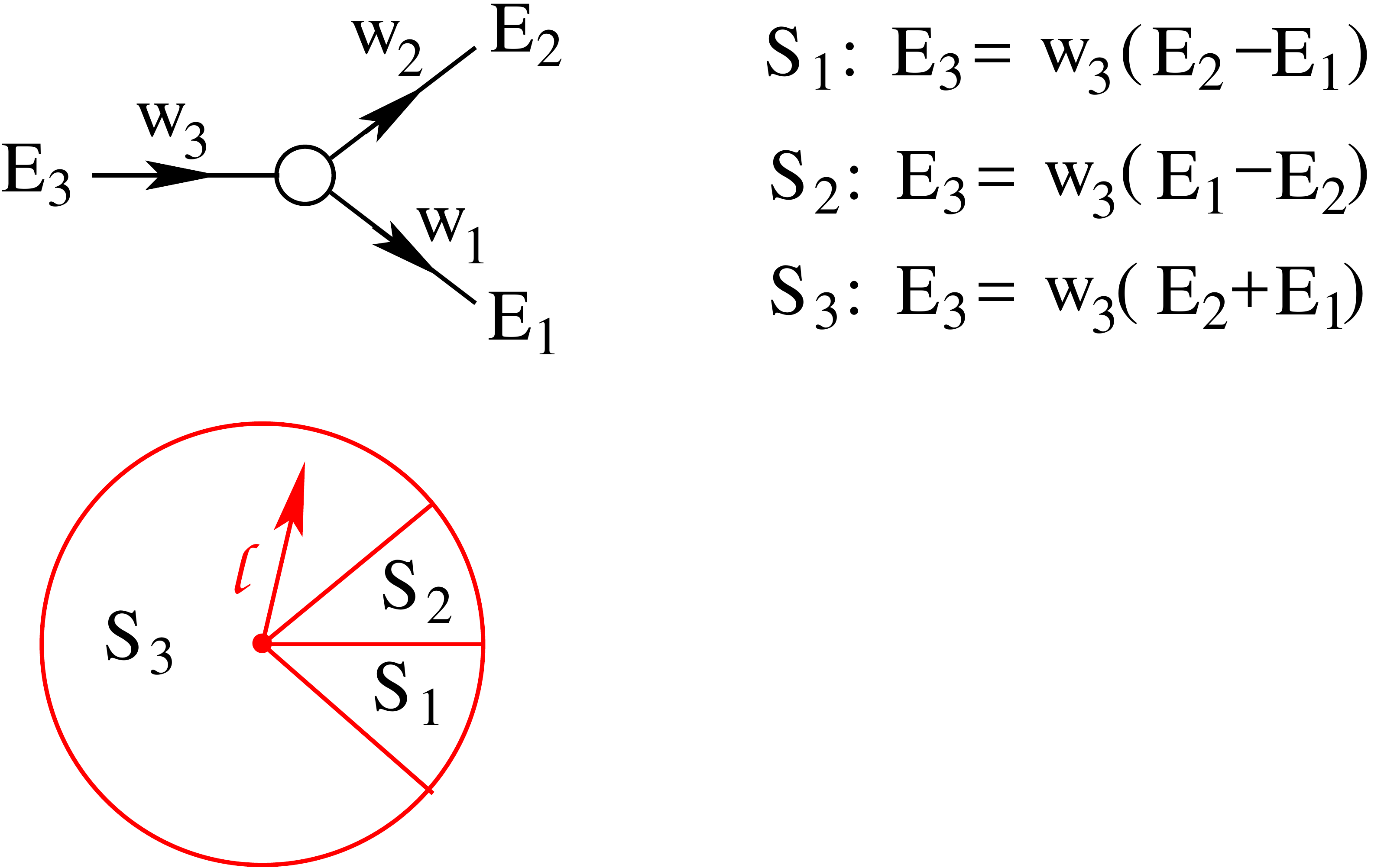}}
	\hfill
	{\includegraphics[width=0.49\textwidth]{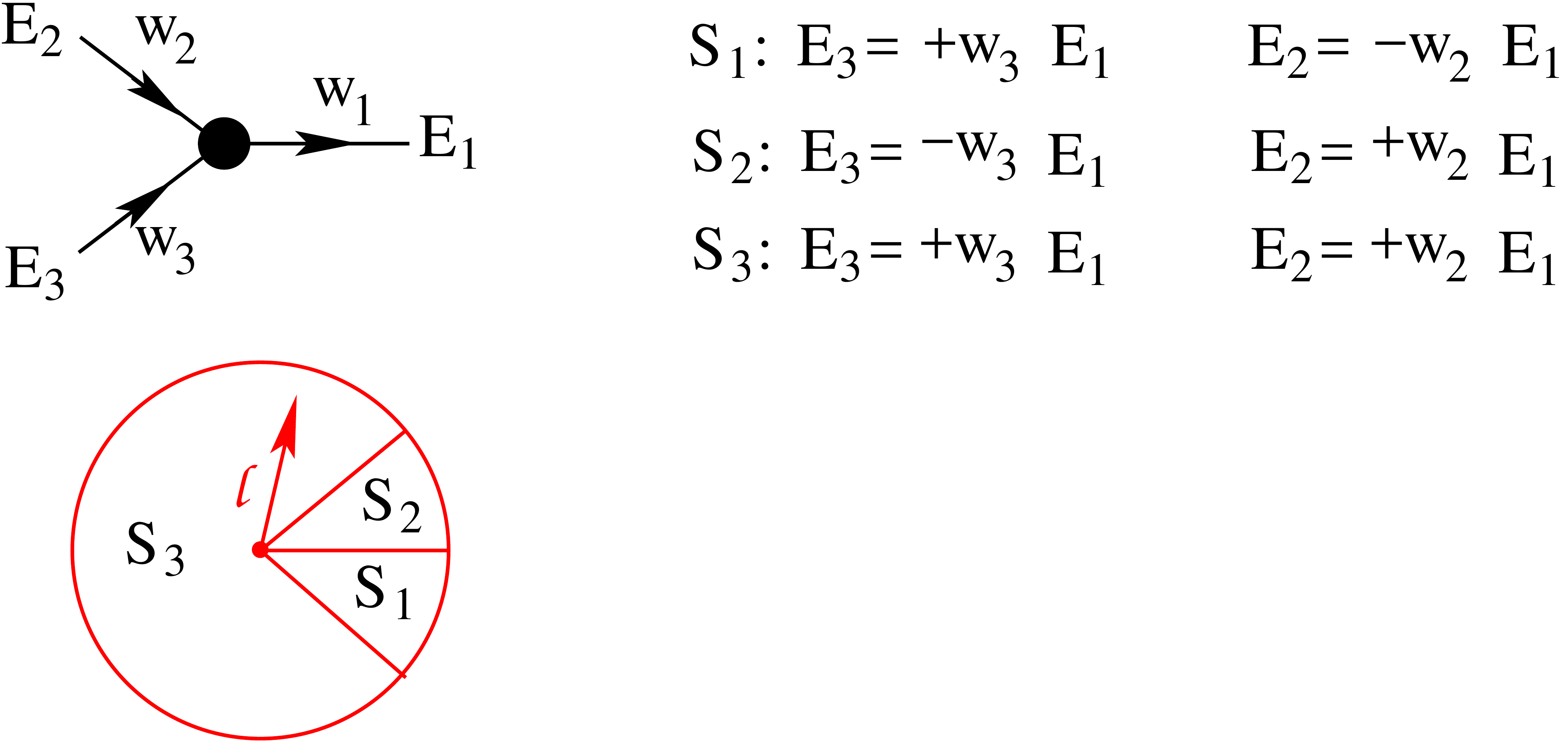}}
  \caption{\small{\sl The linear system at black and white vertices.}}\label{fig:rules23}
\end{figure}

Next we show that, for any given boundary condition at the boundary sink vertices, the linear system in Lemma \ref{lem:relations} defined by equations (\ref{eq:lineq_black}), (\ref{eq:lineq_white}) and (\ref{eq:lineq_biv}) at the internal vertices of $(\mathcal N, \mathcal O, \mathfrak{l})$ possesses an unique solution.

\begin{theorem}\textbf{Uniqueness of the system of edge vectors on $(\mathcal N, \mathcal O, \mathfrak{l})$.}\label{theo:consist}
Let $(\mathcal N, \mathcal O, \mathfrak{l})$ be a given PBDTP network with orientation $\mathcal O$ and gauge ray direction $\mathfrak{l}$. 

Then for any set $F_1,\dots, F_{n-k}$ of linearly independent vectors assigned to the boundary sinks $b_j$, the linear system of equations (\ref{eq:lineq_biv}), (\ref{eq:lineq_white}), (\ref{eq:lineq_black}) at all the internal vertices of $(\mathcal N, \mathcal O, \mathfrak{l})$ is consistent and provides an unique system of edge vectors on 
$(\mathcal N, \mathcal O, \mathfrak{l})$. 

Moreover, if we properly order variables and equations, the determinant of the matrix $M$ for this linear system is the sum of the weights of all conservative flows in $(\mathcal N,\mathcal O)$:
\begin{equation}
\label{eq:tal_den}
\det M = \sum\limits_{C\in {\mathcal C}(\mathcal G)}  w(C).
\end{equation}
\end{theorem}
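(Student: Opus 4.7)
The plan is to establish squareness of the system, exhibit a solution explicitly via Theorem \ref{theo:null}, and then prove the determinant identity (\ref{eq:tal_den}), from which invertibility of $M$ (and hence uniqueness for arbitrary boundary data) follows.

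\textbf{Counts.} Both equations and unknowns are in natural bijection with the set of edges whose head lies at an internal vertex. For the unknowns: the $n-k$ boundary-sink edges carry the prescribed $F_j$, while the $n_I$ internal edges and the $k$ boundary-source edges carry unknown vectors, giving $n_I + k$ unknowns. For the equations: each internal vertex $V$ contributes one equation for every incoming edge at $V$ (one at bivalent and white trivalent vertices, two at black trivalent vertices), and the identity $2 t_B + t_W + d_W + d_B = n_I + k$ of Section \ref{sec:gamma} gives the same count. Since the coefficients do not depend on the component index, the problem decouples into copies of a single scalar system of order $n_I+k$.

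\textbf{Existence.} By Lemma \ref{lem:relations} the edge vectors of Definition \ref{def:edge_vector} satisfy the linear relations at all internal vertices, with boundary data $F_j=\mathbf{e}_j$ at each sink $b_j$; Theorem \ref{theo:null} provides explicit rational formulas (\ref{eq:tal_formula}). For general linearly independent sink data $F_1,\dots,F_{n-k}$, linearity lets us replace $\mathbf{e}_j$ by $F_j$ in Definition \ref{def:edge_vector} to obtain a solution.

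\textbf{Determinant.} Order rows and columns of $M$ by the bijection (equation at $V$ with incoming edge $e$) $\leftrightarrow e$. Then $M_{e,e}=1$; moreover $M_{e,e'}\neq 0$ only when $e'$ is outgoing at $\mathrm{head}(e)$, and in that case $M_{e,e'}= -(-1)^{\mbox{int}(e)+\mbox{wind}(e,e')}\,w_e$. A Leibniz term $\mathrm{sgn}(\sigma)\prod_e M_{e,\sigma(e)}$ is nonzero iff every non-trivial cycle $(e_1,\dots,e_p)$ of $\sigma$ satisfies $\mathrm{head}(e_i)=\mathrm{tail}(e_{i+1})$; such cycles are precisely the simple directed cycles of $\mathcal N$, and the non-trivial part of $\sigma$ forms a conservative flow $C\in\mathcal C(\mathcal G)$ (with $\sigma=\mathrm{id}$ giving the empty flow). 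One $p$-cycle contributes
\[
(-1)^{p-1}\cdot (-1)^{p}\cdot(-1)^{E+O}\,w(\mathrm{cycle})=(-1)^{1+E+O}\,w(\mathrm{cycle}),
\]
with $E=\sum_i\mbox{int}(e_i)$ and $O=\sum_i\mbox{wind}(e_i,e_{i+1})$. Any simple closed curve in the interior of the disk crosses each gauge ray from a boundary source an even number of times, so $E\equiv 0\pmod 2$, while the winding parity statement of \cite{Pos} recalled after (\ref{eq:sum2}) gives $O\equiv 1\pmod 2$. Hence each cycle contributes $+w(\mathrm{cycle})$, each flow contributes $+w(C)$, and
\[
\det M=\sum_{C\in\mathcal C(\mathcal G)} w(C)>0,
\]
which gives invertibility of $M$ and uniqueness of the solution.

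\textbf{Main obstacle.} The delicate point is the parity statement $O\equiv 1\pmod 2$ under the exact conventions of Definition \ref{def:winding_pair}, including the degenerate antiparallel case (\ref{eq:s_antipar}). Once this and the even-intersection property for simple interior cycles are established, the rest of the argument is purely algebraic bookkeeping.
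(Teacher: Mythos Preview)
Your proof is correct and follows essentially the same approach as the paper's: both set up the square system indexed by edges with internal head, expand $\det M$ via the Leibniz formula, identify the nonzero permutations with conservative flows, and use that the total winding of a simple closed cycle is odd while its total gauge-ray intersection number is even. Your extra invocation of Theorem~\ref{theo:null} for existence is harmless but redundant, since positivity of $\det M$ already yields both existence and uniqueness.
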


\begin{proof} Let $g+1$ be the number of faces of ${\mathcal N}$ and let $t_W, t_B, d_W$ and $d_B$ respectively be the number of trivalent white, of trivalent black, of bivalent white, and of bivalent black internal vertices of ${\mathcal N}$ as in (\ref{eq:vertex_type}), where $n_I$ is the number of internal edges ({\sl i.e.} edges not connected to a boundary vertex) of ${\mathcal N}$. The total number of equations is $n_L=2t_B+ t_W+d_W+d_B =n_I+k$ whereas the total number of variables is equal to the total number of edges $n_I+n$. Therefore the number of free boundary conditions is $n-k$ and equals the number of boundary sinks. By definition, at the edge $e$ at a boundary sink $b_j$, the edge vector is $E_e = (-1)^{\mbox{int}(e)} w_e F_j$. 

Let us consider the inhomogeneous linear system obtained from equations (\ref{eq:lineq_biv}), (\ref{eq:lineq_white}), (\ref{eq:lineq_black}) in the $(n_I+k)$ unknowns given by the edge vectors not ending at the boundary sinks. Let us denote $M$ the $(n_I +k)\times (n_I+k)$ representative matrix of such linear system in which we enumerate edges so that each $r$-th row corresponds to the equation in (\ref{eq:lineq_biv}), (\ref{eq:lineq_white}), (\ref{eq:lineq_black}) in which the edge $e_r$ ending at the given vertex is in the l.h.s.. Then $M$ has unit diagonal by construction. 

If the orientation $\mathcal O$ is acyclic, then it is possible to enumerate the edges of $\mathcal N$ so that their indices in the right hand sight of each equation are bigger than that of the index on the left hand side. Therefore $M$ is upper triangular with unit diagonal, $\det M=1$ and the system of linear relations at the vertices has full rank.

Suppose now that the orientation is not acyclic. The standard formula expresses the determinant of $M$ as:
\begin{equation}
\label{eq:detM}
\det M = \sum\limits_{\sigma\in S_{n_L}} \mbox{sign}(\sigma)\prod\limits_{i=1}^{n_L}m_{i,\sigma(i)},
\end{equation}
where $S_{n_L}$ is the permutation group and $\mbox{sign}$ denotes the parity of the permutation $\sigma$. 

Any permutation can be uniquely decomposed as the product of disjoint cycles: 
$$
\sigma=(i_1,i_2,\ldots,i_{u_1+1})(j_1,j_2,\ldots,j_{u_2+1})\ldots(l_1,l_2,\ldots,l_{u_s+1}),
$$ 
and
$$
\mbox{sign}(\sigma)=(-1)^{u_1+u_2+\ldots+u_s}.
$$
On the other side, for $i\ne j$ $m_{i,j}\ne 0$ if and only if the ending vertex of the egde $i$ is the starting vertex of the edge $j$. Therefore $\prod\limits_{i=1}^{n_L} m_{i,\sigma(i)} \ne 0$ if and only if each cycle with $u_k>0$ in $\sigma$ coincides with a simple cycle in the graph, i.e. $\sigma$ encodes a conservative flow in the network. Therefore (\ref{eq:detM}) can be equivalently expressed as:
\begin{equation}
\label{eq:detM1}
\det M =\sum\limits_{C\in {\mathcal C}(\mathcal G)} \mbox{sign}(\sigma(C))\prod\limits_{i=1}^{n_L}m_{i,\sigma(i)}
\end{equation}
where
$\sigma(C)$ denotes the permutation corresponding to the conservative flow $C=C_1\cup C_2\cup\ldots\cup C_s$. Therefore 
\[
\mbox{sign}(\sigma(C))\prod\limits_{i=1}^{n_L}m_{i,\sigma(i)}=\prod\limits_{r=1}^s \left[(-1)^{u_r}\prod\limits_{t=1}^{u_r+1}(-1)^{1+\mbox{wind}(e_{i_t},e_{i_{t+1}})+\mbox{int}(e_{i_t})} w_{i_t}\right]= \prod\limits_{r=1}^s w(C_i)=  w(C),
\]
since the total winding of each simple cycle is $1\,\,(\!\!\!\mod 2)$, the total intersection number for each simple cycle is $0\,\,(\!\!\!\mod 2)$, and $w(C)=w(C_1)\cdots w(C_s)$. 
\end{proof}

\begin{example}
For the orientation and gauge ray direction as in Figure \ref{fig:Rules0}, the vectors $E_e$ on the Le--network coincide with those introduced
in the direct algebraic construction in \cite{AG3}.
\end{example}

\subsection{The dependence of $E_e$ on the gauge ray direction $\mathfrak l$}\label{sec:gauge_ray}
We now discuss the effect of a change of direction in the gauge ray $\mathfrak l$ on the vectors $E_e$. 

\begin{figure}
  \centering
	{\includegraphics[width=0.65\textwidth]{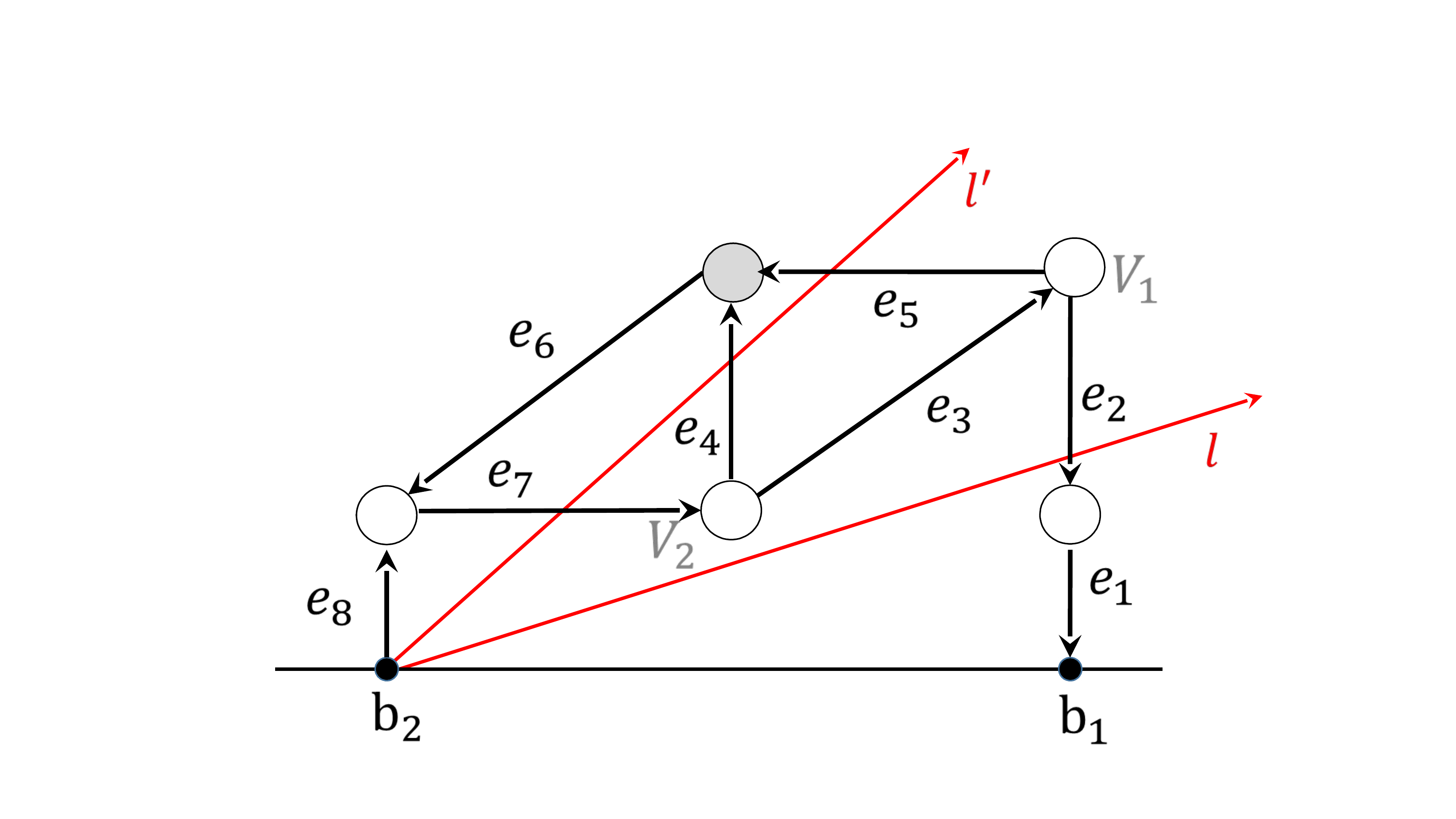}}
	\vspace{-.7 truecm}
  \caption{\small{\sl We illustrate Proposition \ref{prop:rays}.}}\label{fig:pivot}
\end{figure}

\begin{proposition}\label{prop:rays}\textbf{The dependence of the system of vectors on the ray direction $\mathfrak l$}
Let $({\mathcal N},\mathcal O)$ be an oriented network and consider two gauge directions $\mathfrak l$ and 
$\mathfrak{l}^{\prime}$.
\begin{enumerate}
\item For any boundary source edge $e_{i_r}$ the vector $E_{e_{i_r}}$ does not depend on the gauge direction $\mathfrak l$ 
and it coincides with the $r$-th row of the generalized RREF of $[A]$, associated to the pivot set $I$,  minus  the $i_r$--th vector of the canonical basis, which we denote $E_{i_r}$,
\begin{equation}
\label{eq:Ei}
E_{e_{i_r}} = A[r] -E_{i_r}.
\end{equation}
\item For any other edge $e$ we have
\begin{equation}\label{eq:vector_ray}
E^{\prime}_e = (-1)^{\mbox{int}(V_e)+\mbox{par(e)}} E_e,
\end{equation} 
where $E_e$ and  $E^{\prime}_e$ respectively are 
the edge vectors for $e$ for the gauge direction ${\mathfrak l}$ and ${\mathfrak l}^{\prime}$, $\mbox{par(e)}$ is 1 if 
we pass the ray $e$ during rotation of the gauge ray direction from ${\mathfrak l}$ to ${\mathfrak l}^{\prime}$ inside the disk and 0 otherwise,
whereas $\mbox{int}(V_e)$ denotes the number of gauge ray intersections with $e$ which pass the initial vertex $V_e$ of $e$ when we rotate from ${\mathfrak l}$ to ${\mathfrak l}^{\prime}$ inside the disk.
\end{enumerate}
\end{proposition}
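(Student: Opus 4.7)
The plan is to handle (1) directly via Corollary \ref{cor:bound_source} and to prove (2) by continuously deforming $\mathfrak{l}$ to $\mathfrak{l}'$ within the half-plane of admissible inward-pointing directions and tracking the parity changes of the exponents in the sum (\ref{eq:sum}).

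For (1), Corollary \ref{cor:bound_source} already provides the $\mathfrak{l}$-independent formula $(E_{e_{i_r}})_j = A^r_j$ for $j \neq i_r$, since the sign $(-1)^{N_{rj}}$ depends only on the orientation. Because a perfect orientation forbids any directed walk from terminating at a boundary source, $(E_{e_{i_r}})_{i_s} = 0$ for every $i_s \in I$; in particular $(E_{e_{i_r}})_{i_r} = 0$, while the pivot entry $A^r_{i_r}$ equals $1$. This yields (\ref{eq:Ei}).

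For (2), I fix a non-source edge $e$ and a path $\mathcal{P} = (e = e_1, \ldots, e_m)$ terminating at $b_j$, and rotate $\mathfrak{l}$ continuously to $\mathfrak{l}'$ through inward directions. The quantity $\mbox{wind}(\mathcal{P}) + \mbox{int}(\mathcal{P})$ is locally constant modulo $2$ and jumps only at generic events of two kinds: (A) $\mathfrak{l}$ crosses the direction of a graph edge $f$ (affecting winding terms), and (B) a gauge ray $\mathfrak{l}_{i_r}$ sweeps across a graph vertex $V$ (affecting intersection numbers). For (A), the contributions of the two adjacent pairs $(e_{k-1}, e_k)$ and $(e_k, e_{k+1})$ containing $f = e_k$ telescope to $0 \pmod{2}$ whenever $1 < k < m$, so only first- and last-edge contributions survive; the last-edge term $\mbox{par}(e_m)$ vanishes because $e_m = V_j b_j$ is outward-directed while $\mathfrak{l}$ stays inward throughout the rotation, leaving only $\mbox{par}(e)$. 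For (B), the intersection of $\mathfrak{l}_{i_r}$ with each edge incident to $V$ flips, except for the unique edge $f_V$ that joins $V$ directly to $b_{i_r}$ (if any): that exceptional edge cannot gain or lose interior crossings with the ray since both share the endpoint $b_{i_r}$.

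A case analysis of (B) then gives: if $V = V_e$, the unique path-edge $e$ at $V_e$ flips (since the perfect orientation forces $f_V$ to be directed from $b_{i_r}$ into $V_e$, so $e \neq f_V$), contributing $1 \pmod{2}$ for each sweep, hence a total of $\mbox{int}(V_e) \pmod 2$; if $V$ is any other internal vertex of $\mathcal{P}$, the two path-edges at $V$ either both flip or neither does, with the second alternative requiring $f_V$ to appear as a path-edge, again forbidden by the perfect orientation at $b_{i_r}$; sweeps over $b_j$ cannot occur because they would force $\mathfrak{l}$ to point along the common boundary line of the $b_r$'s, outside the admissible inward half-plane. Summing all event contributions produces $\Delta(\mbox{wind}(\mathcal{P}) + \mbox{int}(\mathcal{P})) \equiv \mbox{par}(e) + \mbox{int}(V_e) \pmod{2}$, uniformly over all paths $\mathcal{P}:e \to b_j$, and applying this parity shift term-by-term to the definition (\ref{eq:sum}) yields (\ref{eq:vector_ray}). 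The chief obstacle is the case analysis under (B): one must rigorously verify that intersection flips at internal vertices $V \neq V_e$ always occur in pairs mod $2$, including at vertices adjacent to boundary sources, and this pairing relies essentially on the perfect orientation condition which prevents the source-incident edge $f_V$ from lying on any non-source walk.
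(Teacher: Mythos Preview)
Your proposal is correct and follows essentially the same approach as the paper: both arguments rotate $\mathfrak{l}$ continuously to $\mathfrak{l}'$ and show that the parity of $\mbox{wind}(\mathcal{P})+\mbox{int}(\mathcal{P})$ changes only through the initial edge $e$ (giving $\mbox{par}(e)$) and the initial vertex $V_e$ (giving $\mbox{int}(V_e)$), with the paper invoking edge loop-erased walks where you work directly with general paths. Your case analysis under (B), in particular the discussion of the source-incident edge $f_V$, supplies more detail than the paper's proof, but the underlying mechanism is identical.
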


\begin{proof}
Formula (\ref{eq:Ei}) follows from Corollary~\ref{cor:bound_source}: indeed (\ref{eq:index_source}) implies that the components of $E_{e_{\bar i}}$ are invariant with 
respect to changes of the gauge direction. Finally, since there is no path to the boundary source $b_{\bar i}$, therefore the corresponding component of the edge vector is zero. 

To prove the second statement, we show that, for a given initial edge $e$, the sign contribution of each edge loop--erased walk starting at $e$ is either the same before and after the gauge ray rotation or changes in the same way for every walk independently of the destination $b_j$.

Indeed let us consider a monotone continuous change of the gauge direction from initial $\mathfrak l(0)$ to final $\mathfrak l(1)$. For any given edge loop--erased walk $\mathcal P$, for every $t\in(0,1)$  such that  $\mathfrak l(t)$ forms zero angle with any edge of $\mathcal P$ distinct from the initial one, the parity of the  winding number remains unchanged. It changes  $1\  (\!\!\!\!\mod 2)$ only if  $\mathfrak l(t)$ forms zero angle with the initial edge $e$ of $\mathcal P$: in such case we settle $\mbox{par(e)}=1$. We remark that $\mathfrak l(t)$ can never form a zero angle with the edge at the boundary sink in $\mathcal P$. 

Similarly, if one of the gauge lines passes through a vertex in $\mathcal P$ distinct from the initial vertex, then the parity of the intersection number of $\mathcal P$ remains unchanged. It changes  $1\  (\!\!\!\!\mod 2)$ only if one of the gauge rays passes through the initial vertex of $\mathcal P$ (again it can never pass through the final vertex).    

Since the first edge $e$ and its initial vertex are common to all paths starting at $e$, all components of the 
vector $E_e$ either remain invariant, or are simultaneously multiplied by $-1$.
\end{proof}

\begin{example}
We illustrate Proposition \ref{prop:rays} in Figure \ref{fig:pivot}. In the rotation from $\mathfrak{l}$ to $\mathfrak{l}^{\prime}$ inside the disk, the gauge ray starting at $b_2$ passes the vertices $V_1$ and $V_2$ and the direction $e_3$. Therefore
$E^{\prime}_{e_i} = - E_{e_i}$, for $i=2,4,5$, whereas $E^{\prime}_{e_i} = E_{e_i}$ for all other edges.
\end{example}

\subsection{The dependence of $E_e$ on the orientation of the graph}\label{sec:orient}
We now explain how the system of vectors change when we change the orientation of the graph. Following \cite{Pos}, a change of orientation can be represented as a finite composition of elementary changes of orientation, each one consisting in
a change of orientation either along a simple cycle ${\mathcal Q}_0$ or
along a non-self-intersecting oriented path ${\mathcal P}$ from a boundary source $i_0$ to a boundary sink $j_0$.
Here we use the standard rule that we do not change the edge weight if the edge does not change orientation, otherwise we replace the original weight by its reciprocal.

\begin{theorem}\label{theo:orient}\textbf{The dependence of the system of vectors on the orientation of the network.}
Let ${\mathcal N}$ be a PBDTP network representing a given point $[A]\in \S \GTNN$ and $\mathfrak l$ be a gauge ray direction. Let $\mathcal O$, ${\hat {\mathcal O}}$ be
two perfect orientations of ${\mathcal N}$ for the bases $I,I^{\prime}\in {\mathcal M}$. Let $A[r]$, $r\in [k]$, denote 
the $r$-th row of a chosen representative matrix of $[A]$.
Let $E_e$ be the system of 
vectors associated to $({\mathcal N},\mathcal O, \mathfrak l)$ and satisfying the boundary conditions $E[j]$ at $b_j$, $j \in \bar I$, whereas $\hat E_e$ are those associated to $({\mathcal N},{\hat {\mathcal O}}, \mathfrak l)$ and satisfying the boundary conditions $E[l]$ at $b_l$, $l \in \bar I^{\prime}$.  

Then for any $e\in {\mathcal N}$, there exist real constants $\alpha_e\ne 0$, $c^r_e$, $r\in [k]$ such that
\begin{equation}\label{eq:orient}
\hat E_e = \alpha_e E_e + \sum_{r=1}^k c^r_e A[r].
\end{equation}
\end{theorem}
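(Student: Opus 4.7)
By \cite{Pos}, any change of perfect orientation on $\mathcal N$ decomposes into a finite sequence of elementary moves: either (i) reversal of all edges along a simple oriented cycle $\mathcal Q_0$ (which preserves $I$ and hence $A$), or (ii) reversal along a non-self-intersecting directed path $\mathcal P$ from a boundary source $b_{i_0}$, $i_0=i_{r_0}\in I$, to a boundary sink $b_{j_0}\in\bar I$ (which replaces $I$ by $I'=(I\setminus\{i_0\})\cup\{j_0\}$). Since the class of transformations (\ref{eq:orient}) is closed under composition, it suffices to prove (\ref{eq:orient}) for each elementary move.

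For case (i) I would argue directly from the rational formula (\ref{eq:tal_formula}) of Theorem \ref{theo:null}. Reversing $\mathcal Q_0$ sets up bijections $\mathcal F_{e,b_j}(\mathcal G,\mathcal O)\leftrightarrow\mathcal F_{e,b_j}(\mathcal G,\hat{\mathcal O})$ and $\mathcal C(\mathcal G,\mathcal O)\leftrightarrow\mathcal C(\mathcal G,\hat{\mathcal O})$ obtained by swapping $\mathcal Q_0$ with $\hat{\mathcal Q}_0$ whenever all the edges of the cycle occur in the flow. Since reversed edges carry reciprocal weights, the weight transformation is uniform (a factor $w(\hat{\mathcal Q}_0)/w(\mathcal Q_0)$ on each affected flow), and a parity computation for $\mbox{wind}+\mbox{int}$ produces a sign independent of the destination $b_j$. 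This yields $\hat E_e=\alpha_e E_e$ with $c_e^r=0$ for all $r$.

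For case (ii) the new RREF matrix $A'$ associated to the base $I'$ has rows in the row-span of $A$, and the change of boundary data between $\mathcal O$ and $\hat{\mathcal O}$ is localized at $b_{i_0}$ and $b_{j_0}$ and can be absorbed by the single row $A[r_0]$, thanks to the identity $A[r_0]=E_{e_{i_0}}+E_{i_0}$ of Corollary \ref{cor:bound_source}. I would then invoke the uniqueness part of Theorem \ref{theo:consist}: it suffices to exhibit vectors $\tilde E_e:=\alpha_e E_e+c_e A[r_0]$ (so $c_e^r=0$ for $r\neq r_0$) that satisfy the vertex relations (\ref{eq:lineq_biv})-(\ref{eq:lineq_white}) for $\hat{\mathcal O}$ at every internal vertex and the prescribed boundary values at every boundary sink of $\hat{\mathcal O}$. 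For edges $e$ off $\mathcal P$ the two linear systems coincide, so one takes $c_e=0$ and $\alpha_e=1$; along $\mathcal P$ the pair $(\alpha_e,c_e)$ is determined inductively starting from $b_{j_0}$ (where the new source condition fixes the initial values) and telescoping down to the required sink condition $\tilde E=E_{i_0}$ at $b_{i_0}$, the closure being exactly the content of Corollary \ref{cor:bound_source}.

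The substantive difficulty lies in the path-reversal case: at each internal vertex $V$ on $\mathcal P$, the reversal of an adjacent edge can turn an outgoing half-edge into an incoming one, converting a white-vertex equation (\ref{eq:lineq_white}) into a black-vertex equation (\ref{eq:lineq_black}) (or vice versa) and altering the local $\mbox{wind}$ and $\mbox{int}$ prefactors. One has to verify that a single coherent choice of $(\alpha_e,c_e)$ along $\mathcal P$ solves all the modified vertex relations simultaneously and that the telescoping terminates correctly at $b_{i_0}$. Once this combinatorial sign check is carried out, the theorem follows from linearity and uniqueness; the nonvanishing $\alpha_e\neq 0$ is automatic since the telescoping factors are nonzero products of edge weights times $\pm 1$.
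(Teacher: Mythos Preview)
Your overall strategy---decompose into elementary reversals following \cite{Pos}, treat each case, then compose via uniqueness (Theorem \ref{theo:consist})---is exactly the paper's. The gaps are in the two elementary cases.

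For the path reversal (ii), the claim that ``for edges $e$ off $\mathcal P$ the two linear systems coincide, so one takes $c_e=0$ and $\alpha_e=1$'' is wrong on two counts. First, the gauge rays emanate from boundary \emph{sources}; swapping the roles of $i_0$ and $j_0$ replaces the ray $\mathfrak l_{i_0}$ by $\mathfrak l_{j_0}$, which alters $\mbox{int}(e)$---and hence the sign prefactors in (\ref{eq:lineq_biv})--(\ref{eq:lineq_white})---for edges throughout the disk, not only along $\mathcal P$. Second, even if the internal relations off $\mathcal P$ were unchanged, the boundary data at $b_{i_0}$ and $b_{j_0}$ have changed, and by uniqueness this perturbation propagates globally: $c_e\neq 0$ in general for every edge from which $b_{j_0}$ was reachable in $\mathcal O$. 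The paper handles this via a two-step argument (Lemma \ref{lemma:path}): first solve the \emph{old} linear system with the modified boundary datum $E_{j_0}-\tfrac{1}{A^{r_0}_{j_0}}A[r_0]$ at $b_{j_0}$, obtaining $\tilde E_e=E_e+c_e A[r_0]$ with $c_e$ determined everywhere by Theorem \ref{theo:consist}; then multiply by explicit signs $(-1)^{\epsilon(e)}$ (and a factor $1/w_e$ on $\mathcal P$), where $\epsilon(e)$ is defined from a $\pm$ marking of the regions cut out by $\mathcal P$, $\mathfrak l_{i_0}$ and $\mathfrak l_{j_0}$. The check that these corrected vectors satisfy the \emph{new} vertex relations is the combinatorial work carried out in Appendix \ref{app:orient}.

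For the cycle reversal (i), the bijection you describe---``swapping $\mathcal Q_0$ with $\hat{\mathcal Q}_0$ whenever all the edges of the cycle occur in the flow''---does not account for flows that use only a proper arc of $\mathcal Q_0$, which is the generic situation. The correct flow-level correspondence replaces each used arc of $\mathcal Q_0$ by its complementary reversed arc, and the resulting weight ratio then depends on whether the flow touches $\mathcal Q_0$ at all, so it is not uniform and does not directly yield a single scalar $\alpha_e$. The paper sidesteps this entirely: Lemma \ref{lemma:cycle} is proved by the same region-marking sign device $\epsilon(e)$ as in the path case, verifying directly that $(-1)^{\epsilon(e)}E_e$ (respectively $(-1)^{\epsilon(e)}E_e/w_e$ on $\mathcal Q_0$) solves the reversed system.
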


\begin{proof}
In Lemmas \ref{lemma:path} and \ref{lemma:cycle} we prove Theorem \ref{theo:orient} in the case of elementary changes of orientation. In the general case, the change of orientation is represented by the composition of a finite set of such elementary transformations and, by construction the edge vectors $\hat E_e$ solve the linear system of relations and satisfy the desired boundary conditions on $({\mathcal N},{\hat {\mathcal O}}, \mathfrak l)$. Since the solution of the linear system at the internal vertices is unique (Theorem \ref{theo:consist}), then the edge vectors $\hat E_e$ are the edge vectors in $({\mathcal N},{\hat {\mathcal O}}, \mathfrak l)$.
\end{proof}

Both in the case of an elementary change of orientation along a non-self-intersecting directed path $\mathcal P_0$ from a boundary source to a boundary sink or along a simple cycle ${\mathcal Q}_0$, we provide the explicit relation between the edge vectors in the two orientations.
If the change of orientation is ruled by $\mathcal P_0$, we use a two-steps proof:
\begin{enumerate}
\item We conveniently change the boundary conditions at the boundary sinks in the initial orientation of the
network $({\mathcal N},\mathcal O,\mathfrak l)$, we compute the system of vectors ${\tilde E}_e$ satisfying these new boundary conditions and we give explicit relations between the two systems of vectors $E_e$ and  ${\tilde E}_e$ on $({\mathcal N},\mathcal O,\mathfrak l)$;
\item Then, we show that the system of vectors ${\hat E}_e$ in (\ref{eq:hat_E_P}), defined on $({\mathcal N},{\hat {\mathcal O}},\mathfrak l)$ in terms of ${\tilde E}_e$, is the required system of vectors associated to the given change of orientation of the network. 
\end{enumerate}

For any elementary change of orientation, we assign an index ${\epsilon}(e)$ to each edge of the network in its \textbf{initial orientation}.
Let $\mathcal P_0$ be a non-self-intersecting oriented path from a boundary source $i_0$ to a boundary sink $j_0$ in the initial orientation of ${\mathcal N}$ and divide the interior of the disk into a finite number of regions bounded by the gauge ray ${\mathfrak l}_{i_0}$ oriented 
upwards, the gauge ray ${\mathfrak l}_{j_0}$ oriented downwards, the path $\mathcal P_0$ oriented as in $({\mathcal N},\mathcal O,\mathfrak l)$ and the boundary 
of the disk divided into two arcs, each oriented from $j_0$ to $i_0$. Then mark a region with a $+$ if its boundary is
oriented, otherwise mark it with $-$ (see Figure~\ref{fig:inv_symb}).

\begin{figure}%[H]
  \centering{\includegraphics[width=0.4\textwidth]{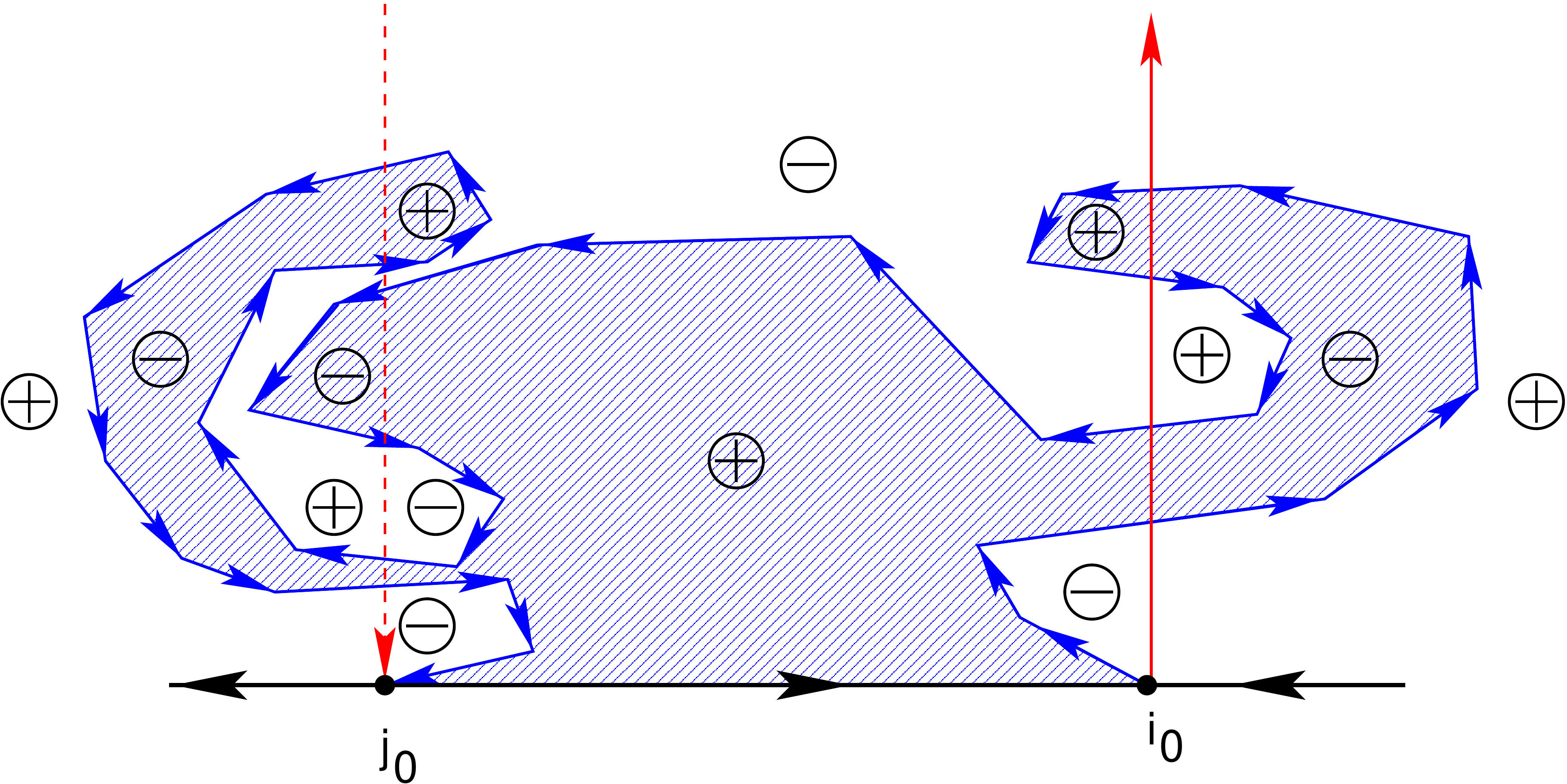}
	\hspace{.45 truecm}
	\includegraphics[width=0.4\textwidth]{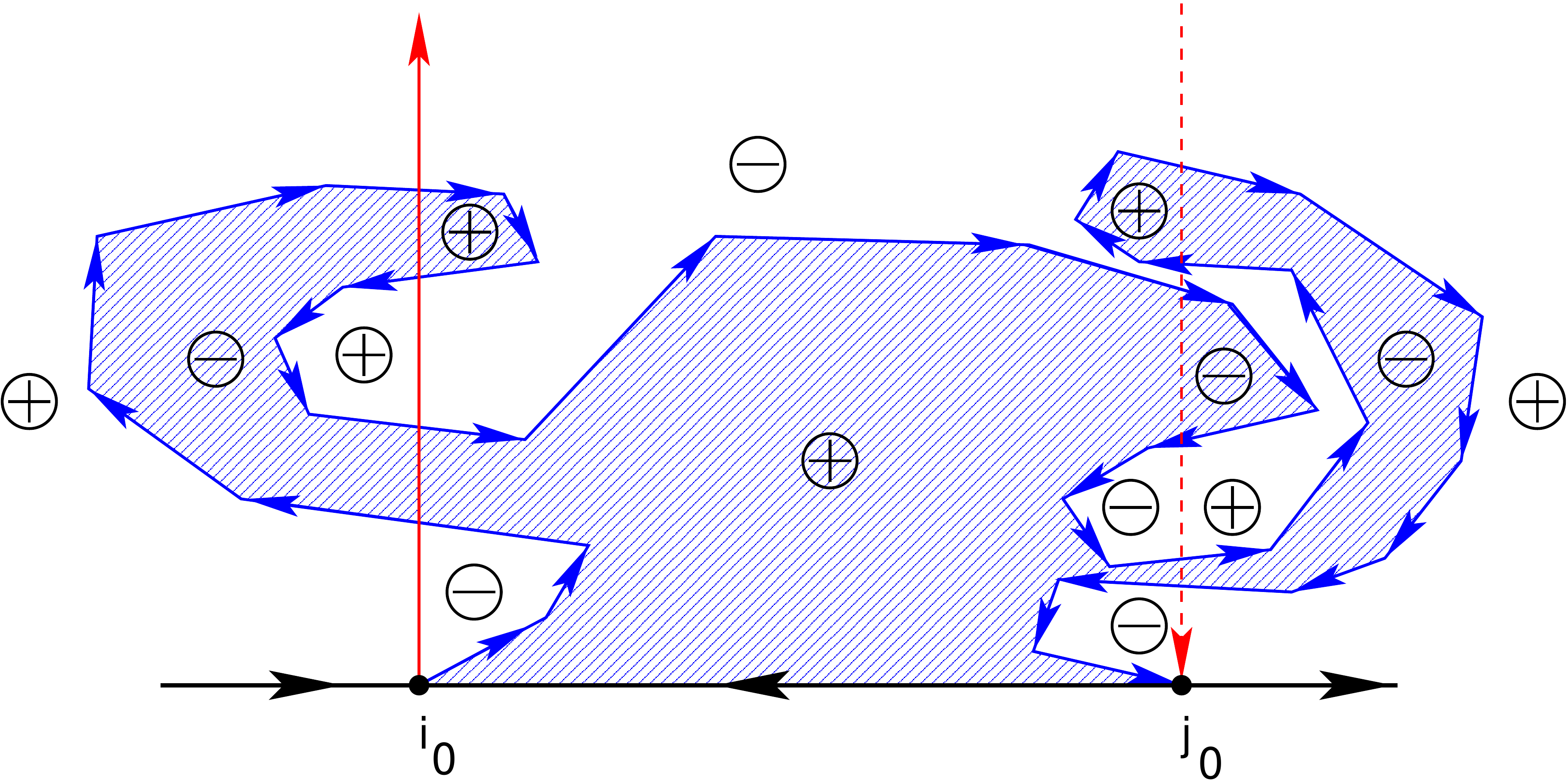}
  \caption{\small{\sl We illustrate the marking of the regions.}}\label{fig:inv_symb}}
\end{figure}

Similarly a closed oriented simple path $\mathcal Q_0$ divides the interior of the disk into two regions: we mark the region external to $\mathcal Q_0$ with a $+$ and the internal region with $-$. 

If the edge $e\not \in \mathcal P_0$ (respectively $e\not \in \mathcal Q_0$), we assign it an index $\epsilon(e)$ in the original orientation $\mathcal O$ of ${\mathcal N}$ as follows
\begin{equation}\label{eq:eps_not_path}
\epsilon (e) =
\left\{ 
\begin{array}{ll} 
0 & \mbox{ if the starting vertex of } e \mbox{ belongs to a } + \mbox{ region, }\\
1 & \mbox{ if the starting vertex of } e \mbox{ belongs to a } - \mbox{ region, }\\
\end{array}
\right.
\end{equation}
where, in case the initial vertex of $e$ belongs to $\mathcal P_0$ or $\mathcal Q_0$, we make an infinitesimal shift of the starting vertex in the direction of $e$ before assigning the edge to a region. 

If the edge $e\in \mathcal P_0$ (respectively $e\in \mathcal Q_0$), we assign it the index 
\begin{equation}\label{eq:eps_on_path}
\epsilon (e) = \epsilon_1 (e) + \epsilon_2 (e) +\epsilon_3(e),
\end{equation}
using the initial orientation $\mathcal O$ as follows:
\begin{enumerate}
\item We look at the region to the left and near the ending point of $e$, and assign index 
\[
\epsilon_1 (e) =
\left\{ 
\begin{array}{ll} 
0 & \mbox{ if the region is marked with } + ,\\
1 & \mbox{ if the region is marked with } - ;\\
\end{array}
\right.
\]
\item We consider the ordered pair $(e, \mathfrak{l})$ and assign index 
\[
\epsilon_2 (e) = \frac{1-s(e,\mathfrak{l})}{2}
\]
with $s(\cdot,\cdot)$ as in (\ref{eq:def_s})
\item We count the number of intersections with the gauge lines ${\mathfrak l}_{i_r}$, $r \in [k]$, and we define
\[
\epsilon_3 (e) = \mbox{ind}_{\mathcal O} (e) = \# \{ \mbox { intersections with gauge lines } \mathfrak{l}_{i_r}, \; r \in [k] \}.
\]
\end{enumerate}

\begin{lemma}\label{lemma:path}\textbf{The effect of a change of orientation along a non self--intersecting path from a boundary source to a boundary sink.}
Let $I = \{ 1\le i_1 < i_2 < \cdots < i_k\le n\}$ and $\bar I = \{ 1\le j_1 < j_2 < \cdots < j_{n-k}\le n\}$ respectively be the pivot and non--pivot indices in the representative RREF matrix $A$ associated to $({\mathcal N},\mathcal O,\mathfrak l)$. Assume that all the edges at the boundary vertices have unit weight and that no gauge ray intersects such edges in the initial orientation. 
Assume that we change the orientation  along a non-self-intersecting oriented path $\mathcal P_0$ from a boundary source $i_0$ to a 
boundary sink $j_0$. Let $E_e$ and $\tilde E_e$ be the systems of vectors on $({\mathcal N},\mathcal O,\mathfrak l)$ corresponding to the following choices of boundary conditions at edges $e_j$ ending at the boundary sinks $b_j$, $j\in \bar I$:
\begin{equation}
\label{eq:orient1}
E_{e_{j}}=E_{j}, \quad\quad\quad\quad
\tilde E_{e_{j}}= \left\{ \begin{array}{ll} E_{j} & \mbox{ if } j\not = j_0;\\
E_{j_0}-\frac{1}{A^{r_0}_{j_0}} A[r_0], &\mbox{ if } j = j_0,
\end{array}
\right.
\end{equation}
where $E_{j}$ is the $j$--th vector of the canonical basis, whereas $A[r_0]$ is the row of the matrix $A$ associated to the source $i_0$.

Define the system of vectors $\hat E_e$,  $e\in {\mathcal N}$, as follows: 
\begin{equation}\label{eq:hat_E_P}
{\hat E}_e = \left\{ \begin{array}{ll}
 (-1)^{\epsilon(e)} {\tilde E}_e, & \mbox{ if } e\not \in \mathcal P_0, \mbox{ with } \epsilon(e) \mbox{ as in (\ref{eq:eps_not_path})},\\
\displaystyle \frac{(-1)^{\epsilon(e)}}{w_e} {\tilde E}_e, & \mbox{ if } e\in \mathcal P_0, \mbox{ with } \epsilon(e) \mbox{ as in (\ref{eq:eps_on_path})}.
\end{array}\right.
\end{equation}

Then the system $\hat E_e$ defined in (\ref{eq:hat_E_P}) is the system of vectors on the network $({\mathcal N},{\hat {\mathcal O}},\mathfrak l)$ satisfying the boundary conditions
\begin{equation}
\label{eq:orient2}
\hat E_{e_{j}}= \left\{ \begin{array}{ll} (-1)^{\mbox{int} (e_j)^{\prime}} E_{j} & \mbox{ if } j\in \bar I \backslash \{ j_0\}\\
E_{i_0}, &\mbox{ if } j = i_0,
\end{array}
\right.
\end{equation}
where $\mbox{int}(e)^{\prime}$ is the number of intersections of the gauge ray $\mathfrak{l}_{j_0}$  with $e$.
\end{lemma}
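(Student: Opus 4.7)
The plan is to invoke the uniqueness statement in Theorem \ref{theo:consist}: the system of edge vectors on $(\mathcal N, \hat{\mathcal O}, \mathfrak l)$ is uniquely determined by the linear relations (\ref{eq:lineq_biv})--(\ref{eq:lineq_white}) at the internal vertices together with the prescribed boundary data at the sinks of $\hat{\mathcal O}$. It therefore suffices to show that the collection $\{\hat E_e\}$ defined in (\ref{eq:hat_E_P}) satisfies both the boundary conditions (\ref{eq:orient2}) and the vertex relations in the new orientation $\hat{\mathcal O}$.

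To check the boundary conditions, I will treat the two families of sinks of $\hat{\mathcal O}$ separately. For $j \in \bar I \setminus \{j_0\}$, the edge $e_j$ is not on $\mathcal P_0$, has unit weight, and by hypothesis is not crossed by any gauge ray of $\mathcal O$, so $\tilde E_{e_j} = E_j$ and only the new ray $\mathfrak l_{j_0}$ can now cross $e_j$; the $\pm$-marking of regions then forces $\epsilon(e_j) \equiv \mbox{int}(e_j)' \pmod{2}$, yielding $\hat E_{e_j} = (-1)^{\mbox{int}(e_j)'} E_j$. For the new sink $b_{i_0}$, I will compute $\tilde E_{e_{i_0}}$ by linearity of the system in its boundary data: the shift $\tilde E_e - E_e$ equals $-(E_e)_{j_0}/A^{r_0}_{j_0}\cdot A[r_0]$, and Corollary~\ref{cor:bound_source} gives $E_{e_{i_0}} = A[r_0] - E_{i_0}$ with $(E_{e_{i_0}})_{j_0} = A^{r_0}_{j_0}$ (since $i_0 \neq j_0$). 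Hence $\tilde E_{e_{i_0}} = -E_{i_0}$, and a direct geometric check that $\epsilon(e_{i_0})$ is odd (the starting vertex of $e_{i_0}$ lies in a $-$ region because $\mathcal P_0$ leaves $b_{i_0}$ along the oriented boundary arc) yields $\hat E_{e_{i_0}} = E_{i_0}$.

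Next, I will verify the linear relations at each internal vertex $V$ in the orientation $\hat{\mathcal O}$, splitting into two cases. If $V \notin \mathcal P_0$, all incident edges keep their orientation and weight; the only changes in the relation (\ref{eq:lineq_biv})--(\ref{eq:lineq_white}) are in the intersection count $\mbox{int} \to \mbox{int}'$ (because $\mathfrak l_{j_0}$ replaces $\mathfrak l_{i_0}$ as a gauge ray) and are unaffected on the winding side since no edge is flipped. The key combinatorial point is that $\epsilon$ is constant on the connected components of the complement of $\mathcal P_0 \cup \mathfrak l_{i_0} \cup \mathfrak l_{j_0}$, and its jumps across these curves match exactly the parity changes in $\mbox{int}'$ along each incoming edge at $V$; hence multiplying all vectors at $V$ by their respective $(-1)^{\epsilon(\cdot)}$ preserves the relation. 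If $V \in \mathcal P_0$, then exactly two of the three incident edges flip orientation and have their weights inverted, while the coloring of the graph is preserved by path reversal (one incoming and one outgoing edge of $V$ on $\mathcal P_0$ swap roles, so the number of incoming edges at $V$ does not change); the factor $(-1)^{\epsilon_1+\epsilon_2+\epsilon_3}/w_e$ is engineered so that the old relation at $V$ in $\mathcal O$ is converted into the correct relation at $V$ in $\hat{\mathcal O}$.

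The main obstacle is the bookkeeping at a vertex $V$ on $\mathcal P_0$: one must show that the three pieces of $\epsilon$ align with the geometric changes induced by the reversal. Roughly, $\epsilon_2$ records the winding sign flip caused by an edge reversing direction relative to $\mathfrak l$; $\epsilon_1$ records which side of $\mathcal P_0$ the off-path edge comes from, accounting for the reorganization of the winding between the two path edges and the third edge; and $\epsilon_3$ absorbs the pre-existing intersection parity of the reversed edge. Once the identity of signs and weights is established at one trivalent black and one trivalent white vertex on $\mathcal P_0$ (with bivalent vertices being a degenerate sub-case), the remaining local configurations follow by an analogous argument. Combining the boundary check with these vertex checks and invoking Theorem~\ref{theo:consist} for uniqueness will yield the claim.
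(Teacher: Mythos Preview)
Your proposal is correct and follows essentially the same route as the paper: invoke the uniqueness in Theorem~\ref{theo:consist}, verify the boundary conditions at the sinks of $\hat{\mathcal O}$, and verify the vertex relations separately for vertices off and on $\mathcal P_0$ (the paper defers the on-path case to Appendix~\ref{app:orient}, where it is handled via Lemmas~\ref{lem:wind}--\ref{lem:indip} and a case analysis much as you sketch).

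One small correction at the boundary edge $e_{i_0}$: this edge \emph{is} on $\mathcal P_0$, so $\epsilon(e_{i_0})$ is computed from (\ref{eq:eps_on_path}), namely $\epsilon_1+\epsilon_2+\epsilon_3$, not from (\ref{eq:eps_not_path}) as your parenthetical (``the starting vertex of $e_{i_0}$ lies in a $-$ region'') suggests. The paper's argument is that $\epsilon_3(e_{i_0})=0$ by the hypothesis that no gauge ray meets boundary edges in the initial orientation, and that $\epsilon_1(e_{i_0})+\epsilon_2(e_{i_0})=1$ because either the region to the left of the ending point of $e_{i_0}$ is marked $+$ and the pair $(e_{i_0},\mathfrak l)$ is negatively oriented, or the region is $-$ and the pair is positively oriented (see Figure~\ref{fig:edgei_0}). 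Your conclusion $\epsilon(e_{i_0})=1$ is correct, but the justification should use the on-path formula.
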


\begin{remark}
To simplify the proof of Lemma \ref{lemma:path}, we assume without loss of generality that the edges at boundary vertices have unit weight and that no gauge ray intersects them in the initial orientation. This hypothesis may be always fulfilled modifying the initial network using the weight gauge freedom (Remark~\ref{rem:gauge_weight}) and adding, if necessary, bivalent vertices next to the boundary vertices using move (M3). 
In Sections~\ref{sec:different_gauge} and \ref{sec:middle}, we show that the effect of these transformations amounts to a well--defined non zero multiplicative constant for the edge vectors. Therefore, the statement in Lemma \ref{lemma:path} holds in the general case with obvious minor modifications in the boundary conditions for the three systems of vectors $E_e$, $\tilde E_e$ and $\hat E_e$.
\end{remark}

\begin{proof}
The system of vectors  $\tilde E_e - E_e$ is the solution to the system of linear relations on 
$({\mathcal N},\mathcal O,\mathfrak l)$ for the following boundary conditions: 
\[
\tilde E_{e_{j}}-E_{e_{j}} = \left\{ \begin{array}{ll} 0 & \mbox{ if } j\not = j_0;\\
-\frac{1}{A^{r_0}_{j_0}} A[r_0], &\mbox{ if } j = j_0.
\end{array}
\right.
\]
Then at all edges $e\in {\mathcal N}$ the difference $\tilde E_{e}-E_{e}$ is proportional to $A[r_0]$. In particular $\tilde E_{e_{i_0}} = -E_{i_0}$, since, by construction, $E_{e_{i_0}} = A[r_0] - E_{i_0}$. Therefore, each vector $\hat E_e$ in (\ref{eq:hat_E_P}) is a linear combination of the vector $E_e$ and $A[r_0]$.

In Appendix \ref{app:orient} we prove that the system $\hat E_e$ solves the linear system on $({\mathcal N},{\hat {\mathcal O}},\mathfrak l)$ at each internal vertex of the network. 

Finally the system of edge vectors ${\hat E}_e$ satisfies the boundary conditions in (\ref{eq:orient2}). First of all, any given boundary sink edge $e_j$, $j\not = j_0, i_0$, ends in a $+$ region, whereas it starts in a $-$ region only if it intersects $\mathfrak{l}_{j_0}$. The latter is exactly the unique case in which $\prec E_{e_j}, \hat E_{e_j} \succ =\prec \tilde E_{e_j}, \hat E_{e_j} \succ =-1$. 

The edge $e_{i_0}$ belongs to the path $\mathcal P_0$ and it does not intersect any gauge ray in both orientations of the network. $e_{i_0}$ has a $+$ region to the left and the pair $(e,\mathfrak{l})$ is negatively oriented or it has a $-$ region to the left and the pair $(e,\mathfrak{l})$ is positively oriented (see Figure \ref{fig:edgei_0}). Therefore
\[
\epsilon (e_{i_0}) = \epsilon_1 (e_{i_0}) + \epsilon_2 (e_{i_0}) =1.
\]
Finally $\hat E_{e_{i_0}} = (-1)^{\epsilon (e_{i_0})} \tilde E_{e_{i_0}} = E_{i_0}$ since $\tilde E_{e_{i_0}} = - E_{i_0}$. 
\end{proof}

\begin{figure}%[H]
  \centering{\includegraphics[width=0.4\textwidth]{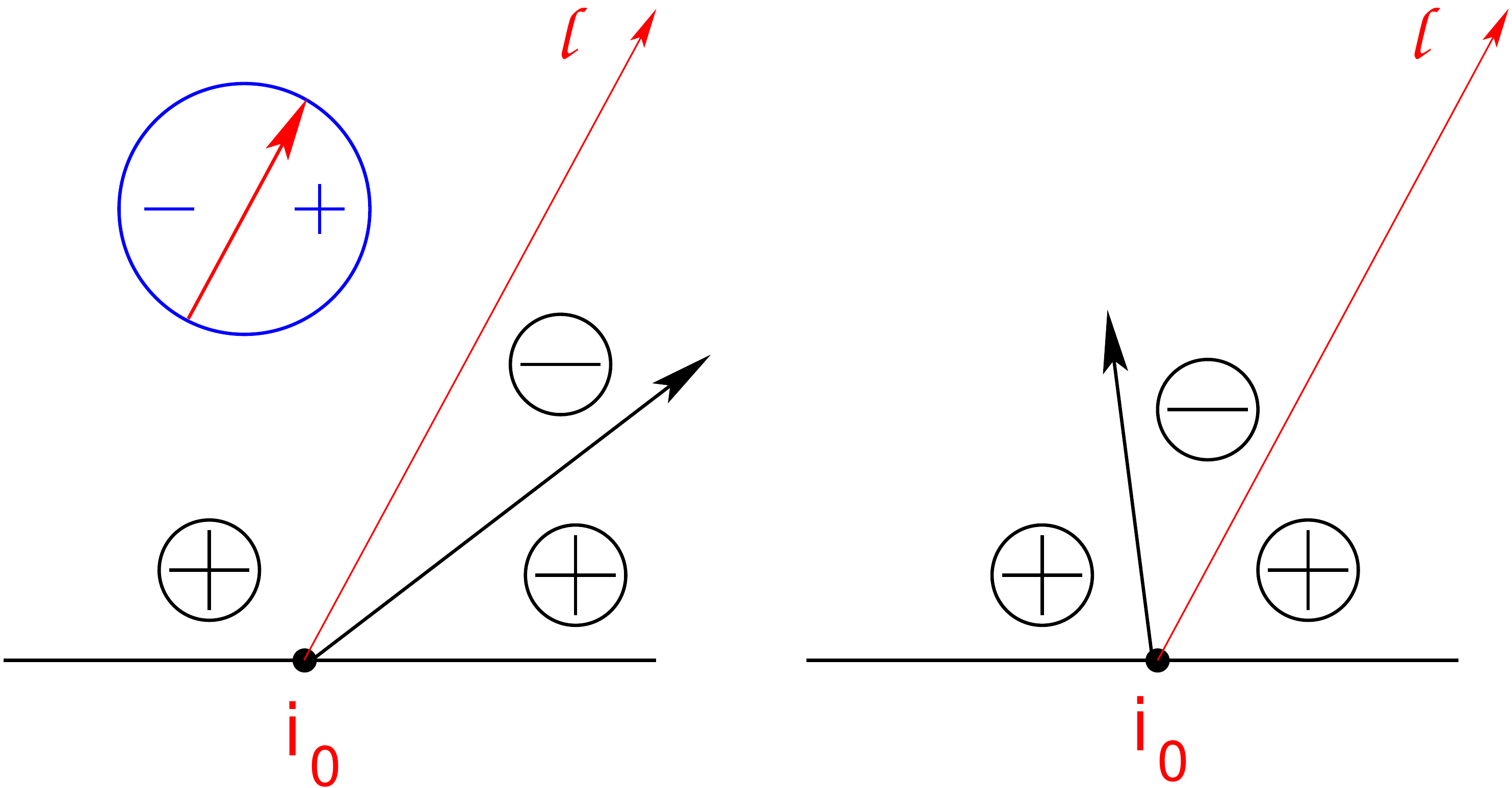}}
  \caption{\small{\sl The rule of the sign at $e_{i_0}$.}}\label{fig:edgei_0}
\end{figure}

The effect of a change of orientation along a closed simple path $\mathcal Q_0$ on the system of edge vectors follows along similar lines as above. 

 \begin{lemma}\label{lemma:cycle}\textbf{The effect of a change of orientation along a simple closed cycle.}
Let $I = \{ 1\le i_1 < i_2 < \cdots < i_k\le n\}$ be the pivot indices for $({\mathcal N},\mathcal O,\mathfrak l)$.
Let $E_e$ be the system of vectors on $({\mathcal N},\mathcal O,\mathfrak l)$ satisfying the boundary conditions $E_j$ at the boundary sinks $j\in \bar I$. Assume that we change the orientation along a simple closed cycle $\mathcal Q_0$ and let $({\mathcal N},{\hat {\mathcal O}},\mathfrak l)$ be the newly oriented network. 

Then, the system of edge vectors 
\begin{equation}\label{eq:hat_E_Q}
{\hat E}_e = \left\{ \begin{array}{ll}
 (-1)^{\epsilon(e)} E_e, & \mbox{ if } e\not \in \mathcal Q_0, \mbox{ with } \epsilon(e) \mbox{ as in (\ref{eq:eps_not_path})},\\
\displaystyle \frac{(-1)^{\epsilon(e)}}{w_e} E_e, & \mbox{ if } e\in \mathcal Q_0, \mbox{ with } \epsilon(e) \mbox{ as in (\ref{eq:eps_on_path})}.
\end{array}\right.
\end{equation}
is the system of vectors on the network  $({\mathcal N},{\hat {\mathcal O}},\mathfrak l)$ satisfying the same boundary conditions $E_j$ at the boundary sinks $j\in \bar I$.
\end{lemma}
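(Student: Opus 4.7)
The strategy is to verify that the vectors $\hat E_e$ defined by (\ref{eq:hat_E_Q}) satisfy both the boundary conditions at the boundary sinks and the linear relations of Lemma \ref{lem:relations} at every internal vertex of $({\mathcal N},{\hat {\mathcal O}},\mathfrak l)$. Because the orientation reversal along $\mathcal Q_0$ involves only internal edges, it alters neither the pivot set $I$ nor the set of boundary sink edges, so the boundary data required on $({\mathcal N},{\hat {\mathcal O}},\mathfrak l)$ are the same vectors $E_j$ as on $({\mathcal N},{\mathcal O},\mathfrak l)$. Once both requirements are checked, uniqueness (Theorem \ref{theo:consist}) forces $\hat E_e$ to coincide with the edge vector system on the reoriented network.

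I first deal with the boundary conditions. Since $\mathcal Q_0$ is a simple closed cycle of internal edges, by planarity of $\mathcal G$ the boundary of the disk lies entirely in the $+$ region, and the internal endpoint of any boundary sink edge $e_j$ also belongs to the $+$ region: otherwise $e_j$ would have to cross $\mathcal Q_0$ away from a vertex, which is impossible in a planar embedding. Therefore $\epsilon(e_j)=0$, $e_j\notin \mathcal Q_0$, and $\hat E_{e_j} = E_{e_j} = E_j$, as required.

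The verification of the linear relations at internal vertices splits into three cases depending on the position of $V$ with respect to $\mathcal Q_0$. If $V$ lies in the interior of the $+$ region then, by planarity, every edge at $V$ lies off $\mathcal Q_0$ with both endpoints in the $+$ region, so $\hat E_e = E_e$ at every edge incident to $V$ and the relation at $V$ is unchanged. Symmetrically, if $V$ lies in the interior of the $-$ region, no edge at $V$ belongs to $\mathcal Q_0$ and both endpoints of every such edge lie in the $-$ region, so $\hat E_e = -E_e$ and the relation at $V$ is preserved after multiplication of both sides by $-1$. The remaining case $V\in\mathcal Q_0$ is the technical core: exactly two of the three edges at $V$ lie on $\mathcal Q_0$ and get their orientations reversed with reciprocated weights, while the third edge is untouched. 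A direct count of incoming and outgoing edges at $V$ shows that the color of $V$ is preserved, hence the new relation at $V$ in ${\hat{\mathcal O}}$ has the same combinatorial shape as the original one in $\mathcal O$.

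The main obstacle is the local sign bookkeeping at vertices $V\in\mathcal Q_0$. Intersection numbers of each edge with the gauge rays are invariant under orientation reversal, so all the work reduces to matching the product of the signs $(-1)^{\epsilon(e)}$ from (\ref{eq:eps_on_path}) and the weight reciprocation $w_e \to 1/w_e$ against the change in the local winding factors $(-1)^{\mbox{\scriptsize wind}(\cdot,\cdot)}$ prescribed by Lemma \ref{lem:relations} applied in ${\hat{\mathcal O}}$. The analysis branches according to the color of $V$, the direction of the third edge at $V$, the relative position of $\mathfrak l$ with respect to the two $\mathcal Q_0$-edges at $V$, and the $+/-$ marking of the two angular sectors at $V$ separated by $\mathcal Q_0$. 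In each sub-case one checks that the reversal of the tangents of the two $\mathcal Q_0$-edges changes the local winding contribution at $V$ by precisely $(-1)^{\epsilon_1+\epsilon_2}$ from (\ref{eq:eps_on_path}), with $\epsilon_3$ absorbing the parity of gauge ray intersections along $\mathcal Q_0$-edges, while the simultaneous reciprocation of weights cancels exactly the factor $1/w_e$ produced when a reversed $\mathcal Q_0$-edge moves to the left-hand side of the vertex relation. This is the same type of one-vertex identity already carried out in Appendix \ref{app:orient} for Lemma \ref{lemma:path}, and it is in fact slightly simpler here because no boundary data are perturbed. With this identity established at every $V\in\mathcal Q_0$, Theorem \ref{theo:consist} concludes the proof.
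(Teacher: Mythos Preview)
Your proposal is correct and follows essentially the same approach as the paper: verify that the candidate system $\hat E_e$ satisfies the boundary conditions and the vertex relations in the new orientation, then invoke uniqueness (Theorem~\ref{theo:consist}); the paper's proof is literally ``The proof follows again from the Lemmas in Appendix~\ref{app:orient}'', and you defer to the same appendix for the nontrivial case $V\in\mathcal Q_0$. Your treatment of vertices off $\mathcal Q_0$ is in fact cleaner than the $\mathcal P_0$ analogue in Appendix~A.1, because here the source set $I$ is unchanged, the gauge rays are fixed, and planarity forces all edges at such a vertex to start in the same $\pm$ region, so no $\mbox{int}(\cdot)-\widehat{\mbox{int}}(\cdot)$ bookkeeping is needed.
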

The proof follows again from the Lemmas in Appendix~\ref{app:orient}.

\subsection{The dependence of the edge vectors on the weight and the vertex gauges freedom}\label{sec:different_gauge}

\begin{figure}
  \centering{\includegraphics[width=0.48\textwidth]{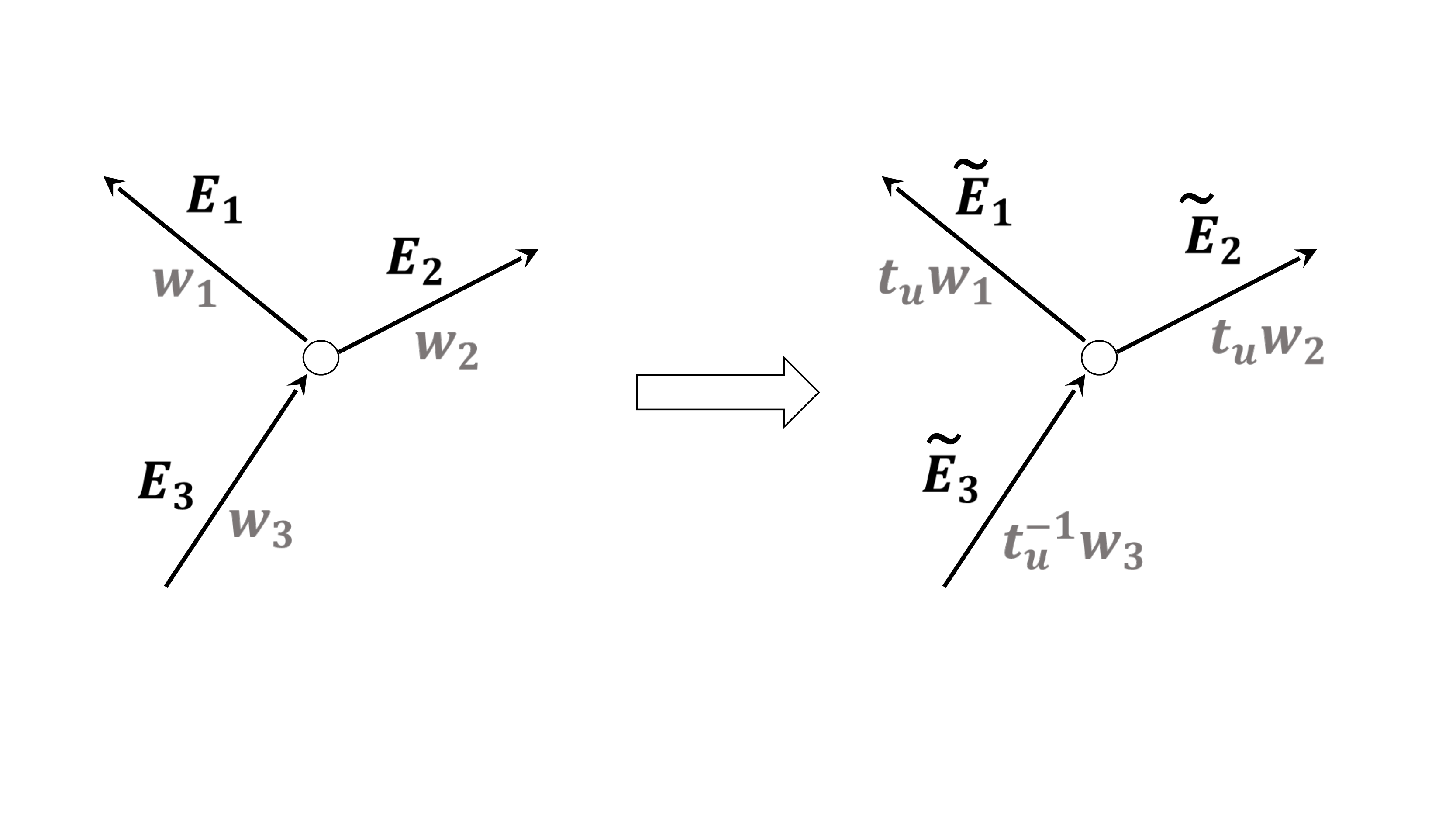}
	\hfill
	\includegraphics[width=0.48\textwidth]{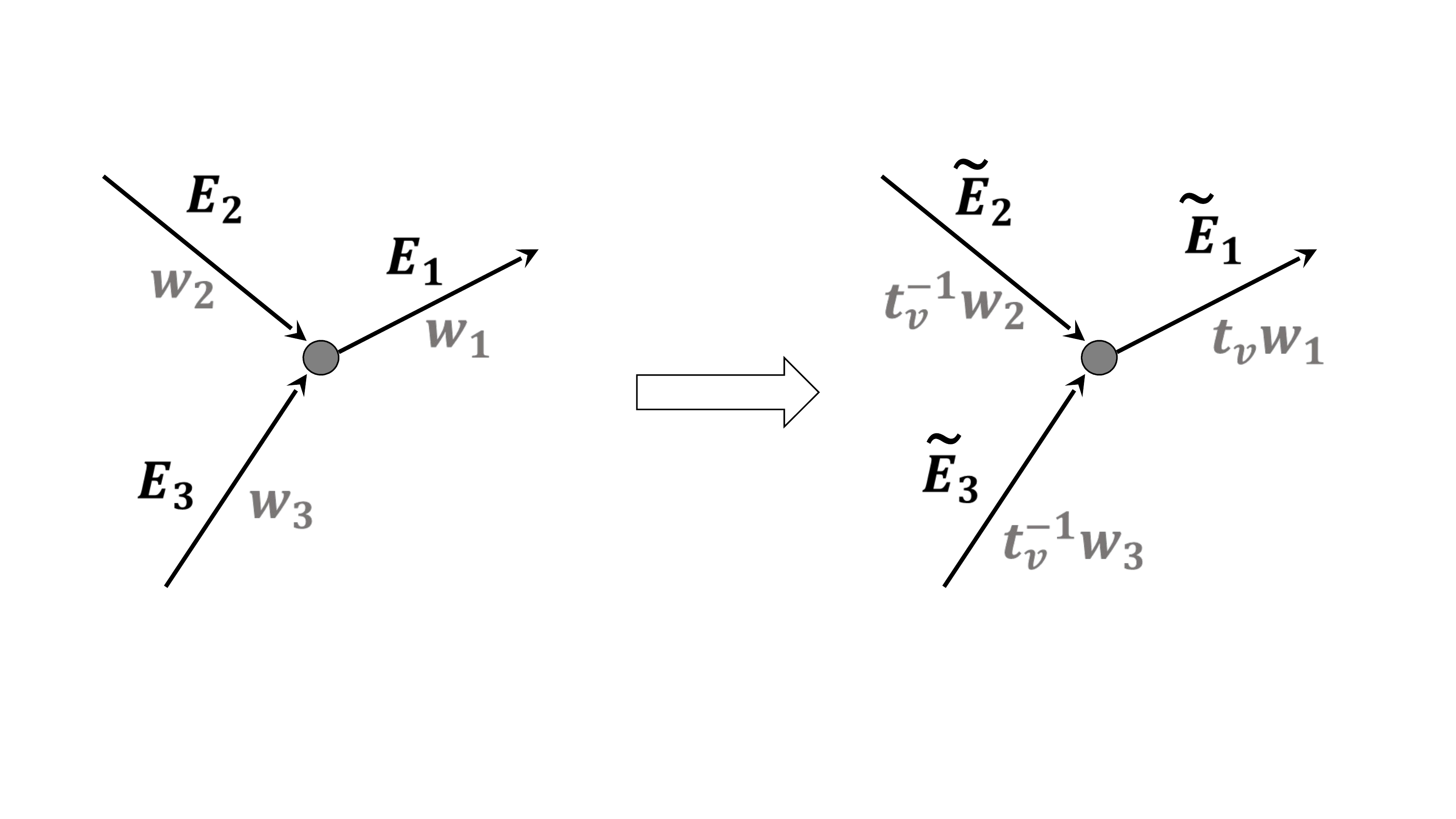}}
	\vspace{-1.5 truecm}
  \caption{\small{\sl The effect of the weight gauge transformation at a white [left] and at a black [right] vertex on the edge vectors.}\label{fig:weight_gauge_vectors}}
\end{figure}

In this Section we discuss the effect of the weight gauge and vertex gauge feeedom on the system of edge vectors. Both effect the edge vectors only locally.

\begin{lemma}\label{lem:weight_gauge}\textbf{Dependence of edge vectors on the weight gauge}
Let $E_e$ be the edge vectors on the network $({\mathcal N}, {\mathcal O}, \mathfrak{l})$.
\begin{enumerate}
\item Let ${\tilde E}_e$ be the system of edge vectors on $(\tilde {\mathcal N}, {\mathcal O}, \mathfrak{l})$, where $\tilde {\mathcal N}$ is obtained from ${\mathcal N}$ applying the weight gauge transformation at a white trivalent vertex as in Figure \ref{fig:weight_gauge_vectors} [left]. Then 
\begin{equation}\label{eq:wg_vector_white}
{\tilde E}_e = \left\{ \begin{array}{ll} E_e, &\quad  \forall e\in \mathcal N, \;\; e\not = e_1,e_2,\\
t_u E_e &\quad \mbox{ if } e = e_1,e_2.
\end{array}
\right.
\end{equation}
\item Let ${\tilde E}_e$ be the system of edge vectors on $(\tilde {\mathcal N}, {\mathcal O}, \mathfrak{l})$, where $\tilde {\mathcal N}$ is obtained from ${\mathcal N}$ applying the weight gauge transformation at a black trivalent vertex as in Figure \ref{fig:weight_gauge_vectors} [right]. Then
\begin{equation}\label{eq:wg_vector_black}
{\tilde E}_e = \left\{ \begin{array}{ll} E_e, &\quad \forall e\in \mathcal N, \;\;  e\not = e_1,\\
t_v E_e &\quad \mbox{ if } e = e_1.
\end{array}
\right.
\end{equation}
\end{enumerate}
\end{lemma}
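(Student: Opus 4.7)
The plan is to analyze how the path-sum representation (\ref{eq:sum}) of the edge vector components transforms under the weight gauge at the chosen vertex $V$. The gauge transformation affects only the weights of the edges incident to $V$ — multiplying each outgoing edge by $t_V$ and each incoming edge by $t_V^{-1}$ — while leaving winding numbers, intersection numbers, and the combinatorics of walks entirely untouched. Hence it suffices to track the net power of $t_V$ accumulated by an arbitrary directed walk (or, in the representation (\ref{eq:tal_formula}), by an arbitrary edge flow) starting at a given edge $e$.

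The key observation is that perfectness of $V$ forces every complete passage of a walk through $V$ to use exactly one incoming edge and exactly one outgoing edge at $V$, yielding a cancelling pair of factors $t_V \cdot t_V^{-1}=1$. Consequently, the only uncancelled factor of $t_V^{\pm 1}$ must come from the initial edge of the walk. For the white trivalent vertex, with one incoming edge $e_3$ and two outgoing edges $e_1,e_2$, a walk starting at $e_1$ or $e_2$ acquires one unmatched outgoing factor $t_u$, whereas a walk starting at $e_3$ first picks up $t_u^{-1}$ but is then forced to exit $V$ along $e_1$ or $e_2$, acquiring $t_u$ and restoring the original weight; this gives $\tilde E_{e_j}=t_u E_{e_j}$ for $j=1,2$ and $\tilde E_{e_3}=E_{e_3}$. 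Any walk starting at an edge $e$ not incident to $V$ either avoids $V$ entirely or visits $V$ only via complete passages, so $\tilde E_e=E_e$. For the black trivalent vertex the argument is symmetric: a walk starting at the unique outgoing edge $e_1$ retains the uncancelled factor $t_v$, while walks starting at $e_2$ or $e_3$ must immediately exit $V$ via $e_1$, so the $t_v^{-1}$ from the initial edge is cancelled by the $t_v$ from $e_1$, and $\tilde E_e=E_e$ for all other edges by the same pairing argument.

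The only subtle point to verify is that the cancellation persists when a walk loops through $V$ multiple times, which is immediate from perfectness. The same pairing argument applies to the conservative flows and edge flows appearing in (\ref{eq:tal_formula}): since conservative flows are balanced at every internal vertex, their weight is invariant under the gauge and the denominator $\sum_{C}w(C)$ is unchanged, while each numerator edge flow acquires exactly the same single factor of $t_V^{\pm 1}$ identified above for walks. No genuine obstacle arises beyond this careful case analysis of where the initial edge sits relative to the gauged vertex.
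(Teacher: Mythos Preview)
Your argument is correct: tracking the net power of $t_V$ along each directed walk (or edge flow) via the perfectness of $V$ is exactly the straightforward verification the paper has in mind, and the paper in fact omits the proof entirely with the remark ``The proof is straightforward and is omitted.'' Your treatment is more detailed than anything in the paper, but the approach is the natural one and there is nothing to add.
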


The proof is straightforward and is omitted.

The vertex gauge freedom in the graph does do not effect the curve $\Gamma$.
Any such transformation may be decomposed in a sequence of elementary transformations in which a single vertex is moved whereas all other vertices remain fixed (see also Figure \ref{fig:vertex_gauge_vectors}).
This transformation effects only the three edge vectors incident at the moving vertex and the latter may only change of sign. 

\begin{lemma}\label{lem:vertex_gauge}\textbf{Dependence of edge vectors on the vertex gauge}
\begin{enumerate}
\item Let $E_e$ and ${\tilde E}_e$ respectively be the system of edge vectors on $({\mathcal N}, {\mathcal O}, \mathfrak{l})$ and on $({\tilde {\mathcal N}}, {\mathcal O}, \mathfrak{l})$, where ${\tilde {\mathcal N}}$ is obtained from ${\mathcal N}$ moving one internal white vertex as in Figure \ref{fig:vertex_gauge_vectors}[left]. Then ${\tilde E}_e =E_e$, for all $e\not = e_1,e_2,e_3$ and
\begin{equation}\label{eq:white_vertex_gauge}
{\tilde E}_{e_i} = (-1)^{\mbox{wind}({\tilde e}_i,f_i)- \mbox{wind}(e_i,f_i)+\mbox{int}({\tilde e}_i)-\mbox{int}(e_i)} E_{e_i}, \quad i=1,2,
\quad\quad
{\tilde E}_{e_3} = (-1)^{\mbox{wind}(f_3,{\tilde e}_3)-\mbox{wind}(f_3,e_3)  } E_{e_3};
\end{equation}
\item Let $E_e$ and ${\tilde E}_e$ respectively be the system of edge vectors on $({\mathcal N}, {\mathcal O}, \mathfrak{l})$ and on $({\tilde {\mathcal N}}, {\mathcal O}, \mathfrak{l})$, where ${\tilde {\mathcal N}}$ is obtained from ${\mathcal N}$ moving one internal black vertex as in Figure \ref{fig:vertex_gauge_vectors}[right]. Then ${\tilde E}_e =E_e$, for all $e\not = e_1,e_2,e_3$ and
\[
{\tilde E}_{e_1} = (-1)^{\mbox{wind}({\tilde e}_1,f_1)- \mbox{wind}(e_1,f_1)+\mbox{int}({\tilde e}_1)-\mbox{int}(e_1)} E_{e_1},
\quad\quad
{\tilde E}_{e_i} = (-1)^{\mbox{wind}(f_i,{\tilde e}_i)-\mbox{wind}(f_i,e_i)  } E_{e_i}, \quad i=2,3;
\]
\end{enumerate}
\end{lemma}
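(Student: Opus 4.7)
The plan is to invoke the uniqueness part of Theorem~\ref{theo:consist}: moving an internal vertex does not touch the boundary edges, so on $(\tilde{\mathcal N},\mathcal O,\mathfrak l)$ the boundary conditions at the sinks are identical to those on $(\mathcal N,\mathcal O,\mathfrak l)$, and the edge vectors are uniquely determined by the linear system at the internal vertices. It therefore suffices to show that the candidate system $\tilde E_e$ defined by the formulas in the statement satisfies equations (\ref{eq:lineq_biv})--(\ref{eq:lineq_white}) at every internal vertex of $\tilde{\mathcal N}$; Theorem~\ref{theo:consist} will then identify it with the genuine edge-vector system.

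Since only the three edges $e_1,e_2,e_3$ incident to $V$ are geometrically deformed, while all other edges, weights and embeddings in the disk are unchanged, at any internal vertex not belonging to $\{\tilde V,V_1,V_2,V_3\}$---where $V_i$ denotes the vertex at the other endpoint of $e_i$---the linear relation involves only $\tilde E_e=E_e$ for $e\ne e_1,e_2,e_3$ and is therefore automatically satisfied. The proof reduces to four local verifications: one at the moved vertex $\tilde V$ and one at each of its three neighbours $V_1,V_2,V_3$.

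At $\tilde V$, I would substitute the candidate expressions for $\tilde E_{e_1},\tilde E_{e_2},\tilde E_{e_3}$ into the white-vertex relation (\ref{eq:lineq_white}) (respectively into the two black-vertex relations (\ref{eq:lineq_black})) and use the corresponding relation at $V$ in $(\mathcal N,\mathcal O,\mathfrak l)$ to cancel the common $E_e$ factors. What remains is a mod-$2$ identity involving the differences
\[
\text{wind}(f_3,\tilde e_3)-\text{wind}(f_3,e_3),\quad \text{wind}(\tilde e_3,\tilde e_i)-\text{wind}(e_3,e_i),\quad \text{wind}(\tilde e_i,f_i)-\text{wind}(e_i,f_i),
\]
together with $\text{int}(\tilde e_i)-\text{int}(e_i)$ and $\text{int}(\tilde e_3)-\text{int}(e_3)$, for $i=1,2$. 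At each neighbour $V_i$ the relation of type (\ref{eq:lineq_biv})--(\ref{eq:lineq_white}) involves exactly one modified edge, namely $\tilde e_i$ in place of $e_i$, and reduces analogously to a simpler mod-$2$ identity comparing $\text{wind}$ and $\text{int}$ of $e_i$ versus $\tilde e_i$ with the fixed edges at $V_i$.

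The hard part, and the core combinatorial input of the proof, is the following homotopy invariance principle: for an oriented piecewise-linear path $\mathcal P$ in the disk, the parity of $\text{wind}(\mathcal P)+\text{int}(\mathcal P)$ is invariant under any continuous deformation of $\mathcal P$ that fixes its endpoints together with the tangent directions at the endpoints, and avoids the endpoints of the gauge rays. I plan to establish this by sliding $V$ to $\tilde V$ along a generic segment and tracking the parity through two types of codimension-one events during the homotopy: the tangent of some $e_i(t)$ becoming parallel to $\mathfrak l$ at one of its endpoints, which shifts a local winding by $\pm 1$, and the passage of some $e_i(t)$ across the endpoint of a boundary gauge ray, which shifts $\text{int}$ by $\pm 1$. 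A direct local inspection at each event type shows that the affected windings and intersection numbers appear on both sides of the identities above with matching parity, so that the four local verifications close up consistently and produce exactly the signs $(-1)^{\text{wind}(\tilde e_i,f_i)-\text{wind}(e_i,f_i)+\text{int}(\tilde e_i)-\text{int}(e_i)}$ and $(-1)^{\text{wind}(f_i,\tilde e_i)-\text{wind}(f_i,e_i)}$ predicted by the lemma.
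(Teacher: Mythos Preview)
Your approach is essentially the same as the paper's: both argue via the linear relations at vertices combined with a mod-$2$ invariance of winding-plus-intersection along the deformed paths. The paper simply states the key identities directly,
\[
\mbox{int}(e_i)+\mbox{int}(e_3)=\mbox{int}(\tilde e_i)+\mbox{int}(\tilde e_3)\pmod 2,\qquad
\mbox{wind}(f_3,e_3)+\mbox{wind}(e_3,e_i)+\mbox{wind}(e_i,f_i)=\mbox{wind}(f_3,\tilde e_3)+\mbox{wind}(\tilde e_3,\tilde e_i)+\mbox{wind}(\tilde e_i,f_i)\pmod 2,
\]
whereas you recast these as a homotopy-invariance principle and explicitly invoke the uniqueness of Theorem~\ref{theo:consist}; the substance is the same.
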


\begin{figure}
  \centering{\includegraphics[width=0.45\textwidth]{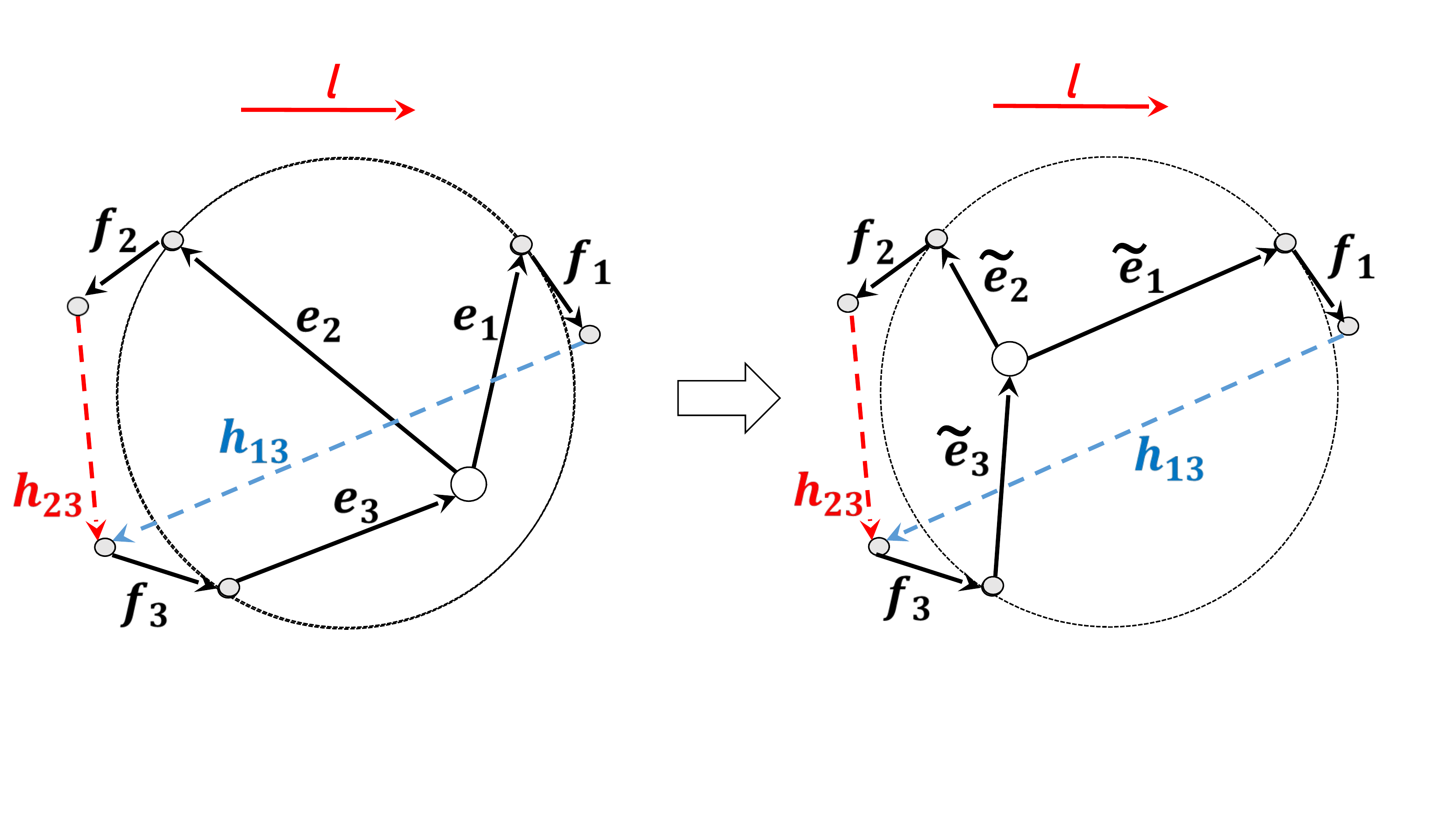}
	\hfill
	\includegraphics[width=0.45\textwidth]{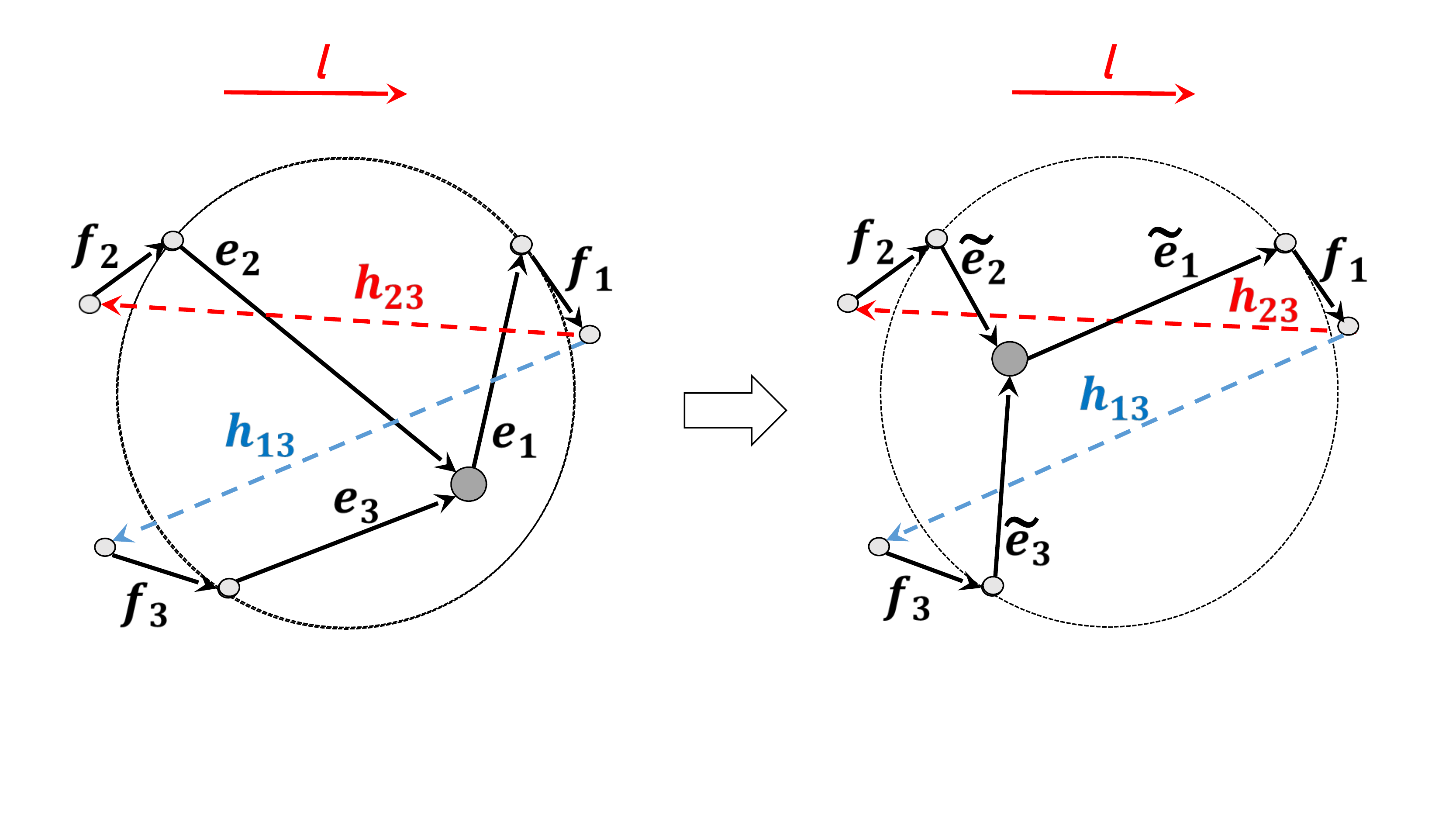}}
	\vspace{-.9 truecm}
  \caption{\small{\sl The effect of the vertex gauge transformation at a white [left] and at a black [right] vertex on the winding numbers of the edge vectors.}\label{fig:vertex_gauge_vectors}}
\end{figure}

\begin{proof}
The statement follows from the linear relations at vertices and the following identities at trivalent white vertices
\begin{equation}\label{eq:int_vertex_gauge}
\begin{array}{c}
\mbox{int} (e_i) +\mbox{int} (e_3) =\mbox{int} ({\tilde e}_i) +\mbox{int} ({\tilde e}_3), \quad (\!\!\!\!\!\!\mod 2),\\
\mbox{wind}(f_3,e_3) +  \mbox{wind}(e_3,e_i) +  \mbox{wind}(e_i,f_i) = \mbox{wind}(f_3,{\tilde e}_3) +  \mbox{wind}({\tilde e}_3,{\tilde e}_i) +  \mbox{wind}({\tilde e}_i,f_i) \quad (\!\!\!\!\!\!\mod 2),
\end{array}
\end{equation}
and the corresponding identities at black vertices.
\end{proof}

\subsection{Null edge vectors}\label{sec:null_vectors}

Edge vectors associated to the boundary source edges are always not null by construction if the boundary source is non isolated. On the contrary, a component of a vector associated to an internal edge 
can be equal to zero even if the corresponding boundary sink can be reached from that edge. 
In Figure~\ref{fig:zero-vector}[left] we present such an example: the network represents the point $[ 2p/(1+p+q),1] \in Gr^{\mbox{\tiny TP}}(1,2)$: all weights are equal to 1 except for the two edges carrying the positive weights 
$p$ and $q$. 
\begin{figure}%[H]
  \centering{
	\includegraphics[width=0.49\textwidth]{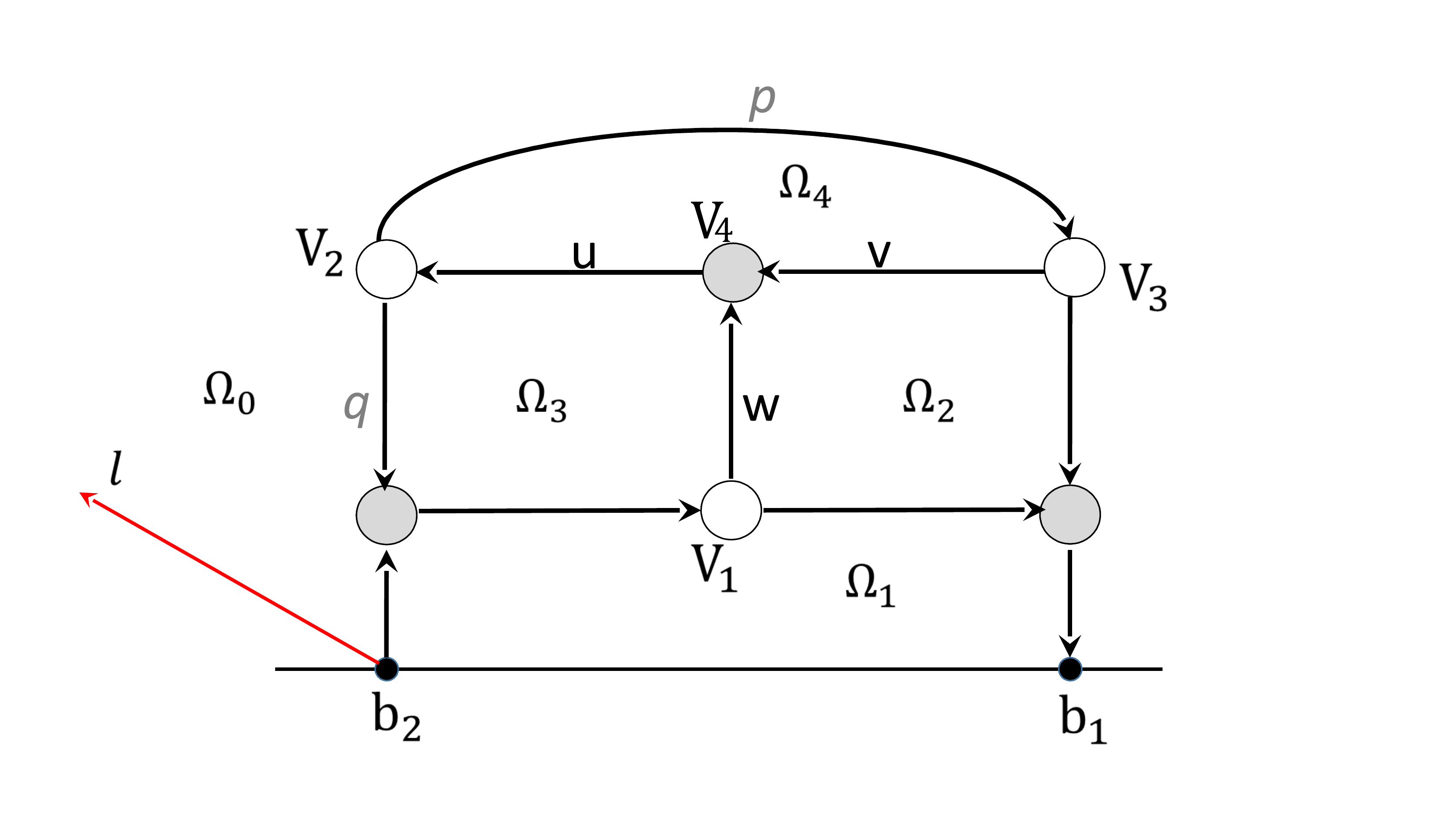}
	\hfill
	\includegraphics[width=0.49\textwidth]{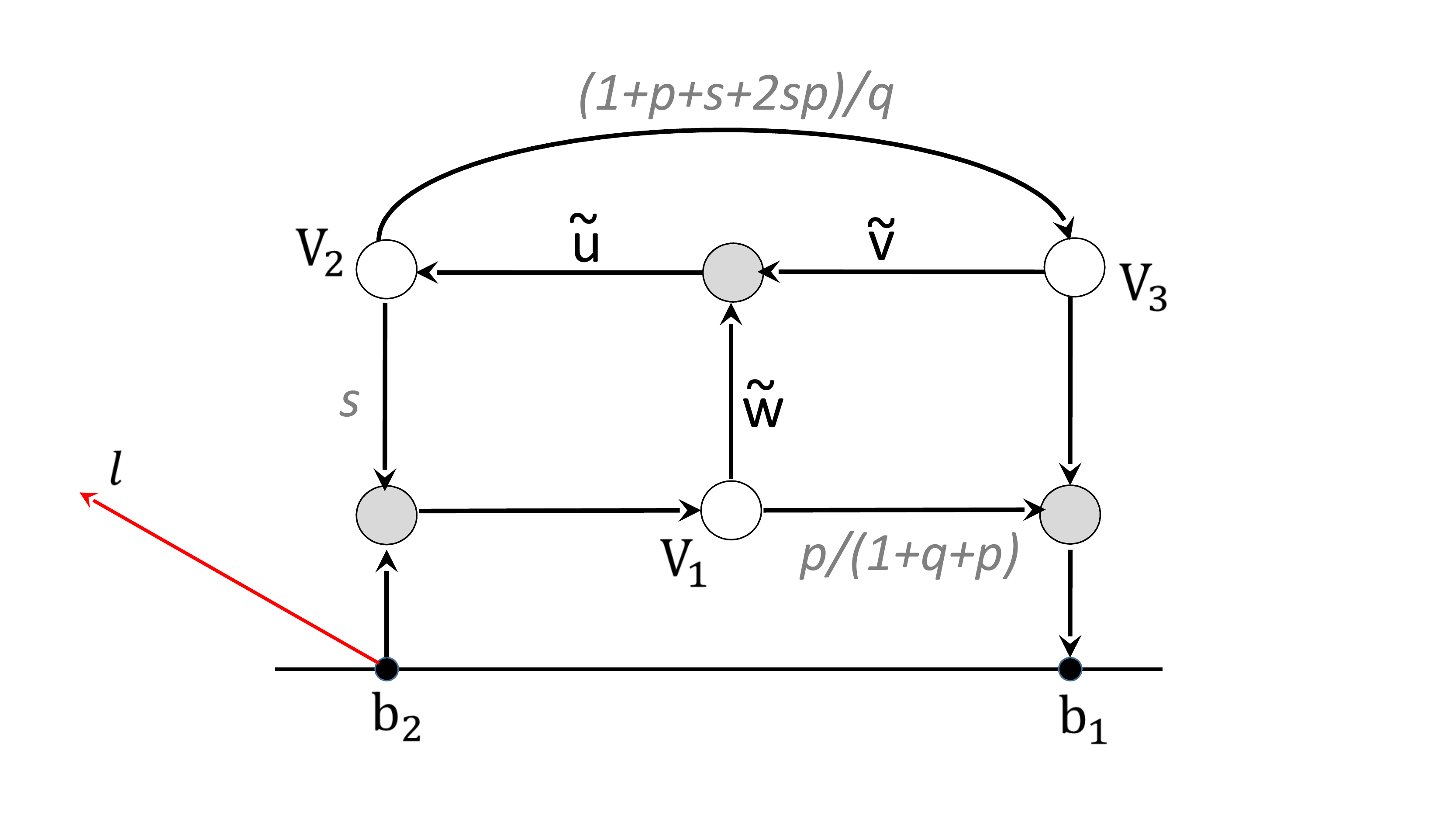}
  \caption{\small{\sl The appearance of null vectors on reducible networks [left] and their elimination using the 
	gauge freedom for unreduced graphs of Remark \ref{rem:gauge_freedom} [right].}}\label{fig:zero-vector}}
\end{figure}
On the edges $u,v,w$, using (\ref{eq:tal_formula}) we get
$$
E_u=E_v=-E_w=\left( \frac{q-p}{1+p+q}, 0\right).
$$
If $p=q$, then $E_u=E_v=E_w=(0,0)$. We remark that all other edge vectors associated to such network have non--zero first component for any choice of $p,q>0$. In particular the edge vector at the boundary source is equal to
$(\frac{1+2p}{1+p+q},0)$.

More in general, suppose that $E_e=0$ where $e$ is an edge ending at the vertex $V$. Then, if $V$ is black, all other edges at $V$ carry null vectors; the same occurs if $V$ is bivalent white. If $V$ is trivalent white, then the other edges at $V$ carry proportional vectors.

Of course, if the network possesses either isolated boundary sources or components isolated from the boundary, null-vectors are unavoidable. In \cite{AG3}, we provide a recursive construction of edge vectors for canonically oriented Le--networks and obtain as a by-product that null edge vectors are forbidden if the Le--network represents a point in an irreducible positroid cell. The latter property indeed is shared by systems of edge vectors on acyclically oriented networks as a consequence of the following Theorem:

\begin{theorem}
\label{thm:null_acyclic}
Let   $({\mathcal N}, {\mathcal O}, \mathfrak{l})$ be an acyclically oriented PBDTP network associated to a point in an irreducible positroid cell. Then all edge vectors $E_e$ are not-null.  

In particular Le-networks representing points in irreducible positroid cells do not possess null edge vectors.
\end{theorem}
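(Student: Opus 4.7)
Two simplifications occur in the acyclic case. First, the set $\mathcal{C}(\mathcal{G})$ of conservative flows reduces to the empty flow alone, since any non-trivial conservative flow would contain a directed cycle. Hence by Theorem~\ref{theo:null} the denominator in (\ref{eq:tal_formula}) equals $1$ and every edge flow is a simple directed path, giving
\[
(E_e)_j = \sum_{P:\,e\to b_j}(-1)^{\mbox{wind}(P)+\mbox{int}(P)}\, w(P).
\]
Second, every edge $e$ reaches at least one boundary sink: following outgoing edges forward from $e$ (which exist at every internal vertex by the perfectness of $\mathcal{O}$) must terminate at a boundary vertex in finitely many steps by acyclicity and finiteness, and this terminal vertex is necessarily a sink. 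Irreducibility of the positroid cell prevents degenerate boundary configurations, so these forward walks are non-degenerate.

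Granted these observations, the theorem reduces to the following \emph{sign-invariance lemma}: for any two directed paths $P_1, P_2 : e \to b_j$ in $(\mathcal{N},\mathcal{O},\mathfrak{l})$, one has
\[
\mbox{wind}(P_1)+\mbox{int}(P_1) \equiv \mbox{wind}(P_2)+\mbox{int}(P_2)\pmod 2.
\]
Indeed, letting $\sigma \in \{\pm 1\}$ denote the common sign, we obtain $(E_e)_j = \sigma \sum_P w(P)$, which is strictly nonzero because the weights are positive and at least one path exists. In particular, $E_e \neq 0$, and the second assertion of the theorem for Le-networks follows at once since the canonical orientation of a Le-network is acyclic.

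I would prove the sign-invariance lemma by adapting the closed-cycle parity argument used in the proof of Corollary~\ref{cor:bound_source}. Since $P_1$ and $P_2$ share the initial edge $e$, the closed walk $P_1 \cdot \overline{P_2}$ is based at the head $V_e$ of $e$, and its symmetric difference decomposes into simple closed loops lying strictly in the interior of the disk. For each such loop $C$ one would establish the identity
\[
\mbox{wind}(C) + \mbox{int}(C) \equiv 1 \pmod 2,
\]
exploiting that (i) a simple closed planar curve has topological winding $\pm 1$, so the local-winding count of Definition~\ref{def:winding_pair} contributes $1 \pmod 2$, and (ii) every pivot ray emanates from a boundary source strictly outside $C$, so the total number of crossings of $C$ with pivot rays is even. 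Combining this with a careful accounting of how reversing an edge affects the local winding at a corner, together with the cancellation of the corner contributions at the common endpoints of $P_1$ and $P_2$, yields the desired parity identity.

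The main obstacle is precisely the reversal identity for local windings: by Definition~\ref{def:winding_pair}, $\mbox{wind}(e_k, e_{k+1})$ depends in a non-linear way both on the ordering $(e_k, e_{k+1})$ and on the relative position with respect to $\mathfrak{l}$, and reversing an edge flips all three $s$-signs in the defining conditions simultaneously in a non-trivial manner. A case-by-case verification over the possible mutual positions of $e_k$, $e_{k+1}$ and $\mathfrak{l}$ is needed to pin down the closed-loop parity identity above and to match the corner contributions at $V_e$ and at $b_j$ between $P_1$ and $P_2$.
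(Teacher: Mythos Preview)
Your reduction is correct and matches the paper exactly: acyclicity collapses the Talaska formula to a positive-weight sum over directed paths, every edge reaches some sink, and everything hinges on the sign-invariance lemma that $\mbox{wind}(P)+\mbox{int}(P)$ has fixed parity over all paths $P:e\to b_j$.

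Where you diverge from the paper is in the proof of that lemma, and your route is harder than necessary. You propose to close up $P_1\cdot\overline{P_2}$ and decompose the symmetric difference into simple loops. But in an acyclic network those loops are \emph{not} directed cycles of $(\mathcal N,\mathcal O)$; they necessarily traverse some edges against the orientation, and the paper's winding count is only defined for directed walks. You recognise this yourself: the ``reversal identity'' you flag as the main obstacle is genuinely unpleasant, and you do not establish it. As written, the argument is therefore incomplete.

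The paper sidesteps reversal entirely by \emph{prepending} rather than concatenating with a reversed path. Pick any directed path $P_0$ from a boundary source $b_i$ to the initial vertex of $e$ (such a $P_0$ exists by following incoming edges backward, which terminates at a boundary source by acyclicity and perfectness; acyclicity also forces $P_0$ to be edge-disjoint from any forward path out of $e$). Then for every $P:e\to b_j$, the concatenation $P_0\cup P$ is an acyclic source-to-sink walk, and Corollary~\ref{cor:bound_source} already gives
\[
\mbox{wind}(P_0\cup P)+\mbox{int}(P_0\cup P)\equiv N_{ij}\pmod 2
\]
independently of $P$. Since every such concatenation shares the prefix $P_0$ and the junction edge $e$, the contributions of $P_0$ and of the corner $(\text{last edge of }P_0,\,e)$ cancel when comparing two choices of $P$, yielding the sign-invariance lemma immediately. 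No reversed edges, no case analysis.
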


\begin{proof}
By definition a PBDTP graph does not possess internal sources or sinks.
Let $e\in {\mathcal N}$ be an internal edge and let $b_j$ be a boundary sink such that there exists a directed path $P$ starting at $e$ and ending at $b_j$.  In this case, (\ref{eq:tal_formula}) in Theorem \ref{theo:null} simplifies to
\[
\left(E_{e}\right)_{j}= \displaystyle\sum\limits_{F\in {\mathcal F}_{e,b_j}(\mathcal G)} \big(-1\big)^{\mbox{wind}(F)+\mbox{int}(F)}\ w(F),
\]
where the sum runs on all directed paths $F$ starting at $e$ and ending at $b_j$. To complete the proof of the theorem we need to show that $\mbox{wind}(F)+\mbox{int}(F)$ has the same parity for all directed paths from $e$ to $b_j$. Thanks to acyclicity, the latter statement follows adapting its proof for boundary edges in Corollary \ref{cor:bound_source} and the observation that there exists a boundary source $b_i$ and a directed path $P_0$ from $b_i$ to $e$ such that any other directed path ${\tilde P}$ from $e$ to $b_j$ has a finite number of edges in common with $P$
and no edge in common with $P_0$.
Therefore 
\[
\begin{array}{c}
\mbox{int} (P_0)+ \mbox{int}( P) =\mbox{int} (P_0) + \mbox{int} ({\tilde P}) \quad (\!\!\!\!\!\!\mod 2),\\
\mbox{wind} (P_0) +\mbox{wind} (P) =\mbox{wind} (P_0\cup P) = \mbox{wind} (P_0\cup {\tilde P})
= \mbox{wind} (P_0) + \mbox{wind} ({\tilde P}) \quad (\!\!\!\!\!\!\mod 2),
\end{array}
\]
and the statement follows.
\end{proof}

The absence of null edge vectors for a given network is independent of its orientation and of the choices of the ray direction, of vertex gauge and of weight gauge, because of the transformation rules of edge vectors established in Sections \ref{sec:gauge_ray}, \ref{sec:orient} and \ref{sec:different_gauge}. 

In particular, if the edge vector $E_e =0$ for a given orientation $\mathcal O$ of the PBDTP network ${\mathcal N}$ representing $[A]\in \GTNN$, then it will remain null in any other orientation ${\mathcal O}^{\prime}$ of the network. Indeed in the new orientation such edge vector should be proportional to a linear combination of the rows of a representative matrix of $[A]$. However all such coefficients have to be null since the boundary conditions never contain the pivot vectors.

We summarize all the above properties of edge vectors in the following Proposition:

\begin{proposition}\label{prop:null_vectors}\textbf{Null edge vectors and changes of orientation, ray direction, weight and vertex gauges in $\mathcal N$.}
Let $E_e$ be the edge vector system on $({\mathcal N}, {\mathcal O}, \mathfrak{l})$ for given boundary conditions at the boundary sink vertices and suppose that $E_e\not =0$ for all $e\in {\mathcal N}$. Then the transformed edge vector system ${\tilde E}_e\not =0$ for all $e\in {\mathcal N}^{\prime}$ where  ${\mathcal N}^{\prime}$ is obtained from ${\mathcal N}$ changing either its orientation or its gauge ray direction or the weight gauge or the vertex gauge. 

Finally if the edge vector $E_e=0$ on the PBDTP network $({\mathcal N}, \mathcal O, \mathfrak l)$ for a given orientation $\mathcal O$, then it is null in any other orientation of ${\mathcal N}$.
\end{proposition}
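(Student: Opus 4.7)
The plan is to reduce the statement to the four elementary situations for which the preceding results provide an explicit formula for the transformed edge vector $\tilde E_e$ in terms of $E_e$, and then to exploit in each case the fact that $E_e$ already has vanishing components at all positions indexed by boundary sources.

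First I would dispose of the three gauge changes. For the gauge ray direction, Proposition~\ref{prop:rays} yields $\tilde E_e = \pm E_e$; for the weight gauge, Lemma~\ref{lem:weight_gauge} yields $\tilde E_e = t_U E_e$ (or $t_V E_e$) with $t_U, t_V>0$; for the vertex gauge, Lemma~\ref{lem:vertex_gauge} yields $\tilde E_e = \pm E_e$. In all three cases $\tilde E_e$ is a nonzero scalar multiple of $E_e$ edge by edge, so non-nullness is immediate.

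Next I would handle the orientation change. Any such change decomposes as a finite composition of elementary changes along a simple cycle $\mathcal Q_0$ or along a non-self-intersecting directed path $\mathcal P_0$ from a boundary source $i_0$ to a boundary sink $j_0$, so it suffices to treat these two cases. For a cycle change Lemma~\ref{lemma:cycle} gives $\hat E_e = \pm E_e$ up to a positive multiplicative factor, preserving non-nullness. For a path change Lemma~\ref{lemma:path} yields $\hat E_e = \pm(E_e + c_e A[r_0])$ up to a positive weight, where $A[r_0]$ is the row of $A$ corresponding to the source $i_0$. Here I would use the key observation that, by Definition~\ref{def:edge_vector} and the remark following it, for every edge $e\ne e_{i_0}$ the $i_0$-th component of $E_e$ vanishes (since $i_0\in I$ is a source of $\mathcal O$), whereas the $i_0$-th component of $A[r_0]$ equals $1$. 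Hence $\hat E_e=0$ would force $c_e=0$ by reading off position $i_0$, and then $E_e=0$, contradicting the hypothesis. The exceptional edge $e=e_{i_0}$ is handled separately: in the new orientation it abuts the boundary sink $b_{i_0}$ with boundary value $\hat E_{e_{i_0}}=E_{i_0}\ne 0$ by (\ref{eq:orient2}).

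For the final statement the same reduction applies. Under a cycle change, $\hat E_e=\pm E_e=0$ by Lemma~\ref{lemma:cycle}. Under a path change, when $E_e=0$ the proof of Lemma~\ref{lemma:path} gives $\tilde E_e = c_e A[r_0]$, hence $\hat E_e = c'_e A[r_0]$ for some scalar $c'_e$; but $\hat E_e$ must have vanishing components at every position of the new source set $I'=(I\setminus\{i_0\})\cup\{j_0\}$, and $A^{r_0}_{j_0}\ne 0$ (this is exactly the non-degeneracy condition that makes the basis exchange $I\to I'$ well defined), so $c'_e=0$ and $\hat E_e=0$. The main obstacle I foresee is psychological rather than technical: Theorem~\ref{theo:orient} only guarantees that $\hat E_e$ is $\alpha_e E_e$ plus a generic linear combination of all the rows $A[r]$, and with that statement alone one cannot rule out accidental cancellations; the sharper single-row expression provided by the path/cycle decomposition of Lemmas~\ref{lemma:path} and~\ref{lemma:cycle}, combined with the vanishing of $E_e$ at source positions, is precisely what allows the argument to go through.
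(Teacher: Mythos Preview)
Your argument is correct and is precisely the paper's: the three gauge changes give nonzero scalar multiples edge by edge, while for orientation changes the decisive fact is that every edge vector has vanishing components at all boundary-source positions of its own orientation, which forces the row-of-$A$ contribution to vanish whenever the $E_e$ part does. Your closing worry is unfounded, incidentally: the paper argues directly from the multi-row form of Theorem~\ref{theo:orient}, and the same source-position vanishing (the pivot columns of $A$ form an identity block, so $(\sum_r c_e^r A[r])_{i_s}=c_e^s$) kills all the coefficients $c_e^r$ at once --- the single-row reduction via Lemmas~\ref{lemma:path} and~\ref{lemma:cycle} is convenient but not essential.
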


In Section~\ref{sec:moves_reduc}, we discuss the effect of Postnikov moves and reductions on the transformation rules of edge vectors on equivalent networks. In particular, moves (M1), (M2)-flip and (M3) preserve both the PBDTP class and the acyclicity of the network. Then, in view of Theorem~\ref{thm:null_acyclic} we have the following result:

\begin{proposition}\label{prop:Le_net}\textbf{Absence of null vectors for PBDTP networks equivalent to the Le-network.}
Let the positroid cell $\S$ be irreducible and let the PBDTP network  $({\mathcal N},{\mathcal O},\mathfrak l)$ represent a point in  $\S$ and be equivalent to the Le-network via a finite sequence of moves (M1), (M3) and flip moves (M2). Then ${\mathcal N}$ does not possess null edge vectors.
\end{proposition}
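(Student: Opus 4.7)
The plan is to reduce the claim to Theorem \ref{thm:null_acyclic} by transporting acyclicity along the sequence of moves connecting the Le-network to $\mathcal N$. The starting point is that the Le-network, equipped with its canonical orientation associated to the lexicographically minimal base of $\mathcal M$, is an acyclically oriented PBDTP network representing a point in $\S$. Hence Theorem \ref{thm:null_acyclic} directly applies to it and yields the absence of null edge vectors on the Le-network itself.

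Next, I would show that each of the three admissible moves—(M1), the flip version of (M2), and (M3)—preserves both the PBDTP class and the acyclicity of the underlying directed network. For (M3) this is essentially immediate, since creating or removing a bivalent vertex on an edge does not affect the existence of directed cycles nor alter the bivalent/trivalent structure of other vertices. For the flip (M2) the check is local: it exchanges the color of two adjacent trivalent vertices sharing an edge while preserving the perfect orientation, and one verifies case by case that no new directed cycle through the flipped edge can be created without a corresponding cycle having existed before the flip. For (M1), the square move, the analysis is also local: after the square move, any directed cycle passing through the square can be traced back to a directed cycle before the move by reversing the local substitution, so acyclicity is preserved.

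Granting these checks, a finite composition of such moves carries the acyclic orientation of the Le-network to an acyclic perfect orientation of $\mathcal N$. Theorem \ref{thm:null_acyclic} then applies to $\mathcal N$ in that transported orientation and rules out null edge vectors. Finally, Proposition \ref{prop:null_vectors} guarantees that the property of having no null edge vectors is independent of the chosen orientation, gauge ray direction, weight gauge and vertex gauge of a PBDTP network, so the conclusion transfers to the original data $(\mathcal N,\mathcal O,\mathfrak l)$.

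The main obstacle is the verification that (M1) preserves acyclicity: the square move alters four edges simultaneously, and one must argue carefully that it cannot create a new directed cycle out of directed paths that were previously acyclic. The other steps are either immediate or already encapsulated in Theorem \ref{thm:null_acyclic} and Proposition \ref{prop:null_vectors}.
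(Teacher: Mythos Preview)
Your approach is essentially the one the paper uses: the paper simply asserts (just before stating the proposition) that moves (M1), (M2)-flip and (M3) preserve both the PBDTP class and acyclicity, and then invokes Theorem~\ref{thm:null_acyclic}. Your added appeal to Proposition~\ref{prop:null_vectors} to pass back to the given $(\mathcal O,\mathfrak l)$ is a sensible explicit step the paper leaves implicit.

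One inaccuracy to fix: your description of the flip move (M2) is wrong. The flip does \emph{not} exchange the colors of two adjacent trivalent vertices; it is the unicolored edge contraction/uncontraction, so both vertices involved have the \emph{same} color (see Figure~\ref{fig:flipmove}). This does not affect your strategy, since the local acyclicity check is still straightforward, but the case analysis you would actually perform is different from the one you sketched.
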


Null-vectors may just appear in reducible networks representing irreducible positroid cells as in the example of Figure \ref{fig:zero-vector}. We plan to discuss thoroughly the mechanism of creation of null edge vectors in a future publication.
Here we just classify null edge vectors on PBDTP networks into two types.
Indeed edges carrying null vectors are contained in connected maximal subgraphs such that every edge belonging to one such subgraph carries a null vector and all edges belonging to its complement and having a vertex in common with it carry non zero vectors. For instance in the case of Figure \ref{fig:zero-vector}[left] there is one such subgraph and it consists of the edges $u,v,w$ and the vertices $V_1$, $V_2$, $V_3$ and $V_4$.

\begin{definition}\label{def:null_type_1_2}\textbf{Null edge vectors of type 1 and type 2}
Let the PBDTP network $({\mathcal N},{\mathcal O},\mathfrak l)$ of graph $\mathcal G$ represent a point in the irreducible positroid cell $\S$. Let $\mathcal G_0\subset \mathcal G$ be a connected maximal subgraph carrying null edge vectors.

Then we say that the edges in $\mathcal G_0$ are of type 1 if there exists a vector $E$ such that $E_f=c_fE$ for any edge $f\in \mathcal G\backslash \mathcal G_0$ having a vertex in common with $\mathcal G_0$, for some $c_f\ne 0$.

Otherwise we say that the edges in $\mathcal G_0$ are of type 2.
\end{definition}

In the case of Figure \ref{fig:zero-vector}[left] the null edge vectors are of type 1. An example of edge vector of type 2 
occurs in Figure \ref{fig:div_null_2} when $E_3$ and $E_1$ are not proportional to each other. 
\begin{figure}%[H]
  \centering{\includegraphics[width=0.7\textwidth]{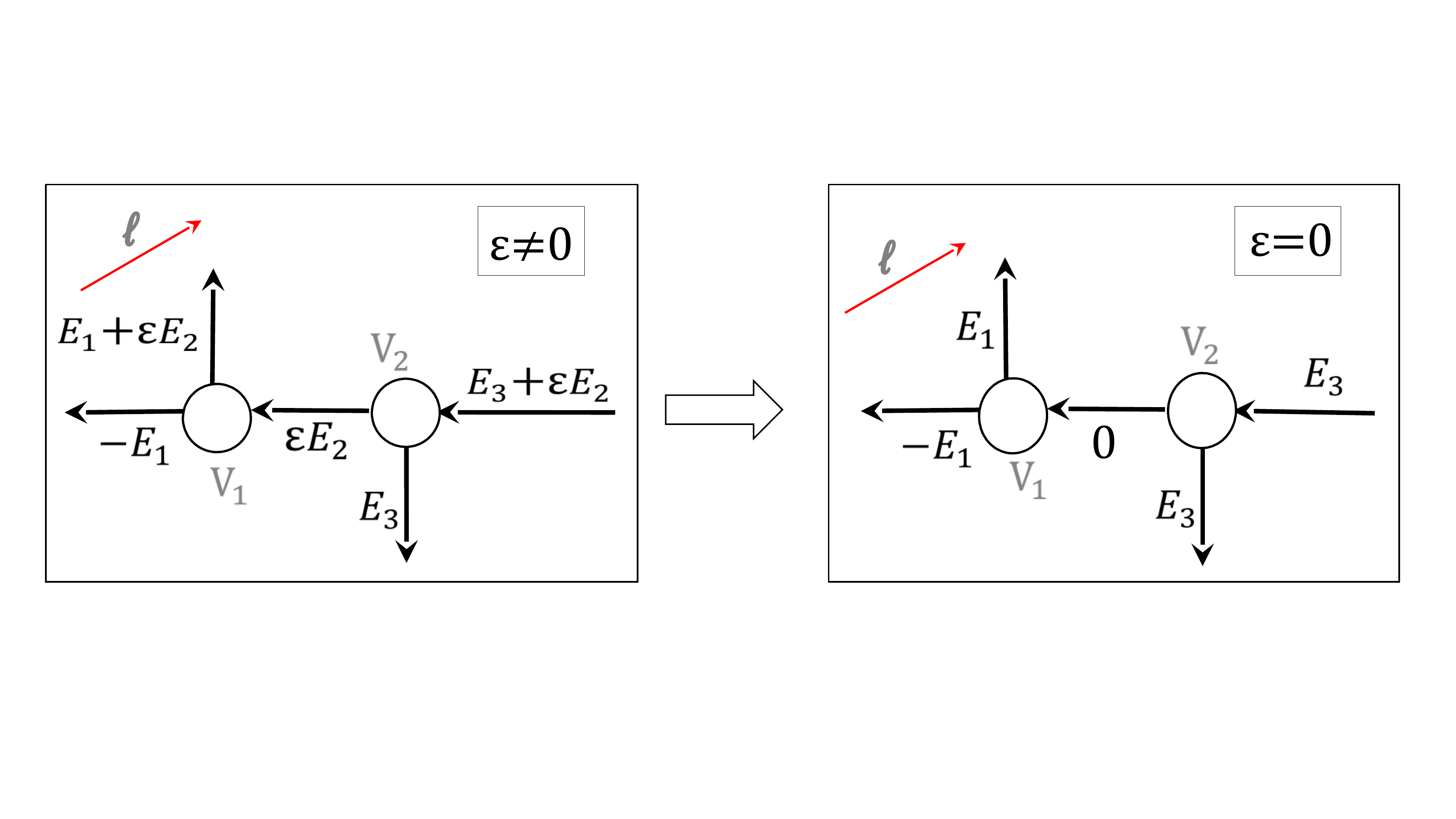}}
	\vspace{-2. truecm}
  \caption{\small{\sl If $E_1$ and $E_3$ are proportional the edge carrying the null vector when $\epsilon=0$ is of type 1, otherwise of type 2.}\label{fig:div_null_2}}
\end{figure}
If the network carries only null vectors of type 1 the construction of the divisor and the KP wave function can be carried out as in the following Section. On the contrary, in case of null vectors of type 2 we need to modify the PBDTP network, the curve to carry out the construction of a divisor and meromorphic wave function for the given soliton data (see Section \ref{sec:constr_null} for an example).

Finally, null vectors may just appear in reducible networks where we have extra freedom in fixing the edge weights (see Remark \ref{rem:gauge_freedom}). Therefore we conjecture that we may always choose the weights on reducible networks representing a given point so that all edge vectors are not null.

\begin{conjecture}\textbf{Elimination of null vectors on reducible PBDTP networks}\label{conj:null}
Let $({\mathcal N}, \mathcal O, \mathfrak l)$ be a reducible PBDTP network representing a given point $[A]\in\S \subset \GTNN$ for some irreducible positroid cell and such that it possesses a finite number of edges $e_1,\dots, e_s$ carrying null vectors,
$E_{e_l} = 0$, $l\in [s]$. Then using the gauge freedom for unreduced graphs of Remark \ref{rem:gauge_freedom}, we may always change the weights on $\mathcal N$ so that the resulting network $({\tilde {\mathcal N}}, \mathcal O, \mathfrak l)$ still represents $[A]$ and the edge vectors ${\tilde E}_e \not =0$, for all $e\in {\tilde {\mathcal N}}$.
\end{conjecture}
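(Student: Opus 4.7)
My plan is to reduce the problem to the case of a reduced network via Postnikov moves and reductions, and then track how edge vectors behave under the inverse unreduction operations. Since $\mathcal N$ is a PBDTP network representing a point $[A]$ in an irreducible positroid cell $\S$, there exists a finite sequence of Postnikov moves and reductions transforming $\mathcal G$ into a reduced graph $\mathcal G^{\text{red}}$ representing $\S$, for instance the Le-graph of $\S$. Equip $\mathcal G^{\text{red}}$ with a perfect orientation which, in the Le-graph case, is acyclic. By Theorem \ref{thm:null_acyclic} and Proposition \ref{prop:Le_net}, every edge vector of the resulting reduced network $(\mathcal N^{\text{red}}, \mathcal O^{\text{red}}, \mathfrak l)$ representing $[A]$ is nonzero.

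Next I would invert the sequence of moves and reductions to return to $(\mathcal N, \mathcal O, \mathfrak l)$, applying along the way the transformation rules for edge vectors under moves (M1), (M2), (M3) and their inverses, as developed in Section \ref{sec:moves_reduc}. Moves (M1), (M2)-flip and (M3) preserve the PBDTP class and, by the explicit transformation formulas for edge vectors, cannot create null vectors out of non-null ones (they only act by multiplication with nonzero scalars or by invertible $2\times 2$ blocks on pairs of proportional vectors). The only operations that genuinely enlarge the weight parameter space are the unreductions themselves, i.e. the creation of parallel edges, leaves, dipoles, etc. These are precisely the steps at which the unreduced graph gauge freedom of Remark \ref{rem:gauge_freedom} manifests itself: the map from new weights to $[A]$ is surjective but not injective, and the fibers carry positive-dimensional parameters.

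For each unreduction I would describe the family of admissible weight assignments explicitly and express the edge vectors of the enlarged network as rational functions of the edge vectors of the previous network and of the new gauge parameters, using Theorem \ref{theo:null}. The locus in parameter space on which some new or modified edge vector vanishes is then cut out by the simultaneous vanishing of finitely many polynomials in the gauge parameters, whose coefficients are determined by the (inductively nonzero) edge vectors of the previous step. The claim to establish is that this bad locus is a \emph{proper} subvariety, so that a generic choice of new parameters produces only non-null edge vectors; iterating over the finitely many unreductions in the sequence then yields the desired weight assignment on $\mathcal N$.

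The main obstacle is proving that the bad locus is proper at each unreduction step. This requires exhibiting, for each local move, at least one value of the new parameters for which no cancellation occurs in the numerator sums of formula (\ref{eq:tal_formula}) for the affected edges. The delicate point is that null vectors can propagate nonlocally through the graph, so one must also check that the perturbation at one unreduction does not resurrect null vectors elsewhere. A natural approach is to choose the new parameters to be transcendentally independent over the subfield generated by the previous weights, which should guarantee that no nontrivial polynomial identity among edge vector components can hold, hence that no unintended cancellations occur. Carrying this argument through all four types of unreductions, and verifying that the inductive non-nullity hypothesis is preserved, is where I expect the real work to lie.
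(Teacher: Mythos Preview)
The statement you are attempting to prove is labeled a \emph{Conjecture} in the paper, not a theorem: the paper does not prove it. Immediately after stating the conjecture, the authors only verify it on the single example of Figure~\ref{fig:zero-vector}, and in the introduction they list it explicitly among the open problems (``For non-reduced networks the solution of the linear system may contain null-vectors at internal edges. An open question is whether the null-vectors can be eliminated by using the extra freedom in the assignment of the weights.''). So there is no paper proof to compare your proposal against.

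As for the proposal itself, it is a plausible outline but not a proof, and you are candid about the gap. A few concrete issues: your assertion that (M1), (M2)-flip and (M3) ``cannot create null vectors out of non-null ones'' is not established in the paper in that direction for all configurations---the flip move, for instance, replaces $E_0$ by a genuinely different linear combination at white vertices, and Section~\ref{sec:flip} notes that a flip can turn untrivial divisor points into trivial ones, which is exactly the phenomenon of vectors becoming proportional. More seriously, the heart of the argument---that the vanishing locus in the new gauge parameters is a \emph{proper} subvariety at each unreduction step---is precisely the content of the conjecture, and your transcendence-degree heuristic does not by itself rule out that some numerator in (\ref{eq:tal_formula}) is \emph{identically} zero as a polynomial in the new parameters (this is what happens in the $p=q$ example before the gauge freedom is exploited). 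You would need, for each type of unreduction, an explicit witness weight for which no affected edge vector vanishes, and an argument that the perturbation does not propagate new cancellations through the rest of the network; neither is supplied.
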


The example in Figure \ref{fig:zero-vector} satisfies the conjecture. Both networks represent the same point 
$[ 2p/(1+p+q),1] \in Gr^{\mbox{\tiny TP}}(1,2)$, but for the second network all edge vectors are different from zero. Indeed for any choice of $s>0$, using (\ref{eq:tal_formula}), we get:
$$
E_{\tilde w}=-E_{\tilde u}=-E_{\tilde v}=\left(\frac{1+p}{1+p+q},0\right).
$$

\section{Construction of the KP divisor on $\Gamma(\mathcal G)$}\label{sec:anycurve}

Throughout this Section, we fix the phases ${\mathcal K}= \{ \kappa_1 <\cdots< \kappa_n\}$ and the PBDTP graph in the disk $\mathcal G$ representing the irreducible positroid cell $\S \subset \GTNN$ as in Definition \ref{def:graph}. $\Gamma =\Gamma(\mathcal G)$ is the curve as in Construction
\ref{def:gamma}. We denote by $g+1$ the number of faces (ovals) of $\mathcal G$ ($\Gamma$).
$\mathcal O$ is the orientation of $\mathcal G$ and $\mathfrak l$ a fixed gauge direction. Finally $I=\{ 1\le i_1<\cdots < i_k\le n\}$ is the base in $\mathcal M$ associated to $\mathcal O$, whereas
$\bar I= [n] \backslash I$.

For any $[A]\in \S$, we denote $({\mathcal N}, \mathcal O)$ a directed network of graph $\mathcal G$ representing such point. $E_e$ is the system of edge vectors constructed in the previous Section on $({\mathcal N}, \mathcal O, \mathfrak l)$ with boundary conditions $E_j$ at the boundary vertices. 
\textbf{We assume that all edge vectors on $({\mathcal N}, \mathcal O, \mathfrak l)$ are not null. The construction may be extended to the case in which some of the edge vectors are null possibly after a convenient modification of the curve}. We leave a thorough investigation of the case of networks carrying null vectors to a future publication and we just present an example of the procedure for null vectors of type 1 and of type 2 in Section \ref{sec:constr_null}. We recall that, if an edge vector is null in one orientation then it is null in any other orientation of $\mathcal N$ (Proposition \ref{prop:null_vectors}) and that null vectors can't appear in PBDTP networks possessing an acyclic orientation (Theorem \ref{thm:null_acyclic}).

Then, we construct a degree $g$ effective KP divisor $\DKP$ contained in the union of all the $g+1$ ovals of $\Gamma$ and associated to the soliton data $(\mathcal K, [A])$. In this Section we also prove that $\DKP$ is invariant with respect to changes of: 1) gauge ray direction, 2) orientation, 3) position of the Darboux points, 4) weight gauge, 5) vertex gauge.  $\DKP$ depends on the soliton data, on the initial time and on the graph used to construct the reducible rational curve. Moreover, if the graph is reducible, $\DKP$ also depends on the gauge freedom for unreduced graphs.

In the next Section, we complete the proof of Theorem \ref{theo:exist} on $\Gamma$ using the combinatorics of the network to count the number of divisor points in each oval. 

In principle the construction of $\DKP$ may be carried out also on \textbf{reducible positroid cells}. 
In such case, the divisor points on the components of $\Gamma(\mathcal G)$ corresponding to the subgraph of $\mathcal G$ containing either an isolated boundary source or an isolated boundary sink, are all trivial (they do not depend on times).
Similarly, the assumption that  $\mathcal G$  is a \textbf{connected graph} is not restrictive, since edge vectors are null on all edges of disconnected components in the graph, and both the vacuum and the KP wave functions are trivial on the corresponding isolated components of the reducible curve.

\begin{figure}%[H]
  \centering{\includegraphics[width=0.45\textwidth]{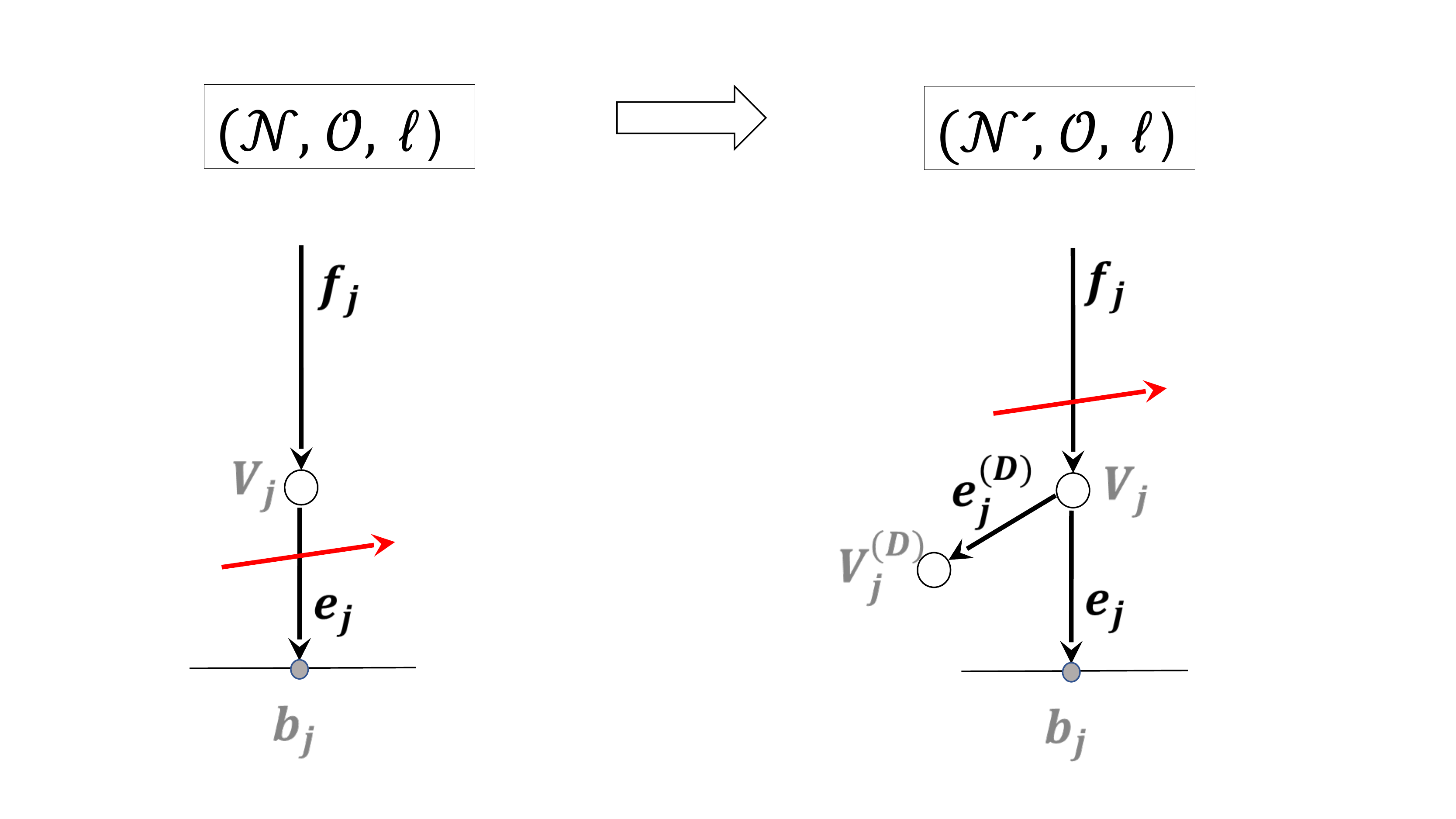}
	\hfill
	\includegraphics[width=0.45\textwidth]{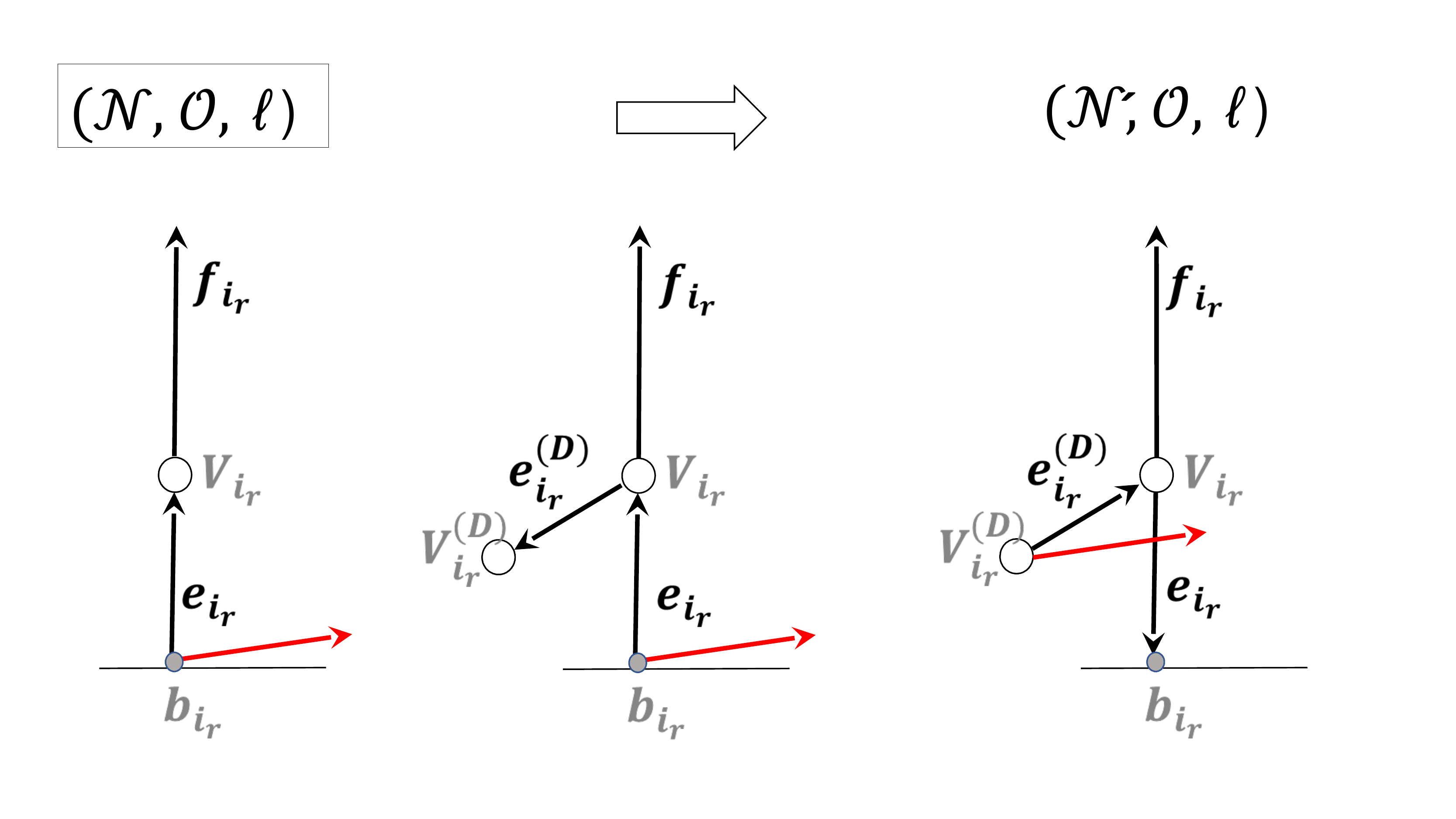}}
	\vspace{-.5 truecm}
  \caption{\small{\sl The insertion/removal of a Darboux vertex next to a boundary sink [left] and to a boundary source [right].}\label{fig:Darboux}}
\end{figure}

\textbf{Plan of the Section}: The construction of the KP divisor $\DKP$ and of the Baker--Akhiezer function $\hat \psi(P, \vec t)$ on $\Gamma(\mathcal G)$ is carried in several steps:
\begin{enumerate}
\item\textbf{The modified network ${\mathcal N}^{\prime}$.} First we modify $\mathcal N$ adding internal sources and sinks next to the boundary vertices and creating Darboux edges. The latter edges correspond to 
the Darboux sink/source marked points in $\Gamma$ as in \cite{AG1,AG2}. This transformation on the network is carried out using move (M2) \cite{Pos}, it transforms all boundary vertices into sinks and does not affect the curve $\Gamma$. It is a technical trick which simplifies the extension of the vacuum Sato wave function from $\Gamma_0$ to $\Gamma$;
\item\textbf{The edge wave functions} At the edges $e\in {\mathcal N}^{\prime}$ we define a non normalized vacuum edge wave function $\Phi_e (\vec t)$ and its dressing $\Psi_e (\vec t)$ using the system of vectors introduced in Section \ref{sec:vectors}. We then assign a degree $g+n-k$ vacuum network divisor $\DVN$ and a degree $g+n-k$ dressed network divisor $\DDN$ to ${\mathcal N}^{\prime}$ using the linear relations at the trivalent white vertices. 
We also explain the dependence of both the edge wave functions and of the network divisor on the gauge ray direction, the position of the Darboux edges, the weight gauge, the vertex gauge and the orientation of the network; 
\item\textbf{The vacuum divisor} Each vacuum network divisor number represents the coordinate of a vacuum divisor point on the corresponding component of $\Gamma$. Since the $n-k$ Darboux points associated to $\bar I$ may be interpreted as non effective divisor points, the degree $g$ vacuum divisor ${\mathcal D}_{\textup{\scriptsize vac},\Gamma}$ on $\Gamma$ is obtained from $\DVN$ eliminating the $n-k$ divisor points at the Darboux sink copies. By construction ${\mathcal D}_{\textup{\scriptsize vac},\Gamma}$ depends on the base in the matroid $\mathcal M$ associated to $\mathcal O$;
\item\textbf{The KP divisor on $\Gamma$} Similarly, each dressed network divisor number on a trivalent white vertex represents the coordinate of a dressed divisor point on the corresponding component of $\Gamma$. Since the $n$ Darboux points may be interpreted as non effective divisor points, the degree $g$ KP divisor $\DKP$ for the soliton data $(\mathcal K, [A])$ is then obtained from $\DDN$ eliminating the $n$ divisor points on the Darboux components and adding the degree $k$ Sato divisor, $\DS$.
By construction the KP divisor $\DKP = \DKP(\mathcal K, [A])$ is contained in  the union of the ovals of $\Gamma$ and it
does not depend on the gauge ray direction, on the position of the Darboux points, on the weight gauge and on the vertex gauge.
\item\textbf{The invariance of the KP divisor with respect to changes of base in ${\mathcal M}$ (Theorem \ref{theo:inv}).} On each component $\Gamma_l=\mathbb{CP}^1$ of $\Gamma(\mathcal G)$ corresponding to a trivalent white vertex $V_l$ of $\mathcal G$, we have four real ordered marked points corresponding to the edges at $V_l$ and to the dressed point divisor. A change of orientation of $\mathcal G$ induces a change in the local coordinate used on $\Gamma_l$, and leaves invariant the order and the position of the four points on $\Gamma_l$;
\item\textbf{The KP wave function on $\Gamma$} The normalized KP wave function $\hat \psi (P, \vec t)$ on $\Gamma$ is constructed imposing that it coincides for all times at each marked point $P$ with the value of the normalized dressed edge wave function $\hat \Psi_e (\vec t)$ at the corresponding edge $e=e(P)\in {\mathcal N}^{\prime}$. By construction it coincides with the Sato wave function on $\Gamma_0$ and, for any $\vec t$ its effective divisor of poles is $\DKP$: $\left(\hat \psi( \cdot, \vec t) \right) + \DKP\ge 0$.
\end{enumerate}

\begin{figure}%[H]
  \centering{
	\includegraphics[width=0.75\textwidth]{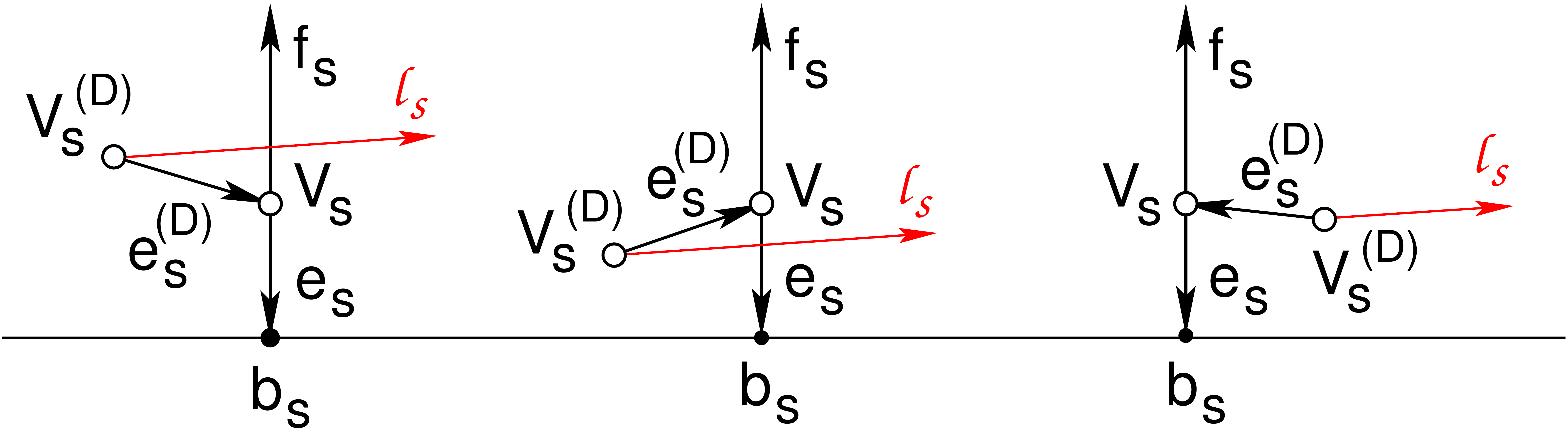}}
	\vspace{-.5 truecm}
  \caption{\small{\sl The possible configurations of the gauge ray ${\mathfrak l}_s$ intersections with $e_s$ and $f_s$.}\label{fig:Darboux1}}
\end{figure}

\subsection{The modified network ${\mathcal N}^{\prime}$}\label{sec:modN}

\begin{construction}\label{con:Nprime}\textbf{The modified network ${\mathcal N}^{\prime}$.}
Let $I$ be the base in $\mathcal M$ associated to the orientation of the PBDTP network $\mathcal N$ representing $[A]$. In the following we assume that the bivalent vertices $V_j$ are sufficiently near to the boundary vertices $b_j$, $j\in [n]$, and that the edges at  $V_j$ are all parallel.
We transform all boundary vertices of $\mathcal N$ into sinks by constructing the network  $({\mathcal N}^{\prime}, {\mathcal O}, \mathfrak l)$ where: 
\begin{enumerate}
\item ${\mathcal N}^{\prime}$ is the network obtained from ${\mathcal N}$ adding an edge with unit weight and a white vertex $V^{(D)}_{s}$ at each bivalent white vertex $V_{s}$ associated to the boundary vertex $b_s$, $s\in [n]$ (see Figure \ref{fig:Darboux}). Moreover, we assume that the  $V^{(D)}_{s}$ is sufficiently near to both $V_{s}$ and $b_s$;
\item $\mathcal O$ is the orientation of ${\mathcal N}^{\prime}$ such that
\begin{enumerate}
\item For any $s\in [n]$, the edge $e^{(D)}_s$ joining $V^{(D)}_{s}$ and $V_s$ ends at $V^{(D)}_{s}$ if $b_s$ is a boundary sink in $\mathcal N$, otherwise it starts at $V_s$;
\item For any $i\in I$, the edge $e_i\in \mathcal N$ starting at the boundary source $b_{i}$ is transformed into the edge $-e_i\in {\mathcal N}^{\prime}$ which ends at $b_{i}$;
\item All other edges in $\mathcal N^{\prime}$ are oriented as in $({\mathcal N},\mathcal O)$;
\end{enumerate}
\item The gauge direction is the same, but the line ${\mathfrak l}_{i}$ starts at $V^{(D)}_{i}$, for any $i\in I$. Since we assume that new vertices are close enough to boundary vertices, without loss of generality we assume that:
\begin{enumerate}
\item $\mbox{int}(e)$ is invariant for all edges $e$ except those at $V_i$, $i\in [n]$; 
\item If the gauge ray $\mathfrak{l}_s$ crosses $e_j$ in $\mathcal N$ for $j\ne s$, then it will cross $f_j$ in $\mathcal N$;
\item The gauge ray $\mathfrak{l}_s$ either intersects exactly one of the edges $e_s$, $f_s$ or none of them (see Figure~\ref{fig:Darboux1}).   
\end{enumerate}
\item We assign $E[s]$, the $s$--th vector of the canonical basis to the boundary vertex $b_s$, $s\in [n]$, both in $\mathcal N$ and $\mathcal N^{\prime}$. We assign the null vector to each internal Darboux source or sink in $\mathcal N^{\prime}$.
\end{enumerate}
\end{construction}

In Figure \ref{fig:Nprime}, we illustrate such modified network for the example of Figure \ref{fig:Rules0}.
We remark that the above construction is a generalization of the construction carried in \cite{AG3} in the special case of the Le--network oriented canonically. The system of vectors $E_e^{\prime}$ on ${\mathcal N}^{\prime}$ is related to the system $E_e$ on $\mathcal N$ as explained below.
\begin{figure}
  \centering{\includegraphics[width=0.6\textwidth]{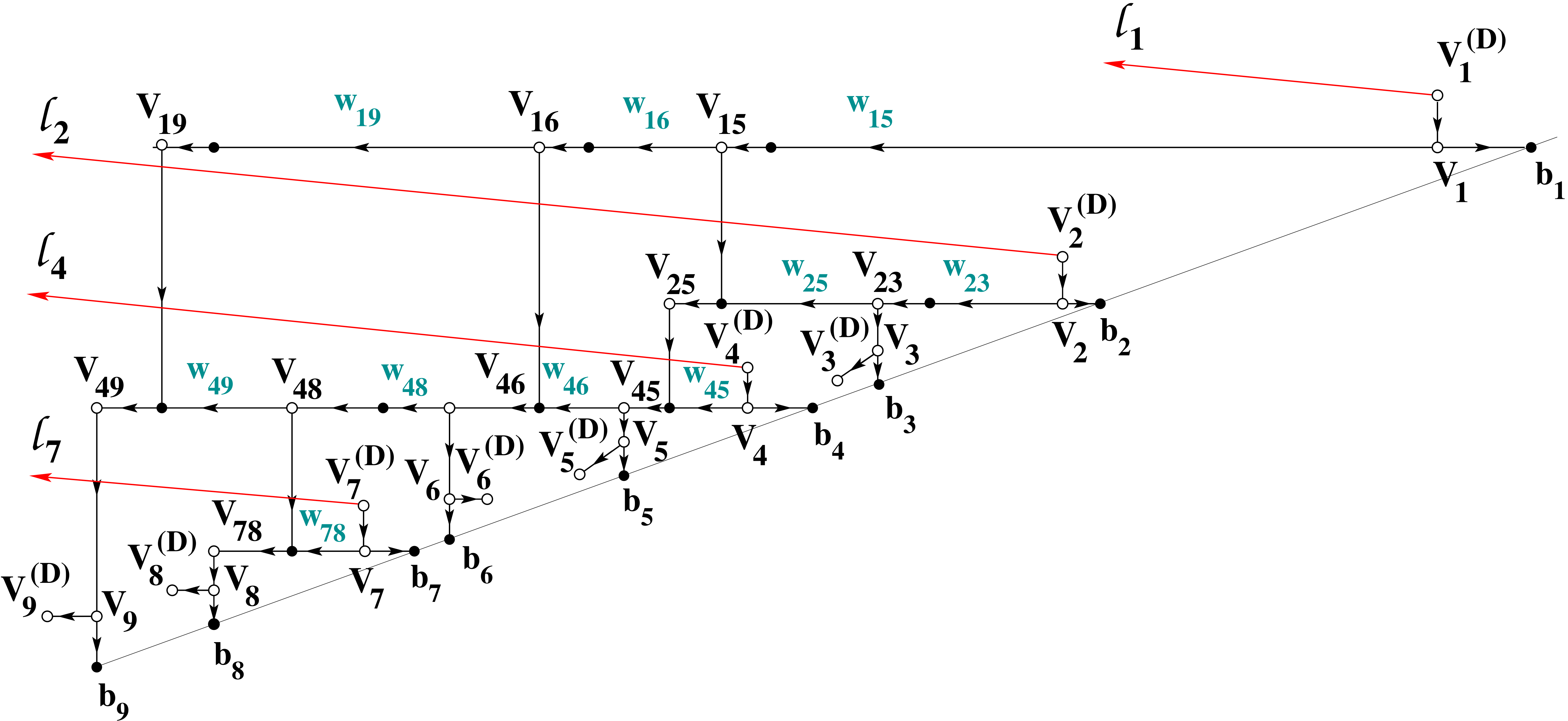}}
  \caption{\small{\sl The modified network $({\mathcal N}^{\prime}, {\mathcal O}, \mathfrak l)$ for the example of Figure \ref{fig:Rules0}. The choice of the position of the Darboux vertices is consistent with that of the Darboux points in Figure \ref{fig:net_curve}.}\label{fig:Nprime}}
\end{figure}
\begin{proposition}\label{prop:ext_syst}\textbf{Edge vectors on $({\mathcal N}^{\prime}, \mathcal O, \mathfrak{l})$.}
Let $(\mathcal N^{\prime}, \mathcal O, \mathfrak{l})$ be the transformed network obtained from $(\mathcal N, \mathcal O, \mathfrak{l})$ as described above.  Let us denote $E_e$ and $E_e^{\prime}$ the edge vector at the edge $e$ respectively in $\mathcal N$ and $\mathcal N^{\prime}$ as in Definition \ref{def:edge_vector}. Then
$E^{\prime}_e =E_e$, for any edge $e$ not ending or starting at $V_i$, $i\in [n]$,
\begin{equation}\label{eq:darboux_edge}
E^{\prime}_{e^{(D)}_{i_r}} = \left\{ \begin{array}{ll}
A[r], & \, \mbox{ if } i_r \in I,\\
0, & \, \mbox{ otherwise},
\end{array}
\right.
\end{equation}
\begin{equation}\label{eq:darboux_source}
E^{\prime}_{e_{i_r}} = \left\{ \begin{array}{ll}
(-1)^{\mbox{int}(V^{(D)}_{i_r},e_{i_r})} E [i_r], & \, \mbox{ if } i_r \in I,\\
(-1)^{\mbox{int}(e_{i_r})} E_{e_{i_r}}=E [i_r], & \, \mbox{ otherwise},
\end{array}
\right.
\end{equation}
where $\mbox{int}(V^{(D)}_{i_r},e_{i_r})$ is 1 if the gauge ray $\mathfrak{l}_{i_r}$ intersects the edge $e_{i_r}$ in $\mathcal N^{\prime}$, otherwise is 0,
and
\begin{equation}\label{eq:darboux_edge_2}
E^{\prime}_{f_{i_r}} = \left\{ \begin{array}{ll}
(-1)^{\mbox{int}(V^{(D)}_{i_r},f_{i_r})+\mbox{int}(e_{i_r})} E_{f_{i_r}}, & \, \mbox{ if } i_r \in I,\\
(-1)^{\mbox{int}(e_{i_r})}E_{f_{i_r}}, & \, \mbox{ otherwise},
\end{array}
\right.
\end{equation}
where $\mbox{int}(e_{i_r})$ is the number of intersections of gauge rays with $e_{i_r}\in \mathcal N$,  and $\mbox{int}(V^{(D)}_{i_r},f_{i_r})$ is 1 if the gauge ray $\mathfrak{l}_{i_r}$ intersects the edge $f_{i_r}$ in $\mathcal N^{\prime}$, otherwise is 0.
\end{proposition}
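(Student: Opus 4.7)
\smallskip

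\noindent\textbf{Proof plan.} The strategy is to exploit two facts: the modification converting $\mathcal N$ into $\mathcal N'$ is strictly local around each boundary vertex; and, by Theorem~\ref{theo:consist}, edge vectors are uniquely determined by the linear system of Lemma~\ref{lem:relations} together with the boundary conditions. Note that after the modification each vertex $V_{i_r}$ becomes trivalent white in $\mathcal N'$: for $i_r\notin I$ it acquires the outgoing Darboux edge $e^{(D)}_{i_r}$ in addition to the incoming $f_{i_r}$ and outgoing $e_{i_r}$, while for $i_r\in I$ it acquires the incoming $e^{(D)}_{i_r}$ in addition to the reversed outgoing $e_{i_r}$ and the outgoing $f_{i_r}$.

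First I would prove the identity $E'_e=E_e$ for every edge $e$ of $\mathcal N$ not incident to any $V_j$, $j\in [n]$. By the closeness assumption on the $V_j$'s and the shift of gauge-ray starting points from $b_j$ to $V_j^{(D)}$, both of which happen within arbitrarily small neighborhoods of the boundary, the weight, winding, and intersection counts of every directed walk in the bulk are unaffected. Equivalently, when restricted to the common set of edges the linear relations (\ref{eq:lineq_biv})--(\ref{eq:lineq_white}) at internal vertices of $\mathcal N'$ coincide with those on $\mathcal N$, so by uniqueness the edge-vector systems agree on this common set. Moreover the boundary conditions at sinks $b_j$, $j\notin I$, are preserved in the same form $(-1)^{\mathrm{int}(e_j)}E[j]$.

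Second I would handle the Darboux edge $e^{(D)}_{i_r}$. For $i_r\notin I$, the vertex $V^{(D)}_{i_r}$ is an isolated internal sink carrying the null vector, so any walk from $e^{(D)}_{i_r}$ terminates immediately there, giving $E'_{e^{(D)}_{i_r}}=0$. For $i_r\in I$, $V^{(D)}_{i_r}$ is an internal source and, by construction, the gauge ray $\mathfrak l_{i_r}$ starts at this vertex, hence cannot intersect $e^{(D)}_{i_r}$. Then every walk from $e^{(D)}_{i_r}$ passes through $V_{i_r}$ and either (a) exits along $e_{i_r}$ to the new sink $b_{i_r}$, contributing exactly the canonical vector $E[i_r]$ (with sign tracked by whether $\mathfrak l_{i_r}$ crosses $e_{i_r}$ before reaching $b_{i_r}$), or (b) continues along $f_{i_r}$ into the bulk, where by step one the walks are in bijection with walks starting at $f_{i_r}$ in $\mathcal N$, whose total contribution recovers $A[r]-E[i_r]$ by formula (\ref{eq:Ei}). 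Assembling the two contributions and checking the signs against the white-vertex relation (\ref{eq:lineq_white}) at $V_{i_r}$ yields $E'_{e^{(D)}_{i_r}}=A[r]$.

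Finally I would derive the formulas (\ref{eq:darboux_source}) and (\ref{eq:darboux_edge_2}) for $E'_{e_{i_r}}$ and $E'_{f_{i_r}}$ as boundary-condition plus linear-relation computations. For $i_r\notin I$, the relation (\ref{eq:vec_bou_sink}) at the unit-weight sink edge $e_{i_r}$ in $\mathcal N'$ gives $E'_{e_{i_r}}=(-1)^{\mathrm{int}(e_{i_r})}E[i_r]$, and (\ref{eq:lineq_white}) at $V_{i_r}$ propagates the extra sign $(-1)^{\mathrm{int}(e_{i_r})}$ to $E'_{f_{i_r}}$. For $i_r\in I$, the same formula (\ref{eq:vec_bou_sink}) evaluated at the reversed boundary edge gives $E'_{e_{i_r}}=(-1)^{\mathrm{int}(V^{(D)}_{i_r},e_{i_r})}E[i_r]$, where $\mathrm{int}(V^{(D)}_{i_r},e_{i_r})$ records the new intersection with $\mathfrak l_{i_r}$ starting at $V^{(D)}_{i_r}$; the white-vertex relation at $V_{i_r}$ then yields the expression for $E'_{f_{i_r}}$. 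The main technical obstacle is precisely this bookkeeping of intersection numbers: because shifting the starting point of $\mathfrak l_{i_r}$ from $b_{i_r}$ to the slightly interior $V^{(D)}_{i_r}$ may add or remove one intersection with $e_{i_r}$ and/or $f_{i_r}$, the three configurations illustrated in Figure \ref{fig:Darboux1} must each be checked to confirm that the signs assemble exactly into the indices $\mathrm{int}(V^{(D)}_{i_r},\,\cdot\,)$ appearing in (\ref{eq:darboux_source})--(\ref{eq:darboux_edge_2}).
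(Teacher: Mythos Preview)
Your proposal is correct and follows essentially the same approach as the paper: both arguments rest on the locality of the modification near the boundary, the boundary formula (\ref{eq:vec_bou_sink}), the white-vertex relation (\ref{eq:lineq_white}) at $V_{i_r}$, and Corollary~\ref{cor:bound_source} (i.e.\ formula (\ref{eq:Ei})). The only cosmetic difference is the order: the paper first establishes (\ref{eq:darboux_source}) and (\ref{eq:darboux_edge_2}) and then deduces (\ref{eq:darboux_edge}) from the linear relation at $V_{i_r}$, whereas you argue (\ref{eq:darboux_edge}) first via path enumeration and then read off the remaining two; the sign bookkeeping against Figure~\ref{fig:Darboux1} that you flag as the main technical point is exactly what the paper's proof reduces to as well.
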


\begin{proof}
All edges vectors in $\mathcal N^{\prime}$ except those at the edges at $V_l$, $l\in [n]$, are not affected by the insertion/removal of a Darboux source/sink vertices, because of the assumptions on the boundary conditions and the position of the vertices 
$V_l$, $V^{(D)}_l$ sufficiently near to the boundary. 

If $i_r \in \bar I$, (\ref{eq:darboux_source}) and (\ref{eq:darboux_edge_2}) follow from the definition of edge vector and the fact that all the intersections of gauge rays with $e_{i_r}$ in $\mathcal N$ are transformed into intersections with $f_{i_r}$ in $\mathcal N^{\prime}$ since gauge rays start from Darboux source vertices in the latter. Moreover at Darboux sources, $\mathfrak{l}_{i_r}$ in $\mathcal N^{\prime}$ $\mathfrak{l}_{i_r}$ may intersect either $e_{i_r}$ or $f_{i_r}$ or none of them depending on the relative positions of vertices and the gauge ray direction. Therefore (\ref{eq:darboux_source}) and (\ref{eq:darboux_edge_2})
hold true also for $i_r\in I$. 

Finally let us prove (\ref{eq:darboux_edge}) for $i_r\in I$. By definition
\[
E_{e_{i_r}^{(D)}}^{\prime} = (-1)^{\mbox{wind}(e_{i_r}^{(D)}, f_{i_r})} E_{f_{i_r}}^{\prime} + (-1)^{\mbox{wind}(e_{i_r}^{(D)},e_{i_r})} E_{ e_{i_r}}^{\prime}.
\]
By Corollary \ref{cor:bound_source}, in $\mathcal N$ we have
\[
A[r]-E[i_r] = (-1)^{\mbox{int}(e_{i_r})} E_{f_{i_r}}, \quad \forall i_r \in I.
\]
Therefore (\ref{eq:darboux_edge}) is equivalent to prove that
\[
\mbox{wind}(e_{i_r}^{(D)}, a) + \mbox{int}(V^{(D)}_{i_r},a) =0 \quad (\!\!\!\!\!\!\mod 2),  \quad\quad a=f_{i_r},e_{i_r}\\
\]
and the latter equations easily follow by the definition of winding number. 
\end{proof}

\begin{remark}\label{rem:genusGN}\textbf{The relation between trivalent white vertices in ${\mathcal N}$ and ${\mathcal N}^{\prime}$}	
Let ${\mathcal N}$ be a PBDTP network with $g+1$ faces representing $[A]$ with graph $\mathcal G$ satisfying Definition \ref{def:graph}. Let $t_W, t_B, d_W$ and $d_B$ respectively be the number of trivalent white, of trivalent black, of bivalent white, and of bivalent black internal vertices of ${\mathcal N}$. Then on $\mathcal N$,
$t_W = g-k$, whereas
${\mathcal N}^{\prime}$ has by construction the same number of faces as $\mathcal N$ and $t_W+n=g+n-k$ trivalent white vertices.
It is straightforward to prove that the linear system on $\mathcal N^{\prime}$ is compatible and of maximal rank.  
\end{remark}

\begin{corollary}\textbf{The effect of a change in the orientation or in the gauge ray direction on the system of vectors on ${\mathcal N}^{\prime}$}
The system of vectors on ${\mathcal N}^{\prime}$ changes in agreement with Proposition \ref{prop:rays} and Theorem \ref{theo:orient} when we change either the gauge ray direction or the orientation. 
\end{corollary}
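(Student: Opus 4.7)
The plan is to apply Proposition~\ref{prop:rays} and Theorem~\ref{theo:orient} directly to the modified network $\mathcal N^{\prime}$, after observing that any change of gauge ray direction or of orientation on $\mathcal N$ lifts in a canonical way to a change of the same kind on $\mathcal N^{\prime}$. Indeed, $\mathcal N^{\prime}$ is itself a planar bicolored directed graph in the disk in which every boundary vertex is a sink (the boundary sources of $\mathcal N$ have been replaced by the Darboux spikes $e^{(D)}_{i_r}$), and by Theorem~\ref{theo:consist} its edge vector system is uniquely determined by the linear relations of Lemma~\ref{lem:relations} together with the boundary conditions chosen in Construction~\ref{con:Nprime}. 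Hence it is enough to produce candidate transformed edge vectors, to verify that they solve the new linear system, and to conclude by uniqueness; the comparison with $\mathcal N$ is then made through the explicit relations (\ref{eq:darboux_edge})--(\ref{eq:darboux_edge_2}) of Proposition~\ref{prop:ext_syst}.

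For the gauge direction, choose a monotone rotation from $\mathfrak l$ to $\mathfrak l^{\prime}$. Since the Darboux vertices $V^{(D)}_s$ are placed arbitrarily close to the corresponding boundary vertices $b_s$, the rotation may be taken so that its local behaviour near each $V_s$ coincides with that of $\mathcal N$, up to a bounded number of additional crossings with the edges $e^{(D)}_s$, $e_s$, $f_s$. The sign $(-1)^{\mbox{\scriptsize int}(V_e)+\mbox{\scriptsize par}(e)}$ of (\ref{eq:vector_ray}) is purely local in the initial vertex of the edge, so the proof of Proposition~\ref{prop:rays} carries over verbatim to every edge of $\mathcal N^{\prime}$ outside these neighbourhoods. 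For the three edges $e^{(D)}_s$, $e_s$, $f_s$ near each $V_s$ the identity is checked by a direct case analysis of the three configurations drawn in Figure~\ref{fig:Darboux1}, combined with the sign corrections (\ref{eq:darboux_edge})--(\ref{eq:darboux_edge_2}).

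For a change of orientation from $\mathcal O$ to $\hat{\mathcal O}$ on $\mathcal N$, the induced change on $\mathcal N^{\prime}$ reverses in addition every Darboux edge $e^{(D)}_j$ attached to a boundary vertex $b_j$ whose role flips between source and sink. Following Section~\ref{sec:orient}, decompose the transformation into elementary changes along simple oriented cycles and along non-self-intersecting directed paths from a boundary source to a boundary sink; the arguments of Lemmas~\ref{lemma:path} and~\ref{lemma:cycle} then apply to $\mathcal N^{\prime}$ because cycles avoid the Darboux region entirely, while paths can be extended through the reversed Darboux edges at their endpoints. The resulting assignment has the form $\hat E^{\prime}_e = \alpha_e E^{\prime}_e + \sum_{r=1}^k c^r_e A[r]$ prescribed by Theorem~\ref{theo:orient}, and the coefficients coincide with those of Theorem~\ref{theo:orient} on edges common to $\mathcal N$ thanks to Proposition~\ref{prop:ext_syst}. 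The hard part will be the sign bookkeeping at the path endpoints, where the reorientation of the Darboux edge and the relocation of the anchor of the gauge ray $\mathfrak l_{i_0}$ or $\mathfrak l_{j_0}$ occur simultaneously: one must verify that the local contributions $\epsilon_1$, $\epsilon_2$, $\epsilon_3$ of (\ref{eq:eps_on_path}) combine with the Darboux sign corrections of Proposition~\ref{prop:ext_syst}, evaluated before and after the reorientation, to reproduce exactly the transformation rule of Theorem~\ref{theo:orient} on the three affected edges at each endpoint.
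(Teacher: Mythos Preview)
The paper gives no proof for this corollary; it is stated immediately after Proposition~\ref{prop:ext_syst} and Remark~\ref{rem:genusGN} as a direct consequence of the construction, with no argument supplied. Your proposal therefore provides substantially more detail than the paper does, and your overall strategy---apply the proofs of Proposition~\ref{prop:rays} and Theorem~\ref{theo:orient} to $\mathcal N^{\prime}$ itself, treating the Darboux neighbourhoods by hand and invoking uniqueness via Theorem~\ref{theo:consist}---is sound.

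That said, there is a shorter route implicit in the paper's placement of the corollary. Proposition~\ref{prop:ext_syst} shows that $E^{\prime}_e=E_e$ on every edge common to $\mathcal N$ and $\mathcal N^{\prime}$, and gives explicit closed formulas (\ref{eq:darboux_edge})--(\ref{eq:darboux_edge_2}) on the handful of edges near each $V_i$. Since Proposition~\ref{prop:rays} and Theorem~\ref{theo:orient} are already established on $\mathcal N$, their conclusions transfer verbatim to the common edges; on the Darboux and boundary edges one simply applies (\ref{eq:darboux_edge})--(\ref{eq:darboux_edge_2}) before and after the change and reads off the transformation directly. This avoids re-running the path-reversal machinery of Lemmas~\ref{lemma:path} and~\ref{lemma:cycle} inside $\mathcal N^{\prime}$, which is convenient because $\mathcal N^{\prime}$ is not literally a PBDTP network in the sense of Definition~\ref{def:graph} (it has internal univalent sources and sinks), so those lemmas do not apply to it on the nose without the extension you sketch. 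Your approach and the paper's implicit one lead to the same conclusion; yours is more self-contained, the paper's is quicker given what has already been proved on $\mathcal N$.
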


\begin{remark}\label{rem:comp3}\textbf{Relation with the construction in \cite{AG1}}. 
For data $[A]\in \GTP$, the algebraic construction in \cite{AG1} of a system of vectors and coefficients is associated to a representative matrix of $[A]$ in banded form. Such algebraic construction may be interpreted in terms of addition of Darboux points at convenient internal vertices. Let ${\mathcal N}^{\prime}_T$ be the modified Le-network for $[A]$ with the standard orientation and use move (M2) to add a white vertex with unit edge and null vector at each bivalent white vertex $V_{i_rj_{n-k+r}}$, $r\in [k]$. For instance for the example in Figure \ref{fig:Gr35}[left] representing a point in $Gr^{\mbox{\tiny TP}}(3,5)$, the added edges are coloured grey. The augmented system of vectors preserves its compatibility. Then invert the direction of each horizontal path from 
$V_{i_rj_{n-k+r}}$ to $V_{i_r}$, taking the reciprocal of the weights on each edge which changes of direction, we move the gauge lines to the new Darboux point. The transformed system of vectors is compatible since it satisfies the linear system at each vertex; it is straightforward to show that it coincides with the one used in \cite{AG1} for the choice of gauge ray direction of Figure \ref{fig:Gr35}[right]. In particular, at the new Darboux points the edge vector coincides with the $r$--th row of the representative matrix in banded form.
\end{remark}

\begin{figure}%[H]
  \centering{\includegraphics[width=0.47\textwidth]{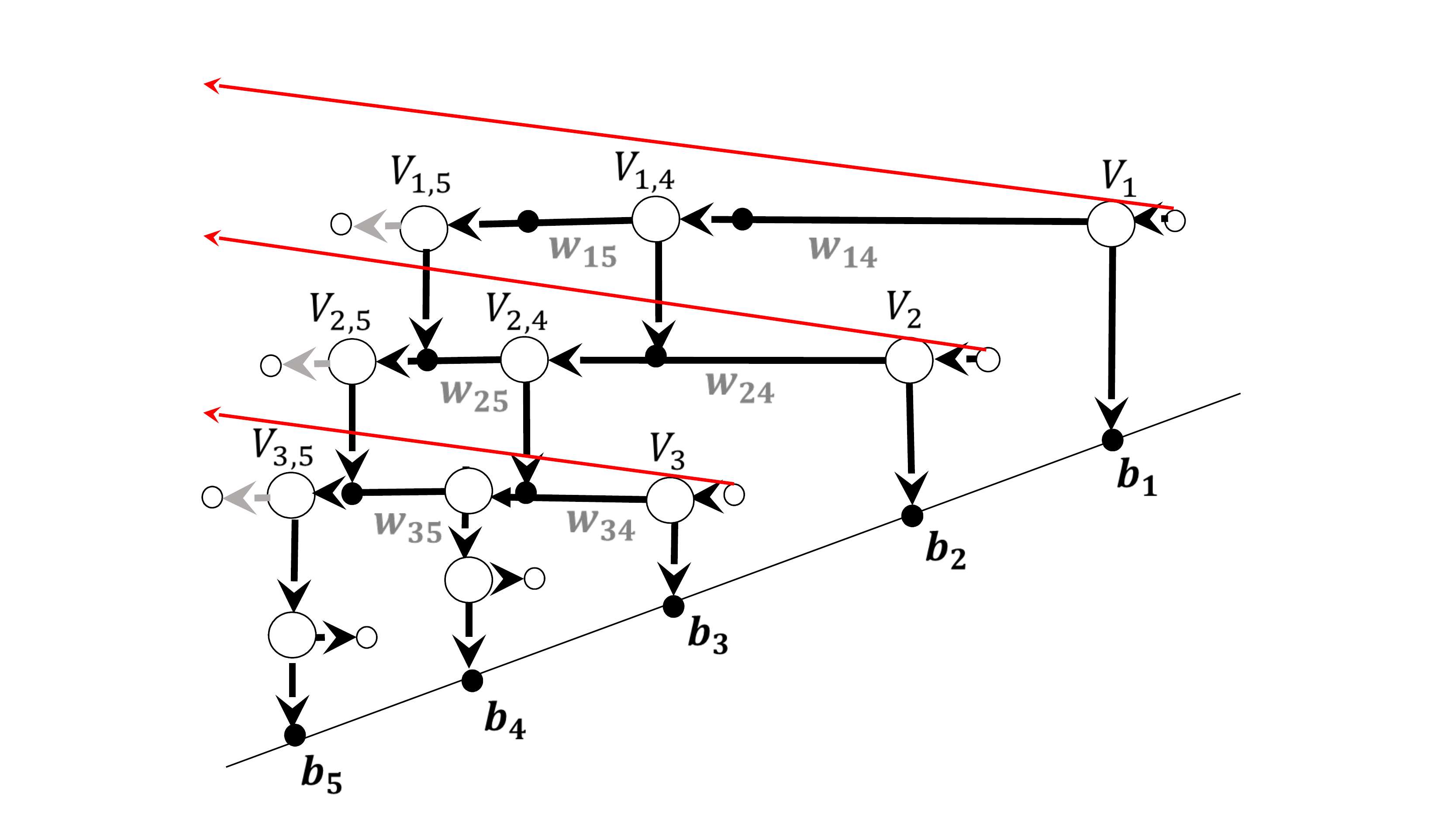}
	\hfill
	\includegraphics[width=0.47\textwidth]{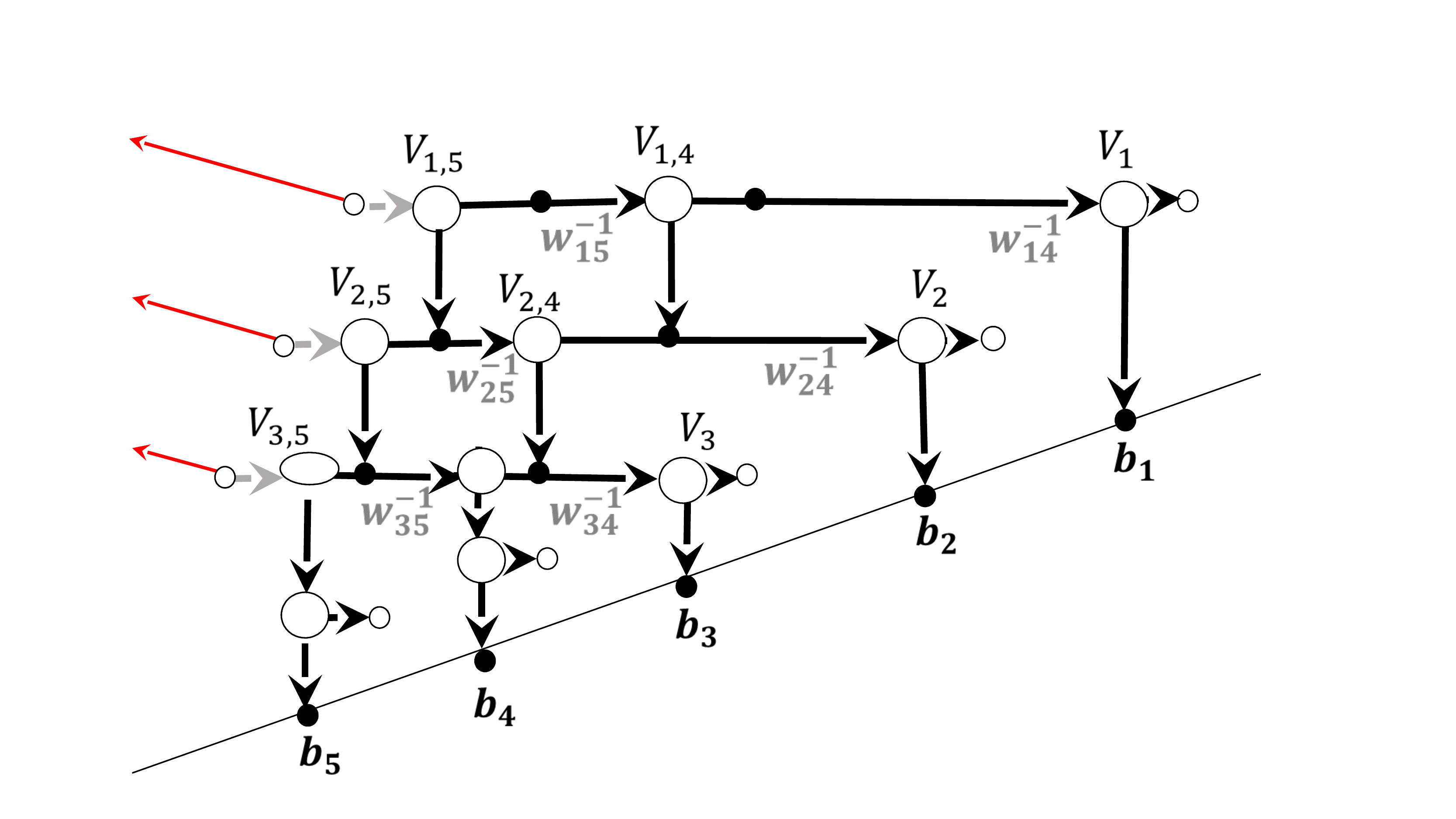}}
	\vspace{-.7 truecm}
  \caption{\small{\sl The graphical interpretation of the algebraic construction in \cite{AG1} in the case of $[A]\in Gr^{\mbox{\tiny TP}}(3,5)$. The network on the left is the one used in \cite{AG3} and provides RREF matrix at the Darboux source edges. The network on the right corresponds to the algebraic construction in \cite{AG1} and provides the matrix in banded form at the Darboux source edge. }\label{fig:Gr35}}
\end{figure}

\subsection{The vacuum and the dressed edge wave functions}\label{sec:vertex_wavefn_general_case}

In this Section we define the vacuum edge wave function and the dressed edge wave function on $({\mathcal N}^{\prime}, \mathcal O, \mathfrak l)$ and introduce an effective vacuum divisor and an effective dressed divisor. For the rest of the paper we denote ${\mathfrak E}_\theta (\vec t) = (e^{\theta_1(\vec t)}, \dots, e^{\theta_n(\vec t)})$, $\theta_j(\vec t) = \sum_{l\ge 1} \kappa_j^{l} t_l$, where $\vec t =(t_1=x,t_2=y,t_3=t, t_4,\dots)$ are the KP times, and 
 $\prec \cdot, \cdot\succ $ denotes the usual scalar product. Moreover we assume that only a finite number of entries of $\vec t$ are non zero.

\begin{definition}\label{def:vvw_gen}\textbf{The vacuum edge wave function (vacuum e.w.) and the dressed edge wave function (dressed e.w.) on $({\mathcal N}^{\prime}, \mathcal O, \mathfrak l)$.}
Let $({\mathcal N}^{\prime}, \mathcal O, \mathfrak l)$ be the modified oriented network of Construction \ref{con:Nprime} associated to the soliton data $(\mathcal K, [A])$ and let us denote $E_e^{\prime}$ its system of vectors as in Proposition \ref{prop:ext_syst}. Finally let $A$ be the 
RREF matrix of $[A]$ w.r.t. the base $I$ associated to the orientation $\mathcal O$, so that
the edge vectors at the Darboux sources are $E^{\prime}_{e_{i_r}^{(D)}} = A[r]$ and
\[
f^{(r)} (\vec t) = \sum_{j=1}^n A^r_j e^{\theta_j(\vec t)}, \quad\quad r\in [k],
\]
are the heat hierarchy solutions generating the Darboux transformation $\mathfrak D^{(k)}$ for the soliton data $(\mathcal K, [A])$.

We define the \textbf{vacuum edge wave function (vacuum e.w.)} on $({\mathcal N}^{\prime}, \mathcal O, \mathfrak l)$ as follows
\begin{equation}\label{eq:vvfN}
\Phi_{e, \mathcal O,\mathfrak l} (\vec t) \equiv \prec E_e^{\prime}, {\mathfrak E}_\theta (\vec t)\succ, \quad\quad e\in {\mathcal N}^{\prime}.
\end{equation}
In particular, using (\ref{eq:darboux_edge})--(\ref{eq:darboux_edge_2}), at Darboux source vertices we have
\begin{equation}\label{eq:vac_wf_source}
\begin{array}{l}
\Phi_{e^{(D)}_{i_r}, \mathcal O,\mathfrak l} (\vec t) =f^{(r)} (\vec t), \quad\quad
\Phi_{e_{i_r}, \mathcal O,\mathfrak l} (\vec t) = (-1)^{\mbox{int}(V^{(D)}_{i_r},e_{i_r})} e^{\theta_{i_r}(\vec t)},\\
\Phi_{f_{i_r}, \mathcal O,\mathfrak l} (\vec t) =(-1)^{\mbox{int}(f_j)} \left( f^{(r)} (\vec t) -e^{\theta_{i_r}(\vec t)} \right),
\end{array}
\quad\quad r\in k,
\end{equation}
whereas at Darboux sink vertices we have
\begin{equation}\label{eq:vac_wf_sink}
\Phi_{e^{(D)}_j, \mathcal O,\mathfrak l} (\vec t) \equiv 0, \quad\quad
\Phi_{e_j, \mathcal O,\mathfrak l} (\vec t) =e^{\theta_{j}(\vec t)},\quad\quad
\Phi_{f_j, \mathcal O,\mathfrak l} (\vec t) =(-1)^{\mbox{int}(f_j)}e^{\theta_{j}(\vec t)},
\quad\quad j\in \bar I,
\end{equation}
where $\mbox{int}(f_j)$ is the number of intersections of gauge rays with the edge $f_j$ in $\mathcal N^{\prime}$.

We also define the \textbf{dressed edge wave function (dressed e.w.)} on $({\mathcal N}^{\prime}, \mathcal O, \mathfrak l)$ as follows
\begin{equation}\label{eq:KPvfN}
\Psi_{e, \mathcal O,\mathfrak l} (\vec t) \equiv {\mathfrak D}^{(k)} \Phi_{e, \mathcal O,\mathfrak l} (\vec t).
\end{equation}
In particular, at all Darboux vertices we have $\Psi_{e^{(D)}_j, \mathcal O,\mathfrak l} (\vec t) \equiv 0$, for all $j\in [n]$, 
and
\begin{equation}\label{eq:KP_wf_source}
\begin{array}{lll}
\Psi_{e_j, \mathcal O,\mathfrak l} (\vec t) =(-1)^{\mbox{int}(V^{(D)}_{j},e_{j})} {\mathfrak D}^{(k)} e^{\theta_{j}(\vec t)},&\quad
\Psi_{f_j, \mathcal O,\mathfrak l} (\vec t) = (-1)^{1+\mbox{int}(f_j)}\Psi_{e_j, \mathcal O,\mathfrak l} (\vec t), 
&\quad\mbox {if } j\in I, 
\\
\Psi_{e_j, \mathcal O,\mathfrak l} (\vec t) ={\mathfrak D}^{(k)} e^{\theta_{j}(\vec t)}, &\quad
\Psi_{f_j, \mathcal O,\mathfrak l} (\vec t) = (-1)^{\mbox{int}(f_j)}\Psi_{e_j, \mathcal O,\mathfrak l} (\vec t), 
&\quad\mbox {if } j\in \bar I.
\end{array}
\end{equation}
\end{definition}

\begin{definition}\label{def:zeros_vvw}\textbf{The sets $\mathcal E$ and $\mathcal E_I$.} 
Let $(\mathcal N^{\prime}, \mathcal O)$ be given and let $I$ be the base associated to the orientation $\mathcal O$.
In the following we call
\[
\mathcal E = \{ e \in ({\mathcal N}^{\prime}, {\mathcal O}) \ : \, e\ne e^{(D)}_i \ , \ i\in [n] \ \}, \quad\quad \mathcal E_I = \{ e \in ({\mathcal N}^{\prime}, {\mathcal O}) \ : \, e\ne e^{(D)}_i \ , \ i\in \bar I \ \}
\]
respectively the sets of edges in the modified network which are not Darboux edges and those which are not Darboux sink edges in the given orientation.
\end{definition}

We remark that  $E_e=(0,\dots,0)$ for all $ e \in  (\mathcal N^{\prime}, \mathcal O)\backslash \mathcal E_I$. Therefore on such edges both the vacuum e.w. and the dressed e.w. are identically zero for all $\vec t$. 
Similarly, the only edges $e\in \mathcal N^{\prime}$ on which $E_e$ is a linear combination of the rows of $A$
are those in $\mathcal E_I \backslash \mathcal E$, {\sl i.e.} the edges at the Darboux sources in the given orientation.
On such edges the dressed e.w. is identically zero for all $\vec t$, since the vacuum e.w. is one of the heat hierarchy solution generating the Darboux transformation for the soliton data $(\mathcal K,[A])$. Therefore the following statement holds true.

\begin{lemma}
Let the dressed e.w. $\Psi_{e, \mathcal O,\mathfrak l} (\vec t)$ and the set $\mathcal E$ be as in Definitions \ref{def:vvw_gen} and \ref{def:zeros_vvw}. Then $\Psi_{e, \mathcal O,\mathfrak l} (\vec t) \not \equiv 0$ for any $e\in \mathcal E$ and the latter set is the same for any orientation of the network.
\end{lemma}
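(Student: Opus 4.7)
My plan is to reduce the non-vanishing of $\Psi_e$ to a linear-algebraic condition on the edge vector $E_e^{\prime}$, and then to exploit the graph-theoretic structure of $(\mathcal N^{\prime},\mathcal O)$ near the Darboux source vertices. The orientation-invariance of $\mathcal E$ will follow from direct inspection of Construction~\ref{con:Nprime}, and is the easy half of the statement.

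For the non-vanishing, I would first observe that the exponentials $e^{\theta_1(\vec t)},\ldots,e^{\theta_n(\vec t)}$ are linearly independent (the $\kappa_j$ being distinct), and that the $k$-dimensional kernel of $\mathfrak D^{(k)}$ is exactly $\operatorname{span}\{f^{(1)},\ldots,f^{(k)}\}$ by construction of the dressing. Combining these two facts yields the equivalence
\[
\Psi_{e,\mathcal O,\mathfrak l}(\vec t)\equiv 0 \iff E_e^{\prime}\in\operatorname{rowspan}(A).
\]
Since $A$ is in RREF with pivots at $i_r$, $r\in[k]$, any representation $E_e^{\prime}=\sum_{r=1}^k c_r A[r]$ is forced to satisfy $c_r=(E_e^{\prime})_{i_r}$, so everything reduces to computing the pivot components of the edge vectors.

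The key graph-theoretic step is this: for $i_r\in I$, the vertex $V_{i_r}^{(D)}$ introduced by Construction~\ref{con:Nprime} is a Darboux source with no incoming edges, and the unique edge of $(\mathcal N^{\prime},\mathcal O)$ incident to the boundary sink $b_{i_r}$ runs from $V_{i_r}^{(D)}$ to $b_{i_r}$; call it $\bar e_{i_r}$. Hence every directed walk in $(\mathcal N^{\prime},\mathcal O)$ ending at $b_{i_r}$ must be the trivial walk starting at $\bar e_{i_r}$ itself, and by Definition~\ref{def:edge_vector} one gets $(E_e^{\prime})_{i_r}=0$ for every $e\in\mathcal E$ with $e\neq\bar e_{i_r}$. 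This splits the argument into two cases. If $e=\bar e_{i_r}$ for some $i_r\in I$, then by (\ref{eq:darboux_source}) $E_e^{\prime}=\pm E[i_r]$; lying in $\operatorname{rowspan}(A)$ would force $A[r]=E[i_r]$, i.e.\ the $r$-th row of $A$ would have no nonzero entry besides its pivot, contradicting the irreducibility hypothesis of Definition~\ref{def:regsol}(\ref{it:row}). If instead $e\neq\bar e_{i_r}$ for every $i_r\in I$, then all $c_r$ vanish, forcing $E_e^{\prime}=0$; but $e$ is an original edge of $\mathcal N$ (possibly reoriented), and Proposition~\ref{prop:ext_syst} identifies $E_e^{\prime}$ with $E_e$ up to a nonzero multiplicative sign factor, so the blanket non-null hypothesis of Section~\ref{sec:anycurve} on $\mathcal N$ transfers to the statement $E_e^{\prime}\neq 0$ on $\mathcal N^{\prime}$, again a contradiction.

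For the invariance of $\mathcal E$ under changes of orientation, I would simply note that Construction~\ref{con:Nprime} adds a Darboux edge $e_s^{(D)}$ for every boundary vertex $b_s$, $s\in[n]$, regardless of $\mathcal O$: only the direction of each $e_s^{(D)}$ depends on whether $b_s$ is a source or a sink. The underlying set $\{e_s^{(D)}:s\in[n]\}$ is therefore the same for all perfect orientations, and so is its complement $\mathcal E$ in $\mathcal N^{\prime}$. The main obstacle in the whole argument is the bookkeeping in Case~2 above: one must carefully verify, using Proposition~\ref{prop:ext_syst}, that for every edge in $\mathcal E$ distinct from the $\bar e_{i_r}$'s the passage from $(\mathcal N,\mathcal O)$ to $(\mathcal N^{\prime},\mathcal O)$ only multiplies $E_e$ by a nonzero sign, so that the non-null hypothesis genuinely transfers.
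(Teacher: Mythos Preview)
Your argument is correct and follows the same route as the paper, which merely records (in the paragraph preceding the lemma) that $\Psi_e\equiv 0$ is equivalent to $E_e^{\prime}\in\operatorname{rowspan}(A)$, and that the latter occurs precisely on the Darboux edges $e_s^{(D)}$; you have supplied the graph-theoretic details behind that assertion. One small slip: the unique edge incident to $b_{i_r}$ in $(\mathcal N^{\prime},\mathcal O)$ runs from $V_{i_r}$ to $b_{i_r}$, not from $V_{i_r}^{(D)}$ (the Darboux source is attached to $V_{i_r}$, not directly to the boundary), so your $\bar e_{i_r}$ is the reversed edge $-e_{i_r}$ of Construction~\ref{con:Nprime}; this does not affect the logic, since (\ref{eq:darboux_source}) still gives $E^{\prime}_{e_{i_r}}=\pm E[i_r]$ and the only directed walks reaching $b_{i_r}$ pass through $V_{i_r}^{(D)}\to V_{i_r}\to b_{i_r}$.
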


\begin{remark}\label{rem:t_0}\textbf{The initial time $\vec t_0$.}
Let the vacuum and the dressed e.w. and $\mathcal E_I$ and $\mathcal E$ be as in Definitions \ref{def:vvw_gen}  and \ref{def:zeros_vvw}. In the following we fix an initial time $\vec t_0= (x_0,0,\dots)$ such that
\begin{enumerate}
\item For any $e\in \mathcal E_I$, the sign of the vacuum e.w. $\Phi_{e, \mathcal O,\mathfrak l} (\vec t_0)$ is the same of the coefficient of the highest phase appearing in it;
\item For any $e\in \mathcal E$, the dressed e.w. $\Psi_{e, \mathcal O,\mathfrak l} (\vec t_0) \not = 0$.
\end{enumerate}
\end{remark}

The full rank linear system satisfied by the edge vectors (see Section \ref{sec:linear}) induces a full rank linear system satisfied by the edge wave functions:
\begin{enumerate}
\item At any bivalent vertex incident with the incoming edge $e$ and the outgoing edge $f$, 
\begin{equation}\label{eq:lin_Phi1}
\Phi_{e, \mathcal O, \mathfrak l} (\vec t) =  (-1)^{\mbox{int}(e)+\mbox{wind}(e,f)} w_e \Phi_{f, \mathcal O, \mathfrak l} (\vec t), \\
\Psi_{e, \mathcal O, \mathfrak l} (\vec t) =  (-1)^{\mbox{int}(e)+\mbox{wind}(e,f)} w_e \Psi_{f, \mathcal O, \mathfrak l} (\vec t), 
\end{equation}
where $\mbox{int}(e)$ denotes the number of intersections of the gauge boundary rays with the edge $e$, while 
$\mbox{wind}(e,f)$ denotes the winding number of the path containing the two edges $e$ and $f$ with respect to the 
gauge direction  $\mathfrak l$;
\item 
At any trivalent white vertex incident with the incoming edge $e_3$ and the outgoing edges $e_1$, $e_2$,
\begin{equation}\label{eq:lin_Phi}
\begin{array}{l}
\Phi_{e_3, \mathcal O, \mathfrak l} (\vec t) =  (-1)^{\mbox{int}(e_3)} \, w_3 \left( (-1)^{\mbox{wind}(e_3, e_1)} \Phi_{e_1, \mathcal O, \mathfrak l} (\vec t) + (-1)^{\mbox{wind}(e_3, e_2)}\  \Phi_{e_2, \mathcal O, \mathfrak l} (\vec t)\right),\\
\Psi_{e_3, \mathcal O, \mathfrak l} (\vec t) = (-1)^{\mbox{int}(e_3)} \, w_3 \left( (-1)^{\mbox{wind}(e_3, e_1)} \Psi_{e_1, \mathcal O, \mathfrak l} (\vec t) + (-1)^{\mbox{wind}(e_3, e_2)}\  \Psi_{e_2, \mathcal O, \mathfrak l} (\vec t)\right);
\end{array}
\end{equation}
\item At any trivalent black vertex incident with the incoming edges  $e_2$, $e_3$ and the outgoing edge $e_1$,
\begin{equation}\label{eq:lin_Phi3}
\begin{array}{ll}
\Phi_{e_m, \mathcal O, \mathfrak l} (\vec t) &=  (-1)^{\mbox{int}(e_m)+\mbox{wind}(e_m, e_1)}\ w_m \Phi_{e_1, \mathcal O, \mathfrak l} (\vec t),\quad\quad m=2,3,\\
\Psi_{e_m, \mathcal O, \mathfrak l} (\vec t) &=  (-1)^{\mbox{int}(e_m)+\mbox{wind}(e_m, e_1)}\ w_m \Psi_{e_1, \mathcal O, \mathfrak l} (\vec t),\quad\quad m=2,3.
\end{array}
\end{equation}
\end{enumerate}

In view of the above we associate vacuum and dressed network divisors to $({\mathcal N}^{\prime},\mathcal O, \mathfrak l)$ as follows.

\begin{definition}\label{def:vac_div_gen}\textbf{The vacuum network divisor $\DVN$ and the dressed network divisor $\DDN$.}
At each trivalent white vertex $V$ of $({\mathcal N}^{\prime},\mathcal O,\mathfrak l)$, 
let $\Phi_{e_m, \mathcal O,\mathfrak l} (\vec t)$ be the vacuum e.w. on the edge $e_m$, $m\in [3]$, defined above. Then
we assign the following vacuum network divisor number $\gamma_{\textup{\scriptsize vac},V}$ to $V$: 
\begin{equation}\label{eq:vac_pole_def}
\gamma_{\textup{\scriptsize vac},V} = \frac{(-1)^{\mbox{wind}(e_3, e_1)} \Phi_{e_1,\mathcal O,\mathfrak l} (\vec t_0) }{(-1)^{\mbox{wind}(e_3, e_1)} \Phi_{e_1,\mathcal O,\mathfrak l} (\vec t_0)+(-1)^{\mbox{wind}(e_3, e_2)} \Phi_{e_2,\mathcal O,\mathfrak l} (\vec t_0)},
\end{equation}
where $\mbox{wind}(e_3,e_i)$ is the winding number of the directed path $[e_3,e_i]$, $i=1,2$.

We call $\DVN= \{ (\gvac, V_l) \; l\in [g+n-k]  \}$ the vacuum network divisor on ${\mathcal N}^{\prime}$, where $V_l$, $l\in [g+n-k]$, are the trivalent white vertices of the network.

At each trivalent white vertex $V$ of $({\mathcal N}^{\prime},\mathcal O,\mathfrak l)$, 
let $\Psi_{e_m, \mathcal O,\mathfrak l} (\vec t)$ be the dressed e.w. on the edge $e_m$, $m\in [3]$, defined above. Then, to $V$ we assign the dressed network divisor number 
\begin{equation}\label{eq:dress_pole_def}
\gamma_{\textup{\scriptsize dr},V} = \frac{(-1)^{\mbox{wind}(e_3, e_1)} \Psi_{e_1,\mathcal O,\mathfrak l} (\vec t_0) }{(-1)^{\mbox{wind}(e_3, e_1)} \Psi_{e_1,\mathcal O,\mathfrak l} (\vec t_0)+(-1)^{\mbox{wind}(e_3, e_2)} \Psi_{e_2,\mathcal O,\mathfrak l} (\vec t_0)},
\end{equation}
with $\mbox{wind}(e_3,e_i)$ as above. If $V$ is a trivalent vertex, associated to a Darboux source, the denominator in (\ref{eq:dress_pole_def}) is 
zero, and we set $\gamma_{\textup{\scriptsize dr},V}=\infty$.

We call $\DDN = \{ (\gdr, V_l), \; l\in [g+n-k]  \}$ the dressed network divisor on ${\mathcal N}^{\prime}$, where $V_l$, $l\in [g+n-k]$, are the trivalent white vertices of the network.
\end{definition}

\begin{remark}\label{rem:div_N}\textbf{Network divisors on $(\mathcal N, \mathcal O, \mathfrak{l})$.}
We remark that we may define a vacuum and an edge wave function also on the original network $(\mathcal N, \mathcal O, \mathfrak{l})$. Such edge wave functions coincide with those in Definition \ref{def:vvw_gen} for $(\mathcal N^{\prime}, \mathcal O, \mathfrak{l})$ at all edges $e$ not ending or starting at a vertex $V_l$, $l\in [n]$, associated to a Darboux (source or sink) point. At the edges at $V_l$ in $\mathcal N$ the two dressed wave functions may differ only by sign, whereas the two vacuum wave functions differ by a heat hierarchy solution if $V_l$ is associated to a source. 

Therefore we may also define a degree $g-k$ dressed network divisor on $(\mathcal N, \mathcal O, \mathfrak{l})$
\begin{equation}\label{eq:net_div_N}
{\mathcal D}_{\textup{\scriptsize dr},{\mathcal N}} \; = \; \DDN \; \backslash \; \{ (\gdr, V_l), \; l\in [n]  \}.
\end{equation}
The latter divisor is exactly the one used in Definition \ref{def:DKP} to complete the Sato divisor on $\Gamma$ for the soliton data $(\mathcal K, [A])$.
\end{remark}

If $\mathcal N$ is the acyclically oriented Le--network representing $[A]$, then the vacuum and dressed network divisors constructed in \cite{AG3} coincide with the above definition for the choice of the gauge ray $\mathfrak l$ as in Figure \ref{fig:Nprime}.

\begin{definition}\label{def:triv_div}\textbf{Trivial network divisor numbers.}
By definition, both the vacuum and the dressed network divisor numbers at each trivalent white vertex are real and represent the local coordinate of a point on the corresponding copy of $\mathbb{CP}^1$ in $\Gamma$. In general, on $({\mathcal N}^{\prime}, \mathcal O,\mathfrak l)$, the value of the vacuum or dressed network divisor number at a given trivalent white vertex $V$ depends on the choice of $\vec t_0$. If this is not the case we call the corresponding \textbf{network divisor number trivial}.
\end{definition}

If ${\mathcal N}^{\prime}$ is the Le--network, by construction the only trivial network divisor numbers occur at Darboux vertices \cite{AG3}. In the general case, trivial network divisor numbers occur also at all vertices where the linear system in (\ref{eq:lineq_white}) involves linearly dependent vectors.

\begin{lemma}\label{lemma:trivial_div}\textbf{Trivial network divisor numbers}
Let $E_{e_m}$, $m\in [3]$, be the edge vectors at a trivalent white vertex $V$ of $({\mathcal N}^{\prime},\mathcal O,\mathfrak l)$ and $\gamma_{\textup{\scriptsize vac},V}$, $\gamma_{\textup{\scriptsize dr},V}$ respectively be the vacuum and dressed network divisor numbers at $V$. If there exists a non zero constant $c_V$ such that either $E_{e_2} = c_V E_{e_1}$ or $E_{e_3} = c_V E_{e_1}$ or $E_{e_3} = c_V E_{e_2}$, then at $V$ the vacuum and dressed network divisor numbers concide
$\gamma_{\textup{\scriptsize dr},V} = \gamma_{\textup{\scriptsize vac},V}$
and the corresponding network divisor numbers are trivial.
\end{lemma}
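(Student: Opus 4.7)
The proof plan is to reduce all three subcases to a single one and then observe by direct inspection that the ratio defining the divisor number becomes independent of $\vec t_0$, with the $\Phi$ and $\Psi$ versions coinciding.

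First I would handle the case $E_{e_2}=c_V E_{e_1}$ directly. By linearity of the pairing with $\mathfrak{E}_\theta(\vec t)$ in (\ref{eq:vvfN}), proportionality of edge vectors passes to proportionality of the vacuum edge wave functions: $\Phi_{e_2,\mathcal O,\mathfrak l}(\vec t) = c_V\, \Phi_{e_1,\mathcal O,\mathfrak l}(\vec t)$ for all $\vec t$. Since the Darboux operator $\mathfrak D^{(k)}$ in (\ref{eq:D}) acts only in $x$ and its coefficients $\mathfrak w_j(\vec t)$ are independent of the edge, the same proportionality is inherited by the dressed edge wave functions: $\Psi_{e_2,\mathcal O,\mathfrak l}(\vec t) = c_V\,\Psi_{e_1,\mathcal O,\mathfrak l}(\vec t)$. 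Substituting into (\ref{eq:vac_pole_def}) and (\ref{eq:dress_pole_def}) the factors $\Phi_{e_1}(\vec t_0)$ (resp.\ $\Psi_{e_1}(\vec t_0)$) cancel between numerator and denominator, giving
\[
\gamma_{\textup{\scriptsize vac},V} \;=\; \gamma_{\textup{\scriptsize dr},V} \;=\; \frac{(-1)^{\mbox{\scriptsize wind}(e_3,e_1)}}{(-1)^{\mbox{\scriptsize wind}(e_3,e_1)} + c_V\, (-1)^{\mbox{\scriptsize wind}(e_3,e_2)}},
\]
which is manifestly independent of $\vec t_0$, hence trivial in the sense of Definition \ref{def:triv_div}.

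Next I would reduce the other two subcases to the first one by exploiting the white-vertex relation (\ref{eq:lineq_white}). If $E_{e_3} = c_V E_{e_1}$, then (\ref{eq:lineq_white}) forces
\[
E_{e_2} \;=\; (-1)^{\mbox{\scriptsize wind}(e_3,e_1) + \mbox{\scriptsize wind}(e_3,e_2)}\, \frac{c_V\,(-1)^{\mbox{\scriptsize int}(e_3)} - w_3\,(-1)^{\mbox{\scriptsize wind}(e_3,e_1)}}{w_3}\, E_{e_1},
\]
so $E_{e_2}$ is again proportional to $E_{e_1}$ and Case~1 applies; the analogous manipulation handles $E_{e_3} = c_V E_{e_2}$. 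One small technical point to check is that the proportionality constant obtained in the reduction is nonzero — otherwise one of the three edge vectors would be null. Under the blanket assumption in force in this Section that all edge vectors on $({\mathcal N},\mathcal O,\mathfrak l)$ are not null (and in view of Proposition~\ref{prop:null_vectors}, which makes non-nullness independent of gauges and orientation), this is automatic. There is no serious obstacle here: the argument is essentially algebraic linearity plus the observation that the Darboux operator commutes with scalar multiplication, and the only care needed is the bookkeeping of signs coming from the winding and intersection numbers, all of which factor out of the ratio.
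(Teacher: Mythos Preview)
Your proof is correct and follows essentially the same approach as the paper: both handle the case $E_{e_2}=c_V E_{e_1}$ by direct substitution into (\ref{eq:vac_pole_def}) and (\ref{eq:dress_pole_def}), obtaining the identical time--independent value $\gamma_{\textup{\scriptsize vac},V}=\gamma_{\textup{\scriptsize dr},V}=\big(1+c_V(-1)^{\mbox{\scriptsize wind}(e_3,e_2)-\mbox{\scriptsize wind}(e_3,e_1)}\big)^{-1}$. The only organizational difference is that the paper says the remaining two cases ``may be treated in a similar way'' (i.e.\ by the same direct computation), whereas you reduce them to the first case via the linear relation (\ref{eq:lineq_white}); this is a harmless shortcut, and your observation that the resulting proportionality constant is nonzero under the standing no--null--vector hypothesis is exactly the right check.
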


\begin{proof}
Indeed in the first case for all $\vec t$,
$\Phi_{e_2,\mathcal O,\mathfrak l} (\vec t) =c_V\Phi_{e_1,\mathcal O,\mathfrak l} (\vec t)$, $\Psi_{e_2,\mathcal O,\mathfrak l} (\vec t) =c_V\Psi_{e_1,\mathcal O,\mathfrak l} (\vec t)$,
and 
\[
\Phi_{e_3,\mathcal O,\mathfrak l} (\vec t) =(-1)^{\mbox{int}(e_3)} w_3\left( (-1)^{\mbox{wind}(e_3, e_1)}+c_V(-1)^{\mbox{wind}(e_3, e_2)}\right) \Phi_{e_1,\mathcal O,\mathfrak l} (\vec t),\]
\[ 
\Psi_{e_3,\mathcal O,\mathfrak l} (\vec t) =(-1)^{\mbox{int}(e_3)} w_3\left( (-1)^{\mbox{wind}(e_3, e_1)}+c_V(-1)^{\mbox{wind}(e_3, e_2)}\right) \Psi_{e_1,\mathcal O,\mathfrak l} (\vec t),
\]
so that
\[
\gamma_{\textup{\scriptsize dr},V} = \gamma_{\textup{\scriptsize vac},V} = \frac{1}{1+c_V(-1)^{\mbox{wind}(e_3, e_2)-\mbox{wind}(e_3, e_1)}}.
\]
The other cases may be treated in a similar way.
\end{proof}

Next we use the transformation properties of the edge vectors w.r.t. changes of orientation, of ray direction, weight and vertex gauges settled in Sections \ref{sec:gauge_ray}--\ref{sec:different_gauge}, to characterize the transformation properties of both the edge wave functions and the networks divisors.

Any change of gauge ray direction only effect the signs of the vacuum and of the dressed e.w.s at the given vertex $V$ 
(Proposition \ref{prop:change_orient}), and leaves invariant both the vacuum and the dressed network divisor numbers (Corollary \ref{cor:indep_gauge}).

Every change of orientation in ${\mathcal N}^{\prime}$ (either induced by a change of base in the matroid or due to a change of orientation in closed cycles) corresponds to a well defined change of the local coordinate of each copy if $\mathbb{CP}^1$, and the dressed network divisor numbers 
change according to such coordinate transformation (Corollary \ref{cor:indep_orient}). Therefore, all the positions of the corresponding divisor 
points on $\Gamma$ remain invariant (Theorem \ref{theo:inv}).
On the contrary, any change of orientation in the network associated to a change of base in the matroid acts untrivially on the values of the vacuum e.w. and therefore also on the vacuum network divisor numbers and the positions of the corresponding vacuum divisor points in $\Gamma$ (Remark \ref{rem:vac_div_orient}).

\begin{proposition}\label{prop:change_orient}\textbf{The dependence of the vacuum e.w. and of the dressed e.w. on the gauge ray directions and on the orientation of the network}
Let ${\mathcal N}^{\prime}$ be a network representing $[A]\in \GTNN$ and let $\mathcal O_s$, $\mathfrak l_s $, $s=1,2$, respectively  be two perfect orientations and two gauge ray directions of ${\mathcal N}^{\prime}$. Let ${\mathcal K}$ be a system of phases. Let $f^{(r)} (\vec t)$ be a system of heat hierarchy solutions generating the Darboux transformation ${\mathfrak D}^{(k)}$ for the soliton data $(\mathcal K, [A])$. 
Then, for any fixed $\vec t$ and $e\in {\mathcal N}^{\prime}$, 
\begin{enumerate}
\item The vacuum e.w. $\Phi$ and the dressed e.w. $\Psi$ may only change of sign if we change the gauge ray direction keeping the orientation $\mathcal O$ fixed:
\begin{equation}\label{eq:phi_gauge} 
\Phi_{e, \mathcal O, \mathfrak l_2} (\vec t) = (-1)^{\mbox{int}(V_e)+\mbox{par(e)}}  \Phi_{e, \mathcal O, \mathfrak l_1} (\vec t), \quad\quad
\Psi_{e, \mathcal O, \mathfrak l_2} (\vec t) = (-1)^{\mbox{int}(V_e)+\mbox{par(e)}}  \Psi_{e, \mathcal O, \mathfrak l_1} (\vec t),
\end{equation}
where the indices $\mbox{int}(V_e)$ and $\mbox{par(e)}$ are as in Proposition \ref{prop:rays};
\item If we change the orientation in ${\mathcal N}^{\prime}$ from $\mathcal O_1$ to $\mathcal O_2$, then, for any edge $e\in {\mathcal N}^{\prime}$ there exist real constants $\alpha_e$, $c^r_e$, $r\in [k]$, such that the vacuum e.w. $\Phi$ and the dressed e.w. $\Psi$ satisfy
\begin{equation}\label{eq:phi_orient}
\Phi_{e, \mathcal O_2, \mathfrak l} (\vec t) = \alpha_e \Phi_{e, \mathcal O_1, \mathfrak l} (\vec t) + \sum_{r=1}^k c^r_e f^{(r)} (\vec t),\quad\quad
\Psi_{e, \mathcal O_2, \mathfrak l} (\vec t) = \alpha_e \Psi_{e, \mathcal O_1, \mathfrak l} (\vec t) .
\end{equation}
Moreover, if the two orientations are associated to the same base $I$, then for any edge $e\in {\mathcal N}^{\prime}$ there exists a real constant $\alpha_e$ such that the vacuum e.w. 
satisfies
\begin{equation}\label{eq:phi_orient_2}
\Phi_{e, \mathcal O_2(I), \mathfrak l} (\vec t) = \alpha_e \Phi_{e, \mathcal O_1(I), \mathfrak l} (\vec t) 
\end{equation}
\end{enumerate}
\end{proposition}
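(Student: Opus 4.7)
The plan is to reduce everything to the transformation properties of the edge vectors already established in Proposition \ref{prop:rays} and Theorem \ref{theo:orient}, and then use linearity of the scalar product $\prec \cdot, {\mathfrak E}_\theta(\vec t)\succ$ and of the Darboux operator $\mathfrak D^{(k)}$, together with the key fact that $\mathfrak D^{(k)} f^{(r)}(\vec t) \equiv 0$ for $r\in[k]$.

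First I would handle the gauge ray dependence (\ref{eq:phi_gauge}). By Proposition \ref{prop:rays}, changing the gauge ray from $\mathfrak l_1$ to $\mathfrak l_2$ multiplies the edge vector $E'_e$ by the single sign $(-1)^{\mbox{\scriptsize int}(V_e)+\mbox{\scriptsize par}(e)}$ which does not depend on the component index. Since
\[
\Phi_{e,\mathcal O,\mathfrak l}(\vec t)=\prec E'_e, {\mathfrak E}_\theta(\vec t)\succ,
\qquad
\Psi_{e,\mathcal O,\mathfrak l}(\vec t)=\mathfrak D^{(k)}\Phi_{e,\mathcal O,\mathfrak l}(\vec t),
\]
and $\mathfrak D^{(k)}$ acts only on the time variables, the same overall sign factors out of both $\Phi$ and $\Psi$, giving (\ref{eq:phi_gauge}). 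One must also verify that this identity still holds for the Darboux edges $e^{(D)}_{i_r}$ where the edge vector equals $A[r]$ and hence is gauge-ray-independent; on those edges both sides coincide with $f^{(r)}(\vec t)$ and $0$ respectively, so the identity is trivially satisfied with sign $+1$.

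For the orientation dependence (\ref{eq:phi_orient}), I would apply Theorem \ref{theo:orient}: for every edge $e$,
\[
\widehat E_e=\alpha_e E_e+\sum_{r=1}^k c^r_e A[r],
\]
with real constants $\alpha_e\ne 0$ and $c^r_e$. Taking the scalar product with ${\mathfrak E}_\theta(\vec t)$ and using $\prec A[r], {\mathfrak E}_\theta(\vec t)\succ = f^{(r)}(\vec t)$ immediately yields the first relation in (\ref{eq:phi_orient}). Applying $\mathfrak D^{(k)}$ to that identity and invoking $\mathfrak D^{(k)} f^{(r)} \equiv 0$ (by the very definition of the Darboux transformation associated to the soliton data $(\mathcal K,[A])$) gives the corresponding relation for $\Psi$. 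Some care is needed at the Darboux source edges: in the new orientation $\mathcal O_2$ these may become Darboux sink edges or internal edges, but in all cases the edge vector is either of the form $A[r']$ (for the new pivots) or has been previously computed, and the required representation with suitable $\alpha_e,c^r_e$ remains valid as a direct consequence of Lemmas \ref{lemma:path} and \ref{lemma:cycle}.

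For the base-preserving case (\ref{eq:phi_orient_2}), I would argue that two perfect orientations $\mathcal O_1(I)$ and $\mathcal O_2(I)$ associated to the same base $I$ have the same boundary sources and sinks, so they differ by a finite sequence of reversals along \emph{simple closed cycles} only (no source-to-sink paths are needed). Then Lemma \ref{lemma:cycle} applies at each elementary step and the transformation rule (\ref{eq:hat_E_Q}) involves no $A[r]$ correction, i.e.\ $c^r_e=0$ for all $r$ and all $e$. Consequently $\widehat E_e = \alpha_e E_e$ with $\alpha_e\in\{\pm 1,\pm w_e^{-1}\}$ (up to the product over the sequence of cycles), and the scalar product with ${\mathfrak E}_\theta(\vec t)$ gives (\ref{eq:phi_orient_2}). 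The main subtle point, and the only one I expect to require genuine care, is justifying that a change of perfect orientation preserving the base $I$ can indeed be realized by simple-cycle reversals alone; this follows from the standard description of the fibre of the source map on perfect orientations, but it should be stated explicitly at the beginning of the argument.
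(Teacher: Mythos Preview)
Your proposal is correct and matches the paper's own (implicit) argument: the paper does not give a separate proof of this proposition, since the definitions $\Phi_{e,\mathcal O,\mathfrak l}(\vec t)=\prec E'_e,\mathfrak E_\theta(\vec t)\succ$ and $\Psi_{e,\mathcal O,\mathfrak l}(\vec t)=\mathfrak D^{(k)}\Phi_{e,\mathcal O,\mathfrak l}(\vec t)$ make (\ref{eq:phi_gauge})--(\ref{eq:phi_orient_2}) immediate consequences of Proposition~\ref{prop:rays}, Theorem~\ref{theo:orient} and Lemma~\ref{lemma:cycle}, together with $\mathfrak D^{(k)}f^{(r)}\equiv 0$. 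Your remark that the extension to the Darboux edges of $\mathcal N'$ needs checking is already covered by the Corollary following Proposition~\ref{prop:ext_syst}, which states that the edge-vector transformation rules carry over to $\mathcal N'$.
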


\begin{corollary}\label{cor:indep_gauge}\textbf{Independence of the network divisors on the gauge ray direction.}
Let ${\mathfrak l}_1$ and ${\mathfrak l}_2$ be two gauge directions for the oriented network $({\mathcal N}^{\prime}, \mathcal O)$.
Then for any trivalent white vertex $V\in ({\mathcal N}^{\prime}, \mathcal O)$, both the vacuum network divisor numbers and the dressed network divisor numbers 
computed using either ${\mathfrak l}_1$ or ${\mathfrak l}_2$ are the same:
\begin{equation}\label{eq:inep_gauge}
\gamma_{\textup{\scriptsize vac},V,\mathfrak l_2} = \gamma_{\textup{\scriptsize vac},V,\mathfrak l_1}, \quad\quad\gamma_{\textup{\scriptsize dr},V,\mathfrak l_2} = \gamma_{\textup{\scriptsize dr},V,\mathfrak l_1}.
\end{equation}
\end{corollary}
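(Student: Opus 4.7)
The plan is to substitute the transformation law (\ref{eq:phi_gauge}) of Proposition \ref{prop:change_orient}(1) into the definition (\ref{eq:vac_pole_def}) and to verify that all sign factors arising from the gauge-ray change cancel in pairs. First I would multiply numerator and denominator of (\ref{eq:vac_pole_def}) by $(-1)^{\mbox{\scriptsize wind}(e_3, e_1)}$ to rewrite it as
\[
\gamma_{\textup{\scriptsize vac},V,\mathfrak l} \;=\; \frac{1}{1 \,+\, \varepsilon_V(\mathfrak l)\, \rho_V(\mathfrak l)},
\]
where $\varepsilon_V(\mathfrak l) := (-1)^{\mbox{\scriptsize wind}(e_3, e_1) + \mbox{\scriptsize wind}(e_3, e_2)}$ and $\rho_V(\mathfrak l) := \Phi_{e_2,\mathcal O, \mathfrak l}(\vec t_0) / \Phi_{e_1,\mathcal O, \mathfrak l}(\vec t_0)$. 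The invariance of $\gamma_{\textup{\scriptsize vac},V}$ therefore reduces to showing that the product $\varepsilon_V\,\rho_V$ is independent of $\mathfrak l$; the case of $\gamma_{\textup{\scriptsize dr},V}$ is identical with $\Psi$ in place of $\Phi$, since (\ref{eq:phi_gauge}) provides the same transformation law for both wave functions.

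Next I would control the two factors separately. For $\rho_V$: both outgoing edges $e_1, e_2$ at the trivalent white vertex $V$ share the initial vertex $V_{e_1} = V_{e_2} = V$, and inspection of the proof of Proposition \ref{prop:rays} recalled in the excerpt shows that $\mbox{int}(V_e)$ depends only on the initial vertex of $e$ (it counts the gauge rays that sweep through that vertex during the rotation from $\mathfrak l$ to $\mathfrak l^{\prime}$). Therefore $\mbox{int}(V_{e_1}) = \mbox{int}(V_{e_2})$, the intersection sign factors cancel in the ratio, and (\ref{eq:phi_gauge}) yields
\[
\rho_V(\mathfrak l^{\prime}) \;=\; (-1)^{\mbox{\scriptsize par}(e_1) + \mbox{\scriptsize par}(e_2)}\, \rho_V(\mathfrak l).
\]
For $\varepsilon_V$ I would use that, by Definition~\ref{def:winding_pair}, $(-1)^{\mbox{\scriptsize wind}(e_3, e_i)}$ equals $-1$ precisely when $\mathfrak l$ lies in the oriented short arc between the directions of $e_3$ and $e_i$; this binary condition flips sign each time the continuously rotating direction $\mathfrak l(t)$ crosses the bounding direction of $e_3$ or of $e_i$. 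This gives the parity identity $\mbox{wind}_{\mathfrak l^{\prime}}(e_3, e_i) \equiv \mbox{wind}_{\mathfrak l}(e_3, e_i) + \mbox{par}(e_3) + \mbox{par}(e_i) \pmod 2$ for $i = 1, 2$, so summing over $i$ the two contributions of $\mbox{par}(e_3)$ cancel modulo $2$, yielding $\varepsilon_V(\mathfrak l^{\prime}) = (-1)^{\mbox{\scriptsize par}(e_1) + \mbox{\scriptsize par}(e_2)} \varepsilon_V(\mathfrak l)$.

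Multiplying the two displayed transformation laws, the factors $(-1)^{\mbox{\scriptsize par}(e_1) + \mbox{\scriptsize par}(e_2)}$ appear twice and cancel, giving $\varepsilon_V(\mathfrak l^{\prime})\,\rho_V(\mathfrak l^{\prime}) = \varepsilon_V(\mathfrak l)\,\rho_V(\mathfrak l)$. This proves the first identity of (\ref{eq:inep_gauge}); the second follows by the same computation applied to $\Psi$. The step requiring the most care will be the parity identity for the winding numbers, since Definition~\ref{def:winding_pair} uses the limiting convention (\ref{eq:s_antipar}) for antiparallel configurations and one must check that the rotating $\mathfrak l(t)$ can be deformed to a generic path without changing the parities in play. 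This in turn reduces to the elementary observation that the condition ``$\mathfrak l$ lies in a given arc of the circle of directions'' flips precisely when the rotating direction crosses one of the two boundary directions of that arc.
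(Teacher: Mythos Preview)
Your proof is correct and follows essentially the same approach as the paper. The paper's argument is the one-line observation that inserting (\ref{eq:phi_gauge}) into (\ref{eq:vac_pole_def}) and (\ref{eq:dress_pole_def}) yields the result because $\mbox{int}(V_{e_1})=\mbox{int}(V_{e_2})$ and $\mbox{wind}_{\mathfrak l_2}(e_3,e_j)\equiv\mbox{par}(e_3)+\mbox{wind}_{\mathfrak l_1}(e_3,e_j)+\mbox{par}(e_j)\pmod 2$; you have simply unpacked this computation, deriving the same two facts and showing explicitly how the sign factors cancel in the ratio.
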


The proof immediately follows by direct computation of the network divisor numbers inserting (\ref{eq:phi_gauge}) into (\ref{eq:dress_pole_def}) and (\ref{eq:vac_pole_def}) since $\mbox{int}(V_{e_1})=\mbox{int}(V_{e_2})$ and $\mbox{wind}_{\mathfrak{l_2}} (e_3,e_j)=
\mbox{par}(e_3)+\mbox{wind}_{\mathfrak{l_1}} (e_3,e_j)+\mbox{par}(e_j)$, $j=1,2$, where $\mbox{wind}_{\mathfrak{l_s}}$ is the winding computed w.r.t. the gauge direction $\mathfrak{l}_s$.

\begin{corollary}\label{cor:indep_orient}\textbf{The dependence of the dressed network divisor on the orientation.}
Let ${\mathcal O}_1$ and ${\mathcal O}_2$ be two perfect orientations for network ${\mathcal N}$.
Then for any trivalent white vertex $V$, such that all edges at $V$ have the same versus in both orientations, $(e_1^{(2)},e_2^{(2)},e_3^{(2)})=(e_1^{(1)},e_2^{(1)},e_3^{(1)})$, the dressed network divisor number is the same in both
orientations 
\begin{equation}\label{eq:inep_gauge_1}
\gamma_{\textup{\scriptsize dr},V,\mathcal O_2} = \gamma_{\textup{\scriptsize dr},V,\mathcal O_1}.
\end{equation}
If at the vertex $V$ we change orientation of edges from $(e_1^{(1)},e_2^{(1)},e_3^{(1)})$ to $(e_2^{(2)},e_3^{(2)},e_1^{(2)})=(e_1^{(1)},-e_2^{(1)},-e_3^{(1)})$ (Figure \ref{fig:phi_orient}[left]), then the relation between the dressed network divisor numbers at $V$ in the two orientations is
\begin{equation}\label{eq:inep_gauge_2}
\gamma_{\textup{\scriptsize dr},V,\mathcal O_2} = \frac{1}{1-\gamma_{\textup{\scriptsize dr},V,\mathcal O_1}}.
\end{equation}
If at the vertex $V$ we change orientation of edges from $(e_1^{(1)},e_2^{(1)},e_3^{(1)})$ to $(e_3^{(2)},e_1^{(2)},e_2^{(2)})=(-e_1^{(1)},e_2^{(1)},-e_3^{(1)})$ (Figure \ref{fig:phi_orient}[right]), then the relation between the dressed network divisor numbers at $V$ in the two orientations is
\begin{equation}\label{eq:inep_gauge_3}
\gamma_{\textup{\scriptsize dr},V,\mathcal O_2} = \frac{\gamma_{\textup{\scriptsize dr},V,\mathcal O_1}}{\gamma_{\textup{\scriptsize dr},V,\mathcal O_1}-1}.
\end{equation}
\end{corollary}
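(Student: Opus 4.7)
The plan is to reduce each of the three identities to a direct algebraic manipulation that uses three ingredients: Proposition \ref{prop:change_orient}, which supplies constants $\alpha_e$ with $\Psi_{e,\mathcal O_2,\mathfrak l}(\vec t)=\alpha_e\Psi_{e,\mathcal O_1,\mathfrak l}(\vec t)$ on every edge; the white-vertex linear relation (\ref{eq:lin_Phi}) for the dressed edge wave function, which must hold at $V$ in both orientations; and the definition (\ref{eq:dress_pole_def}) of $\gamma_{\textup{\scriptsize dr},V}$. For bookkeeping, I abbreviate $A_i^{(s)}=(-1)^{\mbox{wind}(e_3^{(s)},e_i^{(s)})}$ and $B^{(s)}=(-1)^{\mbox{int}(e_3^{(s)})}w_3^{(s)}$ for $s=1,2$, so that (\ref{eq:dress_pole_def}) may be rewritten as $\gamma^{(s)}=A_1^{(s)}B^{(s)}\Psi_{e_1^{(s)},\mathcal O_s}/\Psi_{e_3^{(s)},\mathcal O_s}$ after using the vertex relation to simplify the denominator.

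For case (1), since the labels $(e_1,e_2,e_3)$ at $V$ are preserved and the edge weight $w_3$ together with all windings and intersection signs at $V$ are the same in both orientations, writing the vertex relation in each orientation and substituting $\Psi_{e_i,\mathcal O_2}=\alpha_{e_i}\Psi_{e_i,\mathcal O_1}$ produces the identity
$$
A_1^{(1)}(\alpha_{e_1}-\alpha_{e_3})\Psi_{e_1,\mathcal O_1}+A_2^{(1)}(\alpha_{e_2}-\alpha_{e_3})\Psi_{e_2,\mathcal O_1}=0.
$$
Under the running non-null edge-vector assumption the functions $\Psi_{e_1,\mathcal O_1}(\vec t)$ and $\Psi_{e_2,\mathcal O_1}(\vec t)$ are linearly independent in $\vec t$; otherwise the divisor number is already trivial by Lemma \ref{lemma:trivial_div}. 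Hence $\alpha_{e_1}=\alpha_{e_2}=\alpha_{e_3}$, the common factor cancels in (\ref{eq:dress_pole_def}), and (\ref{eq:inep_gauge_1}) follows.

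For cases (2) and (3) the vertex $V$ remains white but its three labels are permuted. The plan is to write the $\mathcal O_2$-vertex relation, substitute Proposition \ref{prop:change_orient} to express every $\Psi_{e_i^{(2)},\mathcal O_2}$ as $\alpha_i$ times the appropriate $\Psi_{e_j^{(1)},\mathcal O_1}$, and then use the $\mathcal O_1$-vertex relation to eliminate one of the three $\Psi_{e_j^{(1)},\mathcal O_1}$. Comparing coefficients of the two remaining generically independent $\Psi$'s yields two linear equations for the ratios $\alpha_1/\alpha_3$ and $\alpha_2/\alpha_3$ in terms of $A_i^{(s)}$ and $B^{(s)}$. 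Substituting these ratios into the expression for $\gamma^{(2)}$, all the $A_i^{(s)}$ and $B^{(s)}$ collapse so that only a ratio $\Psi_{e_j^{(1)},\mathcal O_1}/\Psi_{e_{j'}^{(1)},\mathcal O_1}$ survives, which is itself an explicit function of $\gamma^{(1)}$ by the $\mathcal O_1$ analogue of the simplified identity recalled above. This produces the stated M\"obius transformations (\ref{eq:inep_gauge_2}) and (\ref{eq:inep_gauge_3}).

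The main obstacle will be the clean cancellation of the many sign factors: the $\alpha_e$ themselves involve products of winding- and intersection-parity signs together with weight factors (from Lemmas \ref{lemma:path} and \ref{lemma:cycle}), and one has to check that these combine exactly with the prefactors $A_i^{(s)}, B^{(s)}$ in (\ref{eq:dress_pole_def}) so that only the cross-ratio data of the three $\Psi$'s is left at the end. The computation becomes transparent in the following geometric picture: on $\Gamma_l\cong\mathbb{CP}^1$, the three marked points $P^{(1)}_l,P^{(2)}_l,P^{(3)}_l$ associated to the edges at $V_l$ carry coordinates $0,1,\infty$ in the chart determined by the current orientation; a change of orientation permutes these three marked points, and the three formulas (\ref{eq:inep_gauge_1})--(\ref{eq:inep_gauge_3}) are precisely the induced changes of local coordinate on $\Gamma_l$. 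In this language, the corollary asserts that the divisor point is a coordinate-free geometric point on $\Gamma_l$, a fact that the following sections will use to deduce Theorem \ref{theo:inv}.
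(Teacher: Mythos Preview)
Your approach is essentially identical to the paper's: both proofs use the proportionality $\Psi_{e,\mathcal O_2}=\alpha_e\,\Psi_{e,\mathcal O_1}$ from Proposition~\ref{prop:change_orient}, compare the white-vertex relation (\ref{eq:lin_Phi}) in the two orientations to extract the needed ratios of the $\alpha_e$, and then substitute into (\ref{eq:dress_pole_def}). One small slip in case~(1): the intersection number $\mbox{int}(e_3)$ can change when the base $I$ changes (a gauge ray $\mathfrak l_{i_0}$ is swapped for $\mathfrak l_{j_0}$), so $B^{(1)}\neq B^{(2)}$ in general, but this is harmless since (\ref{eq:dress_pole_def}) involves only the windings and your coefficient comparison still yields $\alpha_{e_1}=\alpha_{e_2}$, which is all that is needed.
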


\begin{figure}%[H]
  \centering{\includegraphics[width=0.55\textwidth]{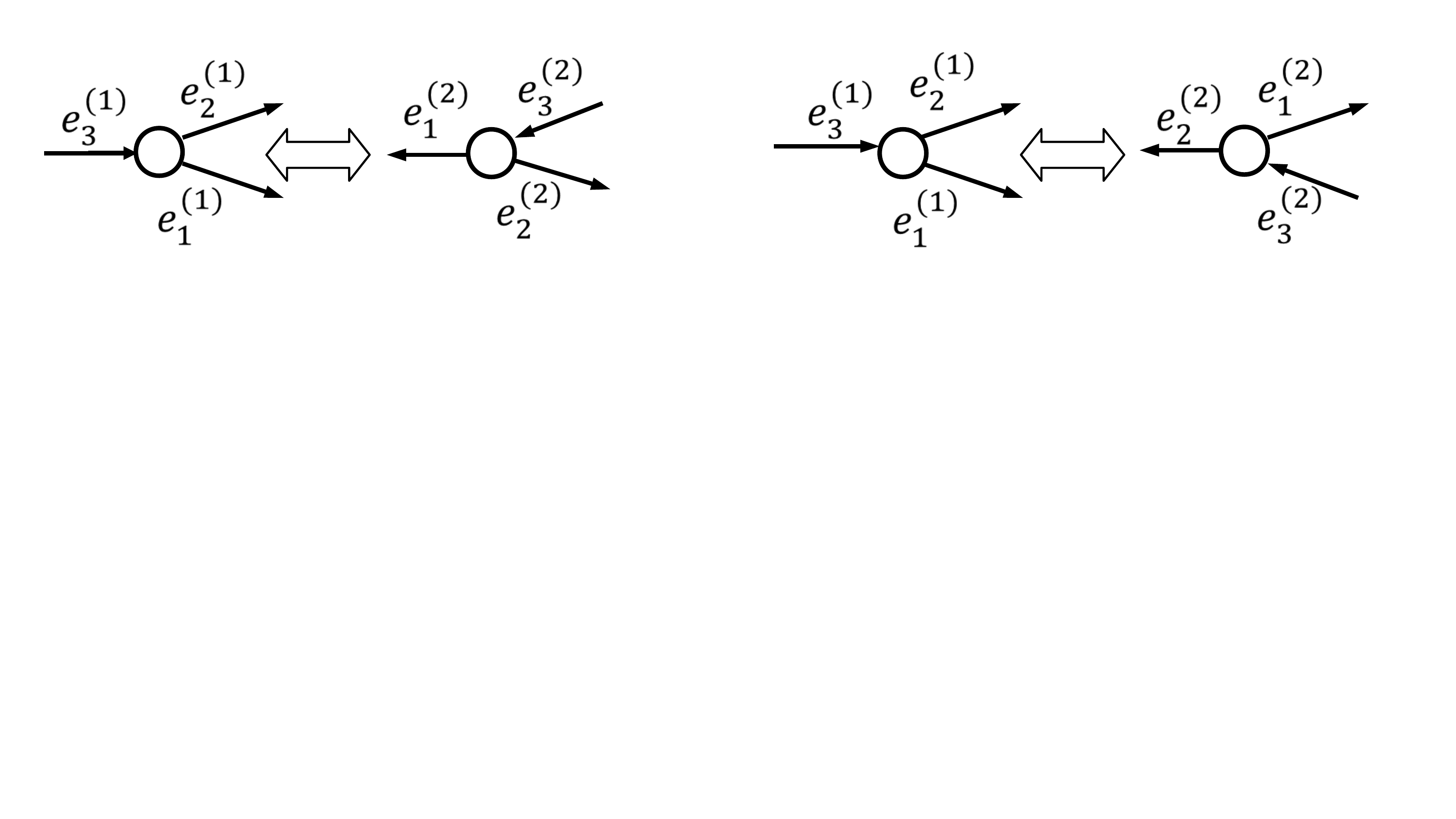}}
	\vspace{-3.7 truecm}
  \caption{\small{\sl The change of orientation at a vertex $V$.}\label{fig:phi_orient}}
\end{figure}

{\sl Proof}
The proof follows by direct computation of the dressed network divisor numbers using (\ref{eq:phi_orient}) and the linear system (\ref{eq:lin_Phi}) at $V$ for both orientations. For instance, (\ref{eq:inep_gauge_2}) follows observing that
\[
\Psi_{e_m^{(2)}, \mathcal O_2, \mathfrak l} (\vec t) = \alpha_{m-1} \Psi_{e_{m-1}^{(1)}, \mathcal O_1, \mathfrak l} (\vec t) \quad (\!\!\!\!\!\!\mod 3),
\]
and that the linear system w.r.t. both orientations imposes the constraint
\[
\displaystyle
\frac{\alpha_1}{\alpha_3} \, (-1)^{\mbox{wind}_{\mathcal O_2} (e^{(2)}_3,e^{(2)}_2)-\mbox{wind}_{\mathcal O_2} (e^{(2)}_3,e^{(2)}_1)} = -w_3(-1)^{\mbox{int}_{\mathcal O_1} (e^{(1)}_3)+\mbox{wind}_{\mathcal O_1} (e^{(1)}_3,e^{(1)}_1)},
\] 
where $w_3$ is the weight of $e^{(1)}_3$ in the initial orientation,
so that
\[
\begin{array}{ll}
\gamma_{\textup{\scriptsize dr},V,\mathcal O_2} &= \displaystyle\left(1+\frac{(-1)^{\mbox{wind}(e_3^{(2)}, e_2^{(2)})} \Psi_{e_2^{(2)},\mathcal O_2,\mathfrak l}}{(-1)^{\mbox{wind}(e_3^{(2)}, e_1^{(2)})} \Psi_{e_1^{(2)},\mathcal O_2,\mathfrak l} }\right)^{-1} =\left(1+\frac{(-1)^{\mbox{wind}(e_3^{(2)}, e_2^{(2)})} \alpha_1 \Psi_{e_1^{(1)},\mathcal O_1,\mathfrak l} }{(-1)^{\mbox{wind}(e_3^{(2)}, e_1^{(2)})} \alpha_3 \Psi_{e_3^{(1)},\mathcal O_1,\mathfrak l} }\right)^{-1}\\
&\\
& =(1-\gamma_{\textup{\scriptsize dr},V,\mathcal O_1})^{-1}.\quad\quad\quad \square
\end{array}
\]
\begin{remark}
In next Section we show that the transformation of the network divisor numbers of Corollary~\ref{cor:indep_orient} corresponds to a change of coordinates on the copy of 
$\mathbb{CP}^1$ associated with such white vertex. 
\end{remark}

\begin{remark}\label{rem:vac_div_orient}\textbf{The dependence of the vacuum network divisor numbers on the change of orientation}
If the two orientations ${\mathcal O}_1$ and ${\mathcal O}_2$ of ${\mathcal N}^{\prime}$ are associated to the same base $I$ ({\sl i.e.} correspond to changes of orientation along cycles), then (\ref{eq:inep_gauge_1})--(\ref{eq:inep_gauge_3}) hold true with $\gamma_{\textup{\scriptsize vac},V,\mathcal O_s}$, $s=1,2$ instead of $\gamma_{\textup{\scriptsize dr},V,\mathcal O_s}$ at any trivalent white vertex $V$ . 

Instead, if the change of orientation is induced by a change in the base of ${\mathcal M}$, then the rule for the transformation of the vacuum network divisor numbers is more complicated and we do not write it explicitly since we shall not use it in the following.  
\end{remark}

The position of each Darboux edge in $(\mathcal N^{\prime},\mathcal O, \mathfrak{l})$ rules the position of the corresponding Darboux point on $\Gamma$. There is a certain gauge freedom in our construction since any Darboux point (edge) may lay in one of two ovals in $\Gamma$ (faces in $\mathcal N$). By our assumptions on the position of Darboux vertices, the switch of the Darboux source edge $e^{(D)}_s$ from one face to the other corresponds to a change of sign in the edge vectors at $e_s$ and $f_s$ (Figure \ref{fig:Darboux1}), whereas the switch of the Darboux sink edge $e^{(D)}_s$ from one face to the other leaves invariant the edge vectors at $e_s$ and $f_s$. Therefore the following statement holds.

\begin{proposition}\label{prop:darboux}\textbf{Dependence of divisor numbers on the position of Darboux edges}
Let ($\mathcal N^{\prime}, \mathcal O, \mathfrak l)$ as in Construction \ref{con:Nprime}. Then, if we rotate the Darboux vertex $V^{(D)}_s$ for some $s\in [n]$ inside a given face both the vacuum and the dressed divisor numbers keep the same value. If we switch $V^{(D)}_s$ from one face to the other, then the relation between the old (superscript $(1)$) and the new (superscript $(2)$) divisor numbers is
\begin{equation}\label{eq:div_no_Darboux_pos}
\gamma_{\textup{\scriptsize vac},V_i,}^{(2)} = 1-\gamma_{\textup{\scriptsize vac},V_i,}^{(1)},\quad\quad \gamma_{\textup{\scriptsize dr},V_i,}^{(2)} = 1-\gamma_{\textup{\scriptsize dr},V_i,}^{(1)},
\end{equation}
where, in the above formulas, both dressed divisor numbers are $\infty$ if $V^{(D)}_{i,1}$ and $V^{(D)}_{i,2}$ are Darboux sources.
\end{proposition}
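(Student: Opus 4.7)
The plan reduces the proof to two facts: (a) the anticlockwise labeling of the two non-Darboux edges $e_s, f_s$ at $V_s$ in the convention of Remark~\ref{rem:labedges} depends only on which of the two adjacent faces contains $V^{(D)}_s$, and reverses when $V^{(D)}_s$ switches between the two faces bounded by $e_s$ and $f_s$; and (b) for each $\mu \in \{e_s, f_s\}$, the signed quantity
$$A_\mu := (-1)^{\mbox{wind}(e^{(D)}_s, \mu)}\, \Phi_{\mu}(\vec t_0)$$
is invariant under any motion of $V^{(D)}_s$ in a small neighborhood of $V_s$. Fact (a) is elementary planar geometry: the three edges at $V_s$ emanate in directions determined by the geometric positions of their endpoints, and switching $V^{(D)}_s$ between the two sectors reverses the anticlockwise cyclic order of $e_s, f_s$ as seen from $e^{(D)}_s$. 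Granting (a) and (b), formula (\ref{eq:vac_pole_def}) reads $\gamma_{\textup{vac},V_s} = A_{e_1}/(A_{e_1}+A_{e_2})$; rotation within a face preserves both the labeling and each $A_\mu$, so $\gamma$ is unchanged, while a face switch interchanges $e_1$ and $e_2$ in the numerator and yields $\gamma \mapsto A_{e_2}/(A_{e_1}+A_{e_2}) = 1 - \gamma$.

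The technical core is fact (b). For $s = i_r \in I$, Proposition~\ref{prop:ext_syst} and (\ref{eq:vac_wf_source}) give $\Phi_{e_s} = \sigma_e\, e^{\theta_s(\vec t)}$ and $\Phi_{f_s} = c\,\sigma_f\,(f^{(r)}(\vec t) - e^{\theta_s(\vec t)})$, where $\sigma_e, \sigma_f \in \{\pm 1\}$ depend on the position of $V^{(D)}_s$ through the parities $\mbox{int}(V^{(D)}_s, e_s),\, \mbox{int}(V^{(D)}_s, f_s)$, and $c\ne 0$ is a constant independent of this position. Using the pivot identity $A^r_s=1$, the function $f^{(r)}-e^{\theta_s} = \sum_{j\ne s}A^r_j e^{\theta_j}$ is linearly independent of $e^{\theta_s}$ as a function of $\vec t$, since irreducibility of $\mathcal S^{\mbox{\tiny TNN}}_{\mathcal M}$ forces some $A^r_j\ne 0$ with $j\ne s$. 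Substituting into (\ref{eq:lin_Phi}) at $V_s$ with $\Phi_{e^{(D)}_s} = f^{(r)}$ and $w_3 = 1$ gives
\[
(-1)^{\mbox{int}(e^{(D)}_s)} f^{(r)} = (-1)^{\mbox{wind}(e^{(D)}_s, e_s)} \sigma_e\, e^{\theta_s} + (-1)^{\mbox{wind}(e^{(D)}_s, f_s)} c\, \sigma_f\, (f^{(r)} - e^{\theta_s}).
\]
Comparing coefficients in the basis $\{e^{\theta_s}, f^{(r)} - e^{\theta_s}\}$ produces two scalar identities pinning both $(-1)^{\mbox{wind}(e^{(D)}_s, e_s)} \sigma_e$ and $(-1)^{\mbox{wind}(e^{(D)}_s, f_s)} \sigma_f$ to $V^{(D)}_s$-independent constants; this is exactly the individual invariance of $A_{e_s}$ and $A_{f_s}$. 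The dressed statement follows by applying $\mathfrak{D}^{(k)}$ throughout: the sign structure $(-1)^{\mbox{wind}(e^{(D)}_s, \mu)}\sigma_\mu$ is shared by $\Psi_\mu = \mathfrak{D}^{(k)}\Phi_\mu$, so the vacuum invariance transfers, and $\mathfrak{D}^{(k)} f^{(r)} = 0$ is compatible with the dressed linear relation.

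The sink case $s\in \bar I$ is immediate: $\Phi_{e^{(D)}_s} = \Psi_{e^{(D)}_s} = 0$ by (\ref{eq:vac_wf_sink}) and (\ref{eq:KP_wf_source}), so $\gamma$ takes value $0$ or $1$ according to whether $e^{(D)}_s$ is labeled $e_1$ or $e_2$, and fact (a) yields the swap. For the dressed divisor number at a trivalent vertex associated to a Darboux source ($s\in I$), Definition~\ref{def:vac_div_gen} sets $\gamma_{\textup{dr}} = \infty$ in both configurations, consistent with the $\infty$-valued interpretation in the proposition. The principal obstacle is fact (b): the linear relation at $V_s$ alone constrains only the sum $A_{e_s}+A_{f_s}$, and the individual invariance of each term is obtained only by decoupling through the two-dimensionality of the span $\langle e^{\theta_s}, f^{(r)}\rangle$, which the explicit expression for $\Phi_{f_s}$ from Proposition~\ref{prop:ext_syst} supplies via the irreducibility of $\mathcal S^{\mbox{\tiny TNN}}_{\mathcal M}$.
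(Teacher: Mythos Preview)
Your argument is correct and reaches the same conclusion as the paper, though with considerably more structural scaffolding. The paper dispatches the proposition in one line, simply exhibiting the two explicit values $e^{\theta_s(\vec t_0)}/f^{(r)}(\vec t_0)$ and $(f^{(r)}(\vec t_0)-e^{\theta_s(\vec t_0)})/f^{(r)}(\vec t_0)$ for the vacuum divisor number at a Darboux source vertex depending on the side, relying on the geometric observation (stated just before the proposition and illustrated in Figure~\ref{fig:Darboux1}) that the switch of $V^{(D)}_s$ flips the signs of the edge vectors at $e_s$ and $f_s$. Your decomposition into fact~(a) and fact~(b) makes the mechanism more transparent, and your linear-independence argument for (b)---decoupling the sum $A_{e_s}+A_{f_s}=f^{(r)}$ via the basis $\{e^{\theta_s}, f^{(r)}-e^{\theta_s}\}$---is a clean alternative to the paper's direct geometric sign-tracking.

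Two small points. First, your fact~(a) as stated concerns the labeling of the pair $(e_s,f_s)$, which is the relevant pair for the source case; in the sink case the incoming edge is $f_s$, so the swap occurs between $e_s$ and $e^{(D)}_s$, and you should invoke the same cyclic-order principle rather than fact~(a) verbatim. Second, your coefficient-comparison step tacitly uses that $(-1)^{\mathrm{int}(e^{(D)}_s)}$ is position-independent (so that the left side of your displayed identity is fixed); this holds under the standing assumptions of Construction~\ref{con:Nprime}, but even if it failed, both $A_{e_s}$ and $A_{f_s}$ would acquire the same sign and the ratio $\gamma$ would be unaffected. Neither point is a genuine gap.
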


The proof is trivial. The only interesting case is the behavior of the vacuum divisor point at a Darboux source vertex (see also Figure \ref{fig:Darboux1}): if, for $s=i_r\in I$, the Darboux source vertex $V^{(D)}_s$ lays to the left of $V_s$ then $\gamma_{\textup{\scriptsize vac},V_s} = e^{\theta_{s}(\vec t_0)}/f^{(r)}(\vec t_0)$, otherwise $\gamma_{\textup{\scriptsize vac},V_s} = (f^{(r)}(\vec t_0)-e^{\theta_{s}(\vec t_0)})/f^{(r)}(\vec t_0)$.

Finally, using Lemmas \ref{lem:weight_gauge} and \ref{lem:vertex_gauge}, it is immediate to prove that the network divisor numbers are independent on both the weight gauge and the vertex gauge:

\begin{proposition}\label{prop:gauge}\textbf{Independence of divisor numbers on the weight gauge and on the vertex gauge}
Let ($\mathcal N^{\prime}, \mathcal O, \mathfrak l)$ as in Construction \ref{con:Nprime}. Then, both the vacuum and the dressed divisor numbers are the same on each white vertex for any choice of weight gauge and of vertex gauge.
\end{proposition}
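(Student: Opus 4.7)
The plan is to substitute the explicit transformation rules for edge vectors from Lemmas~\ref{lem:weight_gauge} and~\ref{lem:vertex_gauge} into the definitions \eqref{eq:vac_pole_def}, \eqref{eq:dress_pole_def} of $\gamma_{\textup{\scriptsize vac},V}$ and $\gamma_{\textup{\scriptsize dr},V}$, and verify that the numerator and the denominator of each ratio pick up a common multiplicative factor. Since both the weight gauge and the vertex gauge factor as compositions of elementary transformations acting at a single vertex at a time, it is enough to treat the single-vertex case. Moreover $\Psi_e = \mathfrak{D}^{(k)}\Phi_e$ depends linearly on $E_e$ through $\Phi_e = \prec E_e,\mathfrak{E}_\theta\succ$, so any multiplicative change of $E_e$ produces the same change in both $\Phi_e$ and $\Psi_e$; the vacuum and dressed cases will be treated simultaneously.

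For the weight gauge, fix a vertex $U$ with parameter $t_U>0$. By Lemma~\ref{lem:weight_gauge}, the transformed edge vector $\tilde E_e$ equals $t_U E_e$ if $e$ is outgoing from $U$ and equals $E_e$ otherwise. Now fix a trivalent white vertex $V$ with outgoing edges $e_1,e_2$ and incoming edge $e_3$. If $U=V$ then $\tilde\Phi_{e_1}(\vec t_0)=t_V\Phi_{e_1}(\vec t_0)$ and $\tilde\Phi_{e_2}(\vec t_0)=t_V\Phi_{e_2}(\vec t_0)$, so the common factor $t_V$ cancels in \eqref{eq:vac_pole_def}; the same applies to \eqref{eq:dress_pole_def}. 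If $U\ne V$ then neither $e_1$ nor $e_2$ is outgoing from $U$, so $\Phi_{e_1},\Phi_{e_2},\Psi_{e_1},\Psi_{e_2}$ are all unchanged and the invariance is immediate.

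For the vertex gauge, an elementary move of a vertex $U$ affects only the edge vectors on edges incident to $U$ (by Lemma~\ref{lem:vertex_gauge}), and only the winding/intersection numbers along such edges. If no edge at $V$ is incident to $U$, then nothing entering $\gamma_V$ changes. If $U=V$, substituting the formula $\tilde E_{e_i} = (-1)^{\text{wind}(\tilde e_i,f_i)-\text{wind}(e_i,f_i)+\text{int}(\tilde e_i)-\text{int}(e_i)} E_{e_i}$ and applying both identities of \eqref{eq:int_vertex_gauge}, one obtains
\[
(-1)^{\text{wind}(\tilde e_3,\tilde e_i)}\,\tilde\Phi_{e_i}(\vec t_0)
\;=\;(-1)^{C}\,(-1)^{\text{wind}(e_3,e_i)}\,\Phi_{e_i}(\vec t_0),\qquad i=1,2,
\]
with $C\equiv \text{wind}(f_3,e_3)-\text{wind}(f_3,\tilde e_3)+\text{int}(e_3)-\text{int}(\tilde e_3)\pmod{2}$ independent of $i$; the sign $(-1)^C$ factors out of the numerator and denominator in \eqref{eq:vac_pole_def} and cancels. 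If instead $U$ is the far endpoint of $e_3$, then $\Phi_{e_1},\Phi_{e_2}$ are unchanged, but $\text{wind}(e_3,e_i)$ may change; applying the winding identity in \eqref{eq:int_vertex_gauge} to the three-edge subpath $f_3\to e_3\to e_i\to f_i$ (now deformed only at its $f_3,e_3$ portion since $V$ and its other neighbors are fixed) yields $\text{wind}(\tilde e_3,e_i)-\text{wind}(e_3,e_i)\equiv\text{wind}(f_3,e_3)-\text{wind}(\tilde f_3,\tilde e_3)\pmod 2$, again independent of $i$, so the common sign cancels. The dressed case is identical.

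The main obstacle is the vertex-gauge bookkeeping: the simultaneous changes to edge vectors, winding numbers and intersection numbers must be recombined to yield one overall sign that does not depend on the index $i\in\{1,2\}$. The crucial input is the mod-$2$ deformation invariance of the winding along the three-edge path $f_3\to e_3\to e_i\to f_i$ expressed by \eqref{eq:int_vertex_gauge}, together with the analogous mod-$2$ invariance of the intersection count along that path; once these are in hand, the cancellations in the defining ratios reduce to a direct verification.
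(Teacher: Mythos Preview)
Your argument is correct and follows exactly the route the paper has in mind: the paper simply says the statement is ``immediate'' from Lemmas~\ref{lem:weight_gauge} and~\ref{lem:vertex_gauge}, and your proposal spells out that immediacy by checking that numerator and denominator of \eqref{eq:vac_pole_def}--\eqref{eq:dress_pole_def} acquire the same multiplicative factor under each elementary gauge.

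Two small remarks on the vertex-gauge case analysis. First, your enumeration omits the case where $U$ is the far endpoint of one of the \emph{outgoing} edges $e_1$ or $e_2$; this case is in fact the easiest. If $U$ is the far end of $e_1$, then $e_2,e_3$ and hence $\Phi_{e_2}$ and $\mbox{wind}(e_3,e_2)$ are untouched, while Lemma~\ref{lem:vertex_gauge} (applied at $U$, with our $e_1$ playing the role of the incoming edge there and our $e_3$ playing the role of the reference edge $f$) gives $\tilde E_{e_1}=(-1)^{\mbox{wind}(e_3,\tilde e_1)-\mbox{wind}(e_3,e_1)}E_{e_1}$, so $(-1)^{\mbox{wind}(e_3,\tilde e_1)}\tilde\Phi_{e_1}=(-1)^{\mbox{wind}(e_3,e_1)}\Phi_{e_1}$ on the nose and both terms in the ratio are literally unchanged. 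Second, in your treatment of $U$ equal to the far endpoint of $e_3$, what you are really using is the same mod-$2$ path-winding invariance underlying \eqref{eq:int_vertex_gauge}, now with the moving vertex being $U$ rather than $V$; this is the identical topological fact (the total winding along the fixed-endpoint path $f_3\to e_3\to e_i$ is unchanged mod $2$ when the intermediate vertex $U$ moves), but strictly speaking it is the analog of \eqref{eq:int_vertex_gauge} rather than a direct citation of it. With these two points noted, your proof is complete and matches the paper.
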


\begin{remark}\textbf{The dependence of the divisor on the gauge freedom for unreduced graphs}\label{rem:div_unred}
The gauge freedom for unreduced graphs of Remark \ref{rem:gauge_freedom} acts untrivially on network divisor numbers.
We show that on the example already considered in Section \ref{sec:null_vectors}. The network in Figure~\ref{fig:zero-vector}[right] represents the same point $[ 2p/(1+p+q),1] \in Gr^{\mbox{\tiny TP}}(1,2)$ for any $s>0$. It is easy to check that the network divisor is
\[
\left( \frac{p}{1+2p}, V_1\right), \quad \left( -\frac{1+p}{q}, V_2\right), \quad \left( 1+s\frac{1+2p}{1+p}, V_3\right), \quad
\] 
for any choice of $\vec t_0$.
\end{remark}

Next we normalize both the vacuum and the dressed e.w.s. 

\begin{definition}\label{def:norm_vac_gen}\textbf{The normalized vacuum edge wave function $\hat \Phi$ and the KP edge wave function $\hat \Psi$.}
Let $\mathcal E$ and $\mathcal E_I$ be as in Definition \ref{def:zeros_vvw}, where $I=\{1\le i_1 <\cdots < i_k \le n\}$ is the base in ${\mathcal M}$ for the gauge--oriented network $({\mathcal N}^{\prime}, \mathcal O, \mathfrak l)$.
Let $\vec t_0$ be as in the Remark \ref{rem:t_0}. 
Then 
\begin{enumerate}
\item The \textbf{normalized vacuum edge wave function (normalized vacuum e.w)} on $({\mathcal N}^{\prime}, I)$ is 
\begin{equation}\label{eq:vvfnorN}
\hat \Phi_{e, I} (\vec t) = \left\{ \begin{array}{ll} \displaystyle \frac{\Phi_{e, \mathcal O,\mathfrak l} (\vec t) }{\Phi_{e, \mathcal O,\mathfrak l} (\vec t_0) }, &\quad\quad \mbox{for all } e\in {\mathcal E}(I), \quad \forall \vec t,\\
e^{\theta_{j} (\vec t -\vec t_0)}, &\quad\quad \mbox{if } e=e^{(D)}_j, \;\; j\in \bar I.
\end{array}\right.
\end{equation}
\item The \textbf{KP edge wave function (KP e.w)} on $({\mathcal N}^{\prime}, I)$ is 
\begin{equation}\label{eq:KPvfnorN}
\hat \Psi_{e} (\vec t) = \left\{ \begin{array}{ll} \displaystyle\frac{\Psi_{e, \mathcal O,\mathfrak l} (\vec t) }{\Psi_{e, \mathcal O,\mathfrak l} (\vec t_0) }, &\quad\quad \mbox{for all } e\in \mathcal E, \quad \forall \vec t,\\
\displaystyle\frac{{\mathfrak D}^{(k)} e^{i\theta_{j} (\vec t)}}{{\mathfrak D}^{(k)} e^{i\theta_{j} (\vec t_0)}},&\quad\quad \mbox{if } e=e^{(D)}_j, \;\; j\in [n].
\end{array}\right.
\end{equation}
\end{enumerate}
\end{definition}

\begin{remark}\label{rem:indep}\textbf{The KP edge wave function on $\mathcal N^{\prime}$.} 
The name KP wave function for $\hat \Psi_e (\vec t)$ on $\mathcal N^{\prime}$ is fully justified. Indeed $\hat \Psi_e (\vec t)$ just depends on the soliton data $(\mathcal K, [A])$ and on the chosen network representing $[A]$ since $\hat \Psi_e (\vec t)$ takes the same value on a given $e$ for any choice of orientation, gauge ray direction, weight gauge and vertex gauge. Moreover, by construction it satisfies the Sato boundary conditions at the edges at the boundary, and takes real values for real $\vec t$. This function is a common eigenfunction to all KP hierarchy auxiliary linear operators $-\partial_{t_j} + B_j$, where $B_j =(L^j)_+$, and the Lax operator $L=\partial_x+\frac{u(\vec t)}{2}\partial_x^{-1}+ u_2(\vec t)\partial_x^{-2}+\ldots$, the coefficients of these operators are the same for all edges.

On the contrary the normalized vacuum edge wave function $\hat \Phi_{e,I} (\vec t)$ depends on the base $I$ associated to the orientation of the network.

Finally we remark that $\hat \Psi_e (\vec t)$ ($\hat \Phi_{e,I} (\vec t)$) takes equal values at all edges $e$ at any given bivalent or black trivalent vertex. Finally at any trivalent white vertex $\hat \Psi_e (\vec t)$ ($\hat \Phi(\vec t)$) takes either the same value at all three edges for all times or distinct values at some $\vec t\not= \vec t_0$. 
\end{remark}

\subsection{The vacuum wave function $\hat \phi_I$ on $\Gamma$}\label{sec:vac_KP_gen}

In this Section, we define the vacuum edge wave function $\hat \phi_I (P, \vec t)$ on $\Gamma=\Gamma (\mathcal G)$, where $\mathcal G$ is the graph of $\mathcal N^{\prime}$ and $I$ is the base associated to the chosen orientation of $\mathcal N^{\prime}$ using the normalized vacuum edge wave function of Definition \ref{def:norm_vac_gen}. 

\begin{construction}\label{con:vac_gen}\textbf{The vacuum wave function $\hat \phi_I$ on $\Gamma$}
Let the soliton data $({\mathcal K}, [A])$ be given, with $\mathcal K = \{ \kappa_1 < \cdots < \kappa_n\}$ and $[A]\in \S \subset \GTNN$. Let $I\in \mathcal M$ be fixed.
Let $\mathcal G$ be a PBDTP graph representing $\S$ as in Definition \ref{def:graph} with $(g+1)$ faces and let the curve $\Gamma=\Gamma(\mathcal G)$ be as in Construction \ref{def:gamma}. 
Let $({\mathcal N}, \mathcal O(I), \mathfrak l)$ be a network of graph $\mathcal G$ representing $[A]$ with no null edge vectors. Let ${\mathcal N}^{\prime}$ be the modified network defined in Section \ref{sec:modN} for a given choice of Darboux vertices. Let $\DVN$ be the vacuum network divisor defined in the previous Section and $\hat \Phi_{e,I} (\vec t)
$ be the normalized vacuum edge wave function of Definition \ref{def:norm_vac_gen}. Finally on each component of $\Gamma$ let the coordinate $\zeta$ be as in Definition \ref{def:loccoor} (see also Figure \ref{fig:lcoord}).

Then on $\Gamma$, we construct the vacuum wave function $\hat \phi_I(P, \vec t)$ as follows:
for such data. 
\begin{enumerate}
\item The restriction of ${\hat \phi}$ to $\Gamma_{0}$ is denoted ${\hat \phi}_0$ and is the normalized  Sato vacuum wave function 
\[
{\hat \phi}_0 (\zeta, \vec t) = e^{\theta (\vec t-\vec t_0)},\quad\quad 	\theta(\vec t) = \sum\limits_{l\ge 1} \zeta^l t_l.
\]
\item On any component of $\Gamma$ corresponding to a bivalent vertex or a trivalent black vertex $V$, we define $\hat \phi_I (P, \vec t) = \hat \Phi_{e,I} (\vec t)$, where $e$ is one of the edges at $V$;
\item For any fixed orientation $\mathcal O$ for the base $I$, on the component $\Gamma_l$, $l\in [g+n-k]$, corresponding to the trivalent white vertex $V_l$, we have four real ordered marked points: $P_m^{(l)}$, $m\in [3]$, correspond to the edges $e_m^{(l)}$ at $V_l$, and the forth 
point  $P_{\textup{\scriptsize vac}}^{(l)}$ is defined by $\zeta(P_{\textup{\scriptsize vac}}^{(l)})=\gamma_{\textup{\scriptsize vac}, V_l}$. A change of orientation which preserves the base $I$ acts as a well defined change of coordinates on the four points $\Gamma_l$ which leaves invariant the value of the normalized vacuum wave function at the corresponding edges. For these reasons, we define 
\[
\hat \phi_I (P_m^{(l)}, \vec t) = \hat \Phi_{e_m^{(l)},I} (\vec t),\quad\quad m\in [3],
\]
and extend it on $\Gamma_l$ to a meromorphic function of degree less than or equal to one. It is straightforward to verify that
either $\hat \phi_I (P, \vec t)$ has a simple pole at $P_{\textup{\scriptsize vac}}^{(l)}$ or is independent of the spectral parameter. The latter case occurs if and only if the network divisor point $\gamma_{\textup{\scriptsize vac}, V_l}$ is trivial, {\sl i.e.} the normalized vacuum edge wave function takes the same value at the three edges at $V_l$ for any given $\vec t$.
\end{enumerate}
\end{construction}

By construction, $\hat \phi$ is meromorphic of degree $\mathfrak d\le g$, its poles are all real and simple and belong to the copies $\Gamma_l=\mathbb{CP}^1$ corresponding to trivalent white vertices $V_l$ carrying non trivial vacuum network divisor numbers on ${\mathcal N}^{\prime}$. 
Since the vacuum wave function  $\hat \phi_I (P,\vec t)$ is constant w.r.t. the spectral parameter $P$ on each component associated to a boundary sink in $\mathcal N$, the following definition of vacuum divisor on $\Gamma$ is natural.

\begin{definition}\label{def:vac_div_G}{\textbf{The vacuum divisor $\DVG$}}. We denote by $\DVG$ the following degree $g$ effective divisor: it is the sum of the $g$ simple poles located at the points 
$P_{\textup{\scriptsize vac}}^{(l)}$ different from the Darboux sinks. 
\end{definition}

$\DVG$ is contained in the union of all ovals since all network divisor numbers are real (we thus have proven item (1) in Theorem \ref{theo:vac_div}).
$\DVG$ depends on the base $I$ (Proposition \ref{prop:change_orient}) and, for any fixed base $I$, the position of the pole belonging to a component associated to a Darboux source depends on the position of the latter (Proposition \ref{prop:darboux}).

In the following Theorem we summarize the properties 
of the vacuum wave function which follow directly from its definition. 

\begin{theorem}\label{lemma:noneffvac}\textbf{Characterization of the vacuum divisor of $\hat \phi$ on $\Gamma$.}
Let $\hat \phi$, $\DVN$, $\DVG$ on $\Gamma$ be as in Construction \ref{con:vac_gen} and in Definitions \ref{def:vac_div_gen} and \ref{def:vac_div_G}. Then $\hat \phi_I(P,\vec t)$ is the unique meromorphic extension of the Sato vacuum wave function to
$\Gamma\backslash\{P_0\}$ satisfying:
\begin{enumerate}
\item At $\vec t=\vec t_0$ $\hat \phi (P, \vec t_0)=1$ at all points $P\in \Gamma\backslash \{P_0\} $;
\item ${\hat \phi} (\zeta(P), \vec t)$ is real for real values of the local coordinate $\zeta$ and for all real $\vec t$ on each component of $\Gamma$;
\item $\hat \phi$ takes equal values at pairs of glued points $P,Q\in \Gamma$, for all $\vec t$:  $\hat \phi(P, \vec t) = \hat \phi(Q, \vec t)$;
\item $\hat \phi(P^{(3)}_{i_r}, \vec t)\equiv f^{(r)} (\vec t)$ at each Darboux source point $P^{(3)}_{i_r}$, $r\in [k]$, for all $\vec t$, where $f^{(r)} (\vec t) = \sum_{j=1}^N A^i_j e^{\theta_j (\vec t- \vec t_0)}$ is $r$--th heat hierarchy solution associated to the RREF representative matrix of $[A]$ with respect to the base $I$;
\item $\hat \phi(\zeta, \vec t)$ is either constant or meromorphic of degree one w.r.t. to the spectral parameter on each component $\Gamma_l$ corresponding to a trivalent white vertex; 
\item On each $\Gamma_l$, $l\in [g+n-k]$, corresponding to a trivalent white vertex the vacuum wave function satisfies
\[
\hat \phi_I (\zeta(P),\vec t) = \frac{\hat \Phi_{e_3, I} (\vec t) \zeta - \gvac\hat \Phi_{e_1, I} (\vec t) }{ \zeta- \gvac}, \quad\quad \forall P\in \Gamma_l, \;\; \forall \vec t.
\]
\item $\hat \phi(\zeta, \vec t)$ is constant w.r.t. to the spectral parameter on each other copy of $\mathbb{CP}^1$;
\item For any $\vec t$, $(\hat \phi(P,\vec t))+\DVG\ge 0$, where $(\hat \phi(P,\vec t))$ denotes the divisor for   $\hat \phi(P,\vec t)$.
\item The vacuum divisor $\DVG$ is contained in the union of all ovals.  
\end{enumerate}
\end{theorem}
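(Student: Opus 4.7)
Most of the listed properties follow in a direct way from Construction~\ref{con:vac_gen} and Definition~\ref{def:norm_vac_gen}, so I would first dispatch these. Property~(1) is the very definition of the normalization at $\vec t_0$, both on $\Gamma_0$ (via the Sato normalization $e^{\theta(\vec t - \vec t_0)}$) and on every internal component where the ratio $\Phi_{e,\mathcal O,\mathfrak l}(\vec t)/\Phi_{e,\mathcal O,\mathfrak l}(\vec t_0)$ specializes to $1$. Property~(2) follows from the fact that the edge vectors have real components (rational expressions in real edge weights by Theorem~\ref{theo:null}), the phases $\kappa_j$ and times $\vec t$ are real, and real local coordinates on the copies of $\mathbb{CP}^1$ were chosen in Definition~\ref{def:loccoor}. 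Property~(4) is exactly formula (\ref{eq:vac_wf_source}) at the Darboux source edge, after the normalization built into Definition~\ref{def:norm_vac_gen} and the normalization of $f^{(r)}(\vec t)$ by $f^{(r)}(\vec t_0)$.

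The substantive content of the theorem lies in properties (3), (5) and (6) on the components $\Gamma_l$ corresponding to trivalent white vertices. The plan here is a direct interpolation argument. By Remark~\ref{rem:indep} (applied to $\hat\Phi_{e,I}$), the normalized vacuum e.w.\ takes a single value on all edges at a bivalent or trivalent black vertex; hence on components associated to such vertices the only consistent extension is constant in the spectral parameter, giving property~(7). At a trivalent white vertex $V_l$ the three edges $e_1,e_2,e_3$ carry \emph{a priori} three distinct values of $\hat\Phi$ at the three marked points $P_m^{(l)}$ with local coordinates $0,1,\infty$; a rational function of degree $\le 1$ on $\mathbb{CP}^1$ is uniquely determined by its values at three points, so the extension exists and is unique, proving part of (5). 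To verify property~(6) I would substitute $\zeta=0,1,\infty$ into the right-hand side and check that the values $\hat\Phi_{e_m,I}(\vec t)$ are recovered, using the linear relation (\ref{eq:lin_Phi}) at $V_l$ together with the definition (\ref{eq:vac_pole_def}) of $\gvac$. The pole is at $\zeta=\gvac$, and the extension degenerates to a constant precisely when the three edge values coincide, i.e.\ exactly the ``trivial'' case of Lemma~\ref{lemma:trivial_div}, completing (5). Property~(3) at a double point interior to a component of edges of $\mathcal N^{\prime}$ is then automatic, since both glued components carry, by construction, the value $\hat\Phi_{e,I}(\vec t)$ at the marked point corresponding to~$e$.

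Properties~(8) and~(9) are then essentially bookkeeping: on each $\Gamma_l$ the function $\hat\phi_I(\cdot,\vec t)$ has at most one simple pole at $P_{\textup{vac}}^{(l)}$ and no poles elsewhere, so $(\hat\phi_I(\cdot,\vec t))+\DVG\ge 0$ by Definition~\ref{def:vac_div_G}; and $\gvac\in\mathbb R$ by property~(2), which places each divisor point on the real part of its component, hence in the union of ovals (completing item~(1) of Theorem~\ref{theo:vac_div}). For uniqueness, I would argue as follows: any meromorphic extension $\tilde\phi$ satisfying (1)--(7) must agree with $\hat\phi_I$ at every marked point of every component (these values are forced by the linear system and by the Sato restriction on $\Gamma_0$), and on each component must be a degree $\le 1$ rational function on $\mathbb{CP}^1$; three prescribed values at three points determine such a function uniquely, so $\tilde\phi=\hat\phi_I$ pointwise on each component and hence globally on $\Gamma\setminus\{P_0\}$.

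The main obstacle I expect is the careful verification of property~(6), i.e.\ that the pole location forced by the interpolation coincides exactly with $\gvac$ as defined by (\ref{eq:vac_pole_def}); all the signs coming from $\mbox{wind}(e_3,e_i)$ and $\mbox{int}(\cdot)$ must be tracked correctly, and this is where the choice of gauge ray direction enters. Once this is done, independence of the construction from the gauge ray direction follows a posteriori from Corollary~\ref{cor:indep_gauge}, which guarantees that $\gvac$ is itself gauge-independent.
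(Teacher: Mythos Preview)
Your proposal is correct and matches the paper's approach. The paper itself does not give a detailed proof here: it states just before the theorem that the listed properties ``follow directly from its definition'' and, immediately after, refers to \cite{AG3} for the special case of the Le--graph. Your plan is precisely the unpacking of that sentence --- verifying (1), (2), (4), (7) from the normalization and the edge wave function formulas, establishing (5)--(6) by the degree-$\le 1$ interpolation on each $\Gamma_l$ using the linear relation (\ref{eq:lin_Phi}) and the definition (\ref{eq:vac_pole_def}) of $\gvac$, and reading off (3), (8), (9) as consequences --- so there is nothing to compare.
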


In \cite{AG3}, we have proven the above Theorem in the case in which  ${\mathcal G}$ is the Le--graph oriented canonically.

\subsection{The construction of the KP wave function $\hat \psi$ and of the KP divisor $\DKP$ on $\Gamma$.}\label{sec:inv}

In this Section we define the KP wave function $\hat \psi$ on the curve $\Gamma=\Gamma (\mathcal G)$ and the KP divisor $\DKP$. By construction $\DKP$ is contained in the union of all the ovals of $\Gamma$. In Theorem \ref{theo:inv} we show that $\DKP$ does not depend on the base $I$ associated to the orientation $\mathcal O$ of $\mathcal N^{\prime}$ used to construct the divisor and the wave function. 

\begin{construction}\label{con:dress_gen}\textbf{The KP wave function $\hat \psi$ on $\Gamma$.}
Let the soliton data $({\mathcal K}, [A])$ be given, with $\mathcal K = \{ \kappa_1 < \cdots < \kappa_n\}$ and $[A]\in \S \subset \GTNN$. Let $I\in \mathcal M$ be fixed.
Let $\mathcal G$ be a PBDTP graph representing $\S$ as in Definition \ref{def:graph} with $(g+1)$ faces and let the curve $\Gamma=\Gamma(\mathcal G)$ be as in Construction \ref{def:gamma}.  
Let $({\mathcal N}, \mathcal O(I), \mathfrak l)$ be a network of graph $\mathcal G$ representing $[A]$ with no null edge vectors. Let ${\mathcal N}^{\prime}$ be the modified network defined in Section \ref{sec:modN} for a given choice of Darboux vertices. Let $\DDN$ and $\hat \Psi_{e} (\vec t)$ respectively be the dressed network divisor of Definition \ref{def:vac_div_gen} and the KP edge wave function of Definition \ref{def:norm_vac_gen}. Finally on each component of $\Gamma$ let the coordinate $\zeta$ be as in Definition \ref{def:loccoor} (see also Figure \ref{fig:lcoord}).

We define the KP wave function $\hat \psi (P, \vec t)$ on $\Gamma\backslash \{P_0\}$ as follows:
\begin{enumerate}
\item The restriction of ${\hat \psi}$ to $\Gamma_{0}$ is the normalized dressed Sato wave function 
\[
{\hat \psi} (\zeta, \vec t) = \frac{{\mathfrak D}^{(k)} {\hat \phi}_0 (\zeta, \vec t)}{{\mathfrak D}^{(k)} {\hat \phi}_0 (\zeta, \vec t_0)};
\]
\item If $\Sigma_l$ is the component of $\Gamma$ corresponding to the black vertex $V^{\prime}_l$, then, for any $P\in \Sigma_l$ and for all $\vec t$, ${\hat \psi} (P, \vec t)$ is assigned the value of the KP e.w. $\hat \Psi_{e} (\vec t)$, where $e$ is one of the edges at $V^{\prime}_l$:
${\hat \psi} (\zeta(P), \vec t) = \hat \Psi_{e} (\vec t)$;
\item\label{it:non_int_dress} Similarly if $\Gamma_{l}$ is the component of $\Gamma$ corresponding to $V_{l}$, which is either a bivalent white vertex or a trivalent white vertex with KP e.w. coinciding on all edges at $V_l$, then, for any $P\in \Gamma_{l}$ and for all $\vec t$, ${\hat \psi} (P, \vec t)$ is assigned the value $\hat \Psi_{e} (\vec t)$, where $e$ is one of the  edges at $V_{l}$: ${\hat \psi} (\zeta(P), \vec t) = \hat \Psi_{e} (\vec t)$;
\item\label{it:int_dress} If $\Gamma_{l}$ is the component of $\Gamma$ corresponding to a trivalent white vertex with KP e.w. taking distinct values on the edges at $V_l$ for some $\vec t\not=\vec t_0$, then we define $\hat \psi$ at the marked points as
\[
{\hat \psi} (\zeta(P^{(m)}_{l}), \vec t) = \hat \Psi_{e_m} (\vec t), \quad \quad m\in [3],\;\; \forall \vec t,
\]
where $e_m$, $m\in [3]$ are the edges at $V_{l}$, and 
uniquely extend $\hat \psi$ to a degree one meromorphic function on $\Gamma_{l}$ imposing that it has a simple pole at $P_{\textup{\scriptsize dr}}^{(l)}$ with real coordinate $\zeta (P_{\textup{\scriptsize dr}}^{(l)})=\gamma_{\textup{\scriptsize dr}, V_l}$, with $\gdr$ as in (\ref{eq:dress_pole_def}):
\begin{equation}\label{eq:99}
{\hat \psi} (\zeta(P), \vec t) = \frac{\hat \Psi_{e_3} (\vec t) \zeta - \gdr\hat \Psi_{e_1} (\vec t) }{ \zeta- \gdr}= \frac{(-1)^{\mbox{wind}(e_3,e_1)} \Psi_{e_1} (\vec t) (\zeta -1) + (-1)^{\mbox{wind}(e_3,e_2)} \Psi_{e_2} (\vec t)\zeta}{((-1)^{\mbox{wind}(e_3,e_1)} \Psi_{e_1} (\vec t_0)+ (-1)^{\mbox{wind}(e_3,e_2)} \Psi_{e_2} (\vec t_0))(\zeta - \gdr)}.
\end{equation}
\end{enumerate}
\end{construction}
By construction, the KP wave function is constant w.r.t. the spectral parameter on any component containing a Darboux (source or sink) point and has $k$ real simple poles on $\Gamma_0$. Therefore the following definition of KP divisor is fully justified.

\begin{definition}\label{def:DKP}{\textbf{The KP divisor on  $\Gamma$.}} The effective KP divisor $\DKP$ is a sum of $g$ simple poles, namely 
\begin{enumerate}
\item $k$ poles on $\Gamma_0$ coinciding with the Sato divisor at $\vec t= \vec t_0$;
\item $g-k$ poles $P_{\textup{\scriptsize dr}}^{(l)}\in \Gamma_l$ uniquely identified by the condition that, in the local coordinate induced by the orientation $\mathcal O$, $\zeta(P_{\textup{\scriptsize dr}}^{(l)}) =\gamma_{\textup{\scriptsize dr}, V_l}$,
and $V_l$, $l\in [n+1,g+n-k]$, are the trivalent white vertices not containing Darboux edges. 
\end{enumerate}
\end{definition}

\begin{remark} 
We remark that the position of the $g-k$ poles on $\Gamma\backslash \Gamma_0$ is the same if we use the network divisor ${\mathcal D}_{\textup{\scriptsize dr},{\mathcal N}}$  defined in (\ref{eq:net_div_N}) instead of $\DDN$. Therefore in applications one can work directly with the original network $\mathcal N$ to construct the divisor on $\Gamma$ for the given soliton data.
\end{remark}

By construction $\DKP$ is independent on the gauge ray direction, on the weight gauge, on the vertex gauge and on the position of the Darboux points. We now prove that it is also independent on the orientation of the network.
Indeed for any fixed orientation $\mathcal O$, on the copy $\Gamma_l$ corresponding to the trivalent white vertex $V_l$, we have four real ordered marked points: $P_m^{(l)}$, $m\in [3]$, represented by the ordered edges at $V_l$, and the dressed divisor point $P_{\textup{\scriptsize dr}}^{(l)}$ with local coordinate $\zeta(P_{\textup{\scriptsize dr}}^{(l)})=\gamma_{\textup{\scriptsize dr}, V_l}$.

Using Corollary \ref{cor:indep_orient}, in Theorem \ref{theo:inv} we show that any change of orientation in $\mathcal N^{\prime}$acts on the four marked points in $\Gamma_l$ as a change of coordinate, so in particular it leaves invariant the position of the pole divisor in the oval.

\begin{theorem}\textbf{Invariance of $\DKP$ on network orientation}\label{theo:inv}
Let ${\mathcal D}_{\textup{\scriptsize dr}, {\mathcal N}^{\prime}, \mathcal O_1}$ and ${\mathcal D}_{\textup{\scriptsize dr}, {\mathcal N}^{\prime}, \mathcal O_2}$, be the dressed network divisors on $\mathcal N^{\prime}$ w.r.t the orientations $\mathcal O_s$, $s=1,2$, as in Definition \ref{def:vac_div_gen}, and let $\DKP^{(s)}$, $s=1,2$, be the corresponding KP divisors on $\Gamma$ as in Definition \ref{def:DKP}. Then 
\[
\DKP^{(2)} = \DKP^{(1)},
\] 
that is the divisor point on each component $\Gamma_l$,  $P_{\textup{\scriptsize dr}}^{(l)}$, is the same in both constructions
and the divisor numbers $\gdr^{(s)}$ are the local coordinates of $P_{\textup{\scriptsize dr}}^{(l)}$. Indeed, let $e^{(s)}_m$, $m\in [3]$, $s=1,2$, be the edges at the trivalent white vertex $V_l$, where $e^{(s)}_3$ is the unique edge pointing inwards at $V_l$ in orientation $\mathcal O_s$ (see also Figure \ref{fig:phi_orient}).
Let $\zeta_s$, $s=1,2$, be the local coordinates on $\Gamma_l$ respectively for orientations $\mathcal O_1$ and $\mathcal O_2$ of ${\mathcal N}^{\prime}$ and let $P^{(s)}_m$ be the marked point on $\Gamma_l$ corresponding to $e^{(s)}_m$ as in Definition
\ref{def:loccoor} (see also Figure \ref{fig:lcoord}).
Then 
\begin{enumerate}
\item If all edges at $V_l$ have the same versus in both orientations, then $P^{(2)}_m=P^{(1)}_m$, $m\in [3]$, the local coordinate on $\Gamma_l$ is the same $\zeta_1=\zeta_2$ in both orientations and (\ref{eq:inep_gauge_1}) implies that the divisor point on $\Gamma_l$ is the same and has the same coordinate
\[
\zeta_2(P_{\textup{\scriptsize dr}}^{(l)})=\zeta_1(P_{\textup{\scriptsize dr}}^{(l)});
\]
\item If at $V_l$ we change orientation of edges from $(e_1^{(1)},e_2^{(1)},e_3^{(1)})$ to $(e_2^{(2)},e_3^{(2)},e_1^{(2)})$ (see Figure \ref{fig:phi_orient}[left]), then $P^{(2)}_m=P^{(1)}_{m-1}, \,\,
(\!\!\!\!\mod 3)$ and $\zeta_2 = (1-\zeta_1)^{-1}$ on $\Gamma_l$. Therefore (\ref{eq:inep_gauge_2}) implies that the divisor point on $\Gamma_l$ is the same and its local coordinate changes as  
\[
\zeta_2(P_{\textup{\scriptsize dr}}^{(l)})=(1-\zeta_1(P_{\textup{\scriptsize dr}}^{(l)}))^{-1};
\] 
\item If at $V_l$ we change orientation of edges from $(e_1^{(1)},e_2^{(1)},e_3^{(1)})$ to $(e_3^{(2)},e_1^{(2)},e_2^{(2)})$ (see Figure \ref{fig:phi_orient}[right]), then $P^{(2)}_m=P^{(1)}_{m+1}, \,\, (\!\!\!\!\mod 3)$ and $\zeta_2 = \zeta_1/(\zeta_1-1)$ on $\Gamma_l$. Therefore (\ref{eq:inep_gauge_3}) implies that the divisor point on $\Gamma_l$ is the same and its local coordinate changes as
\[  
\zeta_2(P_{\textup{\scriptsize dr}}^{(l)}) = \frac{\zeta_1(P_{\textup{\scriptsize dr}}^{(l)})}{\zeta_1(P_{\textup{\scriptsize dr}}^{(l)})-1}.
\]
\end{enumerate}
\end{theorem}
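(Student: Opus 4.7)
The plan is to reduce to elementary changes of orientation and then, at each trivalent white vertex, to identify the transformation of the dressed divisor number given by Corollary~\ref{cor:indep_orient} with a Möbius change of coordinate on $\Gamma_l$ forced by the relabeling of the three marked points. By Postnikov's classification of perfect orientations, any two perfect orientations $\mathcal O_1,\mathcal O_2$ of $\mathcal N^{\prime}$ are connected by a finite sequence of elementary changes, each either along a non-self-intersecting directed path from a boundary source to a boundary sink, or along a simple directed cycle. Hence it suffices to establish the invariance of $\DKP$ for a single elementary change, because the divisor points on components associated to bivalent or trivalent black vertices do not even enter the construction (the wave function is constant in the spectral parameter there by Construction~\ref{con:dress_gen}(\ref{it:non_int_dress})).

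Fix such an elementary change. Theorem~\ref{theo:orient}, together with Lemmas~\ref{lemma:path}--\ref{lemma:cycle}, gives the transformation rule for the edge vectors: $\hat E_e = \alpha_e E_e + \sum_r c^r_e A[r]$ with real nonzero $\alpha_e$. Applying ${\mathfrak D}^{(k)}$, all rows $A[r]$ produce heat hierarchy solutions $f^{(r)}(\vec t)$ generating the Darboux transformation and are therefore annihilated, so the dressed edge wave function satisfies the purely multiplicative relation $\Psi_{e,\mathcal O_2,\mathfrak l}(\vec t) = \alpha_e\,\Psi_{e,\mathcal O_1,\mathfrak l}(\vec t)$ of Proposition~\ref{prop:change_orient}. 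At each trivalent white vertex $V_l$, the edges split into two classes according to whether their orientations coincide or differ in $\mathcal O_1,\mathcal O_2$; since the perfect-orientation rule imposes a unique incoming edge at a white vertex, only three configurations are possible at $V_l$: either the triple $(e_1^{(2)},e_2^{(2)},e_3^{(2)})$ equals $(e_1^{(1)},e_2^{(1)},e_3^{(1)})$, or it is obtained from the latter by one of the two nontrivial cyclic shifts combined with reversal of the two edges whose role (incoming versus outgoing) has changed. These are exactly the three cases listed in Corollary~\ref{cor:indep_orient}, which then supplies the explicit transformation rule for $\gdr$.

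To complete the proof, I will match this abstract rule with the geometric change of coordinate on $\Gamma_l$. The three marked points $P_1^{(l)},P_2^{(l)},P_3^{(l)}\in\Gamma_l$ correspond to the three edges at $V_l$ in the order prescribed by Remark~\ref{rem:labedges}, and by Definition~\ref{def:loccoor} they carry coordinates $0,1,\infty$ in the local parameter $\zeta_s$ induced by $\mathcal O_s$. Since the three edges (hence the three physical marked points) are unchanged by the elementary reorientation and only their labels are cyclically permuted, the two local coordinates $\zeta_1,\zeta_2$ on $\Gamma_l$ are related by the unique Möbius transformation sending the triple $(0,1,\infty)$ to the permuted triple. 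In case (1) this is the identity and the divisor point has coordinate $\gdr^{(1)}=\gdr^{(2)}$. In cases (2) and (3) the Möbius transformations are, respectively, $\zeta\mapsto 1/(1-\zeta)$ and $\zeta\mapsto \zeta/(\zeta-1)$, which is precisely what Corollary~\ref{cor:indep_orient} predicts for $\gdr^{(2)}$ as a function of $\gdr^{(1)}$. Therefore the physical divisor point $P_{\textup{\scriptsize dr}}^{(l)}\in\Gamma_l$ is the same in the two orientations, proving $\DKP^{(2)}=\DKP^{(1)}$ on every non-Sato component; on $\Gamma_0$ the divisor equals the Sato divisor $\DS$ at $\vec t_0$, which depends only on $(\mathcal K,[A],\vec t_0)$ and not on the orientation.

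The main technical obstacle is the last compatibility check: verifying that the multiplicative constants $\alpha_{e_m}$ produced by the elementary reorientation, once combined with the winding-number signs $(-1)^{\mbox{\scriptsize wind}(e_3^{(s)},e_m^{(s)})}$ appearing in (\ref{eq:dress_pole_def}), reproduce exactly the Möbius transformations above rather than, say, their inverses. This requires a careful case-by-case bookkeeping of which edges at $V_l$ reverse direction during the elementary change, how their incidence with the gauge rays $\mathfrak l_{i_r}$ and the cycle or path of reorientation changes the parity of $\mbox{wind}$ and $\mbox{int}$, and how the constraint coming from the linear relation (\ref{eq:lin_Phi}) in both orientations pins down the ratios $\alpha_1/\alpha_3$ and $\alpha_2/\alpha_3$. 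The calculation mirrors the one sketched for (\ref{eq:inep_gauge_2}) in the proof of Corollary~\ref{cor:indep_orient}, and once performed in the remaining case (3), the geometric interpretation above immediately yields the invariance statement.
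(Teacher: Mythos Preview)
Your proposal is correct and follows essentially the same approach as the paper: the paper's proof is a one-line invocation of Corollary~\ref{cor:indep_orient} together with the definitions of the local coordinates and of $\DKP$, and you have simply unpacked this, explaining why the three cases at a white vertex exhaust the possibilities and why the transformation rules (\ref{eq:inep_gauge_1})--(\ref{eq:inep_gauge_3}) coincide with the M\"obius changes of coordinate forced by the cyclic relabeling of $(0,1,\infty)$. Your final paragraph on the ``technical obstacle'' is in fact already absorbed into the proof of Corollary~\ref{cor:indep_orient} (the paper writes out case~(2) and leaves case~(3) to the reader), so you may cite the corollary directly rather than flag it as remaining work.
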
 

The proof follows directly from the definition of the KP wave function on $\Gamma$, Corollary \ref{cor:indep_orient} and the definition of $\DKP$.

By construction, each double point in $\Gamma$ corresponds to an edge; therefore the KP wave function takes the same value at all double points for all times and, at the double points $\kappa_j$ corresponding to edges at boundary vertices $b_j$, $j\in [n]$, it coincides for all times with the normalized Sato wave function.   
$\hat \psi$ is meromorphic of degree $\mathfrak d_{\mbox{\tiny KP}}\le g$, and its poles are all simple and belong to $\DKP$, which is contained in the union of all the real ovals of $\Gamma$. Therefore $\hat \psi$ satisfies the properties in Definition \ref{def:KPwave}, that is it is the KP wave function for the soliton data $(\mathcal K, [A])$ and the divisor $\DKP$ on $\Gamma$.

\begin{theorem}\label{lemma:KPeffvac}\textbf{$\hat \psi$ is the unique KP wave function on $\Gamma$ for $(\mathcal K, [A])$ and the divisor $\DKP$.}
Let $\hat \psi$, ${\mathcal D}_{\textup{\scriptsize dr}, {\mathcal N}^{\prime}}$, $\DKP$ on $\Gamma$ be as in Construction \ref{con:dress_gen} and Definitions \ref{def:vac_div_gen} and \ref{def:DKP}. Then $\hat \psi$ satisfies the following properties of Definition \ref{def:rrss} on $\Gamma\backslash\{P_0\}$. 
\begin{enumerate}
\item At $\vec t=\vec t_0$ $\hat \psi (P, \vec t_0)=1$ at all points $P\in \Gamma\backslash \{P_0\} $;
\item ${\hat \psi} (\zeta(P), \vec t)$ is real for real values of the local coordinate $\zeta$ and for all real $\vec t$ on each component of $\Gamma$;
\item $\hat \psi$ takes the same value at pairs of glued points $P,Q\in \Gamma$, for all $\vec t$:  $\hat \psi(P, \vec t) = \hat \psi(Q, \vec t)$;
\item $\hat \psi(\zeta, \vec t)$ is constant w.r.t. to the spectral parameter on each copy of $\mathbb{CP}^1$ corresponding to a boundary source or sink vertex $V_{l}$, $l\in [n]$;
\item $\hat \psi(\zeta, \vec t)$ is either constant or meromorphic of degree one w.r.t. to the spectral parameter on each copy of $\mathbb{CP}^1$ corresponding to any other trivalent white vertex. $\hat \psi(\zeta, \vec t)$ is constant w.r.t. to the spectral parameter on each other copy of $\mathbb{CP}^1$; 
\item $\DKP+(\hat\psi(P,\vec t))\ge 0$ for all $\vec t$;
\item $\DKP$ does not depend on the choice of position of the Darboux points in $\Gamma_l$, $l\in [n]$.
\item The $\DKP$ divisor is contained in the union of all ovals.   
\end{enumerate}
\end{theorem}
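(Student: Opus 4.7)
The plan is to verify each of the eight properties separately; most follow directly from Construction~\ref{con:dress_gen} and from the properties of the normalized KP edge wave function $\hat \Psi_e$ already established in Section~\ref{sec:vertex_wavefn_general_case}, so the only nontrivial work is concentrated in the divisor bound.

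Properties (1)--(3) are essentially inherited. Property (1) is immediate from the normalization $\hat \psi(P,\vec t_0)=1$ built into both the restriction to $\Gamma_0$ (ratio of dressed Sato wave functions) and the definition $\hat \Psi_e(\vec t) = \Psi_{e,\mathcal O,\mathfrak l}(\vec t)/\Psi_{e,\mathcal O,\mathfrak l}(\vec t_0)$ on every other component. Property (2) follows because all edge weights and phases $\kappa_j$ are real, so by Theorem~\ref{theo:null} the edge vector components are real rational functions of the weights, which makes each $\hat\Psi_e(\vec t)$ real for real $\vec t$; reality of $\gdr$ then makes the linear interpolation formula~(\ref{eq:99}) real. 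Property (3), the crucial gluing condition, is by design: each double point of $\Gamma$ corresponds to an edge $e$ of $\mathcal N^{\prime}$, and $\hat\psi$ is defined at both preimages of this double point to equal $\hat\Psi_e(\vec t)$, regardless of whether each preimage lies on a black, bivalent, or trivalent white component.

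Properties (4) and (5) are also direct from Construction~\ref{con:dress_gen}. On any component associated with a bivalent or trivalent black vertex, all incident edges carry the same value of $\hat\Psi_e$ (as an immediate consequence of Lemma~\ref{lem:relations} applied to $\Psi$), so $\hat\psi$ is constant in $\zeta$ there; this includes components associated with boundary and Darboux edges. On a component $\Gamma_l$ corresponding to a trivalent white vertex, either the three edge values coincide (giving a spectral-parameter-independent $\hat\psi$) or formula~(\ref{eq:99}) exhibits $\hat\psi$ as a degree-one meromorphic function of $\zeta$ with a unique simple pole at $\zeta=\gdr$. The main step is property (6), which we check component by component. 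On $\Gamma_0$, $\hat\psi$ restricts to the normalized dressed Sato wave function whose poles at $\vec t_0$ are the $k$ Sato divisor points $\DS\subset\DKP$. On every $\Gamma_l$ of item~\ref{it:int_dress} in Construction~\ref{con:dress_gen}, formula~(\ref{eq:99}) furnishes at most one simple pole at $\Pdr^{(l)}$, which is included in $\DKP$ by Definition~\ref{def:DKP}; the denominator $(-1)^{\mathrm{wind}(e_3,e_1)}\Psi_{e_1}(\vec t_0)+(-1)^{\mathrm{wind}(e_3,e_2)}\Psi_{e_2}(\vec t_0)$ is nonzero by the choice of $\vec t_0$ in Remark~\ref{rem:t_0}. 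On all remaining components $\hat\psi$ is constant in $\zeta$, hence regular. Summed over all components this gives $\DKP+(\hat\psi(\cdot,\vec t))\ge 0$.

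Properties (7) and (8) then follow quickly. For (7), Proposition~\ref{prop:darboux} shows that moving a Darboux vertex inside its face leaves all dressed network divisor numbers unchanged, while switching it between ovals transforms the number at the associated bivalent white vertex $V_s$, $s\in[n]$, by $\gdr\mapsto 1-\gdr$; since Definition~\ref{def:DKP} excludes precisely those $g-k$ components $\Gamma_l$ that carry Darboux edges, the $g-k$ dressed points of $\DKP$ are untouched, and the $k$ Sato points are patently insensitive to Darboux placement. Property (8) is immediate since $\gdr$ is real whenever it is finite, so each $\Pdr^{(l)}$ lies on the real part of $\Gamma_l$, and the Sato divisor points are real by Proposition~\ref{prop:malanyuk}. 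The main obstacle I foresee is bookkeeping rather than conceptual: one must check that the winding-intersection sign conventions in Construction~\ref{con:dress_gen} and in~(\ref{eq:99}) are globally consistent, so that property (3) holds at every double point without a spurious sign and so that the cascade of sign changes that Proposition~\ref{prop:darboux} can induce elsewhere in $\mathcal N^{\prime}$ integrates to the identity on the points actually recorded in $\DKP$.
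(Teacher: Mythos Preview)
Your verification is correct and matches the paper's approach: the paper itself states that ``the proof of the assertions is straightforward and is omitted,'' and you have supplied exactly the kind of component-by-component check that this phrase signals. One small slip to fix: in your argument for (7) you write ``excludes precisely those $g-k$ components $\Gamma_l$ that carry Darboux edges,'' but it is the $n$ components $\Gamma_l$, $l\in[n]$, that carry Darboux edges (and are excluded), leaving the $g-k$ non-Darboux trivalent white components whose divisor points constitute the non-Sato part of $\DKP$; also note that the boundary vertices $V_l$, $l\in[n]$, are trivalent white in $\mathcal N'$ (not bivalent or black), but they fall under item~(\ref{it:non_int_dress}) of Construction~\ref{con:dress_gen} because the normalized KP e.w.\ coincides on all three of their edges.
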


The proof of the assertions is straightforward and is omitted. We remark that, in the special case of the Le--network the divisor $\DKP$ coincides with the one constructed in \cite{AG3}.
We complete the proof of Theorem \ref{theo:exist} in Section \ref{sec:comb}, where we show that
the KP divisor $\DKP$ satisfies the regularity and reality conditions (Items (\ref{item:defodd_KP}) and (\ref{item:defeven_KP}) of Definition \ref{def:real_KP_div}).

\subsection{Global parametrization of positroid cells via KP divisors: the case $Gr^{TP}(1,3)$}\label{sec:global}
\begin{figure}%[H]
  \centering{\includegraphics[width=0.7\textwidth]{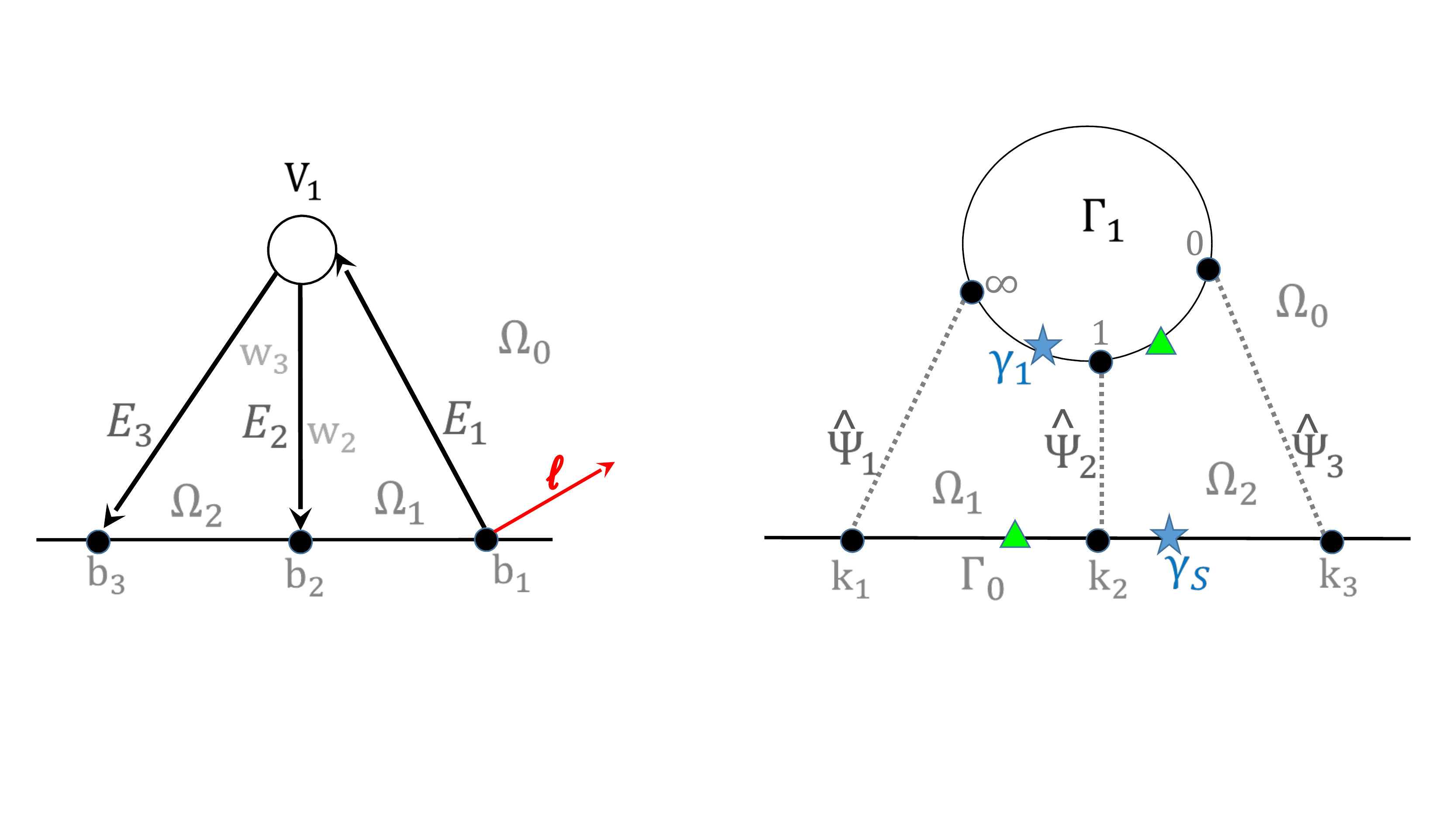}}
	\vspace{-1.7 truecm}
  \caption{\small{\sl We consider the issue of the global parametrization of positroid cells via divisors in the simplest example $Gr^{TP}(1,3)$.}\label{fig:global}}
\end{figure}
Next we address the issue of the global parametrization of positroid cells via divisors in the simplest example of blow-up. As a case study let us take the totally positive Grassmannian $Gr^{TP} (1,3)$.
With usual affine coordinates, we have the cell parametrization $[1,w_2,w_3]$, the heat hierarchy solution $f(\vec t)= e^{\theta_1 (\vec t)}+w_2 e^{\theta_2 (\vec t)}+w_3 e^{\theta_3 (\vec t)}$ and the Darboux transformation ${\mathfrak D}=\partial_x -\frac{\partial_x f(\vec t)}{f(\vec t)}$. 
Then on the oriented network (Figure \ref{fig:global}[left]) the vectors are 
\[
E_1 =E_2+E_3, \quad\quad E_2 =(0,w_2,0), \quad\quad E_3 =(0,0,w_3),
\]
the edge wave function takes the value
\[
\Psi_1 (\vec t) =\Psi_2(\vec t)+\Psi_3(\vec t), \quad\quad \Psi_2(\vec t) =w_2 (\kappa_2-\gamma_S) e^{\theta_2(\vec t)}, \quad\quad \Psi_3(\vec t) =w_3 (\kappa_3-\gamma_S) e^{\theta_3(\vec t)},
\]
where $\gamma_S$ is the coordinate of the Sato divisor point at the initial time $\vec t_0=\vec 0=(0,0,0,\dots)$,
\[
\gamma_S= \gamma_S (w_2,w_3,\vec 0)= \frac{\kappa_1 +w_2 \kappa_2 + w_3 \kappa_3}{1 +w_2  + w_3 }.
\]
On the curve $\Gamma=\Gamma_0\sqcup\Gamma_1$, at the double points the normalized KP wave function is $\hat \Psi_j(\vec t) =\frac{\Psi_j(\vec t)}{\Psi_j(\vec 0)}$ and the divisor point $\gamma_1 \in \Gamma_1$ in the local coordinates induced by the orientation of the network takes the value
\[
\gamma_1 = \gamma_1 (w_2,w_3,\vec 0)= \frac{w_3(\kappa_3-\gamma_S)}{w_3(\kappa_3-\gamma_S)+w_2(\kappa_2-\gamma_S)}.
\]
It is easy to check that the positivity of the weights is equivalent to $\gamma_S \in]\kappa_1,\kappa_3[$, $\gamma_1>0$ and the fact that there is exactly one divisor point in each one of the finite ovals, $\Omega_1$ and $\Omega_2$, that is
\begin{enumerate}
\item Either $\kappa_1 <\gamma_S<\kappa_2$ and $\gamma_1<1$, {\sl i.e.} the Sato divisor point belongs to $\Omega_1$ and the other divisor point to $\Omega_2$. In Figure \ref{fig:global} [right] we illustrate this case representing divisor points by triangles;
\item Or $\kappa_2 <\gamma_S<\kappa_3$ and $\gamma_1>1$, {\sl i.e.} the Sato divisor point belongs to $\Omega_2$ and the other divisor point to $\Omega_1$. In Figure \ref{fig:global} [right] we illustrate this case representing divisor points by stars.
\end{enumerate}
We also remark that the transformation from $(w_2,w_3)$ to $(\gamma_S,\gamma_1)$ looses injectivity and has null Jacobian along the line $w_3=\frac{\kappa_2-\kappa_1}{\kappa_3-\kappa_2}$ so that, for any $w_2>0$,
\[
\gamma_S ( w_2, \frac{\kappa_2-\kappa_1}{\kappa_3-\kappa_2} )=\kappa_2, \quad\quad \gamma_1 ( w_2, \frac{\kappa_2-\kappa_1}{\kappa_3-\kappa_2} )=1.
\]

If we invert the relation between divisor numbers and weights, we get
\[
w_2 (\gamma_S,\gamma_1)= \frac{(\gamma_1 -1)(\gamma_S-\kappa_1)}{\gamma_S-\kappa_2},\quad\quad w_3 (\gamma_S,\gamma_1)= \frac{\gamma_1(\gamma_S-\kappa_1)}{\kappa_3-\gamma_S}.
\]
Therefore in the non--generic case when $\gamma_S\to\kappa_2$ and $\gamma_1\to 1$, we need to apply the blow-up procedure at the point $(\gamma_S,\gamma_1)=(\kappa_2,1)$, by setting $\gamma_S=\kappa_2+\epsilon$, $\gamma_1=1+z\epsilon$ and take the limit $\epsilon\to 0$, so that
\[
w_2 (\kappa_2,1) = z(\kappa_2-\kappa_1),\quad\quad w_3 (\kappa_2,1) = \frac{\kappa_2-\kappa_1}{\kappa_3-\kappa_2}.
\]

\begin{figure}%[H]
  \centering{\includegraphics[width=0.47\textwidth]{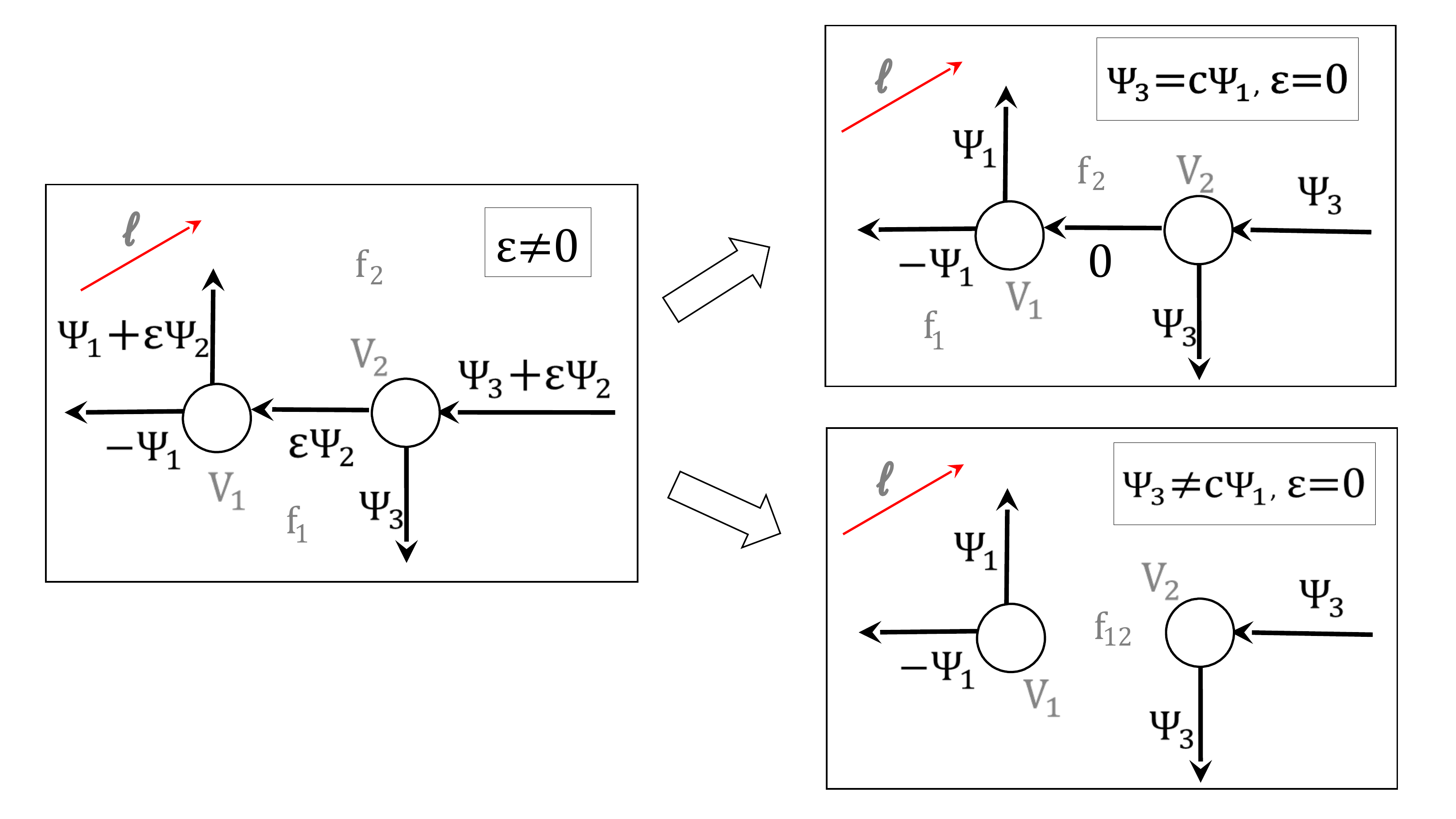}
	\hfill
	\includegraphics[width=0.47\textwidth]{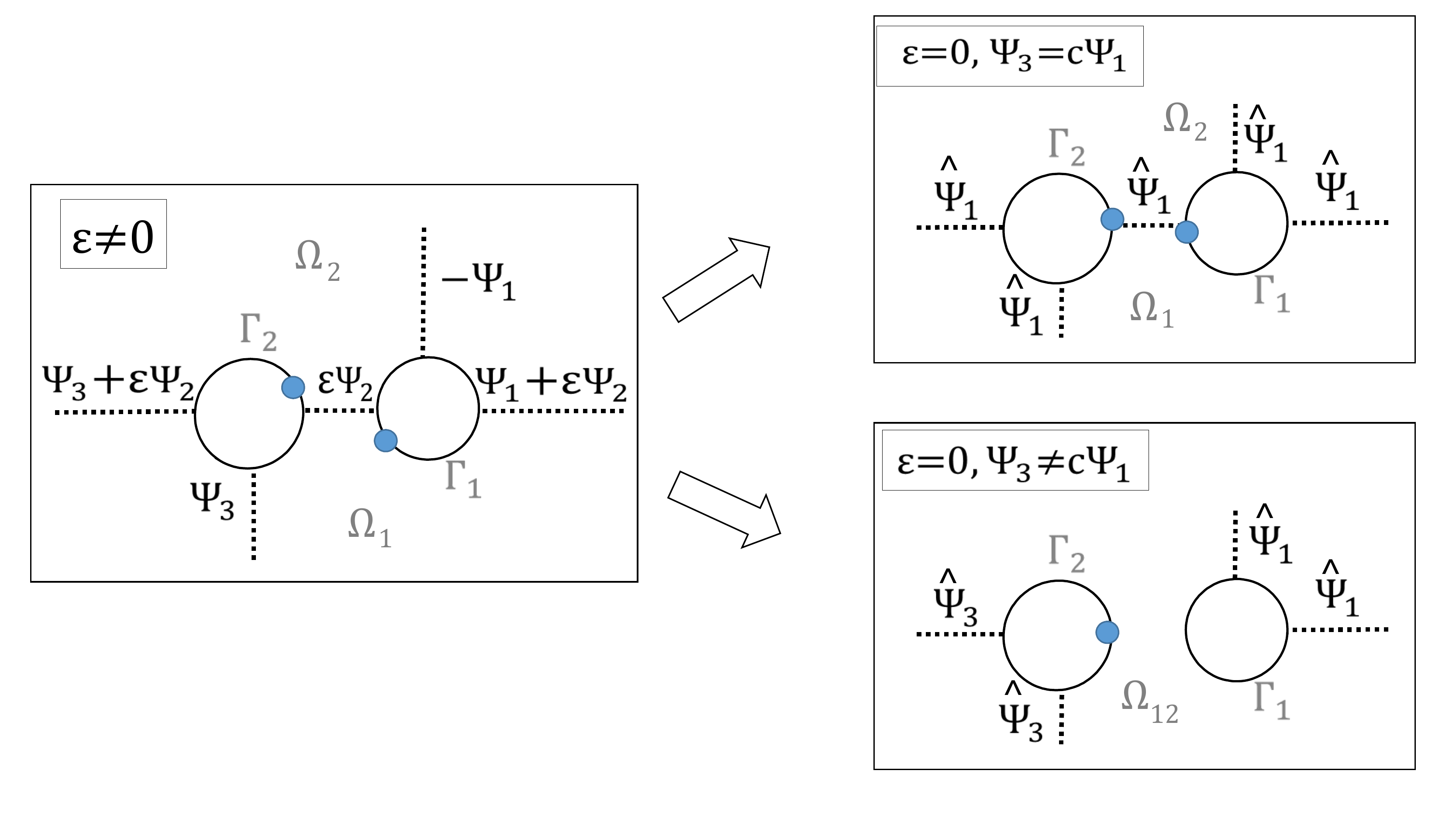}}
	\vspace{-.4 truecm}
  \caption{\small{\sl We present the construction of the divisor in case of edges carrying null vectors in the simplest case: we show both the edge wave function $\Psi$ on the oriented network [left] and the normalized wave function and divisor on the curve [right] and in the case of edges carrying null vectors of type 1 [top] and type 2 [bottom].}\label{fig:div_null}}
\end{figure}
\subsection{Construction of the divisor in the case of null edge vectors}\label{sec:constr_null}
We now  discuss the modification of the technical construction in case of null edge vectors on a simple example.
In Definition \ref{def:null_type_1_2} edges carrying null vectors have been classified of type 1 and 2 depending on the properties of the not--null edge vectors on the edges adjacent to them. Without loss of generality, in the following we assume that all vertices belonging to a given maximal connected subgraph carrying null vectors are trivalent.

\textbf{Construction of the wave function and divisor for soliton data associated to a network carrying null vectors of type 1:}
If the null edge vectors are just of type 1, then the normalized wave function takes the same value at all edges adjacent to a given connected maximal subgraph carrying null edge vectors. Then we simply extend analytically the normalized KP wave function to the components of $\Gamma$ associated to the given subgraph, by attributing the same value of the normalized KP wave function also at the edges carrying null vectors. Since the resulting wave function is regular and constant with respect to the spectral parameters on all components associated to such connected maximal subgraph, 
the position of the divisor point on components corresponding to trivalent white vertices carrying null vectors is completely irrelevant: we just attribute a divisor point to each component corresponding to a trivalent white vertex carrying null vectors using the counting rule established in \cite{AG1}. We remark that there is a certain freedom in the attribution of divisor points to ovals. In any case, the final divisor is real and contained in the union of all the ovals. Finally, we claim that it is always possible to attribute exactly one divisor point in each finite oval and no divisor point in the infinite oval in all cases of connected maximal subgraphs carrying null edge vectors of type one. We plan to discuss thoroughly this issue in a future publication and below we present two examples.

In the simplest case the connected maximal subgraph carrying null edge vectors contains only one edge and its boundary consists of two trivalent white vertices. We present such an example in Figure \ref{fig:div_null}, case $\epsilon =0$ and $\Psi_3 = c \Psi_1$, $c\not =0$. In such case, the null edge necessarily bounds two distinct faces, $f_1$ and $f_2$, the network divisor numbers at the vertices $V_1$ and $V_2$ coincide with the coordinates of the corresponding points on the components $\Gamma_1$ and $\Gamma_2$ and the divisor points on $\Gamma$ are trivial by definition. Then we may use the counting rule to attribute exactly one divisor point to each the corresponding ovals $\Omega_1$ and $\Omega_2$. 

In more complicated cases, like that one constructed in Figure \ref{fig:div_null_new}, the construction of the divisor is carried out in a similar way. We attribute one divisor point to each component corresponding to a trivalent white vertex using the counting rule if there is an edge carrying a null vector and we do not attribute any divisor point to black vertices. 
For instance, in the case of Figure \ref{fig:div_null_new} we have that the Sato divisor point $\gamma_s$ belongs to the intersection of the oval $\Omega_1$ and the Sato component $\Gamma_0$ which corresponds to the boundary of the disk.
The remaining divisor numbers are
\[
\gamma_1 = \frac{q+p+1}{2p+1}, \quad\quad \gamma_2 = \frac{p(1+2q)}{p-q}, \quad\quad \gamma_3 = \frac{q-p}{1+2q}.
\]
Therefore, upon denoting with the same symbol both the divisor point and its local coordinate, we get: 
\begin{enumerate}
\item $\gamma_1 \in \Gamma_1 \cap \Omega_3$, $\gamma_2 \in \Gamma_2 \cap \Omega_2$ and $\gamma_3 \in \Gamma_3 \cap \Omega_4$ if
$p>q$. We represent one configuration with stars in Figure \ref{fig:div_null_new} [right];
\item $\gamma_1 \in \Gamma_1 \cap \Omega_2$, $\gamma_2 \in \Gamma_2 \cap \Omega_4$ and $\gamma_3 \in \Gamma_3 \cap \Omega_3$ if
$p<q$. We represent one configuration with triangles in Figure \ref{fig:div_null_new} [right].
\end{enumerate}

\begin{figure}%[H]
  \centering{\includegraphics[width=0.49\textwidth]{NullVector_2.pdf}
	\hfill
	\includegraphics[width=0.49\textwidth]{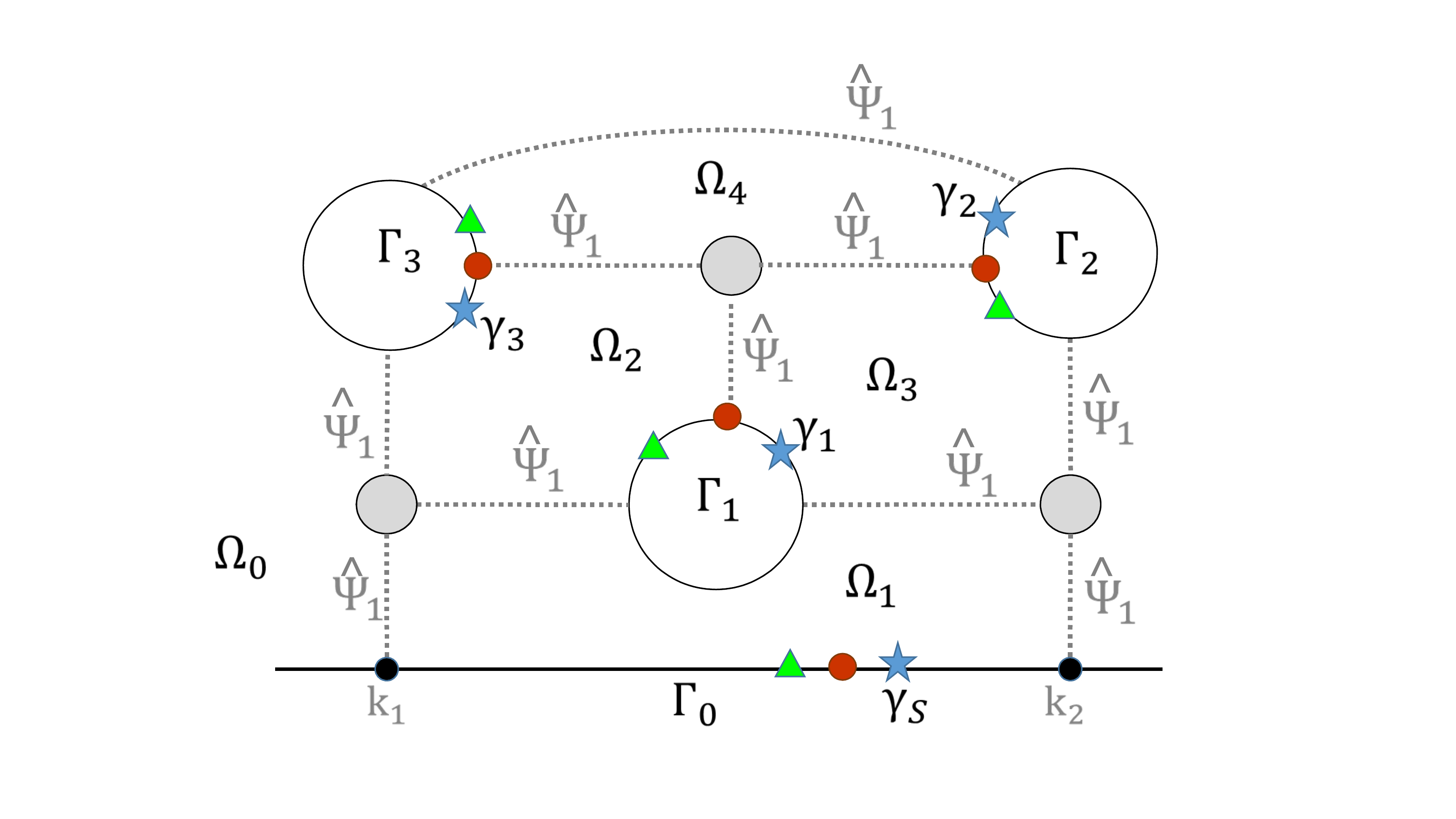}}
	\vspace{-.7 truecm}
  \caption{\small{\sl We reproduce the network in Figure \ref{fig:zero-vector} possessing null edge vectors $E_u,E_v$ and $E_w$ when $p=q$ [left] and show the KP divisor on the curve in the case $q<p$ (stars), $q=p$ (balls) and $q>p$ (triangles), where $\gamma_S$ is the Sato divisor point.}\label{fig:div_null_new}}
\end{figure}

\smallskip

\textbf{Construction of the wave function and divisor for soliton data associated to a network carrying null vectors of type 2:}
If there do appear also null edge vectors of type 2, then the problem of continuation of the wave function is untrivial since
the normalized wave function takes different value at edges adjacent to a given connected maximal subgraph carrying null edge vectors. Then we need to modify the graph eliminating such edges of type 2 and the curve $\Gamma$ accordingly. Since elimination of edges causes a decrease in the number of faces, we need also to eliminate some trivial divisor points in order to keep the degree of the divisor consistent with the genus of the curve. Again there is a certain freedom in the attribution of divisor points to ovals. In any case, the final divisor is real and contained in the union of all the ovals. Finally, we claim that it is always possible to attribute exactly one divisor point in each finite oval and no divisor point in the infinite oval also in this case. 

We plan to discuss thoroughly this issue in a future publication and below we present just the simplest example in Figure \ref{fig:div_null}, case $\epsilon =0$ and $\Psi_3 \ne c \Psi_1$, $c\not =0$.
The connected maximal subgraph carrying null edge vectors contains only one edge and its boundary consists of two trivalent white vertices. In such case, the null edge necessarily bounds two distinct faces, $f_1$ and $f_2$, the network divisor numbers at the vertices $V_1$ and $V_2$ coincide with the coordinates of the corresponding points on the components $\Gamma_1$ and $\Gamma_2$ and the divisor points on $\Gamma$ are trivial by definition. Contrary to the case of null edge vectors of type 1, we cannot extend analytically the wave function to include the edge carrying the null vector. Therefore we eliminate such edge from the network and we merge the two faces $f_1,f_2$ into a single one $f_{12}$ eliminating a double point and a face in $\Gamma$. We also eliminate one divisor point on the reduced curve so to have only one divisor point in the face $\Omega_{12}$: for instance we eliminate the divisor point on $\Gamma_1$ and keep that on $\Gamma_2$ in the Figure, but we could also do the opposite.

\section{Combinatorial characterization of the regularity of $\DKP$}\label{sec:comb}

In this Section we show that the position of each (vacuum or KP) divisor point in the intersection of $\Gamma_l$ with the ovals is determined by an index of each pair of edges at the corresponding trivalent white vertex $V_l$. We use such index to provide a combinatorial proof that there is exactly one KP divisor point in each finite oval thus completing the proof of Theorem \ref{theo:exist}. Then, in Section \ref{sec:lam1} we use the formalism of relations on half--edges introduced in \cite{Lam2} to restate the main Theorems of our paper in such framework.

In the following $a,b,c$ denote either the edge at $V_l$ or the corresponding marked point on $\Gamma_l$, $w_a$ is the weight of the edge $a$, $f_{ab}$ is the face of ${\mathcal N}$ bounded by the edges $a,b$ and $\Omega_{ab}$ is the corresponding oval in $\Gamma(\mathcal G)$ (see also Figure \ref{fig:corr_V_G}).
\begin{figure}%[H]
  \centering
  \includegraphics[width=0.56\textwidth]{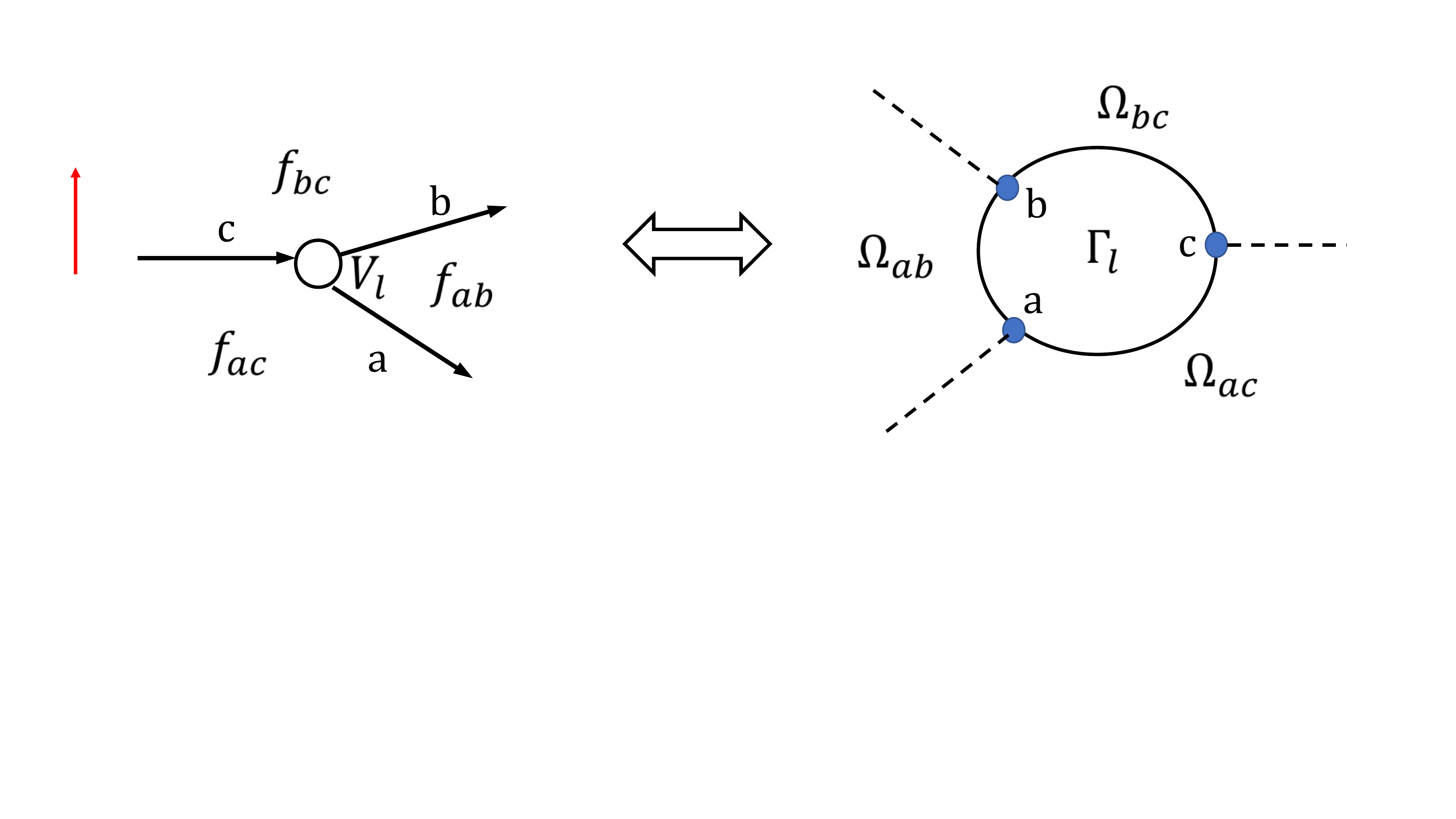}
	\vspace{-2.5 truecm}
  \caption{\small{\sl The correspondence between faces at $V_l$ (left) and ovals bounded by $\Gamma_l$ (right) under the assumption that the curve is constructed reflecting the graph w.r.t. a vertical ray ({\sl i.e.} we assume that all boundary source and sink vertices in the original network lay on a horizontal line).}}
	\label{fig:corr_V_G}
\end{figure}

Let us start recalling the definition of the (vacuum or dressed) network divisor number 
associated to $V_l$. Let $\Psi_s = \Psi_{s,{\mathcal N}, \mathcal O, \mathfrak l} (\vec t_0)$, be either the vacuum or the dressed e.w. at the 
edges $s=a,b,c$ of a given trivalent white vertex $V_l$ as in Definition \ref{def:vvw_gen}.
Then the linear relation at $V_l$ is
\[
\Psi_c = (-1)^{\mbox{int}(c)} w_c \left( (-1)^{\mbox{wind}(c,a)}\Psi_a + (-1)^{\mbox{wind}(c,b)}\Psi_b  \right),
\]
and, in the local coordinates induced by the orientation of ${\mathcal N}^{\prime}$, the local coordinate of the (vacuum or KP) divisor point is equal to the vacuum divisor number 
\begin{equation}\label{eq:formula_div}
\begin{array}{ll}
\gamma &\displaystyle= \frac{ (-1)^{\mbox{wind}(c,a)}\Psi_a}{(-1)^{\mbox{wind}(c,a)}\Psi_a +  (-1)^{\mbox{wind}(c,b)}\Psi_b} = w_c(-1)^{\mbox{int}(c)+\mbox{wind}(c,a)}\frac{\Psi_a}{\Psi_c}\\
&\displaystyle =\left( 1+ (-1)^{\mbox{wind}(c,b)-\mbox{wind}(c,a)}\frac{\Psi_b}{\Psi_a}\right)^{-1}= 1-w_c(-1)^{\mbox{int}(c)+\mbox{wind}(c,b)} \frac{\Psi_b}{\Psi_c}.
\end{array}
\end{equation}

\begin{definition}\textbf{The set of indices at each pair of edges at a given vertex}\label{def:index_pair}
To each pair of edges $f,g$ at a white or black vertex $V_l$ we assign the following indices: 
\begin{enumerate}
\item \textbf{Index of change of direction} $\epsilon_{cd}(f,g)$: it marks the fact that both edges at the vertex $V_l$ are pointing outwards or inwards: 
\[
\epsilon_{cd}(f,g) = \left\{ \begin{array}{ll}  1 & \mbox{ if both } f, g \mbox{ point outwards at the white vertex } V_l,\\
1 & \mbox{ if both } f, g \mbox{ point inwards at the black vertex } V_l,\\
0 & \mbox{ otherwise}.
\end{array}\right.
\]
\item  \textbf{Index of change of sign} $\epsilon_{cs}(f,g)$: it marks the sign of the product of the (vacuum or dressed) wave function $\Psi_f\Psi_g$ at the corresponding marked points $f,g\in\Gamma_l$ : 
\[
\epsilon_{cs}(f,g) = \left\{\begin{array}{ll}  1 & \mbox{ if  } \Psi_f\Psi_g<0, \quad \mbox{ or } \quad \Psi_f\Psi_g=0  \mbox{ and } \Psi_f+ \Psi_g<0,\\
0 & \mbox{ otherwise}.
\end{array}\right.
\]
\item \textbf{Index of intersection} $\epsilon_{int}(f,g)$: it marks the number of intersections with gauge rays with edges pointing inward at $V_l$ : 
\[
\epsilon_{int}(f,g) = \left\{\begin{array}{ll}  \mbox{int} (f) & \mbox{ if  } f \mbox{ points inwards at }  V_l \mbox{ and } g \mbox{ outwards},\\
\mbox{int} (g) & \mbox{ if  } g \mbox{ points inwards at } V_l \mbox{ and } f \mbox{ outwards },\\
\mbox{int} (f)+\mbox{int} (g)  & \mbox{ if  both } f, g \mbox{ point inwards at } V_l,\\
0 & \mbox{ otherwise}.
\end{array}\right.
\]
\item \textbf{Index of winding} $\epsilon_{wind}(f,g)$: it coincides with the winding of the pair $(f,g)$ if one of the edges points inwards and the other outwards at $V_l$, 
otherwise it is the difference of the winding indices at $V_l$: 
\[
\epsilon_{wind}(f,g) = \left\{\begin{array}{ll}  \mbox{wind} (f,g) & \mbox{if $f$ points inwards and $g$ outwards at } V_l,\\
-\mbox{wind} (g,f) & \mbox{if $g$ points inwards and $f$ outwards at } V_l,\\
\mbox{wind} (h,g) -\mbox{wind} (h,f)& \mbox{if } \epsilon_{cd}(f,g)=1 \mbox{ and }  h \mbox{ is the third edge at } V_l.\\
\end{array}\right.
\]
\item \textbf{Total index} $\epsilon_{tot} (f,g)$: it is the sum of the indices of change of direction, change of sign, of intersection and of winding at the pair $(e,f)$:
\begin{equation}\label{eq:tot}
\epsilon_{tot} (f,g) = \epsilon_{cd}(f,g)+\epsilon_{cs}(f,g)+\epsilon_{int}(f,g)+\epsilon_{wind}(f,g)    \quad
(\!\!\!\!\!\!\mod 2).
\end{equation}
\end{enumerate}
\end{definition}

The following Lemma explains the relations among such indices at each vertex.

\begin{lemma}\label{lemma:count_eps}\textbf{Relations among the indices of a pair of edges at a vertex}.
Let $(f,g)$ be a pair of edges at the vertex $V_l$ and let $\epsilon_{cd} (f,g)$, $\epsilon_{wind} (f,g)$, $\epsilon_{int} (f,g)$, $\epsilon_{cs} (f,g)$ and  $\epsilon_{tot} (f,g) $ be as above.
Then the following holds true:
\begin{enumerate}
\item If $V_l$ is a bivalent vertex then 
\begin{equation}\label{eq:tot_biv}
\epsilon_{cs} (f,g) = \epsilon_{wind} (f,g)+\epsilon_{int} (f,g) \quad
(\!\!\!\!\!\!\mod 2),\quad\quad \epsilon_{tot} (f,g) = 0 \quad
(\!\!\!\!\!\!\mod 2);
\end{equation}
\item If $V_l$ is a trivalent black vertex then 
\begin{equation}\label{eq:tot_triv_bl}
\epsilon_{cs} (f,g) = \epsilon_{wind} (f,g)+\epsilon_{int} (f,g) \quad
(\!\!\!\!\!\!\mod 2),\quad\quad \epsilon_{tot} (f,g) = \epsilon_{cd} (f,g) \quad
(\!\!\!\!\!\!\mod 2);
\end{equation}
\item If $V_l$ is a trivalent white vertex and $e_m$, $m\in [3]$, are the edges at $V_l$, then
\begin{equation}\label{eq:tot_triv_wh}
\epsilon_{tot} (e_1,e_2) +\epsilon_{tot} (e_2,e_3)+\epsilon_{tot} (e_3,e_1) = 1 \quad
(\!\!\!\!\!\!\mod 2),
\end{equation}
and, moreover, exactly one among $\epsilon_{tot}(e_i,e_j)$ is $1 \,\,
(\!\!\!\!\mod 2)$.
\end{enumerate}
\end{lemma}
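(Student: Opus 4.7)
The plan is to treat the three vertex types separately, deriving each identity by direct substitution of the linear relations from Lemma~\ref{lem:relations} into the sign of the product $\Psi_f \Psi_g$, and then matching the result against the definitions of $\epsilon_{cd}, \epsilon_{int}, \epsilon_{wind}$ in each local configuration. The bivalent and trivalent black cases reduce to routine mod-$2$ bookkeeping; the real difficulty is concentrated in the trivalent white case, and in particular in the ``exactly one'' assertion.

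For a \textbf{bivalent vertex} with $f$ incoming and $g$ outgoing, equation~(\ref{eq:lineq_biv}) gives $\Psi_f = (-1)^{\mbox{int}(f)+\mbox{wind}(f,g)} w_f \Psi_g$ with $w_f>0$, so the sign of $\Psi_f \Psi_g$ is $(-1)^{\mbox{int}(f)+\mbox{wind}(f,g)}$. This yields $\epsilon_{cs}(f,g) \equiv \mbox{int}(f)+\mbox{wind}(f,g) \pmod 2$, which in this configuration matches $\epsilon_{int}(f,g)+\epsilon_{wind}(f,g)$. Combined with $\epsilon_{cd}=0$, this gives $\epsilon_{tot}(f,g)\equiv 0 \pmod 2$. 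For a \textbf{trivalent black vertex} with $e_1$ outgoing and $e_2,e_3$ incoming, the pairs $(e_m,e_1)$ with $m=2,3$ are handled exactly as in the bivalent case using~(\ref{eq:lineq_black}); for the pair $(e_2,e_3)$ (both inward, $\epsilon_{cd}=1$), multiplying the two relations in~(\ref{eq:lineq_black}) extracts the sign of $\Psi_{e_2}\Psi_{e_3}$ as $(-1)^{\mbox{int}(e_2)+\mbox{int}(e_3)+\mbox{wind}(e_2,e_1)+\mbox{wind}(e_3,e_1)}$, which I match against $\epsilon_{int}(e_2,e_3)+\epsilon_{wind}(e_2,e_3) \pmod 2$ using $h=e_1$ as the third edge in the definition of $\epsilon_{wind}$. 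Summing the three pair contributions gives~(\ref{eq:tot_triv_bl}).

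For a \textbf{trivalent white vertex} with $e_3$ incoming and $e_1,e_2$ outgoing, the sum of the three $\epsilon_{cs}(e_i,e_j)$ is always $0\pmod 2$, because the product of signs $(s_1 s_2)(s_1 s_3)(s_2 s_3)=(s_1 s_2 s_3)^2\ge 0$, so an even number of the pairwise products are negative. A short local computation shows that $\epsilon_{cd}$ contributes $1$, while $\epsilon_{int}$ contributes $2\,\mbox{int}(e_3)\equiv 0$ and the three $\epsilon_{wind}$ terms telescope to $-2\,\mbox{wind}(e_3,e_1)\equiv 0 \pmod 2$. This proves the sum identity in~(\ref{eq:tot_triv_wh}), which on its own only asserts that an odd number of the three $\epsilon_{tot}$ equal $1\pmod 2$, i.e.\ either exactly one or all three.

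The main obstacle is ruling out the possibility that all three $\epsilon_{tot}(e_i,e_j)\equiv 1\pmod 2$. Here I will use the geometric picture from Construction~\ref{con:dress_gen}: in the local coordinate on $\Gamma_l\cong\mathbb{CP}^1$ placing $e_1,e_2,e_3$ at $0,1,\infty$, the divisor number $\gdr$ is a real point lying in exactly one of the three open arcs $(0,1),\,(1,\infty),\,(-\infty,0)$, each of which corresponds to a pair of edges at $V_l$ and to one of the three ovals incident to the vertex. I will enumerate the four sign patterns $(s_1,s_2,s_3)$ up to global sign, determine the arc containing $\gdr$ in each case using the explicit formula~(\ref{eq:formula_div}), and verify case by case that precisely the pair $(e_i,e_j)$ bounding that arc satisfies $\epsilon_{tot}(e_i,e_j)\equiv 1\pmod 2$. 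This geometric matching between sign-combinatorics at the vertex and arc-position of the divisor point is exactly the combinatorial content of the lemma and supplies the input for the oval--counting arguments carried out in Section~\ref{sec:position}.
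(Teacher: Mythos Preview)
Your approach is correct and matches the paper's: its entire proof reads ``follows directly from the definitions of the indices and the linear relations at the vertices,'' with the case analysis relegated to Figures~\ref{fig:table_divisor1} and~\ref{fig:table_divisor2}. You have supplied the details the paper leaves implicit.

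One minor streamlining for the white-vertex ``exactly one'' claim: rather than invoking the arc on $\Gamma_l$ containing $\gdr$ (which anticipates Theorem~\ref{theo:pos_div}), you can stay purely combinatorial. Set $a=(-1)^{\mathrm{wind}(e_3,e_1)}\Psi_{e_1}$ and $b=(-1)^{\mathrm{wind}(e_3,e_2)}\Psi_{e_2}$, so that $(-1)^{\mathrm{int}(e_3)}\Psi_{e_3}/w_{e_3}=a+b$ by~(\ref{eq:lin_Phi}). Substituting into Definition~\ref{def:index_pair} collapses all winding and intersection terms and gives
\[
\epsilon_{tot}(e_1,e_2)\equiv 1+[ab<0],\quad \epsilon_{tot}(e_3,e_1)\equiv[a(a+b)<0],\quad \epsilon_{tot}(e_3,e_2)\equiv[b(a+b)<0]\pmod 2,
\]
and for any nonzero reals $a,b$ with $a+b\neq 0$ exactly one of these three indicators equals~$1$ (this is three lines of sign-checking). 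This is precisely what the paper's tables encode. Your geometric framing is of course equivalent, since $\gdr=a/(a+b)$, and it is not circular because you re-derive the arc--sign correspondence from~(\ref{eq:formula_div}) rather than quoting Theorem~\ref{theo:pos_div}; it just duplicates work that the paper postpones to that theorem.
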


The proof of the lemma follows directly from the definitions of the indices and the linear relations at the vertices. We illustrate it in Figures \ref{fig:table_divisor1} and \ref{fig:table_divisor2}.

\subsection{The position of the divisor points in the ovals}\label{sec:position}
\begin{figure}%[H]
	 \includegraphics[width=0.46\textwidth]{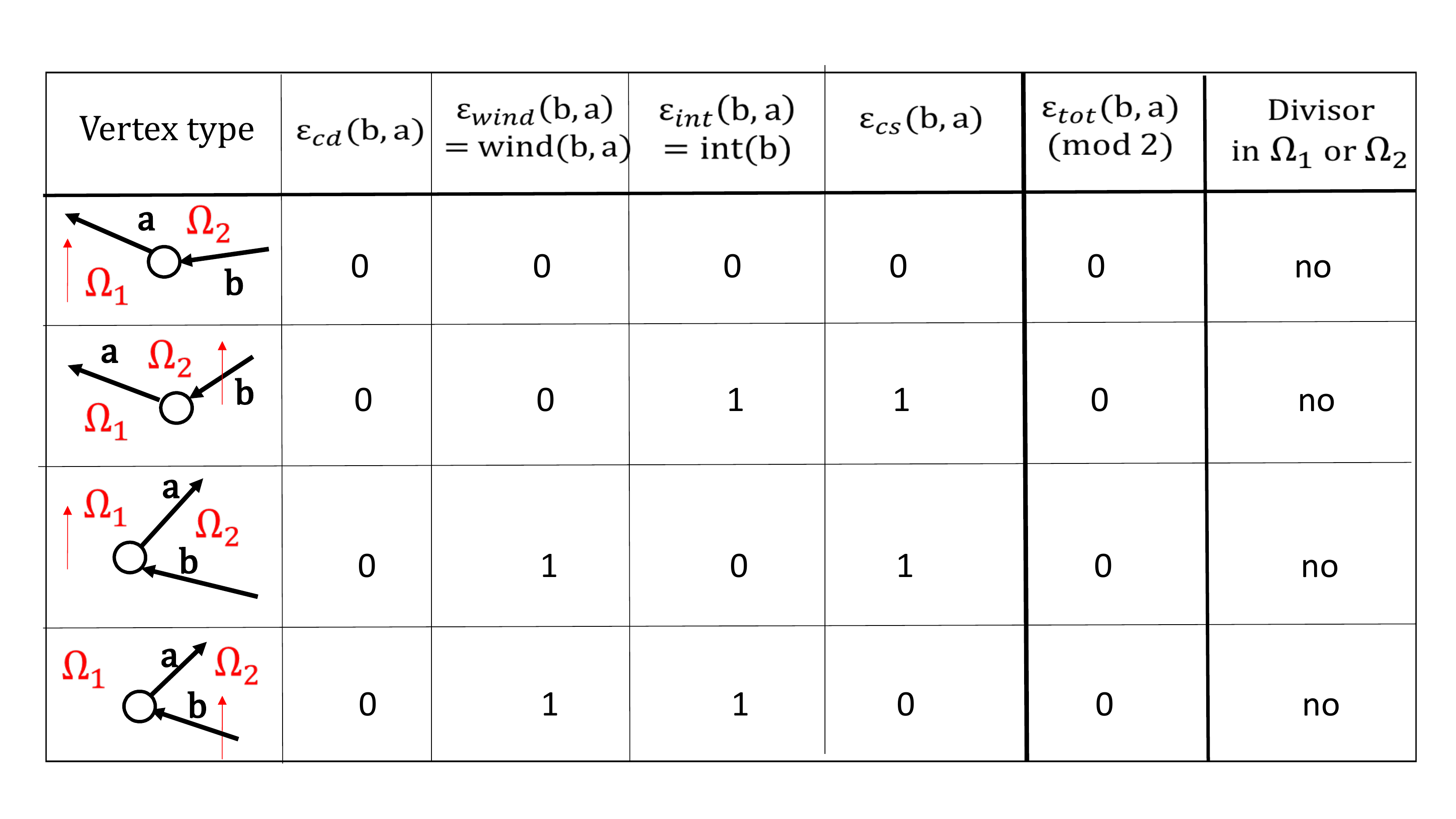}
	\hspace{.5 truecm}
	 \includegraphics[width=0.46\textwidth]{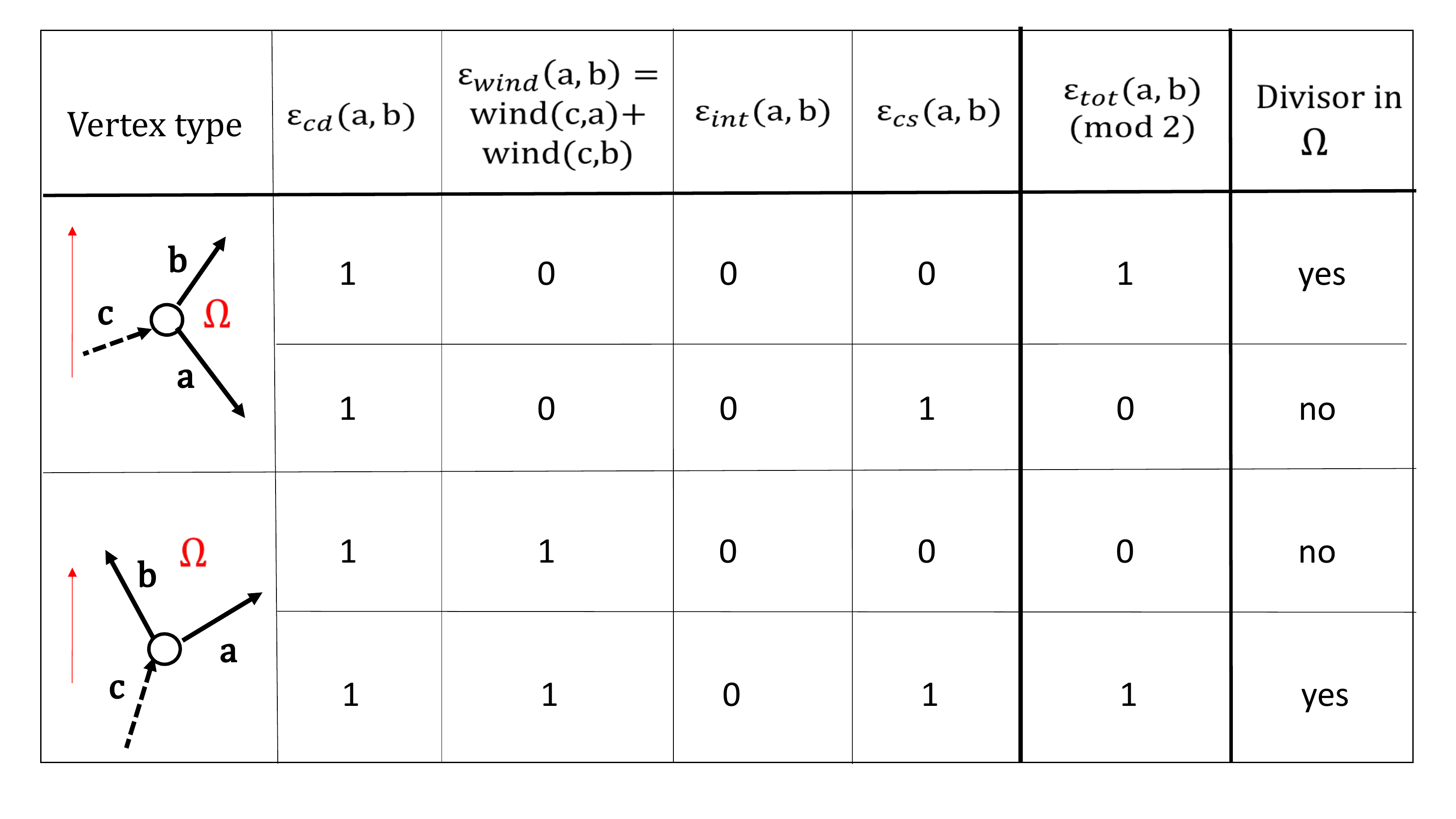}
	\includegraphics[width=0.46\textwidth]{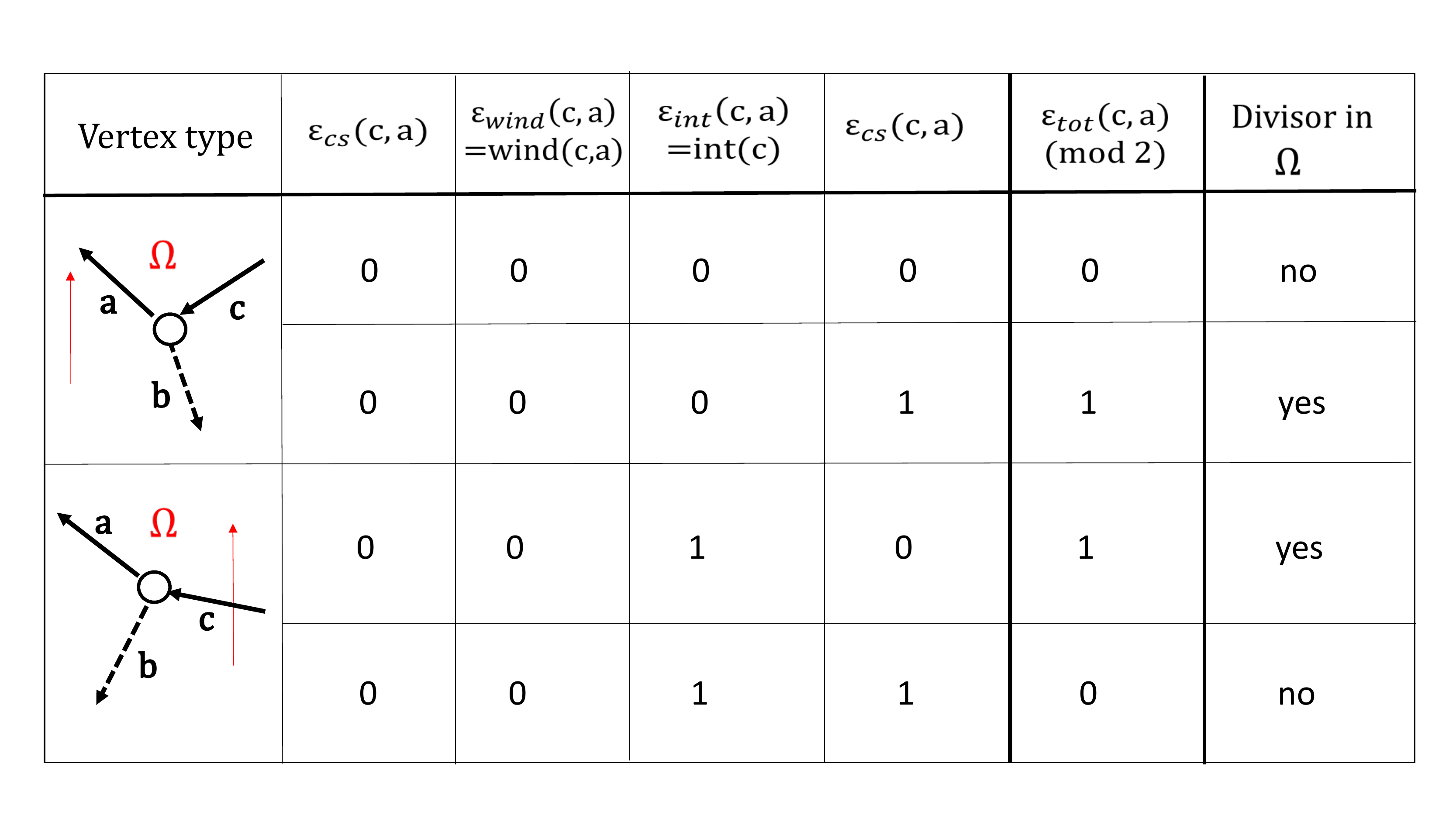}
	\hspace{.5 truecm}
	 \includegraphics[width=0.46\textwidth]{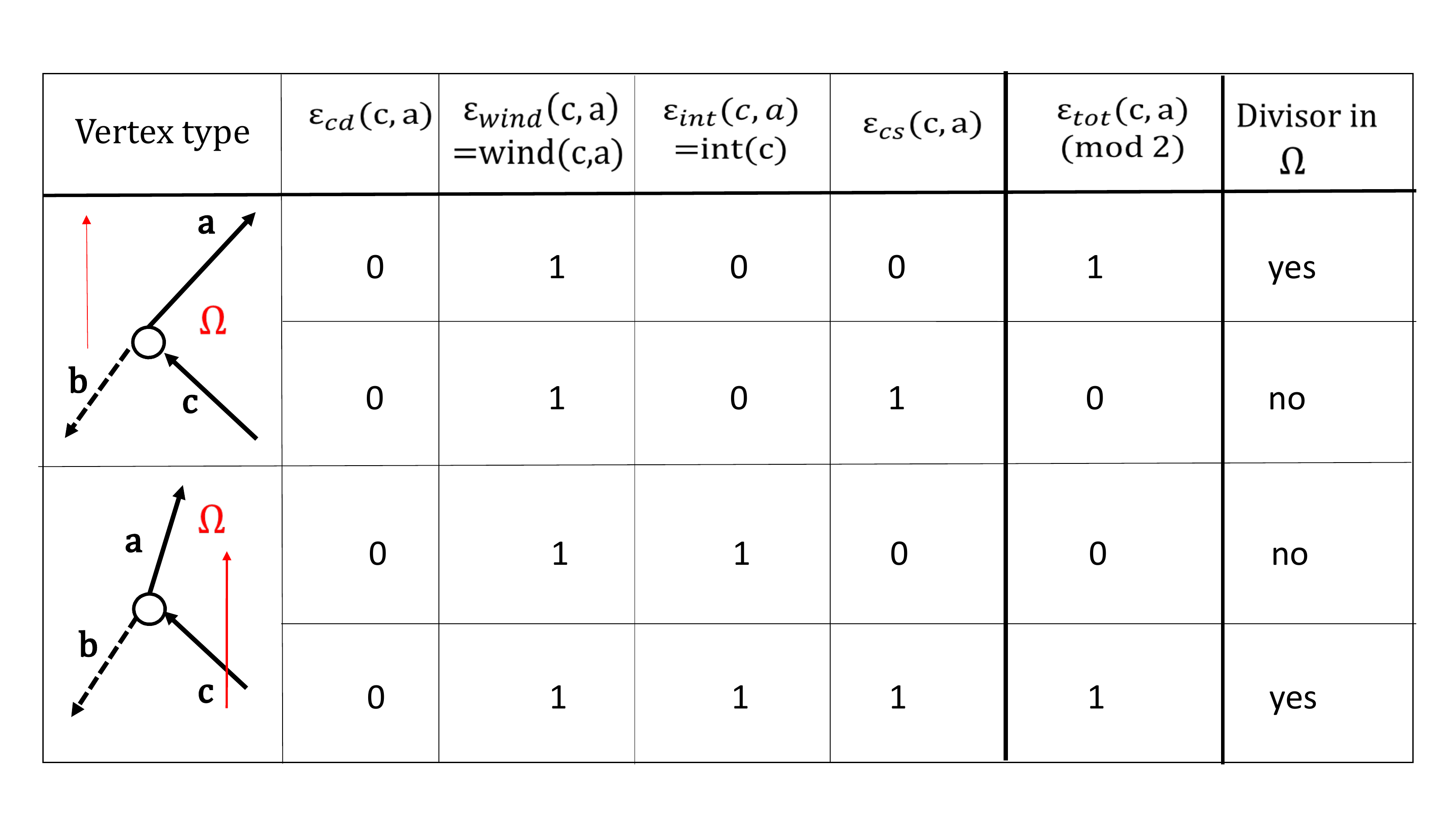}
	\includegraphics[width=0.46\textwidth]{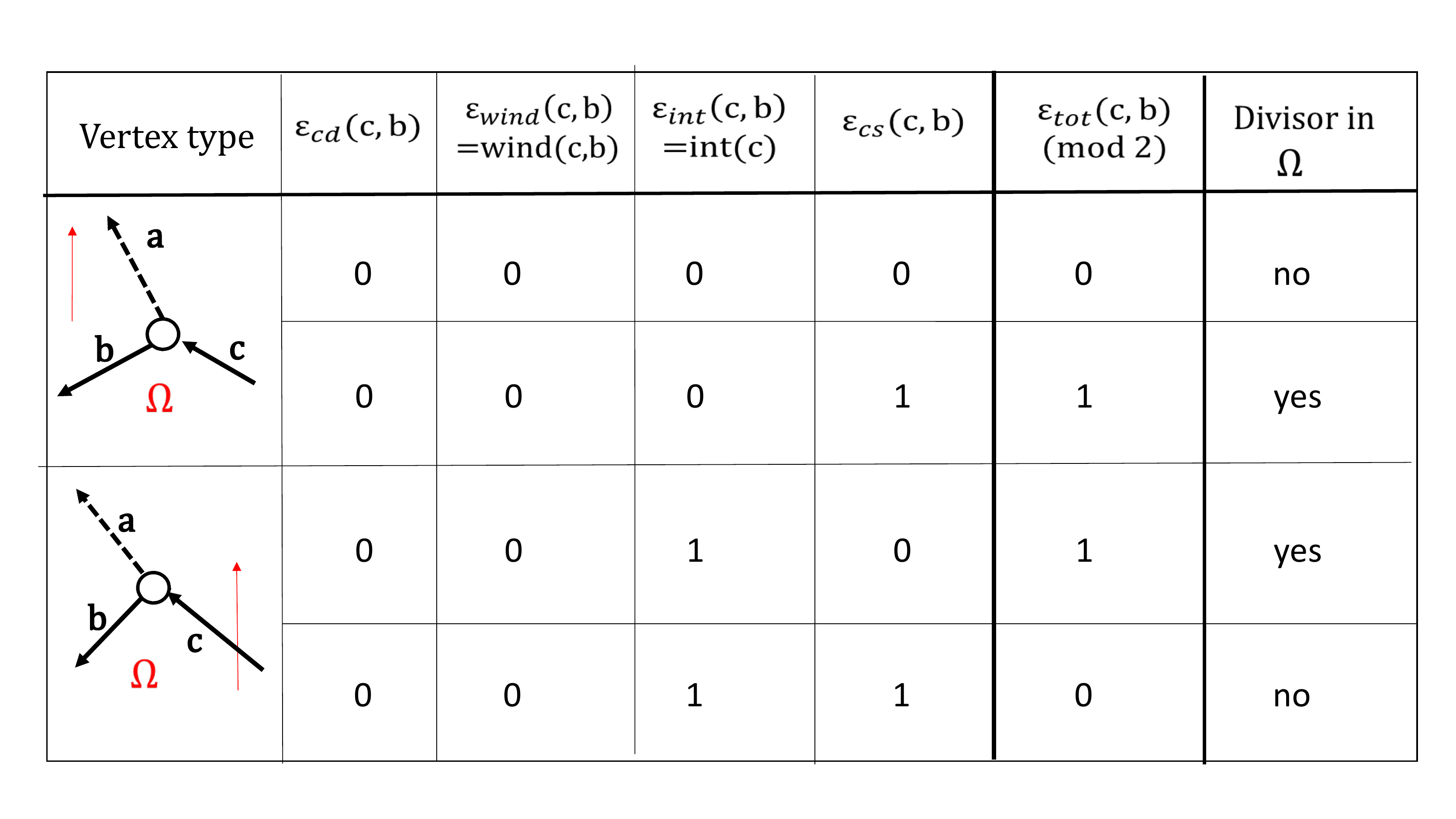}
	\hspace{.5 truecm}
	 \includegraphics[width=0.46\textwidth]{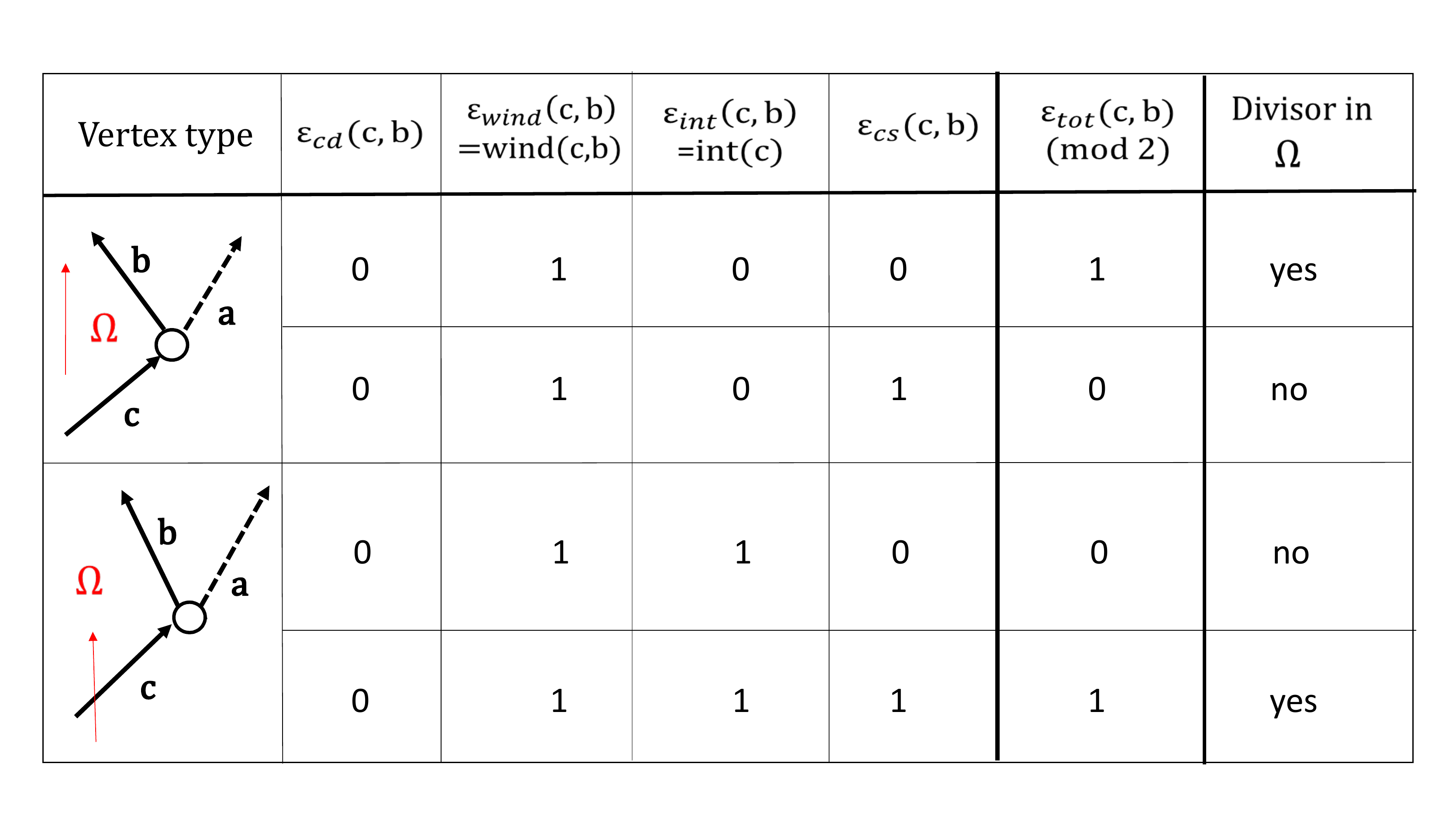}
\includegraphics[width=0.46\textwidth]{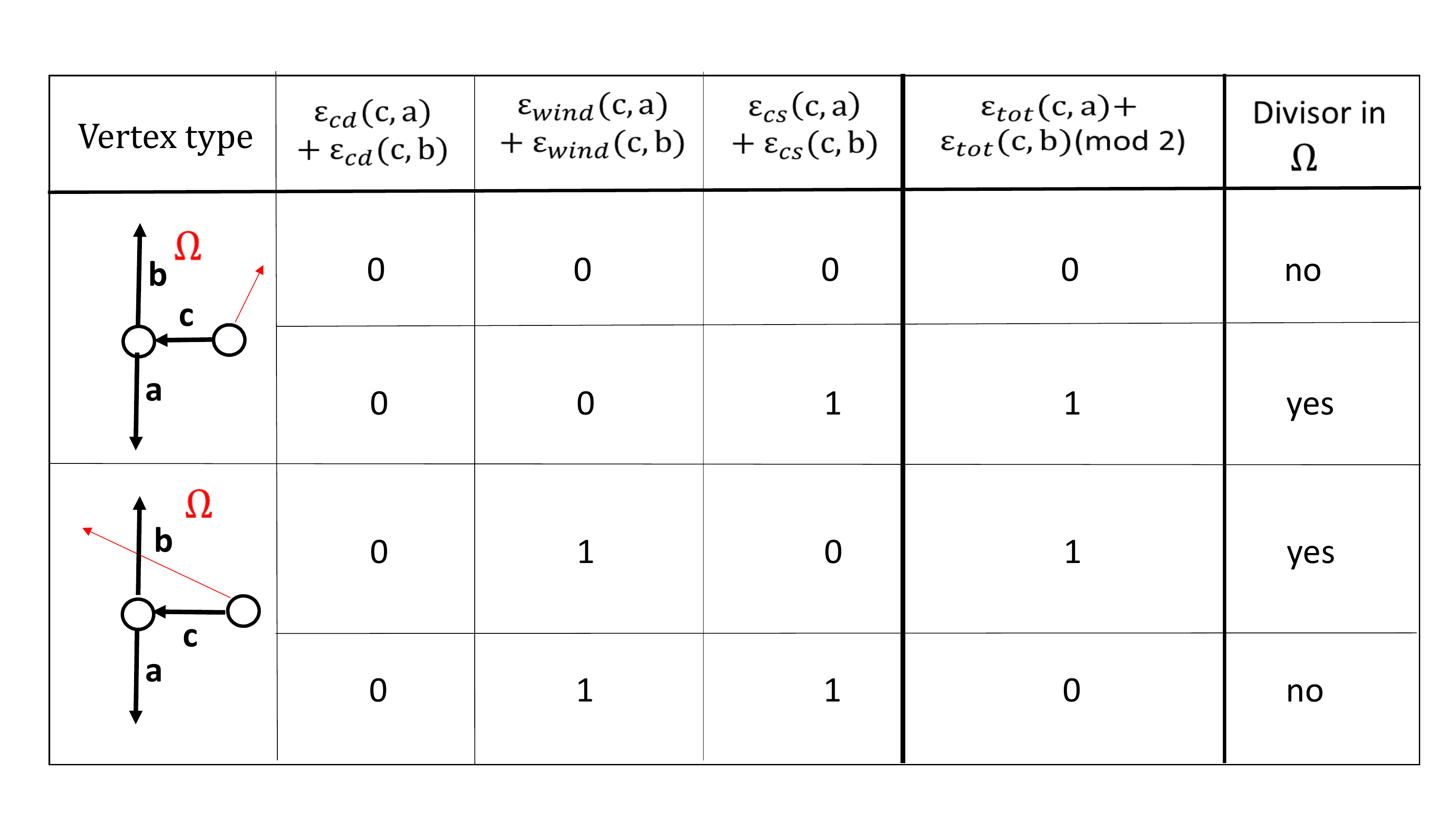}
	\hspace{.5 truecm}
	 \includegraphics[width=0.46\textwidth]{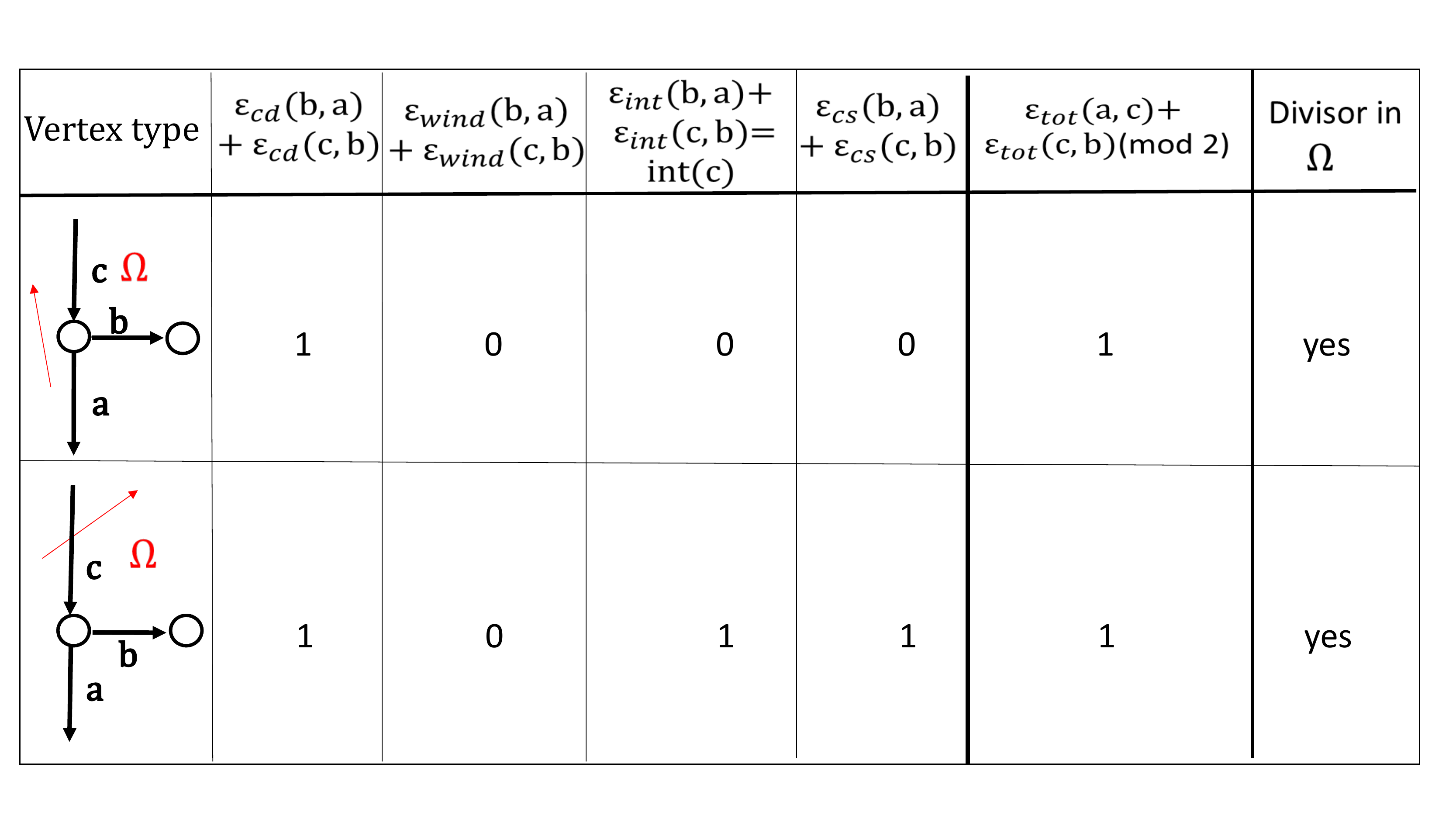}
	\includegraphics[width=0.46\textwidth]{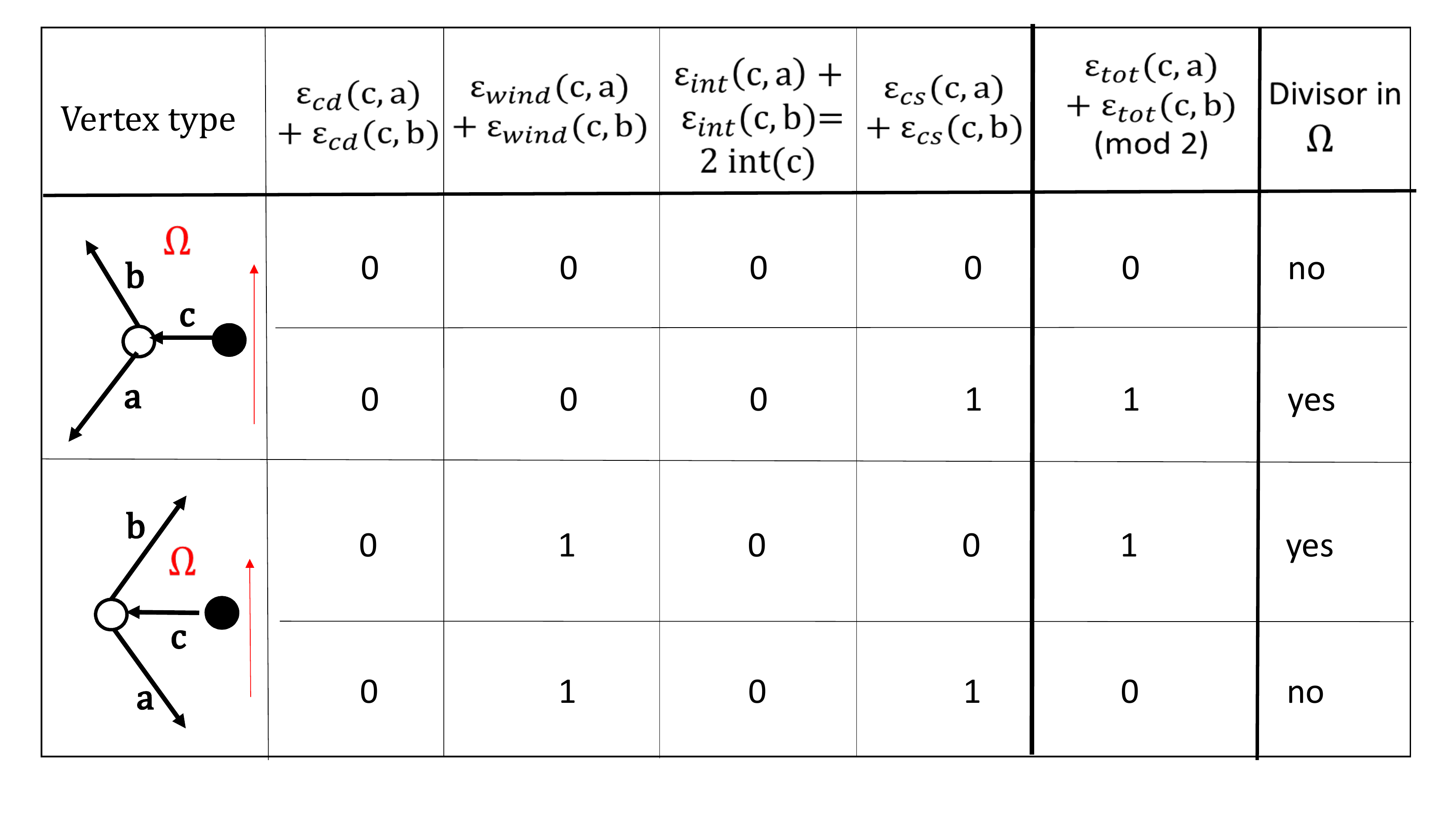}
	\hspace{.5 truecm}
	 \includegraphics[width=0.46\textwidth]{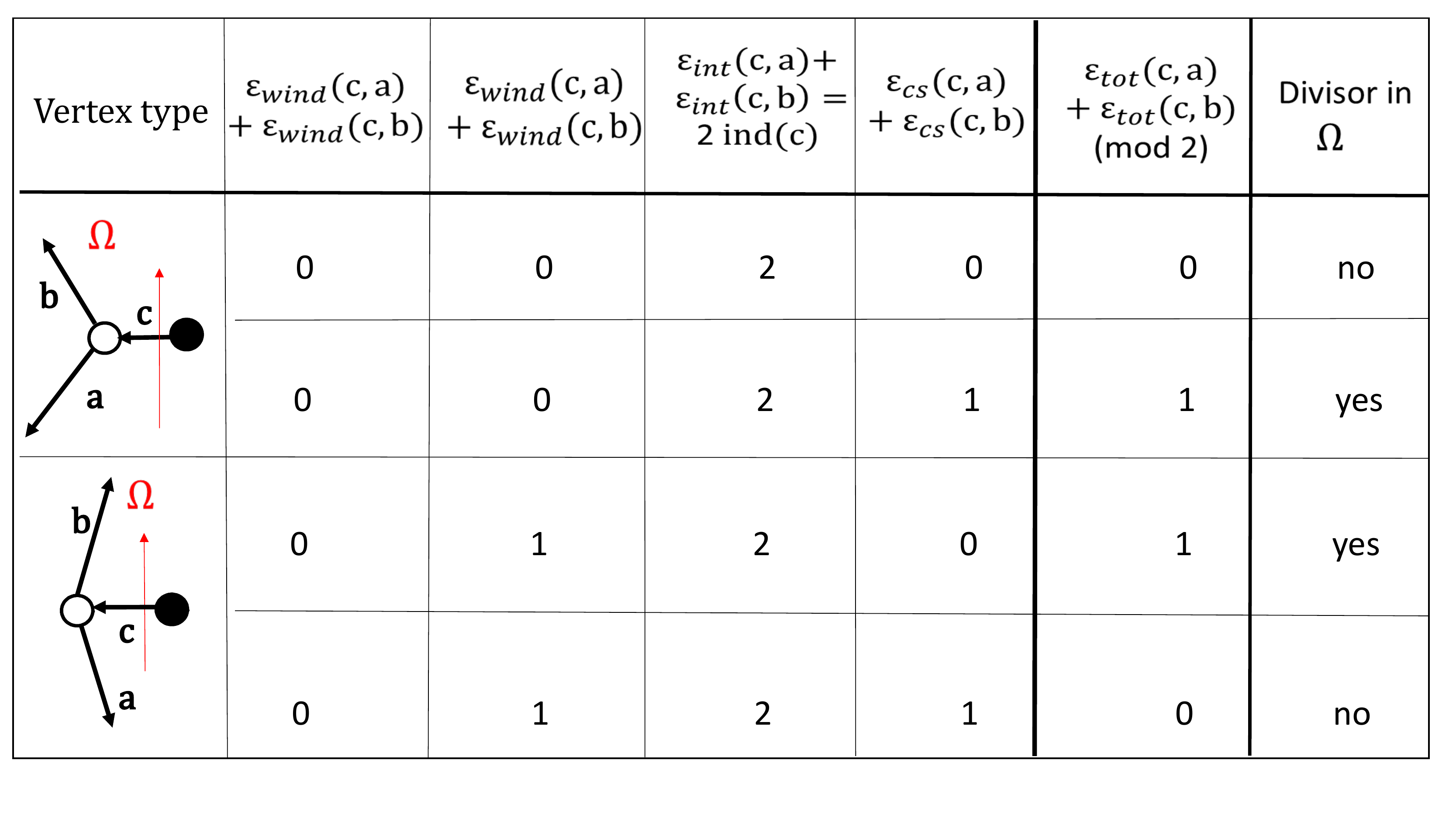}
  \caption{\small{\sl We associate an index $\epsilon_{tot}$ to each pair of edges at a white vertex. The divisor point associated to the vertex $V_l$ belongs to $\Omega\cap \Gamma_l$ if the face $\Omega$ is bounded by a pair of edges at $V_l$ with odd total index.}}
	\label{fig:table_divisor1}
\end{figure}

\begin{figure}%[H]
  \centering
  \includegraphics[width=0.46\textwidth]{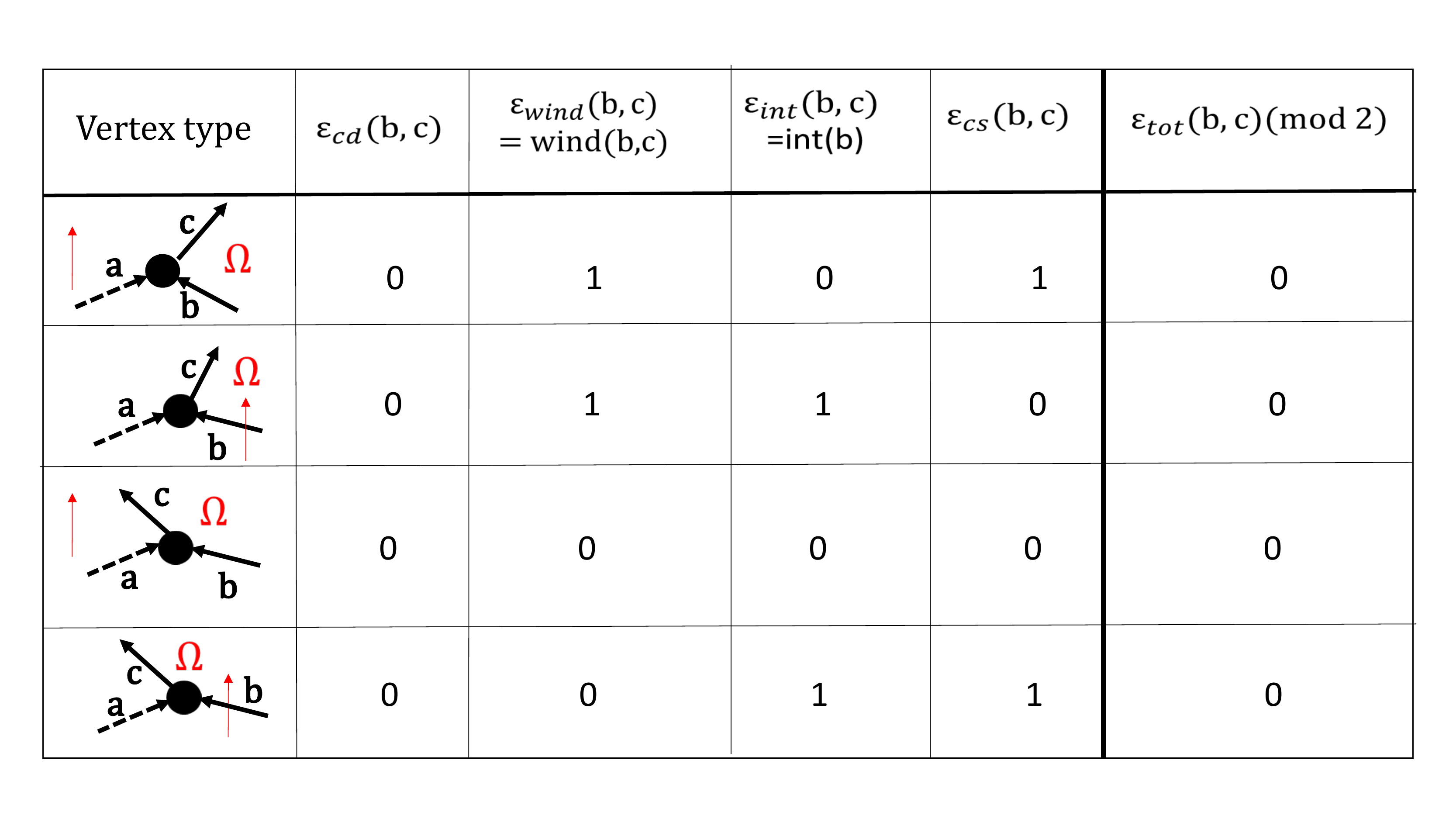}
	\hspace{.5 truecm}
	 \includegraphics[width=0.46\textwidth]{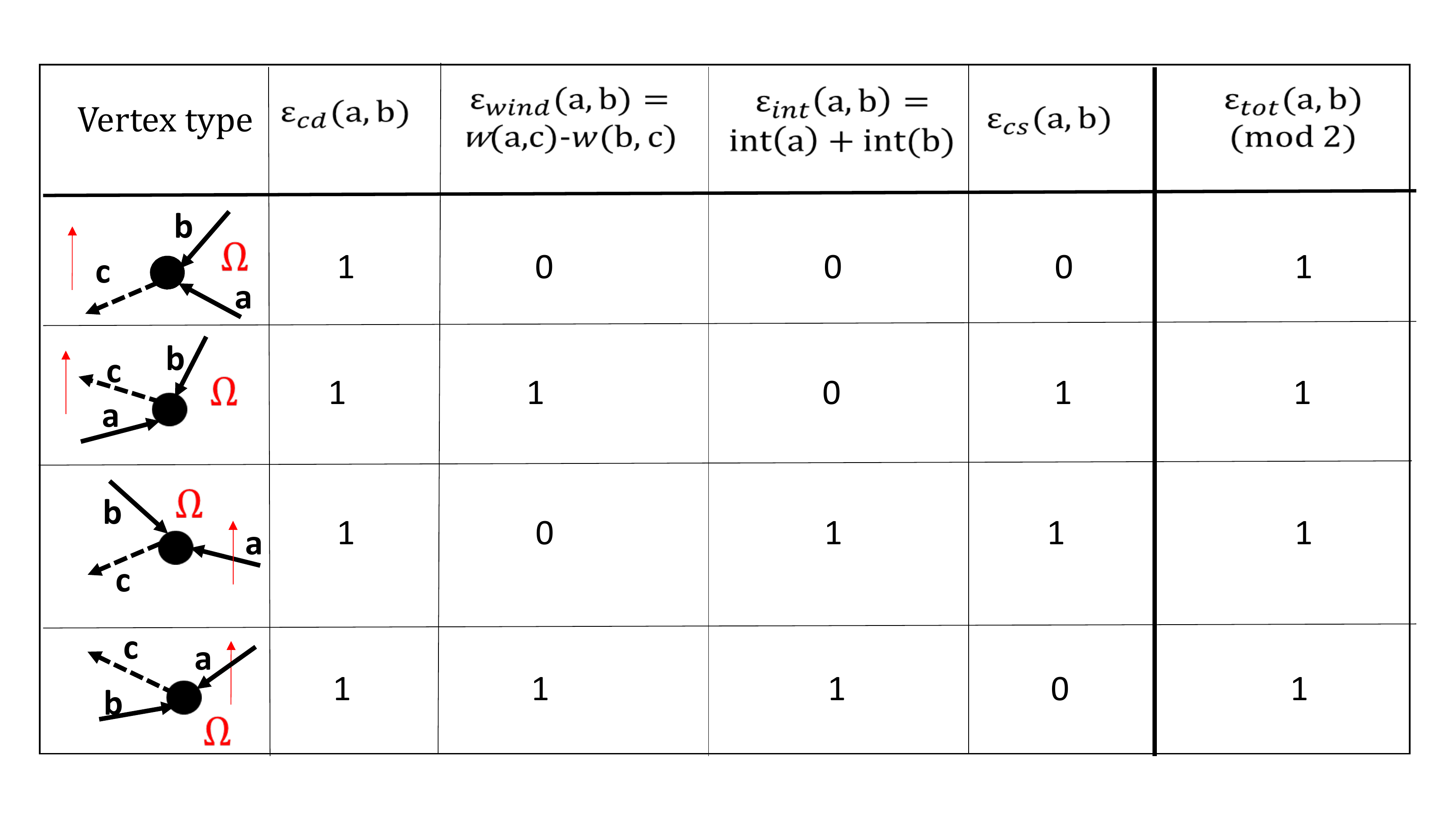}
  \vspace{-.7 truecm}
  \caption{\small{\sl We associate an index $\epsilon_{tot}$ to each pair of edges at a black vertex.}}
	\label{fig:table_divisor2}
\end{figure}
We now discuss the position of vacuum or dressed divisor points in the ovals of $\Gamma$. We recall that $\vec t_0$ has been fixed so that the vacuum (resp. dressed) e.w. is equal to zero only at the edges corresponding to Darboux sink points (resp. all Darboux points) on $\Gamma$. To simplify notations we use the symbol $\Omega_s$ to denote both the face in the network and the corresponding oval in the curve, and the symbol $\Gamma_0$ to denote both the boundary of the disk in the network and the Sato component in the curve.

Any trivalent white vertex bounds either two or three faces by construction (see Figure \ref{fig:table_divisor1}).
Using the definition of divisor point in (\ref{eq:formula_div}) and that of the indices, it is straightforward to verify that the intersection of an oval $\Omega_s$ with a component $\Gamma_l$ contains a vacuum or dressed divisor point if and only if the total index of a pair of edges at $V_l$ bounding the face $\Omega_s$ in ${\mathcal N}^{\prime}$ is odd (see also Figure \ref{fig:table_divisor1}). 

\begin{theorem}\label{theo:pos_div}\textbf{Characterization of the position of the vacuum or dressed divisor point in $\Gamma_l$}
Let $(e_1,e_2)$ be a pair of edges at the trivalent white vertex $V_l$ bounding the face $\Omega_s$ in ${\mathcal N}^{\prime}$ and let $P_1, P_2$ be the corresponding marked 
points in the component $\Gamma_l$ bounding the oval $\Omega_s$ in $\Gamma$. Let $\zeta$ be the local coordinate on $\Gamma_l$ induced by the 
orientation of ${\mathcal N}^{\prime}$ and, to fix notations, suppose that $\zeta(P_1)<\zeta(P_2)$. Let $\gamma_l$ be the (vacuum or dressed) network divisor number for $V_l$ and let $\bar P_l$ be the corresponding (vacuum or dressed) divisor point in $\Gamma_l$. Moreover, if $l\in [n]$, let $P^{(D)}_l$ be the Darboux point corresponding to the Darboux edge $e^{(D)}_l$ at $V_l$.
Then 
\begin{enumerate}
\item If $l\in \bar I$, that is $V_l$ is a boundary vertex connected to a Darboux sink vertex by the edge $e^{(D)}_l$, then the vacuum or dressed divisor number satisfies $\gamma_l =\zeta (P^{(D)}_l )$;
\item If $l\in I$, that is $V_l$ is a boundary vertex connected to a Darboux source vertex by the edge $e^{(D)}_l$ and $\gamma_l$ is the dressed divisor number, then $\gamma_l =\zeta (P^{(D)}_l)$; 
\item If $l\in I$, that is $V_l$ is a boundary vertex connected to a Darboux source vertex by the edge $e^{(D)}_l$ and $\gamma_l$ is the vacuum divisor number, then the corresponding vacuum divisor point $\bar P^{(l)}$ belongs to $\Omega_s\cap \Gamma_l$, that is $\zeta(\bar P^{(l)}) =\gamma_l \in ]\zeta(P_1), \zeta (P_2)[$, if and only 
if $\epsilon_{tot} (e_1,e_2)$ is odd:
\[
P^{(l)} \in \Omega_s	\cap \Gamma_l\quad \iff \quad \epsilon_{tot} (e_1,e_2) = 1 \quad
(\!\!\!\!\!\!\mod 2) .
\]
\item If $l\in [n+1, g+n-k]$, that is $V_l$ is not a vertex connected by an edge to a Darboux vertex, then
the corresponding divisor point $\bar P^{(l)}$ belongs to $\Omega_s\cap \Gamma_l$, that is $\zeta(\bar P^{(l)}) =\gamma_l \in ]\zeta(P_1), \zeta (P_2)[$, if and only 
if $\epsilon_{tot} (e_1,e_2)$ is odd:
\[
P^{(l)} \in \Omega_s	\cap \Gamma_l \quad \iff \quad \epsilon_{tot} (e_1,e_2) = 1 \quad
(\!\!\!\!\!\!\mod 2).
\]
\end{enumerate}
\end{theorem}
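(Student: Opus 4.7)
The plan is to split the proof into two groups according to whether the divisor point is forced onto a Darboux point (items (1)--(2)) or must be located via a genuine sign computation (items (3)--(4)). Throughout I would identify the local coordinate $\zeta$ on $\Gamma_l$ with the one induced by the chosen orientation as in Definition \ref{def:loccoor}, so that the three marked points corresponding to the edges $e_1, e_2, e_3$ at $V_l$ have coordinates $0,1,\infty$ respectively, and use the mirror correspondence of Figure \ref{fig:corr_V_G} to pair each of the three arcs of $\Gamma_l(\mathbb R)\simeq \mathbb{RP}^1$ with the face of $\mathcal N^{\prime}$ bounded by the same two edges at $V_l$.

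For items (1) and (2), I would invoke (\ref{eq:darboux_edge}) to conclude that $\Phi_{e^{(D)}_j,\mathcal O,\mathfrak l}(\vec t)\equiv 0$ whenever $j\in\bar I$, and (\ref{eq:KP_wf_source}) to conclude that $\Psi_{e^{(D)}_j,\mathcal O,\mathfrak l}(\vec t)\equiv 0$ for every $j\in[n]$. Substituting the vanishing identity into (\ref{eq:formula_div}) collapses the rational expression for $\gamma_l$ to the local coordinate of the marked point corresponding to the vanishing edge, which by Definition \ref{def:loccoor} is $\zeta(P^{(D)}_l)$; this yields the two claimed identities directly.

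For items (3) and (4), I would first observe that by Lemma \ref{lemma:count_eps}(3) exactly one of the three total indices $\epsilon_{tot}(e_i,e_j)$ at $V_l$ is odd, while $\gamma_l\in\mathbb{RP}^1$ lies in exactly one of the three arcs; so the assertion amounts to showing that the bijection ``pair of edges $\leftrightarrow$ arc'' of the previous paragraph matches the bijection ``unique odd total index $\leftrightarrow$ arc containing $\gamma_l$'', and it is enough to verify the ``only if'' direction for each pair. For the symmetric pair $(e_1,e_2)$ I would use the middle expression in (\ref{eq:formula_div}) to rewrite $\gamma_l\in]0,1[$ as the positivity of $(-1)^{\mathrm{wind}(e_3,e_2)-\mathrm{wind}(e_3,e_1)}\Psi_{e_2}/\Psi_{e_1}$, and then expand the resulting parity using the definition of $\epsilon_{cs}$, $\epsilon_{wind}$ and the fact that $\epsilon_{cd}(e_1,e_2)=1$, $\epsilon_{int}(e_1,e_2)=0$. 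For the pairs $(e_1,e_3)$ and $(e_2,e_3)$ I would use the second and fourth equivalent expressions in (\ref{eq:formula_div}) to characterize membership in the arcs $]-\infty,0[$ and $]1,\infty[$; these yield the extra factors of $(-1)^{\mathrm{int}(e_3)}$ and $(-1)^{\mathrm{wind}(e_3,e_i)}$, which are absorbed precisely into $\epsilon_{int}(e_i,e_3)$ and $\epsilon_{wind}(e_i,e_3)$ according to Definition \ref{def:index_pair}. The two diagrams in Figures \ref{fig:table_divisor1} and \ref{fig:table_divisor2} serve as a check list for these sign expansions.

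The hard part is the bookkeeping, not the mathematics. The signs in (\ref{eq:formula_div}) are organized around the distinguished incoming edge $e_3$, whereas $\epsilon_{tot}(e_i,e_j)$ is symmetric in its two arguments; making the two pictures compatible for the pairs involving $e_3$ requires rewriting the formula for $\gamma_l$ in a non-symmetric form and carefully tracking the contribution $\epsilon_{cd}(e_i,e_j)$ which accounts for the fact that both edges of such a pair point outwards at the white vertex $V_l$. Item (3) has a further technical wrinkle because at a Darboux source the vacuum wave function on the edges $e_{i_r}$ and $f_{i_r}$ carries the sign $(-1)^{\mathrm{int}(V^{(D)}_{i_r},e_{i_r})}$ coming from (\ref{eq:vac_wf_source}) and Proposition \ref{prop:ext_syst}; this sign has to be reconciled with the intersection index $\epsilon_{int}$ computed at $V_l$, which is done by exploiting the assumption in Construction \ref{con:Nprime} that $V^{(D)}_{i_r}$ sits close to the boundary and that the ray $\mathfrak l_{i_r}$ meets at most one of $e_{i_r}$ and $f_{i_r}$.
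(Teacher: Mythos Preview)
Your proposal is correct and follows the same approach as the paper. The paper's own justification is simply that ``using the definition of divisor point in (\ref{eq:formula_div}) and that of the indices, it is straightforward to verify'' the claim, together with the case tables in Figures~\ref{fig:table_divisor1}--\ref{fig:table_divisor2}; your write-up supplies exactly this verification, splitting off the Darboux cases (1)--(2) via the vanishing identities in (\ref{eq:darboux_edge}) and (\ref{eq:KP_wf_source}) and then doing the sign bookkeeping for the three arcs in cases (3)--(4) using the alternative forms of $\gamma$ in (\ref{eq:formula_div}) against Definition~\ref{def:index_pair}.
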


\begin{remark}
In the theorem above, we use the cyclic order on real part of $\mathbb{CP}^1$, therefore $\zeta(P_1)=\infty$, and  $\zeta(P_2)=0$, then $\zeta(P_1)<\zeta(P_2)$ in our notations.
\end{remark}

\begin{figure}%[H]
  \centering
  \includegraphics[width=0.46\textwidth]{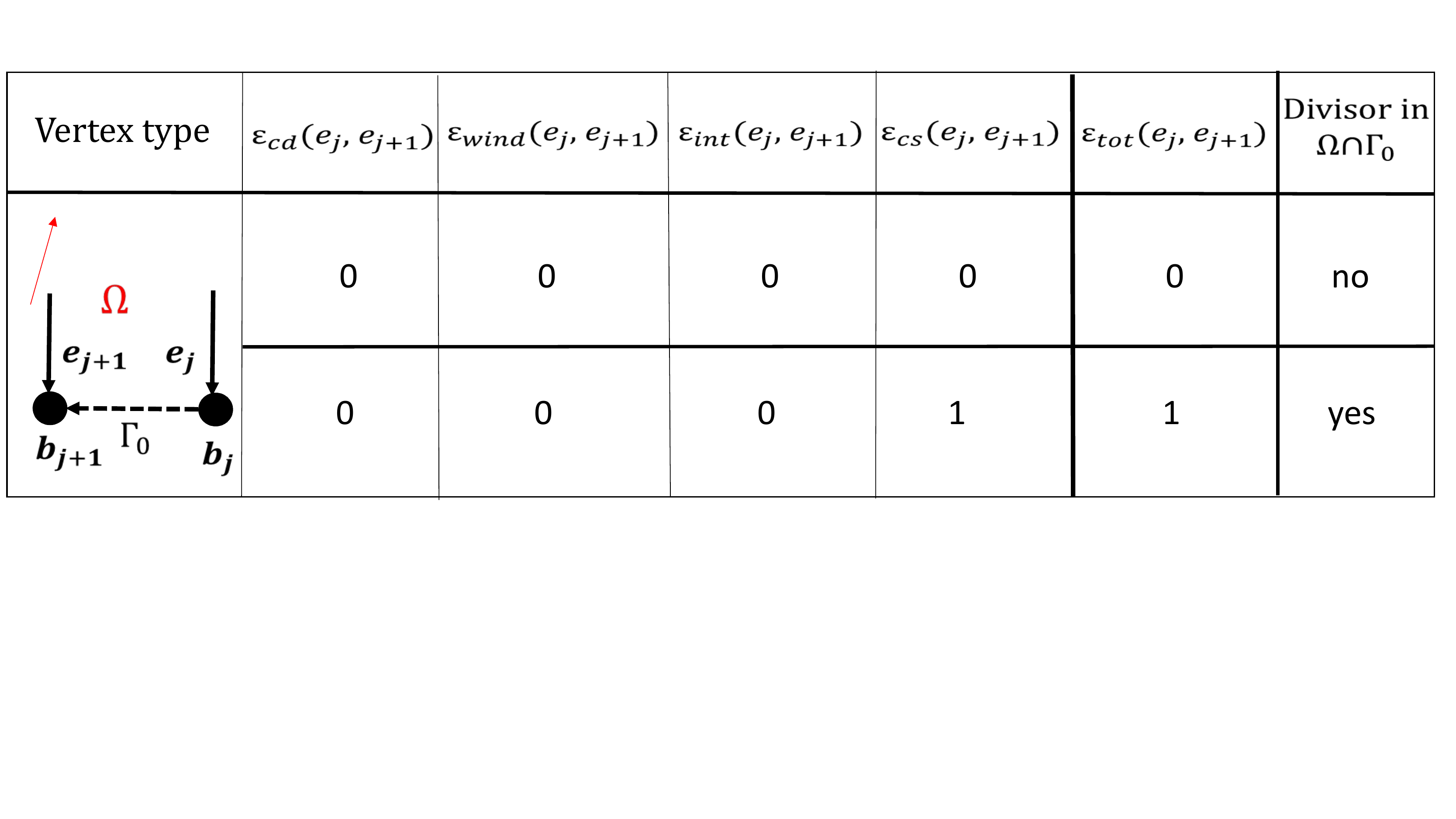}
  \vspace{-1.9 truecm}
  \caption{\small{\sl We associate an index $\epsilon_{tot}$ to each pair of consecutive edges at the boundary of the disk.}}
	\label{fig:table_divisor3}
\end{figure}

\begin{corollary}
\label{cor:eps_tot}
\textbf{Invariance of $\epsilon_{tot}(e_i,e_j)$ at trivalent white vertices.} Under the hypotheses of Theorem~\ref{theo:pos_div} for any choice of orientation, gauge ray direction, weight gauge and vertex gauge the value  $\epsilon_{tot}(e_i,e_j)$ is the same for any give pair of edges at the same trivalent white vertex. 
\end{corollary}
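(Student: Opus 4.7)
The plan is to deduce the invariance of $\epsilon_{tot}(e_i,e_j)$ at a trivalent white vertex $V_l$ from the already established invariance of the (vacuum or dressed/KP) divisor point $\bar P^{(l)} \in \Gamma_l$, combined with the just-stated Theorem \ref{theo:pos_div}. The idea is that although the individual contributions $\epsilon_{cd}, \epsilon_{cs}, \epsilon_{int}, \epsilon_{wind}$ depend on orientation, gauge ray, weight and vertex gauges, their sum modulo $2$ encodes a purely geometric datum, namely which of the three real arcs of $\Gamma_l \setminus \{P^{(1)}_l, P^{(2)}_l, P^{(3)}_l\}$ contains $\bar P^{(l)}$, and this datum is manifestly coordinate-free.

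First, the divisor point $\bar P^{(l)} \in \Gamma_l$ itself is invariant under all four gauge transformations: invariance under gauge ray direction is Corollary \ref{cor:indep_gauge}; invariance under orientation is Theorem \ref{theo:inv}, where a change of orientation acts simultaneously on the local coordinate on $\Gamma_l$ and on the divisor number $\gdr$ by the same M\"obius transformation, so that $\bar P^{(l)}$ is moved to itself as an intrinsic point; invariance under weight and vertex gauges is Proposition \ref{prop:gauge}. For Darboux vertices $V_l$ with $l \in I$, for which the vacuum divisor number (rather than the dressed one) is relevant in case (3) of Theorem \ref{theo:pos_div}, the invariance of the vacuum divisor point is granted by Proposition \ref{prop:darboux} once the position of the Darboux edge is fixed. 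In all cases, the cyclic order of $\bar P^{(l)}$ relative to $P^{(1)}_l, P^{(2)}_l, P^{(3)}_l$ on the real part of $\Gamma_l$ is gauge invariant.

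Next, I combine this with Lemma \ref{lemma:count_eps}(3): exactly one of the three values $\epsilon_{tot}(e_1,e_2), \epsilon_{tot}(e_2,e_3), \epsilon_{tot}(e_3,e_1)$ equals $1 \pmod 2$. By Theorem \ref{theo:pos_div}, the distinguished pair with odd total index is precisely the pair $(e_i,e_j)$ whose marked points $P_i, P_j$ bound the arc on $\Gamma_l$ containing $\bar P^{(l)}$; equivalently, it is the pair bounding the oval $\Omega_s$ to which the divisor point belongs. Since the arc (hence the distinguished pair) does not depend on any of the gauges, the three values $\epsilon_{tot}(e_i,e_j) \pmod 2$ are separately invariant, which is the statement of the corollary.

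The argument is essentially a bookkeeping consequence of the two main inputs and carries no serious obstacle; the only mildly delicate point is the geometric situation where two of the three angular sectors at $V_l$ lie in the same face (so two pairs of edges bound the same oval), but even then Theorem \ref{theo:pos_div} selects a unique pair via the position of $\bar P^{(l)}$ among the three marked points on $\Gamma_l$, and this position is intrinsic, so no ambiguity remains.
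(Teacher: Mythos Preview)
Your proposal is correct and follows essentially the same approach as the paper's proof, which simply cites Proposition~\ref{prop:gauge}, Corollary~\ref{cor:indep_gauge}, Theorem~\ref{theo:inv} and Theorem~\ref{theo:pos_div}. You spell out the logic more explicitly---in particular invoking Lemma~\ref{lemma:count_eps}(3) to identify the unique odd pair and noting that the geometric position of $\bar P^{(l)}$ among the three marked points on $\Gamma_l$ is intrinsic---but the underlying argument is the same.
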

The proof follows from Proposition~\ref{prop:gauge}, Corollary~\ref{cor:indep_gauge}, Theorem~\ref{theo:inv} and Theorem~\ref{theo:pos_div}.

To complete the characterization of the positions of KP divisor points, we must include the Sato divisor.
If the boundary of the face $\Omega_s$ contains portions of the boundary of the disk we need to add the contribution coming from the boundary 
(see Figure \ref{fig:table_divisor3}). In our counting rule, the boundary of the disk is oriented clockwise. In our construction, 
all edges at boundary vertices are parallel and, using the properties of divisor numbers with respect to a change of gauge line direction without loss of generality we assume that the intersection index at each such edge is zero. Each portion of the boundary of the disk bounding the given face is marked by two consecutive boundary vertices $b_j,b_{j+1}$
 and we assign an index to each such pair. In the case of the vacuum divisor, 
$\epsilon_{cs} (b_j,b_{j+1})=0$ by definition and there is no divisor point on $\Gamma_0\cap \Omega_s$. In the case of the dressed e.w., 
let $e(b_j),e(b_{j+1})$ respectively be the edges at $b_j,b_{j+1}$. Then $\Psi_{e(b_j)}(\vec t_0)\Psi_{e(b_{j+1})}(\vec t_0)<0$ implies that there is an odd number of Sato 
divisor points in $\Gamma_0\cap \Omega_s$, while $\Psi_{e(b_j)}(\vec t_0)\Psi_{e(b_{j+1})}(\vec t_0)>0$ implies that there is an even number of Sato divisor points 
belonging to the interval $]\kappa_j, \kappa_{j+1}[$. We show below that the number of Sato divisor points at each finite edge $b_j,b_{j+1}$ is, indeed, zero or one.

In the following Lemma we give relations among the indices inside a given oval which will be used to complete the proof of Theorem \ref{theo:exist} in the general case.

\begin{lemma}\label{lemma:count_face}
Let $\Omega_s$, $s\in [0,g]$ denote the faces in ${\mathcal N}^{\prime}$ corresponding to ovals in $\Gamma$. Let $(V_l; f_l,g_l)$, $l\in [n_s]$, be the triples of (black or white) vertices $V_l$ and pair of edges $(f_l, g_l)$ at $V_l$ bounding $\Omega_s$, where $n_s$ is the total number of pair of edges at internal vertices bounding $\Omega_s$.
Let $\mu_{s, sou}$, $\mu_{s, si}$ respectively be the number of univalent white vertices corresponding to Darboux internal sources and sinks and belonging to $\Omega_s$, and let $\rho_{s}$ be the total number of changes of signs of the wave function occurring at the edges of $\Omega_s$ ending at the boundary of the disk.
Then
\begin{enumerate}
\item For any $s\in [g]$ (finite oval)
\begin{equation}\label{eq:eps_face_s}
\begin{array}{ll}
\mu_{s,si} + \displaystyle \sum_{l=1}^{n_s} \epsilon_{wind} (f_l,g_l) + \mathop{\sum_{l=1}^{n_s}}_{V_l \, \mbox{white}} \epsilon_{cd} (f_l,g_l) =1 &\quad
(\!\!\!\!\!\!\mod 2),\\
\displaystyle\mu_{s,sou} + \sum_{l=1}^{n_s} \epsilon_{int} (f_l,g_l) = 0 &\quad
(\!\!\!\!\!\!\mod 2),\\
\displaystyle\rho_{s}+\sum_{l=1}^{n_s} \epsilon_{cs} (f_l,g_l)  =  0  &\quad
(\!\!\!\!\!\!\mod 2).
\end{array}
\end{equation}
\item If $s=0$ (infinite oval)
\begin{equation}\label{eq:eps_face_0}
\begin{array}{ll}
\mu_{0,si} + \displaystyle \sum_{l=1}^{n_0} \epsilon_{wind} (f_l,g_l) + \mathop{\sum_{l=1}^{n_0}}_{V_l \, \mbox{white}} \epsilon_{cd} (f_l,g_l) =0 &\quad
(\!\!\!\!\!\!\mod 2),\\
\mu_{0,sou} + k + \sum_{l=1}^{n_0} \epsilon_{int} (f_l,g_l) = 0 &\quad
(\!\!\!\!\!\!\mod 2),\\
\rho_0+\sum_{l=1}^{n_0} \epsilon_{cs} (f_l,g_l)  =  0  &\quad
(\!\!\!\!\!\!\mod 2).
\end{array}
\end{equation}
\end{enumerate}
\end{lemma}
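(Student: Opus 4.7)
The plan is to prove each of the three congruences in (\ref{eq:eps_face_s}) and (\ref{eq:eps_face_0}) by traversing the boundary of the face $\Omega_s$ as a closed loop in the disk (including the relevant arcs of the boundary of the disk when $\Omega_s$ is adjacent to it), and matching global topological invariants attached to this loop with the sum of local indices $\epsilon_{wind}$, $\epsilon_{int}$, $\epsilon_{cs}$ and $\epsilon_{cd}$ at each vertex on the loop.

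For the intersection identity (second line) the key observation is that at each pair $(f_l,g_l)$ at an internal vertex $V_l$ the index $\epsilon_{int}(f_l,g_l)$ accumulates $\mbox{int}$ only on the edges inward at $V_l$; summed over all pairs on $\partial\Omega_s$ each edge of the boundary is then counted exactly once, namely at its inward endpoint, so $\sum_l\epsilon_{int}(f_l,g_l)$ equals the total number of crossings of $\partial\Omega_s$ by gauge rays. Since each gauge ray is a half-line in direction $\mathfrak l$ based at a Darboux source, the number of its crossings of $\partial\Omega_s$ has the parity of the indicator that the source lies in $\Omega_s$, whence $\sum\epsilon_{int}\equiv\mu_{s,\mathrm{sou}}\pmod 2$; for $\Omega_0$ each of the $k$ rays additionally escapes the disk through the long boundary arc bounding $\Omega_0$, producing the extra $k$. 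The sign-change identity is analogous: the edge wave function $\Psi$ must return to itself after one traversal of the closed loop $\partial\Omega_s$, so the total number of its sign changes along the loop is even, the internal ones being counted by $\sum\epsilon_{cs}$ and those across arcs of the disk boundary by $\rho_s$.

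The winding identity (first line) is genuinely topological and is the main obstacle. The tangent vector of the loop $\partial\Omega_s$, traversed counterclockwise, makes a net rotation of $+2\pi$ by the Jordan curve theorem and therefore crosses the fixed direction $\mathfrak l$ exactly once modulo $2$. I plan to decompose this total into local contributions: at an internal vertex $V_l$ whose two $\Omega_s$-bounding edges have opposite orientations the local winding equals $\mbox{wind}(f_l,g_l)=\epsilon_{wind}(f_l,g_l)$; at a trivalent white (respectively black) vertex where both bounding edges are outward (respectively inward), so that $\epsilon_{cd}=1$, the traversal must run backward along one of them and adds the extra $\pi$-turn captured precisely by the $\epsilon_{cd}$ correction on top of $\mbox{wind}(h,g_l)-\mbox{wind}(h,f_l)$; at each univalent Darboux sink $V^{(D)}_s\in\Omega_s$ the traversal performs a $\pi$-turnaround contributing $1\pmod 2$ to the winding count, which is the origin of $\mu_{s,\mathrm{si}}$; at each Darboux source the analogous $\pi$-turnaround is cancelled modulo $2$ by the crossing of the gauge ray emanating from that source, so no $\mu_{s,\mathrm{sou}}$ term appears in this identity. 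Finally, an arc of the disk boundary between two consecutive boundary vertices is too short to sweep the direction $\mathfrak l$ except for the long arc bounding $\Omega_0$, which sweeps through $\mathfrak l$ once and accounts for the shift of the right-hand side from $1$ to $0$ in the infinite case.

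The technical heart of the argument is this case-by-case parity bookkeeping at each configuration of internal vertex and edge orientations; the mod-$2$ relations of Lemma~\ref{lemma:count_eps} and the winding identities implicit in (\ref{eq:int_vertex_gauge}) will be used to consolidate the four local indices at each vertex. Extra care is needed for degenerate situations in which the same vertex or edge appears more than once on $\partial\Omega_s$ (a pinching of the face, possible when the graph is unreduced or contains bridges). Once these local contributions are correctly assembled, summation over the loop produces the identities in (\ref{eq:eps_face_s}) and (\ref{eq:eps_face_0}).
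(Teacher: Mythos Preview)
Your treatment of the second and third congruences (intersections with gauge rays, and sign changes) is correct and coincides with what the paper regards as the easy part. Your direct topological strategy for the winding identity is different from the paper's: the paper proceeds by induction on the number of direction changes, starting from the base case of a face with no $\epsilon_{cd}=1$ vertices and no Darboux edges (where $\sum\epsilon_{wind}\equiv 1$ for a finite face and $\equiv 0$ for $\Omega_0$), and then checking that inserting a change of direction along a subpath flips the parity of $\sum\epsilon_{wind}$ by exactly one. This inductive bookkeeping avoids having to analyse the turnarounds at univalent vertices directly.

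Your direct approach can be made to work, but the explanation you give for why $\mu_{s,\mathrm{sou}}$ is absent from the first line is wrong. Gauge rays play no role whatsoever in the winding identity; they enter only through $\epsilon_{int}$. The actual asymmetry between Darboux sinks and sources is encoded in $\epsilon_{cd}$ at the attaching trivalent white vertex $V_j$: for a sink the Darboux edge $e^{(D)}_j$ is outgoing at $V_j$, so one of the two pairs at $V_j$ bounding $\Omega_s$ has both edges outgoing and hence $\epsilon_{cd}=1$, whereas for a source $e^{(D)}_i$ is the unique incoming edge at $V_i$ and both relevant pairs have $\epsilon_{cd}=0$. This extra unit of $\epsilon_{cd}$ per sink is what $\mu_{s,\mathrm{si}}$ compensates. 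Relatedly, your assertion that each $\pi$-turnaround at a univalent vertex contributes $1\pmod 2$ to the winding count is not true in general: a half-turn of the tangent crosses the fixed direction $\mathfrak l$ either zero or one times depending on the direction of the Darboux edge, so this step also needs repair. Once you replace the gauge-ray argument by the $\epsilon_{cd}$ bookkeeping at $V_i$ versus $V_j$ and track the turnaround contribution together with the two $\epsilon_{wind}$ contributions at the attaching vertex (their sum is what has the correct parity, not each piece separately), the direct argument goes through.
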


\begin{proof}
The only untrivial statement is the first one in both (\ref{eq:eps_face_s}) and (\ref{eq:eps_face_0}). We prove the statement by induction in the number of trivalent white vertices changing direction. 

First of all let $\Omega_s$ be an internal face of the network ${\mathcal N}^{\prime}$, {\sl i.e.} a face whose boundary does not contain portions of the boundary of the disk. If there are no internal sources or sinks in $\Omega_s$ and no changes of directions at the vertices bounding $\Omega_s$ then the statement holds true since the total winding number of the face is an odd number
\[
\sum_{l=1}^{n_s} \epsilon_{wind} (f_l,g_l) = 1 \quad
(\!\!\!\!\!\!\mod 2).
\]
Now suppose to insert a change of direction in $\Omega_s$ along the path $\pi= (e_{i},e_{i+1},\dots,e_{j-1},e_j)$ changing all directions of the edges $e_r$, $r\in [i+1,j-1]$ (see Figure \ref{fig:wind_change}[left]). Then the winding of $\Omega_s$ will change by one:
\[
\sum_{r=i}^{j-1} \left(\epsilon_{wind} (e_r, e_{r+1}) -\epsilon_{wind} (e_r^{\prime}, e_{r+1}^{\prime})\right) =1 \quad
(\!\!\!\!\!\!\mod 2).
\]
If the face $\Omega_s$ intersects the boundary of the disk, then each boundary sink edge in $\Omega_s$ increases by one the counter of the trivalent white vertices changing of direction and keeps the winding of $\Omega_s$ invariant. If we move a boundary source edge from $\Omega_s$ to
the other face, then the overall contribution  from the edges at such vertex due to winding and changes of direction is the same at each face
before and after such move (see Proposition \ref{prop:ext_syst}, Definition \ref{def:vvw_gen} and Figure \ref{fig:wind_change}[middle]). Finally it is straightforward to check that internal black sources behave 
like boundary Darboux sources (see Figure \ref{fig:wind_change}[right]).

The proof in the case $s=0$ (infinite oval) goes along the same lines with the only difference that in the simplest case where there are neither internal sources nor sinks in $\Omega_0$ nor changes of directions at the internal vertices bounding $\Omega_0$ then 
\[
\sum_{l=1}^{n_0} \epsilon_{wind} (f_l,g_l) = 0\quad
(\!\!\!\!\!\!\mod 2).
\]
\end{proof}

\begin{figure}%[H]
  \centering
  \includegraphics[width=0.3\textwidth]{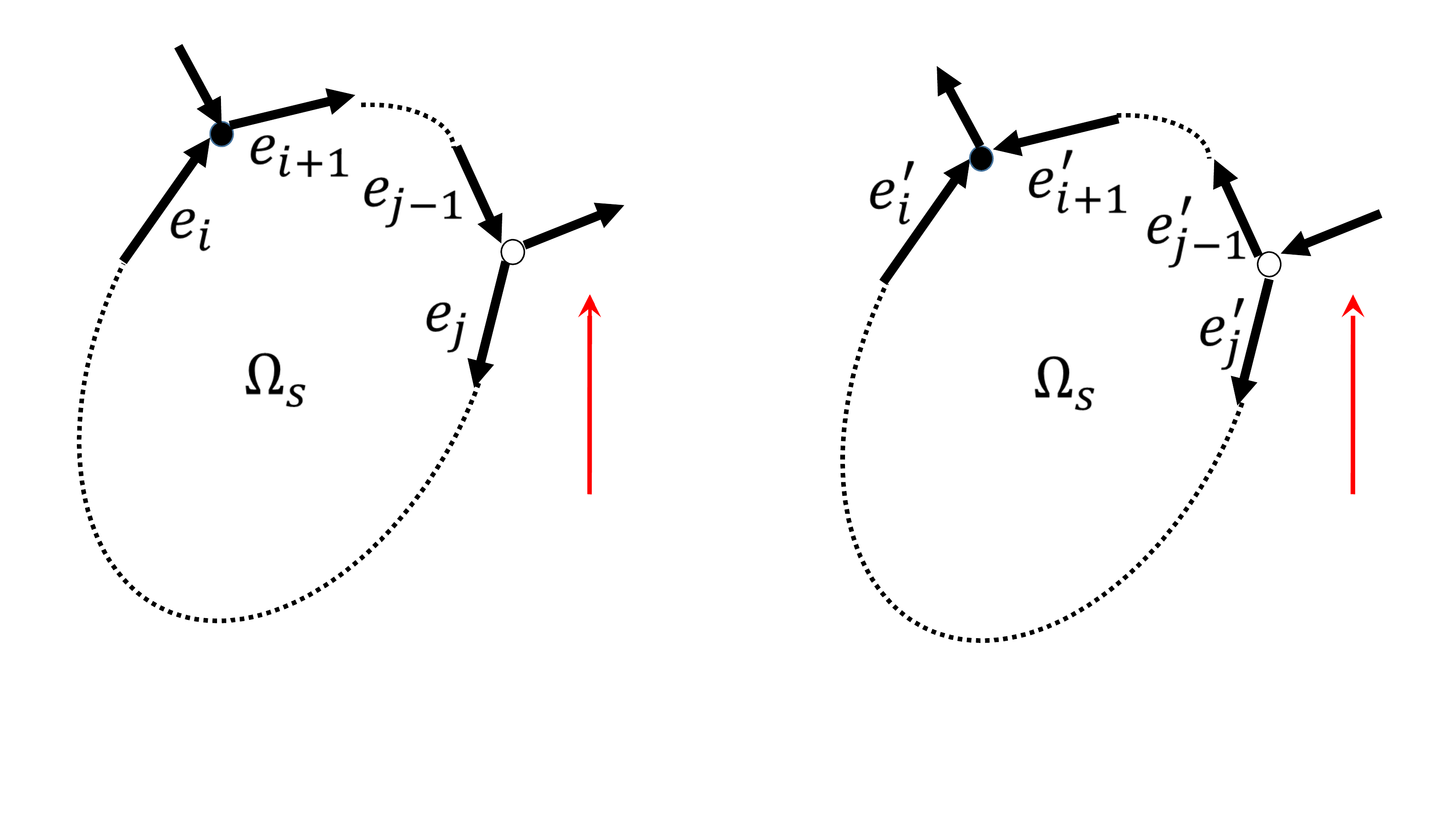}
\hfill
  \includegraphics[width=.3\textwidth]{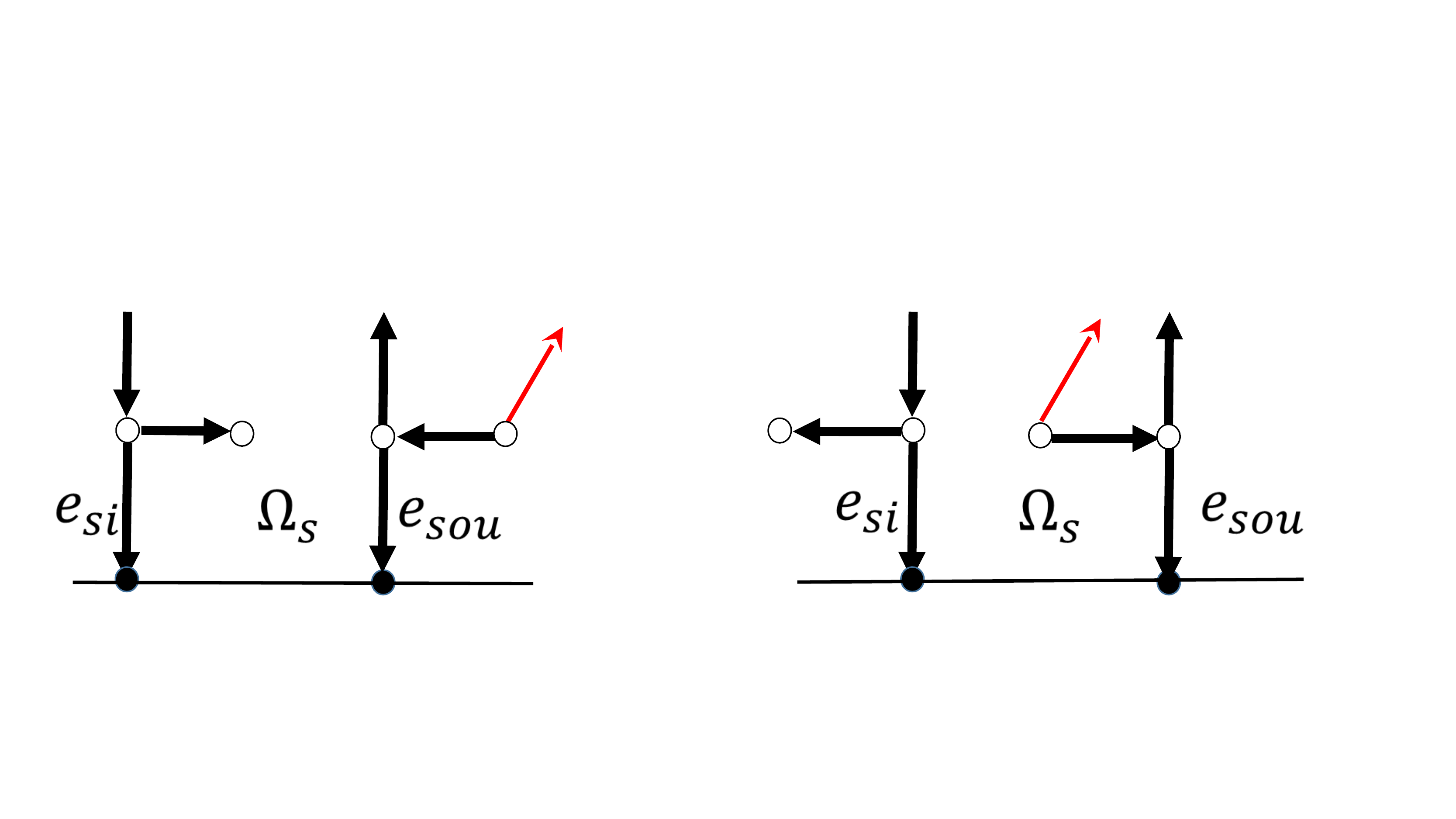}
\hfill
  \includegraphics[width=.3\textwidth]{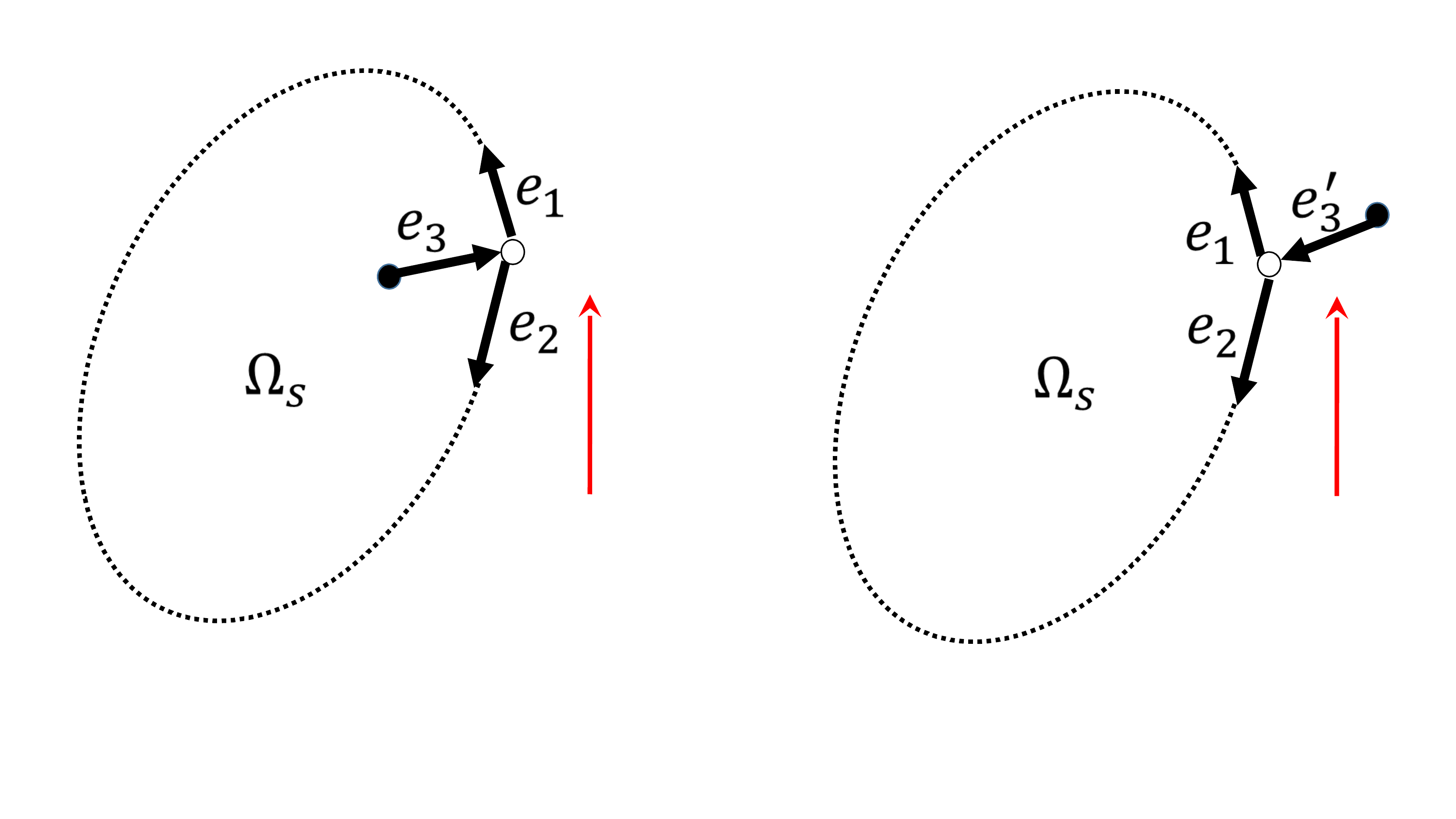}
  \vspace{-0.7 truecm}
  \caption{\small{\sl The relation between winding and changes of direction.}}
	\label{fig:wind_change}
\end{figure}

We are now ready to complete the proof of Theorem \ref{theo:exist} using the Lemmas above.

\begin{remark}\textbf{Notations used in the next Theorem}
In the following Theorem ${\mathcal D} = \{ \gamma_l, \,\; l\in [g+n-k]\}$ denotes either the vacuum network divisor ${\mathcal D}_{\textup{\scriptsize vac}, {\mathcal N}^{\prime}, I}$ or the dressed network divisor ${\mathcal D}_{\textup{\scriptsize dr}, {\mathcal N}^{\prime}}$, where ${\mathcal N}^{\prime}$ is a given oriented network with $g+1$ faces representing $[A]$. 

$\Gamma=\Gamma(\mathcal G)$ is the reducible rational curve for $\mathcal G$ and $\Omega_s$, $s\in[0,g]$, denotes both the oval in $\Gamma$ and its corresponding face in the network. As usual, $\Omega_0$ is the infinite oval which contains the essential singularity. 

For simplicity, we use the same symbol ${\mathcal D}\subset \Gamma $ to denote the set of divisor points $P_l \in \Gamma_l$, $l\in [g+n-k]$, such that $\zeta(P_l )= \gamma_l$, with $\zeta$ the coordinate on $\Gamma_l$ associated to the given orientation of the network. 

In every oval $\Omega_s$, $s\in [0,g]$, of $\Gamma$, $\nu_{s}$ denotes the number of divisor points of ${\mathcal D}$ belonging to the oval:
\[
\nu_s = \# ({\mathcal D}\cap \partial \Omega_s),
\]
$\mu_{s}$, $\mu_{s,sou}$ and $\mu_{s,si}$ respectively denote the total number of Darboux univalent vertices, of Darboux source univalent vertices and of Darboux sink univalent vertices belonging to the face $\Omega_s$. 

To count the number of divisor points on $\Gamma_0\cap \Omega_s$, where $\Gamma_0$ is the $\mathbb{CP}^1$ component corresponding to the boundary of the disk, we use the index $\rho_s$, $s\in [g]$.
$\rho_{s}$ is simply the total number of changes of signs of the e.w. on the edges at the boundary vertices belonging to the intersection of the face $\Omega_s$ with the boundary of the disk. We remark that, by construction, $\rho_s=0$ for all $s\in [0,g]$, if $\mathcal D$ is the vacuum network divisor. 
\end{remark}

\begin{theorem}\label{prop:comb_oval}\textbf{The number of divisor points in each oval (regularity property of the KP divisor in Theorem \ref{theo:exist}).}
Let $\mathcal G$, $\Gamma$, ${\mathcal D}$, $\nu_s$, $\mu_{s}$, $\mu_{s,sou}$ $\mu_{s,si}$ and $\rho_s$, $s\in [g]$ be as above. Then
\begin{equation}\label{eq:odd_N}
\nu_{s}+ \mu_{s} +\rho_s = \left\{ \begin{array}{ll} 1 \quad
(\!\!\!\!\!\!\mod 2) & \mbox{ if } s\in [g],\\
k \quad
(\!\!\!\!\!\!\mod 2) & \mbox{ if } s=0.
\end{array}
\right.
\end{equation}
\end{theorem}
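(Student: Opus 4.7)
The plan is to combine three ingredients: the position criterion of Theorem~\ref{theo:pos_div} which characterizes each divisor point of $\mathcal D$ via the parity of $\epsilon_{tot}$ at a trivalent white vertex; the local vertex relations of Lemma~\ref{lemma:count_eps}; and the global face identities of Lemma~\ref{lemma:count_face}. The strategy is to express $\nu_s$ as a sum of total indices $\epsilon_{tot}(f_l^{(s)}, g_l^{(s)})$ over all trivalent white vertices $V_l$ bounding $\Omega_s$, then to extend this sum to all internal vertices bounding $\Omega_s$ (using that the trivalent-black and bivalent contributions are controlled), and finally to decompose $\epsilon_{tot} = \epsilon_{cd} + \epsilon_{cs} + \epsilon_{int} + \epsilon_{wind}$ and invoke the face identities term-by-term.

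More explicitly, I would first invoke Theorem~\ref{theo:pos_div} to assert that the divisor point at each trivalent white vertex $V_l$ lies in $\Omega_s \cap \Gamma_l$ if and only if $\epsilon_{tot}(f_l^{(s)}, g_l^{(s)}) \equiv 1 \pmod{2}$, where $(f_l^{(s)}, g_l^{(s)})$ is the pair of edges at $V_l$ bounding $\Omega_s$. This criterion is uniform across interior and Darboux-connected trivalent white vertices, with the understanding that at Darboux-connected $V_l$ the divisor coincides with the fixed marked point $P^{(D)}_l$, whose oval is correctly detected by the same $\epsilon_{tot}$ rule in view of Proposition~\ref{prop:darboux} and Definition~\ref{def:loccoor}. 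This yields
\begin{equation*}
\nu_s \;\equiv\; \sum_{V_l \text{ trivalent white, bounding } \Omega_s}\!\! \epsilon_{tot}(f_l^{(s)}, g_l^{(s)}) \pmod{2}.
\end{equation*}
Next, I would use Lemma~\ref{lemma:count_eps}: $\epsilon_{tot} \equiv 0$ at bivalent vertices and $\epsilon_{tot} \equiv \epsilon_{cd}$ at trivalent black vertices, while $\epsilon_{cd}$ vanishes at bivalent vertices. Expanding $\epsilon_{tot}$ into its four constituents and summing over all $n_s$ pairs of edges at internal vertices bounding $\Omega_s$, the trivalent-black contributions on the two sides cancel to leave
\begin{equation*}
\sum_{V_l \text{ white tri}} \epsilon_{tot}(f_l, g_l) \;=\; \sum_{V_l \text{ white}} \epsilon_{cd}(f_l, g_l) \;+\; \sum_{l=1}^{n_s}\big(\epsilon_{cs}(f_l, g_l) + \epsilon_{int}(f_l, g_l) + \epsilon_{wind}(f_l, g_l)\big).
\end{equation*}

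The three summations on the right-hand side are precisely those controlled by Lemma~\ref{lemma:count_face}. Substituting (\ref{eq:eps_face_s}) for finite $s \in [g]$ gives $\sum_{\text{white tri}} \epsilon_{tot} \equiv (1+\mu_{s,si}) + \rho_s + \mu_{s,sou} = 1 + \mu_s + \rho_s \pmod{2}$, and substituting (\ref{eq:eps_face_0}) for $s = 0$ gives $\sum_{\text{white tri}} \epsilon_{tot} \equiv \mu_{0,si} + \rho_0 + (\mu_{0,sou} + k) = k + \mu_0 + \rho_0 \pmod{2}$. Combining these with the identification of $\nu_s$ from the previous paragraph yields (\ref{eq:odd_N}) in both cases. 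The main obstacle I anticipate lies in the first step: one must uniformly apply the position criterion of Theorem~\ref{theo:pos_div} at the $n$ Darboux-connected trivalent white vertices, verifying that each such divisor point sits in the oval containing the corresponding Darboux univalent vertex $V^{(D)}_l$, so that the Darboux contributions to $\sum \epsilon_{tot}$ are genuinely identical (not merely congruent modulo the $\mu_s$ count) to the Darboux contributions to $\nu_s$. A secondary technical point will be the interpretation of $\rho_s$ in the identity $\sum \epsilon_{cs} \equiv \rho_s \pmod{2}$ from Lemma~\ref{lemma:count_face}: for the vacuum network divisor one has $\rho_s = 0$ automatically since $\Phi_e(\vec t_0) > 0$ at every boundary edge, whereas for the dressed network divisor $\rho_s$ genuinely records the sign changes of $\Psi_e(\vec t_0)$ along the boundary arcs of $\Omega_s$.
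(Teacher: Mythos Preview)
Your proposal is correct and follows essentially the same route as the paper: express $\nu_s$ as $\sum_{V_l\text{ tr.w.}}\epsilon_{tot}(f_l,g_l)$ via Theorem~\ref{theo:pos_div}, use Lemma~\ref{lemma:count_eps} to extend the sum $\sum(\epsilon_{cs}+\epsilon_{int}+\epsilon_{wind})$ to all internal vertices bounding $\Omega_s$, and then invoke the three face identities of Lemma~\ref{lemma:count_face} to conclude. The two technical points you flag (uniform treatment of the Darboux-connected trivalent white vertices and the role of $\rho_s$) are handled in the paper exactly as you anticipate, via the Darboux cases in Theorem~\ref{theo:pos_div} and the boundary sign-change count in Figure~\ref{fig:table_divisor3}.
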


\begin{proof}
The proof follows from Theorem \ref{theo:pos_div} and Lemmas \ref{lemma:count_eps} and \ref{lemma:count_face}. For any fixed $s\in [g]$
let $(V_r^{(s)}; f_r^{(s)},g_{r}^{(s)})$, $r\in [n_s]$, be the set of pair of edges $f_r^{(s)},g_{r}^{(s)}$ at internal vertices $V_r^{(s)}$ bounding $\Omega_s$, where $[n_s]$ is the total number of pair of edges bounding $\Omega_s$. 
We remark that a given vertex $V$ may appear twice if two pairs of edges at $V$ belong to the boundary of $\Omega_s$ (see Figure \ref{fig:table_divisor1}). 

By construction, a divisor point $P_l\in \mathcal D$ belongs to $\Gamma_l \cap\Omega_s$ 
if and only if $\Gamma_l$ corresponds to a white trivalent vertex $V_l$ carrying an odd index at the edges $(V_l^{(s)}, f_l^{(s)},g_l^{(s)})$ bounding $\Omega_s$ (see Theorem \ref{theo:pos_div}). Therefore the number $\nu_s$ of divisor points in $\mathcal D\cap \Omega_s$ satisfies
\begin{equation}\label{eq:formula_1}
\nu_{s} = \mathop{\sum_{r=1}^{n_s}}_{V_r \mbox{ tr.w.}} \epsilon_{tot} (f_r,g_r) = \mathop{\sum_{r=1}^{n_s}}_{V_r \mbox{ tr.w.}} \epsilon_{cd} (f_r,g_r) +  \mathop{\sum_{r=1}^{n_s}}_{V_r \mbox{ tr.w.}}\Big( \epsilon_{cs} (f_r,g_r) + \epsilon_{wind} (f_r,g_r) +\epsilon_{int} (f_r,g_r) \Big),
\end{equation}
where the sums above run on all pair of edges at trivalent white vertices belonging to the boundary of $\Omega_s$. Using (\ref{eq:tot_biv}) and (\ref{eq:tot_triv_bl}), the second sum in the r.h.s. of (\ref{eq:formula_1}) has the same parity if we substitute it with the total contribution to the winding, change of sign and intersection indices due to \textbf{all} pair of edges at internal vertices bounding $\Omega_s$, that is 
\begin{equation}\label{eq:formula_2}
\mathop{\sum_{r=1}^{n_s}}_{V_r \mbox{ tr.w.}} \left( \epsilon_{cs} (f_r,g_r) + \epsilon_{wind} (f_r,g_r) +\epsilon_{int} (f_r,g_r) \right) =
\sum_{r=1}^{n_s} \left( \epsilon_{cs} (f_r,g_r) + \epsilon_{wind} (f_r,g_r) +\epsilon_{int} (f_r,g_r) \right)  \quad
(\!\!\!\!\!\!\mod 2).
\end{equation}
Therefore, inserting (\ref{eq:eps_face_s}), (\ref{eq:formula_1}) and (\ref{eq:formula_2}) in the r.h.s. of (\ref{eq:odd_N}), for any $s\in [g]$, we easily get
\begin{equation}\label{eq:mainformula_s}
\begin{array}{l}
\nu_{s}+ \mu_{s} +\rho_s \displaystyle \equiv \mathop{\sum_{r=1}^{n_s}}_{V_r \mbox{ tr.w.}}\epsilon_{tot} (f_r,g_r) + \mu_{s}+\rho_s =
\mathop{\sum_{r=1}^{n_s}}_{V_r \mbox{ tr.w.}}\epsilon_{cd} (f_r,g_r) +\\
\displaystyle \quad\quad+  \sum_{r=1}^{n_s} \Big( \epsilon_{cs} (f_r,g_r) + \epsilon_{wind} (f_r,g_r) 
+ \epsilon_{int} (f_r,g_r) \Big)+ \mu_{s,so} +\mu_{s,si}+\rho_s  =1 \quad
(\!\!\!\!\!\!\mod 2).
\end{array}
\end{equation}

The proof of (\ref{eq:odd_N}) at the infinite oval follows using  (\ref{eq:eps_face_0}) instead of (\ref{eq:eps_face_s}):
\begin{equation}\label{eq:mainformula_0}
\begin{array}{l}
\nu_{0}+ \mu_{0} + \rho_0 + k \displaystyle = \sum_{l, V_l \mbox{ tr.w.}} \epsilon_{tot} (f_l,g_l) + \mu_{0} + \rho_0   + k = \\
\quad \displaystyle=  \sum_{l, V_l \mbox{ tr.w.}} \epsilon_{cd} (f_l,g_l) +  \sum_{l, V_l \mbox{ tr.w.}} \left( \epsilon_{cs} (f_l,g_l) + \epsilon_{wind} (f_l,g_l) +
\epsilon_{int} (f_l,g_l) \right)+ \mu_{0} + \rho_0   +k  = \\
\quad\displaystyle=\sum_{l, V_l \mbox{ tr.w.}} \epsilon_{cd} (f_l,g_l) +  \sum_{l=1}^{n_s} \left( \epsilon_{cs} (f_l,g_l) + \epsilon_{wind} (f_l,g_l) +\epsilon_{int} (f_l,g_l) \right)+ \mu_{0}+ \rho_0 +k =0 \quad
(\!\!\!\!\!\!\mod 2).
\end{array}
\end{equation}
\end{proof}

The vacuum divisor on $\Gamma$ is obtained from the vacuum network divisor ${\mathcal D}\equiv 
{\mathcal D}_{\textup{\scriptsize vac}, {\mathcal N}^{\prime}, I}$ by eliminating the Darboux sink divisor points $P_{j_s}$, $s\in[n-k]$: $\DVG={\mathcal D}\backslash \{P_{j_s},s\in[n-k]\}$. In this case, we denote $\nu'_s=\nu_s- \mu_{s,si}$ the number of divisor points in $\DVG\cap\Omega_s$ for 
$s\in[0,g]$, and  (\ref{eq:odd_N}) becomes
\begin{equation}\label{eq:odd_N_2}
\nu'_{s}+ \mu_{s,so} = \left\{ \begin{array}{ll} 1 \quad
(\!\!\!\!\!\!\mod 2) & \mbox{ if } s\in [g],\\
k \quad
(\!\!\!\!\!\!\mod 2)& \mbox{ if } s=0,
\end{array}
\right.
\end{equation}
since $\rho_s=0$ for any $s\in[0,g]$. 
So we completed the proof of Theorem~\ref{theo:vac_div} for curves associated to general networks.

If ${\mathcal D}\equiv {\mathcal D}_{\textup{\scriptsize dr}, {\mathcal N}^{\prime}}$ is the dressed network divisor, then  $\DKP$ is obtained from 
${\mathcal D}$ by eliminating the Darboux sink and source divisor points $P_{j_s}$, $s\in[n]$, and by adding the Sato divisor ${\mathcal D}_S$:
$\DKP=\left({\mathcal D}\cup {\mathcal D}_S\right) \backslash \{P_{j_s},s\in[n]\}$. In this case we denote $\nu''_s=\nu_s- \mu_{s}$ the number 
of divisor points in $\DKP\cap\left(\Omega_s\backslash \Gamma_0 \right)$. Therefore, (\ref{eq:odd_N}) becomes
\begin{equation}\label{eq:odd_N_3}
\nu''_{s}+ \rho_{s} = \left\{ \begin{array}{ll} 1 \quad
(\!\!\!\!\!\!\mod 2), & \mbox{ if } s\in [g],\\
k \quad
(\!\!\!\!\!\!\mod 2), & \mbox{ if } s=0.
\end{array}
\right.
\end{equation}
Thus we have completed the proof of Theorem \ref{theo:exist},
since $\DKP$ is an effective divisor of degree $g$, and each finite oval $\Omega_s$, $s\in[g]$ contains an odd number of divisor points, then $\# \left( \Omega_s \cap \DKP\right) =1$, $s\in [g]$, and $\# \left( \Omega_0 \cap \DKP\right) =0$. As a corollary 
we see, that each $\Omega_s$,   $s\in[g]$ contains either 0 or 1 Sato divisor point, and either 0 or 1 non-Sato  divisor point. We summarize 
this as:
\begin{corollary}
\label{cor:DKP}
Let $\Omega_s$ be a finite oval of $\Gamma(\mathcal G)$, then 
\begin{enumerate}
\item $\Omega_s$ contains exactly one divisor point of $\DKP$;
\item $\nu''_{s} = 0$, $\rho_s=1$ if and only if $\Omega_s$ contains one Sato divisor point;
\item $\nu''_{s} = 1$, $\rho_s=0$ if and only if $\Omega_s$ does not contain Sato divisor points.
\end{enumerate}
\end{corollary}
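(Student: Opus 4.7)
The plan is to deduce the corollary directly from the parity identity (\ref{eq:odd_N_3}) of Theorem~\ref{prop:comb_oval} together with a degree count, so essentially no new work is required beyond assembling facts already present in the excerpt. First I would argue item~(1). The parity of $\#(\DS\cap\Omega_s)$ agrees with $\rho_s\pmod 2$, because, by Proposition~\ref{prop:malanyuk}, at the chosen time $\vec t_0$ the Sato roots are simple, so on each arc $(\kappa_j,\kappa_{j+1})\subset\Gamma_0\cap\partial\Omega_s$ the number of sign changes of the wave function equals the number of Sato roots in that arc modulo~$2$. Combining this with (\ref{eq:odd_N_3}) gives
\[
\#(\DKP\cap\Omega_s)=\nu''_s+\#(\DS\cap\Omega_s)\equiv\nu''_s+\rho_s\equiv 1\pmod 2
\]
for every finite oval $\Omega_s$. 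Hence each finite oval contains at least one $\DKP$-point. Since $\deg\DKP=g$, the union of the $g$ finite ovals contains at most $g$ divisor points, so each finite oval carries exactly one and $\Omega_0$ carries none. This is item~(1) and completes the degree-counting argument sketched immediately before the corollary.

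For items (2) and (3), I would leverage the fact that from item~(1) the equation
\[
\nu''_s+\#(\DS\cap\Omega_s)=1
\]
holds with both summands in $\{0,1\}$. The only point requiring a short verification is the upgrade from ``$\rho_s\equiv\#(\DS\cap\Omega_s)\pmod 2$'' to actual equality $\rho_s=\#(\DS\cap\Omega_s)$ in this situation: if $\Omega_s$ contains no Sato point, every arc of $\Omega_s\cap\Gamma_0$ has zero sign changes and $\rho_s=0$; if $\Omega_s$ contains exactly one Sato point, it lies in a single arc $(\kappa_j,\kappa_{j+1})$ contributing one sign change, while every other arc contributes zero, whence $\rho_s=1$. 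Since the two cases $\nu''_s=1$ and $\nu''_s=0$ are exhaustive and mutually exclusive, they yield items (2) and (3) respectively.

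There is no substantial obstacle: the combinatorial core has been handled in Theorem~\ref{prop:comb_oval}, and the only ingredient still to be spelled out is the bookkeeping of how $\rho_s$ relates to the actual count of Sato points in the oval, which becomes trivial once item~(1) bounds that count by one.
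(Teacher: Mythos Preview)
Your proposal is correct and follows essentially the same approach as the paper: deduce oddness of $\#(\DKP\cap\Omega_s)$ from (\ref{eq:odd_N_3}) via $\rho_s\equiv\#(\DS\cap\Omega_s)\pmod 2$, then use the degree count $\deg\DKP=g$ over the $g$ finite ovals to force exactly one point per oval, and finally read off items (2) and (3) from $\nu''_s+\#(\DS\cap\Omega_s)=1$. The only difference is that you spell out the upgrade from $\rho_s\equiv\#(\DS\cap\Omega_s)\pmod 2$ to exact equality once the count is bounded by one, which the paper leaves implicit.
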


\section{Signatures of edges and total non--negativity}\label{sec:lam1}
In this Section, we provide an invariant formulation of the construction of both the KP wave function and the KP divisor using a space of relations at vertices analogous to the space of relations introduced in \cite{Lam2} (section 14) to provide a mathematical formulation of on shell diagrams \cite{AGP1, AGP2} (see also \cite{ATT}).
In \cite{Lam2} (section 14) it is defined a relation space on bicolored networks associating a formal variable $z_{(U,e)}$ to each half edge with the following properties:
\begin{enumerate}
\item If $w(u,v)$ is the weight of the oriented edge $e=(U,V)$, then $ z_{(U,e)} = w(U,V) z_{(V,e)}$;
\item At each black vertex $V$ is associated the equation $z_{(V,e)} = z_{(V,e^{\prime})}$ for every pair of edges $e,e^{\prime}$ incident to $V$;
\item If $V$ is a white vertex, then $\sum\limits_{e\,\mbox{\scriptsize{at}}\ V} z_{(V,e)} = 0$.
\end{enumerate}
Lam \cite{Lam2} observes that it is necessary to assign proper signatures to all edges in order that the system of relations has maximal rank and that the formal variables are associated to totally non-negative Grassmannians. In \cite{Lam2} the question of providing a simple rule to assign such signatures is left open.  

In \cite{AG2}, we have provided an edge signature in the case $\mathcal N$ is canonically oriented Le--network by explicitly assigning a $+$ to all horizontal edges and to each vertical edge a signature equal to one minus the number of boundary sources shadowed in the SE direction by the corresponding box in the Le-diagram. 

In this paper we have introduced well defined signatures at vertices associated to the geometry of $\mathcal G$ through the choice of perfect orientations and the use of gauge ray directions to measure winding and intersections indices at pairs of edges. Clearly, each such vertex signature induces a well defined edge signature at all internal edges $e$ for instance w.r.t. a perfect orientation as we did in \cite{AG2} on the Le--network; moreover the signature at boundary edges is uniquely defined in our geometric construction and respects the total non--negativity property. Below we propose an alternative description of such signatures and restate our main theorems in this framework.

In the following, let $\mathcal G$ be a reduced PBDTP graph representing an irreducible positroid cell $\S \in \GTNN$ of dimension $|D|$. Let $[A]\in \S$ and $\mathcal N$ be the PBDTP network of graph $\mathcal G$ representing $[A]$, uniquely defined up to the weight gauge. The reduced property of the graph ensures the absence of null edge vectors in our construction, moreover, the number of white trivalent vertices is $|D|-k$. All the results of this section may be extended also to unreduced graphs with the additional condition of absence of null vectors on one of the networks representing the given point $[A]$.

In Section \ref{sec:vectors} we have used winding at pair of edges and gauge ray directions to assign signatures to the linear relations at the internal vertices of directed PBDTP networks. We have also proven that the linear system at the vertices has maximal rank for any given perfect orientation $\mathcal O$ and gauge ray direction $\mathfrak l$ on $\mathcal N$ (Theorem \ref{theo:consist}) and have provided its explicit solution in Theorem \ref{theo:null} adapting the approach in \cite{Tal2} to our setting. In particular, the action of the linear system on the boundary is the following:
\begin{enumerate}
\item If we fix the base $I$ and as boundary conditions we assign the canonical vectors $E[j]$ to the boundary sinks $b_j$, $j\in \bar I$, then, for any perfect orientation $\mathcal O=\mathcal O(I)$ and gauge ray direction $\mathfrak l$, the edge vectors at the boundary sources $b_{i_r}$, $r\in [k]$, will be $A[r]-E[i_r]$, where $A[r]$ is the $r$--th row of the reduced row echelon representation of $[A]$ with respect to the base $I$;
\item If we pass from the base $I$ to the base $J$ and from the boundary conditions $E[j]$ at the boundary sinks $b_j$, $j\in \bar I$ to the boundary conditions $E[l]$ at $b_l$, $l\in \bar J$, we have an explicit transformation of the edge vectors which will transform the vectors $A[r]-E[i_r]$ at the boundary sources $i_r\in I$, to the vectors $\hat A[s]-E[i_s]$ at the boundary sources $i_s\in J$. Here $\hat A [s]$ is the $s$--th row of the reduced row echelon form matrix represetning $[A]$ with respect to the base $J$.
\end{enumerate} 
Before continuing reviewing our construction, we remark that the alternative choice $-E[j]$ at a boundary sink vertex corresponds to the multiplication by $(- 1)$ of the corresponding column in the reduced row echelon matrix with respect to the base $I$. This change of sign preserves the Pl\"ucker relations, but not the total non--negativity property, {\sl i.e.} the resulting point $[A]$ belongs to the stratum of $Gr(k,n)$ associated to the matroid $\mathcal M$ if we choose the ''wrong'' signs.

In Section \ref{sec:anycurve} we have applied the above construction to the soliton data $(\mathcal K = \{\kappa_1<\cdots <\kappa_n\}, [A])$: we have transformed the edge vector $E_e$ into a dressed edge wave function $\Psi_e(\vec t)$ and we have associated a real and regular KP divisor $\DKP$ on $\Gamma$ to the linear relations at white vertices on $\mathcal N$. In particular, we have shown that the divisor is invariant with respect to changes of both the orientation and the gauge ray direction. Finally in the previous Section we have combinatorially characterized the position of the divisor in the ovals of $\Gamma$ and proven that there is exactly one divisor point in each oval except the one containing the essential singularity of the wave function on $\Gamma$. The position of the divisor in the oval is combinatorially detected by the condition $\epsilon_{tot}(e_i,e_{i+1})=1$ at pair of edges $(e_i, e_{i+1})$ at white vertices.

All of the above implies that there must exist an invariant formulation of the results obtained in this paper.
In the following $z_{(V,e)}$ may be interpreted as an half-edge vector or an half-edge (vacuum or dressed) wave function; moreover we enumerate edges from 1 to $m$ counterclockwise at any vertex $V$ of valency $3$, without reference to their orientation and we use cyclical order in summations.
Then, we may re-express the linear relations both for the edge vectors (\ref{eq:lineq_biv})--(\ref{eq:lineq_white}) and the edge wave functions (\ref{eq:lin_Phi1})--(\ref{eq:lin_Phi3}) as linear relations for half-edge vectors and half-edge dressed wave functions $z_{(U,e)}$ as follows.
\begin{definition}\textbf{Linear relations at vertices on PBDTP networks}\label{def:sign_rela}
Let $\mathcal G$ be a reduced PBDTP graph representing a $|D|$--dimensional irreducible positroid cell $\S \subset \GTNN$ and let $\mathcal N$ be a network of graph $\mathcal G$ with positive weights $w(U,V)$ at edges $e=(U,V)$ with  $U$, $V$ respectively the initial and final vertices of $V$ (so that $w(V,U)= w(U,V)^{-1}$). Then we introduce the following system of relations:
\begin{enumerate}
\item  If $w(U,V)$ is the weight of the oriented edge $e=(U,V)$, then 
\begin{equation}\label{eq:lam_edge}
z_{(U,e)}=  w(U,V) z_{(V,e)};
\end{equation}
\item  At each bivalent ($m=2$) or black trivalent ($m=3$) vertex $V$ and for every pair of edges $(e_i,e_{i+1})$ incident to $V$ 
\begin{equation}\label{eq:lam_black}
z_{(V,e_i)} = (-1)^{\epsilon_{cs}(e_i,e_{i+1})} z_{(V,e_{i+1})}, \quad\quad i\in [m],
\end{equation}
where the indices satisfy 
\begin{equation}\label{eq:lam_cs}
\sum_{i=1}^m \epsilon_{cs}(e_i,e_{i+1}) =0 \quad (\!\!\!\!\!\!\mod 2), \quad\quad \epsilon_{cs}(e_i,e_{i+1}) \in \{0, 1\}, \quad i\in [m];
\end{equation}
\item  If $V$ is a trivalent white vertex, then 
\begin{equation}\label{eq:lam_white}
(-1)^{\epsilon_{tot} (e_1,e_2)-\epsilon_{cs} (e_1,e_2)  } z_{(V,e_3)} +  c.p. =0,
\end{equation}
where $e_m$, $m\in [3]$, are the edges at $V$, and the indices satisfy (\ref{eq:lam_cs}) and
\begin{equation}\label{eq:lam_tot}
\sum_{i=1}^3 \epsilon_{tot}(e_i,e_{i+1}) =1, \quad\quad \epsilon_{tot}(e_i,e_{i+1}) \in \{ 0, 1\}, \quad
\epsilon_{tot}(e_i,e_{i+1})\epsilon_{tot}(e_{i+1},e_{i+2}) =0, \quad i\in [m].
\end{equation}
\end{enumerate}
\end{definition}
The above definition implies that at each pair of edges at a given vertex we assign a $\pm$ sign as follows.
\begin{definition}\textbf{Admissible vertex signatures for the linear relations on PBDTP networks}\label{def:vertex_sign}
Let $\mathcal G$, $\mathcal N$ as in Definition \ref{def:sign_rela}. Then to any system of linear relations satisfying Definition \ref{def:sign_rela} we assign a vertex signature at pairs of half edges as follows
\begin{enumerate}
\item  At each bivalent ($m=2$) or black trivalent ($m=3$) vertex $V$ and for every pair of edges $(e_i,e_{i+1})$ incident to $V$, we assign
\[
\sigma_V ( e_i, e_{i+1}) =\left\{ \begin{array}{ll} +, &\quad \mbox{ if } \,\,\epsilon_{cs}(e_i,e_{i+1}) = 0 \quad (\!\!\!\!\!\!\mod 2),\\
-, &\quad \mbox{ if } \,\,\epsilon_{cs}(e_i,e_{i+1}) = 1 \quad (\!\!\!\!\!\!\mod 2),
\end{array}\right.
\] 
{\sl i.e.} the admissible vertex signatures of pairs of half edges cyclically ordered at a trivalent black vertex are either
$\sigma_V=(+,+,+)$ or $\sigma_V= (+,-,-)$ with respect to a convenient labeling of the edges at $V$;
\item  If $V$ is a trivalent white vertex, then for every pair of edges $(e_i,e_{i+1})$ incident to $V$, we assign
\[
\sigma_V  ( e_i, e_{i+1}) =\left\{ \begin{array}{ll} +, &\quad \mbox{ if  } \,\,\epsilon_{tot} (e_i,e_{i+1})-\epsilon_{cs} (e_{i},e_{i+1}) = 0 \quad (\!\!\!\!\!\!\mod 2),\\
-, &\quad \mbox{ if  } \,\,\epsilon_{tot} (e_i,e_{i+1})-\epsilon_{cs} (e_{i},e_{i+1}) = 1 \quad (\!\!\!\!\!\!\mod 2),
\end{array}\right.
\]
{\sl i.e.} the admissible vertex signatures of pairs of half edges cyclically ordered at a trivalent white vertex are either
$\sigma_V= (+,+,-)$ or $\sigma_V =(-,-,-)$ with respect to a convenient labeling of the edges at $V$.
\end{enumerate}
\end{definition}

Conditions (\ref{eq:lam_cs}) and (\ref{eq:lam_tot}) are modeled on Definition \ref{def:index_pair} and Lemma \ref{lemma:count_eps}; therefore the linear relations satified by edge vectors and edge wave functions at vertices may be restated in the invariant form of Definition \ref{def:sign_rela}. 
In our construction of edge vectors, the linear system on $(\mathcal N,\mathcal O, \mathfrak l)$ has maximal rank and induces a signature satisfying Definitions \ref{def:sign_rela} and \ref{def:vertex_sign}, where the signature at vertices depends on both the orientation and the gauge ray direction. The linear relations for the edge wave functions on $(\mathcal N,\mathcal O, \mathfrak l)$ are formally obtained substituting the symbol $E_e$ with the symbol $\Psi_e(\vec t)$. If we fix the phases $\mathcal K$ and the initial time $\vec t_0$ (here and in the following we assume that only a finite number of KP times are different from zero), we have a well defined way to compute both indices $\epsilon_{tot}$ and $\epsilon_{cs}$ on $(\mathcal N,\mathcal O, \mathfrak l)$ and they satisfy the following compatibility conditions (\ref{eq:admis_eps}) at internal ovals.

\begin{lemma}\label{lem:adm_1}\textbf{Compatibility conditions on internal ovals for the indices $\epsilon_{tot}$ and $\epsilon_{cs}$.}
Let $\mathcal N$  be a network as in the above Definition with perfect orientation $\mathcal O$ and
gauge ray direction $\mathfrak l$. Then the system of relations for the edge vectors (\ref{eq:lineq_biv})--(\ref{eq:lineq_white}) 
on $(\mathcal N,\mathcal O, \mathfrak l)$ may be expressed as a system on half edge vectors fulfilling Definition \ref{def:sign_rela} and it has maximal rank.
Moreover, for any given internal oval $\Omega_s$ the indices  $\epsilon_{tot}$ and $\epsilon_{cs}$ satisfy the following conditions: 
\begin{equation}\label{eq:admis_eps}
\sum\limits_{V\in\Omega_s} \epsilon_{cs}(e_V,f_V) = 0, \ \ \ \ \ \sum\limits_{V\in\Omega_s} \epsilon_{tot}(e_V,f_V) = 1, \ \ \ \ (\!\!\!\!\!\!\mod 2),
\end{equation}
where $e_V$, $f_V$ denote the pair of edges at vertex $V$ bounding $\Omega_s$.
\end{lemma}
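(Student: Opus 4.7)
The lemma contains three distinct claims: (a) the edge-vector linear relations of Section \ref{sec:linear} admit a reformulation as half-edge relations in the form of Definition \ref{def:sign_rela}, (b) the resulting system has maximal rank, and (c) the compatibility identities (\ref{eq:admis_eps}) hold on each internal oval. My plan is to dispatch (a) and (b) by direct translation from the results already established, and then deduce (c) from Lemma \ref{lemma:count_face} together with the vertex-wise identities of Lemma \ref{lemma:count_eps}.

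For (a), I would assign half-edge variables $z_{(V,e)}$ so that (\ref{eq:lam_edge}) absorbs the weight factors along each edge, and verify that the remaining sign factors at each internal vertex are captured by the indices of Definition \ref{def:index_pair}. At bivalent and black trivalent vertices, Lemma \ref{lemma:count_eps}, equations (\ref{eq:tot_biv})--(\ref{eq:tot_triv_bl}), shows that the net sign between any two half-edge values equals $(-1)^{\epsilon_{cs}(e_i,e_{i+1})}$, while the cyclic summation (\ref{eq:lam_cs}) is automatic from the pairwise definition of $\epsilon_{cs}$. At white trivalent vertices, (\ref{eq:tot_triv_wh}) of Lemma \ref{lemma:count_eps} guarantees that exactly one pair of edges carries the asymmetric sign $\epsilon_{tot}=1$, which is precisely the content of (\ref{eq:lam_tot}) and (\ref{eq:lam_white}). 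Claim (b) is then immediate: since the half-edge system is equivalent to the edge-vector system at all internal vertices, its maximal rank is inherited from Theorem \ref{theo:consist}.

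For (c), I would apply Lemma \ref{lemma:count_face} to an internal oval $\Omega_s$. Because $\Omega_s$ does not meet the boundary of the disk and the network $\mathcal N$ carries no Darboux edges, all boundary contributions vanish: $\mu_{s,\text{sou}}=\mu_{s,\text{si}}=0$ and $\rho_s=0$. The third line of (\ref{eq:eps_face_s}) then gives $\sum_V \epsilon_{cs}(e_V,f_V)=0\pmod 2$ directly, which is the first identity of (\ref{eq:admis_eps}). For the second identity, I would follow the combinatorial rewriting used in the derivation of (\ref{eq:mainformula_s}) in the proof of Theorem \ref{prop:comb_oval}: the identities (\ref{eq:tot_biv})--(\ref{eq:tot_triv_bl}) allow the sum $\sum_{V}\epsilon_{tot}$ at white trivalent vertices to be rewritten as $\sum_{V\text{ white}}\epsilon_{cd}+\sum_V(\epsilon_{cs}+\epsilon_{wind}+\epsilon_{int})$ modulo $2$; the three specialized identities of Lemma \ref{lemma:count_face} then assemble to give exactly $1\pmod 2$, in agreement with the fact that an internal oval must contain precisely one divisor point.

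The main obstacle will be to fix and justify the correct interpretation of the summation ranges in (\ref{eq:admis_eps}): consistent with Definition \ref{def:sign_rela}, where $\epsilon_{tot}$ appears only in relations at white trivalent vertices while $\epsilon_{cs}$ appears at all vertex types, the two sums in (\ref{eq:admis_eps}) are not summed over the same set of vertices. One must track carefully which vertex types contribute to each summation and ensure that the use of Lemma \ref{lemma:count_face} (whose formulation restricts $\epsilon_{cd}$ to white vertices through the careful definition of $\epsilon_{wind}$ at black trivalent vertices with inward pair) is compatible with the vertex-type restrictions forced by Definition \ref{def:sign_rela}. Once this bookkeeping is clear, the rest reduces to mechanical application of identities already proven.
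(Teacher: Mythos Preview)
Your proposal is correct and matches the paper's intended derivation: the lemma is not given an explicit proof in the paper but is presented as a direct consequence of Theorem \ref{theo:consist} (maximal rank), Lemma \ref{lemma:count_eps} (vertex-wise identities yielding Definition \ref{def:sign_rela}), and Lemma \ref{lemma:count_face} specialized to an internal oval with $\mu_{s,\mathrm{sou}}=\mu_{s,\mathrm{si}}=\rho_s=0$, exactly as you outline. Your identification of the summation-range issue is also apt: consistent with Definition \ref{def:sign_rela} and the computation (\ref{eq:mainformula_s}), the $\epsilon_{tot}$ sum in (\ref{eq:admis_eps}) runs over trivalent white vertices bounding $\Omega_s$, while the $\epsilon_{cs}$ sum runs over all internal vertices bounding $\Omega_s$.
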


The invariance of the KP divisor implies that $\epsilon_{tot}(e_i, e_{i+1})$ is the same on $(\mathcal N, \mathcal K, \vec t_0)$ independently of the perfect orientation $\mathcal O$ and the gauge ray direction $\mathfrak l$. 

\begin{lemma}\label{lem:adm_2}\textbf{Invariance of the $\epsilon_{tot}$ index at white vertices.}
Let $\mathcal N$ be a reduced network representing the point $[A]$ in the Grassmannian. For any fixed choice  $\mathcal K$ of ordered phases and initial time $\vec t_0$ the pair of edges $e_i$, $e_{i+1}$ at a given white trivalent vertex $V$ such that $\epsilon_{tot}(e_i,e_{i+1})=1$ is independent on both the orientation of the network and the gauge ray direction and the weight and vertex gauges.
\end{lemma}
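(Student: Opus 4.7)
The plan is to derive this invariance directly from the results already established for the KP divisor, since the statement is essentially a specialization of Corollary \ref{cor:eps_tot} to the distinguished pair where $\epsilon_{tot}=1$. First I would appeal to Lemma \ref{lemma:count_eps}, which guarantees that at any trivalent white vertex $V_l$ exactly one of the three cyclically ordered pairs $(e_i,e_{i+1})$ satisfies $\epsilon_{tot}(e_i,e_{i+1})\equiv 1 \pmod 2$. Hence for each triple $(\mathcal O,\mathfrak l, \text{gauges})$ the ``distinguished pair'' at $V_l$ is unambiguously defined, and the content of the lemma is that this distinguished pair does not depend on the choices.

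Next, by Theorem \ref{theo:pos_div}, the distinguished pair is precisely the pair of marked points on the component $\Gamma_l$ which bound the arc of $\Gamma_l$ containing the dressed divisor point $P_{\textup{\scriptsize dr}}^{(l)}$. Equivalently, the distinguished pair singles out the face of $\mathcal G$ incident to $V_l$ in which $P_{\textup{\scriptsize dr}}^{(l)}$ lies in the corresponding oval of $\Gamma$. Therefore, in order to prove the lemma it is enough to show that the face (oval) to which $P_{\textup{\scriptsize dr}}^{(l)}$ belongs is determined only by the invariant data $(\mathcal K,[A],\vec t_0)$ together with the network $\mathcal N$ of graph $\mathcal G$.

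This invariance is already furnished piece by piece in the paper. Theorem \ref{theo:inv} shows that the KP divisor, and in particular the divisor point $P_{\textup{\scriptsize dr}}^{(l)}\in\Gamma_l$, is independent of the orientation of $\mathcal N$; more precisely, a change of orientation at $V_l$ acts by the Möbius transformations on the local coordinate of $\Gamma_l$ recorded in Corollary \ref{cor:indep_orient}, and these transformations permute the three marked points $P_m^{(l)}$ in exactly the same way as the relabelling of the edges $e_1,e_2,e_3$, so that the pair of marked points enclosing $P_{\textup{\scriptsize dr}}^{(l)}$ is carried to the pair of marked points enclosing it in the new orientation. Corollary \ref{cor:indep_gauge} then handles the independence from the gauge ray direction, and Proposition \ref{prop:gauge} the independence from the weight and vertex gauges.

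Combining these three invariance statements, the face incident to $V_l$ containing $P_{\textup{\scriptsize dr}}^{(l)}$ is an intrinsic invariant of $(\mathcal N,\mathcal K,\vec t_0)$, so by the characterization of step two the distinguished pair $(e_i,e_{i+1})$ with $\epsilon_{tot}(e_i,e_{i+1})=1$ is the same for any choice of orientation, gauge ray direction, and weight/vertex gauges. The only mild technical point is to verify compatibility of the local coordinates on $\Gamma_l$ with the relabelling induced by the change of orientation — but this is exactly the content of Theorem \ref{theo:inv}, so no new computation is required. I do not anticipate a genuine obstacle; the proof is essentially a bookkeeping assembly of results already proved.
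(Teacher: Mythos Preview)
Your proposal is correct and follows essentially the same approach as the paper. The paper's justification for this lemma is the single sentence preceding it (``The invariance of the KP divisor implies that $\epsilon_{tot}(e_i, e_{i+1})$ is the same \ldots''), together with the proof of Corollary~\ref{cor:eps_tot}, which cites precisely the ingredients you invoke: Proposition~\ref{prop:gauge}, Corollary~\ref{cor:indep_gauge}, Theorem~\ref{theo:inv}, and Theorem~\ref{theo:pos_div}. Your write-up is simply a more detailed unpacking of that same argument.
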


\begin{figure}%[H]
  \centering
  \includegraphics[width=0.49\textwidth]{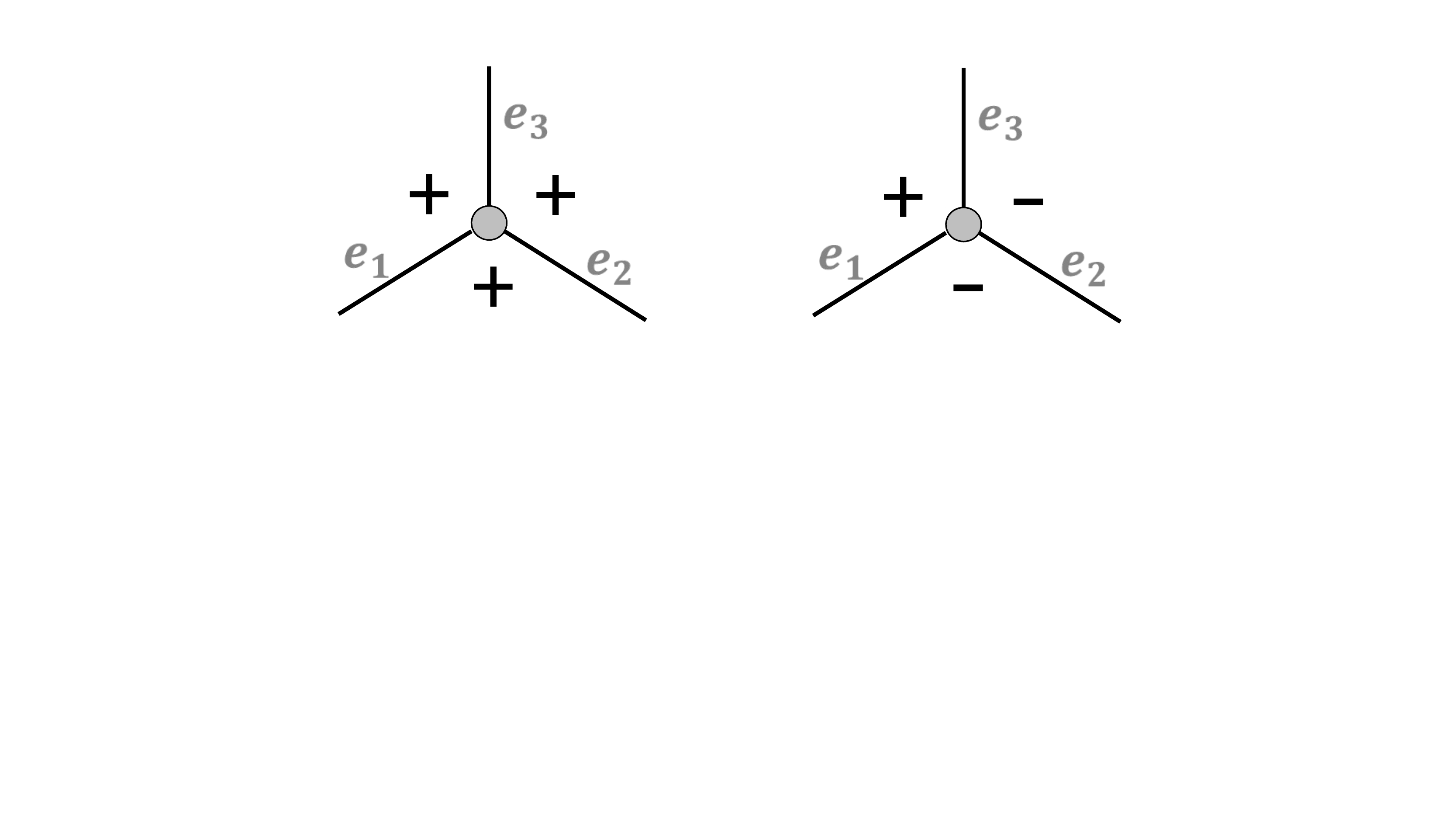}
	\hfill
	\includegraphics[width=0.49\textwidth]{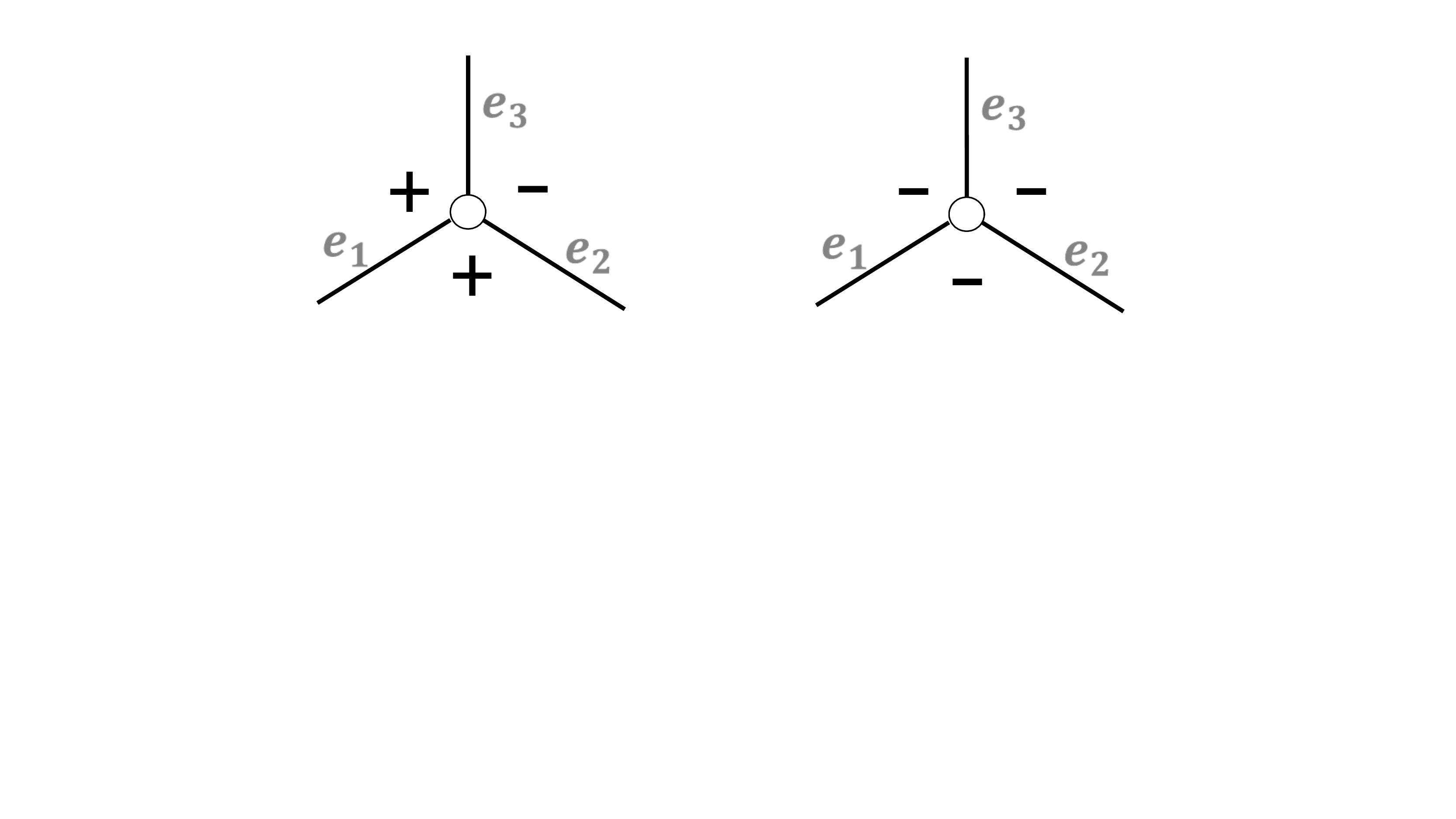}
  \vspace{-2.7 truecm}
  \caption{\small{\sl The geometric construction in Section \ref{sec:vectors} induces a well defined signature for pair of edges at internal vertices. The latter corresponds to relations (\ref{eq:form_3}) and (\ref{eq:form_4}) which link the geometrical picture on $\mathcal G$ to the differential form associated to the soliton data $(\mathcal K, [A])$. For almost any fixed time $\vec t_0$ these relations are solvable and provide a unique real and regular KP divisor on $\Gamma$.}}
	\label{fig:vert_sign}
\end{figure}

\begin{remark}\textbf{Open problems connected to vertex signatures}\label{conj:sign}
Let us fix the reduced PBDTP graph $\mathcal G$ representing the $|D|$--dimensional irreducible positroid cell $\S \subset \GTNN$.
In our construction, the signatures at vertices are completely defined by the geometry of $\mathcal G$ through the selection of perfect orientations, the choice of the gauge ray direction and a convenient choice of the vertex gauge.
Indeed, from Definition \ref{def:index_pair} and Lemma \ref{lemma:count_eps}, for each pair of edges $(e_i, e_{i+1})$ at either a bivalent or a black trivalent vertex $V$
\begin{equation}\label{eq:form_1}
\sigma_V (e_i, e_{i+1})= (-1)^{\epsilon_{wind} (e_i,e_{i+1})+\epsilon_{int} (e_i,e_{i+1}) },
\end{equation}
whereas
\begin{equation}\label{eq:form_2}
\sigma_V (e_i, e_{i+1})= (-1)^{ \epsilon_{wind} (e_i,e_{i+1})+\epsilon_{int} (e_i,e_{i+1}) +\epsilon_{cd} (e_i,e_{i+1})},
\end{equation}
for each $(e_i, e_{i+1})$ at a white trivalent vertex. 
We call a signature as in (\ref{eq:form_1}) and (\ref{eq:form_2}) on $(\mathcal N, \mathcal O, \mathfrak l)$ totally non--negative since the signature of edges at the boundary is uniquely defined by the geometry (see Section \ref{sec:vectors}).
We conjecture that this geometric formulation encodes all full rank sets of linear relations at vertices on $\mathcal G$ for $\S$.
If this conjecture is true, then our construction of edge vectors using gauge ray directions exhausts all signatures at vertices associated to the non-negativity property of maximal minors.

In Definition \ref{def:sign_rela} the indices $\epsilon_{cs}$ and $\epsilon_{tot}$ play the role of formal discrete differential forms. The signatures at pair of edges rule the transformation from the geometric setting encoded by $\epsilon_{wind}$, $\epsilon_{int}$ and $\epsilon_{cd}$ to the differential form setting encoded by $\epsilon_{cd}$ and $\epsilon_{tot}$.
Indeed, for each pair of edges $(e_i, e_{i+1})$ at either a bivalent or a black trivalent vertex $V$
\begin{equation}\label{eq:form_3}
\epsilon_{cs} (e_i,e_{i+1}) = \epsilon_{wind} (e_i,e_{i+1})+\epsilon_{int} (e_i,e_{i+1}) \quad (\!\!\!\!\!\!\mod 2),
\end{equation}
whereas for each $(e_i, e_{i+1})$ at a white trivalent vertex
\begin{equation}\label{eq:form_4}
\epsilon_{tot} (e_i,e_{i+1}) +\epsilon_{cs} (e_i,e_{i+1}) = \epsilon_{wind} (e_i,e_{i+1})+\epsilon_{int} (e_i,e_{i+1}) +\epsilon_{cd} (e_i,e_{i+1})\quad (\!\!\!\!\!\!\mod 2).
\end{equation}

The right hand side in (\ref{eq:form_3}) and (\ref{eq:form_4}) encodes the possible values of the signatures associated to a given graph depending on its orientation and the choice of gauge ray direction. The construction of the previous sections shows that for any soliton data $(\mathcal K, [A])$, $[A]\in \S$, and for almost every fixed normalizing time $\vec t_0$ there exists a unique solution to the system (\ref{eq:form_3}), (\ref{eq:form_4}) satisfying Definition \ref{def:sign_rela} and (\ref{eq:admis_eps}). Such solution explicitly gives the real and regular divisor configuration associated to the soliton data $(\mathcal K, [A])$ at $\vec t_0$. 

Therefore a natural question is which real and regular divisor configurations satisfying Definition \ref{def:sign_rela} and (\ref{eq:admis_eps}) are possible for a given network as we vary either $\mathcal K$ or $\vec t_0$. In particular, for given soliton data $(\mathcal K, [A])$, the dependence of $\epsilon_{tot}$ on $\vec t_0$ rules the dynamics of the zero divisor associated to the KP wave function $\hat \psi(P, \vec t)$ on $\Gamma\backslash \{ P_0\}$ and it is naturally connected to the problem of classifying realizable asymptotic soliton graphs in \cite{KW2} since the signatures rule also the admissible asymptotic positions of the KP zero divisor. Next Lemma gives some natural constraints.

One can also start from a formal assignment of indices $\epsilon_{tot}$ and $\epsilon_{cs}$ on $\mathcal G$ satisfying Definition \ref{def:sign_rela} and (\ref{eq:admis_eps}) and check whether they correspond to a KP divisor for some soliton data $(\mathcal K, [A])$ and $\vec t_0$. Next Lemma gives the constraints for the positions of the Sato divisor at time $\vec t_0$ and of the essential singularity of the wave function in the ovals. If one were able to solve (\ref{eq:form_3}) and (\ref{eq:form_4}), {\sl i.e.} construct a perfect orientation $\mathcal O=\mathcal O(I)$ and a gauge ray direction $\mathfrak{l}$, then Theorem \ref{theo:null} would allow to construct the wave function for any given soliton data $(\mathcal K, [A])$ and to check whether the position of the divisor in the ovals corresponds to $\epsilon_{tot}$ for some $\vec t_0$.
\end{remark}

\begin{lemma}\label{lem:signature}\textbf{Dependence of the vertex signature on $\vec t_0$.}
To the soliton data $(\mathcal K, [A])$ and the fixed network $\mathcal N$ of graph $\mathcal G$ representing $[A]$ as above, we associate a finite number of admissible vertex signatures satisfying both Lemmas \ref{lem:adm_1} and \ref{lem:adm_2} as $\vec t_0$ varies. All these vertex signatures share the following properties:
\begin{enumerate}
\item Exactly $k+1$ of the faces intersecting the boundary of the disk have the following property: all pairs of edges $(e_j, e_{j+1})$ at white vertices contained in the boundary of such faces satisfy $\epsilon_{tot} (e_j, e_{j+1})=0$. Moreover one of these faces has this property for all $\vec t_0$ (infinite oval); 
\item The remaining $|D|-k$ faces $f$ possess exactly one pair of edges at a white vertex $V$ in $\partial f$ of total index 1.
\end{enumerate}
Such signatures rule the possible positions of the KP zero divisor in the ovals of the corresponding curve $\Gamma$ as $\vec t_0$ varies.
\end{lemma}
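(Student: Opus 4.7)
The plan is to deduce the lemma from the structural results already established in Sections \ref{sec:anycurve} and \ref{sec:comb}, in particular Theorem \ref{theo:exist}, Corollary \ref{cor:DKP}, Theorem \ref{theo:pos_div} and Corollary \ref{cor:eps_tot}, together with the local identity Lemma \ref{lemma:count_eps}(3) at trivalent white vertices. First I would fix the setup: since $\mathcal G$ is reduced, $g=|D|$, and $\Gamma=\Gamma(\mathcal G)$ has $|D|+1$ ovals, while by (\ref{eq:vertex_type}) the number of trivalent white vertices of $\mathcal G$ is $|D|-k$. By Theorem \ref{theo:exist}, for $\vec t_0$ in the open dense set where the Sato roots are simple and the dressed edge wave functions are nonvanishing, there is a unique real regular effective KP divisor $\DKP$ of degree $g$, decomposing as the $k$-point Sato divisor $\DS\subset\Gamma_0$ plus $|D|-k$ non-Sato points $P_V$, one on each component $\Gamma_V$ indexed by a trivalent white vertex $V$.

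Next I would use Lemma \ref{lemma:count_eps}(3): at each trivalent white vertex $V$ exactly one of the three cyclically ordered pairs of incident edges carries $\epsilon_{tot}(e_i,e_{i+1})=1$, and by Corollary \ref{cor:eps_tot} this pair is intrinsic to $V$ (independent of orientation, gauge ray direction, and weight/vertex gauge, for fixed $\vec t_0$). By Theorem \ref{theo:pos_div}, that distinguished pair is exactly the one bounding the oval hosting $P_V$. Since by Corollary \ref{cor:DKP} each finite oval contains at most one non-Sato divisor point, the map sending a trivalent white vertex $V$ to the oval marked by its unique $\epsilon_{tot}=1$ pair is injective. Its image therefore consists of precisely $|D|-k$ distinct ovals, each carrying exactly one pair with $\epsilon_{tot}=1$ at a white vertex lying in its boundary; this is item (2).

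For item (1), the remaining $|D|+1-(|D|-k)=k+1$ ovals are by construction those carrying no pair with $\epsilon_{tot}=1$ at any white vertex in their boundary. By Corollary \ref{cor:DKP} these are exactly the $k$ finite ovals containing a Sato divisor point together with the single infinite oval containing $P_0$; all of them intersect $\Gamma_0$ and hence the boundary of the disk, and the infinite oval is fixed for all $\vec t_0$ because $P_0$ does not depend on $\vec t_0$. Finiteness of admissible signatures follows because a signature is determined at each white vertex by which of the three adjacent pairs carries the value $1$, yielding at most $3^{|D|-k}$ a priori possibilities, of which only those compatible with the global oval constraint (\ref{eq:admis_eps}) of Lemma \ref{lem:adm_1} actually occur. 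The concluding sentence on the KP zero divisor is then an immediate consequence of Definition \ref{def:sign_rela} and Theorem \ref{theo:pos_div}, since the same localization mechanism governs the pole divisor $\DKP$ at arbitrary $\vec t_0$ and hence, by the normalization $\hat\psi(P,\vec t_0)\equiv 1$, the zero divisor at any other fixed time.

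The main obstacle is essentially verificational rather than technical: the one nontrivial point is the injectivity of the marking map from white vertices to ovals, for which I would invoke directly the uniqueness part of Corollary \ref{cor:DKP}. All the remaining assertions are combinatorial bookkeeping on top of the geometric picture already established, so the proof reduces to an organized counting argument. If one preferred a fully self-contained presentation of the finiteness statement, the alternative route would be to argue that the dependence of $\epsilon_{tot}$ on $\vec t_0$ can only change across the codimension-one walls where a divisor point crosses a double point of $\Gamma$; the complement of these walls is a finite disjoint union of cells in the time space, on each of which the signature is constant.
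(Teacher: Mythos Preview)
Your proposal is correct and matches the paper's implicit approach: the paper states this lemma without an explicit proof, treating it as a direct consequence of the structural results of Sections~\ref{sec:anycurve}--\ref{sec:comb} (in particular Theorem~\ref{theo:exist}, Theorem~\ref{theo:pos_div}, Corollary~\ref{cor:eps_tot}, Corollary~\ref{cor:DKP}, and Lemma~\ref{lemma:count_eps}(3)), which is exactly what you invoke. Your counting argument---$|D|-k$ trivalent white vertices each mark a distinct finite oval via the unique $\epsilon_{tot}=1$ pair, leaving $k+1$ ovals (the $k$ Sato ovals plus the infinite one, all meeting $\Gamma_0$) unmarked---is the intended one.
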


Finally, let us we summarize the main constructions and restate the main Theorems in the framework of the half-edge dressed wave function.
In the setting of Definition \ref{def:sign_rela}, we have shown that for any $I \in{\mathcal M}$, any orientation $\mathcal O =\mathcal O(I)$ and any gauge ray direction $\mathfrak l$, the linear system at the vertices has an unique solution, and moreover, given the boundary conditions 
\begin{equation}\label{eq:lam_bound}
z_{(b_j,e)} \equiv z_{(b_j,e)} (\vec t) = {\mathfrak D}^{(k)}\phi^{(0)} (\kappa_j; \vec t), \quad\quad j\in \bar I,
\end{equation}
where ${\mathfrak D}^{(k)}\phi^{(0)} (\zeta; \vec t)$ is the Sato wave function defined in (\ref{eq:Satowf}), the solution also satisfies 
\[
z_{(b_i,e)} \equiv z_{(b_i,e)} (\vec t) = {\mathfrak D}^{(k)}\phi^{(0)} (\kappa_i; \vec t), \quad\quad i\in  I,
\]
where we adopt the convention that $z_{(b_i,e)}=-z_{(V_i,e)}$ at boundary sources, since $z_{(V_i,e)}= -{\mathfrak D}^{(k)}\phi^{(0)} (\kappa_i; \vec t)$ by Corollary \ref{cor:bound_source}.
Therefore $z_{(V,e)}$ is the value of the non--normalized dressed wave function at each half-edge in the network for the soliton data $(\mathcal K, [A])$.

By construction (see Section \ref{sec:vertex_wavefn_general_case}) there always exists $\vec t_0$ such that $z_{(V,e)}(\vec t_0) \not =0$ for all pair of half edges $(U,e)$ in $\mathcal N$. This is a consequence of the fact that we work with reduced networks which admit an acyclic orientation and therefore do no possess null edge vectors. Therefore, the normalized KP wave function  
\[
\hat z_{(V,e)}(\vec t) = \frac{z_{(V,e)}(\vec t)}{z_{(V,e)}(\vec t_0)},
\] 
takes the same value at each double point on the curve $\Gamma$ for all times, {\sl i.e. }
\[
\hat z_{(V,e)}(\vec t) = \hat z_{(U,e)}(\vec t), \quad\quad \forall e=(U,V).
\]
Finally we may reconstruct the real regular divisor for the soliton data $(\mathcal K, [A])$ on the reducible curve $\Gamma$ using the half--edge wave function.

Let $\DS = \DS (\vec t_0)$ be the Sato divisor for the soliton data $(\mathcal K, [A])$ as in Definition \ref{def:Sato_data}. For any trivalent white vertex $V$ let us define a real number $\gamma_V$ by
\[
\gamma_V = \frac{(-1)^{\epsilon_{tot} (e_2,e_3)-\epsilon_{cs} (e_2,e_3)} z_{(V,e_1)}}{(-1)^{\epsilon_{tot} (e_1,e_2)-\epsilon_{cs} (e_1,e_2)  } z_{(V,e_3)}},
\]
where edges are labeled anticlockwise. Let $\Gamma_V\subset \Gamma$ be the component corresponding to $V$, and $P_V^{(m)}$, $m\in[3]$ be three points in $\Gamma_V$ corresponding to the edges $e_m$ at $V$, and $\zeta$ be the coordinate on $\Gamma_V$ such that
\[
\zeta(P_V^{(1)}) =0, \quad\quad \zeta(P_V^{(2)}) =1, \quad\quad 
\zeta(P_V^{(3)}) =\infty,
\]
and let $P_V \in \Gamma_V$ be such that $\gamma_V$ is its $\zeta$--coordinate,
\[
\gamma_V = \zeta (P_V).
\]
Then 
\[
\DKP = \DS \cup \{ P_V \, : \, V \mbox{ trivalent white vertex in } \mathcal G \}
\]
is the real and regular KP divisor of degree $d=\dim (\S)$, associated to the soliton data $(\mathcal K, [A])$ on $\Gamma$.

Therefore, Theorems~\ref{theo:consist}, \ref{theo:pos_div} and \ref{prop:comb_oval} and Corollaries \ref{cor:eps_tot} and \ref{cor:DKP} can be restated as follows:

\begin{theorem}\label{theo:sig_rela}\textbf{Signatures generating totally non-negative positroid cells and real regular KP divisors.}
Let $\mathcal G$ be a reduced PBDTP graph representing an irreducible positroid cell $\S \in \GTNN$. Let $\mathcal N$ be a network of graph $\mathcal G$. Then:
\begin{enumerate}
\item Any choice of orientation and gauge ray direction $(\mathcal O, \mathfrak{l})$ on $\mathcal N$ induces a vertex system of relations of maximal rank associated to the same point $[A]\in\S$ which fulfils Definitions \ref{def:sign_rela} and \ref{def:vertex_sign}, and Lemmas \ref{lem:adm_1} and \ref{lem:adm_2};
\item Let $z_{(V,e)}$ be the half-edge dressed wave function for the soliton data $(\mathcal K, [A])$. Then the system of relations at vertices induces a divisor point $P_V$ on each component $\Gamma_V$ corresponding to a trivalent white vertex $V$ such that $P_V$ belongs to the oval $\Omega_s$, $s\in \dim (\S)$, of $\Gamma$ corresponding to the face $f_s$ in $\mathcal N$ bounded by the pair of edges of index $\epsilon_{tot}=1$. Finally, each (finite) oval contains exactly one divisor point,
$\# (\Omega_s \cap \DKP ) = 1$, $s\in \dim (\S)$, whereas the remaining (infinite) oval contains the essential singularity of the KP wave function. 
\end{enumerate}
\end{theorem}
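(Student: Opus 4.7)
My plan is to observe that Theorem \ref{theo:sig_rela} is essentially a \emph{translation theorem}: it repackages the maximal-rank statement (Theorem \ref{theo:consist}), the positional statement (Theorem \ref{theo:pos_div} and Corollary \ref{cor:eps_tot}), and the counting statement (Theorem \ref{prop:comb_oval} and Corollary \ref{cor:DKP}) into the orientation-free, gauge-ray-free half-edge language of Definitions \ref{def:sign_rela} and \ref{def:vertex_sign}. So I would not prove anything new from scratch; I would instead set up the correct dictionary between $(E_e,\Psi_e)$ on $(\mathcal N,\mathcal O,\mathfrak l)$ and the half-edge variables $z_{(V,e)}$, and then check that each half-edge relation is exactly one of the identities already derived, with indices $\epsilon_{cs},\epsilon_{tot}$ matching those of Definition \ref{def:index_pair}.

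For part (1), I would first fix $(\mathcal O,\mathfrak l)$, assign to each half-edge $(V,e)$ the value $z_{(V,e)}:=\Psi_{e,\mathcal O,\mathfrak l}$ when $V$ is the head of $e$ and $z_{(V,e)}:= w_e\Psi_{e,\mathcal O,\mathfrak l}$ at the tail (so that \eqref{eq:lam_edge} is tautologically satisfied), and then rewrite the bivalent/black relations \eqref{eq:lineq_biv}--\eqref{eq:lineq_black} and the white relation \eqref{eq:lineq_white} in the invariant form \eqref{eq:lam_black}--\eqref{eq:lam_white}. The key identity to verify, edge by edge, is that the sign $(-1)^{\epsilon_{cs}(e_i,e_{i+1})}$ produced by the bookkeeping matches exactly the one given by Definition \ref{def:index_pair}; this is \eqref{eq:tot_biv}--\eqref{eq:tot_triv_bl} at bivalent/black vertices, and the analogous calculation at trivalent white vertices, where the extra shift $\epsilon_{tot}-\epsilon_{cs}=\epsilon_{cd}+\epsilon_{int}+\epsilon_{wind}\pmod 2$ comes out of the geometric definition of $\epsilon_{tot}$. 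Conditions \eqref{eq:lam_cs} and \eqref{eq:lam_tot} (cyclic sums; exactly one nontrivial $\epsilon_{tot}$) are then Lemma \ref{lemma:count_eps}, so Definition \ref{def:sign_rela} and Definition \ref{def:vertex_sign} hold. Maximal rank is Theorem \ref{theo:consist}. Lemma \ref{lem:adm_1} (compatibility \eqref{eq:admis_eps} around every internal oval) follows from Lemma \ref{lemma:count_face}, after noting that the interior summations in \eqref{eq:eps_face_s} become the half-edge cyclic conditions once the boundary-sink/source contributions are absent. Finally, Lemma \ref{lem:adm_2} is precisely Corollary \ref{cor:eps_tot}.

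For part (2), the divisor point $P_V$ with local coordinate $\gamma_V = \zeta(P_V)$ is read off the linear relation at the trivalent white vertex $V$, and the formula for $\gamma_V$ in the invariant half-edge language is exactly the one that was computed in \eqref{eq:formula_div} in the orientation-dependent setting; so $P_V$ coincides with the KP divisor point produced by Definition \ref{def:DKP}, which is independent of $(\mathcal O,\mathfrak l)$ by Theorem \ref{theo:inv}. The localization of $P_V$ in the oval $\Omega_s$ bounded by the unique pair of edges at $V$ with $\epsilon_{tot}=1$ is Theorem \ref{theo:pos_div} together with the uniqueness clause of \eqref{eq:lam_tot}. The count $\#(\Omega_s\cap \DKP)=1$ in every finite oval, and the absence of KP divisor points in the infinite oval containing $P_0$, are provided by Corollary \ref{cor:DKP} (in the form $\nu''_s+\rho_s=1$ for finite ovals and $\nu''_s+\rho_s=k$ at the infinite oval, coupled with $\deg \DKP = g$).

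The only genuinely delicate bookkeeping step -- and the place I would expect to spend most of the proof-writing effort -- is checking that the invariant signatures $\sigma_V$ at a trivalent white vertex take only the two admissible patterns $(+,+,-)$ and $(-,-,-)$ (up to cyclic relabeling), and not, e.g., $(+,-,+)$. This is where the combinatorial identity in the third item of Lemma \ref{lemma:count_eps}, namely $\epsilon_{tot}(e_1,e_2)+\epsilon_{tot}(e_2,e_3)+\epsilon_{tot}(e_3,e_1)\equiv 1\pmod 2$ together with the product condition $\epsilon_{tot}(e_i,e_{i+1})\epsilon_{tot}(e_{i+1},e_{i+2})=0$, must be invoked very carefully to guarantee that the three half-edge relations at $V$ really reduce to a single non-degenerate scalar relation of the form \eqref{eq:lam_white} with one of the three coefficients opposite in sign to the other two. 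Once this consistency is established, the two parts of the theorem follow by the translations sketched above with no additional work.
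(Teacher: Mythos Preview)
Your proposal is correct and matches the paper's approach exactly: the paper presents Theorem \ref{theo:sig_rela} explicitly as a restatement of Theorems \ref{theo:consist}, \ref{theo:pos_div}, \ref{prop:comb_oval} and Corollaries \ref{cor:eps_tot}, \ref{cor:DKP} in the half-edge language, with the dictionary between $\Psi_e$ and $z_{(V,e)}$ and the formulas \eqref{eq:form_3}--\eqref{eq:form_4} spelled out in the discussion immediately preceding the theorem rather than in a separate proof. Your identification of which earlier result supplies each clause (Lemma \ref{lemma:count_eps} for the vertex constraints \eqref{eq:lam_cs}--\eqref{eq:lam_tot}, Lemma \ref{lemma:count_face} for the oval compatibility \eqref{eq:admis_eps}, Corollary \ref{cor:eps_tot} for the invariance of $\epsilon_{tot}$, Theorem \ref{theo:pos_div} for the localization of $P_V$, and Corollary \ref{cor:DKP} for the one-per-oval count) is precisely the intended structure.
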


\section{Effect of moves and reductions on curves, edge vectors and divisors}\label{sec:moves_reduc}

In \cite{Pos} it is introduced a set of local transformations - moves and reductions - on planar bicolored networks in the disk which leave invariant the boundary measurement map. Therefore two networks in the disk connected by a sequence of such moves and reductions represent the same point in $\GTNN$. 
There are three moves: 
\begin{enumerate}
\item[(M1)] The square move (see Figure \ref{fig:squaremove});
\item[(M2)] The unicolored edge contraction/uncontraction (see Figure \ref{fig:unicolor});
\item[(M3)] The middle vertex insertion/removal (see Figure \ref{fig:middle});
\end{enumerate}
and three reductions:
\begin{enumerate}
\item[(R1)] The parallel edge reduction (see Figure \ref{fig:parall_red_poles});
\item[(R2)] The dipole reduction (see Figure \ref{fig:dip_leaf}[left]);
\item[(R3)] The leaf reduction (see Figure \ref{fig:dip_leaf}[right]).
\end{enumerate}
In our construction each such transformation induces a well defined change in both the curve, the system of edge vectors and the (vacuum or dressed) divisor which we describe below.

In the following, we restrict ourselves to PBDTP networks and, without loss of generality, we fix both the orientation and the gauge ray direction. Indeed changes of orientation or of gauge direction produce effects on both the system of vectors, the edge wave functions and the vacuum and dressed divisors which are completely under control in view of the results of Section \ref{sec:vectors}.

We label with the same indices vertices, edges, faces in the network corresponding to components, double points, ovals in the curve. We denote $({\mathcal N}, \mathcal O, \mathfrak l)$ the initial oriented 
network and $({\tilde {\mathcal N}}, {\tilde{\mathcal O}}, \mathfrak l)$ the oriented network obtained from it by applying one move (M1)--(M3) or one reduction 
(R1)--(R3). We assume that the orientation ${\tilde {\mathcal O}}$ coincides with $\mathcal O$ at all edges except at those involved in the move or reduction where we use Postnikov rules to assign the orientation. We denote with the same symbol and a tilde any quantity referring to the transformed network or the transformed curve. For instance, $g$, $\Omega_s$, $s\in [0, g]$ and ${\tilde g}$, ${\tilde \Omega}_s$, $s\in [0, {\tilde g}]$ respectively denote the genus and the ovals in the initial and transformed curves. 
To simplify notations, we use the same symbol $\gamma_l$, respectively ${\tilde \gamma}_l$ for the divisor number and the divisor point before and after the transformation.

\subsection{(M1) The square move}\label{sec:square}

If a network has a square formed by four trivalent vertices
whose colors alternate as one goes around the square, then one can switch the colors of these
four vertices and transform the weights of adjacent faces as shown in Figure \ref{fig:squaremove}[left].

The relation between the face weights before and after the square move is \cite{Pos}
\[
{\tilde f}_5 = (f_5)^{-1}, \quad {\tilde f}_1 = \frac{f_1}{1+1/f_5}, \quad {\tilde f}_2 = f_2 (1+f_5), \quad {\tilde f}_3 = f_3 (1+f_5),\quad {\tilde f}_4 = \frac{f_4}{1+1/f_5},
\]
so that the relation between the edge weights with the orientation in Figure \ref{fig:squaremove} is
\begin{equation}\label{eq:alpha}
{\tilde \alpha}_1 = \frac{\alpha_3\alpha_4}{{\tilde\alpha}_2}, 
\quad\quad {\tilde \alpha}_2 = \alpha_2 + \alpha_1\alpha_3\alpha_4, \quad\quad  {\tilde \alpha}_3 = \frac{\alpha_2\alpha_3}{{\tilde \alpha}_2}, \quad\quad \displaystyle {\tilde \alpha}_4 = \frac{\alpha_1\alpha_3}{{\tilde \alpha}_2}.
\end{equation}
The system of equations on the edges outside the square is the same before and after the move and also the boundary
conditions remain unchanged. The uniqueness of the solution implies that all vectors outside the square 
including $E_1$, $E_2$, $E_3$, $E_4$ remain the same. In the following Lemma, $e_j, h_k$ respectively are the edges carrying the vectors $E_j,F_k$ in the initial configuration. For instance, $\i(h_1)$ is the number of intersections of gauge rays with the edge carrying the vector $F_1$ in the initial configuration, whereas $\w(-h_1,h_2)$ is the winding number of the pair of edges carrying the vectors $\tilde F_1, \tilde F_2$ after the move because the edge $h_1$ has changed of versus.

\begin{figure}%[H]
  \centering{\includegraphics[width=0.45\textwidth]{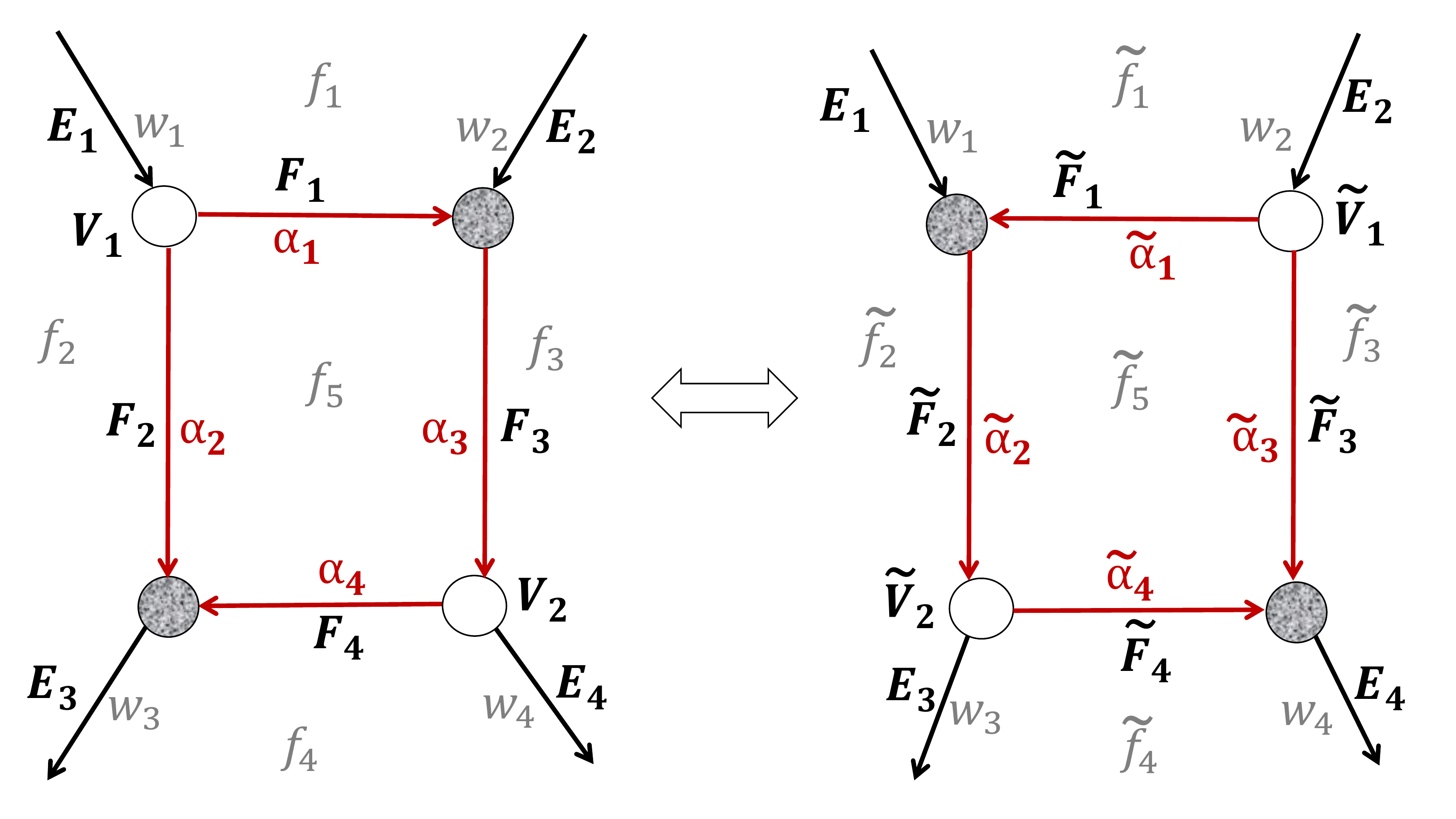}
	\hfill
	\includegraphics[width=0.45\textwidth]{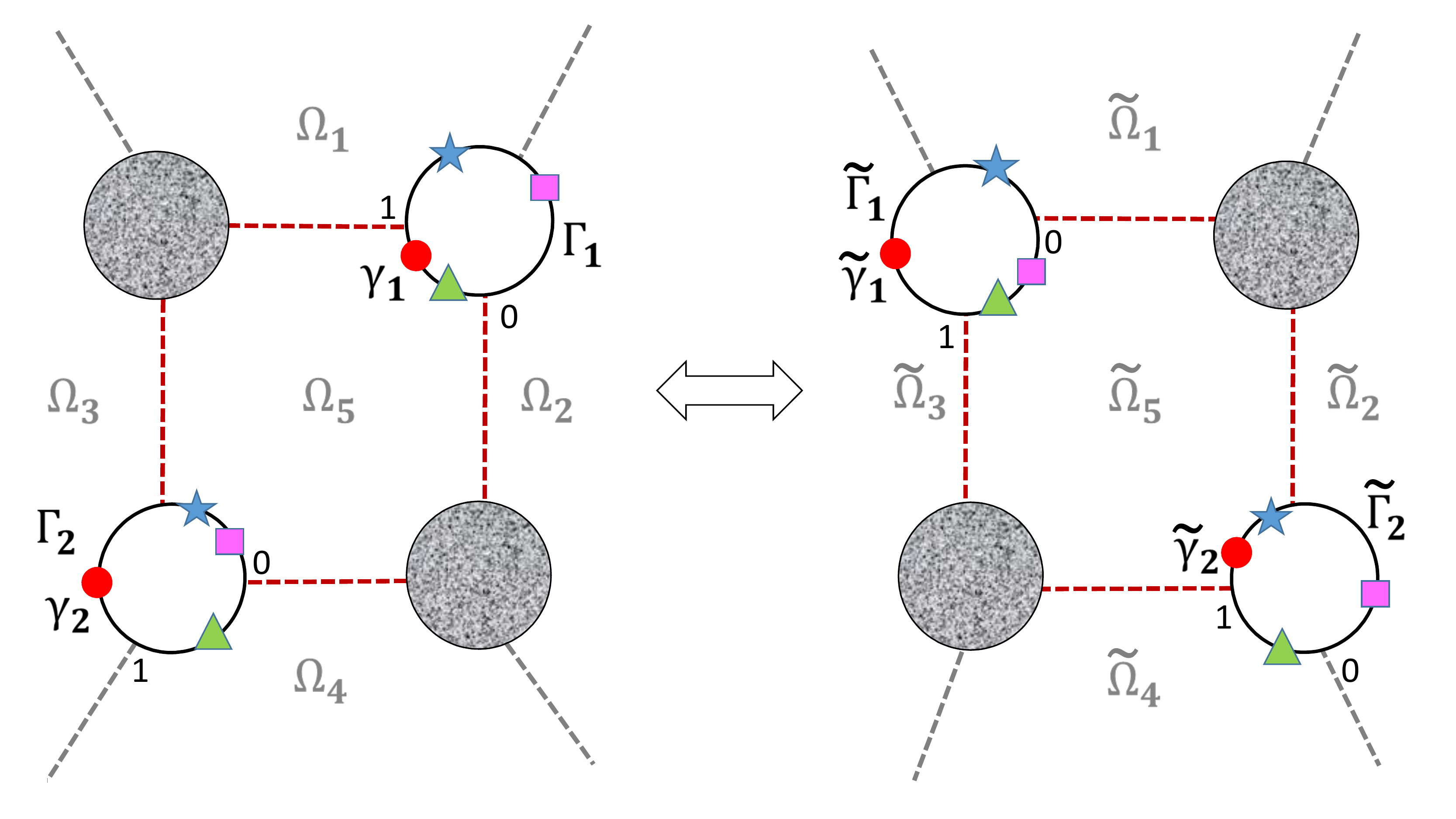}}
  \caption{\small{\sl The effect of the square move [left] on the possible configurations of vacuum or dressed divisor points [right].}\label{fig:squaremove}}
\end{figure}

\begin{lemma}\label{lem:square_edge_vectors}\textbf{The effect of the square move on the edge vectors}
The relations between the system of vectors $F_j$,${\tilde F}_j$ and $E_j$, $j\in [4]$, for the orientation in Figure \ref{fig:squaremove} are
\begin{equation}\label{eq:move1}
\begin{array}{l}
\displaystyle F_1 = (-1)^{\i(h_1)+\w(h_1,h_3)} \alpha_1 F_3, \quad F_2 = (-1)^{i(h_2)+\w(h_2,e_3)} \alpha_2E_3,\quad F_4 = (-1)^{\i(h_4)+\w(h_4,e_3)} \alpha_4 E_3,
\\
F_3 = (-1)^{\i(h_3)} \alpha_3 \left( (-1)^{\i(h_4)+\w(h_3,h_4)+\w(h_4,e_3)}\alpha_4 E_3 + (-1)^{\w(h_3,e_4)} E_4  \right),
\end{array}
\end{equation}
\begin{equation}\label{eq:move2}
\begin{array}{l}
{\tilde F}_1 =  (-1)^{\i(h_1)+\w(-h_1,h_2)} {\tilde F}_2, \quad {\tilde F}_3 =  (-1)^{i(h_3)+\w(h_3,e_4)} {\tilde \alpha}_3 E_4, \quad {\tilde F}_4 =  (-1)^{\i(h_4)+\w(-h_4,e_4)} {\tilde \alpha}_4 E_4, \\
{\tilde F}_2 =  (-1)^{\i(h_2)} {\tilde \alpha}_2 \left( \,  (-1)^{\w(h_2,e_3)} E_3 + (-1)^{\i(h_4)+\w(h_2,-h_4)+\w(-h_4,e_4)} {\tilde \alpha}_4  E_4  \right),
\end{array}
\end{equation}
where we have used the abridged notations $i(e)=\mbox{int}(e)$, $w(e,h)=\mbox{wind}(e,h)$.
\end{lemma}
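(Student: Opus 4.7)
My plan is to apply Theorem \ref{theo:consist} (uniqueness of the system of edge vectors for prescribed boundary data) to both networks $(\mathcal N, \mathcal O, \mathfrak l)$ and $(\tilde{\mathcal N}, \tilde{\mathcal O}, \mathfrak l)$. Since by Postnikov \cite{Pos} the square move together with the weight rule (\ref{eq:alpha}) preserves the boundary measurement map, both networks represent the same $[A]\in\S$ with the same boundary conditions at the boundary sinks. Only the four internal square vertices and the four internal square edges are modified, so the external subgraph, its weights, and the gauge-ray configuration on it are unchanged.

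The formulas (\ref{eq:move1}) and (\ref{eq:move2}) are each obtained simply by writing the linear relations of Lemma \ref{lem:relations} at the four internal vertices of the square: three of these vertices are trivalent black and produce the proportionality relations for $F_1$, $F_2$, $F_4$ (respectively $\tilde F_1$, $\tilde F_3$, $\tilde F_4$), while the remaining trivalent white vertex produces the sum decomposition (\ref{eq:lineq_white}) for $F_3$ (respectively $\tilde F_2$). The winding and intersection exponents follow directly from the rules of Definition \ref{def:winding_pair} applied to the pairs of square edges, noting that after the color flip the perfect-orientation requirement forces the versus of three of the four square edges to be reversed (hence the appearance of $-h_1$, $-h_4$ in (\ref{eq:move2})). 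Thus both sets of formulas are forced once the external vectors $E_1,E_2,E_3,E_4$ are known.

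The substantive step is then to verify that $E_1,E_2,E_3,E_4$ indeed take the same values before and after the move. For this I would construct a candidate solution on $(\tilde{\mathcal N},\tilde{\mathcal O},\mathfrak l)$: retain every external edge vector from $\mathcal N$, and define the four internal vectors $\tilde F_j$ by (\ref{eq:move2}) with the transformed weights $\tilde\alpha_i$ given by (\ref{eq:alpha}). One must then check that this assignment satisfies the linear relation at each external corner vertex adjacent to the square, where the relation couples some $E_j$ to some $\tilde F_j$. By Theorem \ref{theo:consist} this check, combined with the fact that the linear relations at the four internal square vertices are satisfied by construction, forces the candidate solution to coincide with the true one on $\tilde{\mathcal N}$, which yields the lemma.

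The main obstacle is the combinatorial bookkeeping of the winding and intersection indices at the four external corners as the edge orientations on the square reverse. This reduces to a finite case analysis: for the fixed orientation of Figure \ref{fig:squaremove}, one compares $\mbox{wind}$ and $\mbox{int}$ contributions edge by edge, and shows that the identity $\tilde\alpha_2 = \alpha_2+\alpha_1\alpha_3\alpha_4$ in (\ref{eq:alpha}) is exactly what is needed to match the two expansions of $F_3$ in (\ref{eq:move1}) with the corresponding expansion of $\tilde F_2$ in (\ref{eq:move2}) once one substitutes for the black-vertex vectors. The remaining identities $\tilde\alpha_1\tilde\alpha_2=\alpha_3\alpha_4$, $\tilde\alpha_3\tilde\alpha_2=\alpha_2\alpha_3$, $\tilde\alpha_4\tilde\alpha_2=\alpha_1\alpha_3$ then ensure compatibility of the three black-vertex proportionalities across the move.
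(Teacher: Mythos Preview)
Your proposal is correct and follows the same route the paper gestures at: the paper states the proof is ``straightforward and is omitted'', having already noted just before the lemma that the external linear system and boundary conditions are unchanged, so by the uniqueness in Theorem~\ref{theo:consist} the vectors $E_1,E_2,E_3,E_4$ coincide before and after the move, whence (\ref{eq:move1}) and (\ref{eq:move2}) are nothing but the vertex relations of Lemma~\ref{lem:relations} at the four square vertices. Your construction of a candidate solution on $\tilde{\mathcal N}$ and verification at the corner vertices is exactly how one makes the paper's uniqueness appeal rigorous.
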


The proof is straightforward and is omitted.
\begin{table}
\caption{The effect of the square move on the dressed (vacuum) divisor} 
\centering
\begin{tabular}{|c|c|c|c|}
\hline\hline
Position of poles in $\Gamma^{(1)}$ & Position of poles in $\Gamma^{(2)}$ & Symbol for divisor point & Range of parameter\\[0.5ex]
\hline
$\gamma_1 \in \Omega_5, \;  \gamma_2 \in \Omega_4$    
& ${\tilde \gamma}_1 \in {\tilde \Omega}_5, \;  {\tilde \gamma}_2 \in {\tilde \Omega}_4$ & $\triangle$ & $\psi_0 > 0$\\
$\gamma_1 \in \Omega_5, \;  \gamma_2 \in \Omega_2$    
& ${\tilde \gamma}_1 \in {\tilde \Omega}_2, \;  {\tilde \gamma}_2 \in {\tilde \Omega}_5$  
& $\bigcircle$   & $-\alpha_4 < \psi_0 < 0$\\
$\gamma_1 \in \Omega_1, \;  \gamma_2 \in \Omega_5$    
& ${\tilde \gamma}_1 \in {\tilde \Omega}_1, \;  {\tilde \gamma}_2 \in {\tilde \Omega}_5$ 
& $\largestar$  & $- ({\tilde \alpha}_4)^{-1} < \psi_0 < -\alpha_4$\\
$\gamma_1 \in \Omega_3, \;  \gamma_2 \in \Omega_5$    
& ${\tilde \gamma}_1 \in {\tilde \Omega}_5, \;  {\tilde \gamma}_2 \in {\tilde \Omega}_3$   
& $\square$  & $\psi_0 < - ({\tilde \alpha}_4)^{-1}$\\[1ex]
\hline
\end{tabular}
\label{table:SM}
\end{table}

Using Lemma \ref{lem:square_edge_vectors}, $\alpha_2 < {\tilde \alpha}_2$, $\alpha_4 < ({\tilde \alpha}_4)^{-1}$ and
\[
\begin{array}{c}
\mbox{wind}(e_1,h_1)+\mbox{wind}(h_1,h_3)+\mbox{wind}(h_3,e_4) =\mbox{wind}(e_1,h_2)+\mbox{wind}(h_2,-h_4)+\mbox{wind}(-h_4,e_4) \quad (\!\!\!\!\!\!\mod 2),\\ 
\mbox{wind}(e_1,h_1)+\mbox{wind}(h_1,h_3)+\mbox{wind}(h_3,h_4)+\mbox{wind}(h_4,e_3) =\mbox{wind}(e_1,h_2)+\mbox{wind}(h_2,e_3) \quad (\!\!\!\!\!\!\mod 2), \\
\sum_{i=1}^4\mbox{int}(h_i) =0 \quad (\!\!\!\!\!\!\mod 2),
\end{array}
\]
it is easy to verify the following relation between the (vacuum or dressed) divisor points before and after the square move in the local coordinates induced by the chosen orientation (see Figure \ref{fig:squaremove}).

\begin{corollary}\textbf{The effect of the square move on the position of the divisor}
Let the local coordinates on $\Gamma_i$, ${\tilde \Gamma_i}$, $i=1,2$, be as in Figure \ref{fig:squaremove} and let
\[
\psi_0 = (-1)^{\mbox{int}(h_4) + \mbox{wind}(h_3,e_4)-\mbox{wind}(h_3,h_4)-\mbox{wind}(h_4,e_3)} \frac{\Psi_{e_4} (\vec t_0)}{\Psi_{e_3} (\vec t_0)},
\]
where $\Psi_{e_j} (\vec t_0)$ is the value of the dressed (vacuum) edge wave function at the edges $e_j$, $j=3,4$. Then
\[
\gamma_1 = \frac{\alpha_2\tilde\alpha_2^{-1}}{1+{\tilde \alpha}_4 \psi_0}, \quad\quad \gamma_2 = \frac{\alpha_4}{\alpha_4 + \psi_0},\quad\quad {\tilde \gamma}_1 = \frac{\alpha_4 (1 + {\tilde \alpha}_4 \psi_0)}{\alpha_4 + \psi_0},\quad\quad {\tilde \gamma}_2 = \frac{1}{1+{\tilde \alpha}_4 \psi_0},
\]
and the position of the divisor points in the ovals depends on $\psi_0$ as shown in Table \ref{table:SM}. In particular,
there is exactly one (vacuum or dressed) divisor point in $(\Gamma_1 \cup \Gamma_2)\cap\Omega_5$, $({\tilde \Gamma}_1 \cup {\tilde \Gamma}_2)\cap{\tilde \Omega}_5$.
\end{corollary}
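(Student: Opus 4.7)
The plan is to reduce the statement to a direct computation using Lemma \ref{lem:square_edge_vectors} and the divisor formula (\ref{eq:dress_pole_def}), followed by a finite case analysis on the sign of $\psi_0$. First I would translate the two systems of vector relations (\ref{eq:move1}) and (\ref{eq:move2}) into systems for the edge wave functions by the substitution $F_j \mapsto \Psi_{h_j}(\vec t)$, $\tilde F_j \mapsto \tilde\Psi_{h_j}(\vec t)$, $E_j \mapsto \Psi_{e_j}(\vec t)$, specialize at $\vec t=\vec t_0$, and plug these into (\ref{eq:dress_pole_def}) (equivalently (\ref{eq:vac_pole_def}) for the vacuum case) at each of the four trivalent white vertices $V_1, V_2, \tilde V_1, \tilde V_2$. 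The three parity identities stated just before the corollary (two winding identities and the identity $\sum_{i=1}^4\mbox{int}(h_i)\equiv 0\pmod 2$) are exactly what is needed to cancel all explicit signs $(-1)^{\mbox{int}(\cdot)+\mbox{wind}(\cdot,\cdot)}$ coming from (\ref{eq:dress_pole_def}) and collapse the expression into the single ratio $\psi_0$ together with the edge weights $\alpha_2,\alpha_4,\tilde\alpha_2,\tilde\alpha_4$.

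Concretely, at $V_2$ one uses the relation $F_2 = (-1)^{i(h_2)+\w(h_2,e_3)} \alpha_2 E_3$ in (\ref{eq:move1}) together with the linear relation at $V_2$ (which involves $E_3, E_4$) to produce the stated expression for $\gamma_2$, and similarly at $V_1$ using $F_1$, $F_3$ (noting that $F_3$ expands as a combination of $E_3$ and $E_4$). The same routine at $\tilde V_1, \tilde V_2$ using (\ref{eq:move2}) yields the formulas for $\tilde\gamma_1, \tilde\gamma_2$. The bookkeeping is the first technical point: one must verify the sign identities above in the three distinct topological cases for how the gauge line $\mathfrak l$ may pierce the square (these cases are all handled uniformly by the additivity of the winding number on ordered pairs), so that the final answer is a rational function of $\psi_0$ free of signs.

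Second I would perform the case analysis, using only the two elementary facts that $0 < \alpha_4 < \tilde\alpha_4^{-1}$ (from (\ref{eq:alpha}) and positivity of $\alpha_1,\alpha_2,\alpha_3$) and that $\alpha_2 < \tilde\alpha_2$. Using Definition \ref{def:loccoor} and Construction \ref{def:gamma}, the three ovals bounded by $\Gamma_2$ correspond respectively to the open intervals $(-\infty,0), (0,1), (1,+\infty)$ of the real $\zeta$-axis, and analogously for $\Gamma_1, \tilde\Gamma_1, \tilde\Gamma_2$; the correspondence with the labels $\Omega_1,\ldots,\Omega_5$ and $\tilde\Omega_1,\ldots,\tilde\Omega_5$ is fixed by Figure~\ref{fig:squaremove} and the mirror rule of Construction \ref{def:gamma}. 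Splitting the real $\psi_0$-line at the critical values $0$, $-\alpha_4$ and $-\tilde\alpha_4^{-1}$ gives four subcases, and in each subcase one reads off directly from the four formulas the sign of $\gamma_i$, $\gamma_i-1$, $\tilde\gamma_i$, $\tilde\gamma_i-1$ and hence the oval to which $\gamma_i, \tilde\gamma_i$ belong. This matches Table~\ref{table:SM}.

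Finally the invariance $\#\big((\Gamma_1\cup\Gamma_2)\cap\Omega_5\big) = \#\big((\tilde\Gamma_1\cup\tilde\Gamma_2)\cap\tilde\Omega_5\big) = 1$ is an immediate inspection of Table~\ref{table:SM}: in each of the four ranges of $\psi_0$ exactly one row has a divisor point in $\Omega_5$ on the left side and exactly one on the right side. Equivalently, this invariance follows a priori from Corollary~\ref{cor:DKP} since $\Omega_5$ and $\tilde\Omega_5$ are finite ovals (the remaining ovals $\Omega_1,\ldots,\Omega_4$ coincide setwise with $\tilde\Omega_1,\ldots,\tilde\Omega_4$ and each may receive at most one divisor point), so the table is forced by the requirement that each finite oval contain exactly one KP divisor point. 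The main obstacle is the sign-tracking in the first step: collecting the contributions of $\mbox{int}(h_i)$ and $\mbox{wind}$-terms from the four relations of Lemma~\ref{lem:square_edge_vectors} and checking that the three listed parity identities exactly kill them is the only place where care is required; once this is done the rest is an algebraic and combinatorial routine.
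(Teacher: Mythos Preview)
Your proposal is correct and follows essentially the same approach as the paper: the paper states (just before the corollary) that the result follows from Lemma~\ref{lem:square_edge_vectors}, the inequalities $\alpha_2<\tilde\alpha_2$, $\alpha_4<(\tilde\alpha_4)^{-1}$, and the three listed parity identities, and you spell out precisely how these ingredients are combined via the divisor formula~(\ref{eq:dress_pole_def}) followed by the four-case analysis in~$\psi_0$. Your alternative remark that the final invariance statement also follows a priori from Corollary~\ref{cor:DKP} is a nice observation not made explicit in the paper.
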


The square move leaves the number of ovals invariant, eliminates the divisor points $\gamma_1,\gamma_2$ and creates the divisor points ${\tilde \gamma}_1$, ${\tilde \gamma}_2$. We summarize such properties in the following Lemma.

\begin{lemma}\label{lemma:poles_move1}\textbf{The effect of the square move (M1) on the curve and the divisor} 
Let ${\tilde {\mathcal N}}$ be obtained from ${\mathcal N}$ via move (M1). Let ${\mathcal D} ={\mathcal D}({\mathcal N})$,  ${\tilde {\mathcal D}} ={\mathcal D}({\tilde {\mathcal N}})$ respectively  be the dressed (vacuum) network divisor before and after the square move.
Then
\begin{enumerate}
\item ${\tilde g}=g$, and the number of ovals is invariant;
\item The number of dressed (vacuum) divisor points is invariant in every oval: ${\tilde \nu}_{l} =\nu_{l}$, $l\in [0, g]$;
\item ${\tilde {\mathcal D}} = \left( {\mathcal D} \backslash \{ \gamma_1, \gamma_2 \} \right) \cup \{ {\tilde \gamma}_1, {\tilde \gamma}_2 \}$, where
$\gamma_l$ (respectively $\tilde\gamma_l$), $l=1,2$, is the divisor point on $\Gamma_l$ (respectively  $\tilde\Gamma_l$), the component of $\mathbb{CP}^1$ associated to the white vertex $V_l$ (respectively $\tilde V_l$) involved in the square move transforming ${\mathcal N}$ into $\tilde{\mathcal N}$;
\item Either $\gamma_l$, $\tilde\gamma_l$, $l=1,2$, are all untrivial divisor points or all trivial divisor points.
\end{enumerate}
\end{lemma}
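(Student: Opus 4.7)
My plan is to establish the four assertions in the order listed, relying on the explicit edge vector formulas (\ref{eq:move1})--(\ref{eq:move2}) and the tabulated divisor positions in Table~\ref{table:SM}, which are already at hand before the statement of the Lemma.

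\textbf{Invariance of the number of ovals.} First I would verify item~(1) by a direct count. The square move preserves the number of internal vertices (it only permutes their colors), the number of edges (no edge is created nor destroyed), and the number of boundary vertices. Hence, by Euler's formula applied exactly as in the proof that $\Gamma(\mathcal{G})$ degenerates from an $\mathtt M$-curve of genus equal to (faces$-1$), one obtains $\tilde g = g$, and each face of $\mathcal G$ has a natural counterpart in $\tilde{\mathcal G}$. In particular the local faces $\Omega_1,\Omega_2,\Omega_3,\Omega_4,\Omega_5$ are in bijection with $\tilde\Omega_1,\tilde\Omega_2,\tilde\Omega_3,\tilde\Omega_4,\tilde\Omega_5$, and all other faces are unaffected.

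\textbf{Item (3) and localization of the change.} By Lemma~\ref{lem:square_edge_vectors}, the edge vectors $E_1,E_2,E_3,E_4$ and all edge vectors outside the square are unchanged by (M1). Combined with the uniqueness statement in Theorem~\ref{theo:consist}, this implies that all (vacuum or dressed) divisor numbers outside the square remain unchanged. The only trivalent white vertices whose adjacent edge vectors are modified are $V_1,V_2$ (before the move) and $\tilde V_1,\tilde V_2$ (after the move), giving item~(3).

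\textbf{Item (2), invariance of the count of divisor points in every oval.} This reduces to comparing, for each value of the parameter $\psi_0$ introduced in the Corollary preceding the Lemma, the ovals occupied by $\gamma_1,\gamma_2$ and by $\tilde\gamma_1,\tilde\gamma_2$. The comparison is already tabulated in Table~\ref{table:SM}: in each of the four regimes for $\psi_0$, one divisor point before the move sits in $\Omega_5$ and the other in exactly one of $\Omega_1,\Omega_2,\Omega_3,\Omega_4$; after the move, one of the new divisor points again sits in $\tilde\Omega_5$ and the other in the counterpart of the same neighboring oval. Since all other divisor points are unchanged by item~(3), this yields $\tilde\nu_l = \nu_l$ for every $l \in [0,g]$. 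The slightly delicate case is the boundary of the table, i.e.\ when $\psi_0 \in \{0,-\alpha_4,-(\tilde\alpha_4)^{-1},\infty\}$: here a divisor point hits a double point, and one must verify the claim by continuity, which is immediate from the explicit rational expressions of $\gamma_l,\tilde\gamma_l$ in $\psi_0$. This is the only nontrivial check and is where I would expect the main (minor) obstacle.

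\textbf{Item (4), simultaneous triviality.} By Lemma~\ref{lemma:trivial_div}, $\gamma_1$ is a trivial divisor number iff a pair of edge vectors at $V_1$ is proportional; using (\ref{eq:move1}), this is equivalent to $E_3,E_4$ being proportional to a common vector, and the same criterion (now via (\ref{eq:move2})) governs the triviality of $\tilde\gamma_1$ and $\tilde\gamma_2$ at $\tilde V_1, \tilde V_2$. An identical check controls $\gamma_2$. Since $E_3,E_4$ are unchanged by the square move, all four divisor numbers become trivial simultaneously or none does, which establishes item~(4).
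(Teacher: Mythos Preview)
Your proposal is correct and follows exactly the route the paper intends: the paper states that ``the proof is straightforward and is omitted,'' and the ingredients you invoke---the invariance of the face count under (M1), Lemma~\ref{lem:square_edge_vectors} for the localization of the change, the Corollary with Table~\ref{table:SM} for the oval-by-oval comparison, and the explicit dependence of all four divisor numbers on the single parameter $\psi_0$ for simultaneous triviality---are precisely what is set up immediately before the Lemma to make the argument routine.
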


The proof is straightforward and is omitted. 

\subsection{(M2) The unicolored edge contraction/uncontraction}\label{sec:flip}

\begin{figure}%[H]
\includegraphics[width=0.6\textwidth]{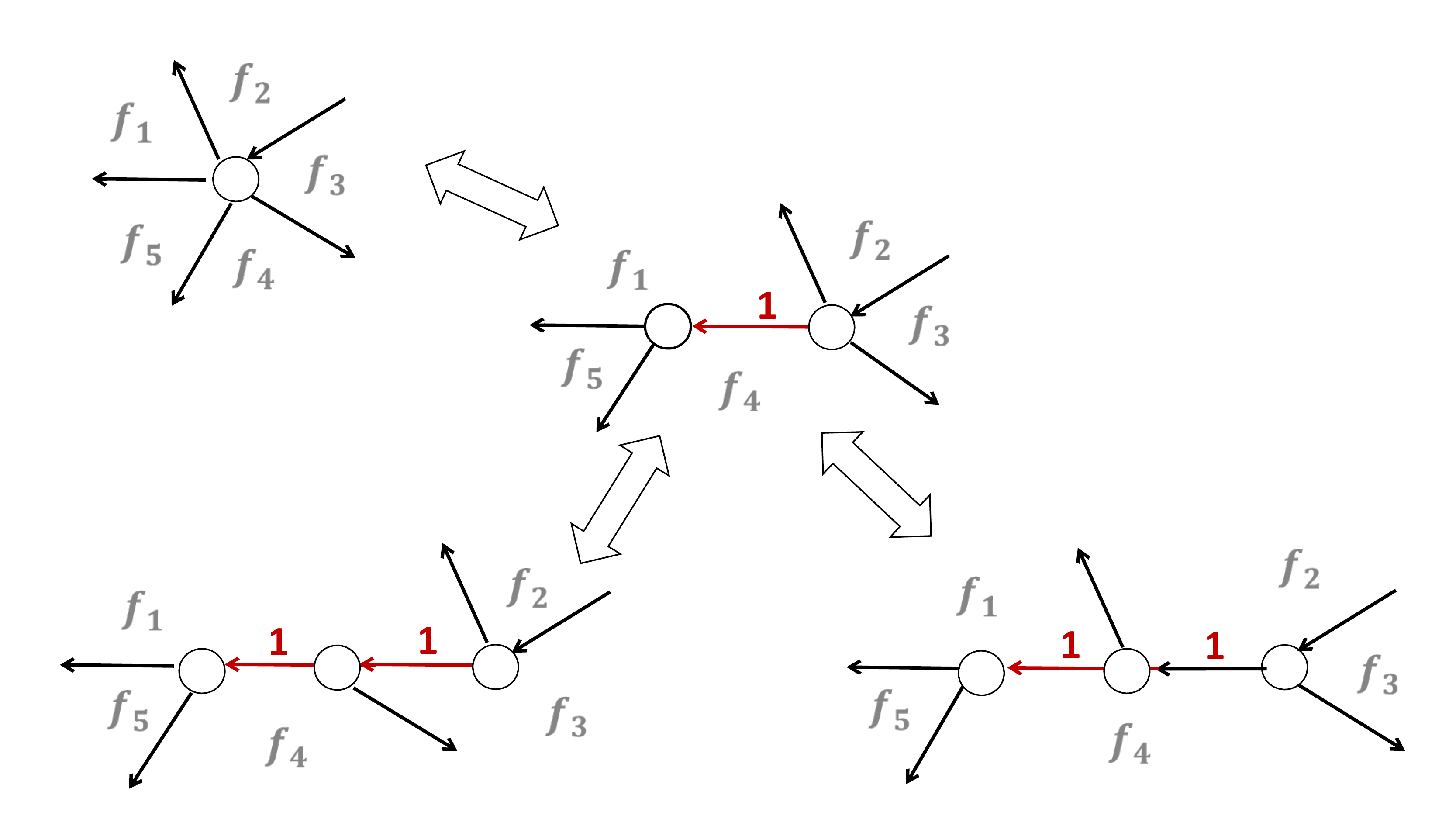}
\vspace{-0.5 truecm}
\caption{\small{\sl The unicolored edge contraction/uncontraction }}
\label{fig:unicolor}
\end{figure}

The unicolored edge contraction/un\-con\-traction consists in the elimination/addition of an internal vertex of equal color and of an unit edge, and it leaves invariant the face weights and the boundary measurement map. In Figure \ref{fig:unicolor} we show an example of possible configurations equivalent w.r.t to such move.
Since we work with trivalent graphs, there are two cases of interest to us: the insertion/removal of a Darboux vertex next to a boundary vertex (see Figure \ref{fig:Darboux}) and the flip move (see Figure \ref{fig:flipmove}).

The insertion/removal of a Darboux vertex leaves invariant the system of vectors except for the edges $e_{l}=[b_l, V_l]$, $e^{(D)}_l=[V_l, V^{(D)}_l]$. Its effect is just local and has been explained in Section \ref{sec:modN} where we have used it to modify the PBDTP network $\mathcal N$. 

\begin{figure}%[H]
  \centering{\includegraphics[width=0.6\textwidth]{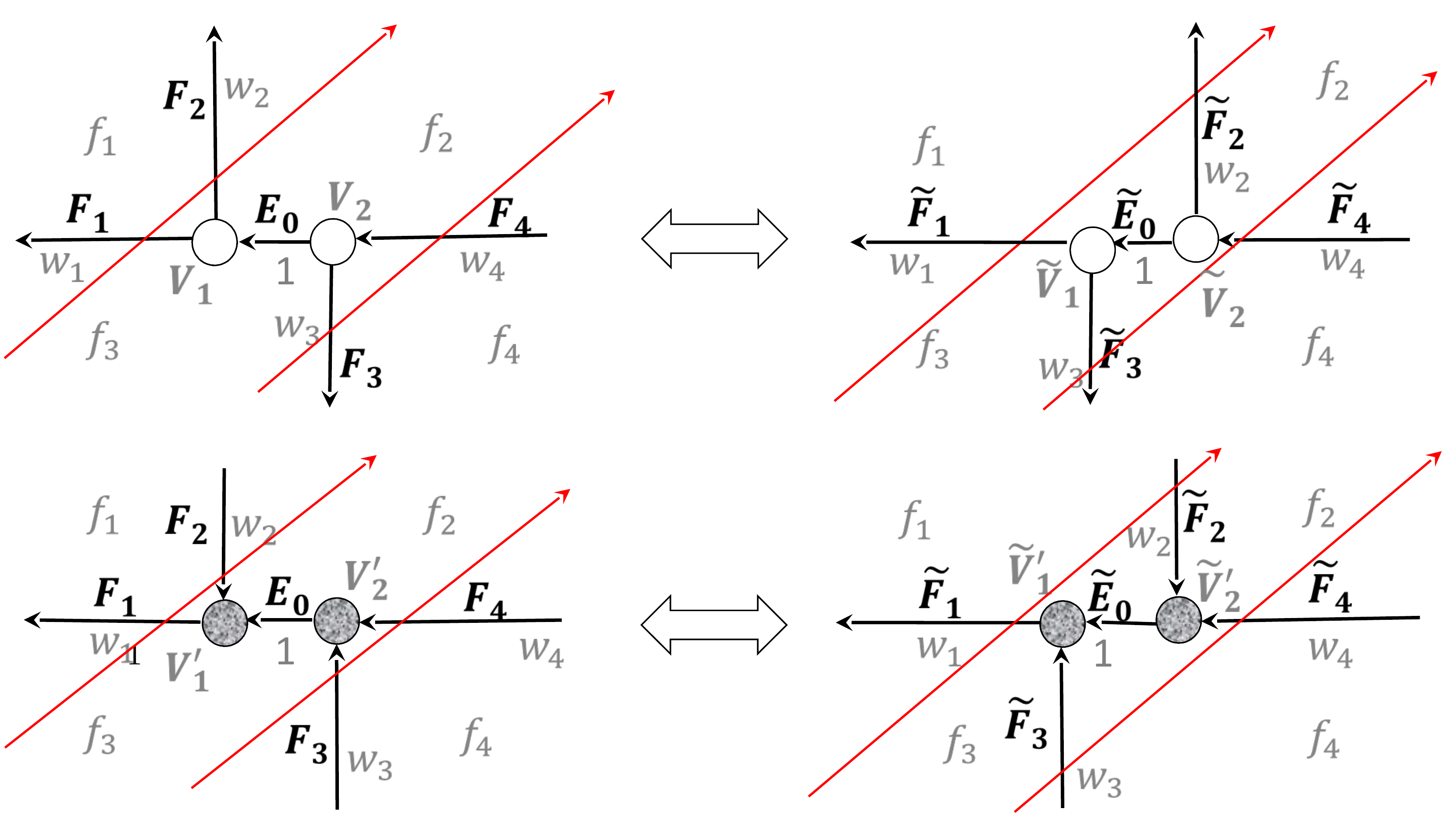}}
  \caption{\small{\sl The insertion/removal of an unicolored internal vertex is equivalent to a flip move of the unicolored vertices.}\label{fig:flipmove}}
\end{figure}

The contraction/uncontraction of an unicolored internal edge combined with the trivalency condition is equivalent to a flip of the unicolored vertices involved in the move. It is possible to express such transformation as a combination of elementary flip moves each involving a pair of unicolored vertices (see Figure \ref{fig:flipmove} for an example). In a flip move of a pair of unicolored vertices, we assume that the edge $e_0$ connecting this pair of vertices has unit weight and sufficiently small length before and after the move so that no gauge ray crosses $e_0$, all other edges at this pair of vertices do not change the ending point and their winding and intersection numbers. As a consequence, we have here additivity of winding numbers, which is generically not true:
$$
\mbox{wind} (e_i,e_0)+ \mbox{wind} (e_0,e_j) = \mbox{wind} (e_i,e_j),
$$
with $e_i$ -- any incoming vector, $e_j$ -- any outgoing vector.

The flip move acts untrivially on the edge vector on $e_0$ and as a change of sign at the edges $e_i$, $i \in 4$. As before, we use the tilde symbol for all quantities after the flip move.

\begin{lemma} Let the flip move act on a pair of trivalent equi-colored vertices and and mark the edges $e_i$, ${\tilde e}_i$, respectively before and after the move as in Figure \ref{fig:flipmove}[left]. Then ${\tilde F}_i = F_i$ after the flip move and 
\begin{equation}\label{eq:flip2}
{\tilde E}_0 = E_0,
\end{equation}
if the vertices are black.
\end{lemma}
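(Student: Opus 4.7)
The strategy is to exploit the uniqueness of the edge vector system established in Theorem~\ref{theo:consist}: given the boundary conditions at the boundary sinks, there is at most one system of vectors on $(\tilde{\mathcal N}, \tilde{\mathcal O}, \mathfrak l)$ satisfying the linear relations at all internal vertices. Since the flip move preserves the point $[A] \in \S$, the boundary vectors are unchanged, so it suffices to propose a candidate system on $\tilde{\mathcal N}$ and verify that it fulfils the linear relations at every internal vertex. Setting $\tilde F_i := F_i$ ($i=1,\dots,4$) and $\tilde E_0 := E_0$, the relations at all internal vertices outside the flipped pair hold automatically (those vertices are untouched and carry the same vectors as in $\mathcal N$). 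The entire argument thus reduces to checking the two linear relations at the new equi-colored vertices $\tilde V_L, \tilde V_R$.

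The essential geometric input is the winding additivity along the internal edge. By hypothesis $e_0$ (respectively $\tilde e_0$) has unit weight, zero intersection number, and is short enough that the endpoints of the four external edges and their intersection numbers are unchanged by the move; consequently
\[
\mbox{wind}(e_i,e_0)+\mbox{wind}(e_0,e_j) \ = \ \mbox{wind}(e_i,e_j) \quad (\!\!\!\!\!\mod 2),
\]
whenever $e_i$ enters and $e_j$ leaves the pair, and the analogous identity holds for $\tilde e_0$. Together with $\mbox{int}(e_0)=\mbox{int}(\tilde e_0)=0$, this is the key fact driving the sign cancellations.

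Next I would write out (\ref{eq:lineq_black}) at $V_L$ and $V_R$ before the flip. Since both vertices are black, each incoming edge carries a scalar multiple of the unique outgoing edge vector. Denoting by $e_1$ the unique outgoing external edge and by $e_2,e_3,e_4$ the incoming external ones, one composition yields
\[
E_0 \ = \ (-1)^{\mbox{\scriptsize wind}(e_0,e_1)} F_1,\qquad F_i \ = \ (-1)^{\mbox{\scriptsize int}(e_i)+\mbox{\scriptsize wind}(e_i,e_1)} w_i\, F_1,\quad i=2,3,4,
\]
where the two-step relations with intermediate factor $(-1)^{\mbox{\scriptsize wind}(e_i,e_0)+\mbox{\scriptsize wind}(e_0,e_1)}$ collapse to $(-1)^{\mbox{\scriptsize wind}(e_i,e_1)}$ by winding additivity. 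Performing exactly the same computation on $\tilde{\mathcal N}$ at $\tilde V_L,\tilde V_R$ yields identical expressions for $\tilde E_0$ and $\tilde F_i$ in terms of $\tilde F_1$, because neither the external winding numbers nor the external intersection numbers have changed. Hence the candidate $\tilde F_i = F_i$, $\tilde E_0 = E_0$ satisfies every linear relation of $\tilde{\mathcal N}$, and Theorem~\ref{theo:consist} forces it to coincide with the genuine edge vector system, proving (\ref{eq:flip2}).

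The one subtle step is the careful verification of winding additivity: if any gauge ray crossed $e_0$, or if an external edge's endpoint shifted enough to alter $\mbox{int}(e_i)$ or $s(e_i,\mathfrak l)$, the sign bookkeeping would break. I would therefore make explicit the genericity clause stated in the lemma preamble (short $e_0$, no gauge crossing, external edges with fixed endpoints) as the enabling hypothesis, and point out that the same argument will fail for white vertices, where the relation (\ref{eq:lineq_white}) is a sum rather than a proportionality, so $\tilde E_0$ need not equal $E_0$ and must be determined separately.
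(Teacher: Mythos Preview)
Your strategy (propose the candidate $\tilde F_i=F_i$, $\tilde E_0=E_0$ and invoke uniqueness of the edge-vector system) is exactly the approach the paper uses implicitly here and explicitly in the analogous square-move lemma, and your handling of the external edges is correct: winding additivity collapses every two-step sign $(-1)^{\mathrm{wind}(e_i,e_0)+\mathrm{wind}(e_0,e_1)}$ to $(-1)^{\mathrm{wind}(e_i,e_1)}$, and since the latter are unchanged by the move you obtain $\tilde F_i=F_i$.

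There is, however, a genuine gap in your verification of $\tilde E_0=E_0$. At the black vertex where $e_1$ is outgoing, the relation in $\tilde{\mathcal N}$ reads $\tilde E_0=(-1)^{\mathrm{wind}(\tilde e_0,e_1)}\tilde F_1$, whereas in $\mathcal N$ one has $E_0=(-1)^{\mathrm{wind}(e_0,e_1)}F_1$. For your candidate to satisfy this relation you need $\mathrm{wind}(\tilde e_0,e_1)\equiv\mathrm{wind}(e_0,e_1)\pmod 2$, and this is \emph{not} an ``external winding number'': $e_0$ and $\tilde e_0$ point in different directions, and winding additivity (which relates $\mathrm{wind}(e_i,e_0)+\mathrm{wind}(e_0,e_j)$ to $\mathrm{wind}(e_i,e_j)$) says nothing about comparing $\mathrm{wind}(e_0,e_1)$ with $\mathrm{wind}(\tilde e_0,e_1)$. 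The same missing equality resurfaces when you check the relation at the other flip vertex. What you have actually shown is $\tilde E_0=(-1)^{\mathrm{wind}(\tilde e_0,e_1)-\mathrm{wind}(e_0,e_1)}E_0$; the remaining sign must be read off directly from the geometry of Figure~\ref{fig:flipmove}[left] (the relative position of $e_0$, $\tilde e_0$, $e_1$ and $\mathfrak l$), which the paper takes for granted but you should make explicit.
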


The flip move at black vertices leaves the divisor invariant. The flip move at white vertices may transform trivial divisor points into untrivial divisor points and divisor points may move from one oval to another respecting the parity conditions 
settled in Theorems \ref{theo:exist} and \ref{theo:vac_div}.
In the following lemma we label $\Omega_l$, $l\in [4]$, the ovals involved in the flip move as in Figure \ref{fig:flip_move_poles}. 

\begin{lemma}\label{lemma:poles_move2}\textbf{The effect of the flip move (M2) at a pair of white vertices on the divisor}
Let ${\tilde {\mathcal N}}$ be obtained from $\mathcal N$ via a flip move move (M2) at a pair of trivalent white vertices. 
Let ${\mathcal D}= {\mathcal D}({\mathcal N})$, ${\tilde {\mathcal D}}= {\mathcal D}({\tilde {\mathcal N}})$, respectively be the dressed (vacuum) network divisor before and after such move. Then
\begin{enumerate}
\item ${\tilde g}=g$ and the number of ovals is invariant;
\item The number of divisor points is invariant in every oval except possibly at the ovals involved in the move.
In the ovals $\Omega_l$, $l\in [4]$ the parity of the number of divisor points before and after the move is invariant:
 ${\tilde \nu}_{l} -\nu_{l} = 0 \,\,(\!\!\!\!\mod 2)$, $l\in [4]$;
\item ${\tilde {\mathcal D}} = \left( {\mathcal D}\backslash \{ \gamma_1, \gamma_2 \} \right) \cup \{ {\tilde \gamma}_1, {\tilde \gamma}_2 \}$, where we use the same notations as in Figure \ref{fig:flip_move_poles}.
\end{enumerate} 
\end{lemma}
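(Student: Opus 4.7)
The plan is to exploit the locality of the flip move: outside the small region occupied by the pair of white vertices $W_1,W_2$, the edge $e_0$, and its four external half--edges $e_1,\dots,e_4$, the network is unchanged, hence the entire linear system of Theorem \ref{theo:consist} away from this region is identical before and after the move. I will first establish item (1) by a direct Euler count; then item (3) by a uniqueness argument; and finally item (2) by combining the previous two steps with the parity constraint of Theorem \ref{prop:comb_oval}.

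For item (1), the flip exchanges two trivalent white vertices joined by a single internal edge with another such configuration, so the number of internal vertices and internal edges is preserved and, by planarity, so is the number of faces. Applying the counting identity used in the construction of $\Gamma(\mathcal G)$ one gets $\tilde g=g$, and the four ovals $\Omega_1,\dots,\Omega_4$ incident to the flipped region remain real and in bijection with $\tilde\Omega_1,\dots,\tilde\Omega_4$; all other ovals are untouched.

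For item (3), I would argue that, under the stated assumptions that $e_0$ has unit weight and no gauge ray intersects $e_0$ nor changes its crossings with $e_1,\ldots,e_4$, the external edge vectors satisfy $\tilde F_i=F_i$, $i\in[4]$: indeed, on the complement of the flip region the linear system (\ref{eq:lineq_biv})--(\ref{eq:lineq_white}) together with the fixed boundary conditions at boundary sinks is the same, so by Theorem \ref{theo:consist} its unique solution is unchanged. Then every trivalent white vertex of $\mathcal N$ outside the flipped region has the same incident edge vectors in $\mathcal N$ and $\tilde{\mathcal N}$, hence the same (vacuum or dressed) divisor number by (\ref{eq:formula_div}), and the same divisor point on the corresponding component $\Gamma_l$. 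Only the two divisor numbers attached to the pair of flipped white vertices can move; these are precisely $\gamma_1,\gamma_2$ on the one hand and $\tilde\gamma_1,\tilde\gamma_2$ on the other, whence $\tilde{\mathcal D}=(\mathcal D\setminus\{\gamma_1,\gamma_2\})\cup\{\tilde\gamma_1,\tilde\gamma_2\}$.

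For item (2), outside the four ovals $\Omega_1,\dots,\Omega_4$ item (3) already gives $\tilde\nu_s=\nu_s$. The content is the parity statement for the four incident ovals. The quickest route is to invoke Theorem \ref{prop:comb_oval} (or its specializations (\ref{eq:odd_N_2}) and (\ref{eq:odd_N_3})) separately before and after the move: the numbers $\mu_{l,\text{sou}},\mu_{l,\text{si}},\rho_l$ counted along $\partial\Omega_l$ depend only on the boundary--incident and Darboux data, which are unaffected by the flip; hence $\nu_l$ and $\tilde\nu_l$ must satisfy the same congruence modulo $2$, which is exactly $\tilde\nu_l-\nu_l\equiv 0\pmod 2$ for $l\in[4]$. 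The step requiring care is to check that the hypotheses of Theorem \ref{prop:comb_oval} are in force after the move, i.e. that $\tilde{\mathcal D}$ is still constructed from a PBDTP network satisfying Definitions \ref{def:real_KP_div}--\ref{def:KPwave}; this is immediate since the flip preserves the PBDTP property, the orientation extends naturally across $\tilde e_0$, and no null edge vectors are created (they could only appear on $\tilde e_0$, but Theorem \ref{thm:null_acyclic} or a direct inspection of (\ref{eq:lineq_white}) at $\tilde W_1,\tilde W_2$ rules this out whenever it was ruled out on $\mathcal N$). The main obstacle in turning this plan into a complete argument is bookkeeping: keeping track of the signatures $\epsilon_{\text{tot}}$ at the two flipped white vertices, and checking that the redistribution of $\{\gamma_1,\gamma_2\}$ into the four ovals is consistent with (\ref{eq:tot_triv_wh}) applied to both $(W_1,W_2)$ and $(\tilde W_1,\tilde W_2)$; this is where the invariant reformulation via Theorem \ref{theo:sig_rela} will be most useful, since it replaces the orientation--dependent winding calculation by the combinatorial constraint on admissible vertex signatures summed around each oval.
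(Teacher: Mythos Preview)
Your proposal is correct and follows the approach the paper itself indicates: the proof is omitted there, but the sentence immediately preceding the lemma (``divisor points may move from one oval to another respecting the parity conditions settled in Theorems~\ref{theo:exist} and~\ref{theo:vac_div}'') together with the preceding unnumbered lemma ($\tilde F_i=F_i$) is exactly your locality-plus-parity strategy. One small caveat: in item~(3) you cannot literally apply Theorem~\ref{theo:consist} to the \emph{complement} of the flip region, since that subsystem is not closed (the relation determining $E_{e_4}$ sits at $W_1,W_2$); the clean fix is to eliminate $E_0$ and $\tilde E_0$ from the two vertex relations and observe, via the winding additivity $\mbox{wind}(e_i,e_0)+\mbox{wind}(e_0,e_j)=\mbox{wind}(e_i,e_j)$ stated just before the flip lemma, that the resulting relation expressing $E_{e_4}$ in terms of $F_1,F_2,F_3$ is identical before and after the move, whence uniqueness on the full networks $\mathcal N$ and $\tilde{\mathcal N}$ forces all exterior edge vectors to agree.
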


\begin{figure}%[H]
 \centering
	\includegraphics[width=0.45\textwidth]{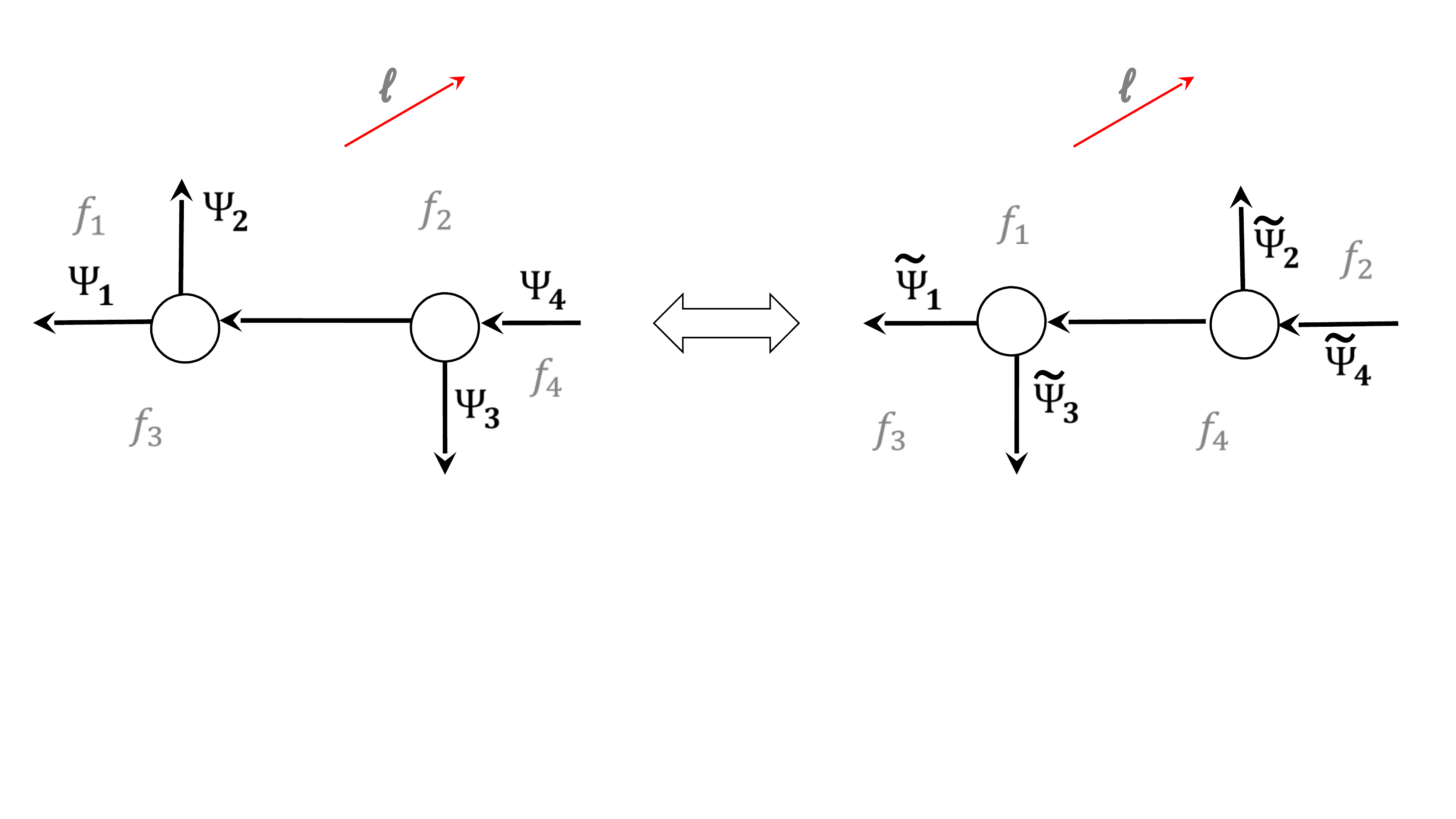}
	\hfill
	\includegraphics[width=0.47\textwidth]{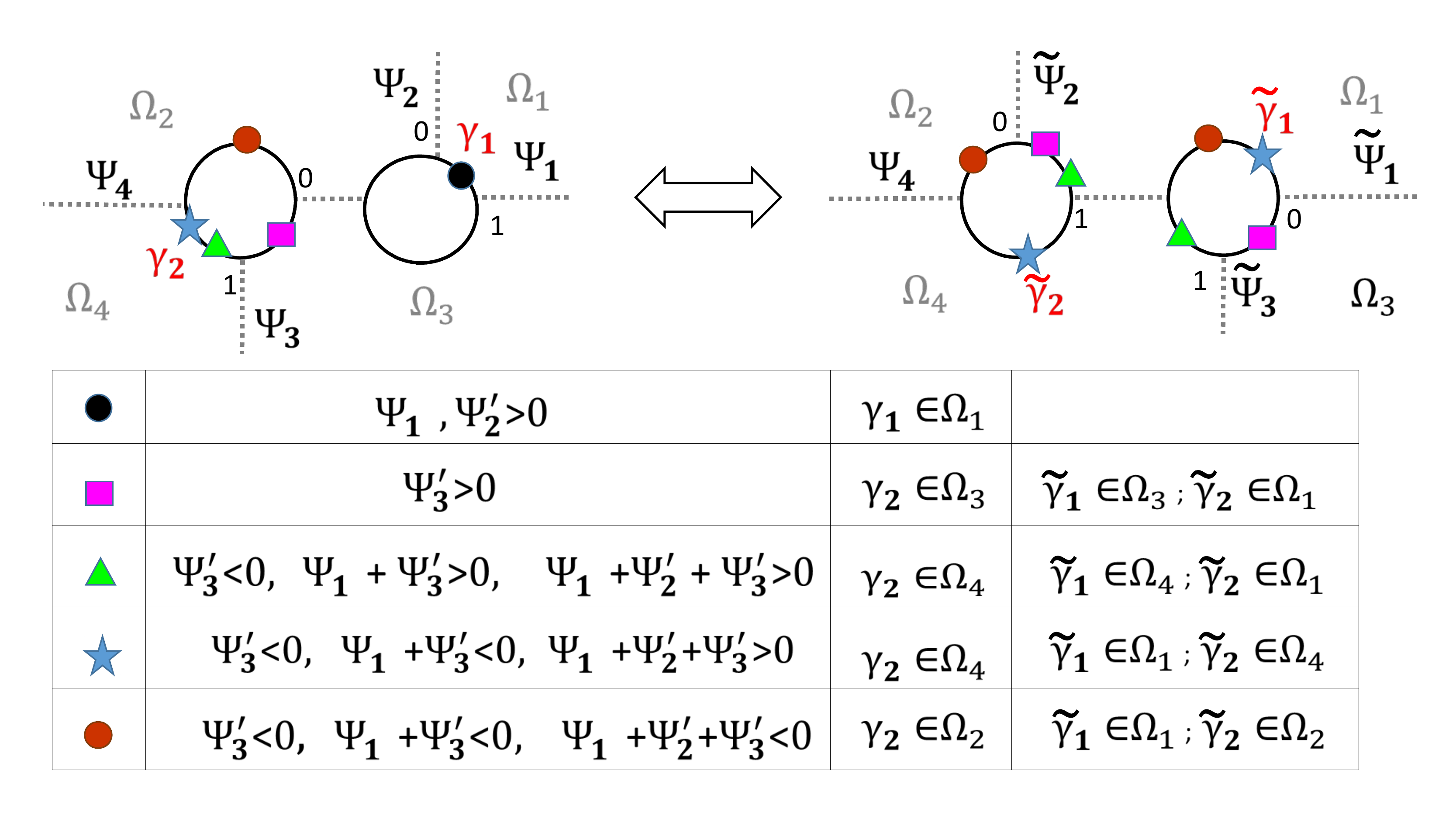}
	\includegraphics[width=0.47\textwidth]{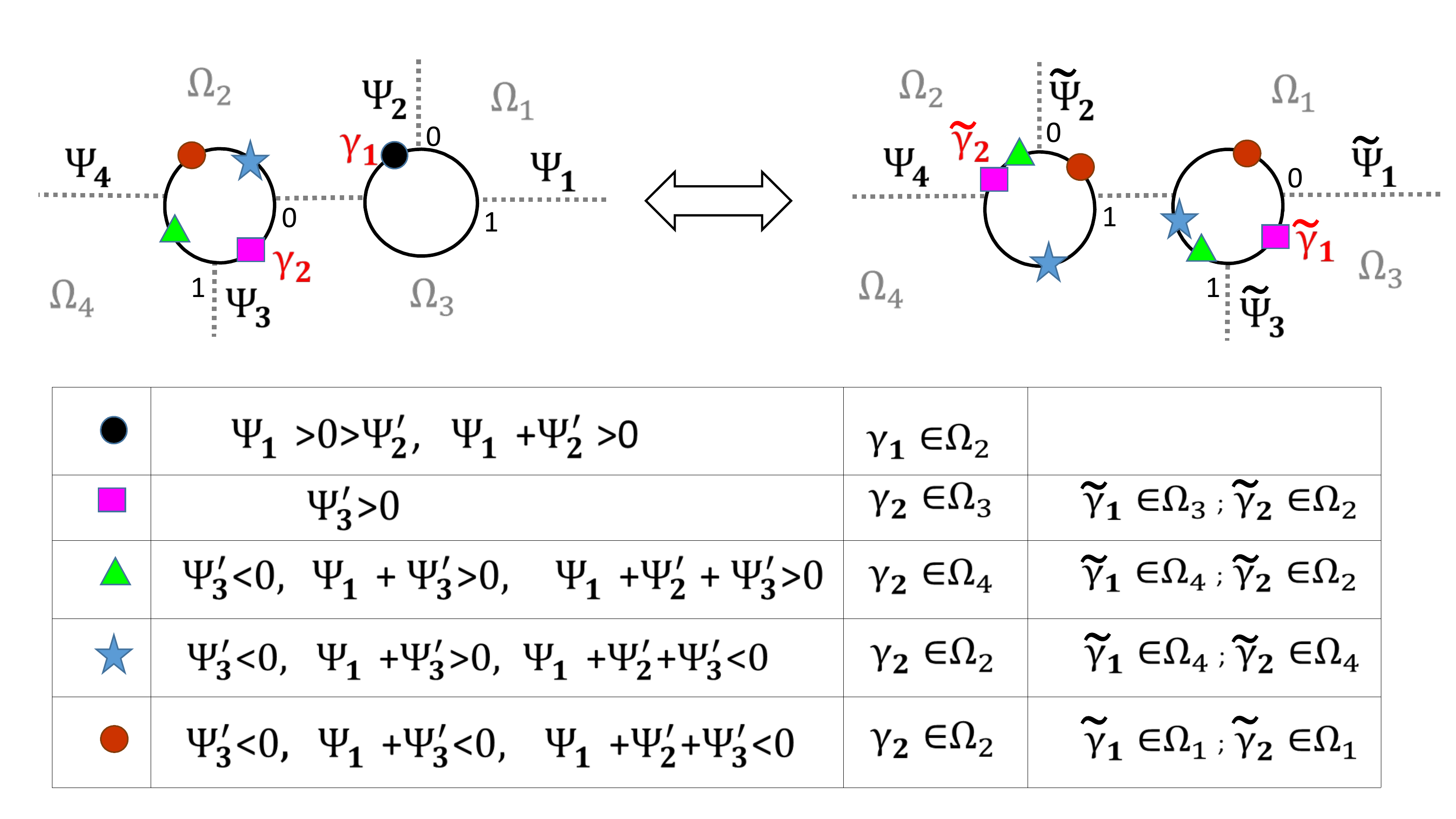}
	\hfill
	\includegraphics[width=0.47\textwidth]{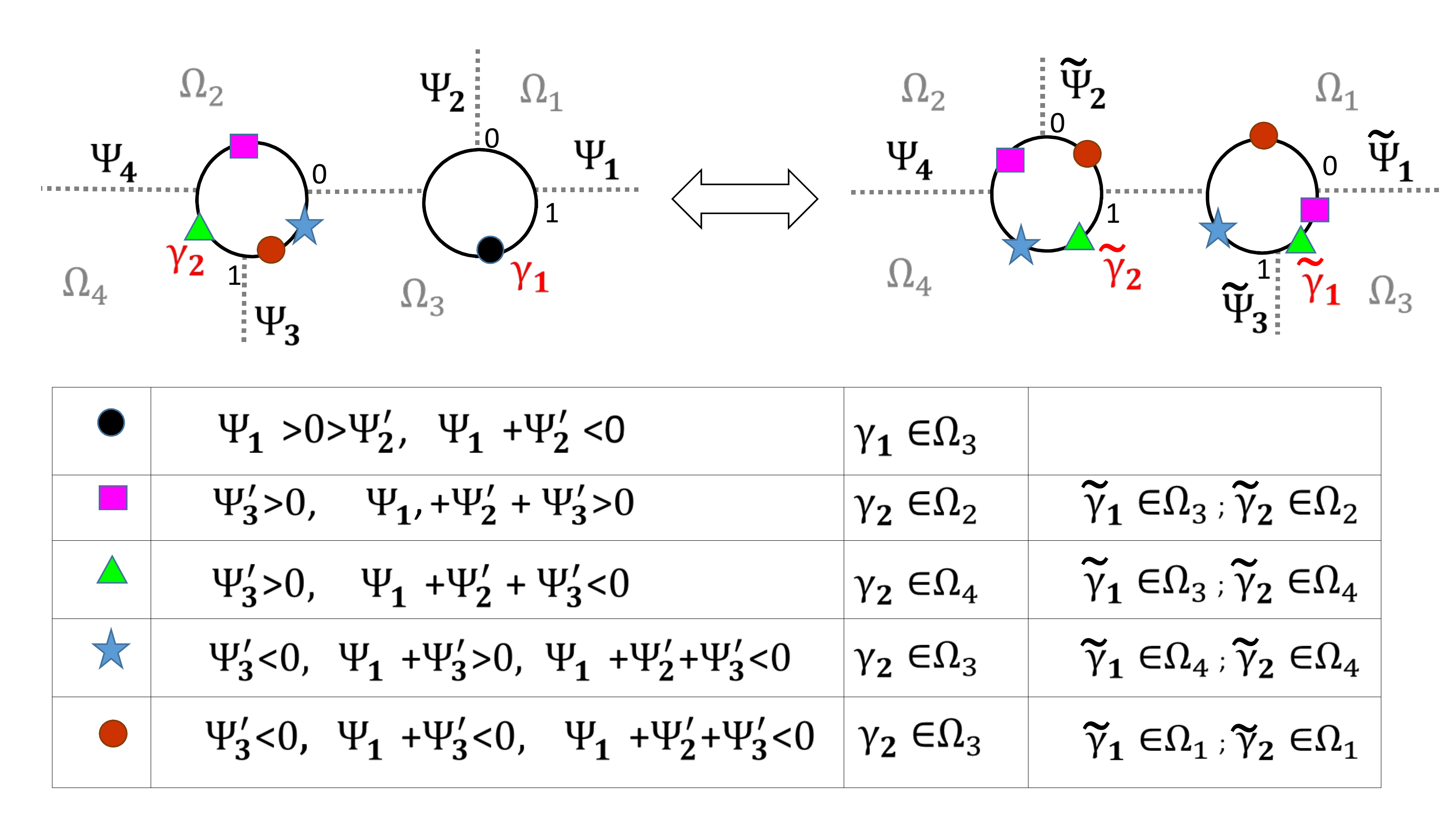}
  \caption{\small{\sl The effect of the flip move [top-left] on the possible configurations of (vacuum or dressed) divisor points [top-right],[bottom-left],[bottom-right].}\label{fig:flip_move_poles}}
\end{figure}

The proof is omitted. In Figure \ref{fig:flip_move_poles} we show the position of the divisor points before and after the flip move [top-left], in function of the relative signs of the value of the (vacuum or dressed) wave function at the double points of $\Gamma$. Faces $f_l$ correspond to ovals $\Omega_l$, the orientation of the edges in the graph at each vertex induces the local coordinates at each copy of $\mathbb{CP}^1$ in $\Gamma$.

\begin{corollary}\textbf{The effect of the flip move on the divisor}
Let the local coordinates on $\Gamma_i$, ${\tilde \Gamma_i}$, $i=1,2$, be as in Figure \ref{fig:flip_move_poles} and let $\Psi_1 = \Psi_1 (\vec t_0)$,
\[
\Psi^{\prime}_2 = (-1)^{\mbox{wind}(e_4,e_2)-\mbox{wind}(e_4,e_1)}\Psi_2(\vec t_0), \quad\quad
\Psi^{\prime}_3 = (-1)^{\mbox{wind}(e_4,e_3)-\mbox{wind}(e_4,e_1)}\Psi_3(\vec t_0),
\] 
where $\Psi_{j} (\vec t_0)$ is the value of the dressed (vacuum) edge wave function at the edges $e_j$, $j\in [3]$, in the initial configuration. Then
\[
\gamma_1 = \frac{\Psi^{\prime}_2}{\Psi_1+\Psi^{\prime}_2},\quad\quad \gamma_2 = \frac{\Psi_1+\Psi^{\prime}_2}{\Psi_1+\Psi^{\prime}_2+\Psi^{\prime}_3},\quad\quad{\tilde \gamma}_1 = \frac{\Psi_1}{\Psi_1+\Psi^{\prime}_3},\quad\quad {\tilde\gamma}_2 = \frac{\Psi^{\prime}_2}{\Psi_1+\Psi^{\prime}_2+\Psi^{\prime}_3},
\]
and the position of the divisor points in the ovals is shown in Figure \ref{fig:flip_move_poles}. 
\end{corollary}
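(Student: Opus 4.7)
The strategy is to combine the two linear relations of type (\ref{eq:lin_Phi}) at the white vertices $V_1,V_2$ joined by the central edge $e_0$ (and, symmetrically, at $\tilde V_1,\tilde V_2$ joined by $\tilde e_0$) so as to eliminate the value $\Psi_{e_0}(\vec t_0)$ of the dressed edge wave function on $e_0$. Under the flip--move convention that $e_0$ carries unit weight and has sufficiently short length so that no gauge ray crosses it (see the paragraph preceding (\ref{eq:flip2})), winding numbers are additive along concatenations through $e_0$, namely $\mbox{wind}(e_i,e_0)+\mbox{wind}(e_0,e_j)=\mbox{wind}(e_i,e_j)$, and $\mbox{int}(e_0)=0$. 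This additivity is what allows all sign factors to collapse into the single reference choice $(-1)^{\mbox{wind}(e_4,e_j)-\mbox{wind}(e_4,e_1)}$ which is precisely the sign absorbed into the definition of $\Psi'_2$ and $\Psi'_3$.

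Concretely, I first express $\Psi_{e_0}(\vec t_0)$ from the linear relation at the white vertex whose two free edges are labelled $e_3$ and $e_4$ in Figure \ref{fig:flip_move_poles} and substitute the result into the relation at the remaining white vertex. Applying formula (\ref{eq:dress_pole_def}) to the two half--edges at $V_1$ and using the additivity property above, the surviving winding factor between the two outgoing edges reorganizes into exactly $\mbox{wind}(e_4,e_2)-\mbox{wind}(e_4,e_1)$, yielding $\gamma_1=\Psi'_2/(\Psi_1+\Psi'_2)$; the same computation at $V_2$, which involves one outgoing edge carrying $\Psi_{e_0}(\vec t_0)\propto\Psi_1+\Psi'_2$ and one carrying $\Psi'_3$, produces $\gamma_2=(\Psi_1+\Psi'_2)/(\Psi_1+\Psi'_2+\Psi'_3)$. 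The same algorithm applied to the flipped configuration, combined with the fact that the external edge vectors transform by $\tilde F_i=\pm F_i$ with signs again absorbed into the $\Psi'_j$'s (analogously to (\ref{eq:flip2}) for the black--vertex flip together with the sign rule of Proposition \ref{prop:change_orient} for the rerouted windings), gives the stated expressions for $\tilde\gamma_1$ and $\tilde\gamma_2$.

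For the position of the divisor points in the ovals, on each component $\Gamma_l$ the three marked points have coordinates $0,1,\infty$ in the local parameter (Definition \ref{def:loccoor}), so the oval hosting a divisor point with coordinate $\gamma_l$ is detected by which of the three cyclic intervals $(-\infty,0)$, $(0,1)$, $(1,+\infty)$ on $\mathbb{RP}^1$ contains $\gamma_l$. A direct sign analysis of the four explicit expressions on the triple $(\Psi_1,\Psi'_2,\Psi'_3)$ produces three essentially distinct sign patterns (up to an overall sign, which by the gauge in the choice of $\vec t_0$ is immaterial), and these correspond exactly to the three configurations drawn in Figure \ref{fig:flip_move_poles}. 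Lemma \ref{lemma:poles_move2} and Theorem \ref{prop:comb_oval} then furnish independent consistency checks, since the parity of the number of divisor points in each of the four affected ovals must be preserved by the flip.

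The main technical obstacle lies in the sign bookkeeping: the relations (\ref{eq:lin_Phi}) at the two white vertices involve winding factors $(-1)^{\mbox{wind}(e_i,e_j)}$ attached to \emph{pairs of edges at the same vertex}, so one has to check that, after eliminating $\Psi_{e_0}(\vec t_0)$ and forming the ratios defining $\gamma_1,\gamma_2$, the surviving winding signs reorganize themselves exactly into the common reference $\mbox{wind}(e_4,\cdot)-\mbox{wind}(e_4,e_1)$ that is built into $\Psi'_2,\Psi'_3$. The additivity of winding across the short, unit--weighted, gauge--ray--free central edge $e_0$ is the crucial technical input which makes this reorganization possible and which closes the proof.
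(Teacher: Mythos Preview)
Your proposal is correct and follows essentially the same approach as the paper, which simply states that the proof is straightforward using the definition of divisor coordinates. You have spelled out in detail what that computation entails: applying (\ref{eq:dress_pole_def}) at each white vertex, eliminating $\Psi_{e_0}(\vec t_0)$ via the linear relations (\ref{eq:lin_Phi}), and invoking the additivity of winding numbers across the short unit--weight central edge $e_0$ (stated in the paragraph preceding (\ref{eq:flip2})) to collapse the sign factors into the reference form $\mbox{wind}(e_4,\cdot)-\mbox{wind}(e_4,e_1)$ that defines $\Psi'_2,\Psi'_3$.
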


The proof is straightforward using the definition of divisor coordinates.
We remark that, for the dressed divisor, not all combinations of signs are realizable at the finite ovals not containing Darboux sources or sinks. 
For instance in Figure \ref{fig:flip_move_poles}[bottom,left] one such forbidden combination is the star-shaped divisor which corresponds to the following choice of sign of the edge wave function 
$\Psi^{\prime}_2(\vec t_0), \Psi^{\prime}_3(\vec t_0)<0< \Psi_1 (\vec t_0)$, $\Psi_1(\vec t_0)+ \Psi^{\prime}_2(\vec t_0)<0<\Psi_1(\vec t_0)+ \Psi^{\prime}_3(\vec t_0)$,
and to divisor configurations $\gamma_1,\gamma_2 \in \Omega_2$ and ${\tilde \gamma}_1, {\tilde\gamma}_2 \in \Omega_4$, respectively before and after the flip move.

Finally, if $F_3 = c_{13}F_1$ for some $c_{13}\not =0$, whereas $F_2$, $F_1$ are linearly independent, then also $\Psi^{\prime}_3(\vec t)= \pm c_{13} \Psi_1(\vec t)$, for all $\vec t$. In such case $\gamma_l$, $l=1,2$, and ${\tilde \gamma}_2$ are untrivial divisor points, whereas ${\tilde \gamma}_1$ is a trivial divisor point, {\sl i.e.} the normalized wave function is constant on the corresponding copy of $\mathbb{CP}^1$.

\subsection{(M3) The middle edge insertion/removal}\label{sec:middle}

\begin{figure}%[H]
  \centering{\includegraphics[width=0.55\textwidth]{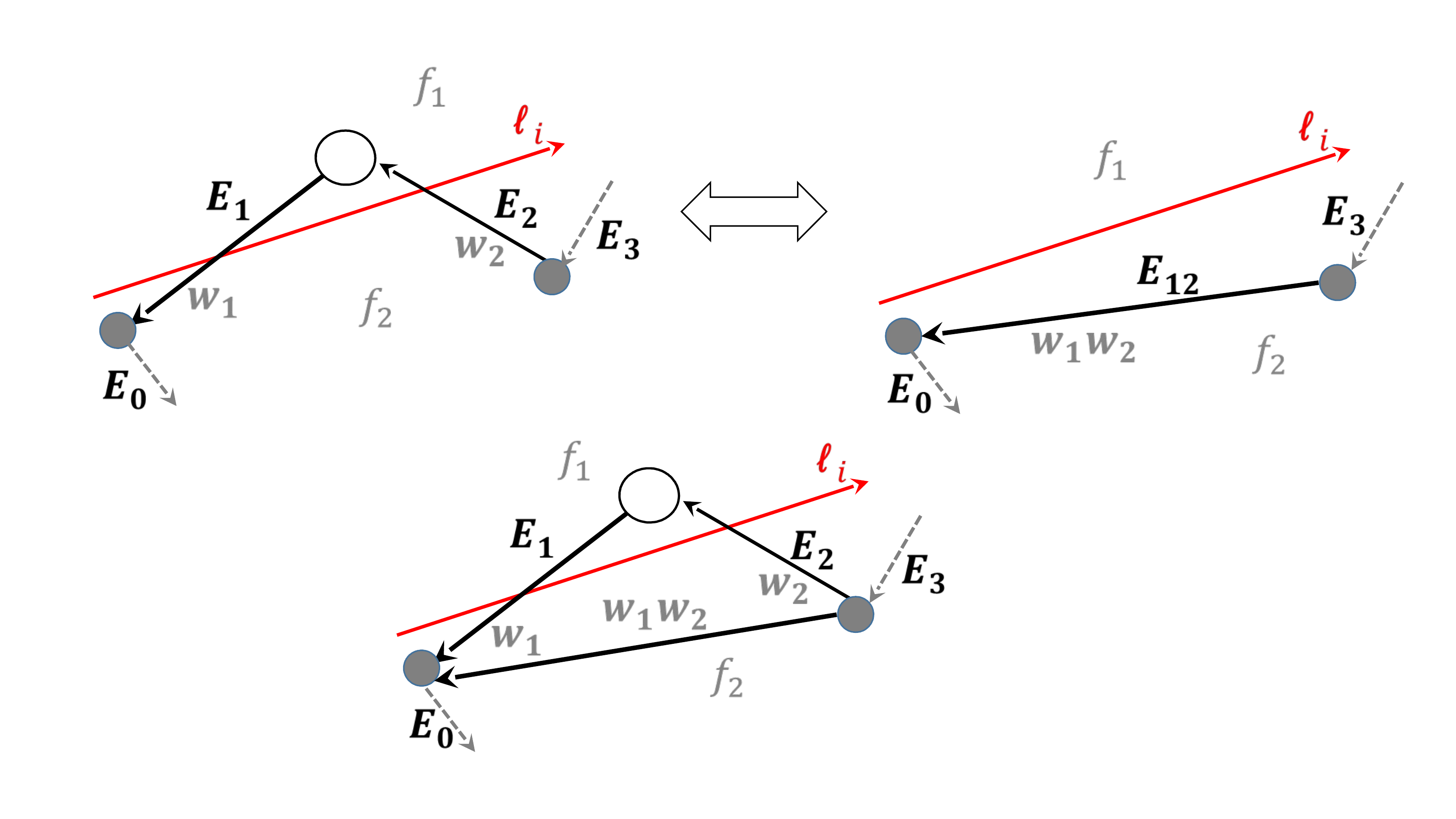}}
	\vspace{-.5 truecm}
  \caption{\small{\sl The middle edge insertion/removal.}\label{fig:middle}}
\end{figure}

The middle edge insertion/removal concerns bivalent vertices (see Figure \ref{fig:middle}) without changing the face configuration. i.e. the triangle formed by the edges $e_1$, $e_2$, $e_{12}$ does not contain other edges of the network. Then the relation between the vectors $E_2$ and $E_{12}$ is simply, 
\[
E_{12} =(-1)^{\mbox{wind}(e_3, e_2) -\mbox{wind}(e_3, e_{12})} E_2,
\]
since
\[
\mbox{int} (e_1) +\mbox{int} (e_2) = \mbox{int} (e_{12}), \quad (\!\!\!\!\!\!\mod 2), 
\]
\[
\mbox{wind}(e_3, e_2) +\mbox{wind}(e_2, e_1)+\mbox{wind}(e_1, e_0) =\mbox{wind}(e_3, e_{12}) +\mbox{wind}(e_{12}, e_0), \quad (\!\!\!\!\!\!\mod 2).
\]
This move does not change neither the number of ovals nor the divisor configuration.

\subsection{(R1) The parallel edge reduction}\label{sec:par_dip}

The parallel edge reduction consists of the removal of two trivalent vertices of different color connected by a pair of parallel edges (see Figure \ref{fig:parall_red_poles}[top]). If the parallel edge separates two distinct faces, the relation of the face weights before and after the reduction is \cite{Pos}
\[
{\tilde f}_1 = \frac{f_1}{1+(f_0)^{-1}}, \quad\quad {\tilde f}_2 = f_2 (1+f_0),
\]
otherwise ${\tilde f}_1 = {\tilde f}_2 = f_1 f_0$. In both case, for the choice of orientation in Figure \ref{fig:parall_red_poles}, the relations between the edge weights and the edge vectors respectively are 
\[
{\tilde w}_1 = w_1(w_2+w_3)w_4,
\]
\begin{equation}\label{eq:vectors_paredge}
\begin{array}{l}
\displaystyle {\tilde E}_{1} = E_1 = (-1)^{\mbox{int}(e_1)+\mbox{int}(e_2)}w_1(w_2+w_3) E_4,\\
E_2 = (-1)^{\mbox{int}(e_2)}w_2 E_4, \quad\quad E_3 = (-1)^{\mbox{int}(e_2)}w_3 E_4,
\end{array}
\end{equation}
since $\mbox{wind} (e_1,e_2) =\mbox{wind} (e_1,e_3) =\mbox{wind} (e_2,e_4) =\mbox{wind} (e_3,e_4) =0$, $\mbox{int}(e_1)+\mbox{int}(e_2)+\mbox{int}(e_4) =\mbox{int}({\tilde e}_1)$ and $\mbox{int}(e_2) =\mbox{int}(e_3)$.

\begin{figure}%[H]
  \centering{\includegraphics[width=0.55\textwidth]{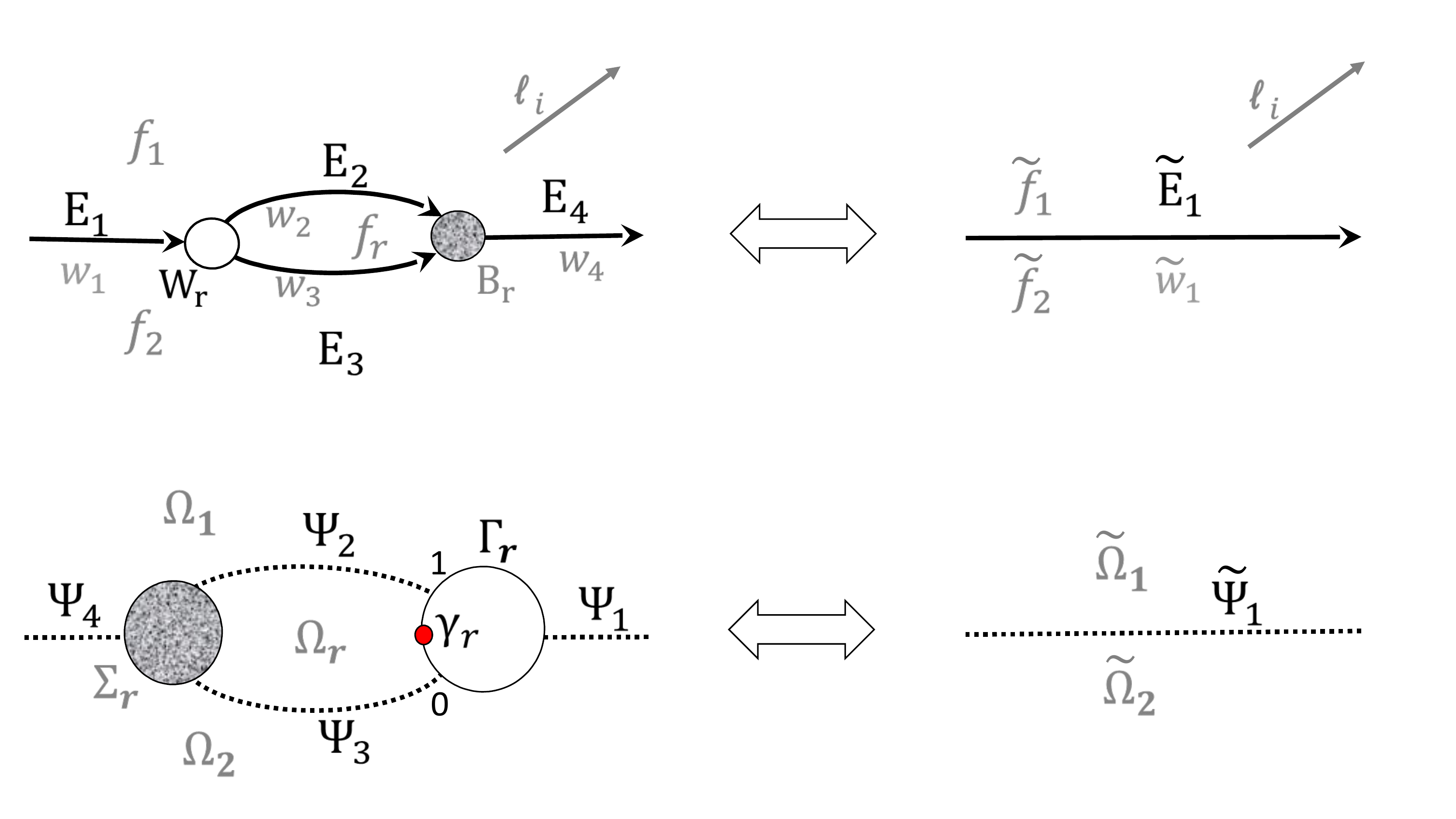}}
  \caption{\small{\sl The parallel edge reduction [top] eliminates an oval, diminishes by one the genus and eliminates a divisor point $\gamma_{\mbox{r}}$ [bottom].}\label{fig:parall_red_poles}}
\end{figure}

The parallel edge reduction eliminates an oval and a trivial divisor point (see also Figure~\ref{fig:parall_red_poles}): 

\begin{lemma}\label{lemma:poles_red1}\textbf{The effect of  (R1) on the divisor}
Let ${\tilde {\mathcal N}}^{\prime}$ be obtained from ${\mathcal N}^{\prime}$ via the parallel edge reduction (R1), and denote
${\tilde \Gamma}$ and $\Gamma$ the curve after and before such reduction. Then
\begin{enumerate}
\item The genus ${\tilde g}$ of ${\tilde \Gamma}$ is one less than that in $\Gamma$: ${\tilde g}=g-1$ ;
\item The oval $\Omega_{\mbox{r}}$ corresponding to the face $f_{\mbox{r}}$ and the components $\Gamma_{\mbox{r}},\Sigma_{\mbox{r}}$ corresponding to the white and black vertices $W_{\mbox{r}},B_{\mbox{r}}$, are removed by effect of the parallel edge reduction;
\item The trivial divisor point $\gamma_{\mbox{r}}\in \Gamma_{\mbox{r}}\cap \Omega_{\mbox{r}}$ is removed;
\item All other divisor points are not effected by the reduction.
\end{enumerate} 
\end{lemma}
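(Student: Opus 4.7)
The plan is to exploit the purely local nature of the parallel edge reduction: outside the triangle formed by $e_1, e_2, e_3, e_4$ nothing is affected, so the work reduces to a bookkeeping argument on the subgraph being collapsed together with the verification that the lone divisor point disappearing under the reduction is trivial.

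First I would account for the change in combinatorial data. Going from $\mathcal N^{\prime}$ to $\tilde{\mathcal N}^{\prime}$ we remove the two internal vertices $W_{\mathrm r}$ and $B_{\mathrm r}$, collapse the four edges $e_1,e_2,e_3,e_4$ into the single edge $\tilde e_1$, and erase the face $f_{\mathrm r}$ enclosed by $e_2$ and $e_3$. Plugging $\Delta V=-2$, $\Delta E=-3$, $\Delta F=-1$ into Euler's formula one gets $\Delta(\text{genus})=-1$, hence $\tilde g = g-1$, proving item (1). By Construction~\ref{def:gamma} the copies $\Sigma_{\mathrm r}$ and $\Gamma_{\mathrm r}$ correspond precisely to $B_{\mathrm r}$ and $W_{\mathrm r}$, and the oval $\Omega_{\mathrm r}$ to $f_{\mathrm r}$; this settles item (2).

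Next I would address item (4). Because the orientation, the color pattern, and the gauge ray direction remain unchanged outside the triangle, and because $\tilde E_1 = E_1$ by the direct computation (\ref{eq:vectors_paredge}) (using $\tilde w_1 = w_1(w_2+w_3)w_4$ and the fact that no gauge line crosses the triangle in the required way), the linear system of Lemma~\ref{lem:relations} on $\tilde{\mathcal N}^{\prime}$ inherits the same boundary data and the same vertex equations as on $\mathcal N^{\prime}$ at every internal vertex distinct from $W_{\mathrm r}$ and $B_{\mathrm r}$. By the uniqueness of the solution (Theorem~\ref{theo:consist}) the edge vectors, hence the dressed (or vacuum) edge wave functions, coincide on every edge that survives the reduction. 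Consequently the dressed network divisor numbers at all trivalent white vertices other than $W_{\mathrm r}$ remain the same, and by Theorems~\ref{theo:pos_div} and \ref{theo:inv} the corresponding divisor points on $\tilde\Gamma$ sit in the same ovals as on $\Gamma$.

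The only remaining point is item (3), which I expect to be the conceptual heart of the argument. From (\ref{eq:vectors_paredge}) one reads off $E_2 = (w_2/w_3) E_3$ at the trivalent white vertex $W_{\mathrm r}$, so the three vectors at $W_{\mathrm r}$ are pairwise proportional; Lemma~\ref{lemma:trivial_div} then forces $\gamma_{\mathrm r}$ to be a trivial divisor number, meaning that both the vacuum and the dressed wave functions are constant in the spectral parameter on $\Gamma_{\mathrm r}$. In particular $\hat\psi$ takes a single value on the whole of $\Gamma_{\mathrm r}\cup\Sigma_{\mathrm r}$ for every $\vec t$, so the removal of these two $\mathbb{CP}^1$ components (together with their two gluing pairs) produces no monodromy and no new pole, and $\hat\psi$ descends to a meromorphic function on $\tilde\Gamma$ with divisor $\tilde{\mathcal D} = \mathcal D\setminus\{\gamma_{\mathrm r}\}$. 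The only subtlety, which I would treat carefully, is that $\gamma_{\mathrm r}$ is unambiguously attributed to the disappearing oval $\Omega_{\mathrm r}$ rather than to a neighboring oval: this follows by inspecting the total index $\epsilon_{\mathrm{tot}}(e_2,e_3)$ of the two parallel edges via Theorem~\ref{theo:pos_div}, since $(e_2,e_3)$ is the unique pair of edges at $W_{\mathrm r}$ bounding $\Omega_{\mathrm r}$, and one checks directly that the regularity counts in each finite oval stated in Theorem~\ref{prop:comb_oval} remain balanced after the removal of the pair $(\Omega_{\mathrm r},\gamma_{\mathrm r})$.
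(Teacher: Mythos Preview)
Your proof plan is correct and follows essentially the same logic as the paper. The only stylistic difference is that where you invoke Lemma~\ref{lemma:trivial_div} abstractly and then appeal to Theorem~\ref{theo:pos_div} to locate $\gamma_{\mathrm r}$ in $\Omega_{\mathrm r}$, the paper simply computes the divisor number directly as $\gamma_{\mathrm r}=\dfrac{w_3}{w_2+w_3}\in(0,1)$, which in one stroke exhibits both its triviality (independence of $\vec t_0$) and its position in the oval bounded by $e_2,e_3$; this explicit value is worth recording since it makes item~(3) immediate without the index bookkeeping.
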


In Figure \ref{fig:parall_red_poles} [bottom] we show the effect of the parallel edge reduction on the curve. Using the relations between the edge vectors in \ref{eq:vectors_paredge}, we get
\[
\Psi_{2} (\vec t)= (-1)^{\mbox{int}(e_2)} w_2 \Psi_4(\vec t), \quad \Psi_{3} (\vec t)= (-1)^{\mbox{int}(e_2)} w_3 \Psi_4(\vec t), \quad \Psi_{1} (\vec t)= (-1)^{\mbox{int}(e_1)}w_1 (\Psi_2 (\vec t)+\Psi_3(\vec t)),
\]
for all $\vec t$, so that the divisor point in $\Gamma^{(1)}_{\mbox{red}}$ is trivial
\[
\gamma_{\mbox{r}} = \frac{\Psi_{3} (\vec t_0)}{\Psi_{2} (\vec t_0)+\Psi_{3} (\vec t_0)} = \frac{w_{3} }{w_{2}+w_{3} }.
\]
On both $\Gamma^{(1)}_{\mbox{red}}$ and $\Sigma^{(1)}_{\mbox{red}}$ the normalized wave function is independent of the spectral parameter and takes the value $\frac{\Psi_{1} (\vec t)}{\Psi_{1} (\vec t_0)}\equiv \frac{\Psi_{4} (\vec t)}{\Psi_{4} (\vec t_0)}$. After the reduction, on the  edge ${\tilde e}_1$, the edge wave function ${\tilde \Psi}_1(\vec t)$  takes the value
\[
{\tilde \Psi}_{1} (\vec t) = \Psi_{1} (\vec t),
\]
so that, for any $\vec t$, the normalized wave function keeps the same value after the reduction at the double point corresponding to such edge in ${\tilde \Gamma}$.

\subsection{(R2) the dipole reduction}

The dipole reduction consists of the elimination of an isolated component consisting of two vertices joined by an edge $e$ (see Figure \ref{fig:dip_leaf}[left]). The transformation leaves invariant the weight of the face containing such component. Since the edge vector at $e$ is $E_e=0$, this transformation acts trivially on the vector system.

\begin{figure}%[H]
\includegraphics[width=0.48\textwidth]{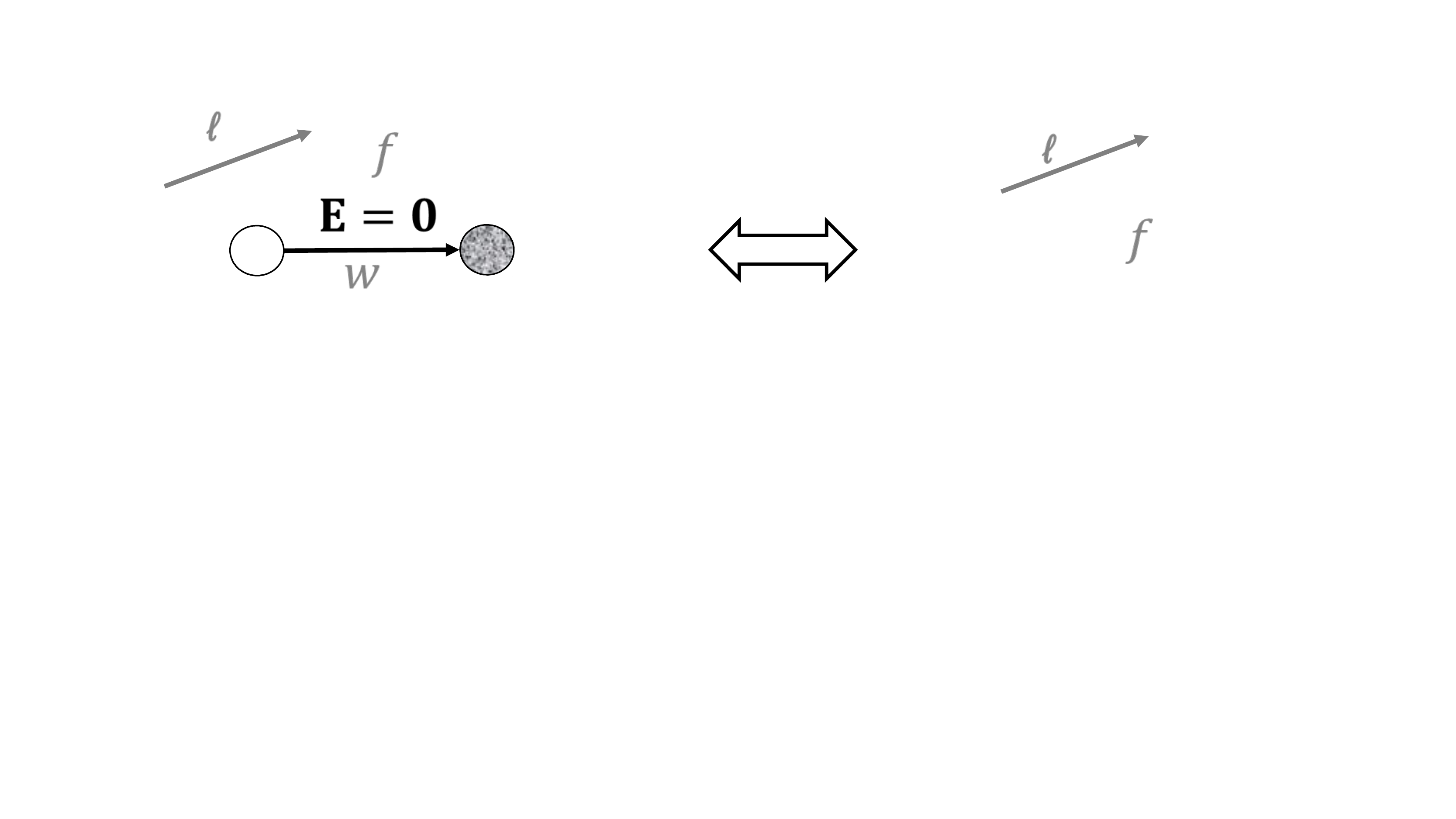}
\hfill
\includegraphics[width=0.48\textwidth]{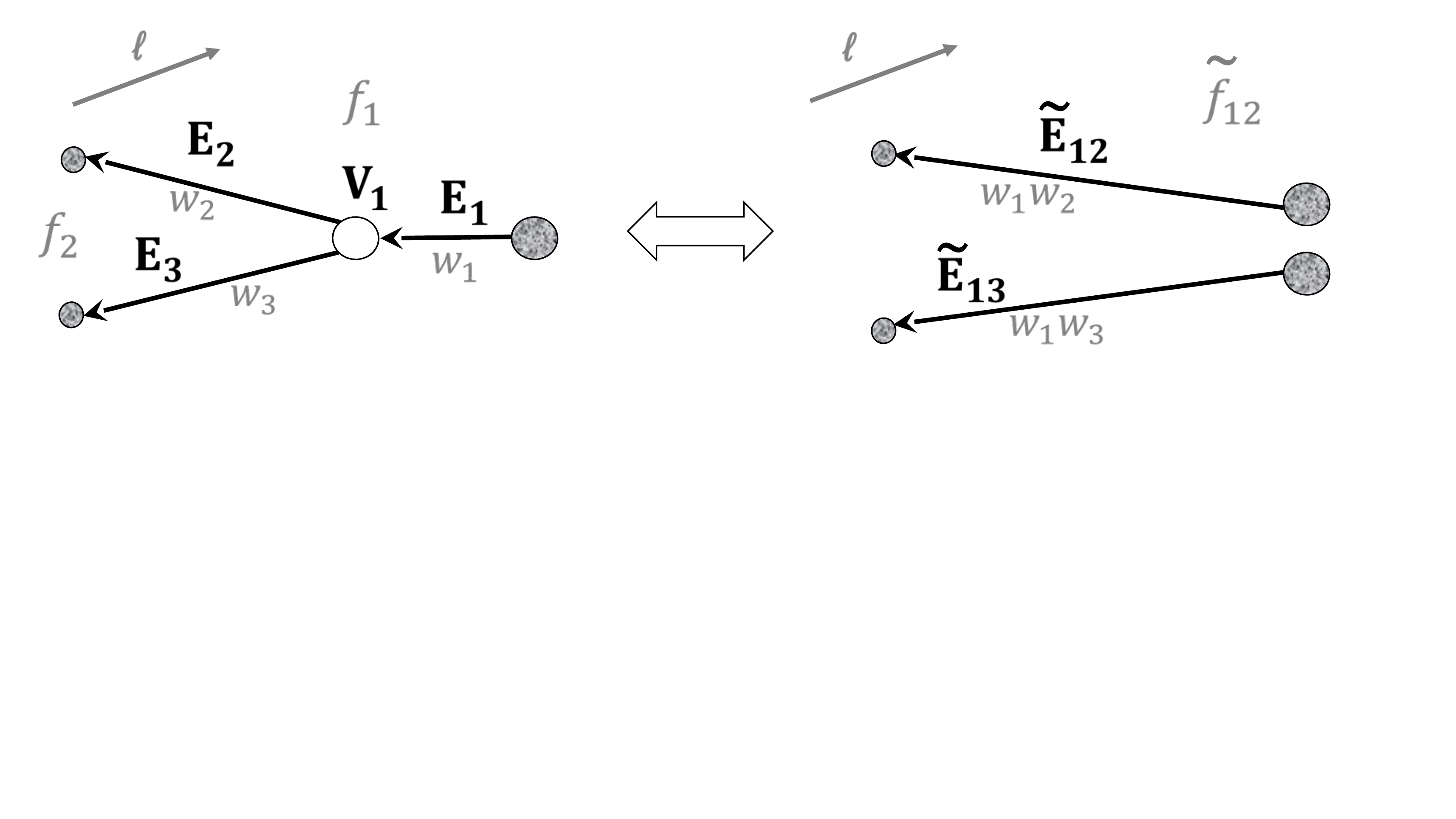}
\vspace{-2.5 truecm}
\caption{\small{\sl Left: the dipole reduction. Right: the leaf reduction.}}
\label{fig:dip_leaf}
\end{figure}

The dipole reduction (R2) corresponds to the removal of an edge carrying a null vector and leaves invariant the number of ovals and the position of the divisor points.

\subsection{(R3) The leaf reduction}\label{sec:leaf}

The leaf reduction occurs when a network contains a vertex $u$ incident to a single edge $e_1$ ending at a trivalent vertex (see Figure \ref{fig:dip_leaf}[right]): in this case it is possible to remove $u$ and $e_1$, disconnect $e_2$ and $e_3$, assign the color of $u$ at all newly created vertices of the edges $e_{12}$ and $e_{13}$. We assume that $e_1$ is short enough, and it does not intersect the gauge rays. If we have two faces of weights $f_1$ and $f_2$ in the initial configuration, then we merge them into a single face of weight ${\tilde f}_{12} =f_1f_2$; otherwise ${\tilde f}_{12}=f_1$ and the effect of the transformation is to create new isolated components. We also assume that the newly created vertices are close enough to $V_1$, therefore the windings are not affected. Then $E_1 = {\tilde E}_{12} + {\tilde E}_{13}$ and 
\[
{\tilde E}_{12} = w_{1} E_2, \quad\quad {\tilde E}_{13} =  w_{1} E_3.
\]
In the leaf reduction (R3) the only non-trivial case corresponds to the situation where the faces $f_1$, $f_2$ are distinct in the initial configuration. In this case
we eliminate one oval and one divisor point.

\begin{lemma}\label{lemma:poles_red3}\textbf{The effect of  (R3) on the divisor}
Let ${\tilde{\mathcal N}}$ is obtained from ${\mathcal N}$ via reduction (R3) and the faces $f_1$, $f_2$ be distinct. 
Let ${\mathcal D}= {\mathcal D}({\mathcal N})$ and ${\tilde {\mathcal D}}= {\mathcal D}({\tilde {\mathcal N}})$, respectively be the dressed (vacuum) network divisor before and after the reduction. Then
\begin{enumerate}
\item ${\tilde g}=g-1$ and the number of ovals diminishes by one;
\item ${\tilde {\mathcal D}} = {\mathcal D}\backslash \{ \gamma_1 \}$, where
$\gamma_1$ is the divisor point on the component associated to the white vertex $V_1$ involved in the reduction.
\end{enumerate} 
\end{lemma}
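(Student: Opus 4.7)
The plan is to analyze successively the change in the face structure of the network, the corresponding surgery on the curve $\Gamma$, and the invariance of the remaining edge vectors, and finally to combine these three observations to identify the change in the divisor.

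First I would observe that, since the faces $f_1$ and $f_2$ bounding the leaf edge $e_1$ are distinct in $\mathcal N$, the removal of the pair $(u,e_1)$ merges $f_1$ and $f_2$ into a single face $\tilde f_{12}$ of $\tilde{\mathcal N}$ (with weight $f_1 f_2$) and affects no other face. The total number of faces of $\tilde{\mathcal N}$ is therefore exactly one less than that of $\mathcal N$. Applying Construction \ref{def:gamma} to $\tilde{\mathcal N}$ then yields $\tilde g=g-1$ and identifies the merged oval $\tilde\Omega_{12}\subset\tilde\Gamma$ associated to $\tilde f_{12}$, while every other oval of $\tilde\Gamma$ coincides with the corresponding oval of $\Gamma$.

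Next I would describe the surgery on $\Gamma$ induced by (R3). The trivalent white vertex $V_1$ is eliminated from the graph; correspondingly the component $\Gamma_1$ associated to $V_1$ is removed from $\Gamma$, and the marked points of $\Gamma_1$ associated to the edges $e_2,e_3$ are re-attached to the newly created endpoints of $\tilde e_{12},\tilde e_{13}$ in $\tilde{\mathcal N}$. The component associated to $u$, if any, does not carry a divisor point by Definitions \ref{def:vac_div_gen} and \ref{def:DKP}, since $u$ is not a trivalent white vertex distinct from a Darboux sink. Hence the only component of $\Gamma$ that bears a divisor point and disappears in the surgery is $\Gamma_1$, on which the divisor point $\gamma_1$ is located.

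The main step is then to verify that the edge vectors at all edges outside the leaf region are preserved by (R3), so that the divisor numbers $\gamma_{\textup{\scriptsize dr},V}$ (respectively $\gamma_{\textup{\scriptsize vac},V}$) attached to every other trivalent white vertex $V$ are unchanged. Using the local relations $\tilde E_{12}=w_1 E_2$ and $\tilde E_{13}=w_1 E_3$ recalled in Section \ref{sec:leaf}, together with the white-vertex relation $E_1=\tilde E_{12}+\tilde E_{13}$, the new system $\{\tilde E_e\}$ on $\tilde{\mathcal N}$ satisfies the linear relations at every remaining internal vertex and the unchanged boundary conditions at the boundary sinks. By the uniqueness statement in Theorem \ref{theo:consist}, $\tilde E_e$ and $E_e$ therefore coincide on every edge $e$ lying outside the removed leaf region. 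The corresponding dressed (respectively vacuum) edge wave functions agree on such edges as well, and formula (\ref{eq:dress_pole_def}) gives $\tilde\gamma_{V}=\gamma_{V}$ for every surviving trivalent white vertex $V$. Combining this with the curve surgery yields $\tilde{\mathcal D}=\mathcal D\setminus\{\gamma_1\}$.

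The main obstacle I anticipate is purely combinatorial, namely to check that $\gamma_1$ is consistently located in $\Omega_1\cup\Omega_2$ before the reduction, so that the parity count of Theorem \ref{prop:comb_oval} remains satisfied on the merged oval $\tilde\Omega_{12}$ after the removal of $\gamma_1$; this follows directly from Corollary \ref{cor:DKP} applied before and after the reduction, but requires carefully distinguishing, via the index $\epsilon_{tot}$ at the pair of edges at $V_1$ bounding $f_1$ and $f_2$, in which of the two merging ovals the divisor point $\gamma_1$ was originally located.
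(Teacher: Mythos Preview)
Your proposal is correct and follows exactly the pattern the paper uses for the analogous reductions: the paper itself omits the proof of this lemma, regarding it as immediate from the edge--vector identities $\tilde E_{12}=w_1E_2$, $\tilde E_{13}=w_1E_3$, $E_1=\tilde E_{12}+\tilde E_{13}$ displayed just before the statement, together with the face--merging observation. Your appeal to Theorem~\ref{theo:consist} to propagate the invariance of the edge vectors (hence of the divisor numbers at all surviving trivalent white vertices) is precisely what underlies the paper's explicit argument for (R1) in Lemma~\ref{lemma:poles_red1}, and the final paragraph on the parity check via $\epsilon_{tot}$ is a consistency remark rather than a needed step.
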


\section{Plane curves and divisors for soliton data in ${\mathcal S}_{34}^{\mbox{\tiny TNN}}\subset Gr^{\mbox{\tiny TNN}}(2,4)$}\label{sec:example}

\begin{figure}
\centering{
\includegraphics[width=0.45\textwidth]{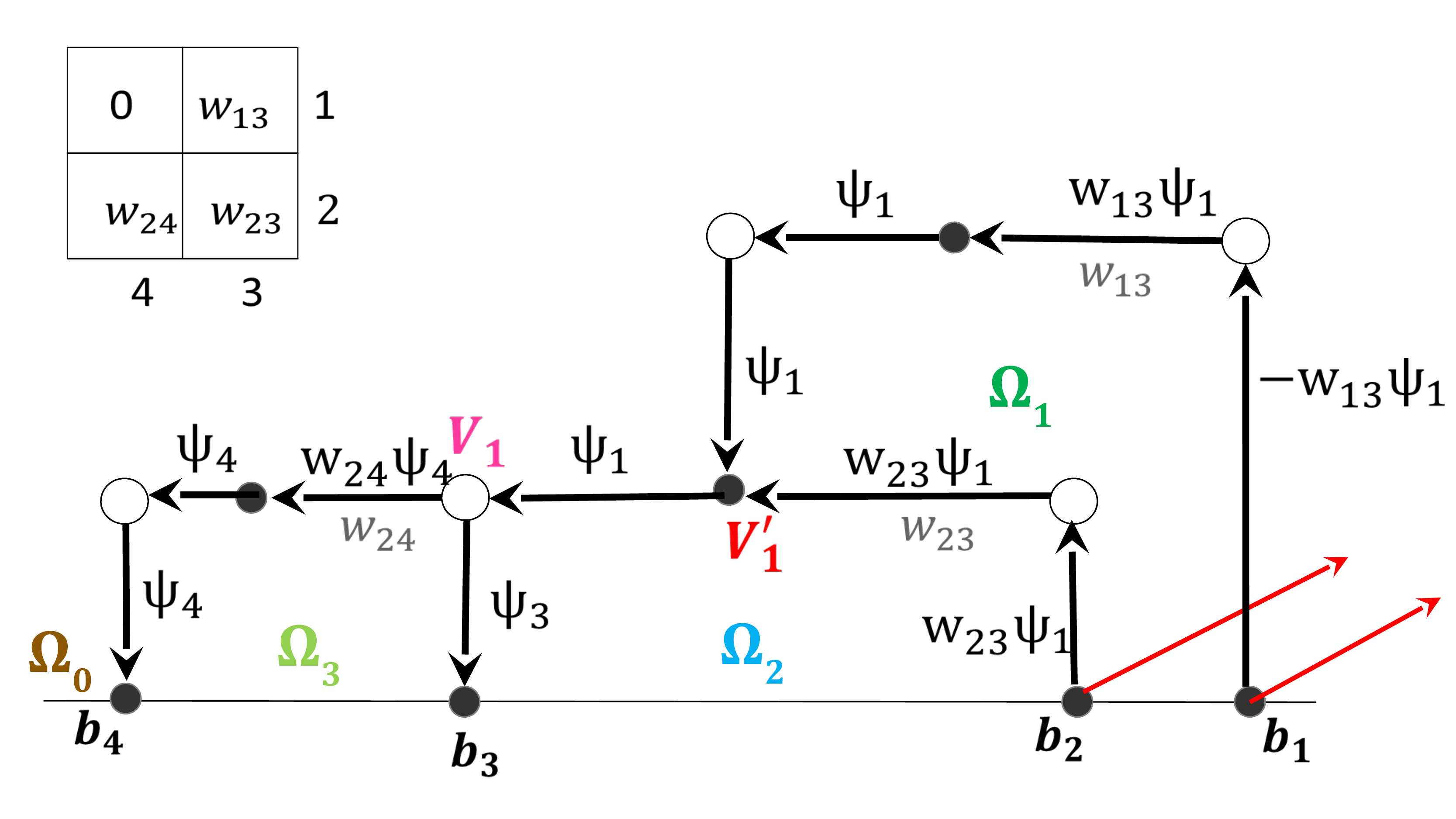}
\hfill
\includegraphics[width=0.46\textwidth]{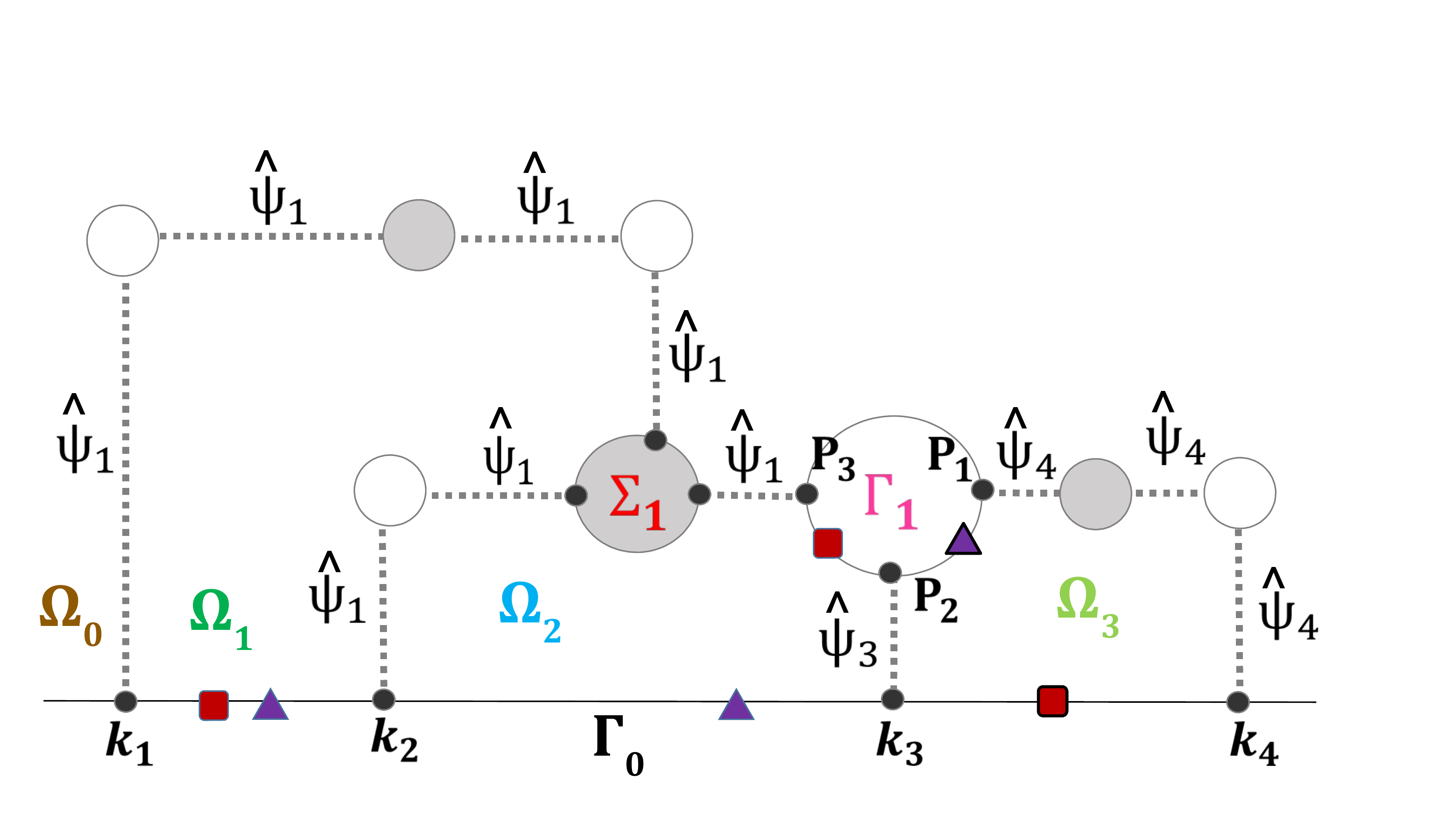}
\includegraphics[width=0.45\textwidth]{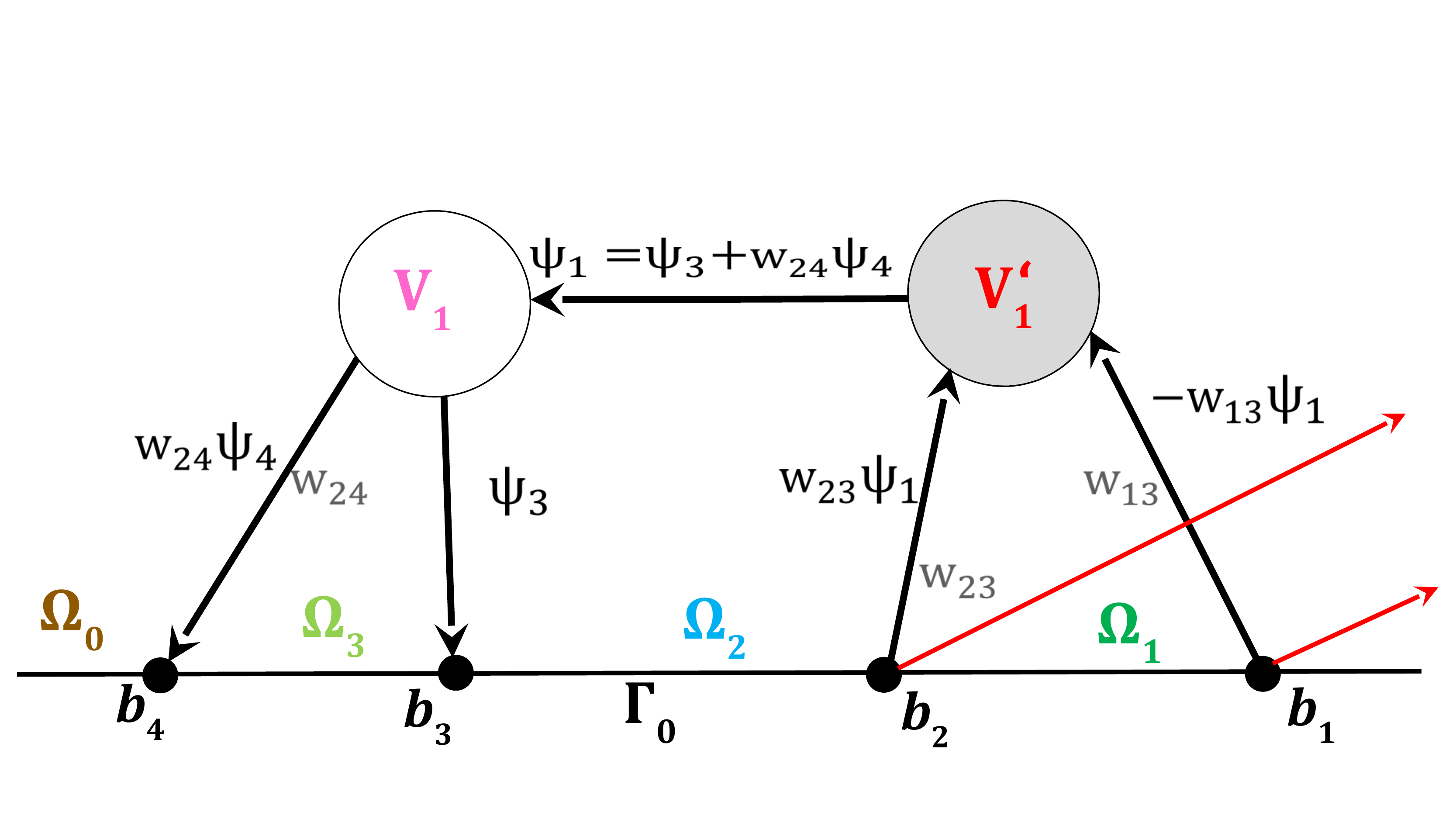}
\hfill
\includegraphics[width=0.46\textwidth]{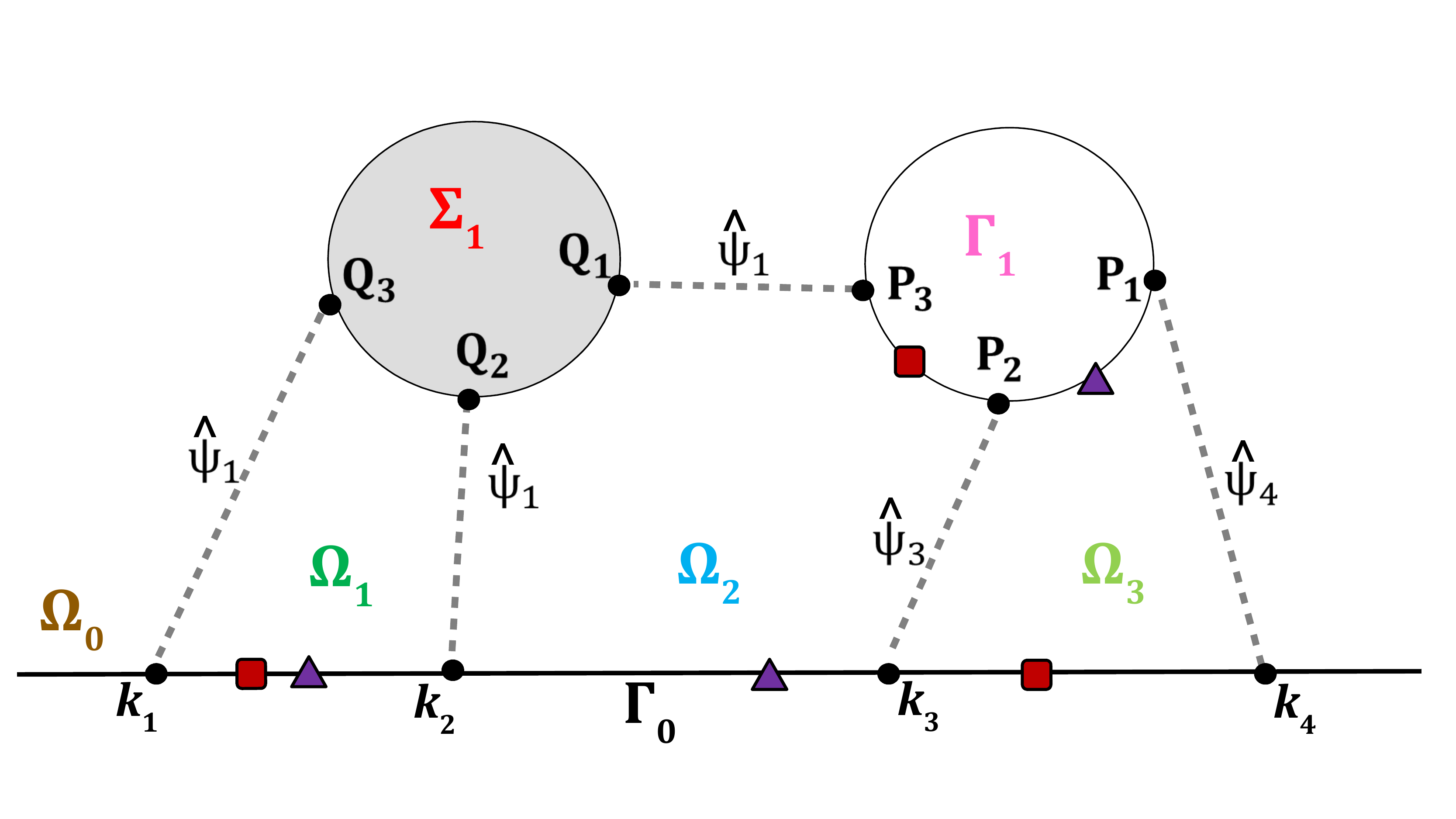}
}
\caption{\small{\sl Top: the Le--tableau, the Le-network ${\mathcal N}_T$ [left] and the topological model of the corresponding spectral curve $\Gamma_T$ [right] for soliton data in ${\mathcal S}_{34}^{\mbox{\tiny TNN}}\subset Gr^{\mbox{\tiny TNN}}(2,4)$. Bottom: the reduced Le-network ${\mathcal N}_{\mbox{\scriptsize red}}$ [left] and the topological model of the corresponding spectral curve $\Gamma_{T,\mbox{\scriptsize red}}$ [right] for the same cell. In all Figures  $\Psi_{1} (\vec t)=\Psi_{3} (\vec t)+w_{24}\Psi_{4} (\vec t)$. The configurations of the KP divisor (triangles/squares) are the same on both curves and depend only on the sign of $\Psi_3 (\vec t_0)$.
On the curves double points are represented as dotted segments and $\hat \psi_l \equiv \hat\psi_l (\vec t)$ is as in (\ref{eq:hat_psi}).}} 
\label{fig:Gr24_net}
\end{figure}

${\mathcal S}_{34}^{\mbox{\tiny TNN}}$ is the 3--dimensional positroid cell in $Gr^{\mbox{\tiny TNN}} (2,4)$ corresponding to the matroid
\[
{\mathcal M} = \{ \ 12 \ , \ 13 \ , \ 14 \ ,\ 23 \ , \ 24\ \},
\]
and its elements $[A]$ are equivalence classes of real $2\times 4$ matrices with all maximal minors positive, except $\Delta_{34}=0$. The three positive weights $w_{13},w_{23},w_{24}$ of the Le-tableau (see Figure \ref{fig:Gr24_net}[top,left]) parametrize ${\mathcal S}_{34}^{\mbox{\tiny TNN}}$ and correspond to the matrix in the reduced row echelon form (RREF),
\begin{equation}
\label{eq:RREF}
A = \left( \begin{array}{cccc}
1 & 0 & - w_{13} & -w_{13} w_{24}\\
0 & 1 & w_{23}   & w_{23}w_{24}
\end{array}
\right).
\end{equation}
The generators of the Darboux transformation ${\mathfrak D}=\partial_x^2 -{\mathfrak w}_1 (\vec t)\partial_x -{\mathfrak w}_2(\vec t)$ are
\begin{equation}\label{eq:gen_f2}
f^{(1)} (\vec t) = e^{\theta_1(\vec t)} -w_{13}e^{\theta_3(\vec t)}-w_{13} w_{24}e^{\theta_4(\vec t)}, \quad\quad
f^{(2)} (\vec t) = e^{\theta_2(\vec t)} +w_{23}e^{\theta_3(\vec t)}+w_{23}w_{24}e^{\theta_4(\vec t)}.
\end{equation}
In the following sections we construct a reducible rational curve $\Gamma_{T,\mbox{\scriptsize red}}$ and the divisor for soliton data $({\mathcal K}, [A])$ with $\mathcal K = \{ \kappa_1 < \kappa_2<\kappa_3<\kappa_4\}$ and $[A] \in {\mathcal S}_{34}^{\mbox{\tiny TNN}}$. We represent $\Gamma_{T,\mbox{\scriptsize red}}$ as a plane curve given by the intersection of a line and two quadrics (see (\ref{eq:lines}) and (\ref{eq:curveGr24})) and we verify that it is a rational degeneration of the genus 3 $\mathtt M$--curve $\Gamma_{\varepsilon}$ ($0<\varepsilon \ll 1$) in (\ref{eq:curveGr24_pert}). We then apply a parallel edge unreduction and a flip move to the reduced network and compute the transformed KP divisor on the transformed curves.

\subsection{Spectral curves for the reduced Le--network and their desingularizations}\label{sec:gamma_24}

We briefly illustrate the construction of a rational spectral curve $\Gamma_{T,\mbox{\scriptsize red}}$ for soliton data in ${\mathcal S}_{34}^{\mbox{\tiny TNN}}$. We start using the Le--graph as dual graph of the reducible rational curve and use Postnikov rules to assign the weights to construct the Le--network $\mathcal N_T$ for $[A]\in {\mathcal S}_{34}^{\mbox{\tiny TNN}}$  (Figure \ref{fig:Gr24_net}[top,left]). Then we use the middle edge insertion/removal move (M3) (see Section \ref{sec:middle}) to obtain an equivalent network $\mathcal N_{T,\mbox{\scriptsize red}}$ with only trivalent internal vertices (Figure \ref{fig:Gr24_net}[bottom,left]). In Figure \ref{fig:Gr24_net}[right], we show the topological model of the curves $\Gamma_{T}$ [top] and $\Gamma_{T,\mbox{\scriptsize red}}$ [bottom]. The elimination of bivalent vertices reduces the degree of the curve from 9 to 5. 

The reducible rational curve $\Gamma_{T,\mbox{\scriptsize red}}$ is obtained gluing three copies of $\mathbb{CP}^1$, $\Gamma_{T,\mbox{\scriptsize red}}=\Gamma_0\sqcup \Gamma_{1}\sqcup\Sigma_{1}$, and it may be represented as a plane curve given by the intersection of a line ($\Gamma_0$) and two quadrics ($\Gamma_1,\Sigma_1$). We plot both the topological model and the plane curve for this example in Figure \ref{fig:gr24_top_2}. 

\begin{figure}%[H]
  \centering
  {\includegraphics[width=0.42\textwidth]{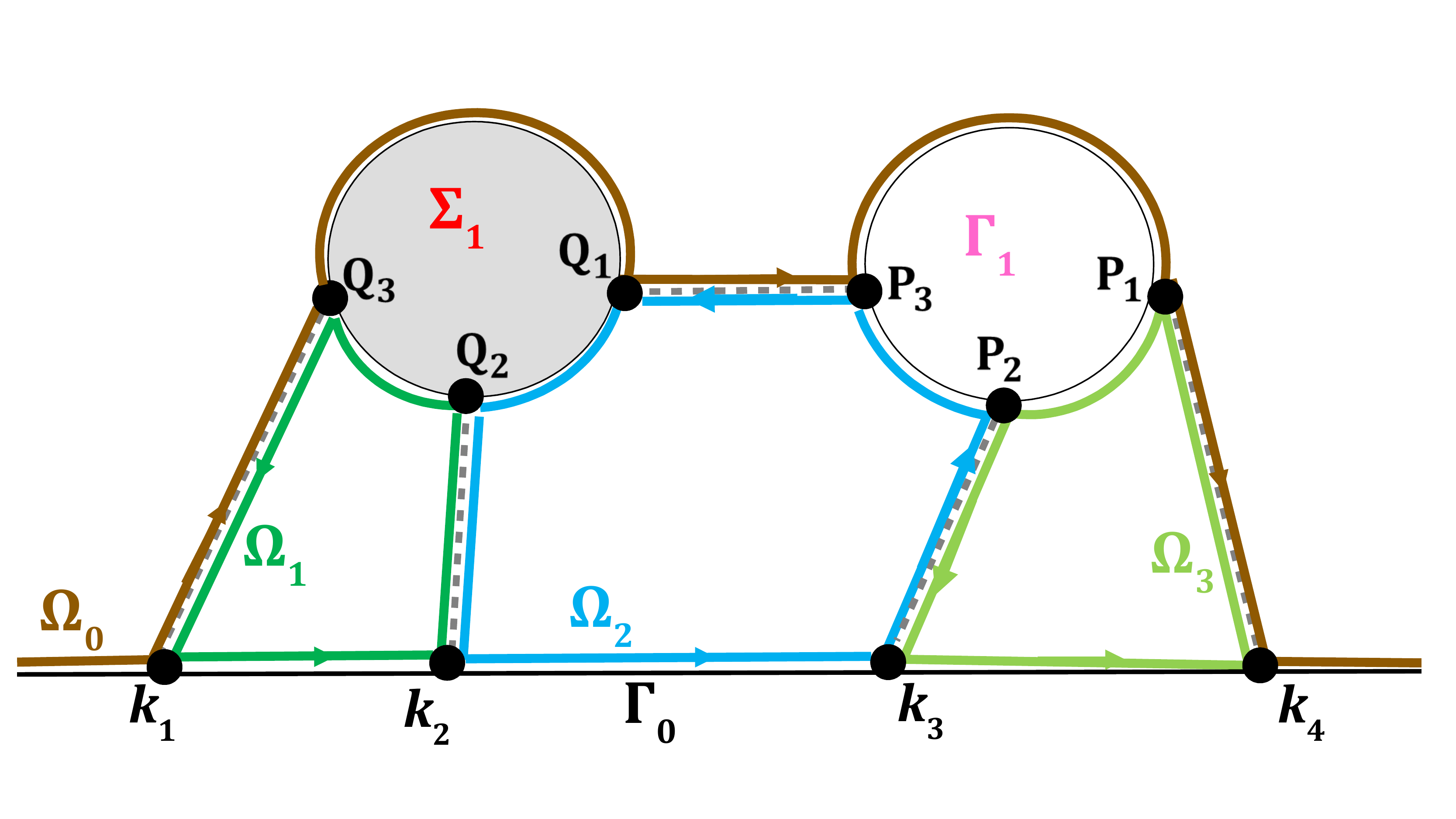}}
  \hspace{.6 truecm}
  {\includegraphics[,width=0.42\textwidth]{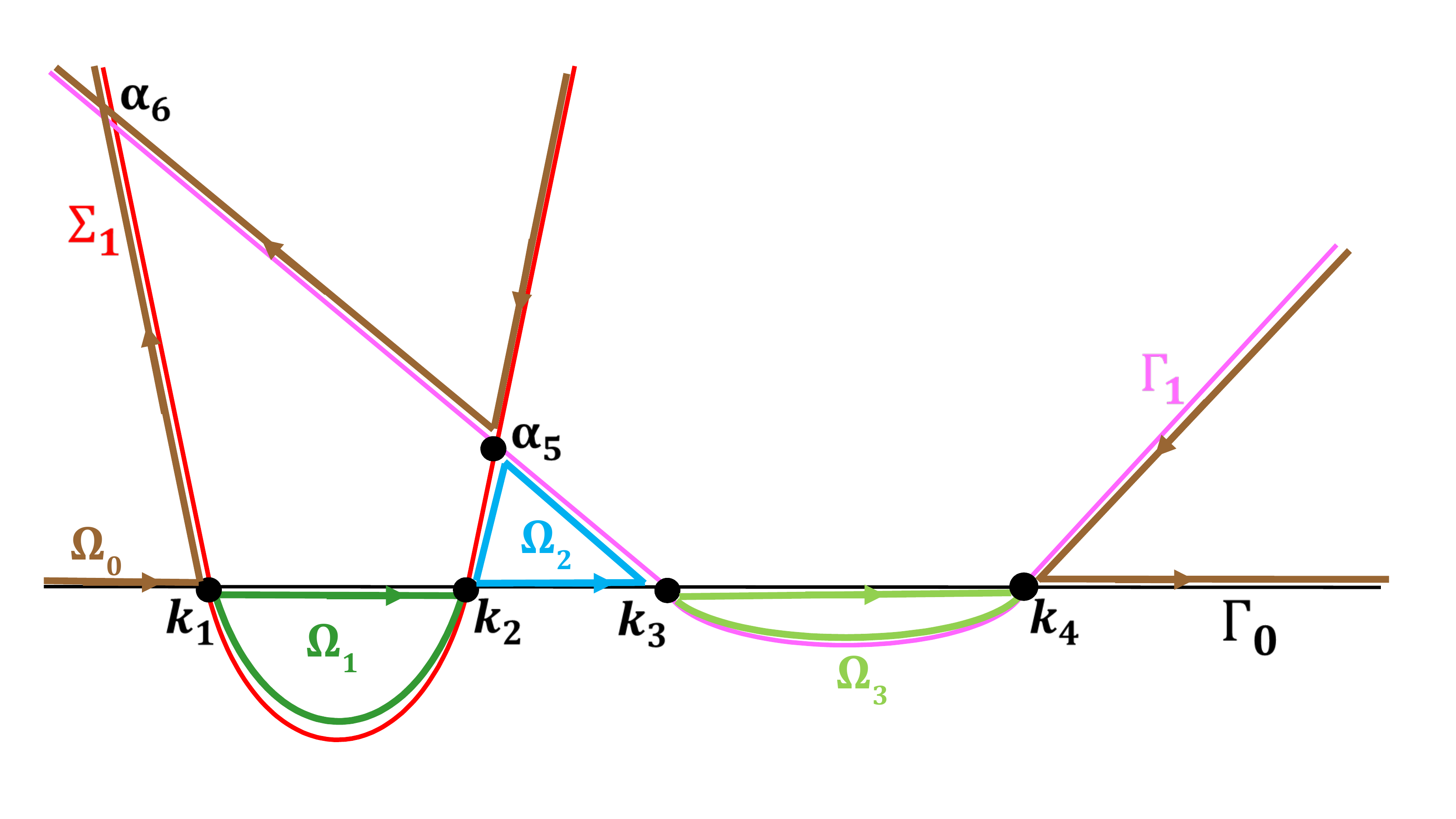}}
\caption{\footnotesize{\sl The topological model [left] of the spectral curve $\Gamma_{T,\mbox{\scriptsize red}}$ for soliton data in ${\mathcal S}_{34}^{\mbox{\tiny TNN}}$ is the partial normalization of the plane algebraic curve [right]. The ovals in the nodal plane curve are labeled as in the real part of its partial normalization. }}\label{fig:gr24_top_2}        
\end{figure}

To simplify its representation, we impose that $\Gamma_0$ is one of the coordinate axis in the $(\lambda,\mu)$--plane, say $\mu=0$, that $P_0\in \Gamma_0$ is the infinite point, that the quadrics $\Sigma_{1}$ and  $\Gamma_{1}$ are parabolas with two real finite intersection points $\alpha_5=(\lambda_5,\mu_5)$, $\alpha_6=(\lambda_6,\mu_6)$:
\begin{equation}\label{eq:lines}
\Gamma_0:  \mu=0, \ \ \Gamma_{1}: \mu- (\lambda - \kappa_3)(\lambda-\kappa_4) =0, \ \   \Sigma_{1}: \mu-c_1(\lambda -\kappa_1)(\lambda-\kappa_2)=0.
\end{equation}
In the following we also take $c_1>1$ and choose $\lambda(P_3)=\lambda(Q_1)=\lambda_5$. Then, by construction, $\lambda_5\in ]\kappa_2,\kappa_3[$ and $\lambda_6<\kappa_1$.
As usual we denote $\Omega_0$ the infinite oval, that is $P_0\in \Omega_0$, and $\Omega_j$, $j\in[3]$, the finite ovals (see Figure \ref{fig:gr24_top_2}). 
Since the singularity at infinity is completely resolved, the quadrics $\Sigma_{1}$ and $\Gamma_{1}$ do not intersect at infinity. The intersection point $\alpha_6$ does not correspond to any of the marked points of the topological model of $\Gamma$ (see Figure \ref{fig:gr24_top_2}). Such singularity is resolved in the partial normalization and therefore there are no extra conditions to be satisfied by the vacuum and the dressed wave functions at $\alpha_6$.

The relation between the coordinate $\lambda$ in the plane curve representation and the coordinate $\zeta$ introduced in Definition \ref{def:loccoor} may be easily worked out at each component of
$\Gamma_{T,\mbox{\scriptsize red}}$.
On $\Gamma_{1}$, we have 3 real ordered marked points $P_m$, $m\in [3]$, with $\zeta$--coordinates: $\zeta(P_1)=0<\zeta(P_2)=1<\zeta(P_3)=\infty$.
Comparing with (\ref{eq:lines}) we then easily conclude that
\[
\lambda = \frac{\lambda_5 (\kappa_4-\kappa_3) \zeta +(\kappa_3-\lambda_5)\kappa_4}{(\kappa_4-\kappa_3) \zeta +\kappa_3-\lambda_5}.
\]
Similarly, on $\Sigma_{1}$, we have 3 real ordered marked points $Q_m$, $m\in [3]$, with $\zeta$--coordinates: $\zeta(Q_1)=0<\zeta(Q_2)=1<\zeta(Q_3)=\infty$.
Comparing with (\ref{eq:lines}) we then easily conclude that
\[
\lambda = \frac{\kappa_1 (\lambda_5-\kappa_2) \zeta +(\kappa_2-\kappa_1)\lambda_5}{(\lambda_5-\kappa_2) \zeta +\kappa_2-\kappa_1}.
\]
Finally
$\Gamma_{T,\mbox{\scriptsize red}}$ is represented by the reducible plane curve $\Pi_0(\lambda,\mu)=0$, with
\begin{equation}
\label{eq:curveGr24}
\Pi_0(\lambda,\mu)=\mu\cdot\big(\mu- (\lambda - \kappa_3)(\lambda-\kappa_4)\big)\cdot\big(\mu-c_1 (\lambda - \kappa_1)(\lambda-\kappa_2)\big).
\end{equation}

Finally $\Gamma_{T,\mbox{\scriptsize red}}$ is a rational degeneration of a genus 3 $\mathtt M$--curve $\Gamma_{\varepsilon}$ ($0<\varepsilon \ll 1$):
\begin{equation}
\label{eq:curveGr24_pert}
\Gamma_{\varepsilon} \; : \quad\quad \Pi(\lambda, \mu;\varepsilon)=  \Pi_0(\lambda,\mu)-\varepsilon^2\left(\lambda -\lambda_6\right)^2=0.
\end{equation}

\begin{remark}
Of course, the plane curve representation for a given cell is not unique. For example, a rational spectral curve $\Gamma_{T,\mbox{\scriptsize red}}$ for soliton data in ${\mathcal S}_{34}^{\mbox{\tiny TNN}}$ can be also represented as the union of one quadric and two lines. 
\end{remark}

\subsection{The KP divisor on $\Gamma_{T,\mbox{\scriptsize red}}$}\label{sec:div_24}

Let us briefly illustrate the construction of the wave function and of the KP divisor on $\Gamma_{T,\mbox{\scriptsize red}}$. 
Since ${\mathcal N}_{T,{\mbox{\scriptsize red}}}$ is obtained from ${\mathcal N}_T$ via Move (M3) (Section \ref{sec:middle}),
the normalized KP edge wave function $\hat \psi (P, \vec t)$ is the same at the corresponding double points on $\Gamma_T$ and on $\Gamma_{T,\mbox{\scriptsize red}}$ (see Figure \ref{fig:Gr24_net} [right]). 

For this example, the KP wave function  may take only three possible values at the marked points:
\begin{equation}
\label{eq:hat_psi}
\hat \psi_l (\vec t) = \frac{{\mathfrak D} e^{\theta_l(\vec t)}}{{\mathfrak D} e^{\theta_l(\vec t_0)}}, \ \ l=1,3,4,
\end{equation}
where $\theta_l(\vec t) = \kappa_l x+  \kappa_l^2 y+ \kappa_l^3 t$. In Figure~\ref{fig:Gr24_net} [right] we show which double point carries which of the above values of $\hat \psi$. At each marked point the value $\hat \psi_l (\vec t)$ is independent on the choice of local coordinates on the components, {\sl i.e.} of the orientation in the network. 
The position of the KP divisor points of $\DKP$ is the same both on $\Gamma_T$ and $\Gamma_{T,\mbox{\scriptsize red}}$ and the local coordinate of each divisor point may be computed using Theorem \ref{theo:pos_div} and formula (\ref{eq:formula_div}). 

On $\Gamma=\Gamma_{T,\mbox{\scriptsize red}}$, the KP divisor $\DKP$ consists of the degree $k=2$ Sato divisor $(\gamma_{S,1} ,\gamma_{S,2} )=(\gamma_{S,1} (\vec t_0),\gamma_{S,2} (\vec t_0))$ defined in (\ref{eq:Satodiv}) and of $1$ simple pole $\gamma_{1}=\gamma_{1} (\vec t_0)$ belonging to the intersection of 
$\Gamma_{1}$ with the union of the finite ovals. In the local coordinates induced by the orientation of the network (see Definition \ref{def:loccoor}), we have
\begin{equation}\label{eq:ex_div1}
\zeta(\gamma_{S,1}) +\zeta(\gamma_{S,2}) = {\mathfrak w}_1 (\vec t_0), \quad  \zeta(\gamma_{S,1})\zeta(\gamma_{S,2}) = -{\mathfrak w}_2 (\vec t_0), \quad
\zeta(\gamma_{1}) = \frac{w_{24} {\mathfrak D} e^{\theta_4(\vec t_0)}}{{\mathfrak D} e^{\theta_3(\vec t_0)}+ w_{24} 
{\mathfrak D} e^{\theta_4(\vec t_0)}}.
\end{equation}
We remark that applying the Darboux dressing to (\ref{eq:gen_f2}), it is easy to verify that ${\mathfrak D} e^{\theta_1(\vec t)}$, ${\mathfrak D} e^{\theta_4(\vec t)}>0$ and ${\mathfrak D} e^{\theta_2(\vec t)} = - \frac{w_{23}}{w_{13}}{\mathfrak D} e^{\theta_1(\vec t)}$. Therefore, for generic soliton data $[A]\in {\mathcal S}_{34}^{\mbox{\tiny TNN}}$, the KP--II pole divisor configuration is one of the two shown in Figure \ref{fig:Gr24_net} [right]:
\begin{enumerate}
\item If ${\mathfrak D} e^{\theta_3(\vec t_0)}>0$, then $\gamma_{S,1} \in \Omega_1$, $\gamma_{S,2} \in \Omega_2$ and $\gamma_{1} \in \Omega_3$. One such configuration is illustrated by triangles in the Figure;
\item If ${\mathfrak D} e^{\theta_3(\vec t_0)}<0$, then $\gamma_{S,1} \in \Omega_1$, $\gamma_{S,2} \in \Omega_3$ and $\gamma_{1} \in \Omega_2$. One such configuration is illustrated by squares in the Figure.
\end{enumerate}
As expected, there is exactly one KP divisor point in each finite oval of $\Gamma$, where we use the counting rule established in \cite{AG1} for non--generic soliton data satisfying ${\mathfrak D} e^{\theta_3(\vec t_0)}=0$.

\subsection{The effect of Postnikov moves and reductions on the KP divisor}

\begin{figure}
\centering{
\includegraphics[width=0.48\textwidth]{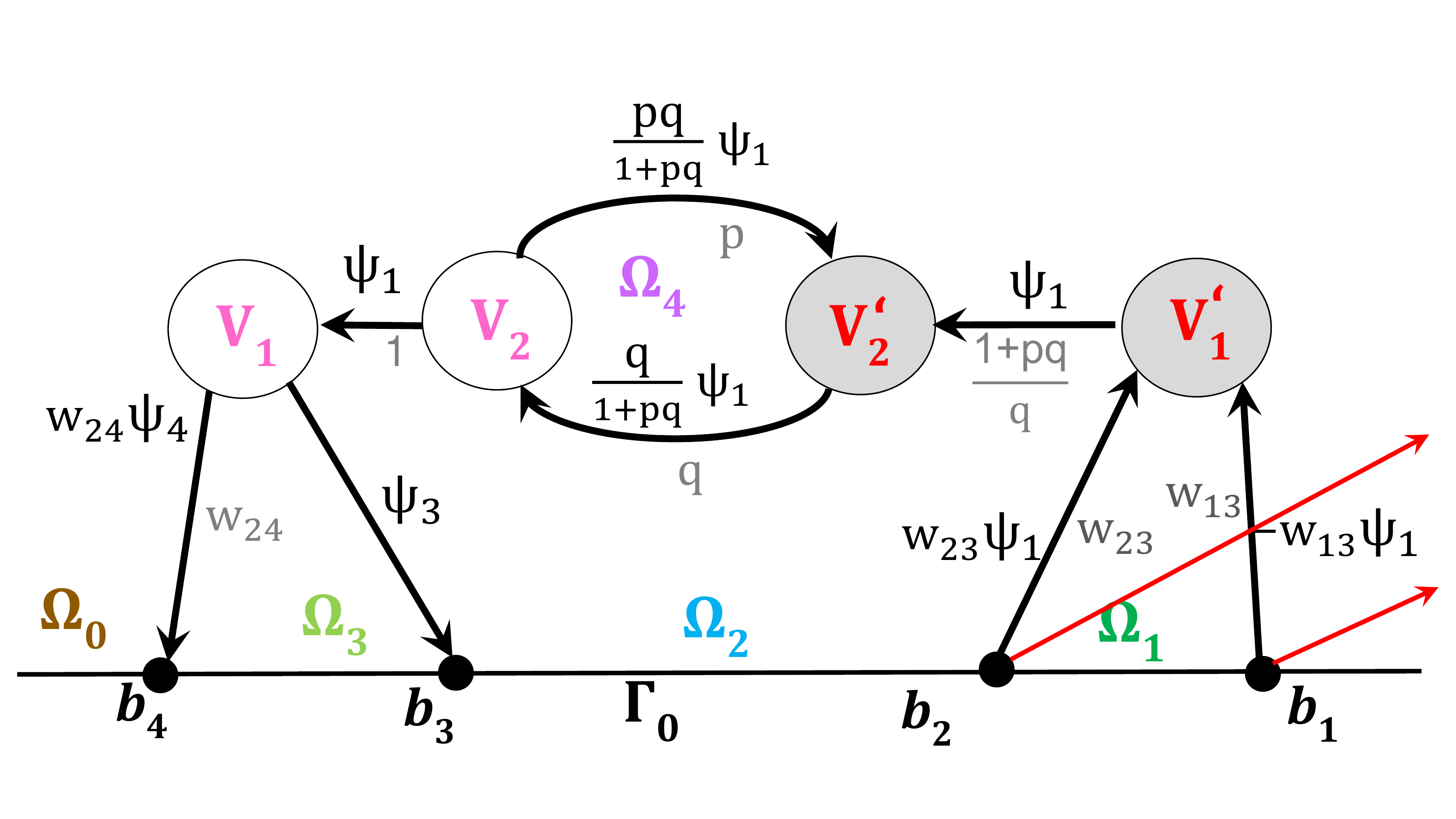}
\hfill
\includegraphics[width=0.49\textwidth]{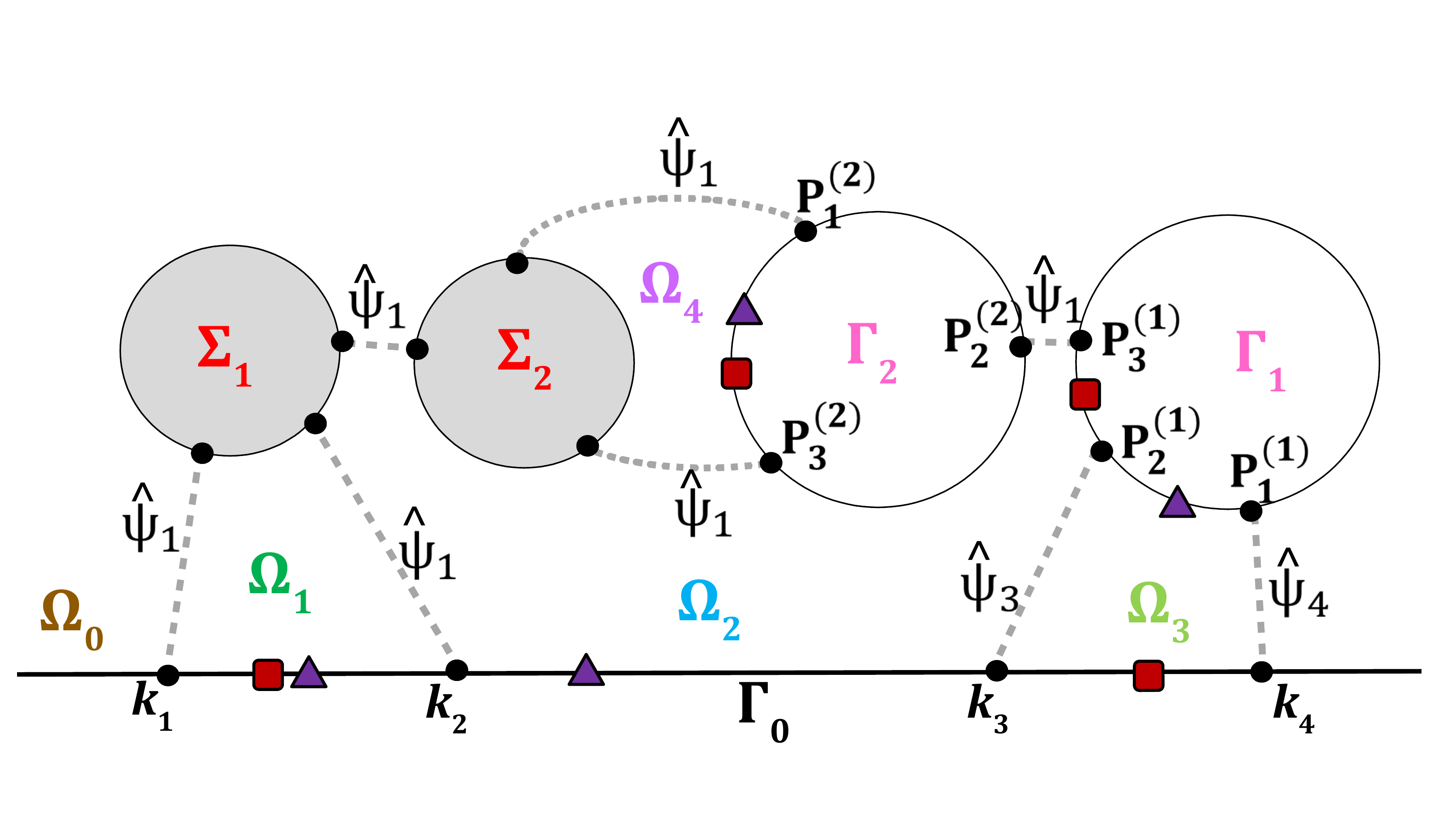}
\includegraphics[width=0.48\textwidth]{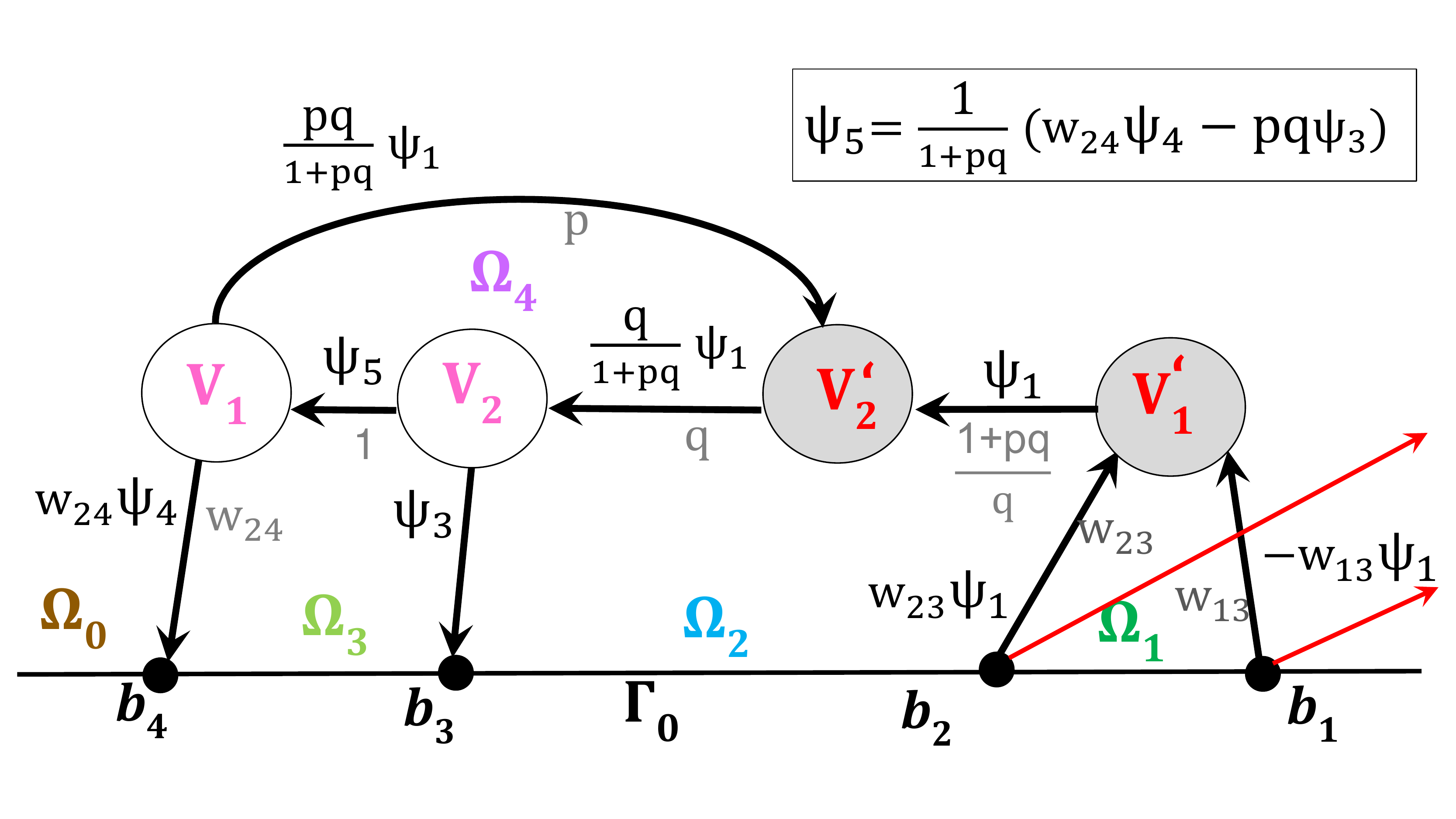}
\hfill
\includegraphics[width=0.49\textwidth]{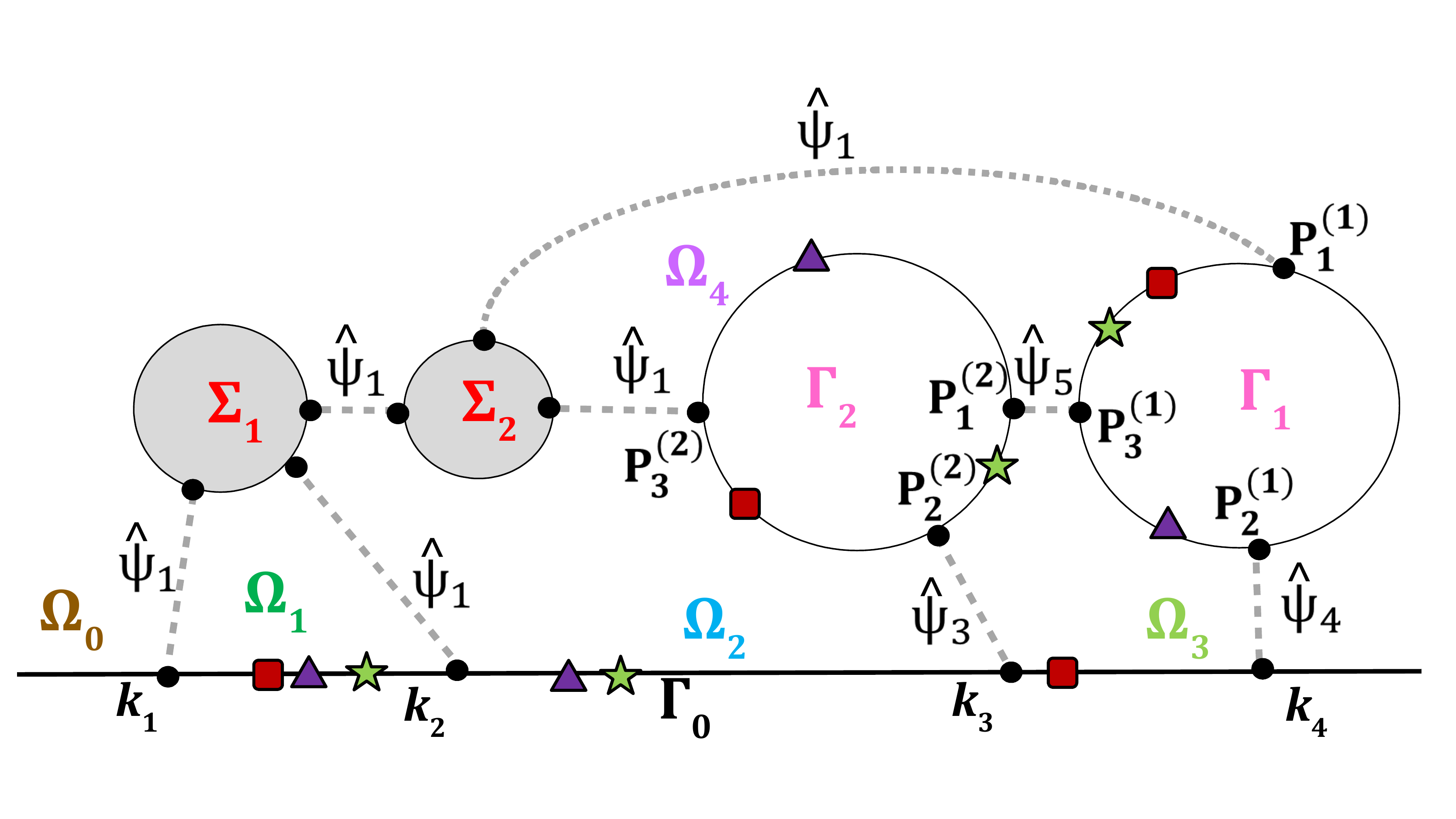}
}
\caption{\small{\sl Left: ${\mathcal N}_{\mbox{\scriptsize par}}$ [top] is obtained from the reduced Le-network ${\mathcal N}_{T,{\mbox{\scriptsize red}}}$ of Figure \ref{fig:Gr24_net} applying the parallel edge unreduction, whereas ${\mathcal N}_{\mbox{\scriptsize flip}}$ [bottom] is obtained from ${\mathcal N}_{\mbox{\scriptsize par}}$ applying a flip move.
Right: The partial normalization of the corresponding curves ${\Gamma}_{\mbox{\scriptsize par}}$ [top], ${\Gamma}_{\mbox{\scriptsize flip}}$ [bottom] and the possible divisor configurations.
}}
\label{fig:Gr24_moves}
\end{figure}

Next we show the effect of moves and reductions on the divisor position. 

We first apply a parallel edge unreduction to ${\mathcal N}_{T,{\mbox{\scriptsize red}}}$ (Figure \ref{fig:Gr24_net}) and obtain the network  ${\mathcal N}_{{\mbox{\scriptsize par}}}$ in Figure \ref{fig:Gr24_moves}[top,left]. We remark that we have a gauge freedom in assigning the weights to the edges involved in this transformation provided that $p,q>0$. The values on the unnormalized dressed wave function are shown in the Figure. The corresponding curve,  ${\Gamma}_{\mbox{\scriptsize par}}$, is presented in Figure \ref{fig:Gr24_moves}[top,right]. By construction the KP divisor $\DKP$ consists of the degree $k=2$ Sato divisor $(\gamma_{S,1} ,\gamma_{S,2} )=(\gamma_{S,1} (\vec t_0),\gamma_{S,2} (\vec t_0))$ computed in (\ref{eq:ex_div1}) and of the simple poles $\gamma_{i}=\gamma_{i} (\vec t_0)$ belonging to the intersection of 
$\Gamma_{i}$, $i=1,2$, with the union of the finite ovals. In the local coordinates induced by the orientation of the network (see Definition \ref{def:loccoor}), we have
\begin{equation}\label{eq:ex_div2}
\zeta(\gamma_{1}) = \frac{w_{24} {\mathfrak D} e^{\theta_4(\vec t_0)}}{{\mathfrak D} e^{\theta_3(\vec t_0)}+ w_{24} 
{\mathfrak D} e^{\theta_4(\vec t_0)}}, \quad\quad   \zeta(\gamma_{2}) = -pq.
\end{equation}
For generic soliton data $[A]\in {\mathcal S}_{34}^{\mbox{\tiny TNN}}$, the KP--II pole divisor configurations are shown in Figure \ref{fig:Gr24_moves} [top,right]:
\begin{enumerate}
\item $\gamma_{2} \in \Omega_4$ independently of the sign of ${\mathfrak D} e^{\theta_3(\vec t_0)}>0$;
\item If ${\mathfrak D} e^{\theta_3(\vec t_0)}>0$, then $\gamma_{S,1} \in \Omega_1$, $\gamma_{S,2} \in \Omega_2$ and $\gamma_{1} \in \Omega_3$. One such configuration is illustrated by triangles in the Figure;
\item If ${\mathfrak D} e^{\theta_3(\vec t_0)}<0$, then $\gamma_{S,1} \in \Omega_1$, $\gamma_{S,2} \in \Omega_3$ and $\gamma_{1} \in \Omega_2$. One such configuration is illustrated by squares in the Figure.
\end{enumerate}
Again, for any given $[A]\in {\mathcal S}_{34}^{\mbox{\tiny TNN}}$, there is exactly one KP divisor point in each finite oval, where again we use the counting rule established in \cite{AG1} for non--generic soliton data satisfying ${\mathfrak D} e^{\theta_3(\vec t_0)}=0$.

Next we apply a flip move to ${\mathcal N}_{\mbox{\scriptsize par}}$ and obtain the network  ${\mathcal N}_{{\mbox{\scriptsize flip}}}$ in Figure \ref{fig:Gr24_moves}[bottom,left]. Again the values on the unnormalized dressed wave function are shown in the Figure. The corresponding curve,  ${\Gamma}_{\mbox{\scriptsize flip}}$, is presented in Figure \ref{fig:Gr24_moves}[bottom,right]. By construction the KP divisor $\DKP$ consists of the degree $k=2$ Sato divisor $(\gamma_{S,1} ,\gamma_{S,2} )=(\gamma_{S,1} (\vec t_0),\gamma_{S,2} (\vec t_0))$ computed in (\ref{eq:ex_div1}) and of the simple poles $\tilde \gamma_{i}=\tilde \gamma_{i} (\vec t_0)$ belonging to the intersection of 
$\Gamma_{i}$, $i=1,2$, with the union of the finite ovals. In the local coordinates induced by the orientation of the network (see Definition \ref{def:loccoor}), we have
\begin{equation}\label{eq:ex_div3}
\zeta(\gamma_{1}) = \frac{pq \left({\mathfrak D} e^{\theta_3(\vec t_0)} + w_{24} {\mathfrak D} e^{\theta_4(\vec t_0)} \right)}{{pq\mathfrak D} e^{\theta_3(\vec t_0)}- w_{24} 
{\mathfrak D} e^{\theta_4(\vec t_0)}}, \quad\quad   \zeta(\gamma_{2}) = -\frac{pq{\mathfrak D} e^{\theta_3(\vec t_0)}- w_{24} 
{\mathfrak D} e^{\theta_4(\vec t_0)}}{{\mathfrak D} e^{\theta_3(\vec t_0)} + w_{24} {\mathfrak D} e^{\theta_4(\vec t_0)}}.
\end{equation}
For generic soliton data $[A]\in {\mathcal S}_{34}^{\mbox{\tiny TNN}}$, the KP--II pole divisor configuration is one of the three shown in Figure \ref{fig:Gr24_net} [right]:
\begin{enumerate}
\item If ${\mathfrak D} e^{\theta_3(\vec t_0)}>0$  and $pq{\mathfrak D} e^{\theta_3(\vec t_0)}> w_{24} 
{\mathfrak D} e^{\theta_4(\vec t_0)}$, then $\gamma_{S,1} \in \Omega_1$, $\gamma_{S,2} \in \Omega_2$,
 and $\tilde \gamma_{1} \in \Omega_3$ and $\tilde \gamma_{2} \in \Omega_4$. One such configuration is illustrated by triangles in the Figure;
\item If ${\mathfrak D} e^{\theta_3(\vec t_0)}>0$  and $pq{\mathfrak D} e^{\theta_3(\vec t_0)}< w_{24} 
{\mathfrak D} e^{\theta_4(\vec t_0)}$, then $\gamma_{S,1} \in \Omega_1$, $\gamma_{S,2} \in \Omega_2$,
 and $\tilde \gamma_{1} \in \Omega_4$ and $\tilde \gamma_{2} \in \Omega_3$. One such configuration is illustrated by stars in the Figure;
\item If ${\mathfrak D} e^{\theta_3(\vec t_0)}<0$, then $\gamma_{S,1} \in \Omega_1$, $\gamma_{S,2} \in \Omega_3$, $\tilde \gamma_{1} \in \Omega_4$ and $\tilde \gamma_{2} \in \Omega_2$. One such configuration is illustrated by squares in the Figure.
\end{enumerate}
Again there is exactly one KP divisor point in each finite oval, where we use the counting rule established in \cite{AG1} for non--generic soliton data satisfying ${\mathfrak D} e^{\theta_3(\vec t_0)}=0$.

\section{Effect of the square move on the KP divisor for soliton data in $Gr^{\mbox{\tiny TP}}(2,4)$}\label{sec:ex_Gr24top}
\begin{figure}
\centering{
\includegraphics[width=0.48\textwidth]{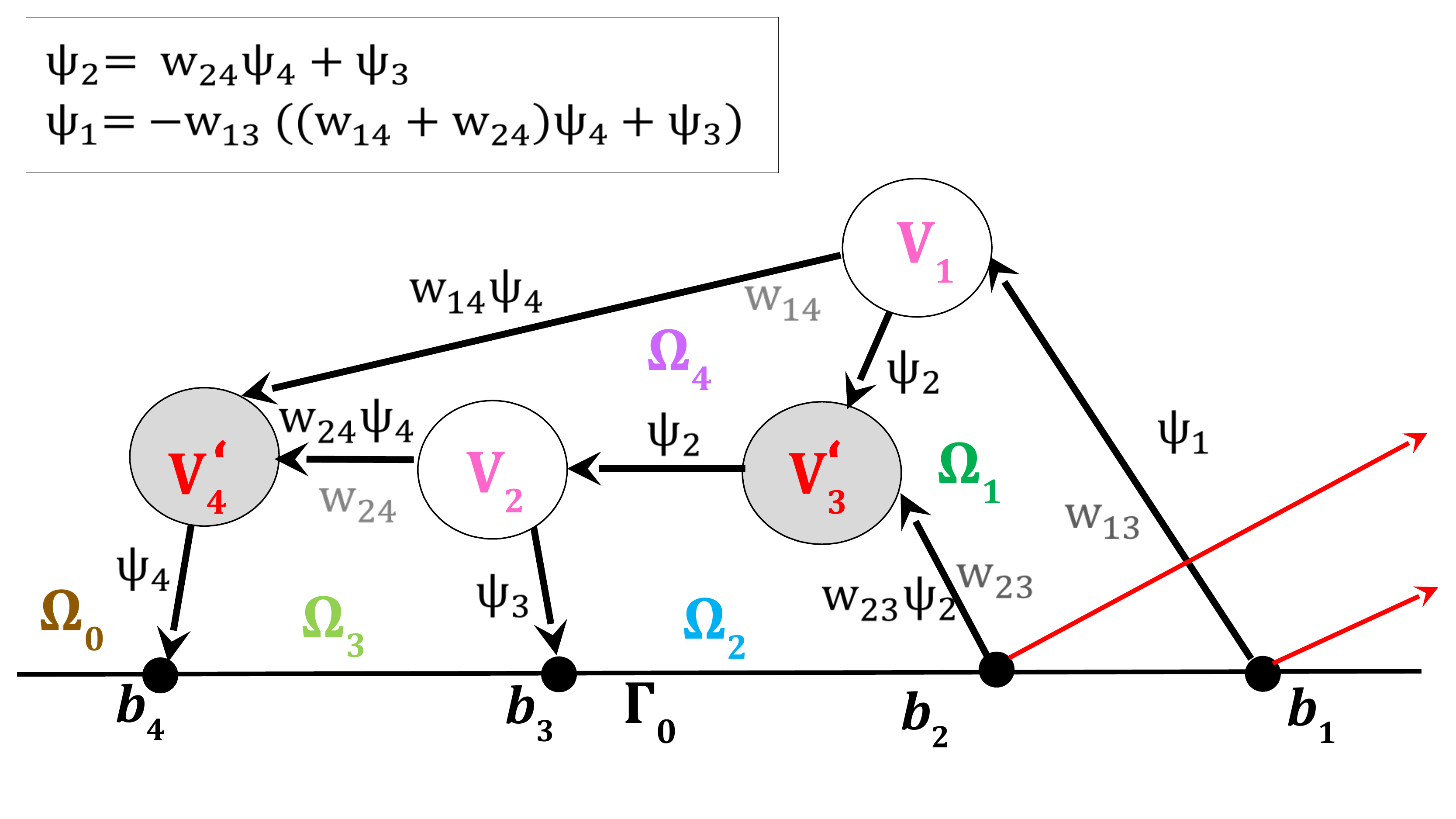}
\hfill
\includegraphics[width=0.49\textwidth]{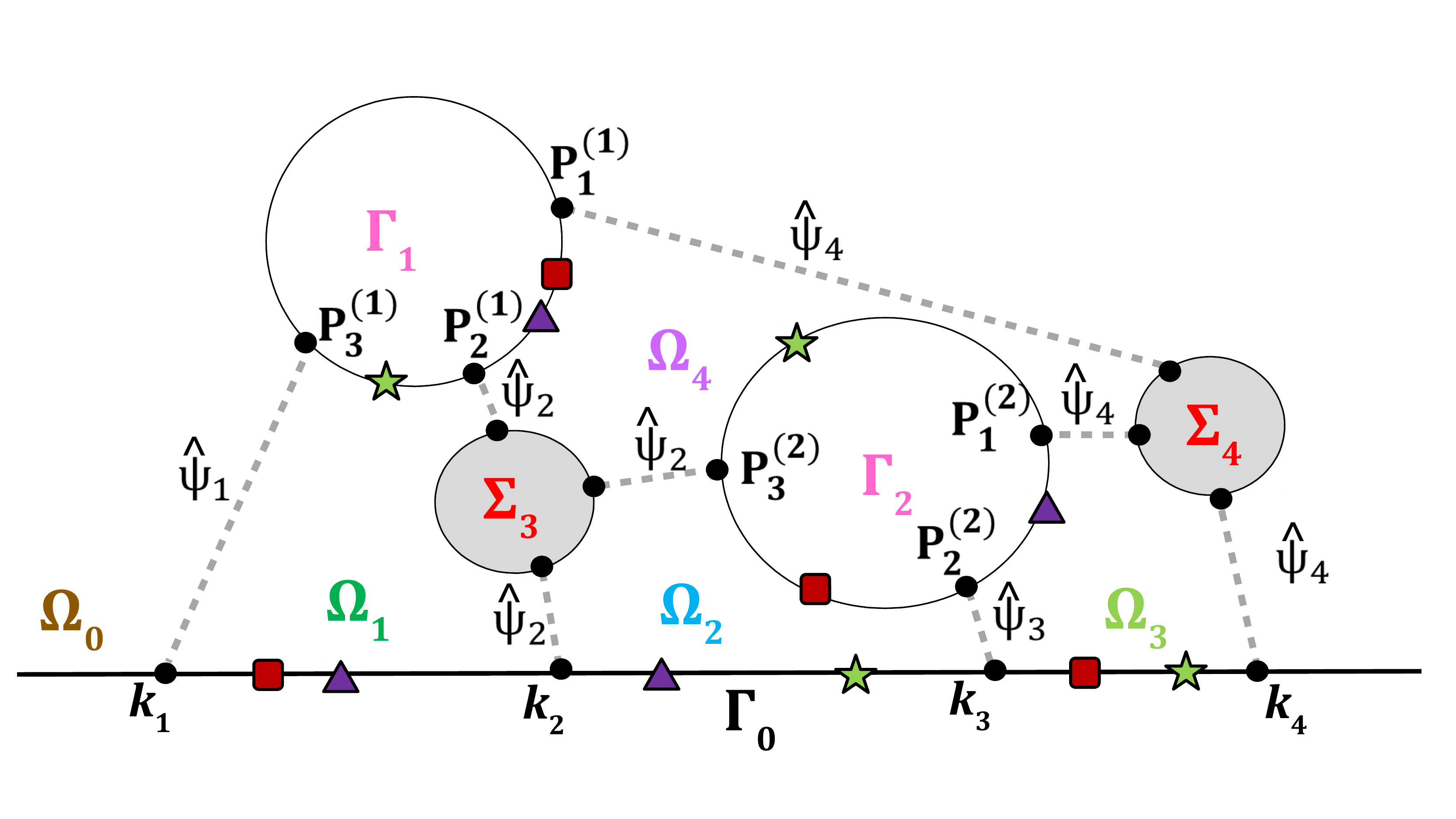}
\includegraphics[width=0.48\textwidth]{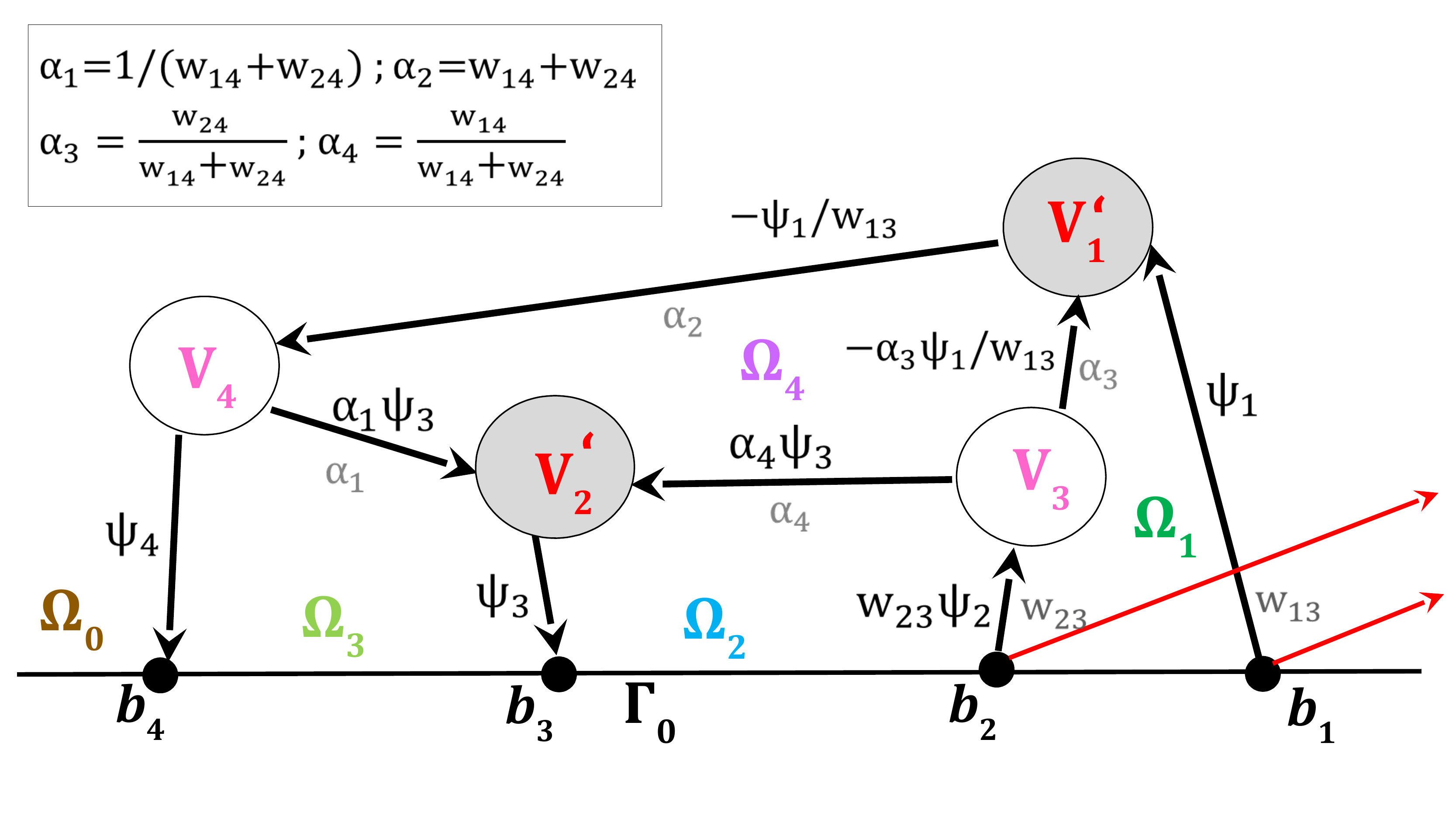}
\hfill
\includegraphics[width=0.49\textwidth]{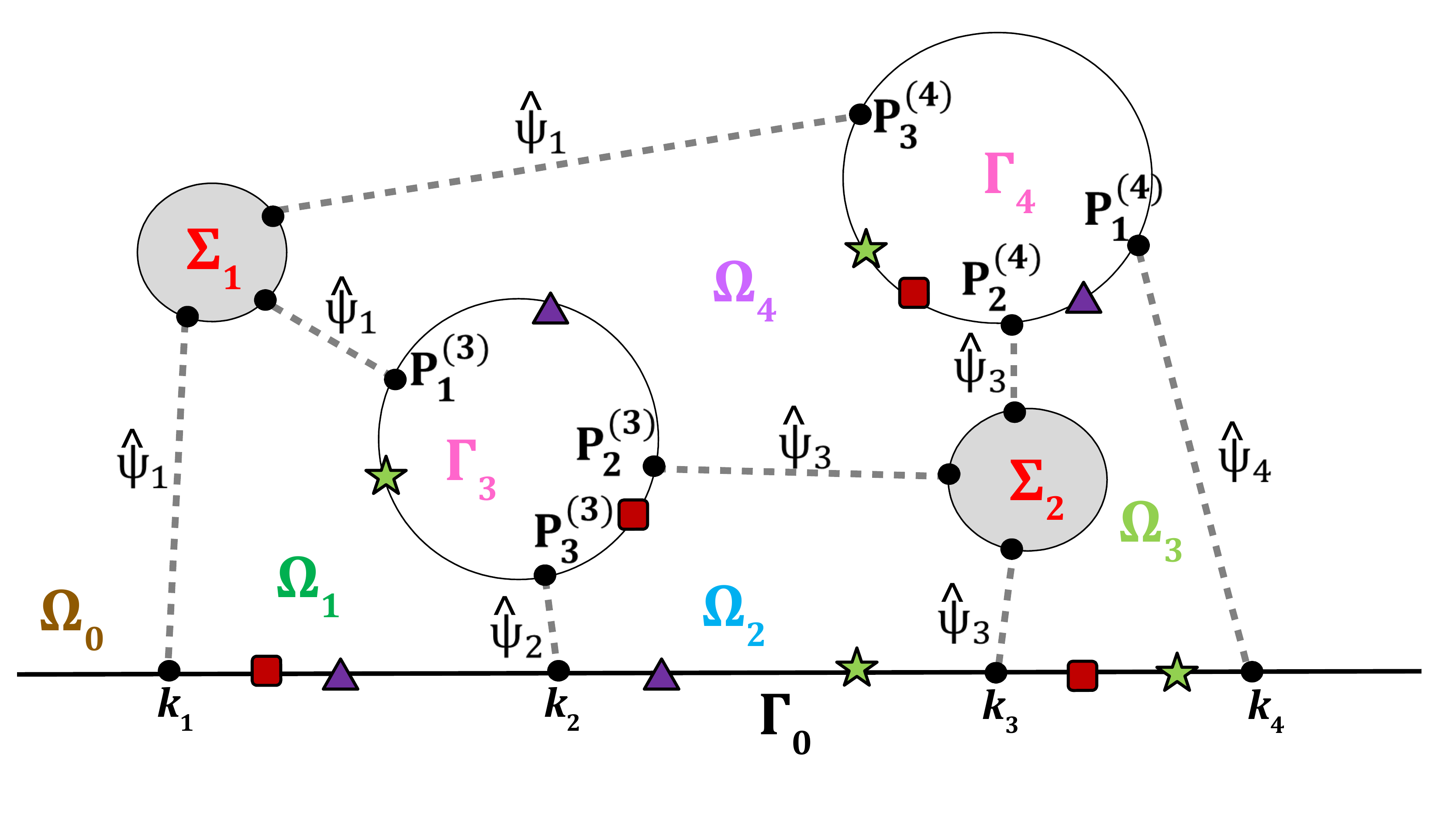}
}
\caption{\small{\sl Left: ${\mathcal N}_{\mbox{\scriptsize sq-mv}}$ [bottom] is obtained from the reduced Le-network ${\mathcal N}_{T,{\mbox{\scriptsize top}}}$ [top] applying a square move. Right: The partial normalization of the corresponding curves ${\Gamma}_{\mbox{\scriptsize top}}$ [top], ${\Gamma}_{\mbox{\scriptsize sq-mv}}$ [bottom] and the possible divisor configurations.}}
\label{fig:Gr24_square}
\end{figure}

The simplest network to which the square move is applicable is the reduced Le--network ${\mathcal N}_{T,{\mbox{\scriptsize top}}}$ associated to soliton data in $Gr^{\mbox{\tiny TP}} (2,4)$ and corresponds to the change of colour of all internal vertices. The latter transformation may be also interpreted as a self--dual transformation in $Gr^{\mbox{\tiny TP}} (2,4)$.

We present the reduced networks and the topological models of the curves before and after the square move for soliton data     in $Gr^{\mbox{\tiny TP}} (2,4)$ in Figure \ref{fig:Gr24_square}.
In \cite{A2}-\cite{AG3} we have already computed a plane curve representation and its desingularization, and discussed the divisor configurations on ${\mathcal N}_{T,{\mbox{\scriptsize top}}}$. We remark that duality transformation implies that we may use the same plane curve representation associated to ${\Gamma}_{\mbox{\scriptsize top}}$ in \cite{AG2} also for ${\Gamma}_{\mbox{\scriptsize sq-mv}}$, by conveniently relabeling $\mathbb{CP}^1$ components from $\Sigma_i,\Gamma_j$ to $\Gamma_i,\Sigma_j$ (compare the topological models of curves in Figure \ref{fig:Gr24_square}).

Here we just compute the KP divisor after the square move and refer to \cite{AG2} for more details on this example.
The values on the dressed edge wave functions are shown in Figure \ref{fig:Gr24_square}. 

By definition, the Sato divisor is not affected by the square move since the Darboux transformation is the same.
Therefore the degree $k=2$ Sato divisor $(\gamma_{S,1} ,\gamma_{S,2} )=(\gamma_{S,1} (\vec t_0),\gamma_{S,2} (\vec t_0))$ is obtained solving
\begin{equation}\label{eq:ex_div_Sato_top}
\zeta(\gamma_{S,1}) +\zeta(\gamma_{S,2}) = {\mathfrak w}_1 (\vec t_0), \quad  \zeta(\gamma_{S,1})\zeta(\gamma_{S,2}) = -{\mathfrak w}_2 (\vec t_0), 
\end{equation}
where the Darboux $\mathfrak D = \partial_x^2 -{\mathfrak w}_1 (\vec t) \partial_x -{\mathfrak w}_2 (\vec t)$ transformation is generated by the heat hierarchy solutions
\[
f^{(1)} (\vec t) = e^{\theta_1(\vec t)}-w_{13} e^{\theta_3(\vec t)}-w_{13}(w_{14}+w_{24}) e^{\theta_4(\vec t)},\quad\quad
f^{(2)} (\vec t) = e^{\theta_2(\vec t)}+w_{23} e^{\theta_3(\vec t)}+w_{23}w_{24} e^{\theta_4(\vec t)}.
\]
On $\Gamma_{\mbox{\scriptsize top}}$, $\DKP = (\gamma_{S,1} ,\gamma_{S,2},\gamma_1,\gamma_2 )$
where the simple poles $\gamma_{i}=\gamma_{i} (\vec t_0)$ belong to the intersection of 
$\Gamma_{i}$, $i=1,2$, with the union of the finite ovals. In the local coordinates induced by the orientation of ${\mathcal N}_{T,{\mbox{\scriptsize top}}}$, we have
\begin{equation}\label{eq:ex_div_red}
\zeta(\gamma_{1}) = \frac{w_{14} {\mathfrak D} e^{\theta_4(\vec t_0)}}{{\mathfrak D} e^{\theta_3(\vec t_0)}+(w_{14}+ w_{24})
{\mathfrak D} e^{\theta_4(\vec t_0)}}, \quad\quad   \zeta(\gamma_{2}) = \frac{w_{24}{\mathfrak D} e^{\theta_4(\vec t_0)}}{{\mathfrak D} e^{\theta_3(\vec t_0)}+ w_{24}
{\mathfrak D} e^{\theta_4(\vec t_0)}}.
\end{equation}
It is straightforward to verify that ${\mathfrak D} e^{\theta_1(\vec t)}, {\mathfrak D} e^{\theta_4(\vec t)}>0$ for all $\vec t$.
Therefore, as observed in \cite{A2,AG2}, there are three possible configurations of the KP--II pole divisor depending on the signs of 
${\mathfrak D} e^{\theta_2(\vec t_0)}$ and ${\mathfrak D} e^{\theta_3(\vec t_0)}$ (see also Figure \ref{fig:Gr24_square} [top,right]):
\begin{enumerate}
\item If ${\mathfrak D} e^{\theta_2(\vec t_0)}<0<{\mathfrak D} e^{\theta_3(\vec t_0)}$, then $\gamma_{S,1} \in \Omega_1$, $\gamma_{S,2} \in \Omega_2$, $\gamma_{1} \in \Omega_4$ and $\gamma_{2} \in \Omega_3$. One such configuration is illustrated by triangles in the Figure;
\item If ${\mathfrak D} e^{\theta_2(\vec t_0)},{\mathfrak D} e^{\theta_3(\vec t_0)}<0$, then $\gamma_{S,1} \in \Omega_1$, $\gamma_{S,2} \in \Omega_3$, $\gamma_{1} \in \Omega_4$ and $\gamma_{2} \in \Omega_2$. One such configuration is illustrated by squares in the Figure;
\item If ${\mathfrak D} e^{\theta_3(\vec t_0)}<0<{\mathfrak D} e^{\theta_2(\vec t_0)}$, then $\gamma_{S,1} \in \Omega_2$, $\gamma_{S,2} \in \Omega_3$, $\gamma_{1} \in \Omega_1$ and $\gamma_{2} \in \Omega_4$. One such configuration is illustrated by stars in the Figure.
\end{enumerate}

\begin{figure}
\centering{
\includegraphics[width=0.49\textwidth]{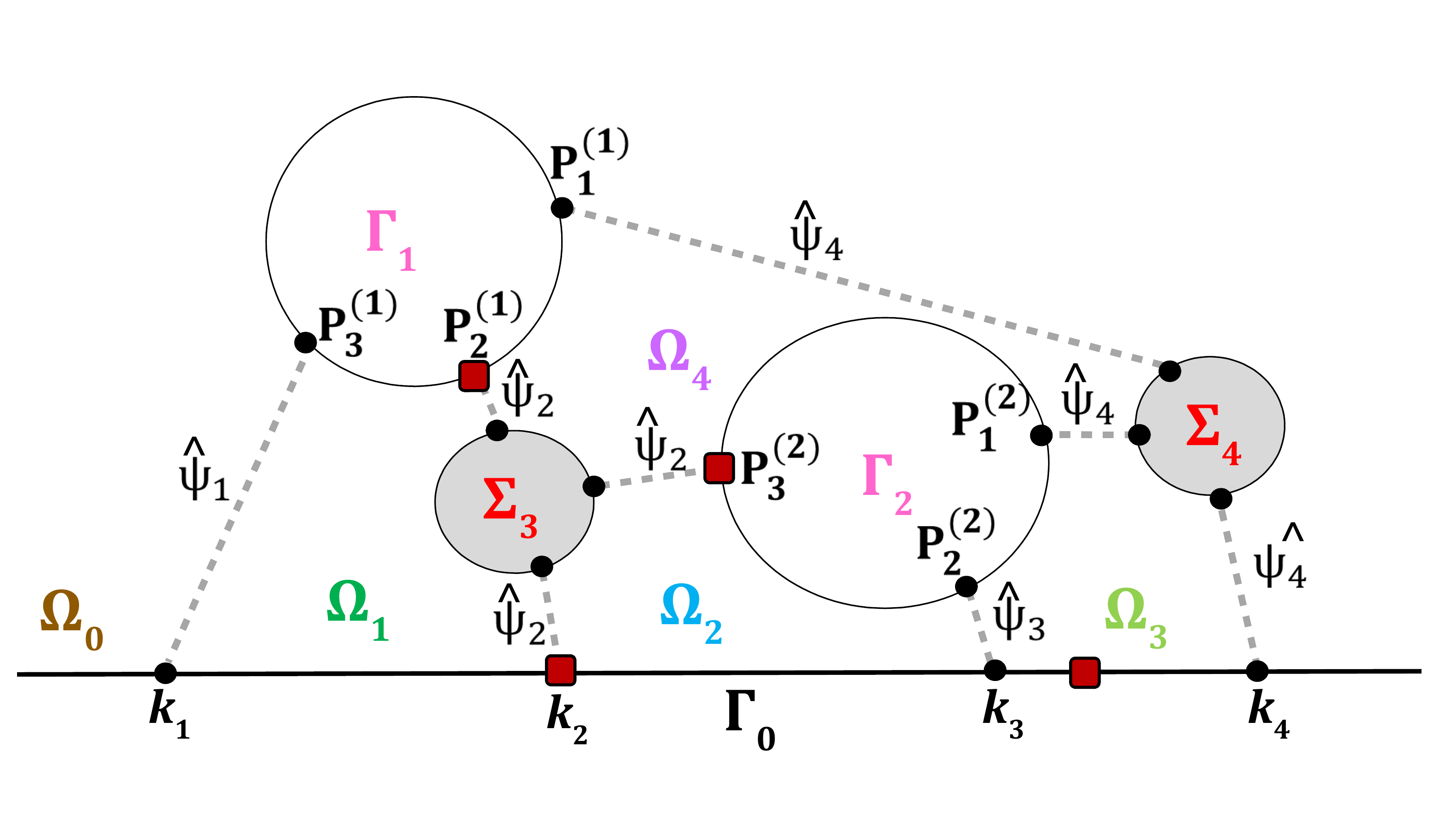}
\hfill
\includegraphics[width=0.49\textwidth]{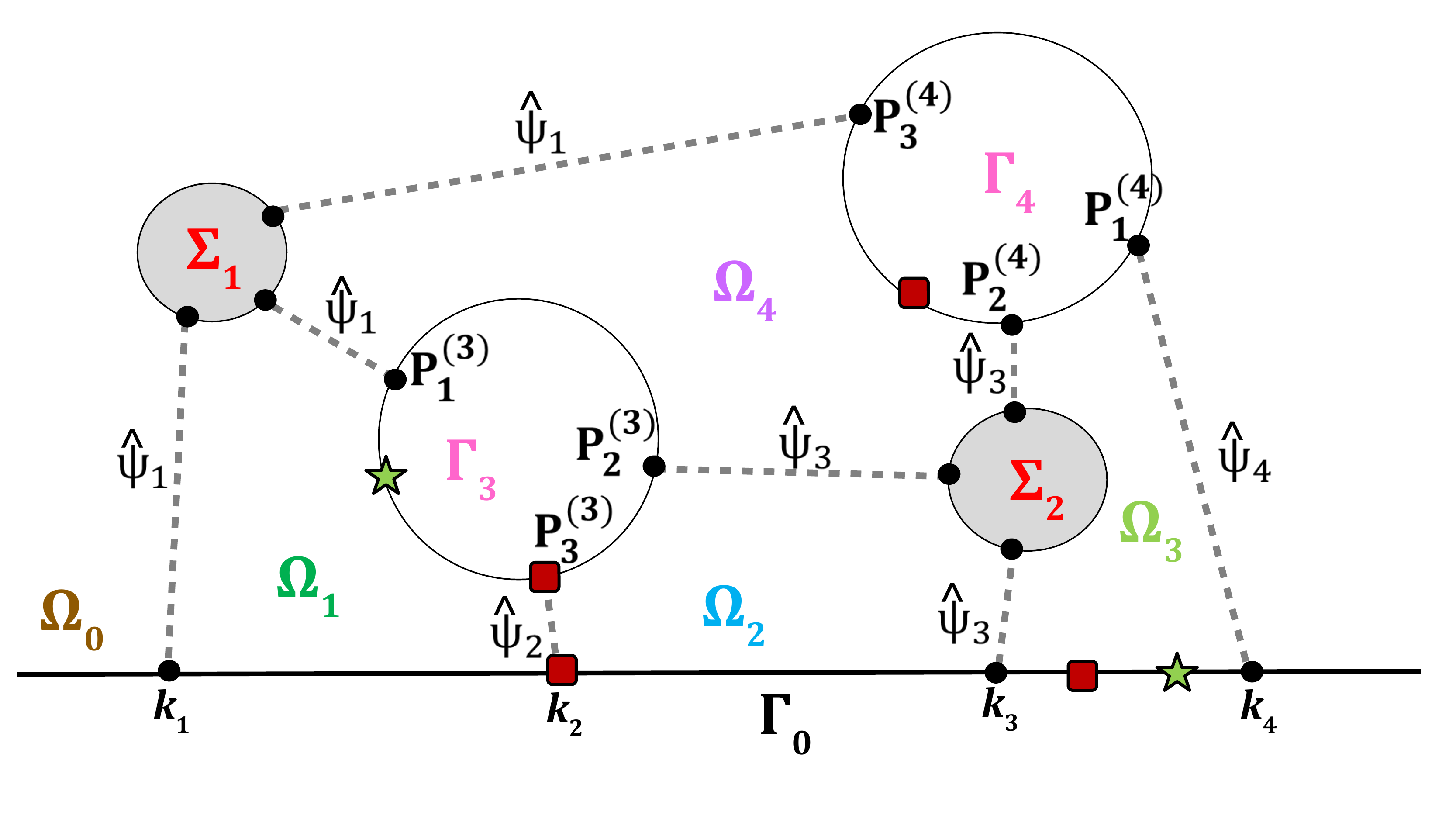}
}
\caption{\small{\sl Non--generic divisor configurations when ${\mathfrak D} e^{\theta_2(\vec t_0)} =0$ on ${\Gamma}_{\mbox{\scriptsize top}}$ [left], and ${\Gamma}_{\mbox{\scriptsize sq-mv}}$ [right].}}
\label{fig:Gr24_top_D2_null}
\end{figure}

\begin{figure}
\centering{
\includegraphics[width=0.49\textwidth]{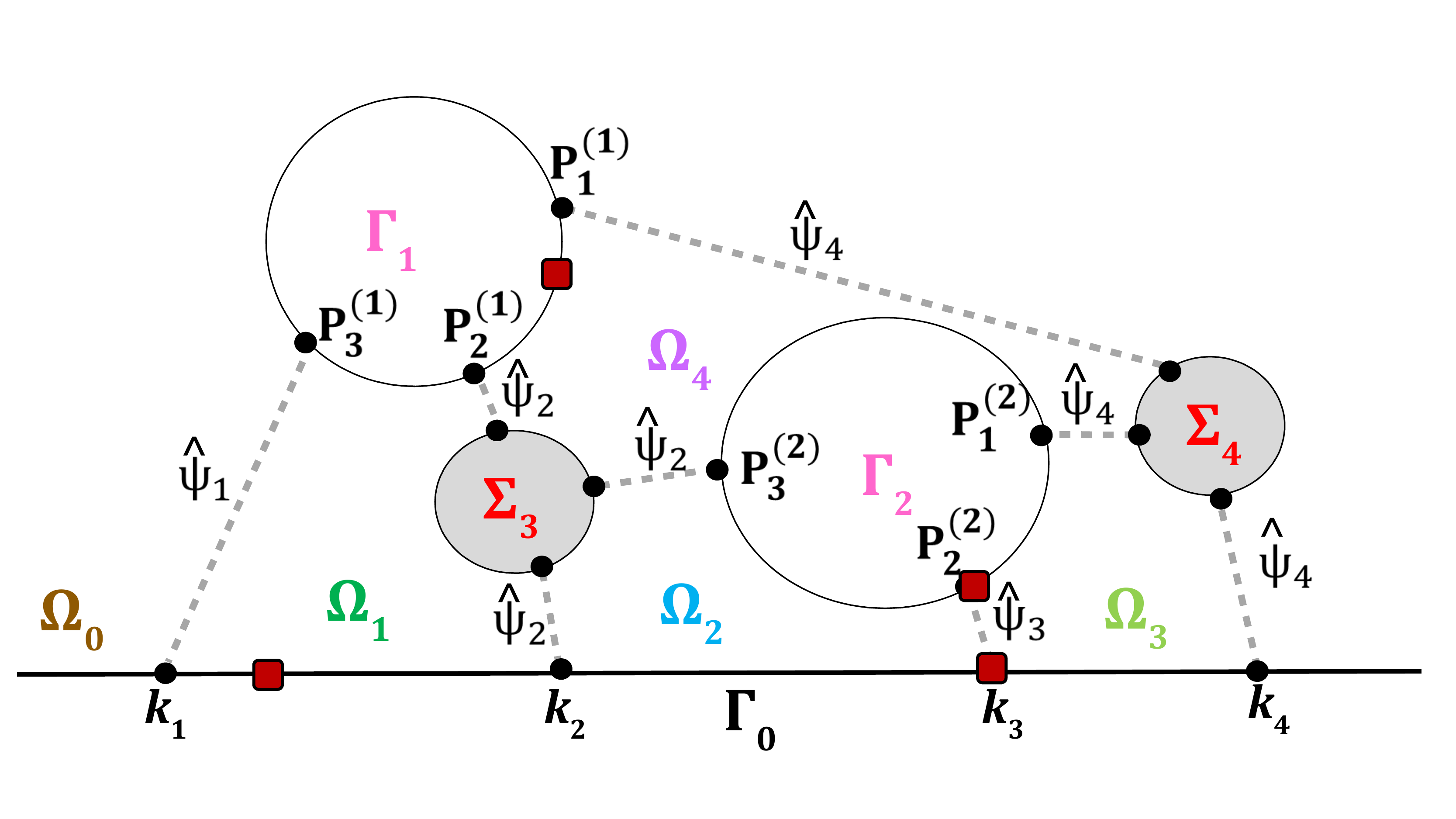}
\hfill
\includegraphics[width=0.49\textwidth]{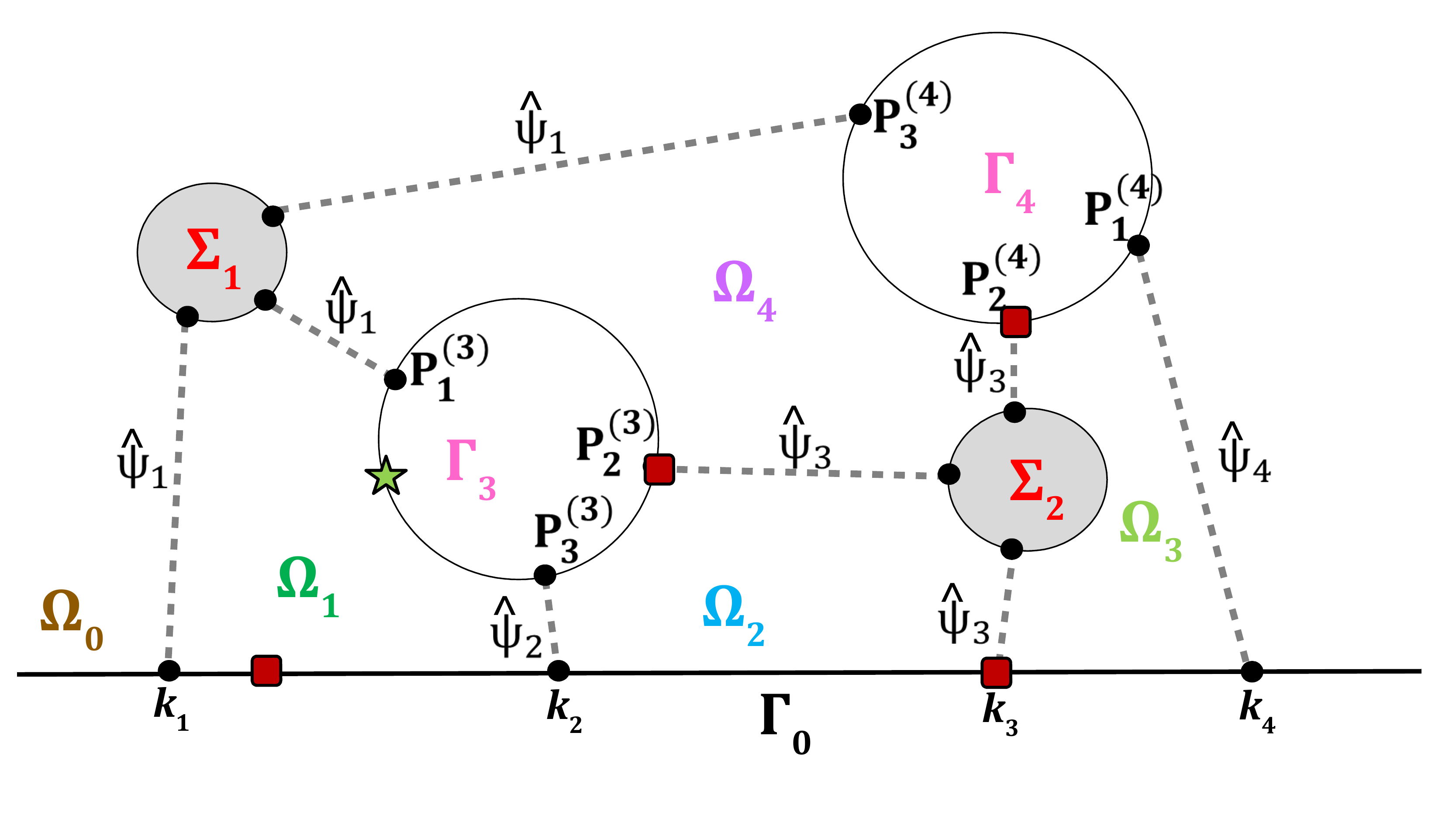}
}
\caption{\small{\sl Non--generic divisor configurations when ${\mathfrak D} e^{\theta_3(\vec t_0)} =0$ on ${\Gamma}_{\mbox{\scriptsize top}}$ [left], and ${\Gamma}_{\mbox{\scriptsize sq-mv}}$ [right].}}
\label{fig:Gr24_top_D3_null}
\end{figure}

On $\Gamma_{\mbox{\scriptsize sq-mv}}$, $\DKP = (\gamma_{S,1} ,\gamma_{S,2},\gamma_3,\gamma_4 )$
where the simple poles $\gamma_{i}=\gamma_{i} (\vec t_0)$ belong to the intersection of 
$\Gamma_{i}$, $i=3,4$, with the union of the finite ovals. In the local coordinates induced by the orientation of ${\mathcal N}_{\mbox{\scriptsize sq-mv}}$, we have
\begin{equation}\label{eq:ex_div_square}
\zeta(\gamma_{3}) = \frac{w_{24} \left( {\mathfrak D} e^{\theta_3(\vec t_0)} + (w_{14}+ w_{24})
{\mathfrak D} e^{\theta_4(\vec t_0)}\right)}{(w_{14}+ w_{24})\left({\mathfrak D} e^{\theta_3(\vec t_0)}+ w_{24}
{\mathfrak D} e^{\theta_4(\vec t_0)}\right)}, \quad\quad   \zeta(\gamma_{4}) = \frac{(w_{14}+w_{24}){\mathfrak D} e^{\theta_4(\vec t_0)}}{{\mathfrak D} e^{\theta_3(\vec t_0)}+ (w_{14}+w_{24})
{\mathfrak D} e^{\theta_4(\vec t_0)}}.
\end{equation}
The three possible configurations of the KP--II pole divisor after the square move are  then (see also Figure \ref{fig:Gr24_square} [bottom,right]):
\begin{enumerate}
\item If ${\mathfrak D} e^{\theta_2(\vec t_0)}<0<{\mathfrak D} e^{\theta_3(\vec t_0)}$, then $\gamma_{S,1} \in \Omega_1$, $\gamma_{S,2} \in \Omega_2$, $\gamma_{3} \in \Omega_4$ and $\gamma_{4} \in \Omega_3$. One such configuration is illustrated by triangles in the Figure;
\item If ${\mathfrak D} e^{\theta_2(\vec t_0)},{\mathfrak D} e^{\theta_3(\vec t_0)}<0$, then $\gamma_{S,1} \in \Omega_1$, $\gamma_{S,2} \in \Omega_3$, $\gamma_{3} \in \Omega_2$ and $\gamma_{4} \in \Omega_4$. One such configuration is illustrated by squares in the Figure;
\item If ${\mathfrak D} e^{\theta_3(\vec t_0)}<0<{\mathfrak D} e^{\theta_2(\vec t_0)}$, then $\gamma_{S,1} \in \Omega_2$, $\gamma_{S,2} \in \Omega_3$, $\gamma_{3} \in \Omega_1$ and $\gamma_{4} \in \Omega_4$. One such configuration is illustrated by stars in the Figure.
\end{enumerate}
We observe that the transformation rule of the divisor points is in agreement with the effect of the square move discussed in Section \ref{sec:square}.
As expected, for any given $[A]\in Gr^{\mbox{\tiny TP}}(2,4)$, there is exactly one KP divisor point in each finite oval, where we use the counting rule established in \cite{AG1} for non--generic soliton data. Non generic divisor configurations (squares) correspond either to
${\mathfrak D} e^{\theta_2(\vec t_0)}=0$ (Figure \ref{fig:Gr24_top_D2_null}) or to ${\mathfrak D} e^{\theta_3(\vec t_0)}=0$ (Figure \ref{fig:Gr24_top_D3_null}) since ${\mathfrak D} e^{\theta_2(\vec t)}
+w_{23}{\mathfrak D} e^{\theta_3(\vec t)}<0$, for all $\vec t$. We plan to discuss the exact definition of global parametrization for this case using resolution of singularities in a future publication.

\appendix

\section{Consistency of the system $\hat E_e$ at internal vertices}\label{app:orient}

In this Section we complete the proof of Lemmas~\ref{lemma:path} and \ref{lemma:cycle}.
Throughout this Appendix we use the same notations as in Section \ref{sec:orient}. In particular $\mathcal P_0$ is the simple path changing orientation and directed from the boundary source $b_{i_0}$ to the boundary sink $b_{j_0}$ in the initial orientation, and $\mathcal Q_0$ is the simple cycle changing orientation. 

We  start with a useful relation between the winding number and the indices $\epsilon_2(\cdot)$ and $s(\cdot,\cdot)$ at a pair of consecutive edges. 
	
\begin{lemma}
\label{lem:wind}
Let $a,b$ be a pair of consecutive edges at a vertex $V$ with $a$ incoming and $b$ outgoing at $V$. Then
\begin{equation}
\label{eq:wind}
\mbox{wind}(a,b) = \frac{\epsilon_2(b)-\epsilon_2(a) + s(a,b)[\epsilon_2(b)-\epsilon_2(a)]^2}{2},
\end{equation}
where, in case $a$ and $b$ are antiparallel the above formula is a limit as in Definition \ref{def:winding_pair}, see also Figure \ref{fig:antipar}.
\end{lemma}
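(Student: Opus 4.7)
The plan is to verify \eqref{eq:wind} by a direct case analysis on the pair $(\epsilon_2(a),\epsilon_2(b))\in\{0,1\}^2$. First I would record three preliminary observations. (i) By Definition \ref{def:gauge_ray} no internal edge is parallel to $\mathfrak l$, so $s(a,\mathfrak l),s(b,\mathfrak l)\in\{\pm 1\}$, and \eqref{eq:def_s} together with the definition of $\epsilon_2$ give $s(a,\mathfrak l)=1-2\epsilon_2(a)$ and $s(\mathfrak l,b)=-s(b,\mathfrak l)=2\epsilon_2(b)-1$. (ii) When $s(a,b)=0$ the edges are parallel in the same direction, since the antiparallel configuration is treated through the limit \eqref{eq:s_antipar} which always assigns $s(a,b)=\pm 1$; hence in the $s(a,b)=0$ case $s(a,\mathfrak l)=s(b,\mathfrak l)$ and $\epsilon_2(a)=\epsilon_2(b)$. (iii) The numerator on the right-hand side of \eqref{eq:wind} is even: if $\epsilon_2(a)=\epsilon_2(b)$ it vanishes, otherwise $[\epsilon_2(b)-\epsilon_2(a)]^2=1$ and by (ii) $s(a,b)=\pm 1$, so the sum is $0$ or $\pm 2$. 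Consequently the right-hand side is a well-defined integer in $\{-1,0,1\}$, matching the range of $\mbox{wind}(a,b)$.

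I would then split into the three cases. If $\epsilon_2(a)=\epsilon_2(b)$, both sides vanish: the right-hand side directly, and the left-hand side because $s(a,\mathfrak l)=-s(\mathfrak l,b)$ violates the triple-sign coincidence required by \eqref{eq:def_wind}. If $(\epsilon_2(a),\epsilon_2(b))=(0,1)$, then $s(a,\mathfrak l)=s(\mathfrak l,b)=+1$, so by \eqref{eq:def_wind} $\mbox{wind}(a,b)=+1$ iff $s(a,b)=+1$ and equals $0$ otherwise; on the other hand the right-hand side of \eqref{eq:wind} simplifies to $(1+s(a,b))/2$, which returns exactly the same values. The case $(\epsilon_2(a),\epsilon_2(b))=(1,0)$ is symmetric, giving $\mbox{wind}(a,b)=-1$ iff $s(a,b)=-1$, which matches the right-hand side $(-1+s(a,b))/2$.

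The only delicate point is the antiparallel configuration, where $s(a,b)$ is defined through the rotation procedure \eqref{eq:s_antipar}. Here I would observe that the small rotation used to resolve the pair $(a,b)$ can be chosen inside an angular neighbourhood that avoids the direction $\mathfrak l$ (again by Definition \ref{def:gauge_ray}), so that the quantities $s(a,\mathfrak l)$, $s(b,\mathfrak l)$, and hence $\epsilon_2(a)$ and $\epsilon_2(b)$, are preserved along the deformation; the case analysis above then applies to the rotated pair and identity \eqref{eq:wind} extends to the original antiparallel configuration by continuity. This limiting step is the main, and essentially only, obstacle; the rest of the argument is a routine tabulation.
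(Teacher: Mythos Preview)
Your proof is correct and follows essentially the same approach as the paper: a direct case analysis on the pair $(\epsilon_2(a),\epsilon_2(b))$, checking that the formula reproduces the triple-sign criterion in \eqref{eq:def_wind} in each case. You supply a bit more explicit justification (the identities $s(a,\mathfrak l)=1-2\epsilon_2(a)$, $s(\mathfrak l,b)=2\epsilon_2(b)-1$, and the argument that the small rotation resolving the antiparallel case can avoid $\mathfrak l$), but the structure and content of the argument match the paper's proof.
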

\begin{proof}
First suppose $a$ and $b$ not parallel. Then formula (\ref{eq:wind}) holds since
\begin{enumerate} 
\item If $\epsilon_2(b)=\epsilon_2(a)$, then $\mbox{wind}(a,b)=0$;
\item If $\epsilon_2(a)=0=1-\epsilon_2(b)$, then $\mbox{wind}(a,b)=1$ if and only if $s(a,b)=1$, otherwise $\mbox{wind}(a,b)=0$;  
\item If $\epsilon_2(a)=1=1-\epsilon_2(b)$, then $\mbox{wind}(a,b)=-1$ if and only if $s(a,b)=-1$, otherwise $\mbox{wind}(a,b)=0$
\end{enumerate}
If $a$ and $b$ are parallel, then $\mbox{wind}(a,b)=0$, $\epsilon_2(b)=\epsilon_2(a)$, whereas $s(a,b)$ is not defined. Therefore both the left and the right-hand side are well-defined and equal to 0.
\end{proof}

\subsection{Proof of Lemmas~\ref{lemma:path} and \ref{lemma:cycle} at internal vertices not belonging to $\mathcal P_0$ or to $\mathcal Q_0$}

\begin{figure}
  \centering
  {\includegraphics[width=0.49\textwidth]{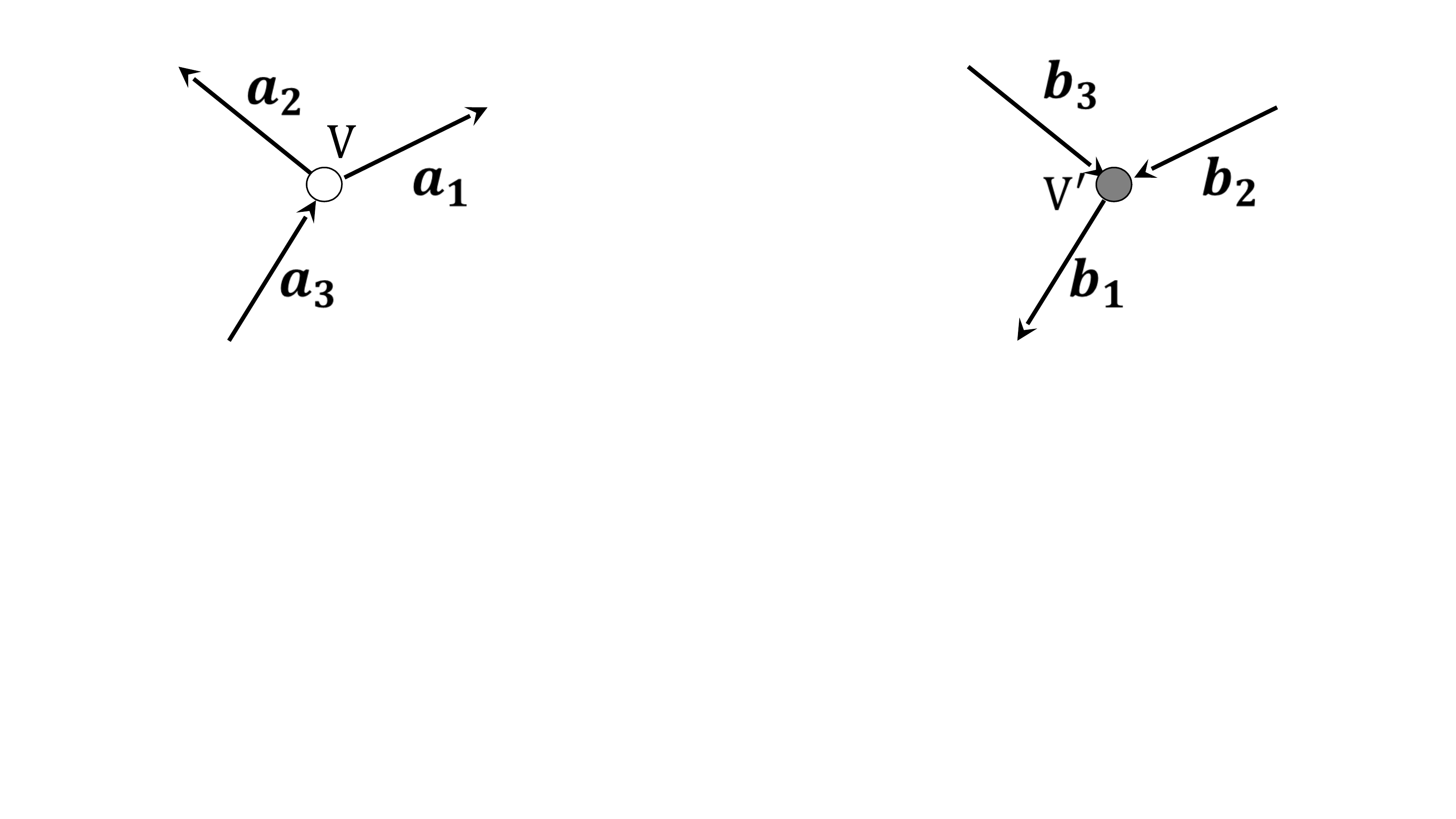}}
	\vspace{-2.3 truecm}
  \caption{Configurations at black [left] and white [right] vertices when the edges do not belong to $\mathcal P_0$ or to $\mathcal Q_0$.}
	\label{fig:config_triv1}
\end{figure}

In the following Lemma we provide equivalent relations to those in Lemmas~\ref{lemma:path} and \ref{lemma:cycle} at internal vertices not belonging to $\mathcal P_0$ or to $\mathcal Q_0$. Generic configurations are shown in Figure \ref{fig:config_triv1}. 

\begin{lemma}
\label{lemma:equiv_rel_1} 
Let $a_1,a_2,a_3$ be the edges at a trivalent white vertex not belonging to $\mathcal P_0$ or to $\mathcal Q_0$ with $a_3$ incoming edge (see Figure \ref{fig:config_triv1}[left]). Then the linear relations for the edge vectors ${\hat E}_{a_i}$, $i\in [3]$, as in (\ref{eq:hat_E_P}) and (\ref{eq:hat_E_Q}) are equivalent to 
\begin{equation}
\label{eq:equiv_white_1}
\mbox{int}(a_3) - \widehat{\mbox{int}}(a_3) = \epsilon(a_1) -\epsilon(a_3) = \epsilon(a_2) -\epsilon(a_3), \quad (\!\!\!\!\!\!\mod 2),
\end{equation}
where $\widehat{\mbox{int}}(a_3)$ denotes the number of intersection of gauge rays with $a_3$ after the change of orientation of the path.

Let $b_1,b_2,b_3$ be the edges at a trivalent black vertex not belonging to $\mathcal P_0$ or to $\mathcal Q_0$ with $b_1$ outgoing edge (see Figure \ref{fig:config_triv1}[right]). Then the linear relations of the vectors ${\hat E}_{b_i}$, $i\in [3]$, as in (\ref{eq:hat_E_P}) and (\ref{eq:hat_E_Q}) are equivalent to 
\begin{equation}
\label{eq:equiv_black_1}
\mbox{int}(b_i) - \widehat{\mbox{int}}(b_i) = \epsilon(b_1) -\epsilon(b_i)  \quad (\!\!\!\!\!\!\mod 2), \quad\quad i=2,3.
\end{equation}
where $\widehat{\mbox{int}}(b_i)$ denotes the number of intersection of gauge rays with $b_i$ after the change of orientation of the path.
\end{lemma}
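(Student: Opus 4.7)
The plan is to prove the two equivalences by a direct substitution computation that compares the linear relation satisfied by $\hat E_e$ in the new orientation with the linear relation satisfied by the original system $E_e$ (or $\tilde E_e$) in the initial orientation.

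First I would record the local geometry at a vertex $V$ off $\mathcal P_0$ (or $\mathcal Q_0$). Because $V$ is not traversed by the change of orientation, none of its incident edges reverse direction; consequently the weights $w_i$ are preserved, and the winding numbers $\mbox{wind}(e_i,e_j)$ depend only on the edge directions and the fixed gauge ray $\mathfrak l$, so they too are invariant. The only local quantity that may change is $\mbox{int}(e)$: in the cycle case of Lemma~\ref{lemma:cycle} the boundary sources are unaltered, so $\mbox{int}(e) = \widehat{\mbox{int}}(e)$ and the claimed identity collapses to $\epsilon(a_i) = \epsilon(a_3)$ (respectively $\epsilon(b_1) = \epsilon(b_i)$), which is visible from the definition (\ref{eq:eps_not_path}) since all starting vertices of edges at $V$ lie in the same marked region. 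In the path case of Lemma~\ref{lemma:path} I would use that $\tilde E_e - E_e$ is a multiple of $A[r_0]$, hence solves the homogeneous linear system at every internal vertex not touching $\mathcal P_0$; therefore $\tilde E_e$ satisfies the same linear relation at $V$ as $E_e$.

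Next, at the trivalent white vertex with incoming edge $a_3$ I would write (\ref{eq:lineq_white}) in the new orientation for $\hat E_{a_i}$, substitute $\hat E_{a_i} = (-1)^{\epsilon(a_i)} \tilde E_{a_i}$ (with $\tilde E$ replaced by $E$ in the cycle case), and compare with the relation satisfied by $\tilde E_{a_i}$ in the original orientation. The factor $w_3$ cancels, and so do the signs $(-1)^{\mbox{wind}(a_3,a_i)}$ by the winding invariance noted above, leaving the coefficient-by-coefficient identity
\[
(-1)^{\epsilon(a_3)+\mbox{int}(a_3)}\,\tilde E_{a_i}
\;=\;
(-1)^{\widehat{\mbox{int}}(a_3)+\epsilon(a_i)}\,\tilde E_{a_i}, \qquad i=1,2,
\]
which is precisely (\ref{eq:equiv_white_1}). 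The computation at a trivalent black vertex with outgoing edge $b_1$ is entirely analogous, starting from (\ref{eq:lineq_black}) in place of (\ref{eq:lineq_white}), and reduces to (\ref{eq:equiv_black_1}).

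The only subtle point is the separation of the two identities $i=1,2$ at the white vertex, which requires the coefficients of $\tilde E_{a_1}$ and $\tilde E_{a_2}$ to be matched independently. I would handle this by regarding both sides as rational expressions in the edge weights (as in Theorem~\ref{theo:null}) and invoking that the identity must hold formally as the weights vary over the positive cone parametrising $\S$; the generic linear independence of $\tilde E_{a_1}$ and $\tilde E_{a_2}$ then forces each sign relation separately. This is the only genuinely non-computational step; the rest is bookkeeping of the indices $\epsilon(e)$ defined in (\ref{eq:eps_not_path}) and of the cancellations described above.
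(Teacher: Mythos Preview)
Your approach is essentially the same as the paper's: write the linear relation at $V$ in both orientations (weights and windings unchanged since $V\notin\mathcal P_0,\mathcal Q_0$), substitute $\hat E_{a_i}=(-1)^{\epsilon(a_i)}\tilde E_{a_i}$, and read off the sign condition; the black-vertex case is analogous. The paper's proof is precisely this two-line substitution.

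Your final paragraph on generic linear independence is more caution than the paper exercises, and it is not actually needed. The lemma is only used in the direction ``(\ref{eq:equiv_white_1}) holds $\Rightarrow$ $\hat E$ satisfies the white-vertex relation'', where no separation of coefficients is required. Moreover, immediately after the lemma the paper observes that $\epsilon(a_1)=\epsilon(a_2)$ holds \emph{by definition} from (\ref{eq:eps_not_path}), since both outgoing edges $a_1,a_2$ start at the same vertex $V$ and hence lie in the same marked region; so the two equalities in (\ref{eq:equiv_white_1}) collapse to a single sign condition without any appeal to independence of $\tilde E_{a_1}$ and $\tilde E_{a_2}$.
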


\begin{proof}
The proof follows immediately substituting the linear relations at the black/white vertices. At the trivalent white vertex $V$  not belonging to the path ${\mathcal P}_0$ or cycle $\mathcal Q_0$, the linear relations before and after the change of orientation are
\[
{\tilde E}_{a_3} = w_{e_3} (-1)^{\mbox{int}(a_3)} \left( (-1)^{\mbox{wind}(a_3,a_2)}{\tilde E}_{a_2}  +  (-1)^{\mbox{wind}(a_3,a_1)}{\tilde E}_{a_1}\right), 
\]
\[
{\hat E}_{a_3} = w_{e_3} (-1)^{\widehat{\mbox{int}(a_3)}} \left( (-1)^{\mbox{wind}(a_3,a_2)}{\hat E}_{a_2}  +  (-1)^{\mbox{wind}(a_3,a_1)}{\hat E}_{a_1}\right).
\] 
Then, we immediately get (\ref{eq:equiv_white_1}) inserting (\ref{eq:hat_E_P}), ${\hat E}_{a_i} = (-1)^{\epsilon(a_i)} {\tilde E}_{a_i}$  in the above formulas. The proof in the other cases follows along similar lines.
\end{proof}

\smallskip

{\sl Proof of Lemma~\ref{lemma:path} at internal vertices not belonging to the directed path $\mathcal P_0$ changing of orientation: } 
First of all we remark that at each white internal vertex not belonging to $\mathcal P_0$, $\epsilon(a_1)=\epsilon(a_2)$ by definition (see (\ref{eq:eps_not_path})). To prove the Lemma we just need to show that the difference of the intersection index before and after the change of orientation at an edge ending at a given (white or black) vertex has the same parity as the difference of the $\epsilon$ index of the same edge and of an edge starting at the same vertex (compare (\ref{eq:equiv_white_1}) and (\ref{eq:equiv_black_1})). Therefore it is sufficient to prove the claim 
\[
\mbox{int}(a_3) - \widehat{\mbox{int}}(a_3) = \epsilon(a_1) -\epsilon(a_3) , \quad (\!\!\!\!\!\!\mod 2)
\]
at a white vertex. The above identity holds true since the incoming edge $a_3$ and the outgoing edge $a_1$ have the same index if and only if either none or both of the gauge rays $\mathfrak{l}_{i_0}$ and $\mathfrak{l}_{i_0}$ intersect $a_3$.$\quad\quad \square$

The proof of Lemma~\ref{lemma:cycle} at internal vertices not belonging to $\mathcal Q_0$ follows along similar lines.

\subsection{Proof of Lemmas~\ref{lemma:path} and \ref{lemma:cycle} at internal vertices belonging to $\mathcal P_0$ or to $\mathcal Q_0$}

\begin{figure}
  \centering
  {\includegraphics[width=0.49\textwidth]{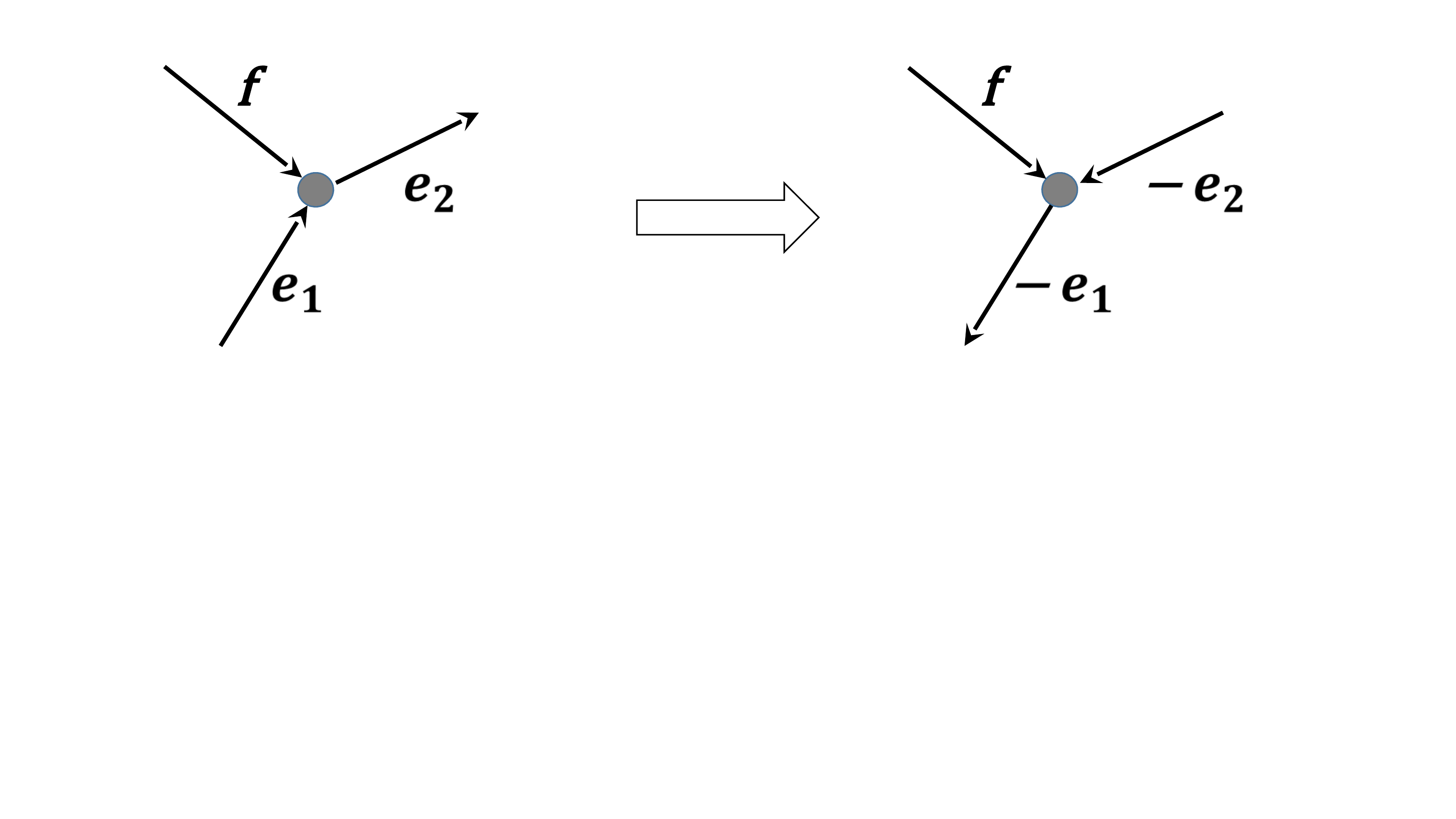}}
	\hfill
	{\includegraphics[width=0.49\textwidth]{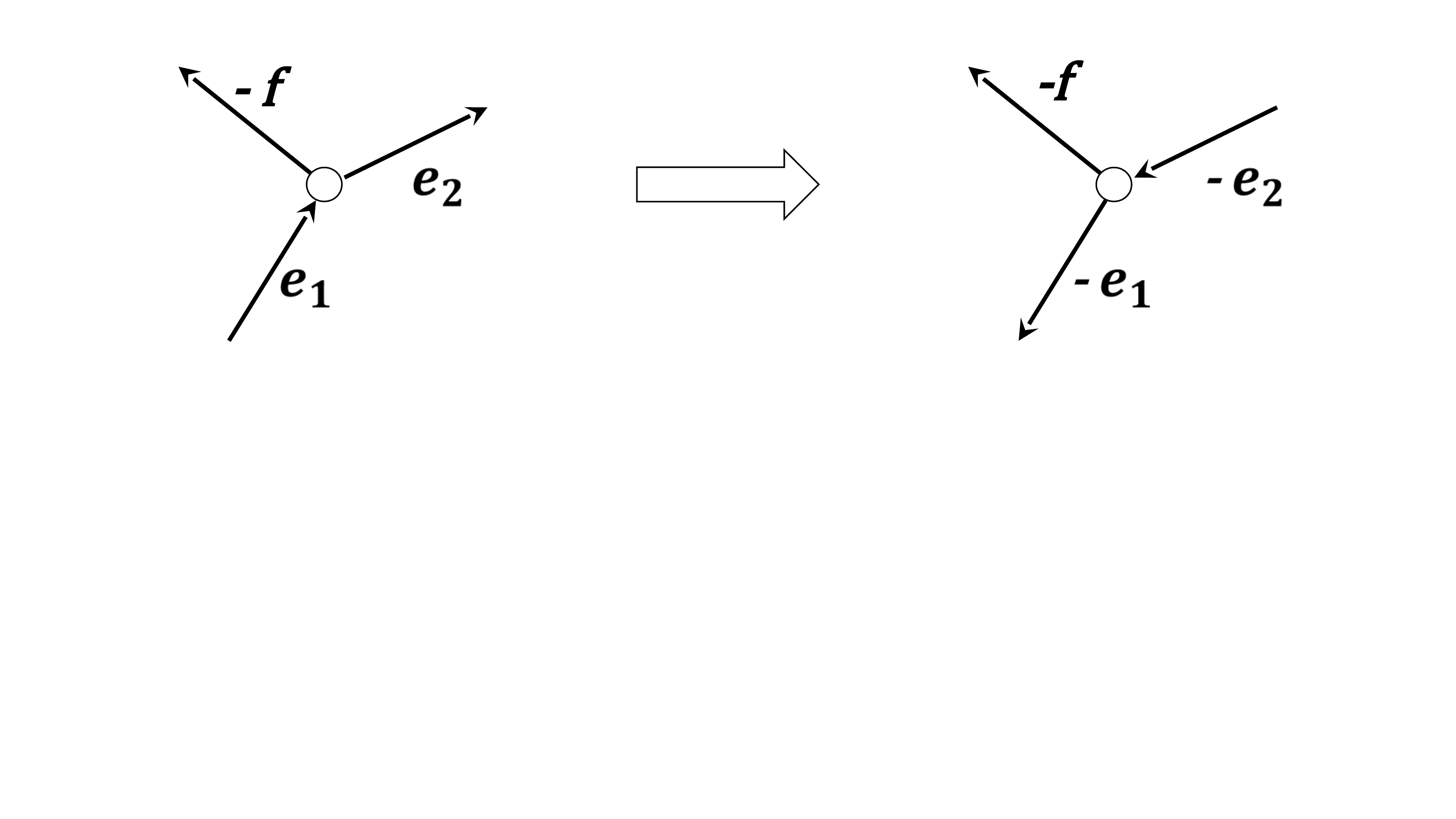}}
	\vspace{-2.3 truecm}
  \caption{Configurations at black [left] and white [right] vertices when $e_1,e_2$ belong to $\mathcal P_0$ or to $\mathcal Q_0$.}
	\label{fig:config_triv}
\end{figure}

In the following Lemma  we describe the relations between the winding number, the intersection number with gauge rays and the indices $\epsilon_i(e)$ introduced above before and after a change of orientation for a pair of consecutive edges $e_1$ and $e_2$ such that $e_1$ is incoming (resp. outgoing) and $e_2$ outgoing (resp. incoming) at the vertex $V$ in the initial (resp. final) orientation (see also Figure \ref{fig:config_triv}). $\mathfrak{l}$ is the gauge ray direction and $\hat\epsilon_3(e_2)$ denotes the intersection index at $e_2$ after the change of orientation. $\mbox{wind}(e_1,e_2)$ and $\mbox{wind}(-e_2,-e_1)$ respectively denote the winding number at the pair of edges before and after the change of orientation.

\begin{lemma}
\label{lem:0.1}
Let $e_1$ and $e_2$ be two consecutive edges at $V$ as above. 
Then 
\begin{equation}
\label{eq:wind12}
\mbox{wind}(e_1,e_2) + \mbox{wind}(-e_2,-e_1) = \epsilon_2(e_2) - \epsilon_2(e_1) =  \frac{s(e_1,\mathfrak{l})-s(e_2,\mathfrak{l}) }{2}
\end{equation}
\begin{equation}
\label{eq:eps12}
\epsilon_1(e_2) - \epsilon_1(e_1) = \hat\epsilon_3(e_2)-\epsilon_3(e_2) \ \  \quad
(\!\!\!\!\!\!\mod 2)
\end{equation}
\end{lemma}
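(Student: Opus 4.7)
The two identities are proved separately; both reduce to elementary sign and parity bookkeeping once the underlying geometric observations are isolated.

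For (\ref{eq:wind12}), the second equality is immediate from $\epsilon_2(e)=\frac{1-s(e,\mathfrak{l})}{2}$. For the first equality, the plan is to apply Lemma~\ref{lem:wind} both to the pair $(e_1,e_2)$ in the initial orientation (with $e_1$ incoming and $e_2$ outgoing at $V$) and to the reversed pair $(-e_2,-e_1)$ in the final orientation (with $-e_2$ incoming and $-e_1$ outgoing at $V$). Two sign rules are needed: reversing a vector flips its orientation with respect to $\mathfrak{l}$, hence $\epsilon_2(-e)=1-\epsilon_2(e)$, and $s(\cdot,\cdot)$ is invariant under simultaneous negation of its arguments while changing sign under a swap, giving $s(-e_2,-e_1)=s(e_2,e_1)=-s(e_1,e_2)$. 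Substituting these into Lemma~\ref{lem:wind} for the two winding numbers, the linear parts add up to $\epsilon_2(e_2)-\epsilon_2(e_1)$ while the quadratic parts carry opposite coefficients of $s(e_1,e_2)[\epsilon_2(e_2)-\epsilon_2(e_1)]^2$ and cancel. This yields (\ref{eq:wind12}) at once, without any case analysis.

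For (\ref{eq:eps12}), I would split according to whether the consecutive edges belong to the cycle $\mathcal Q_0$ or to the path $\mathcal P_0$. In the cycle case, the change of orientation does not alter the set of boundary sources and sinks, so the gauge rays are unchanged and $\hat\epsilon_3(e_2)=\epsilon_3(e_2)$; on the other hand the disk is split by $\mathcal Q_0$ alone into two regions, interior ($-$) and exterior ($+$), and since $e_1,e_2$ are consecutive in $\mathcal Q_0$ the region immediately to the left of each has the same mark, so $\epsilon_1(e_1)=\epsilon_1(e_2)$ and both sides vanish. In the path case, the change of orientation along $\mathcal P_0$ turns the source $b_{i_0}$ into a sink and the sink $b_{j_0}$ into a source, so the ray $\mathfrak{l}_{i_0}$ is removed and the ray $\mathfrak{l}_{j_0}$ is added, while all other rays are unchanged; therefore
\[
\hat\epsilon_3(e_2)-\epsilon_3(e_2)\equiv \#(e_2\cap\mathfrak{l}_{i_0})+\#(e_2\cap\mathfrak{l}_{j_0})\pmod{2}.
\]
For the LHS, I would use the fact that the $\pm$-regions of Figure~\ref{fig:inv_symb} are bounded only by $\mathcal P_0$, $\mathfrak{l}_{i_0}$, $\mathfrak{l}_{j_0}$ and the boundary of the disk; in particular the third edge $e_3$ at $V$ does not separate $\pm$-regions, so the region immediately to the left of $e_1$ at its endpoint $V$ coincides with the region immediately to the left of $e_2$ at its starting point $V$. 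Travelling from $V$ to the other endpoint of $e_2$, the mark flips exactly at each crossing of $e_2$ with $\mathfrak{l}_{i_0}$ or $\mathfrak{l}_{j_0}$ and nowhere else, giving $\epsilon_1(e_2)-\epsilon_1(e_1)\equiv \#(e_2\cap\mathfrak{l}_{i_0})+\#(e_2\cap\mathfrak{l}_{j_0})\pmod{2}$, which matches the RHS.

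The main technical obstacle is the geometric bookkeeping in the path case: one must carefully invoke the definitions summarized in Figure~\ref{fig:inv_symb} to justify that only crossings of $e_2$ with the two ``active'' gauge rays $\mathfrak{l}_{i_0}$ and $\mathfrak{l}_{j_0}$ can flip the $\pm$-mark along $e_2$, and that the third edge $e_3$ at $V$ is irrelevant for this mark. Once these two points are isolated, both identities reduce to short parity computations.
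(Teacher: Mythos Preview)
Your proof is correct. For (\ref{eq:eps12}) your argument is essentially the paper's own, only phrased more explicitly: you separate the cycle case (where both sides vanish trivially) from the path case, and you make the role of the third edge $f$ at $V$ explicit, whereas the paper simply states the equivalence between $\epsilon_1(e_2)=\epsilon_1(e_1)$ and the parity of crossings of $e_2$ with $\mathfrak{l}_{i_0},\mathfrak{l}_{j_0}$, leaving the geometric justification to the reader.

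For (\ref{eq:wind12}) your route genuinely differs from the paper's. The paper argues by a direct case split on whether $s(e_1,\mathfrak{l})=\pm s(\mathfrak{l},e_2)$, checking in each case which of $\mbox{wind}(e_1,e_2)$ and $\mbox{wind}(-e_2,-e_1)$ is $0$ or $\pm 1$. Your approach instead feeds both winding numbers through the formula of Lemma~\ref{lem:wind}, using $\epsilon_2(-e)=1-\epsilon_2(e)$ and $s(-e_2,-e_1)=-s(e_1,e_2)$ to see that the linear parts add to $\epsilon_2(e_2)-\epsilon_2(e_1)$ while the $s(e_1,e_2)[\,\cdot\,]^2$ terms cancel. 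This is cleaner and case-free; the only thing it costs is that one must have already proved Lemma~\ref{lem:wind} (which the paper does just before), and one must observe that the formula there applies verbatim to the reversed pair $(-e_2,-e_1)$ with the same gauge direction $\mathfrak{l}$, which is immediate since that lemma only requires $a$ incoming and $b$ outgoing at the common vertex.
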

\begin{proof}
If $s(e_1,\mathfrak{l})=s(\mathfrak{l},e_2)$, then either $\mbox{wind}(e_1,e_2)=\pm1$ and $\mbox{wind}(-e_2,-e_1)=0$, or $\mbox{wind}(e_1,e_2)=0$ and $\mbox{wind}(-e_2,-e_1)=\pm1$. 
If $s(e_1,\mathfrak{l})=-s(\mathfrak{l},e_2)$, then $\mbox{wind}(e_1,e_2)=\mbox{wind}(-e_2,-e_1)=0$.

If $\epsilon_1(e_2) = \epsilon_1(e_1)$, then either none or both of the gauge lines $\mathfrak{l}_{i_0}$, $\mathfrak{l}_{j_0}$ cross $e_2$, therefore the number of gauge lines intersecting $e_2$ is the same before and after changing the orientation, and both sides are equal to 0. If $\epsilon_1(e_2) \ne \epsilon_1(e_1)$, then exactly one of the gauge lines $\mathfrak{l}_{i_0}$, $\mathfrak{l}_{j_0}$ crosses $e_2$, therefore the number of gauge lines intersecting $e_2$ changes by 1 after changing the orientation, and both sides are equal to 1 modulo 2. 
\end{proof}

In the following Lemma we provide equivalent relations to those in Lemmas~\ref{lemma:path}, \ref{lemma:cycle} at internal vertices belonging to $\mathcal P_0$ or to $\mathcal Q_0$. 
In the following we denote $-e$ the edge with opposite orientation with respect to $e$. 

\begin{lemma}
\label{lemma:equiv_rel}
Let $e_1,e_2,f$ (respectively $e_1,e_2,-f$) be the edges at a trivalent black (resp. white) vertex with $e_1$ the incoming edge in the initial configuration and $-e_2$ the incoming edge in the final configuration (see Figure \ref{fig:config_triv}). 
Then 
\begin{enumerate}
\item At each internal black vertex belonging to $\mathcal P_0$ or $\mathcal Q_0$ the linear relations on the vectors $\hat E$ in the new orientation are equivalent to equations (\ref{eq:wind12}) and (\ref{eq:eps12}) and to
\begin{equation}
\label{eq:equiv_black}
W_b(e_1,e_2,f) = \widehat{\mbox{int}}(f) - \mbox{int}(f)-\epsilon(f) +\epsilon_1(e_1), \quad
(\!\!\!\!\!\!\mod 2),
\end{equation}
where
\begin{equation}\label{eq:W_b}
W_b(e_1,e_2,f)=\mbox{wind}(f,e_2)-\mbox{wind}(f,-e_1)-\mbox{wind}(e_1,e_2)-\epsilon_2(e_1),
\end{equation}
and $\widehat{\mbox{int}}(f)$ denotes the number of intersection of gauge rays with $f$ after the change of orientation of the path or cycle;
\item At each internal white vertex belonging to to $\mathcal P_0$ or $\mathcal Q_0$ the linear relations on the vectors $\hat E$ in the new orientation are equivalent to equations (\ref{eq:wind12}) and (\ref{eq:eps12}) and to
\begin{equation}
\label{eq:equiv_white}
 W_w(e_1,e_2,-f) = \epsilon(-f)-\epsilon_1(e_1),\quad
(\!\!\!\!\!\!\mod 2),
\end{equation}
where
\begin{equation}\label{eq:W_w}
W_w(e_1,e_2,-f)=\mbox{wind}(e_1,-f)+\mbox{wind}(-e_2,-e_1)-\mbox{wind}(-e_2,-f)+\epsilon_2(e_1)+1
\end{equation}
\end{enumerate}
\end{lemma}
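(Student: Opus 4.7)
The plan is to deduce the stated equivalences by substituting the defining transformation formulas (\ref{eq:hat_E_P}) or (\ref{eq:hat_E_Q}) into the linear relation that the system $\hat E$ must satisfy at $V$ in the new orientation $\hat{\mathcal O}$, and then to compare with the linear relation for $\tilde E$ at $V$ in the initial orientation $\mathcal O$. The vectors $\tilde E$ obey the original system of Lemma~\ref{lem:relations}, because (\ref{eq:orient1}) modifies only the boundary conditions. The weight gauge on $\mathcal P_0$ (respectively $\mathcal Q_0$) is reciprocated, $\tilde w_e = w_e^{-1}$, for each reversed edge, and this cancels the factor $1/w_e$ coming from (\ref{eq:hat_E_P}), so that after eliminating the $\tilde E$ vectors one is left with a purely combinatorial identity among exponents involving windings, intersection numbers and $\epsilon$-indices at $V$.

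I would carry out the black vertex case first. There are two relations to analyse: one between $\hat E_f$ and $\hat E_{e_1}$ (the third edge $f\notin\mathcal P_0$), and one between $\hat E_{e_2}$ and $\hat E_{e_1}$ (the two reversed edges). Substituting the transformation rules of (\ref{eq:hat_E_P}) and using the old relations to reduce everything to a common vector $\tilde E_{e_2}$, one reads off two parity identities. The first collapses, after invoking the decomposition $\epsilon = \epsilon_1+\epsilon_2+\epsilon_3$ with $\epsilon_3\equiv\mathrm{int}\pmod 2$, exactly into (\ref{eq:equiv_black}). The second identity becomes automatic after applying Lemma~\ref{lem:0.1}: the relations (\ref{eq:wind12}) and (\ref{eq:eps12}) are precisely what is needed to make the two expressions for $\tilde E_{e_2}$ agree, so the $\hat E_{e_2}$ equation carries no extra content beyond (\ref{eq:wind12}) and (\ref{eq:eps12}). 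The white vertex case proceeds along parallel lines, starting from (\ref{eq:lineq_white}) whose right-hand side has two terms; linear independence of $\tilde E_{e_2}$ and $\tilde E_{-f}$ in generic position separates the resulting identity into two conditions, one again tautological by Lemma~\ref{lem:0.1} and the other producing (\ref{eq:equiv_white}). The extra $+1$ in (\ref{eq:W_w}) relative to (\ref{eq:W_b}) is the parity shift between the multiplicative black-vertex relation and the additive white-vertex relation, and arises naturally from the comparison.

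The main obstacle is the careful bookkeeping of the four windings $\mathrm{wind}(e_1,e_2)$, $\mathrm{wind}(-e_2,-e_1)$, $\mathrm{wind}(f,e_2)$ and $\mathrm{wind}(f,-e_1)$ (together with their white-vertex analogues) at the reversed pair, which must be systematically translated via Lemma~\ref{lem:wind} and identity (\ref{eq:wind12}), while the difference $\widehat{\mathrm{int}}-\mathrm{int}$ on each edge incident to $V$ is read off from the region-marking conventions (\ref{eq:eps_not_path})--(\ref{eq:eps_on_path}). Each step is mechanical, but the parity computation is sign-prone and is best organised by treating separately the four subcases governed by which of the gauge rays $\mathfrak l_{i_0}$ and $\mathfrak l_{j_0}$ intersect the relevant edge; once this dictionary is in place, the algebra collapses and the identities (\ref{eq:equiv_black}) and (\ref{eq:equiv_white}) emerge directly.
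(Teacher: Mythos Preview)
Your approach is correct and coincides with the paper's: the proof in the paper is the single sentence ``The proof immediately follows substituting (\ref{eq:hat_E_P}) (respectively (\ref{eq:hat_E_Q})) in the linear relations at the black/white vertices before and after the change of orientation along $\mathcal P_0$ (respectively $\mathcal Q_0$),'' and your proposal is a faithful elaboration of exactly this computation, correctly identifying that one of the two resulting parity conditions at each vertex is absorbed by Lemma~\ref{lem:0.1} while the other yields (\ref{eq:equiv_black}) or (\ref{eq:equiv_white}). One small remark: rather than invoking ``linear independence in generic position'' to separate the two terms at a white vertex, it is cleaner to note that the linear relation must hold for \emph{all} boundary data, which forces the coefficients of $\tilde E_{e_2}$ and $\tilde E_{-f}$ to match separately.
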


The proof immediately follows substituting (\ref{eq:hat_E_P}) (respectively (\ref{eq:hat_E_Q})) in the linear relations at the black/white vertices before and after the change of orientation along $\mathcal P_0$ (respectively $\mathcal Q_0$).

To complete the proof of Lemmas~\ref{lemma:path}, \ref{lemma:cycle} in the case of internal vertices belonging to the path changing orientation, we preliminarily show that the left--hand side of (\ref{eq:equiv_black}) and of (\ref{eq:equiv_white}) does not depend
on the gauge ray direction in Lemma \ref{lem:indip} and provide relations between indices at black and white vertices with fixed edge configuration and opposite orientation at the edge not belonging to $\mathcal P_0$ or to $\mathcal Q_0$.

\begin{lemma}\label{lem:indip}
Let $e_1,e_2,\pm f$ be as in Lemma \ref{lemma:equiv_rel}. Then
\begin{enumerate}
\item The relation between $\epsilon(f)$ and $\epsilon(-f)$ is
\begin{equation}\label{eq:region_bw_rel}
\widehat{\mbox{int}}(f) - \mbox{int}(f) =\epsilon(f) -\epsilon(-f) \quad
(\!\!\!\!\!\!\mod 2);
\end{equation}
\item Both $W_w(e_1,e_2,-f)$ and $W_b(e_1,e_2,f)$ are independent of the gauge ray direction and satisfy
\begin{equation}\label{eq:wind_bw_rel}
\begin{array}{l}
W_b(e_1,e_2,f) = W_w(e_1,e_2,-f) -2 =\\
= \frac{s(f,e_2)\big[\epsilon_2(e_2)-\epsilon_2(f)\big]^2 + s(f,e_1)\big[1-\epsilon_2(e_1)-\epsilon_2(f)\big]^2 + s(e_2,e_1) \big[\epsilon_2(e_2)-\epsilon_2(e_1)\big]^2 -1}{2}.
\end{array}
\end{equation}
\end{enumerate}
\end{lemma}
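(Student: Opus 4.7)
\textbf{Proof proposal for Lemma~\ref{lem:indip}.} The proof is a direct computation split into the two claimed identities, in both cases exploiting the conventions $\epsilon_2(-e)=1-\epsilon_2(e)$, $s(-a,b)=-s(a,b)=s(a,-b)$, $s(-a,-b)=s(a,b)$, $s(a,b)=-s(b,a)$ together with Lemma~\ref{lem:wind} to rewrite every winding number appearing in $W_b$ and $W_w$ as a combination of $\epsilon_2$'s and $s$'s.

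For part~(1), I would argue by describing how the intersection pattern and the region decomposition are altered by the orientation change. Reversing the orientation along $\mathcal{P}_0$ turns the source $b_{i_0}$ into a sink and the sink $b_{j_0}$ into a source, so the set of gauge rays changes only by swapping $\mathfrak l_{i_0}$ for $\mathfrak l_{j_0}$; consequently $\widehat{\mathrm{int}}(f)-\mathrm{int}(f)\equiv \mathrm{int}_{\mathfrak l_{i_0}}(f)+\mathrm{int}_{\mathfrak l_{j_0}}(f)\ (\!\!\!\!\mod 2)$. On the other hand, the regions of the decomposition used in (\ref{eq:eps_not_path}) are separated by $\mathcal{P}_0\cup\mathfrak l_{i_0}\cup \mathfrak l_{j_0}\cup\partial D$, and since $f\not\subset\mathcal{P}_0$ and $f$ lies in the interior of the disk, the only boundary segments $f$ can cross between its two endpoints are the two gauge rays $\mathfrak l_{i_0}$ and $\mathfrak l_{j_0}$. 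Hence $\epsilon(f)-\epsilon(-f)\equiv \epsilon_1(f)-\epsilon_1(-f)\equiv \mathrm{int}_{\mathfrak l_{i_0}}(f)+\mathrm{int}_{\mathfrak l_{j_0}}(f)\ (\!\!\!\!\mod 2)$, which matches. The cyclic case is simpler: reversing along $\mathcal Q_0$ leaves the gauge rays invariant, and $f\not\subset\mathcal Q_0$ implies $f$ does not cross $\mathcal Q_0$ in its interior, so both sides of (\ref{eq:region_bw_rel}) vanish.

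For part~(2), the plan is to expand each of the six windings in $W_b$ and $W_w$ using Lemma~\ref{lem:wind}. A typical term is
\[
\mathrm{wind}(f,-e_1)=\tfrac{1}{2}\bigl(1-\epsilon_2(e_1)-\epsilon_2(f)-s(f,e_1)[1-\epsilon_2(e_1)-\epsilon_2(f)]^2\bigr),
\]
and analogously for the others, using $\epsilon_2(-e)=1-\epsilon_2(e)$ and the sign conventions recalled above. Plugging these expressions into (\ref{eq:W_b}) and (\ref{eq:W_w}), the linear pieces $\tfrac{1}{2}(\epsilon_2(\cdot)-\epsilon_2(\cdot))$ telescope against the extra $-\epsilon_2(e_1)$ (respectively $+\epsilon_2(e_1)+1$) terms, leaving only the sums of the $s(\cdot,\cdot)[\cdot]^2$ contributions and an additive constant. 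A straightforward simplification then yields the common closed form asserted in (\ref{eq:wind_bw_rel}) for $W_b$, and the analogous computation for $W_w$ gives the same expression plus $2$; subtracting the two confirms $W_w-W_b=2$.

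The main subtlety, and the only nonroutine step, is verifying that this closed form is genuinely independent of $\mathfrak l$, since the quantities $\epsilon_2(e_i),\epsilon_2(f)$ and the indicators $[\cdot]^2$ do change as $\mathfrak l$ rotates. I would handle this by observing that any continuous rotation of $\mathfrak l$ can be decomposed into elementary crossings of $\mathfrak l$ past one of $e_1,e_2,f$, and then checking by case analysis that at each such event the induced variation in the right-hand side of (\ref{eq:wind_bw_rel}) vanishes. Concretely, a rotation past $f$ flips $\epsilon_2(f)$ and produces a total variation proportional to $s(f,e_2)s(e_2,\mathfrak l)-s(f,e_1)s(e_1,\mathfrak l)$, which one checks is zero at the crossing angle (and similarly for crossings past $e_1$ or $e_2$) using the geometric identities between the signs $s(\cdot,\cdot)$ of three directions meeting at a common vertex. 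This local invariance at every crossing, together with constancy on each open arc of $\mathfrak l$-directions, gives global independence of $\mathfrak l$ and completes the proof.
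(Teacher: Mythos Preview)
Your proposal is correct and follows essentially the same route as the paper: part~(1) is argued exactly as in the paper via the parity of crossings of $f$ with $\mathfrak l_{i_0}$ and $\mathfrak l_{j_0}$, and for part~(2) both you and the paper obtain the closed form by substituting Lemma~\ref{lem:wind} into (\ref{eq:W_b}) and (\ref{eq:W_w}) using the sign conventions for $s(\cdot,\cdot)$. The only cosmetic difference is in verifying $\mathfrak l$-independence: the paper carries out an explicit case analysis over the six crossings $\mathfrak l\sim \pm f,\pm e_1,\pm e_2$, showing that at each event one of the three summands $A_i$ increases by~$1$ while another decreases by~$1$; your variation formula $\Delta=s(f,e_2)s(e_2,\mathfrak l)-s(f,e_1)s(e_1,\mathfrak l)$ (and its analogues) evaluated at the crossing direction is precisely the same computation packaged algebraically, since at $\mathfrak l\sim f$ one has $s(e_i,\mathfrak l)=s(e_i,f)=-s(f,e_i)$ and the two terms cancel.
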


\begin{proof}
The proof of (\ref{eq:region_bw_rel}) follows immediately taking into account that the initial and final vertices of $f$ lay in regions with the same index $\pm$ if and only if either none or both of the gauge rays $\mathfrak{l}_{i_0}$ and $\mathfrak{l}_{i_0}$ intersect $f$.

The identities in (\ref{eq:wind_bw_rel}) follow inserting (\ref{eq:wind12}) and (\ref{eq:wind}) in (\ref{eq:W_b}) and (\ref{eq:W_w}) and taking into account that, by definition,
\[
s(a,b) = -s(b,a) = s(-a,-b) = -s(-a,b).
\]
In order to verify that the right-hand side of (\ref{eq:wind_bw_rel}) does not depend on the gauge ray direction $\mathfrak{l}$, it is sufficient to verify that it does not change when the gauge ray direction $\mathfrak{l}$ crosses one of the vectors $f$, $-f$, $e_1$, $-e_1$, $e_2$, $-e_2$.
Let us denote
$A_1 = s(f,e_2)\bigg[\epsilon_2(e_2)-\epsilon_2(f)\bigg]^2$, $A_2 =s(f,e_1)\bigg[1-\epsilon_2(e_1)-\epsilon_2(f)\bigg]^2$ and
$A_3 =s(e_2,e_1) \bigg[\epsilon_2(e_2)-\epsilon_2(e_1)\bigg]^2$.
Then, for any fixed configuration of $e_1,e_2$ and $f$, the value $A_1+A_2+A_3$ is independent of $\mathfrak{l}$ as $\mathfrak{l}$ turn clockwise, since:
\begin{enumerate}
\item If $\mathfrak{l}\sim f$, then $\epsilon_2(f)$ increases from $0$ to $1$, $A_1$ decreases by one, $A_2$ increases by one and $A_3$ doesn't change since
\begin{itemize}
\item $s(f,e_2)=1$ implies $\epsilon_2(e_2)=1$, whereas $s(f,e_2)=-1$ implies $\epsilon_2(e_2)=0$;
\item $s(f,e_1)=1$ implies $\epsilon_2(e_1)=1$, whereas $s(f,e_1)=-1$ implies $\epsilon_2(e_1)=0$.
\end{itemize}
\item If $\mathfrak{l}\sim -f$, then $\epsilon_2(f)$ decreases from $1$ to $0$, $A_1$ decreases by one, $A_2$ increases by one and $A_3$ doesn't change since
\begin{itemize}
\item $s(f,e_2)=1$ implies $\epsilon_2(e_2)=0$, whereas $s(f,e_2)=-1$ implies $\epsilon_2(e_2)=1$;
\item $s(f,e_1)=1$ implies $\epsilon_2(e_1)=0$, whereas $s(f,e_1)=-1$ implies $\epsilon_2(e_1)=1$.
\end{itemize}
\item If $\mathfrak{l}\sim e_1$, then $\epsilon_2(e_1)$ increases from $0$ to $1$, $A_1$ doesn't change, $A_2$ decreases by one and
$A_3$ increases by one since
\begin{itemize}
\item $s(f,e_1)= 1$ implies $\epsilon_2(f)=0$, whereas $s(f,e_1)=-1$ implies $\epsilon_2(f)=1$;
\item $s(e_2,e_1)=1$ implies $\epsilon_2(e_2)=0$, whereas $s(e_2,e_1)=-1$ implies $\epsilon_2(e_2)=1$. 
\end{itemize}
\item If $\mathfrak{l}\sim -e_1$, then $\epsilon_2(e_1)$ decreases from $1$ to $0$, $A_1$ doesn't change, $A_2$ decreases by one and $A_3$ increases by one since
\begin{itemize}
\item $s(f,e_1)= 1$ implies $\epsilon_2(f)=1$, whereas $s(f,e_1)=-1$ implies $\epsilon_2(f)=0$;
\item $s(e_2,e_1)=1$ implies $\epsilon_2(e_2)=1$, whereas $s(e_2,e_1)=-1$ implies $\epsilon_2(e_2)=0$.
\end{itemize}
\item If $\mathfrak{l}\sim e_2$, then $\epsilon_2(e_2)$ increases from $0$ to $1$, $A_1$ increases by one, $A_2$ doesn't change and $A_3$ decreases by one since
\begin{itemize}
\item $s(f,e_2)= 1$ implies $\epsilon_2(f)=0$, whereas $s(f,e_2)=-1$ implies $\epsilon_2(f)=1$;
\item $s(e_2,e_1)=1$ implies $\epsilon_2(e_1)= 1$, whereas $s(e_2,e_1)=-1$ implies $\epsilon_2(e_1)= 0$. 
\end{itemize}
\item If $\mathfrak{l}\sim -e_2$, then $\epsilon_2(e_2)$ decreases from $1$ to $0$, $A_1$ increases by one, $A_2$ doesn't change and $A_3$ decreases by one since
\begin{itemize}
\item $s(f,e_2)= 1$ implies $\epsilon_2(f)=1$, whereas $s(f,e_2)=-1$ implies $\epsilon_2(f)=0$; 
\item $s(e_2,e_1)=1$ implies $\epsilon_2(e_1)= 0$, whereas $s(e_2,e_1)=-1$ implies $\epsilon_2(e_1)= 1$. 
\end{itemize}
\end{enumerate}
\end{proof}

We are now ready to complete the proof of Lemmas~\ref{lemma:path}, \ref{lemma:cycle} at all internal vertices belonging to $\mathcal P_0$ or to $\mathcal Q_0$. First of all, inserting (\ref{eq:region_bw_rel}) and (\ref{eq:wind_bw_rel}) into (\ref{eq:equiv_black}) and (\ref{eq:equiv_white}), it is evident that (\ref{eq:equiv_black}) holds if and only if (\ref{eq:equiv_white}) holds. Therefore it is sufficient to prove  
(\ref{eq:equiv_white}) for a fixed gauge ray direction, say $\mathfrak{l} \sim f$ and $\epsilon_2(e_1)=0$. Then there are two generic configurations: either $s(e_1,e_2)=-1$ (see Figure \ref{fig:vertex1}) or $s(e_1,e_2)=1$ (see Figure \ref{fig:vertex2}).

\begin{figure}[H]
  \centering
  {\includegraphics[width=0.95\textwidth]{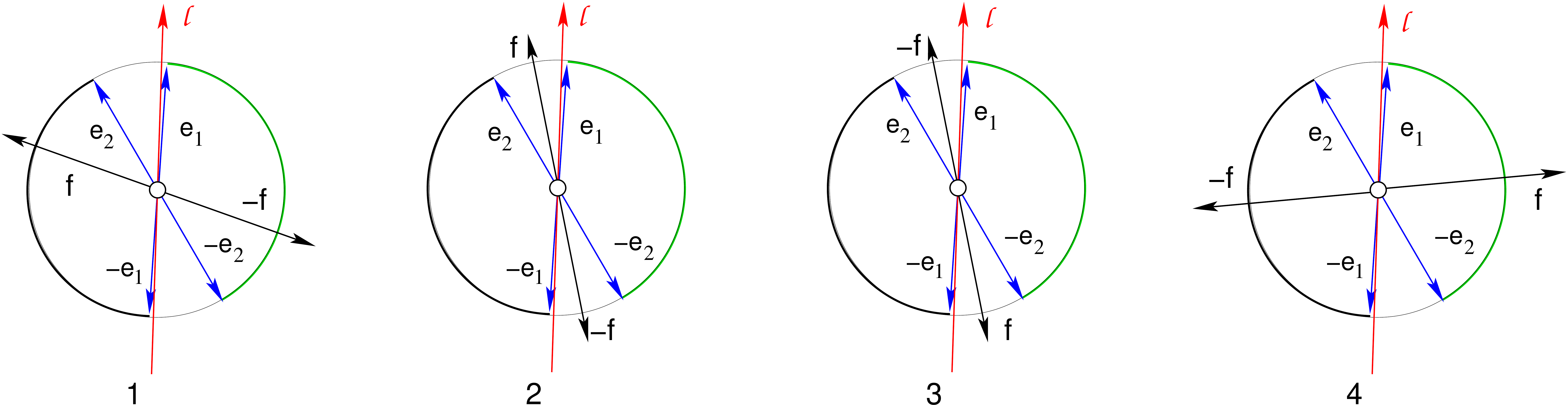}}
  \caption{The representation of sectors corresponding to the vertex configurations at a white vertex when $\mathfrak{l} \sim e_1$ and $s(e_2,e_1)=-1$.\label{fig:vertex1}}
\end{figure}

In the first case ($s(e_2,e_1)=-1$), $\epsilon_2(e_2)=1$, $\epsilon_2(e_1)=0$, and 
\[
W_w(e_1,e_2,-f)= \frac{\big[s(f,e_2) + s(f,e_1)\big]\big[1-\epsilon_2(f)\big]^2 -2}{2} \quad
(\!\!\!\!\!\!\mod 2).
\]
Therefore (\ref{eq:equiv_white}) holds in all configurations, since
\begin{enumerate}
\item If $f$ rotates clockwise from $-e_1$ to $e_1$ (configurations 1 and 2 in Figure \ref{fig:vertex1}), $\epsilon_2(f)=1$, $W_w(e_1,e_2,-f)=1\,\,
(\!\!\!\!\mod 2)$ and $\epsilon (-f) -\epsilon_1(e_1) = 1 \,\,
(\!\!\!\!\mod 2)$; 
\item If $f$ rotates from $e_1$ to $-e_2$ (configuration 3 in Figure \ref{fig:vertex1}), $\epsilon_2(f)=0$, $s(f,e_1)=1$, $s(f,e_2)=-1$,  $W_w(e_1,e_2,-f)=1\,\,(\!\!\!\!\mod 2)$ and $\epsilon (-f) -\epsilon_1(e_1) = 1 \,\,
(\!\!\!\!\mod 2)$;
\item If $f$ rotates clockwise from $-e_2$ to $e_1$ (configuration 4 in Figure \ref{fig:vertex1}), $\epsilon_2(f)=0$, $s(f,e_1)=1$, $s(f,e_2)=1$,  $W_w(e_1,e_2,-f)=0\,\,(\!\!\!\!\mod 2)$, $\epsilon (-f) -\epsilon_1(e_1) = 0$.
\end{enumerate}

\begin{figure}[H]
  \centering
  {\includegraphics[width=0.95\textwidth]{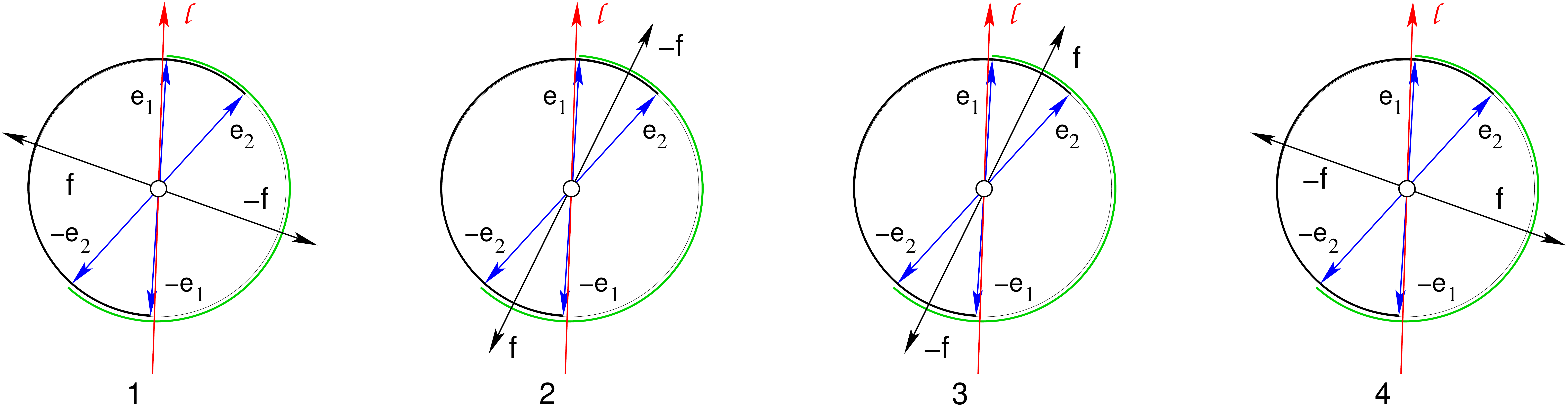}}
  \caption{The representation of sectors corresponding to the vertex configurations at a white vertex when $\mathfrak{l} \sim e_1$ and $s(e_2,e_1)=1$.\label{fig:vertex2}}
\end{figure}

In the second case ($s(e_2,e_1)=1$), $\epsilon_2(e_2)=0$, $\epsilon_2(e_1)=0$, and 
\[
W_w(e_1,e_2,-f)= \frac{ s(f,e_2) \big[\epsilon_2(f)\big]^2  + s(f,e_1)\big[1-\epsilon_2(f)\big]^2 -1}{2} \quad
(\!\!\!\!\!\!\mod 2).
\]
Therefore (\ref{eq:equiv_white}) holds in all configurations, since
\begin{enumerate}
\item If $f$ rotates clockwise from $-e_2$ to $e_1$ (configuration 1 and 2 in Figure \ref{fig:vertex2}), $\epsilon_2(f)=1$, $s(f,e_2)=-1$, $W_w(e_1,e_2,-f)=1\,\,
(\!\!\!\!\mod 2)$ and $\epsilon (-f) -\epsilon_1(e_1) = 1 \,\,
(\!\!\!\!\mod 2)$; 
\item If $f$ rotates from $-e_1$ to $-e_2$ (configuration 2 in Figure \ref{fig:vertex2}), $\epsilon_2(f)=0$, $s(f,e_1)=1$, $W_w(e_1,e_2,-f)=0\,\,
(\!\!\!\!\mod 2)$ and $\epsilon (-f) -\epsilon_1(e_1) = 0 \,\,
(\!\!\!\!\mod 2)$;
\item If $f$ rotates clockwise from $e_1$ to $-e_1$ (configurations 3 and 4 in Figure \ref{fig:vertex2}), $\epsilon_2(f)=0$, $s(f,e_1)=1$,  $W_w(e_1,e_2,-f)=0\,\,
(\!\!\!\!\mod 2)$ and $\epsilon (-f) -\epsilon_1(e_1) = 0$.
\end{enumerate}

Finally in the non generic configuration when $e_1$ and $e_2$ are parallel, (\ref{eq:equiv_white}) becomes
\[
\frac{1}{2} \left( s(f,e_1) + 3\right) = \epsilon(-f) - \epsilon (e_1) \quad (\!\!\!\!\!\!\mod 2).
\]
Then $s(f,e_1)=1$ if and only if $-f$ is to the left of $e_1$ and the above equality holds true. $\quad \square$

\bibliographystyle{alpha}

\end{document}